\else \usepackage{latexsym}\fi
\newcounter{sarrow}
\newcounter{sarrow1}
\newcommand\xnrsquigarrow[1]{%
\stepcounter{sarrow1}%
\mathrel{\begin{tikzpicture}[baseline= {( $ (current bounding box.south) + (0,-0.5ex) $ )}]
\node[inner sep=.5ex] (\thesarrow) {$\scriptstyle #1$};
\path[draw,<-,decorate,
  decoration={zigzag,amplitude=0.7pt,segment length=1.2mm,pre=lineto,pre length=4pt}]
    (\thesarrow1.south east) -- (\thesarrow1.south west);
    $\slashedarrowfill@\relbar\relbar/$
    \end{tikzpicture}}%
}
\def\slashedarrowfill@#1#2#3#4#5{%
  $\m@th\thickmuskip0mu\medmuskip\thickmuskip\thinmuskip\thickmuskip
   \relax#5#1\mkern-7mu%
   \cleaders\hbox{$#5\mkern-2mu#2\mkern-2mu$}\hfill
   \mathclap{#3}\mathclap{#2}%
   \cleaders\hbox{$#5\mkern-2mu#2\mkern-2mu$}\hfill
   \mkern-7mu#4$%
}
\def\rightslashedarrowfillb@{%
  \slashedarrowfill@\relbar\relbar/\rightarrow}
\newcommand\xnrightarrow[2][]{%
  \ext@arrow 0055{\rightslashedarrowfillb@}{#1}{#2}}
\def\rightslashedarrowfille@{%
  \slashedarrowfill@\relbar\relbar/\twoheadrightarrow}
\newcommand\xntworightarrow[2][]{%
  \ext@arrow 0055{\rightslashedarrowfille@}{#1}{#2}}
\def\rightslashedarrowfillg@{%
  \slashedarrowfill@\relbar\relbar{\raisebox{.12em}{}}\twoheadrightarrow}
\newcommand\xtworightarrow[2][]{%
  \ext@arrow 0055{\rightslashedarrowfillg@}{#1}{#2}}
\def\rightslashedarrowfillx@{%
  \slashedarrowfill@\Relbar\Relbar/\rightrightarrows}
\newcommand\xnTworightarrow[2][]{%
  \ext@arrow 0055{\rightslashedarrowfillx@}{#1}{#2}}
\def\rightslashedarrowfilly@{%
  \slashedarrowfill@\Relbar\Relbar{\raisebox{.12em}{}}\rightrightarrows}
\newcommand\xTworightarrow[2][]{%
  \ext@arrow 0055{\rightslashedarrowfilly@}{#1}{#2}}
\tikzset{nomorepostaction/.code=\let\tikz@postactions\pgfutil@empty}
\newtheorem{theorem}{Theorem}[section]
\newtheorem{definition}[theorem]{Definition}
\newtheorem{proposition}[theorem]{Proposition}
\title[Draft of Algebraic Laws for True Concurrency]
      {Algebraic Laws for True Concurrency}
\author[Yong Wang]
    {Yong Wang\\
     College of Computer Science and Technology,\\
     Faculty of Information Technology,\\
     Beijing University of Technology, Beijing, China\\
     }
\begin{document}
\label{firstpage}

\makecorrespond

\maketitle

\begin{abstract}
We find the algebraic laws for true concurrency. Eventually, we establish a whole axiomatization for true concurrency called $APTC$ (Algebra for Parallelism in True Concurrency). The theory $APTC$ has four modules: $BATC$ (Basic Algebra for True Concurrency), $APTC$ (Algebra for Parallelism in True Concurrency), recursion and abstraction. And also, we show the applications and extensions of $APTC$.
\end{abstract}

\begin{keywords}
True Concurrency; Behaviorial Equivalence; Prime Event Structure; Axiomatization
\end{keywords}

\section{Introduction}{\label{int}}

Parallelism and concurrency \cite{CM} are the core concepts within computer science. There are mainly two camps in capturing concurrency: the interleaving concurrency and the true concurrency.

The representative of interleaving concurrency is bisimulation/rooted branching bisimulation equivalences. $CCS$ (Calculus of Communicating Systems) \cite{CCS} is a calculus based on bisimulation semantics model. Hennessy and Milner (HM) logic for bisimulation equivalence is also designed. Later, algebraic laws to capture computational properties modulo bisimulation equivalence was introduced in \cite{ALNC}, this work eventually founded the comprehensive axiomatization modulo bisimulation equivalence -- $ACP$ (Algebra of Communicating Processes) \cite{ACP}.

The other camp of concurrency is true concurrency. The researches on true concurrency are still active. Firstly, there are several truly concurrent bisimulation equivalences, the representatives are: pomset bisimulation equivalence, step bisimulation equivalence, history-preserving (hp-) bisimulation equivalence, and especially hereditary history-preserving (hhp-) bisimulation equivalence \cite{HHP1} \cite{HHP2}, the well-known finest truly concurrent bisimulation equivalence. These truly concurrent bisimulations are studied in different structures \cite{ES1} \cite{ES2} \cite{CM}: Petri nets, event structures, domains, and also a uniform form called TSI (Transition System with Independence) \cite{SFL}. There are also several logics based on different truly concurrent bisimulation equivalences, for example, SFL (Separation Fixpoint Logic) and TFL (Trace Fixpoint Logic) \cite{SFL} are extensions on true concurrency of mu-calculi \cite{MUC} on bisimulation equivalence, and also a logic with reverse modalities \cite {RL1} \cite{RL2} based on the so-called reverse bisimulations with a reverse flavor. It must be pointed out that, a uniform logic for true concurrency \cite{LTC1} \cite{LTC2} was represented several years ago, which used a logical framework to unify several truly concurrent bisimulation equivalences, including pomset bisimulation, step bisimulation, hp-bisimulation and hhp-bisimulation.

There are simple comparisons between HM logic for bisimulation equivalence and the uniform logic \cite{LTC1} \cite{LTC2} for truly concurrent bisimulation equivalences, the algebraic laws \cite{ALNC}, $ACP$ \cite{ACP} for bisimulation equivalence, and \emph{what} for truly concurrent bisimulation equivalences, which is still missing.

Yes, we try to find the algebraic laws for true concurrency following the way paved by $ACP$ for bisimulation equivalence. And finally, we establish a whole axiomatization for true concurrency called $APTC$. The theory $APTC$ has four modules: $BATC$ (Basic Algebra for True Concurrency), $APTC$ (Algebra for Parallelism in True Concurrency), recursion and abstraction. With the help of placeholder in section \ref{ph}, we get an intuitive result for true concurrency: $a \between b = a \cdot b + b \cdot a + a \parallel b + a \mid b$ modulo truly concurrent bisimilarities pomset bisimulation equivalence, step bisimulation equivalence and history-preserving bisimulation equivalence, with $a,b$ are atomic actions (events), $\between$ is the whole true concurrency operator, $\cdot$ is the temporal causality operator, $+$ is the a kind of structured conflict, $\parallel$ is the parallel operator and $\mid$ is the communication merge.

This paper is organized as follows. In section \ref{bg}, we introduce some preliminaries, including a brief introduction to $ACP$, and also preliminaries on true concurrency. We introduce $BATC$ in section \ref{batc}, $APTC$ in section \ref{aptc}, recursion in section \ref{rec}, and abstraction in section \ref{abs}. In section \ref{app}, we show the applications of $APTC$ by an example called alternating bit protocol. We show the modularity and extension mechanism of $APTC$ in section \ref{ext}. We establish an axiomatization for hhp-bisimilarity in section \ref{ahhpb}. Finally, in section \ref{con}, we conclude this paper.

\section{Backgrounds}\label{bg}

\subsection{Process Algebra}\label{pa}

In this subsection, we introduce the preliminaries on process algebra $ACP$ \cite{ACP}, which is based on the interleaving bisimulation semantics. $ACP$ has an almost perfect axiomatization to capture laws on bisimulation equivalence, including equational logic and bisimulation semantics, and also the soundness and completeness bridged between them.

\subsubsection{$ACP$}\label{ACP}

$ACP$ captures several computational properties in the form of algebraic laws, and proves the soundness and completeness modulo bisimulation/rooted branching bisimulation equivalence. These computational properties are organized in a modular way by use of the concept of conservational extension, which include the following modules, note that, every algebra are composed of constants and operators, the constants are the computational objects, while operators capture the computational properties.

\begin{enumerate}
  \item \textbf{$BPA$ (Basic Process Algebras)}. $BPA$ has sequential composition $\cdot$ and alternative composition $+$ to capture sequential computation and nondeterminacy. The constants are ranged over $A$, the set of atomic actions. The algebraic laws on $\cdot$ and $+$ are sound and complete modulo bisimulation equivalence.
  \item \textbf{$ACP$ (Algebra of Communicating Processes)}. $ACP$ uses the parallel operator $\parallel$, the auxiliary binary left merge $\leftmerge$ to model parallelism, and the communication merge $\mid$ to model communications among different parallel branches. Since a communication may be blocked, a new constant called deadlock $\delta$ is extended to $A$, and also a new unary encapsulation operator $\partial_H$ is introduced to eliminate $\delta$, which may exist in the processes. The algebraic laws on these operators are also sound and complete modulo bisimulation equivalence. Note that, these operators in a process can be eliminated by deductions on the process using axioms of $ACP$, and eventually be steadied by $\cdot$ and $+$, this is also why bisimulation is called an \emph{interleaving} semantics.
  \item \textbf{Recursion}. To model infinite computation, recursion is introduced into $ACP$. In order to obtain a sound and complete theory, guarded recursion and linear recursion are needed. The corresponding axioms are $RSP$ (Recursive Specification Principle) and $RDP$ (Recursive Definition Principle), $RDP$ says the solutions of a recursive specification can represent the behaviors of the specification, while $RSP$ says that a guarded recursive specification has only one solution, they are sound with respect to $ACP$ with guarded recursion modulo bisimulation equivalence, and they are complete with respect to $ACP$ with linear recursion modulo bisimulation equivalence.
  \item \textbf{Abstraction}. To abstract away internal implementations from the external behaviors, a new constant $\tau$ called silent step is added to $A$, and also a new unary abstraction operator $\tau_I$ is used to rename actions in $I$ into $\tau$ (the resulted $ACP$ with silent step and abstraction operator is called $ACP_{\tau}$). The recursive specification is adapted to guarded linear recursion to prevent infinite $\tau$-loops specifically. The axioms for $\tau$ and $\tau_I$ are sound modulo rooted branching bisimulation equivalence (a kind of weak bisimulation equivalence). To eliminate infinite $\tau$-loops caused by $\tau_I$ and obtain the completeness, $CFAR$ (Cluster Fair Abstraction Rule) is used to prevent infinite $\tau$-loops in a constructible way.
\end{enumerate}

$ACP$ can be used to verify the correctness of system behaviors, by deduction on the description of the system using the axioms of $ACP$. Base on the modularity of $ACP$, it can be extended easily and elegantly. For more details, please refer to the book of $ACP$ \cite{ACP}.

\subsubsection{Operational Semantics}\label{OS}

The semantics of $ACP$ is based on bisimulation/rooted branching bisimulation equivalences, and the modularity of $ACP$ relies on the concept of conservative extension, for the conveniences, we introduce some concepts and conclusions on them.

\begin{definition}[Bisimulation]
A bisimulation relation $R$ is a binary relation on processes such that: (1) if $p R q$ and $p\xrightarrow{a}p'$ then $q\xrightarrow{a}q'$ with $p' R q'$; (2) if $p R q$ and $q\xrightarrow{a}q'$ then $p\xrightarrow{a}p'$ with $p' R q'$; (3) if $p R q$ and $pP$, then $qP$; (4) if $p R q$ and $qP$, then $pP$. Two processes $p$ and $q$ are bisimilar, denoted by $p\sim_{HM} q$, if there is a bisimulation relation $R$ such that $p R q$.
\end{definition}

\begin{definition}[Congruence]
Let $\Sigma$ be a signature. An equivalence relation $R$ on $\mathcal{T}(\Sigma)$ is a congruence if for each $f\in\Sigma$, if $s_i R t_i$ for $i\in\{1,\cdots,ar(f)\}$, then $f(s_1,\cdots,s_{ar(f)}) R f(t_1,\cdots,t_{ar(f)})$.
\end{definition}

\begin{definition}[Branching bisimulation]
A branching bisimulation relation $R$ is a binary relation on the collection of processes such that: (1) if $p R q$ and $p\xrightarrow{a}p'$ then either $a\equiv \tau$ and $p' R q$ or there is a sequence of (zero or more) $\tau$-transitions $q\xrightarrow{\tau}\cdots\xrightarrow{\tau}q_0$ such that $p R q_0$ and $q_0\xrightarrow{a}q'$ with $p' R q'$; (2) if $p R q$ and $q\xrightarrow{a}q'$ then either $a\equiv \tau$ and $p R q'$ or there is a sequence of (zero or more) $\tau$-transitions $p\xrightarrow{\tau}\cdots\xrightarrow{\tau}p_0$ such that $p_0 R q$ and $p_0\xrightarrow{a}p'$ with $p' R q'$; (3) if $p R q$ and $pP$, then there is a sequence of (zero or more) $\tau$-transitions $q\xrightarrow{\tau}\cdots\xrightarrow{\tau}q_0$ such that $p R q_0$ and $q_0P$; (4) if $p R q$ and $qP$, then there is a sequence of (zero or more) $\tau$-transitions $p\xrightarrow{\tau}\cdots\xrightarrow{\tau}p_0$ such that $p_0 R q$ and $p_0P$. Two processes $p$ and $q$ are branching bisimilar, denoted by $p\approx_{bHM} q$, if there is a branching bisimulation relation $R$ such that $p R q$.
\end{definition}

\begin{definition}[Rooted branching bisimulation]
A rooted branching bisimulation relation $R$ is a binary relation on processes such that: (1) if $p R q$ and $p\xrightarrow{a}p'$ then $q\xrightarrow{a}q'$ with $p'\approx_{bHM} q'$; (2) if $p R q$ and $q\xrightarrow{a}q'$ then $p\xrightarrow{a}p'$ with $p'\approx_{bHM} q'$; (3) if $p R q$ and $pP$, then $qP$; (4) if $p R q$ and $qP$, then $pP$. Two processes $p$ and $q$ are rooted branching bisimilar, denoted by $p\approx_{rbHM} q$, if there is a rooted branching bisimulation relation $R$ such that $p R q$.
\end{definition}

\begin{definition}[Conservative extension]
Let $T_0$ and $T_1$ be TSSs (transition system specifications) over signatures $\Sigma_0$ and $\Sigma_1$, respectively. The TSS $T_0\oplus T_1$ is a conservative extension of $T_0$ if the LTSs (labeled transition systems) generated by $T_0$ and $T_0\oplus T_1$ contain exactly the same transitions $t\xrightarrow{a}t'$ and $tP$ with $t\in \mathcal{T}(\Sigma_0)$.
\end{definition}

\begin{definition}[Source-dependency]
The source-dependent variables in a transition rule of $\rho$ are defined inductively as follows: (1) all variables in the source of $\rho$ are source-dependent; (2) if $t\xrightarrow{a}t'$ is a premise of $\rho$ and all variables in $t$ are source-dependent, then all variables in $t'$ are source-dependent. A transition rule is source-dependent if all its variables are. A TSS is source-dependent if all its rules are.
\end{definition}

\begin{definition}[Freshness]
Let $T_0$ and $T_1$ be TSSs over signatures $\Sigma_0$ and $\Sigma_1$, respectively. A term in $\mathbb{T}(T_0\oplus T_1)$ is said to be fresh if it contains a function symbol from $\Sigma_1\setminus\Sigma_0$. Similarly, a transition label or predicate symbol in $T_1$ is fresh if it does not occur in $T_0$.
\end{definition}

\begin{theorem}[Conservative extension]\label{TCE}
Let $T_0$ and $T_1$ be TSSs over signatures $\Sigma_0$ and $\Sigma_1$, respectively, where $T_0$ and $T_0\oplus T_1$ are positive after reduction. Under the following conditions, $T_0\oplus T_1$ is a conservative extension of $T_0$. (1) $T_0$ is source-dependent. (2) For each $\rho\in T_1$, either the source of $\rho$ is fresh, or $\rho$ has a premise of the form $t\xrightarrow{a}t'$ or $tP$, where $t\in \mathbb{T}(\Sigma_0)$, all variables in $t$ occur in the source of $\rho$ and $t'$, $a$ or $P$ is fresh.
\end{theorem}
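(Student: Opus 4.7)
The plan is to prove the two inclusions of transitions separately. Since every rule of $T_0$ is also a rule of $T_0 \oplus T_1$, every transition $t \xrightarrow{a} t'$ or predicate $tP$ with $t \in \mathcal{T}(\Sigma_0)$ that is derivable in $T_0$ is clearly derivable in $T_0 \oplus T_1$. The substantive content of the theorem is the converse: no \emph{new} transitions or predicates on closed $\Sigma_0$-terms are introduced by adjoining $T_1$.

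I would establish the converse by well-founded induction on the depth of a closed proof tree in $T_0 \oplus T_1$; the hypothesis that $T_0$ and $T_0 \oplus T_1$ are positive after reduction is exactly what legitimises this induction, because negative premises have been eliminated. The induction should simultaneously establish two statements, strengthened by a non-freshness clause that feeds the induction step: (a) if $t \in \mathcal{T}(\Sigma_0)$ and $T_0 \oplus T_1 \vdash t \xrightarrow{a} t'$ at some depth, then $a$ is non-fresh, $t' \in \mathcal{T}(\Sigma_0)$, and the same transition is already derivable in $T_0$; (b) the analogue for predicates $tP$. Without carrying the non-freshness clause through the induction, the rule-in-$T_1$ case cannot be closed.

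In the induction step, inspect the last rule $\rho$ applied. If $\rho \in T_0$, then condition (1) (source-dependency) together with the strengthened induction hypothesis applied to each premise shows that every instantiated premise has source in $\mathcal{T}(\Sigma_0)$ with non-fresh label, target and predicate; each premise is therefore derivable in $T_0$ by the induction hypothesis, and the $T_0$-proof can be reassembled. If $\rho \in T_1$, condition (2) gives two sub-cases. When the source of $\rho$ is fresh, the instantiated source must contain a symbol from $\Sigma_1 \setminus \Sigma_0$, contradicting $t \in \mathcal{T}(\Sigma_0)$. Otherwise $\rho$ has a premise $u \xrightarrow{b} u'$ or $uP'$ with $u \in \mathbb{T}(\Sigma_0)$, all variables of $u$ appearing in the source of $\rho$, and with $u'$, $b$, or $P'$ fresh. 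Since those variables are instantiated by subterms of $t \in \mathcal{T}(\Sigma_0)$, the instantiated left-hand side $\sigma(u)$ lies in $\mathcal{T}(\Sigma_0)$; but then the induction hypothesis forbids a fresh label, fresh target or fresh predicate on any transition out of $\sigma(u)$, so the premise cannot hold, a contradiction.

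The main obstacle, and the subtlety that distinguishes this from a naive conservativity argument, is condition (2): one must use the existence of a premise whose left-hand side lies in $\mathbb{T}(\Sigma_0)$ and whose variables are all source-dependent in order to transfer non-freshness from an already-derived conclusion back to that premise and force a contradiction. The bookkeeping between open terms in $\mathbb{T}(\Sigma_0)$ and closed terms in $\mathcal{T}(\Sigma_0)$, and the correct formulation of the strengthened induction hypothesis, is where care is required; once that is in place the case analysis is routine and no separate treatment of predicates beyond mirroring the transition case is needed.
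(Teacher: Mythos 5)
The paper does not actually prove this statement: Theorem \ref{TCE} is quoted in the preliminaries as a known meta-theorem of structural operational semantics (the conservative-extension format from the process-algebra literature cited as \cite{ACP}), so there is no in-paper proof to compare against. Your argument is the standard one for this result and is correct in outline: the strengthened induction hypothesis (non-fresh label or predicate, target in $\mathcal{T}(\Sigma_0)$, and derivability in $T_0$, carried through a well-founded induction on proof depth, which is legitimate precisely because $T_0$ and $T_0\oplus T_1$ are positive after reduction) is exactly the right invariant, and both sub-cases of condition (2) --- fresh source immediately contradicting $t\in\mathcal{T}(\Sigma_0)$, and the designated premise with left-hand side in $\mathbb{T}(\Sigma_0)$ forcing a freshness contradiction via the induction hypothesis --- are handled correctly. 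The one place a full write-up needs more care than your sketch shows is the $\rho\in T_0$ case: before the induction hypothesis can be applied to a premise you must know its instantiated left-hand side is a closed $\Sigma_0$-term, and this requires an inner induction that follows the source-dependency derivation, processing premises in an order in which each left-hand side contains only variables already known to be instantiated by closed $\Sigma_0$-terms; your text compresses this into a single step, but the intended argument is visible and sound.
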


\subsubsection{Proof Techniques}\label{PT}

In this subsection, we introduce the concepts and conclusions about elimination, which is very important in the proof of completeness theorem.

\begin{definition}[Elimination property]
Let a process algebra with a defined set of basic terms as a subset of the set of closed terms over the process algebra. Then the process algebra has the elimination to basic terms property if for every closed term $s$ of the algebra, there exists a basic term $t$ of the algebra such that the algebra$\vdash s=t$.
\end{definition}

\begin{definition}[Strongly normalizing]
A term $s_0$ is called strongly normalizing if does not an infinite series of reductions beginning in $s_0$.
\end{definition}

\begin{definition}
We write $s>_{lpo} t$ if $s\rightarrow^+ t$ where $\rightarrow^+$ is the transitive closure of the reduction relation defined by the transition rules of an algebra.
\end{definition}

\begin{theorem}[Strong normalization]\label{SN}
Let a term rewriting system (TRS) with finitely many rewriting rules and let $>$ be a well-founded ordering on the signature of the corresponding algebra. If $s>_{lpo} t$ for each rewriting rule $s\rightarrow t$ in the TRS, then the term rewriting system is strongly normalizing.
\end{theorem}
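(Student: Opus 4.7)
The plan is to strengthen the per-rule decrease hypothesis into a per-step decrease in $>_{lpo}$, and then to invoke well-foundedness of $>_{lpo}$. Concretely, I would first establish two standard stability properties of the lexicographic path ordering: (i) closure under substitution, i.e.\ if $s>_{lpo} t$ then $s\sigma>_{lpo} t\sigma$ for every substitution $\sigma$; and (ii) closure under one-hole contexts, i.e.\ if $s>_{lpo} t$ then $C[s]>_{lpo} C[t]$. Both are proved by straightforward structural inductions that re-examine the three defining clauses of the LPO (comparison via a subterm, via the precedence on head symbols, or via lexicographic comparison of argument tuples).

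Once (i) and (ii) are in hand, the main technical work is complete: any single rewrite step $u\to v$ is obtained by choosing a rule $s\to t$, a position $p$ in $u$, and a substitution $\sigma$, so that $u=C[s\sigma]$ and $v=C[t\sigma]$ for the context $C$ determined by $p$. By hypothesis $s>_{lpo} t$, and then (i) gives $s\sigma>_{lpo} t\sigma$ while (ii) lifts this through $C$, yielding $u>_{lpo} v$. By induction on the length of a reduction sequence, $\rightarrow^+$ is contained in $>_{lpo}$, which is exactly the content of the earlier definition $s>_{lpo} t$ iff $s\rightarrow^+ t$.

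The final and most substantial step is to argue that $>_{lpo}$ is well-founded whenever the precedence $>$ on the finite signature is well-founded. I would present this as the classical theorem of Dershowitz: assume for contradiction an infinite descending chain $t_1>_{lpo} t_2>_{lpo}\cdots$, choose a minimal such chain in the multiset of head symbols, and use Kruskal's tree theorem together with the finiteness of the signature to extract two terms $t_i$ and $t_j$ ($i<j$) with $t_i$ homeomorphically embedded in $t_j$; this forces $t_j\geq_{lpo} t_i$, contradicting the strict descent. Combining this with the previous paragraph, no infinite $\to$-sequence can exist, proving strong normalization.

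The main obstacle is unquestionably the well-foundedness of $>_{lpo}$: the closure and monotonicity lemmas reduce to routine structural inductions, but the termination of the LPO itself is the nontrivial ingredient, and it is where finiteness of the signature and well-foundedness of $>$ are both genuinely used. Everything else is bookkeeping on top of that fact.
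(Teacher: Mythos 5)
The paper does not actually prove this theorem: it is stated in Section 2.1.3 as an imported background fact from the term-rewriting literature and is only ever \emph{used} (in the elimination theorems), so there is no in-paper argument to compare yours against. Your sketch is the standard Dershowitz-style proof and is sound in outline: stability of $>_{lpo}$ under substitutions and contexts turns the per-rule hypothesis into a per-step decrease, and well-foundedness of the LPO over a finite signature with well-founded precedence (via Kruskal's tree theorem and a minimal-bad-sequence argument, using that the LPO is a simplification ordering and hence contains homeomorphic embedding) then forbids infinite reductions. One caution: the paper's own definition of $>_{lpo}$ is garbled --- it literally defines $s>_{lpo} t$ to mean $s\rightarrow^+ t$, under which the hypothesis of the theorem is vacuous and the statement is false. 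You have (correctly) read $>_{lpo}$ as the genuine lexicographic path ordering induced by the precedence $>$ on the signature; that is the only reading under which the theorem holds and under which your bridge ``every rewrite step is a $>_{lpo}$-decrease'' is both nontrivial and necessary. It would be worth flagging that your proof establishes $\rightarrow^+\;\subseteq\;>_{lpo}$ as a \emph{consequence} of the rule-wise hypothesis, not as the definition of $>_{lpo}$.
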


\subsection{True Concurrency}{\label{tc}}

In this subsection, we introduce the concepts of prime event structure, and also concurrent behavior equivalence \cite{ES1} \cite{ES2} \cite{CM}, and also we extend prime event structure with silent event $\tau$, and explain the concept of weakly true concurrency, i.e., concurrent behaviorial equivalence with considering silent event $\tau$ \cite{WTC}.

\subsubsection{Event Structure}

We give the definition of prime event structure (PES) \cite{ES1} \cite{ES2} \cite{CM} extended with the silent event $\tau$ as follows.

\begin{definition}[Prime event structure with silent event]\label{PES}
Let $\Lambda$ be a fixed set of labels, ranged over $a,b,c,\cdots$ and $\tau$. A ($\Lambda$-labelled) prime event structure with silent event $\tau$ is a tuple $\mathcal{E}=\langle \mathbb{E}, \leq, \sharp, \lambda\rangle$, where $\mathbb{E}$ is a denumerable set of events, including the silent event $\tau$. Let $\hat{\mathbb{E}}=\mathbb{E}\backslash\{\tau\}$, exactly excluding $\tau$, it is obvious that $\hat{\tau^*}=\epsilon$, where $\epsilon$ is the empty event. Let $\lambda:\mathbb{E}\rightarrow\Lambda$ be a labelling function and let $\lambda(\tau)=\tau$. And $\leq$, $\sharp$ are binary relations on $\mathbb{E}$, called causality and conflict respectively, such that:

\begin{enumerate}
  \item $\leq$ is a partial order and $\lceil e \rceil = \{e'\in \mathbb{E}|e'\leq e\}$ is finite for all $e\in \mathbb{E}$. It is easy to see that $e\leq\tau^*\leq e'=e\leq\tau\leq\cdots\leq\tau\leq e'$, then $e\leq e'$.
  \item $\sharp$ is irreflexive, symmetric and hereditary with respect to $\leq$, that is, for all $e,e',e''\in \mathbb{E}$, if $e\sharp e'\leq e''$, then $e\sharp e''$.
\end{enumerate}

Then, the concepts of consistency and concurrency can be drawn from the above definition:

\begin{enumerate}
  \item $e,e'\in \mathbb{E}$ are consistent, denoted as $e\frown e'$, if $\neg(e\sharp e')$. A subset $X\subseteq \mathbb{E}$ is called consistent, if $e\frown e'$ for all $e,e'\in X$.
  \item $e,e'\in \mathbb{E}$ are concurrent, denoted as $e\parallel e'$, if $\neg(e\leq e')$, $\neg(e'\leq e)$, and $\neg(e\sharp e')$.
\end{enumerate}
\end{definition}

The prime event structure without considering silent event $\tau$ is the original one in \cite{ES1} \cite{ES2} \cite{CM}.

\begin{definition}[Configuration]
Let $\mathcal{E}$ be a PES. A (finite) configuration in $\mathcal{E}$ is a (finite) consistent subset of events $C\subseteq \mathcal{E}$, closed with respect to causality (i.e. $\lceil C\rceil=C$). The set of finite configurations of $\mathcal{E}$ is denoted by $\mathcal{C}(\mathcal{E})$. We let $\hat{C}=C\backslash\{\tau\}$.
\end{definition}

A consistent subset of $X\subseteq \mathbb{E}$ of events can be seen as a pomset. Given $X, Y\subseteq \mathbb{E}$, $\hat{X}\sim \hat{Y}$ if $\hat{X}$ and $\hat{Y}$ are isomorphic as pomsets. In the following of the paper, we say $C_1\sim C_2$, we mean $\hat{C_1}\sim\hat{C_2}$.

\begin{definition}[Pomset transitions and step]
Let $\mathcal{E}$ be a PES and let $C\in\mathcal{C}(\mathcal{E})$, and $\emptyset\neq X\subseteq \mathbb{E}$, if $C\cap X=\emptyset$ and $C'=C\cup X\in\mathcal{C}(\mathcal{E})$, then $C\xrightarrow{X} C'$ is called a pomset transition from $C$ to $C'$. When the events in $X$ are pairwise concurrent, we say that $C\xrightarrow{X}C'$ is a step.
\end{definition}

\begin{definition}[Weak pomset transitions and weak step]
Let $\mathcal{E}$ be a PES and let $C\in\mathcal{C}(\mathcal{E})$, and $\emptyset\neq X\subseteq \hat{\mathbb{E}}$, if $C\cap X=\emptyset$ and $\hat{C'}=\hat{C}\cup X\in\mathcal{C}(\mathcal{E})$, then $C\xRightarrow{X} C'$ is called a weak pomset transition from $C$ to $C'$, where we define $\xRightarrow{e}\triangleq\xrightarrow{\tau^*}\xrightarrow{e}\xrightarrow{\tau^*}$. And $\xRightarrow{X}\triangleq\xrightarrow{\tau^*}\xrightarrow{e}\xrightarrow{\tau^*}$, for every $e\in X$. When the events in $X$ are pairwise concurrent, we say that $C\xRightarrow{X}C'$ is a weak step.
\end{definition}

We will also suppose that all the PESs in this paper are image finite, that is, for any PES $\mathcal{E}$ and $C\in \mathcal{C}(\mathcal{E})$ and $a\in \Lambda$, $\{e\in \mathbb{E}|C\xrightarrow{e} C'\wedge \lambda(e)=a\}$ and $\{e\in\hat{\mathbb{E}}|C\xRightarrow{e} C'\wedge \lambda(e)=a\}$ is finite.

\subsubsection{Concurrent Behavioral Equivalence}

\begin{definition}[Pomset, step bisimulation]\label{PSB}
Let $\mathcal{E}_1$, $\mathcal{E}_2$ be PESs. A pomset bisimulation is a relation $R\subseteq\mathcal{C}(\mathcal{E}_1)\times\mathcal{C}(\mathcal{E}_2)$, such that if $(C_1,C_2)\in R$, and $C_1\xrightarrow{X_1}C_1'$ then $C_2\xrightarrow{X_2}C_2'$, with $X_1\subseteq \mathbb{E}_1$, $X_2\subseteq \mathbb{E}_2$, $X_1\sim X_2$ and $(C_1',C_2')\in R$, and vice-versa. We say that $\mathcal{E}_1$, $\mathcal{E}_2$ are pomset bisimilar, written $\mathcal{E}_1\sim_p\mathcal{E}_2$, if there exists a pomset bisimulation $R$, such that $(\emptyset,\emptyset)\in R$. By replacing pomset transitions with steps, we can get the definition of step bisimulation. When PESs $\mathcal{E}_1$ and $\mathcal{E}_2$ are step bisimilar, we write $\mathcal{E}_1\sim_s\mathcal{E}_2$.
\end{definition}

\begin{definition}[Weak pomset, step bisimulation]\label{WPSB}
Let $\mathcal{E}_1$, $\mathcal{E}_2$ be PESs. A weak pomset bisimulation is a relation $R\subseteq\mathcal{C}(\mathcal{E}_1)\times\mathcal{C}(\mathcal{E}_2)$, such that if $(C_1,C_2)\in R$, and $C_1\xRightarrow{X_1}C_1'$ then $C_2\xRightarrow{X_2}C_2'$, with $X_1\subseteq \hat{\mathbb{E}_1}$, $X_2\subseteq \hat{\mathbb{E}_2}$, $X_1\sim X_2$ and $(C_1',C_2')\in R$, and vice-versa. We say that $\mathcal{E}_1$, $\mathcal{E}_2$ are weak pomset bisimilar, written $\mathcal{E}_1\approx_p\mathcal{E}_2$, if there exists a weak pomset bisimulation $R$, such that $(\emptyset,\emptyset)\in R$. By replacing weak pomset transitions with weak steps, we can get the definition of weak step bisimulation. When PESs $\mathcal{E}_1$ and $\mathcal{E}_2$ are weak step bisimilar, we write $\mathcal{E}_1\approx_s\mathcal{E}_2$.
\end{definition}

\begin{definition}[Posetal product]
Given two PESs $\mathcal{E}_1$, $\mathcal{E}_2$, the posetal product of their configurations, denoted $\mathcal{C}(\mathcal{E}_1)\overline{\times}\mathcal{C}(\mathcal{E}_2)$, is defined as

$$\{(C_1,f,C_2)|C_1\in\mathcal{C}(\mathcal{E}_1),C_2\in\mathcal{C}(\mathcal{E}_2),f:C_1\rightarrow C_2 \textrm{ isomorphism}\}.$$

A subset $R\subseteq\mathcal{C}(\mathcal{E}_1)\overline{\times}\mathcal{C}(\mathcal{E}_2)$ is called a posetal relation. We say that $R$ is downward closed when for any $(C_1,f,C_2),(C_1',f',C_2')\in \mathcal{C}(\mathcal{E}_1)\overline{\times}\mathcal{C}(\mathcal{E}_2)$, if $(C_1,f,C_2)\subseteq (C_1',f',C_2')$ pointwise and $(C_1',f',C_2')\in R$, then $(C_1,f,C_2)\in R$.

For $f:X_1\rightarrow X_2$, we define $f[x_1\mapsto x_2]:X_1\cup\{x_1\}\rightarrow X_2\cup\{x_2\}$, $z\in X_1\cup\{x_1\}$,(1)$f[x_1\mapsto x_2](z)=
x_2$,if $z=x_1$;(2)$f[x_1\mapsto x_2](z)=f(z)$, otherwise. Where $X_1\subseteq \mathbb{E}_1$, $X_2\subseteq \mathbb{E}_2$, $x_1\in \mathbb{E}_1$, $x_2\in \mathbb{E}_2$.
\end{definition}

\begin{definition}[Weakly posetal product]
Given two PESs $\mathcal{E}_1$, $\mathcal{E}_2$, the weakly posetal product of their configurations, denoted $\mathcal{C}(\mathcal{E}_1)\overline{\times}\mathcal{C}(\mathcal{E}_2)$, is defined as

$$\{(C_1,f,C_2)|C_1\in\mathcal{C}(\mathcal{E}_1),C_2\in\mathcal{C}(\mathcal{E}_2),f:\hat{C_1}\rightarrow \hat{C_2} \textrm{ isomorphism}\}.$$

A subset $R\subseteq\mathcal{C}(\mathcal{E}_1)\overline{\times}\mathcal{C}(\mathcal{E}_2)$ is called a weakly posetal relation. We say that $R$ is downward closed when for any $(C_1,f,C_2),(C_1',f,C_2')\in \mathcal{C}(\mathcal{E}_1)\overline{\times}\mathcal{C}(\mathcal{E}_2)$, if $(C_1,f,C_2)\subseteq (C_1',f',C_2')$ pointwise and $(C_1',f',C_2')\in R$, then $(C_1,f,C_2)\in R$.

For $f:X_1\rightarrow X_2$, we define $f[x_1\mapsto x_2]:X_1\cup\{x_1\}\rightarrow X_2\cup\{x_2\}$, $z\in X_1\cup\{x_1\}$,(1)$f[x_1\mapsto x_2](z)=
x_2$,if $z=x_1$;(2)$f[x_1\mapsto x_2](z)=f(z)$, otherwise. Where $X_1\subseteq \hat{\mathbb{E}_1}$, $X_2\subseteq \hat{\mathbb{E}_2}$, $x_1\in \hat{\mathbb{E}}_1$, $x_2\in \hat{\mathbb{E}}_2$. Also, we define $f(\tau^*)=f(\tau^*)$.
\end{definition}

\begin{definition}[(Hereditary) history-preserving bisimulation]\label{HHPB}
A history-preserving (hp-) bisimulation is a posetal relation $R\subseteq\mathcal{C}(\mathcal{E}_1)\overline{\times}\mathcal{C}(\mathcal{E}_2)$ such that if $(C_1,f,C_2)\in R$, and $C_1\xrightarrow{e_1} C_1'$, then $C_2\xrightarrow{e_2} C_2'$, with $(C_1',f[e_1\mapsto e_2],C_2')\in R$, and vice-versa. $\mathcal{E}_1,\mathcal{E}_2$ are history-preserving (hp-)bisimilar and are written $\mathcal{E}_1\sim_{hp}\mathcal{E}_2$ if there exists a hp-bisimulation $R$ such that $(\emptyset,\emptyset,\emptyset)\in R$.

A hereditary history-preserving (hhp-)bisimulation is a downward closed hp-bisimulation. $\mathcal{E}_1,\mathcal{E}_2$ are hereditary history-preserving (hhp-)bisimilar and are written $\mathcal{E}_1\sim_{hhp}\mathcal{E}_2$.
\end{definition}

\begin{definition}[Weak (hereditary) history-preserving bisimulation]\label{WHHPB}
A weak history-preserving (hp-) bisimulation is a weakly posetal relation $R\subseteq\mathcal{C}(\mathcal{E}_1)\overline{\times}\mathcal{C}(\mathcal{E}_2)$ such that if $(C_1,f,C_2)\in R$, and $C_1\xRightarrow{e_1} C_1'$, then $C_2\xRightarrow{e_2} C_2'$, with $(C_1',f[e_1\mapsto e_2],C_2')\in R$, and vice-versa. $\mathcal{E}_1,\mathcal{E}_2$ are weak history-preserving (hp-)bisimilar and are written $\mathcal{E}_1\approx_{hp}\mathcal{E}_2$ if there exists a weak hp-bisimulation $R$ such that $(\emptyset,\emptyset,\emptyset)\in R$.

A weakly hereditary history-preserving (hhp-)bisimulation is a downward closed weak hp-bisimulation. $\mathcal{E}_1,\mathcal{E}_2$ are weakly hereditary history-preserving (hhp-)bisimilar and are written $\mathcal{E}_1\approx_{hhp}\mathcal{E}_2$.
\end{definition}

\begin{proposition}[Weakly concurrent behavioral equivalence]\label{WSCBE}
(Strongly) concurrent behavioral equivalences imply weakly concurrent behavioral equivalences. That is, $\sim_p$ implies $\approx_p$, $\sim_s$ implies $\approx_s$, $\sim_{hp}$ implies $\approx_{hp}$, $\sim_{hhp}$ implies $\approx_{hhp}$.
\end{proposition}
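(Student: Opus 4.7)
The plan is to show that each strong bisimulation witnesses the corresponding weak bisimulation, essentially because a strong transition is a degenerate weak transition (with zero flanking $\tau$-transitions) and because every weak transition decomposes into a finite zigzag of strong transitions that the strong bisimulation can match one step at a time. I would treat the four implications uniformly, handling the untyped (pomset, step) and the typed (hp, hhp) variants in turn.

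For the pomset case, assume $\mathcal{E}_1\sim_p\mathcal{E}_2$ with strong pomset bisimulation $R$, and claim the same $R$ is a weak pomset bisimulation. Given $(C_1,C_2)\in R$ and a weak transition $C_1\xRightarrow{X_1}C_1'$, I unfold the definition $\xRightarrow{X_1}\triangleq\xrightarrow{\tau^*}\xrightarrow{X_1}\xrightarrow{\tau^*}$ into a finite chain of strong transitions, each labelled either by $\tau$ or by a sub-multiset of $X_1$. Applying $R$ inductively along this chain yields a parallel chain in $\mathcal{E}_2$; concatenating the matched $\tau^*$ prefixes/suffixes with the matched visible step produces $C_2\xRightarrow{X_2}C_2'$ with $(C_1',C_2')\in R$. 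Since the labelling function satisfies $\lambda(\tau)=\tau$ and pomset isomorphism preserves labels, the matched $\tau$-steps in $\mathcal{E}_2$ are genuinely $\tau$-steps, so $X_2\subseteq\hat{\mathbb{E}_2}$, and $X_1\sim X_2$ is inherited from the strong matching. The step case is identical, with pairwise-concurrency carried across by the isomorphism.

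For hp-bisimulation, given a strong hp-bisimulation $R$ with $(C_1,f,C_2)\in R$ where $f:C_1\to C_2$ is an isomorphism, I form the candidate weak relation $R'$ consisting of triples $(C_1,f\!\upharpoonright_{\hat{C_1}},C_2)$; this restriction is a well-defined isomorphism $\hat{C_1}\to\hat{C_2}$ because $f$ preserves labels and hence sends $\tau$-events to $\tau$-events. A weak single-event transition $C_1\xRightarrow{e_1}C_1'$ is handled by the same unfolding: each $\tau$-step is matched by a $\tau$-step in $\mathcal{E}_2$ (these change neither the visible configuration nor $f\!\upharpoonright_{\hat{C}}$), and the visible $e_1$-transition is matched by some $e_2$-transition with $(C_1',f[e_1\mapsto e_2],C_2')\in R$. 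Restricting the updated $f$ to the hat-configurations gives the required extension $f[e_1\mapsto e_2]\!\upharpoonright_{\hat{C_1'}}$, and the clause $f(\tau^*)=f(\tau^*)$ from the definition of the weakly posetal product absorbs the $\tau$-prefix and suffix. For hhp, downward closure of $R$ transfers to $R'$ pointwise because restriction of $f$ commutes with the pointwise inclusion of triples.

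The main obstacle, and the only place the argument is not entirely mechanical, is making the bookkeeping of the isomorphism rigorous in the (h)hp case: one must verify that (i) restricting $f$ across $\tau$-insertion is well defined and remains an isomorphism on the visible part, (ii) the inductive matching along the $\tau^*$ chains does not collide with later extensions of $f$ by visible events (so that $f[e_1\mapsto e_2]\!\upharpoonright_{\hat{C_1'}}$ is independent of how the $\tau$-zigzag was chosen), and (iii) downward closure is preserved under this restriction. Once these points are checked, the four implications $\sim_p\Rightarrow\approx_p$, $\sim_s\Rightarrow\approx_s$, $\sim_{hp}\Rightarrow\approx_{hp}$, and $\sim_{hhp}\Rightarrow\approx_{hhp}$ follow by taking the witnessing relations (with isomorphisms suitably restricted) verbatim.
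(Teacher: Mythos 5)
Your proof is correct, but it is considerably more thorough than the paper's, which consists of a single observation: that $\xrightarrow{e}=\xrightarrow{\epsilon}\xrightarrow{e}\xrightarrow{\epsilon}$ for $e\in\mathbb{E}$, i.e.\ every strong transition is a weak transition with empty silent padding. That observation covers only half of what is needed: it explains why a strong \emph{response} qualifies as a weak response, but under Definitions \ref{WPSB} and \ref{WHHPB} the \emph{challenge} in a weak bisimulation is itself a $\xRightarrow{}$ move, so one must also explain how a strong bisimulation answers a weak challenge. You supply exactly this missing piece: you unfold $\xRightarrow{X_1}$ into a finite chain of strong $\tau$- and visible transitions, match the chain stepwise through $R$ (noting that $\tau$-labelled steps are forced to be matched by silent steps because the pomset isomorphism is taken on the hatted sets), and reassemble the matched chain into a weak response; in the (h)hp cases you additionally restrict the isomorphism $f$ to $\hat{C_1}\to\hat{C_2}$ and check that downward closure survives the restriction. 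The three bookkeeping points you flag are all checkable and none of them fails --- in particular your concern (ii) dissolves once one notes that the candidate weak relation is just the set of all restricted triples, so only the existence of \emph{some} matching decomposition is needed, not canonicity. In short, your route is the honest expansion of the paper's one-liner; the paper buys brevity at the cost of leaving the challenge-matching direction entirely implicit, while you pay for completeness with the isomorphism bookkeeping.
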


\begin{proof}
From the definition of weak pomset transition, weak step transition, weakly posetal product and weakly concurrent behavioral equivalence, it is easy to see that $\xrightarrow{e}=\xrightarrow{\epsilon}\xrightarrow{e}\xrightarrow{\epsilon}$ for $e\in \mathbb{E}$, where $\epsilon$ is the empty event.
\end{proof}

Note that in the above definitions, truly concurrent behavioral equivalences are defined by events $e\in\mathcal{E}$ and prime event structure $\mathcal{E}$, in contrast to interleaving behavioral equivalences by actions $a,b\in\mathcal{P}$ and process (graph) $\mathcal{P}$. Indeed, they have correspondences, in \cite{SFL}, models of concurrency, including Petri nets, transition systems and event structures, are unified in a uniform representation -- TSI (Transition System with Independence). If $x$ is a process, let $C(x)$ denote the corresponding configuration (the already executed part of the process $x$, of course, it is free of conflicts), when $x\xrightarrow{e} x'$, the corresponding configuration $C(x)\xrightarrow{e}C(x')$ with $C(x')=C(x)\cup\{e\}$, where $e$ may be caused by some events in $C(x)$ and concurrent with the other events in $C(x)$, or entirely concurrent with all events in $C(x)$, or entirely caused by all events in $C(x)$. Though the concurrent behavioral equivalences (Definition \ref{PSB}, \ref{WPSB}, \ref{HHPB} and \ref{WHHPB}) are defined based on configurations (pasts of processes), they can also be defined based on processes (futures of configurations), we omit the concrete definitions. One key difference between definitions based on configurations and processes is that, the definitions based on configurations are stressing the structures of two equivalent configurations and the concrete atomic events may be different, but the definitions based on processes require not only the equivalent structures, but also the same atomic events by their labels, since we try to establish the algebraic equations modulo the corresponding concurrent behavior equivalences.

With a little abuse of concepts, in the following of the paper, we will not distinguish actions and events, prime event structures and processes, also concurrent behavior equivalences based on configurations and processes, and use them freely, unless they have specific meanings. Usually, in congruence theorem and soundness, we show them in a structure only flavor (equivalences based on configuration); but in proof of the completeness theorem, we must require not only the equivalent structure, but also the same set of atomic events.

\section{Basic Algebra for True Concurrency}{\label{batc}}

In this section, we will discuss the algebraic laws for prime event structure $\mathcal{E}$, exactly for causality $\leq$ and conflict $\sharp$. We will follow the conventions of process algebra, using $\cdot$ instead of $\leq$ and $+$ instead of $\sharp$. The resulted algebra is called Basic Algebra for True Concurrency, abbreviated $BATC$.

\subsection{Axiom System of $BATC$}

In the following, let $e_1, e_2, e_1', e_2'\in \mathbb{E}$, and let variables $x,y,z$ range over the set of terms for true concurrency, $p,q,s$ range over the set of closed terms. The set of axioms of $BATC$ consists of the laws given in Table \ref{AxiomsForBATC}.

\begin{center}
    \begin{table}
        \begin{tabular}{@{}ll@{}}
            \hline No. &Axiom\\
            $A1$ & $x+ y = y+ x$\\
            $A2$ & $(x+ y)+ z = x+ (y+ z)$\\
            $A3$ & $x+ x = x$\\
            $A4$ & $(x+ y)\cdot z = x\cdot z + y\cdot z$\\
            $A5$ & $(x\cdot y)\cdot z = x\cdot(y\cdot z)$\\
        \end{tabular}
        \caption{Axioms of $BATC$}
        \label{AxiomsForBATC}
    \end{table}
\end{center}

Intuitively, the axiom $A1$ says that the binary operator $+$ satisfies commutative law. The axiom $A2$ says that $+$ satisfies associativity. $A3$ says that $+$ satisfies idempotency. The axiom $A4$ is the right distributivity of the binary operator $\cdot$ to $+$. And $A5$ is the associativity of $\cdot$.

\subsection{Properties of $BATC$}

\begin{definition}[Basic terms of $BATC$]\label{BTBATC}
The set of basic terms of $BATC$, $\mathcal{B}(BATC)$, is inductively defined as follows:
\begin{enumerate}
  \item $\mathbb{E}\subset\mathcal{B}(BATC)$;
  \item if $e\in \mathbb{E}, t\in\mathcal{B}(BATC)$ then $e\cdot t\in\mathcal{B}(BATC)$;
  \item if $t,s\in\mathcal{B}(BATC)$ then $t+ s\in\mathcal{B}(BATC)$.
\end{enumerate}
\end{definition}

\begin{theorem}[Elimination theorem of $BATC$]\label{ETBATC}
Let $p$ be a closed $BATC$ term. Then there is a basic $BATC$ term $q$ such that $BATC\vdash p=q$.
\end{theorem}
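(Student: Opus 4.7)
The plan is to orient axioms $A4$ and $A5$ as left-to-right rewrite rules, invoke the strong normalization theorem (Theorem~\ref{SN}) to guarantee that every closed $BATC$ term reduces to a normal form, and then inspect the shape of closed normal forms to show they all lie in $\mathcal{B}(BATC)$. Axioms $A1$--$A3$ are left as equations, since they do not affect the shape required of a basic term; elimination only needs to push $+$ above $\cdot$ and right-associate products.

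First I would set up the term rewriting system consisting of
\begin{align*}
(x+y)\cdot z &\to x\cdot z + y\cdot z,\\
(x\cdot y)\cdot z &\to x\cdot(y\cdot z),
\end{align*}
noting that each reduction step is an instance of a $BATC$ axiom, so $p\to^{*} q$ immediately implies $BATC\vdash p=q$. To apply Theorem~\ref{SN}, I take the precedence $\cdot>+$ on the function symbols and form the induced lexicographic path ordering $>_{lpo}$. For $A5$ the two sides share the root $\cdot$, so a lexicographic comparison of argument pairs suffices, using subterm-dominance $x\cdot y>_{lpo}x$ and $y\cdot z>_{lpo}z$. For $A4$ the root symbol drops from $\cdot$ to $+$, so instead one must dominate each summand of the right-hand side: it suffices to verify $(x+y)\cdot z>_{lpo}x\cdot z$ and $(x+y)\cdot z>_{lpo}y\cdot z$, both of which follow by matching roots at $\cdot$ and then invoking $x+y>_{lpo}x$, respectively $x+y>_{lpo}y$, on the first arguments together with the trivial comparison on $z$.

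Then I would show by structural induction that every closed normal form $t$ lies in $\mathcal{B}(BATC)$. The case $t=e\in\mathbb{E}$ is clause~(1) of Definition~\ref{BTBATC}; the case $t=t_{1}+t_{2}$ follows by clause~(3) applied to the inductive hypotheses on $t_{1}$ and $t_{2}$. The interesting case is $t=t_{1}\cdot t_{2}$: since no $A4$- or $A5$-redex occurs at the head, $t_{1}$ cannot have the form $u+v$ (else $A4$ would fire) nor $u\cdot v$ (else $A5$ would fire), so $t_{1}$ must be a single atomic event $e\in\mathbb{E}$; the inductive hypothesis gives $t_{2}\in\mathcal{B}(BATC)$, and clause~(2) of Definition~\ref{BTBATC} yields $e\cdot t_{2}\in\mathcal{B}(BATC)$.

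The hard part will be the LPO check for $A4$, because the root symbol strictly decreases from $\cdot$ on the left to $+$ on the right, forcing the use of the subterm-dominance clause of $>_{lpo}$ rather than the equal-root lexicographic clause; this is where one has to set up the precedence carefully, since it is essential that $\cdot>+$ and not the reverse. Once strong normalization is in hand, the structural analysis of normal forms is routine, and combining the reduction $p\to^{*} q$ with soundness of each rule as a $BATC$ axiom delivers a basic term $q$ with $BATC\vdash p=q$, as required.
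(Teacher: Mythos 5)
Your proposal is correct and follows essentially the same route as the paper: orient $A4$ and $A5$ as rewrite rules, obtain strong normalization via the lexicographic path ordering with precedence $\cdot>+$ (Theorem~\ref{SN}), and then argue that closed normal forms are basic terms by inspecting the left factor of any $\cdot$. The only cosmetic differences are that the paper also includes $x+x\to x$ as a rewrite rule (immaterial for reaching a basic term) and phrases the normal-form analysis as a smallest-non-basic-subterm contradiction rather than your direct structural induction.
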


\begin{proof}
(1) Firstly, suppose that the following ordering on the signature of $BATC$ is defined: $\cdot > +$ and the symbol $\cdot$ is given the lexicographical status for the first argument, then for each rewrite rule $p\rightarrow q$ in Table \ref{TRSForBATC} relation $p>_{lpo} q$ can easily be proved. We obtain that the term rewrite system shown in Table \ref{TRSForBATC} is strongly normalizing, for it has finitely many rewriting rules, and $>$ is a well-founded ordering on the signature of $BATC$, and if $s>_{lpo} t$, for each rewriting rule $s\rightarrow t$ is in Table \ref{TRSForBATC} (see Theorem \ref{SN}).

\begin{center}
    \begin{table}
        \begin{tabular}{@{}ll@{}}
            \hline No. &Rewriting Rule\\
            $RA3$ & $x+ x \rightarrow x$\\
            $RA4$ & $(x+ y)\cdot z \rightarrow x\cdot z + y\cdot z$\\
            $RA5$ & $(x\cdot y)\cdot z \rightarrow x\cdot(y\cdot z)$\\
        \end{tabular}
        \caption{Term rewrite system of $BATC$}
        \label{TRSForBATC}
    \end{table}
\end{center}

(2) Then we prove that the normal forms of closed $BATC$ terms are basic $BATC$ terms.

Suppose that $p$ is a normal form of some closed $BATC$ term and suppose that $p$ is not a basic term. Let $p'$ denote the smallest sub-term of $p$ which is not a basic term. It implies that each sub-term of $p'$ is a basic term. Then we prove that $p$ is not a term in normal form. It is sufficient to induct on the structure of $p'$:

\begin{itemize}
  \item Case $p'\equiv e, e\in \mathbb{E}$. $p'$ is a basic term, which contradicts the assumption that $p'$ is not a basic term, so this case should not occur.
  \item Case $p'\equiv p_1\cdot p_2$. By induction on the structure of the basic term $p_1$:
      \begin{itemize}
        \item Subcase $p_1\in \mathbb{E}$. $p'$ would be a basic term, which contradicts the assumption that $p'$ is not a basic term;
        \item Subcase $p_1\equiv e\cdot p_1'$. $RA5$ rewriting rule can be applied. So $p$ is not a normal form;
        \item Subcase $p_1\equiv p_1'+ p_1''$. $RA4$ rewriting rule can be applied. So $p$ is not a normal form.
      \end{itemize}
  \item Case $p'\equiv p_1+ p_2$. By induction on the structure of the basic terms both $p_1$ and $p_2$, all subcases will lead to that $p'$ would be a basic term, which contradicts the assumption that $p'$ is not a basic term.
\end{itemize}
\end{proof}

\subsection{Structured Operational Semantics of $BATC$}

In this subsection, we will define a term-deduction system which gives the operational semantics of $BATC$. We give the operational transition rules for operators $\cdot$ and $+$ as Table \ref{SETRForBATC} shows. And the predicate $\xrightarrow{e}\surd$ represents successful termination after execution of the event $e$.

\begin{center}
    \begin{table}
        $$\frac{}{e\xrightarrow{e}\surd}$$
        $$\frac{x\xrightarrow{e}\surd}{x+ y\xrightarrow{e}\surd} \quad\frac{x\xrightarrow{e}x'}{x+ y\xrightarrow{e}x'} \quad\frac{y\xrightarrow{e}\surd}{x+ y\xrightarrow{e}\surd} \quad\frac{y\xrightarrow{e}y'}{x+ y\xrightarrow{e}y'}$$
        $$\frac{x\xrightarrow{e}\surd}{x\cdot y\xrightarrow{e} y} \quad\frac{x\xrightarrow{e}x'}{x\cdot y\xrightarrow{e}x'\cdot y}$$
        \caption{Single event transition rules of $BATC$}
        \label{SETRForBATC}
    \end{table}
\end{center}

\begin{theorem}[Congruence of $BATC$ with respect to bisimulation equivalence]
Bisimulation equivalence $\sim_{HM}$ is a congruence with respect to $BATC$.
\end{theorem}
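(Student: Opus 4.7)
The plan is to verify the congruence condition operator by operator: for each $f \in \{+, \cdot\}$, assuming $x_1 \sim_{HM} y_1$ and $x_2 \sim_{HM} y_2$ via bisimulations $R_1$ and $R_2$, I will exhibit a bisimulation relation $R$ witnessing $f(x_1,x_2) \sim_{HM} f(y_1,y_2)$. One could alternatively observe that the rules in Table \ref{SETRForBATC} are in a path/GSOS-style format and appeal to a general meta-theorem guaranteeing bisimulation congruence, but since no such meta-theorem is invoked in the preceding material, I will construct the witnesses by hand. Throughout, I treat $\xrightarrow{e}\surd$ as a transition into a distinguished terminal state that is bisimilar only to itself, which is the standard way of reconciling the successful-termination predicate with clauses (1) and (2) of the bisimulation definition.

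For the alternative composition, I take
$$R \;=\; R_1 \,\cup\, R_2 \,\cup\, \{(x_1 + x_2,\, y_1 + y_2)\}$$
and check the transfer conditions on the new pair using the four $+$-rules of Table \ref{SETRForBATC}. Any transition $x_1 + x_2 \xrightarrow{e} p'$ (or $(x_1+x_2)\xrightarrow{e}\surd$) must be fired by exactly one of $x_1$ or $x_2$; the matching transition of $y_1$ or $y_2$ supplied by $R_1$ or $R_2$ lifts to a transition of $y_1 + y_2$ by the corresponding $+$-rule on the right, and the residual pair already lies in $R_1 \cup R_2$. The symmetric direction is identical. For the sequential composition, I take
$$R \;=\; \{(s \cdot x_2,\, t \cdot y_2) \mid (s,t) \in R_1\} \,\cup\, R_2,$$
which in particular contains $(x_1 \cdot x_2, y_1 \cdot y_2)$. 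A transition of $s \cdot x_2$ arises in two ways: if $s \xrightarrow{e} \surd$ then $s \cdot x_2 \xrightarrow{e} x_2$, and $(s,t) \in R_1$ yields $t \xrightarrow{e} \surd$, hence $t \cdot y_2 \xrightarrow{e} y_2$ with $(x_2, y_2) \in R_2 \subseteq R$; if instead $s \xrightarrow{e} s'$ then $s \cdot x_2 \xrightarrow{e} s' \cdot x_2$, and bisimilarity gives $t \xrightarrow{e} t'$ with $(s', t') \in R_1$, so that $(s' \cdot x_2, t' \cdot y_2) \in R$ by construction. Transitions from pairs already in $R_2$ are matched because $R_2$ is a bisimulation, and the symmetric direction is handled the same way.

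The main obstacle is really just the bookkeeping for the sequential composition case: one must ensure that $R$ is closed under all outgoing transitions, which forces the inclusion of $R_2$ so that the residual pair $(x_2, y_2)$ produced after successful termination of the first factor is already related. Everything else reduces to a routine case analysis over the small number of rules in Table \ref{SETRForBATC}, with no predicate clauses beyond successful termination to worry about.
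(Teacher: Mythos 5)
Your proof is correct, but it takes a genuinely different route from the paper. The paper's own argument is a one-line appeal to the literature: it observes that the axioms (and the single-event transition rules) of $BATC$ coincide with those of $BPA$, and simply imports the known fact that bisimulation equivalence is a congruence for $BPA$. You instead construct the witnessing bisimulations explicitly: for $+$ the relation $R_1 \cup R_2 \cup \{(x_1+x_2,\, y_1+y_2)\}$, and for $\cdot$ the relation $\{(s\cdot x_2,\, t\cdot y_2) \mid (s,t)\in R_1\}\cup R_2$, with the correct observation that the $R_2$ component is forced by the residual pair $(x_2,y_2)$ left after the first factor terminates. Both constructions check out, and your handling of the termination predicate as a transition into a distinguished terminal state is a standard and adequate way to reconcile $\xrightarrow{e}\surd$ with clauses (1)--(4) of the paper's bisimulation definition. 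What the paper's approach buys is brevity and consistency with its stated design (that $BATC$ modulo interleaving bisimulation just \emph{is} $BPA$); what your approach buys is self-containedness, and it is in fact closer in spirit to the explicit operator-by-operator case analyses the paper itself carries out for the pomset, step, and hp congruence theorems that immediately follow. Either is acceptable; yours is the more informative of the two.
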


\begin{proof}
The axioms in Table \ref{AxiomsForBATC} of $BATC$ are the same as the axioms of $BPA$ (Basic Process Algebra) \cite{ALNC} \cite{CC} \cite{ACP}, so, bisimulation equivalence $\sim_{HM}$ is a congruence with respect to $BATC$.
\end{proof}

\begin{theorem}[Soundness of $BATC$ modulo bisimulation equivalence]
Let $x$ and $y$ be $BATC$ terms. If $BATC\vdash x=y$, then $x\sim_{HM} y$.
\end{theorem}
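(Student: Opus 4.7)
The plan is to argue by induction on the derivation of $BATC \vdash x = y$ in equational logic. An equational derivation is built from (i) reflexivity, symmetry and transitivity of $=$, (ii) congruence under contexts built from the $BATC$ operators, and (iii) substitution instances of the five axioms $A1$--$A5$. Since $\sim_{HM}$ is an equivalence relation, case (i) is immediate; since the preceding theorem already establishes that $\sim_{HM}$ is a congruence for $\cdot$ and $+$, case (ii) is also handled. The whole argument therefore reduces to verifying soundness of each axiom individually, i.e.\ showing that for every closed instantiation the two sides of each $Ai$ are bisimilar under the transitions generated by Table~\ref{SETRForBATC}.

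For each axiom I would exhibit an explicit bisimulation $R$ on closed terms and check the transfer conditions using only the rules of Table~\ref{SETRForBATC}, together with the termination predicate $\xrightarrow{e}\surd$. Concretely, for $A1$ I would take $R = \{(p+q,\,q+p) : p,q \text{ closed}\}\cup \mathrm{Id}$, and the check is immediate from the left/right symmetry of the four $+$-rules. For $A2$ take $R = \{((p+q)+r,\,p+(q+r))\}\cup\mathrm{Id}$; any transition of either side originates from a unique summand and is matched by the corresponding rule on the other side. For $A3$ take $R = \{(p+p,\,p)\}\cup\mathrm{Id}$: a transition of $p+p$ is always derivable from a transition of $p$, which $p$ matches with itself, and conversely.

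The two slightly richer cases are the distributive and associative axioms. For $A5$ I would take $R$ to consist of all pairs $((p\cdot q)\cdot r,\,p\cdot(q\cdot r))$ closed under derivatives of the first component, and split on whether a transition of $p$ yields $\surd$ (both sides step to $q\cdot r$) or yields some $p'$ (both sides step to bisimulation-related pairs $((p'\cdot q)\cdot r,\,p'\cdot(q\cdot r))$, which are again in $R$). For $A4$ take $R = \{((p+q)\cdot r,\,p\cdot r + q\cdot r)\}\cup\mathrm{Id}$; every transition of the left-hand side is forced through the $+$-rules and then the $\cdot$-rules, and matches termination ($\surd$) with a step to $r$ on both sides and continuation with a step to $p'\cdot r$ (or $q'\cdot r$) on both sides, which is then handled by the identity component of $R$.

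None of these verifications is deep; the entire content of the proof is checking that the transitions of both sides of each axiom can be mirrored. If there is any obstacle it is only bookkeeping, namely remembering to handle both the terminating transition $\xrightarrow{e}\surd$ and the continuing transition $\xrightarrow{e}x'$ for each source term, and being careful in $A4$ that the derivative on the right-hand side after a transition of $p$ is syntactically $p'\cdot r$, the same term obtained on the left-hand side, so that closure under identity is sufficient.
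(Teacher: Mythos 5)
Your proposal is correct, but it takes a more self-contained route than the paper. The paper's own proof is a one-line appeal to prior work: it observes that the axioms $A1$--$A5$ coincide with those of $BPA$ and concludes soundness modulo $\sim_{HM}$ from the known soundness of $BPA$. You instead unfold exactly what that citation stands for: induction on the equational derivation, discharging reflexivity/symmetry/transitivity by the fact that $\sim_{HM}$ is an equivalence, discharging contexts by the preceding congruence theorem, and then exhibiting an explicit bisimulation for each axiom. Your case analyses are all sound; in particular you correctly notice the one subtlety, namely that for $A4$ and $A5$ the terminating transitions $\xrightarrow{e}\surd$ of the left component land both sides on syntactically identical derivatives (so the identity part of $R$ suffices), whereas the continuing transitions for $A5$ land on a fresh pair of the form $((p'\cdot q)\cdot r,\,p'\cdot(q\cdot r))$, which is why $R$ for $A5$ must contain all such pairs rather than a single one. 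What the paper's approach buys is brevity and reuse of an established result; what yours buys is a proof that does not depend on the reader accepting the identification with $BPA$, and it is the template the paper itself later follows when it proves soundness modulo the truly concurrent equivalences, where no off-the-shelf result is available.
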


\begin{proof}
The axioms in Table \ref{AxiomsForBATC} of $BATC$ are the same as the axioms of $BPA$ (Basic Process Algebra) \cite{ALNC} \cite{CC} \cite{ACP}, so, $BATC$ is sound modulo bisimulation equivalence.
\end{proof}

\begin{theorem}[Completeness of $BATC$ modulo bisimulation equivalence]
Let $p$ and $q$ be closed $BATC$ terms, if $p\sim_{HM} q$ then $p=q$.
\end{theorem}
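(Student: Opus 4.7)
The plan is to reduce to basic terms via the elimination theorem (Theorem \ref{ETBATC}) and then proceed by structural induction, matching the transition structure forced by bisimilarity using the axioms $A1$--$A3$ to rearrange summands. Concretely, given closed $BATC$ terms $p$ and $q$ with $p \sim_{HM} q$, first pick basic terms $p'$ and $q'$ with $BATC \vdash p = p'$ and $BATC \vdash q = q'$. By soundness, $p \sim_{HM} p'$ and $q \sim_{HM} q'$, so $p' \sim_{HM} q'$. It therefore suffices to show the statement for basic terms; the full result then follows by transitivity in the algebra.

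Next, I would put basic terms into a convenient normal form. Using $A4$, $A5$, and the definition of $\mathcal{B}(BATC)$ (Definition \ref{BTBATC}), every basic term can be written as a finite sum $\sum_{i \in I} \alpha_i$ where each summand $\alpha_i$ is either an atomic event $e_i \in \mathbb{E}$ or a product $e_i \cdot t_i$ with $t_i$ a basic term of strictly smaller size. From the operational rules in Table \ref{SETRForBATC}, the transitions of such a sum are exactly: $\sum_i \alpha_i \xrightarrow{e_i} \surd$ when $\alpha_i \equiv e_i$, and $\sum_i \alpha_i \xrightarrow{e_i} t_i$ when $\alpha_i \equiv e_i \cdot t_i$. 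Thus the transition structure of a basic term is read off directly from its summands.

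The core induction is on the sum of the sizes of $p'$ and $q'$. Assume $p' = \sum_{i \in I} \alpha_i$ and $q' = \sum_{j \in J} \beta_j$ are basic with $p' \sim_{HM} q'$. For each summand $\alpha_i$ of $p'$, the corresponding transition must be matched by some summand $\beta_j$ of $q'$: if $\alpha_i \equiv e$, then some $\beta_j$ must yield a transition $q' \xrightarrow{e} \surd$, forcing $\beta_j \equiv e$ and hence $BATC \vdash \alpha_i = \beta_j$; if $\alpha_i \equiv e \cdot t_i$, then some $\beta_j \equiv e \cdot s_j$ with $t_i \sim_{HM} s_j$, and the induction hypothesis gives $BATC \vdash t_i = s_j$, so $BATC \vdash \alpha_i = \beta_j$. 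Symmetric matching works from $q'$ to $p'$. This shows that every summand of $p'$ is provably equal to a summand of $q'$ and vice versa. Using commutativity ($A1$), associativity ($A2$), and most importantly idempotency ($A3$) of $+$, one can then show that a sum is provably equal to the sum of its distinct summand-equivalence-classes, and two such sums with the same multiset of classes are provably equal; hence $BATC \vdash p' = q'$.

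The main obstacle is the bookkeeping in the last step: turning the bijection-up-to-provable-equality on summands into an actual chain of axiom applications. The cleanest way is to prove an auxiliary lemma stating that if every summand of $\sum_i \alpha_i$ is provably equal to some summand of $\sum_j \beta_j$ and conversely, then $BATC \vdash \sum_i \alpha_i = \sum_j \beta_j$; this is a routine consequence of $A1, A2, A3$ (each summand on either side can be ``absorbed'' into the other sum by idempotency after commutative rearrangement). Aside from this, the only subtle point is handling termination transitions $\xrightarrow{e}\surd$ uniformly with non-terminating transitions, which is why the normal form separates summands of the form $e$ from those of the form $e \cdot t$. With the lemma in hand, the induction closes and completeness follows.
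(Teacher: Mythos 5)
Your proof is correct, but it takes a genuinely different route from the paper's. The paper disposes of this theorem in one line: it observes that the axioms of $BATC$ in Table \ref{AxiomsForBATC} coincide with those of $BPA$ and simply imports the known completeness of $BPA$ modulo bisimulation equivalence from the literature. You instead reconstruct that classical argument from scratch: elimination to basic terms via Theorem \ref{ETBATC}, soundness to transfer $\sim_{HM}$ to the basic terms, a normal form as a sum of summands of shape $e$ or $e\cdot t$, an induction on term size that matches summands through the transition rules of Table \ref{SETRForBATC}, and an absorption lemma built from $A1$, $A2$, $A3$ to convert the mutual-inclusion of summands up to provable equality into a derivable equation. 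What your approach buys is self-containment and, notably, uniformity with the rest of the paper: it is essentially the same skeleton the paper itself uses for the completeness proofs modulo $\sim_p$, $\sim_s$, $\sim_{hp}$ and $\sim_{hhp}$ (Theorems \ref{CBATCPBE}, \ref{CBATCSBE}, \ref{CBATCHPBE}, \ref{CBATCHHPBE}), which all proceed by elimination, normal forms modulo AC of $+$, and induction on the sizes of the normal forms; the one extra care you take --- separating termination transitions $\xrightarrow{e}\surd$ from transitions $\xrightarrow{e}t$ so that a summand $e$ can only be matched by a summand $e$ and a summand $e\cdot t$ only by some $e\cdot s$ with $t\sim_{HM} s$ --- is exactly the point on which the induction closes. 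What the paper's citation-based proof buys is brevity at the cost of leaving the argument implicit.
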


\begin{proof}
The axioms in Table \ref{AxiomsForBATC} of $BATC$ are the same as the axioms of $BPA$ (Basic Process Algebra) \cite{ALNC} \cite{CC} \cite{ACP}, so, $BATC$ is complete modulo bisimulation equivalence.
\end{proof}

The pomset transition rules are shown in Table \ref{PTRForBATC}, different to single event transition rules in Table \ref{SETRForBATC}, the pomset transition rules are labeled by pomsets, which are defined by causality $\cdot$ and conflict $+$.

\begin{center}
    \begin{table}
        $$\frac{}{X\xrightarrow{X}\surd}$$
        $$\frac{x\xrightarrow{X}\surd}{x+ y\xrightarrow{X}\surd} (X\subseteq x)\quad\frac{x\xrightarrow{X}x'}{x+ y\xrightarrow{X}x'} (X\subseteq x) \quad\frac{y\xrightarrow{Y}\surd}{x+ y\xrightarrow{Y}\surd} (Y\subseteq y)\quad\frac{y\xrightarrow{Y}y'}{x+ y\xrightarrow{Y}y'}(Y\subseteq y)$$
        $$\frac{x\xrightarrow{X}\surd}{x\cdot y\xrightarrow{X} y} (X\subseteq x)\quad\frac{x\xrightarrow{X}x'}{x\cdot y\xrightarrow{X}x'\cdot y} (X\subseteq x)$$
        \caption{Pomset transition rules of $BATC$}
        \label{PTRForBATC}
    \end{table}
\end{center}

\begin{theorem}[Congruence of $BATC$ with respect to pomset bisimulation equivalence]
Pomset bisimulation equivalence $\sim_{p}$ is a congruence with respect to $BATC$.
\end{theorem}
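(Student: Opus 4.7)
The plan is to verify that $\sim_p$ is preserved by each operator in the signature of $BATC$, namely the atomic events $e \in \mathbb{E}$ (trivially) and the binary operators $+$ and $\cdot$. A crucial observation about the pomset transition rules in Table \ref{PTRForBATC} is that no rule decomposes or otherwise modifies the pomset label of its premise: each inference carries a single premise $x \xrightarrow{X} \cdot$ to a conclusion with the same label $X$, under the side condition $X \subseteq x$. This format is essentially identical to the single-event format used earlier, with single events replaced by pomsets, so the standard compositional construction of witnessing bisimulations should go through.

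For alternative composition, I would assume $x_1 \sim_p y_1$ and $x_2 \sim_p y_2$, witnessed by pomset bisimulations $R_1$ and $R_2$, and take $R = R_1 \cup R_2 \cup \{(x_1 + x_2,\, y_1 + y_2)\}$. Given any $x_1 + x_2 \xrightarrow{X} s'$, one of the four transition rules for $+$ must have fired, so the transition comes from a move of either $x_1$ or $x_2$; the appropriate $R_i$ supplies a matching move $y_j \xrightarrow{X'} t'$ with $X \sim X'$ and $(s',t') \in R_i$, and reapplying the same shape of rule lifts this to $y_1 + y_2 \xrightarrow{X'} t'$. The reverse direction is symmetric.

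For sequential composition, I would take $R = \{(x_1 \cdot x_2,\, y_1 \cdot y_2) : (x_1, y_1) \in R_1\} \cup R_2$ under the same hypotheses. A transition $x_1 \cdot x_2 \xrightarrow{X} s'$ arises either from $x_1 \xrightarrow{X} \surd$, matched by some $y_1 \xrightarrow{X'} \surd$ with successors $x_2$ and $y_2$ already related by $R_2 \subseteq R$; or from $x_1 \xrightarrow{X} x_1'$, matched by $y_1 \xrightarrow{X'} y_1'$ with $(x_1', y_1') \in R_1$, giving the required pair $(x_1' \cdot x_2,\, y_1' \cdot y_2) \in R$. Again the reverse direction is symmetric.

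The step I expect to require the most care is keeping track of the fact that pomset matching is up to isomorphism ($X \sim X'$), not equality, so I must verify that the isomorphism supplied by the bisimilarity of the components survives the compositional step. Because the rules for $+$ and $\cdot$ transmit the pomset label unchanged and impose no coupling side conditions between the operands' labels, the same isomorphism $X \sim X'$ remains a valid witness at the composed level. Once this is settled, congruence follows from the routine verification sketched above.
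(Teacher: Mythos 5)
Your proposal is correct and follows essentially the same route as the paper: a case analysis on the two operators $+$ and $\cdot$, matching each pomset transition of the composite term via the corresponding transition rule and the assumed bisimilarity of the components, using the observation that the rules transmit the pomset label unchanged. Your explicit construction of the witnessing relations (e.g.\ $\{(x_1\cdot x_2,\, y_1\cdot y_2) : (x_1,y_1)\in R_1\}\cup R_2$ for sequential composition) is in fact tighter than the paper's pointwise argument, which closes the $\cdot$-case by implicitly appealing to the statement being proved for the successor pair $x_1'\cdot x_2$ and $y_1'\cdot y_2$.
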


\begin{proof}
It is easy to see that pomset bisimulation is an equivalent relation on $BATC$ terms, we only need to prove that $\sim_{p}$ is preserved by the operators $\cdot$ and $+$.

\begin{itemize}
  \item Causality operator $\cdot$. Let $x_1,x_2$ and $y_1,y_2$ be $BATC$ processes, and $x_1\sim_{p} y_1$, $x_2\sim_{p} y_2$, it is sufficient to prove that $x_1\cdot x_2\sim_{p} y_1\cdot y_2$.

      By the definition of pomset bisimulation $\sim_p$ (Definition \ref{PSB}), $x_1\sim_p y_1$ means that

      $$x_1\xrightarrow{X_1} x_1' \quad y_1\xrightarrow{Y_1} y_1'$$

      with $X_1\subseteq x_1$, $Y_1\subseteq y_1$, $X_1\sim Y_1$ and $x_1'\sim_p y_1'$. The meaning of $x_2\sim_p y_2$ is similar.

      By the pomset transition rules for causality operator $\cdot$ in Table \ref{PTRForBATC}, we can get

      $$x_1\cdot x_2\xrightarrow{X_1} x_2 \quad y_1\cdot y_2\xrightarrow{Y_1} y_2$$

      with $X_1\subseteq x_1$, $Y_1\subseteq y_1$, $X_1\sim Y_1$ and $x_2\sim_p y_2$, so, we get $x_1\cdot x_2\sim_p y_1\cdot y_2$, as desired.

      Or, we can get

      $$x_1\cdot x_2\xrightarrow{X_1} x_1'\cdot x_2 \quad y_1\cdot y_2\xrightarrow{Y_1} y_1'\cdot y_2$$

      with $X_1\subseteq x_1$, $Y_1\subseteq y_1$, $X_1\sim Y_1$ and $x_1'\sim_p y_1'$, $x_2\sim_p y_2$, so, we get $x_1\cdot x_2\sim_p y_1\cdot y_2$, as desired.
  \item Conflict operator $+$. Let $x_1, x_2$ and $y_1, y_2$ be $BATC$ processes, and $x_1\sim_p y_1$, $x_2\sim_p y_2$, it is sufficient to prove that $x_1+ x_2 \sim_p y_1+ y_2$. The meanings of $x_1\sim_p y_1$ and $x_2\sim_p y_2$ are the same as the above case, according to the definition of pomset bisimulation $\sim_p$ in Definition \ref{PSB}.

      By the pomset transition rules for conflict operator $+$ in Table \ref{PTRForBATC}, we can get four cases:

      $$x_1+ x_2\xrightarrow{X_1} \surd \quad y_1+ y_2\xrightarrow{Y_1} \surd$$

      with $X_1\subseteq x_1$, $Y_1\subseteq y_1$, $X_1\sim Y_1$, so, we get $x_1+ x_2\sim_p y_1+ y_2$, as desired.

      Or, we can get

      $$x_1+ x_2\xrightarrow{X_1} x_1' \quad y_1+ y_2\xrightarrow{Y_1} y_1'$$

      with $X_1\subseteq x_1$, $Y_1\subseteq y_1$, $X_1\sim Y_1$, and $x_1'\sim_p y_1'$, so, we get $x_1+ x_2\sim_p y_1+ y_2$, as desired.

      Or, we can get

      $$x_1+ x_2\xrightarrow{X_2} \surd \quad y_1+ y_2\xrightarrow{Y_2} \surd$$

      with $X_2\subseteq x_2$, $Y_2\subseteq y_2$, $X_2\sim Y_2$, so, we get $x_1+ x_2\sim_p y_1+ y_2$, as desired.

      Or, we can get

      $$x_1+ x_2\xrightarrow{X_2} x_2' \quad y_1+ y_2\xrightarrow{Y_2} y_2'$$

      with $X_2\subseteq x_2$, $Y_2\subseteq y_2$, $X_2\sim Y_2$, and $x_2'\sim_p y_2'$, so, we get $x_1+ x_2\sim_p y_1+ y_2$, as desired.
\end{itemize}
\end{proof}

\begin{theorem}[Soundness of $BATC$ modulo pomset bisimulation equivalence]\label{SBATCPBE}
Let $x$ and $y$ be $BATC$ terms. If $BATC\vdash x=y$, then $x\sim_{p} y$.
\end{theorem}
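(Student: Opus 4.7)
The plan is to follow the standard soundness template: combine the congruence of $\sim_p$ with respect to the $BATC$ operators (already proved in the preceding theorem) with a case analysis verifying that each axiom in Table \ref{AxiomsForBATC} is sound modulo pomset bisimulation. Once every axiom $l=r$ has been shown to satisfy $l\sim_p r$, an induction on the length of the equational derivation $BATC\vdash x=y$ — using reflexivity, symmetry and transitivity of $\sim_p$, together with congruence to handle substitution into contexts — yields $x\sim_p y$.

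First I would fix, for each axiom, an explicit candidate pomset bisimulation. For $A1$, $A2$, $A3$ the relation $R=\{(l\sigma,r\sigma)\mid\sigma\text{ closed substitution}\}\cup \mathrm{Id}$ works: the pomset transition rules for $+$ in Table \ref{PTRForBATC} are symmetric in the two summands, so every transition $x+y\xrightarrow{X}\_$ of the left-hand side is matched by the same transition of $y+x$, with identical label and resulting state, and idempotency of $+$ is immediate since any transition of $x+x$ comes from $x$. For $A5$, associativity of $\cdot$, I would build $R$ to contain the pairs $((x\cdot y)\cdot z,\,x\cdot(y\cdot z))$ and inductively $((x'\cdot y)\cdot z,\,x'\cdot (y\cdot z))$ for residuals $x'$ of $x$, together with $(y\cdot z,y\cdot z)$ and $(z,z)$; the transition rules for $\cdot$ make the pomset labels match stage by stage.

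The axiom that takes a little more care is $A4$, the right distributivity $(x+y)\cdot z=x\cdot z+y\cdot z$. A pomset transition $(x+y)\cdot z\xrightarrow{X}\_$ by the second $\cdot$-rule of Table \ref{PTRForBATC} forces $X\subseteq x+y$; the premise $(x+y)\xrightarrow{X}\_$ must then come from exactly one of $x$ or $y$ (since the $+$-rules carry the side condition $X\subseteq x$ or $X\subseteq y$), and this is precisely the summand on the right-hand side $x\cdot z+y\cdot z$ that fires, so the transitions match on both labels and continuations. I would also need to handle the successful-termination case, where the residual is $z$ on the left and $z$ on the right. The candidate relation for $A4$ is therefore
\[
R=\{((x+y)\cdot z,\,x\cdot z+y\cdot z)\}\;\cup\;\{(x'\cdot z,\,x'\cdot z)\}\;\cup\;\{(y'\cdot z,\,y'\cdot z)\}\;\cup\;\mathrm{Id},
\]
closed under residuals, and I would verify the four transfer conditions case by case against the rules in Table \ref{PTRForBATC}.

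The main obstacle, such as it is, lies in $A4$: one must be careful that the side conditions $X\subseteq x$ or $X\subseteq y$ on the pomset $+$-rules are exactly what makes the left-to-right and right-to-left simulation steps symmetric, and one must not mistakenly allow a pomset $X\subseteq x+y$ to be split across both summands. Once this is confirmed, the rest of the proof is a routine structural induction on the derivation: the base case is axiom instances (sound by the bisimulations above), the equivalence-closure steps are trivial, and the congruence step uses the congruence theorem for $\sim_p$ already established, giving $x\sim_p y$ whenever $BATC\vdash x=y$.
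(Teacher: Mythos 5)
Your proposal is correct and follows essentially the same route as the paper: invoke the fact that $\sim_p$ is an equivalence and a congruence to reduce soundness to a per-axiom check, then verify each of $A1$--$A5$ against the pomset transition rules of Table \ref{PTRForBATC}, handling the residual case of $A5$ (and the continuations in $A4$) by an inductive assumption on smaller terms. Your extra care in making the candidate relations explicit and in noting that the side conditions $X\subseteq x$, $Y\subseteq y$ prevent a pomset from being split across the two summands of $+$ is a sound elaboration of what the paper does implicitly.
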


\begin{proof}
Since pomset bisimulation $\sim_p$ is both an equivalent and a congruent relation, we only need to check if each axiom in Table \ref{AxiomsForBATC} is sound modulo pomset bisimulation equivalence.

\begin{itemize}
  \item \textbf{Axiom $A1$}. Let $p,q$ be $BATC$ processes, and $p+ q=q+ p$, it is sufficient to prove that $p+ q\sim_p q+ p$. By the pomset transition rules for operator $+$ in Table \ref{PTRForBATC}, we get

      $$\frac{p\xrightarrow{P}\surd}{p+ q\xrightarrow{P}\surd} (P\subseteq p) \quad \frac{p\xrightarrow{P}\surd}{q+ p\xrightarrow{P}\surd}(P\subseteq p)$$

      $$\frac{p\xrightarrow{P}p'}{p+ q\xrightarrow{P}p'}(P\subseteq p) \quad \frac{p\xrightarrow{P}p'}{q+ p\xrightarrow{P}p'}(P\subseteq p)$$

      $$\frac{q\xrightarrow{Q}\surd}{p+ q\xrightarrow{Q}\surd}(Q\subseteq q) \quad \frac{q\xrightarrow{Q}\surd}{q+ p\xrightarrow{Q}\surd}(Q\subseteq q)$$

      $$\frac{q\xrightarrow{Q}q'}{p+ q\xrightarrow{Q}q'}(Q\subseteq q) \quad \frac{q\xrightarrow{Q}q'}{q+ p\xrightarrow{Q}q'}(Q\subseteq q)$$

      So, $p+ q\sim_p q+ p$, as desired.
  \item \textbf{Axiom $A2$}. Let $p,q,s$ be $BATC$ processes, and $(p+ q)+ s=p+ (q+ s)$, it is sufficient to prove that $(p+ q)+ s \sim_p p+ (q+ s)$. By the pomset transition rules for operator $+$ in Table \ref{PTRForBATC}, we get

      $$\frac{p\xrightarrow{P}\surd}{(p+ q)+ s\xrightarrow{P}\surd} (P\subseteq p) \quad \frac{p\xrightarrow{P}\surd}{p+ (q+ s)\xrightarrow{P}\surd}(P\subseteq p)$$

      $$\frac{p\xrightarrow{P}p'}{(p+ q)+ s\xrightarrow{P}p'}(P\subseteq p) \quad \frac{p\xrightarrow{P}p'}{p+ (q+ s)\xrightarrow{P}p'}(P\subseteq p)$$

      $$\frac{q\xrightarrow{Q}\surd}{(p+ q)+ s\xrightarrow{Q}\surd}(Q\subseteq q) \quad \frac{q\xrightarrow{Q}\surd}{p+ (q+ s)\xrightarrow{Q}\surd}(Q\subseteq q)$$

      $$\frac{q\xrightarrow{Q}q'}{(p+ q)+ s\xrightarrow{Q}q'}(Q\subseteq q) \quad \frac{q\xrightarrow{Q}q'}{p+ (q+ s)\xrightarrow{Q}q'}(Q\subseteq q)$$

      $$\frac{s\xrightarrow{S}\surd}{(p+ q)+ s\xrightarrow{S}\surd}(S\subseteq s) \quad \frac{s\xrightarrow{S}\surd}{p+ (q+ s)\xrightarrow{S}\surd}(S\subseteq s)$$

      $$\frac{s\xrightarrow{S}s'}{(p+ q)+ s\xrightarrow{S}s'}(S\subseteq s) \quad \frac{s\xrightarrow{S}s'}{p+ (q+ s)\xrightarrow{S}s'}(S\subseteq s)$$

      So, $(p+ q)+ s\sim_p p+ (q+ s)$, as desired.
  \item \textbf{Axiom $A3$}. Let $p$ be a $BATC$ process, and $p+ p=p$, it is sufficient to prove that $p+ p\sim_p p$. By the pomset transition rules for operator $+$ in Table \ref{PTRForBATC}, we get

      $$\frac{p\xrightarrow{P}\surd}{p+ p\xrightarrow{P}\surd} (P\subseteq p) \quad \frac{p\xrightarrow{P}\surd}{p\xrightarrow{P}\surd}(P\subseteq p)$$

      $$\frac{p\xrightarrow{P}p'}{p+ p\xrightarrow{P}p'}(P\subseteq p) \quad \frac{p\xrightarrow{P}p'}{p\xrightarrow{P}p'}(P\subseteq p)$$

      So, $p+ p\sim_p p$, as desired.
  \item \textbf{Axiom $A4$}. Let $p,q,s$ be $BATC$ processes, and $(p+ q)\cdot s=p\cdot s + q\cdot s$, it is sufficient to prove that $(p+ q)\cdot s \sim_p p\cdot s + q\cdot s$. By the pomset transition rules for operators $+$ and $\cdot$ in Table \ref{PTRForBATC}, we get

      $$\frac{p\xrightarrow{P}\surd}{(p+ q)\cdot s\xrightarrow{P}s} (P\subseteq p) \quad \frac{p\xrightarrow{P}\surd}{p\cdot s + q\cdot s\xrightarrow{P}s}(P\subseteq p)$$

      $$\frac{p\xrightarrow{P}p'}{(p+ q)\cdot s\xrightarrow{P}p'\cdot s}(P\subseteq p) \quad \frac{p\xrightarrow{P}p'}{p\cdot s+ q\cdot s\xrightarrow{P}p'\cdot s}(P\subseteq p)$$

      $$\frac{q\xrightarrow{Q}\surd}{(p+ q)\cdot s\xrightarrow{Q}s}(Q\subseteq q) \quad \frac{q\xrightarrow{Q}\surd}{p\cdot s+ q\cdot s\xrightarrow{Q}s}(Q\subseteq q)$$

      $$\frac{q\xrightarrow{Q}q'}{(p+ q)\cdot s\xrightarrow{Q}q'\cdot s}(Q\subseteq q) \quad \frac{q\xrightarrow{Q}q'}{p\cdot s+ q\cdot s\xrightarrow{Q}q'\cdot s}(Q\subseteq q)$$

      So, $(p+ q)\cdot s\sim_p p\cdot s+ q\cdot s$, as desired.
  \item \textbf{Axiom $A5$}. Let $p,q,s$ be $BATC$ processes, and $(p\cdot q)\cdot s=p\cdot (q\cdot s)$, it is sufficient to prove that $(p\cdot q)\cdot s \sim_p p\cdot (q\cdot s)$. By the pomset transition rules for operator $\cdot$ in Table \ref{PTRForBATC}, we get

      $$\frac{p\xrightarrow{P}\surd}{(p\cdot q)\cdot s\xrightarrow{P}q\cdot s} (P\subseteq p) \quad \frac{p\xrightarrow{P}\surd}{p\cdot (q\cdot s)\xrightarrow{P}q\cdot s}(P\subseteq p)$$

      $$\frac{p\xrightarrow{P}p'}{(p\cdot q)\cdot s\xrightarrow{P}(p'\cdot q)\cdot s}(P\subseteq p) \quad \frac{p\xrightarrow{P}p'}{p\cdot (q\cdot s)\xrightarrow{P}p'\cdot (q\cdot s)}(P\subseteq p)$$

      With an assumption $(p'\cdot q)\cdot s=p'\cdot(q\cdot s)$, so, $(p\cdot q)\cdot s\sim_p p\cdot (q\cdot s)$, as desired.
\end{itemize}
\end{proof}

\begin{theorem}[Completeness of $BATC$ modulo pomset bisimulation equivalence]\label{CBATCPBE}
Let $p$ and $q$ be closed $BATC$ terms, if $p\sim_{p} q$ then $p=q$.
\end{theorem}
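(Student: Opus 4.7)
The plan is to reduce pomset bisimulation on closed $BATC$ terms to ordinary (HM) bisimulation, and then appeal to the completeness of $BATC$ modulo bisimulation equivalence already proved above (which itself rests on the corresponding $BPA$ result, whose axioms agree with Table \ref{AxiomsForBATC}). Concretely, I would first apply the Elimination Theorem (Theorem \ref{ETBATC}) to replace $p$ and $q$ by provably equal basic $BATC$ terms; by Theorem \ref{SBATCPBE} (soundness), this replacement preserves $\sim_p$, so it suffices to treat the case in which $p,q\in\mathcal{B}(BATC)$.

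Second, I would show that on closed $BATC$ terms every pomset transition is in fact a single-event transition. The intuition is that a basic $BATC$ term is built only from atomic events in $\mathbb{E}$ together with $\cdot$ (which forces total causal order) and $+$ (which forces exclusive choice); there is no operator capable of placing two events side-by-side as concurrent. A routine structural induction on basic terms, using the pomset transition rules in Table \ref{PTRForBATC}, then establishes that whenever $t\xrightarrow{X}t'$ for a closed $BATC$ term $t$, the pomset $X$ is a singleton $\{e\}$, and conversely every single-event transition in Table \ref{SETRForBATC} occurs as such a pomset transition. Hence the pomset LTS of $BATC$ and the single-event LTS of $BATC$ generate exactly the same transition graph on closed terms.

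From this coincidence of transition systems it follows immediately that, restricted to closed $BATC$ terms, the relations $\sim_p$ and $\sim_{HM}$ agree: the two-way clauses of Definition \ref{PSB} degenerate to the clauses defining $\sim_{HM}$, because every $X$ appearing as a transition label is a singleton and every $X_1\sim X_2$ with $|X_1|=|X_2|=1$ amounts to equality of the underlying labels. Thus $p\sim_p q$ gives $p\sim_{HM} q$, and the previously established completeness of $BATC$ modulo bisimulation equivalence yields $BATC\vdash p=q$.

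The main obstacle I expect is the second step — cleanly formalizing that the axiom-schema $X\xrightarrow{X}\surd$ of Table \ref{PTRForBATC} can only fire with a singleton $X$ when the source is a closed $BATC$ term, and that the congruence clauses for $\cdot$ and $+$ preserve this property. Once this is made precise, the remainder of the argument is essentially a pointer to the $BPA$-based completeness result and Definition \ref{PSB}; no new algebraic manipulation is required.
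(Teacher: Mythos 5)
Your overall strategy (reduce $\sim_p$ on closed $BATC$ terms to $\sim_{HM}$ and then invoke the already-established completeness of $BATC$ modulo bisimulation equivalence) is genuinely different from the paper's proof, which never leaves the pomset transition system: the paper normalizes basic terms modulo AC of $+$ into sums of summands $s_1+\cdots+s_k$ and shows directly, by induction on term size, that $n\sim_p n'$ forces $n=_{AC}n'$, matching each summand $e$ via $n\xrightarrow{e}\surd$ and each summand $t_1\cdot t_2$ via the pomset transition $n\xrightarrow{t_1}t_2$. However, the key lemma you propose in your second step is false under the paper's own semantics. The rules of Table \ref{PTRForBATC} are labeled by pomsets ``defined by causality $\cdot$ and conflict $+$,'' and the axiom $X\xrightarrow{X}\surd$ is meant to fire for causally ordered pomsets as well: the closed term $e_1\cdot e_2$ admits the pomset transition labeled by the two-element pomset $\{e_1\leq e_2\}$, not only the singleton transition labeled $e_1$. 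Indeed the paper's completeness proof for this very theorem relies on the transition $n\xrightarrow{t_1}t_2$ where $t_1$ may be a compound term. So the claim that the pomset LTS and the single-event LTS of closed $BATC$ terms ``generate exactly the same transition graph'' does not hold, and the identification $\sim_p=\sim_{HM}$ cannot be obtained the way you describe.

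The gap is repairable, because completeness only needs the inclusion $\sim_p\subseteq\sim_{HM}$ on closed $BATC$ terms, not the equality. That inclusion follows from a much weaker and true observation: every single-event transition $t\xrightarrow{e}t'$ is in particular a singleton pomset transition $t\xrightarrow{\{e\}}t'$, and a pomset isomorphic to a singleton is a singleton with the same label; hence any pomset bisimulation, restricted to single-event transitions, satisfies the clauses of Definition of $\sim_{HM}$. With that substitution your reduction goes through and yields $p\sim_{HM}q$, whence $p=q$ by the $BPA$-based completeness result. You should state the lemma in this one-directional form rather than as a coincidence of transition systems; as written, the step you yourself flagged as the main obstacle is the step that fails.
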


\begin{proof}
Firstly, by the elimination theorem of $BATC$, we know that for each closed $BATC$ term $p$, there exists a closed basic $BATC$ term $p'$, such that $BATC\vdash p=p'$, so, we only need to consider closed basic $BATC$ terms.

The basic terms (see Definition \ref{BTBATC}) modulo associativity and commutativity (AC) of conflict $+$ (defined by axioms $A1$ and $A2$ in Table \ref{AxiomsForBATC}), and this equivalence is denoted by $=_{AC}$. Then, each equivalence class $s$ modulo AC of $+$ has the following normal form

$$s_1+\cdots+ s_k$$

with each $s_i$ either an atomic event or of the form $t_1\cdot t_2$, and each $s_i$ is called the summand of $s$.

Now, we prove that for normal forms $n$ and $n'$, if $n\sim_{p} n'$ then $n=_{AC}n'$. It is sufficient to induct on the sizes of $n$ and $n'$.

\begin{itemize}
  \item Consider a summand $e$ of $n$. Then $n\xrightarrow{e}\surd$, so $n\sim_p n'$ implies $n'\xrightarrow{e}\surd$, meaning that $n'$ also contains the summand $e$.
  \item Consider a summand $t_1\cdot t_2$ of $n$. Then $n\xrightarrow{t_1}t_2$, so $n\sim_p n'$ implies $n'\xrightarrow{t_1}t_2'$ with $t_2\sim_p t_2'$, meaning that $n'$ contains a summand $t_1\cdot t_2'$. Since $t_2$ and $t_2'$ are normal forms and have sizes smaller than $n$ and $n'$, by the induction hypotheses $t_2\sim_p t_2'$ implies $t_2=_{AC} t_2'$.
\end{itemize}

So, we get $n=_{AC} n'$.

Finally, let $s$ and $t$ be basic terms, and $s\sim_p t$, there are normal forms $n$ and $n'$, such that $s=n$ and $t=n'$. The soundness theorem of $BATC$ modulo pomset bisimulation equivalence (see Theorem \ref{SBATCPBE}) yields $s\sim_p n$ and $t\sim_p n'$, so $n\sim_p s\sim_p t\sim_p n'$. Since if $n\sim_p n'$ then $n=_{AC}n'$, $s=n=_{AC}n'=t$, as desired.
\end{proof}

The step transition rules are defined in Table \ref{STRForBATC}, different to pomset transition rules, the step transition rules are labeled by steps, in which every event is pairwise concurrent.

\begin{center}
    \begin{table}
        $$\frac{}{X\xrightarrow{X}\surd}(\forall e_1, e_2\in X \textrm{ are pairwise concurrent.})$$
        $$\frac{x\xrightarrow{X}\surd}{x+ y\xrightarrow{X}\surd} (X\subseteq x, \forall e_1, e_2\in X \textrm{ are pairwise concurrent.})$$
        $$\frac{x\xrightarrow{X}x'}{x+ y\xrightarrow{X}x'} (X\subseteq x, \forall e_1, e_2\in X \textrm{ are pairwise concurrent.})$$ $$\frac{y\xrightarrow{Y}\surd}{x+ y\xrightarrow{Y}\surd} (Y\subseteq y, \forall e_1, e_2\in Y \textrm{ are pairwise concurrent.})$$
        $$\frac{y\xrightarrow{Y}y'}{x+ y\xrightarrow{Y}y'}(Y\subseteq y, \forall e_1, e_2\in Y \textrm{ are pairwise concurrent.})$$
        $$\frac{x\xrightarrow{X}\surd}{x\cdot y\xrightarrow{X} y} (X\subseteq x, \forall e_1, e_2\in X \textrm{ are pairwise concurrent.})$$
        $$\frac{x\xrightarrow{X}x'}{x\cdot y\xrightarrow{X}x'\cdot y} (X\subseteq x, \forall e_1, e_2\in X \textrm{ are pairwise concurrent.})$$
        \caption{Step transition rules of $BATC$}
        \label{STRForBATC}
    \end{table}
\end{center}

\begin{theorem}[Congruence of $BATC$ with respect to step bisimulation equivalence]
Step bisimulation equivalence $\sim_s$ is a congruence with respect to $BATC$.
\end{theorem}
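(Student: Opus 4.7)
The plan is to mirror the proof of the preceding congruence theorem for pomset bisimulation, replacing pomset transitions by step transitions throughout. First I would observe that $\sim_s$ is an equivalence relation on $BATC$ terms (reflexivity, symmetry, transitivity follow by taking the identity relation, the converse of a step bisimulation, and the relational composition of two step bisimulations, respectively). So the remaining task is to show that $\sim_s$ is preserved by each operator in the signature of $BATC$, namely $\cdot$ and $+$.

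For each operator I would assume $x_1\sim_s y_1$ and $x_2\sim_s y_2$ via witnessing step bisimulations $R_1$ and $R_2$, and exhibit a step bisimulation relating the composite terms. For the conflict operator $+$, I would analyze the four shapes of step transitions $x_1+x_2\xrightarrow{X}\cdot$ admitted by Table \ref{STRForBATC}: either $X\subseteq x_1$ (with successful termination or residual) or $X\subseteq x_2$ (same two subcases). In each subcase the matching step transition on the $y$-side is supplied directly by the hypothesis $x_i\sim_s y_i$, and the residual pair lies in $R_1\cup R_2$ or is of the form $(\surd,\surd)$. For the causality operator $\cdot$, a step transition $x_1\cdot x_2\xrightarrow{X}\cdot$ must originate from a step transition of $x_1$; the matching transition on the $y$-side is again delivered by $x_1\sim_s y_1$, and the residual is either $x_2$ paired with $y_2$ (related by $R_2$) or $x_1'\cdot x_2$ paired with $y_1'\cdot y_2$ (which lies in the candidate relation being constructed). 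The candidate relation in each case is the evident closure of $R_1\cup R_2$ under the operator in question, plus the diagonal.

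The only feature distinguishing this proof from the pomset congruence theorem is the side condition that all events in the transition label $X$ be pairwise concurrent. This is not substantively new: the matching transition handed back by the bisimulation hypothesis carries a label $Y$ with $X\sim Y$ as pomsets, and pairwise concurrency is a property of the pomset isomorphism type, so $Y$ automatically inherits pairwise concurrency from $X$. Hence the matching transitions obtained through $R_1$ and $R_2$ really are step transitions, and not merely pomset transitions, so the candidate relations are genuinely step bisimulations. The main point to be careful about, then, is simply the bookkeeping of the concurrency side condition under the operator rules; everything else is a straightforward transcription of the pomset congruence argument.
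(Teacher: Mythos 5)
Your proposal is correct and follows essentially the same route as the paper: check that $\sim_s$ is an equivalence and then verify preservation under $\cdot$ and $+$ by case analysis on the step transition rules, exactly mirroring the pomset congruence proof. You are in fact slightly more explicit than the paper in naming the candidate relation (the operator-closure of $R_1\cup R_2$) and in justifying that pairwise concurrency of the label is preserved under pomset isomorphism, points the paper leaves implicit.
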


\begin{proof}
It is easy to see that step bisimulation is an equivalent relation on $BATC$ terms, we only need to prove that $\sim_{s}$ is preserved by the operators $\cdot$ and $+$.

\begin{itemize}
  \item Causality operator $\cdot$. Let $x_1,x_2$ and $y_1,y_2$ be $BATC$ processes, and $x_1\sim_{s} y_1$, $x_2\sim_{s} y_2$, it is sufficient to prove that $x_1\cdot x_2\sim_{s} y_1\cdot y_2$.

      By the definition of step bisimulation $\sim_s$ (Definition \ref{PSB}), $x_1\sim_s y_1$ means that

      $$x_1\xrightarrow{X_1} x_1' \quad y_1\xrightarrow{Y_1} y_1'$$

      with $X_1\subseteq x_1$, $\forall e_1, e_2\in X_1 \textrm{ are pairwise concurrent}$, $Y_1\subseteq y_1$, $\forall e_1, e_2\in Y_1 \textrm{ are pairwise concurrent}$, $X_1\sim Y_1$ and $x_1'\sim_s y_1'$. The meaning of $x_2\sim_s y_2$ is similar.

      By the step transition rules for causality operator $\cdot$ in Table \ref{STRForBATC}, we can get

      $$x_1\cdot x_2\xrightarrow{X_1} x_2 \quad y_1\cdot y_2\xrightarrow{Y_1} y_2$$

      with $X_1\subseteq x_1$, $\forall e_1, e_2\in X_1 \textrm{ are pairwise concurrent}$, $Y_1\subseteq y_1$, $\forall e_1, e_2\in Y_1 \textrm{ are pairwise concurrent}$, $X_1\sim Y_1$ and $x_2\sim_s y_2$, so, we get $x_1\cdot x_2\sim_s y_1\cdot y_2$, as desired.

      Or, we can get

      $$x_1\cdot x_2\xrightarrow{X_1} x_1'\cdot x_2 \quad y_1\cdot y_2\xrightarrow{Y_1} y_1'\cdot y_2$$

      with $X_1\subseteq x_1$, $\forall e_1, e_2\in X_1 \textrm{ are pairwise concurrent.}$, $Y_1\subseteq y_1$, $\forall e_1, e_2\in Y_1 \textrm{ are pairwise concurrent.}$, $X_1\sim Y_1$ and $x_1'\sim_s y_1'$, $x_2\sim_s y_2$, so, we get $x_1\cdot x_2\sim_s y_1\cdot y_2$, as desired.
  \item Conflict operator $+$. Let $x_1, x_2$ and $y_1, y_2$ be $BATC$ processes, and $x_1\sim_s y_1$, $x_2\sim_s y_2$, it is sufficient to prove that $x_1+ x_2 \sim_s y_1+ y_2$. The meanings of $x_1\sim_s y_1$ and $x_2\sim_s y_2$ are the same as the above case, according to the definition of step bisimulation $\sim_s$ in Definition \ref{PSB}.

      By the step transition rules for conflict operator $+$ in Table \ref{STRForBATC}, we can get four cases:

      $$x_1+ x_2\xrightarrow{X_1} \surd \quad y_1+ y_2\xrightarrow{Y_1} \surd$$

      with $X_1\subseteq x_1$, $\forall e_1, e_2\in X_1 \textrm{ are pairwise concurrent}$, $Y_1\subseteq y_1$, $\forall e_1, e_2\in Y_1 \textrm{ are pairwise concurrent}$, $X_1\sim Y_1$, so, we get $x_1+ x_2\sim_s y_1+ y_2$, as desired.

      Or, we can get

      $$x_1+ x_2\xrightarrow{X_1} x_1' \quad y_1+ y_2\xrightarrow{Y_1} y_1'$$

      with $X_1\subseteq x_1$, $\forall e_1, e_2\in X_1 \textrm{ are pairwise concurrent}$, $Y_1\subseteq y_1$, $\forall e_1, e_2\in Y_1 \textrm{ are pairwise concurrent}$, $X_1\sim Y_1$, and $x_1'\sim_s y_1'$, so, we get $x_1+ x_2\sim_s y_1+ y_2$, as desired.

      Or, we can get

      $$x_1+ x_2\xrightarrow{X_2} \surd \quad y_1+ y_2\xrightarrow{Y_2} \surd$$

      with $X_2\subseteq x_2$, $\forall e_1, e_2\in X_2 \textrm{ are pairwise concurrent}$, $Y_2\subseteq y_2$, $\forall e_1, e_2\in Y_2 \textrm{ are pairwise concurrent}$, $X_2\sim Y_2$, so, we get $x_1+ x_2\sim_s y_1+ y_2$, as desired.

      Or, we can get

      $$x_1+ x_2\xrightarrow{X_2} x_2' \quad y_1+ y_2\xrightarrow{Y_2} y_2'$$

      with $X_2\subseteq x_2$, $\forall e_1, e_2\in X_2 \textrm{ are pairwise concurrent}$, $Y_2\subseteq y_2$, $\forall e_1, e_2\in Y_2 \textrm{ are pairwise concurrent}$, $X_2\sim Y_2$, and $x_2'\sim_s y_2'$, so, we get $x_1+ x_2\sim_s y_1+ y_2$, as desired.
\end{itemize}
\end{proof}

\begin{theorem}[Soundness of $BATC$ modulo step bisimulation equivalence]\label{SBATCSBE}
Let $x$ and $y$ be $BATC$ terms. If $BATC\vdash x=y$, then $x\sim_{s} y$.
\end{theorem}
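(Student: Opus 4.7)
The plan is to mirror the structure of the proof of Theorem~\ref{SBATCPBE}. Since the preceding congruence theorem shows that $\sim_s$ is both an equivalence and a congruence on $BATC$ terms, it suffices, by induction on the derivation of $BATC\vdash x=y$, to verify that each axiom $A1$--$A5$ in Table~\ref{AxiomsForBATC} is sound modulo step bisimulation equivalence. That is, for each axiom $u=v$ I exhibit a step bisimulation containing the pair $(u,v)$ (closed under the obvious identifications of subterms).

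For each axiom I would proceed exactly as in the pomset case but consulting Table~\ref{STRForBATC} instead of Table~\ref{PTRForBATC}. Concretely: for $A1$, the rules for $+$ yield matching step transitions $p+q\xrightarrow{X}\surd$ (respectively $\xrightarrow{X} p'$) precisely when $q+p$ has them, for any $X\subseteq p$ whose events are pairwise concurrent, and symmetrically when the summand is drawn from $q$. For $A2$, associativity follows by case-splitting on which of the three summands $p$, $q$, or $s$ the observed step originates from. For $A3$, $p+p\xrightarrow{X} t$ iff $p\xrightarrow{X} t$ under the same side condition, so the identity relation (together with the pair $(p+p,p)$) is a step bisimulation. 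For $A4$, any step transition $(p+q)\cdot s\xrightarrow{X} t$ arises either from $p\xrightarrow{X}\surd$ (giving $t=s$) or $p\xrightarrow{X} p'$ (giving $t=p'\cdot s$), or symmetrically from $q$; in each case $p\cdot s+q\cdot s$ has a matching transition with identical label and side condition. For $A5$, the transitions of $(p\cdot q)\cdot s$ and $p\cdot(q\cdot s)$ are triggered by $p$ alone, with residuals $(p'\cdot q)\cdot s$ and $p'\cdot(q\cdot s)$ that are related inductively (or by appealing to congruence of $\sim_s$ under $\cdot$).

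The main potential obstacle is purely bookkeeping: in every case one must verify that the ``$\forall e_1,e_2\in X$ pairwise concurrent'' side condition attached to each rule in Table~\ref{STRForBATC} is preserved symmetrically on both sides of the axiom. Because the axioms of $BATC$ only rearrange the top-level $+$/$\cdot$ structure and never modify the subterm from which the label $X$ is actually drawn (the set $X$ is always a subset of the same underlying term $p$, $q$, or $s$ on both sides), the pairwise-concurrency condition is inherited verbatim. Consequently the proof reduces to a mechanical repetition of the argument for Theorem~\ref{SBATCPBE}, with the pairwise-concurrency annotation carried along unchanged throughout, and no new difficulty arises.
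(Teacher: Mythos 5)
Your proposal matches the paper's proof: it reduces soundness to checking axioms $A1$--$A5$ against the step transition rules of Table~\ref{STRForBATC} (using that $\sim_s$ is an equivalence and a congruence), handles each axiom by exhibiting matching step transitions with the pairwise-concurrency side condition carried along unchanged, and treats the residuals in $A5$ by the same inductive assumption the paper uses. This is essentially identical to the paper's argument, so no further comparison is needed.
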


\begin{proof}
Since step bisimulation $\sim_s$ is both an equivalent and a congruent relation, we only need to check if each axiom in Table \ref{AxiomsForBATC} is sound modulo step bisimulation equivalence.

\begin{itemize}
  \item \textbf{Axiom $A1$}. Let $p,q$ be $BATC$ processes, and $p+ q=q+ p$, it is sufficient to prove that $p+ q\sim_s q+ p$. By the step transition rules for operator $+$ in Table \ref{STRForBATC}, we get

      $$\frac{p\xrightarrow{P}\surd}{p+ q\xrightarrow{P}\surd} (P\subseteq p,\forall e_1, e_2\in P \textrm{ are pairwise concurrent.})$$

      $$\frac{p\xrightarrow{P}\surd}{q+ p\xrightarrow{P}\surd}(P\subseteq p,\forall e_1, e_2\in P \textrm{ are pairwise concurrent.})$$

      $$\frac{p\xrightarrow{P}p'}{p+ q\xrightarrow{P}p'}(P\subseteq p,\forall e_1, e_2\in P \textrm{ are pairwise concurrent.})$$

      $$\frac{p\xrightarrow{P}p'}{q+ p\xrightarrow{P}p'}(P\subseteq p,\forall e_1, e_2\in P \textrm{ are pairwise concurrent.})$$

      $$\frac{q\xrightarrow{Q}\surd}{p+ q\xrightarrow{Q}\surd}(Q\subseteq q,\forall e_1, e_2\in Q \textrm{ are pairwise concurrent.})$$

      $$\frac{q\xrightarrow{Q}\surd}{q+ p\xrightarrow{Q}\surd}(Q\subseteq q,\forall e_1, e_2\in Q \textrm{ are pairwise concurrent.})$$

      $$\frac{q\xrightarrow{Q}q'}{p+ q\xrightarrow{Q}q'}(Q\subseteq q,\forall e_1, e_2\in Q \textrm{ are pairwise concurrent.})$$

      $$\frac{q\xrightarrow{Q}q'}{q+ p\xrightarrow{Q}q'}(Q\subseteq q,\forall e_1, e_2\in Q \textrm{ are pairwise concurrent.})$$

      So, $p+ q\sim_s q+ p$, as desired.
  \item \textbf{Axiom $A2$}. Let $p,q,s$ be $BATC$ processes, and $(p+ q)+ s=p+ (q+ s)$, it is sufficient to prove that $(p+ q)+ s \sim_s p+ (q+ s)$. By the step transition rules for operator $+$ in Table \ref{STRForBATC}, we get

      $$\frac{p\xrightarrow{P}\surd}{(p+ q)+ s\xrightarrow{P}\surd} (P\subseteq p,\forall e_1, e_2\in P \textrm{ are pairwise concurrent.})$$

      $$\frac{p\xrightarrow{P}\surd}{p+ (q+ s)\xrightarrow{P}\surd}(P\subseteq p,\forall e_1, e_2\in P \textrm{ are pairwise concurrent.})$$

      $$\frac{p\xrightarrow{P}p'}{(p+ q)+ s\xrightarrow{P}p'}(P\subseteq p,\forall e_1, e_2\in P \textrm{ are pairwise concurrent.})$$

      $$\frac{p\xrightarrow{P}p'}{p+ (q+ s)\xrightarrow{P}p'}(P\subseteq p,\forall e_1, e_2\in P \textrm{ are pairwise concurrent.})$$

      $$\frac{q\xrightarrow{Q}\surd}{(p+ q)+ s\xrightarrow{Q}\surd}(Q\subseteq q,\forall e_1, e_2\in Q \textrm{ are pairwise concurrent.})$$

      $$\frac{q\xrightarrow{Q}\surd}{p+ (q+ s)\xrightarrow{Q}\surd}(Q\subseteq q,\forall e_1, e_2\in Q \textrm{ are pairwise concurrent.})$$

      $$\frac{q\xrightarrow{Q}q'}{(p+ q)+ s\xrightarrow{Q}q'}(Q\subseteq q,\forall e_1, e_2\in Q \textrm{ are pairwise concurrent.})$$

      $$\frac{q\xrightarrow{Q}q'}{p+ (q+ s)\xrightarrow{Q}q'}(Q\subseteq q,\forall e_1, e_2\in Q \textrm{ are pairwise concurrent.})$$

      $$\frac{s\xrightarrow{S}\surd}{(p+ q)+ s\xrightarrow{S}\surd}(S\subseteq s,\forall e_1, e_2\in S \textrm{ are pairwise concurrent.})$$

      $$\frac{s\xrightarrow{S}\surd}{p+ (q+ s)\xrightarrow{S}\surd}(S\subseteq s,\forall e_1, e_2\in S \textrm{ are pairwise concurrent.})$$

      $$\frac{s\xrightarrow{S}s'}{(p+ q)+ s\xrightarrow{S}s'}(S\subseteq s,\forall e_1, e_2\in S \textrm{ are pairwise concurrent.})$$

      $$\frac{s\xrightarrow{S}s'}{p+ (q+ s)\xrightarrow{S}s'}(S\subseteq s,\forall e_1, e_2\in S \textrm{ are pairwise concurrent.})$$

      So, $(p+ q)+ s\sim_s p+ (q+ s)$, as desired.
  \item \textbf{Axiom $A3$}. Let $p$ be a $BATC$ process, and $p+ p=p$, it is sufficient to prove that $p+ p\sim_s p$. By the step transition rules for operator $+$ in Table \ref{STRForBATC}, we get

      $$\frac{p\xrightarrow{P}\surd}{p+ p\xrightarrow{P}\surd} (P\subseteq p,\forall e_1, e_2\in P \textrm{ are pairwise concurrent.})$$

      $$\frac{p\xrightarrow{P}\surd}{p\xrightarrow{P}\surd}(P\subseteq p,\forall e_1, e_2\in P \textrm{ are pairwise concurrent.})$$

      $$\frac{p\xrightarrow{P}p'}{p+ p\xrightarrow{P}p'}(P\subseteq p,\forall e_1, e_2\in P \textrm{ are pairwise concurrent.})$$

      $$\frac{p\xrightarrow{P}p'}{p\xrightarrow{P}p'}(P\subseteq p,\forall e_1, e_2\in P \textrm{ are pairwise concurrent.})$$

      So, $p+ p\sim_s p$, as desired.
  \item \textbf{Axiom $A4$}. Let $p,q,s$ be $BATC$ processes, and $(p+ q)\cdot s=p\cdot s + q\cdot s$, it is sufficient to prove that $(p+ q)\cdot s \sim_s p\cdot s + q\cdot s$. By the step transition rules for operators $+$ and $\cdot$ in Table \ref{STRForBATC}, we get

      $$\frac{p\xrightarrow{P}\surd}{(p+ q)\cdot s\xrightarrow{P}s} (P\subseteq p,\forall e_1, e_2\in P \textrm{ are pairwise concurrent.})$$

      $$\frac{p\xrightarrow{P}\surd}{p\cdot s + q\cdot s\xrightarrow{P}s}(P\subseteq p,\forall e_1, e_2\in P \textrm{ are pairwise concurrent.})$$

      $$\frac{p\xrightarrow{P}p'}{(p+ q)\cdot s\xrightarrow{P}p'\cdot s}(P\subseteq p,\forall e_1, e_2\in P \textrm{ are pairwise concurrent.})$$

      $$\frac{p\xrightarrow{P}p'}{p\cdot s+ q\cdot s\xrightarrow{P}p'\cdot s}(P\subseteq p,\forall e_1, e_2\in P \textrm{ are pairwise concurrent.})$$

      $$\frac{q\xrightarrow{Q}\surd}{(p+ q)\cdot s\xrightarrow{Q}s}(Q\subseteq q,\forall e_1, e_2\in Q \textrm{ are pairwise concurrent.})$$

      $$\frac{q\xrightarrow{Q}\surd}{p\cdot s+ q\cdot s\xrightarrow{Q}s}(Q\subseteq q,\forall e_1, e_2\in Q \textrm{ are pairwise concurrent.})$$

      $$\frac{q\xrightarrow{Q}q'}{(p+ q)\cdot s\xrightarrow{Q}q'\cdot s}(Q\subseteq q,\forall e_1, e_2\in Q \textrm{ are pairwise concurrent.})$$

      $$\frac{q\xrightarrow{Q}q'}{p\cdot s+ q\cdot s\xrightarrow{Q}q'\cdot s}(Q\subseteq q,\forall e_1, e_2\in Q \textrm{ are pairwise concurrent.})$$

      So, $(p+ q)\cdot s\sim_s p\cdot s+ q\cdot s$, as desired.
  \item \textbf{Axiom $A5$}. Let $p,q,s$ be $BATC$ processes, and $(p\cdot q)\cdot s=p\cdot (q\cdot s)$, it is sufficient to prove that $(p\cdot q)\cdot s \sim_s p\cdot (q\cdot s)$. By the step transition rules for operator $\cdot$ in Table \ref{STRForBATC}, we get

      $$\frac{p\xrightarrow{P}\surd}{(p\cdot q)\cdot s\xrightarrow{P}q\cdot s} (P\subseteq p,\forall e_1, e_2\in P \textrm{ are pairwise concurrent.})$$

      $$\frac{p\xrightarrow{P}\surd}{p\cdot (q\cdot s)\xrightarrow{P}q\cdot s}(P\subseteq p,\forall e_1, e_2\in P \textrm{ are pairwise concurrent.})$$

      $$\frac{p\xrightarrow{P}p'}{(p\cdot q)\cdot s\xrightarrow{P}(p'\cdot q)\cdot s}(P\subseteq p,\forall e_1, e_2\in P \textrm{ are pairwise concurrent.})$$

      $$\frac{p\xrightarrow{P}p'}{p\cdot (q\cdot s)\xrightarrow{P}p'\cdot (q\cdot s)}(P\subseteq p,\forall e_1, e_2\in P \textrm{ are pairwise concurrent.})$$

      With an assumption $(p'\cdot q)\cdot s=p'\cdot(q\cdot s)$, so, $(p\cdot q)\cdot s\sim_s p\cdot (q\cdot s)$, as desired.
\end{itemize}
\end{proof}

\begin{theorem}[Completeness of $BATC$ modulo step bisimulation equivalence]\label{CBATCSBE}
Let $p$ and $q$ be closed $BATC$ terms, if $p\sim_{s} q$ then $p=q$.
\end{theorem}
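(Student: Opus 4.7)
The plan is to mirror the completeness proof for pomset bisimulation (Theorem~\ref{CBATCPBE}) essentially verbatim, exploiting the fact that $BATC$ has no parallel operator, so every step transition out of a (closed) basic term is labeled by a singleton set of events, which is trivially pairwise concurrent. Thus on basic terms, step bisimulation and pomset bisimulation carry the same information.

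First, I would invoke the elimination theorem (Theorem~\ref{ETBATC}) to reduce the claim to closed basic $BATC$ terms. Next, using axioms $A1$ and $A2$ in Table~\ref{AxiomsForBATC}, I would put each basic term into a normal form
$$ s_1 + \cdots + s_k $$
modulo associativity and commutativity of $+$ (write this equivalence as $=_{AC}$), where each summand $s_i$ is either an atomic event $e \in \mathbb{E}$ or of the form $e\cdot t$ with $e \in \mathbb{E}$ and $t$ a basic term. Axiom $A3$ lets us further assume no two summands are literally equal, though this is not strictly needed.

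The core step is the following inductive claim: for normal forms $n$ and $n'$, if $n \sim_s n'$ then $n =_{AC} n'$. I would prove this by induction on the sum of the sizes of $n$ and $n'$. For each summand $s_i$ of $n$, the step transition rules of Table~\ref{STRForBATC} produce either (i) $n \xrightarrow{\{e\}} \surd$ when $s_i = e$, forcing $n' \xrightarrow{\{e\}} \surd$ and hence $n'$ must have a summand $e$ (a summand $e \cdot t'$ cannot yield $\surd$); or (ii) $n \xrightarrow{\{e\}} t$ when $s_i = e \cdot t$, forcing $n' \xrightarrow{\{e\}} t'$ with $t \sim_s t'$, which means $n'$ contains a summand $e \cdot t'$ with $t \sim_s t'$, and the induction hypothesis gives $t =_{AC} t'$. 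Symmetry handles the converse direction, yielding $n =_{AC} n'$.

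Finally, to conclude: given closed basic terms $s$ and $t$ with $s \sim_s t$, take normal forms $n$ and $n'$ with $BATC \vdash s = n$ and $BATC \vdash t = n'$. By the soundness theorem (Theorem~\ref{SBATCSBE}), $s \sim_s n$ and $t \sim_s n'$, so $n \sim_s s \sim_s t \sim_s n'$, whence $n =_{AC} n'$ and therefore $BATC \vdash s = n = n' = t$. The only potential obstacle — that step labels could be multi-event sets and complicate the transition matching — does not arise because $BATC$ has no $\parallel$, so every firing from a basic term is by a single event, and the argument reduces exactly to the pomset case already proved.
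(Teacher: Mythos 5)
Your proposal is correct and follows essentially the same route as the paper's proof: elimination to basic terms, normal forms modulo AC of $+$, an induction on term size showing $n\sim_s n'$ implies $n=_{AC}n'$, and then the soundness theorem to close the argument. Your added observation that every step transition of a closed basic $BATC$ term is labeled by a singleton (so the pairwise-concurrency side condition is vacuous and the argument coincides with the pomset case) is a correct clarification of what the paper does implicitly.
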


\begin{proof}
Firstly, by the elimination theorem of $BATC$, we know that for each closed $BATC$ term $p$, there exists a closed basic $BATC$ term $p'$, such that $BATC\vdash p=p'$, so, we only need to consider closed basic $BATC$ terms.

The basic terms (see Definition \ref{BTBATC}) modulo associativity and commutativity (AC) of conflict $+$ (defined by axioms $A1$ and $A2$ in Table \ref{AxiomsForBATC}), and this equivalence is denoted by $=_{AC}$. Then, each equivalence class $s$ modulo AC of $+$ has the following normal form

$$s_1+\cdots+ s_k$$

with each $s_i$ either an atomic event or of the form $t_1\cdot t_2$, and each $s_i$ is called the summand of $s$.

Now, we prove that for normal forms $n$ and $n'$, if $n\sim_{s} n'$ then $n=_{AC}n'$. It is sufficient to induct on the sizes of $n$ and $n'$.

\begin{itemize}
  \item Consider a summand $e$ of $n$. Then $n\xrightarrow{e}\surd$, so $n\sim_s n'$ implies $n'\xrightarrow{e}\surd$, meaning that $n'$ also contains the summand $e$.
  \item Consider a summand $t_1\cdot t_2$ of $n$. Then $n\xrightarrow{t_1}t_2$($\forall e_1, e_2\in t_1$ are pairwise concurrent), so $n\sim_s n'$ implies $n'\xrightarrow{t_1}t_2'$($\forall e_1, e_2\in t_1$ are pairwise concurrent) with $t_2\sim_s t_2'$, meaning that $n'$ contains a summand $t_1\cdot t_2'$. Since $t_2$ and $t_2'$ are normal forms and have sizes smaller than $n$ and $n'$, by the induction hypotheses if $t_2\sim_s t_2'$ then $t_2=_{AC} t_2'$.
\end{itemize}

So, we get $n=_{AC} n'$.

Finally, let $s$ and $t$ be basic terms, and $s\sim_s t$, there are normal forms $n$ and $n'$, such that $s=n$ and $t=n'$. The soundness theorem of $BATC$ modulo step bisimulation equivalence (see Theorem \ref{SBATCSBE}) yields $s\sim_s n$ and $t\sim_s n'$, so $n\sim_s s\sim_s t\sim_s n'$. Since if $n\sim_s n'$ then $n=_{AC}n'$, $s=n=_{AC}n'=t$, as desired.
\end{proof}

The transition rules for (hereditary) hp-bisimulation of $BATC$ are defined in Table \ref{HHPTRForBATC}, they are the same as single event transition rules in Table \ref{SETRForBATC}.

\begin{center}
    \begin{table}
        $$\frac{}{e\xrightarrow{e}\surd}$$
        $$\frac{x\xrightarrow{e}\surd}{x+ y\xrightarrow{e}\surd} \quad\frac{x\xrightarrow{e}x'}{x+ y\xrightarrow{e}x'} \quad\frac{y\xrightarrow{e}\surd}{x+ y\xrightarrow{e}\surd} \quad\frac{y\xrightarrow{e}y'}{x+ y\xrightarrow{e}y'}$$
        $$\frac{x\xrightarrow{e}\surd}{x\cdot y\xrightarrow{e} y} \quad\frac{x\xrightarrow{e}x'}{x\cdot y\xrightarrow{e}x'\cdot y}$$
        \caption{(Hereditary) hp-transition rules of $BATC$}
        \label{HHPTRForBATC}
    \end{table}
\end{center}

\begin{theorem}[Congruence of $BATC$ with respect to hp-bisimulation equivalence]
Hp-bisimulation equivalence $\sim_{hp}$ is a congruence with respect to $BATC$.
\end{theorem}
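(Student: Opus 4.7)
The plan is to follow the same schema used for the pomset and step congruence proofs above, but with the added bookkeeping that hp-bisimulation requires: for each pair of related configurations we must also carry along an order-isomorphism $f$. First I would observe that $\sim_{hp}$ is trivially an equivalence (reflexivity uses the identity isomorphism, symmetry uses $f^{-1}$, transitivity uses $g\circ f$), so only closure under the two operators $\cdot$ and $+$ remains.

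For the conflict operator, let $x_1\sim_{hp} y_1$ via $R_1$ and $x_2\sim_{hp} y_2$ via $R_2$, both containing $(\emptyset,\emptyset,\emptyset)$. I would define
$$R \;=\; \{(\emptyset,\emptyset,\emptyset)\} \;\cup\; \{(C_1,f,C_2)\in R_1 : C_1\ne\emptyset\} \;\cup\; \{(C_1,f,C_2)\in R_2 : C_1\ne\emptyset\}$$
with the configurations reinterpreted inside $x_1+x_2$ and $y_1+y_2$. By inspecting the transition rules for $+$ in Table \ref{HHPTRForBATC}, any first move of $x_1+x_2$ commits to either the $x_1$-branch or the $x_2$-branch; the chosen $R_i$ then supplies the matching move and the extended isomorphism $f[e_1\mapsto e_2]$, so $R$ is an hp-bisimulation.

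For the causality operator, let $R_1,R_2$ again witness $x_1\sim_{hp} y_1$ and $x_2\sim_{hp} y_2$. Configurations of $x_1\cdot x_2$ split into those reached while $x_1$ is still running and those that extend a terminated configuration of $x_1$ by a configuration of $x_2$. Accordingly I would take
$$R \;=\; \{(C_1,f,C_2) \in R_1\} \;\cup\; \{(C_1^{x_1}\cup C_1^{x_2},\; f_1\cup f_2,\; C_2^{y_1}\cup C_2^{y_2})\}$$
where $(C_1^{x_1},f_1,C_2^{y_1})$ is a pair from $R_1$ whose first components are complete (so $x_1$ terminated via $\xrightarrow{e}\surd$) and $(C_1^{x_2},f_2,C_2^{y_2})\in R_2$. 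One then checks the two subcases of the rule for $\cdot$: a move $x\cdot y\xrightarrow{e}x'\cdot y$ is matched using $R_1$ directly, while a move $x\cdot y\xrightarrow{e}y$ is the hand-off point at which $R_2$ takes over and $f$ is extended via $f_1\cup f_2$. The causality requirement for the glued isomorphism is automatic because every event of the $x_1$-part causally precedes every event of the $x_2$-part in $x_1\cdot x_2$, and likewise on the $y$-side.

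The main obstacle I expect is verifying that the combined map $f_1\cup f_2$ is genuinely an order-isomorphism of the concatenated configuration, i.e., that the causal order across the splice is preserved on both sides so that the hereditary/downward-closure conditions implicit in posetal relations hold after every step; once this is in place the two remaining cases reduce to direct invocations of $R_1$ or $R_2$ and the existing transition rules.
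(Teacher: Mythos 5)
Your proposal is correct and takes essentially the same approach as the paper: the paper likewise reduces the claim to showing that $\sim_{hp}$ is preserved by $\cdot$ and $+$ and then performs a case analysis on the transition rules in Table \ref{HHPTRForBATC}. You are more explicit than the paper in actually constructing the witnessing posetal relations (the union for $+$, the spliced relation with the glued isomorphism $f_1\cup f_2$ for $\cdot$) and in checking that the glued map remains an order-isomorphism across the hand-off, which the paper's proof leaves implicit; this is added rigor rather than a different method.
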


\begin{proof}
It is easy to see that history-preserving bisimulation is an equivalent relation on $BATC$ terms, we only need to prove that $\sim_{hp}$ is preserved by the operators $\cdot$ and $+$.

\begin{itemize}
  \item Causality operator $\cdot$. Let $x_1,x_2$ and $y_1,y_2$ be $BATC$ processes, and $x_1\sim_{hp} y_1$, $x_2\sim_{hp} y_2$, it is sufficient to prove that $x_1\cdot x_2\sim_{hp} y_1\cdot y_2$.

      By the definition of hp-bisimulation $\sim_{hp}$ (Definition \ref{HHPB}), $x_1\sim_{hp} y_1$ means that there is a posetal relation $(C(x_1),f,C(y_1))\in\sim_{hp}$, and

      $$x_1\xrightarrow{e_1} x_1' \quad y_1\xrightarrow{e_2} y_1'$$

      with $(C(x_1'),f[e_1\mapsto e_2],C(y_1'))\in\sim_{hp}$. The meaning of $x_2\sim_{hp} y_2$ is similar.

      By the hp-transition rules for causality operator $\cdot$ in Table \ref{HHPTRForBATC}, we can get

      $$x_1\cdot x_2\xrightarrow{e_1} x_2 \quad y_1\cdot y_2\xrightarrow{e_2} y_2$$

      with $x_2\sim_{hp} y_2$, so, we get $x_1\cdot x_2\sim_{hp} y_1\cdot y_2$, as desired.

      Or, we can get

      $$x_1\cdot x_2\xrightarrow{e_1} x_1'\cdot x_2 \quad y_1\cdot y_2\xrightarrow{e_2} y_1'\cdot y_2$$

      with $x_1'\sim_{hp} y_1'$, $x_2\sim_{hp} y_2$, so, we get $x_1\cdot x_2\sim_{hp} y_1\cdot y_2$, as desired.
  \item Conflict operator $+$. Let $x_1, x_2$ and $y_1, y_2$ be $BATC$ processes, and $x_1\sim_{hp} y_1$, $x_2\sim_{hp} y_2$, it is sufficient to prove that $x_1+ x_2 \sim_{hp} y_1+ y_2$. The meanings of $x_1\sim_{hp} y_1$ and $x_2\sim_{hp} y_2$ are the same as the above case, according to the definition of hp-bisimulation $\sim_{hp}$ in Definition \ref{HHPB}.

      By the hp-transition rules for conflict operator $+$ in Table \ref{HHPTRForBATC}, we can get four cases:

      $$x_1+ x_2\xrightarrow{e_1} \surd \quad y_1+ y_2\xrightarrow{e_2} \surd$$

      so, we get $x_1+ x_2\sim_{hp} y_1+ y_2$, as desired.

      Or, we can get

      $$x_1+ x_2\xrightarrow{e_1} x_1' \quad y_1+ y_2\xrightarrow{e_2} y_1'$$

      with $x_1'\sim_{hp} y_1'$, so, we get $x_1+ x_2\sim_{hp} y_1+ y_2$, as desired.

      Or, we can get

      $$x_1+ x_2\xrightarrow{e_1'} \surd \quad y_1+ y_2\xrightarrow{e_2'} \surd$$

      so, we get $x_1+ x_2\sim_{hp} y_1+ y_2$, as desired.

      Or, we can get

      $$x_1+ x_2\xrightarrow{e_1'} x_2' \quad y_1+ y_2\xrightarrow{e_2'} y_2'$$

      with $x_2'\sim_{hp} y_2'$, so, we get $x_1+ x_2\sim_{hp} y_1+ y_2$, as desired.
\end{itemize}
\end{proof}

\begin{theorem}[Soundness of $BATC$ modulo hp-bisimulation equivalence]\label{SBATCHPBE}
Let $x$ and $y$ be $BATC$ terms. If $BATC\vdash x=y$, then $x\sim_{hp} y$.
\end{theorem}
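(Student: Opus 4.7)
The plan is to mimic the structure used in the earlier soundness theorems (for pomset and step bisimulation) but adapted to the posetal setting. Since the preceding theorem establishes that $\sim_{hp}$ is a congruence with respect to $BATC$, it suffices to verify that each axiom $A1$--$A5$ in Table \ref{AxiomsForBATC} is sound modulo hp-bisimulation, i.e., that for each axiom $t = u$, we can exhibit a hp-bisimulation relating $t$ and $u$. Once this is done, soundness follows by structural induction on the derivation of $BATC\vdash x = y$ using the congruence property at each step.

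For each axiom, I would proceed by exhibiting an explicit posetal relation $R\subseteq\mathcal{C}(\mathcal{E}_{\text{LHS}})\overline{\times}\mathcal{C}(\mathcal{E}_{\text{RHS}})$ containing $(\emptyset,\emptyset,\emptyset)$ and then checking the transfer property. First I would use the hp-transition rules of Table \ref{HHPTRForBATC} to enumerate the possible single-event transitions of the left-hand side; for each one, I would identify the matching transition on the right-hand side (using exactly the same event $e$, since $BATC$ processes do not involve $\tau$ or renaming), and then extend the isomorphism $f$ by $f[e\mapsto e]$. The symmetric direction is handled analogously.

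Concretely, the cases split as follows. For $A1$ ($x+y = y+x$), $A2$ (associativity of $+$), and $A3$ ($x+x = x$), the hp-transition rules for $+$ show that every transition of one side is matched by the corresponding transition of the other side with the same target process (up to reassociation or duplication), so the identity map on reached configurations gives the needed isomorphism. For $A4$ ($(x+y)\cdot z = x\cdot z + y\cdot z$), a transition of the left-hand side either terminates $x+y$ with event $e$ from $x$ (resp.\ $y$) leaving residual $z$, or progresses $x+y$ to $x'$ (resp.\ $y'$) leaving residual $x'\cdot z$ (resp.\ $y'\cdot z$); each of these is matched step-for-step by a transition of the corresponding summand on the right-hand side, with the isomorphism extended by $e\mapsto e$. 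For $A5$ (associativity of $\cdot$), transitions of $(x\cdot y)\cdot z$ originate in $x$, yielding residuals $(x'\cdot y)\cdot z$ or $y\cdot z$; the matching transitions of $x\cdot(y\cdot z)$ yield $x'\cdot(y\cdot z)$ or $y\cdot z$, and an induction hypothesis (inherited by congruence) closes the case $(x'\cdot y)\cdot z \sim_{hp} x'\cdot(y\cdot z)$.

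The main obstacle, compared with the pomset/step soundness proofs, is bookkeeping on the partial-order isomorphism $f$ rather than just on labels or multisets. One must verify that extending $f$ by $e\mapsto e$ preserves the causal/conflict structure on both sides, i.e., that the configurations generated by syntactically equal terms modulo an axiom produce isomorphic posets at every stage. Because the hp-transition rules in Table \ref{HHPTRForBATC} coincide with the single-event rules in Table \ref{SETRForBATC} and each axiom preserves the causal dependencies (sequencing is preserved by $A4$ and $A5$, and $+$ introduces no causal links between its summands), this bookkeeping goes through routinely for each axiom, and the soundness statement follows.
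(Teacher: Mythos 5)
Your proposal matches the paper's proof essentially step for step: both reduce, via the congruence of $\sim_{hp}$, to checking each axiom $A1$--$A5$ against the hp-transition rules of Table \ref{HHPTRForBATC}, match each transition with the identical event and extend the isomorphism by $f[e\mapsto e]$, and close the $A5$ case with the inductive assumption $(x'\cdot y)\cdot z = x'\cdot(y\cdot z)$. The additional remarks on verifying that $f[e\mapsto e]$ preserves the poset structure are a reasonable elaboration of what the paper leaves implicit, but do not change the argument.
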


\begin{proof}
Since hp-bisimulation $\sim_{hp}$ is both an equivalent and a congruent relation, we only need to check if each axiom in Table \ref{AxiomsForBATC} is sound modulo hp-bisimulation equivalence.

\begin{itemize}
  \item \textbf{Axiom $A1$}. Let $p,q$ be $BATC$ processes, and $p+ q=q+ p$, it is sufficient to prove that $p+ q\sim_{hp} q+ p$. By the hp-transition rules for operator $+$ in Table \ref{HHPTRForBATC}, we get

      $$\frac{p\xrightarrow{e_1}\surd}{p+ q\xrightarrow{e_1}\surd} \quad \frac{p\xrightarrow{e_1}\surd}{q+ p\xrightarrow{e_1}\surd}$$

      $$\frac{p\xrightarrow{e_1}p'}{p+ q\xrightarrow{e_1}p'} \quad \frac{p\xrightarrow{e_1}p'}{q+ p\xrightarrow{e_1}p'}$$

      $$\frac{q\xrightarrow{e_2}\surd}{p+ q\xrightarrow{e_2}\surd} \quad \frac{q\xrightarrow{e_2}\surd}{q+ p\xrightarrow{e_2}\surd}$$

      $$\frac{q\xrightarrow{e_2}q'}{p+ q\xrightarrow{e_2}q'} \quad \frac{q\xrightarrow{e_2}q'}{q+ p\xrightarrow{e_2}q'}$$

      So, for $(C(p+ q),f,C(q+ p))\in\sim_{hp}$, $(C((p+ q)'),f[e_1\mapsto e_1],C((q+ p)'))\in\sim_{hp}$ and $(C((p+ q)'),f[e_2\mapsto e_2],C((q+ p)'))\in\sim_{hp}$, that is, $p+ q\sim_{hp} q+ p$, as desired.
  \item \textbf{Axiom $A2$}. Let $p,q,s$ be $BATC$ processes, and $(p+ q)+ s=p+ (q+ s)$, it is sufficient to prove that $(p+ q)+ s \sim_{hp} p+ (q+ s)$. By the hp-transition rules for operator $+$ in Table \ref{HHPTRForBATC}, we get

      $$\frac{p\xrightarrow{e_1}\surd}{(p+ q)+ s\xrightarrow{e_1}\surd} \quad \frac{p\xrightarrow{e_1}\surd}{p+ (q+ s)\xrightarrow{e_1}\surd}$$

      $$\frac{p\xrightarrow{e_1}p'}{(p+ q)+ s\xrightarrow{e_1}p'} \quad \frac{p\xrightarrow{e_1}p'}{p+ (q+ s)\xrightarrow{e_1}p'}$$

      $$\frac{q\xrightarrow{e_2}\surd}{(p+ q)+ s\xrightarrow{e_2}\surd} \quad \frac{q\xrightarrow{e_2}\surd}{p+ (q+ s)\xrightarrow{e_2}\surd}$$

      $$\frac{q\xrightarrow{e_2}q'}{(p+ q)+ s\xrightarrow{e_2}q'} \quad \frac{q\xrightarrow{e_2}q'}{p+ (q+ s)\xrightarrow{e_2}q'}$$

      $$\frac{s\xrightarrow{e_3}\surd}{(p+ q)+ s\xrightarrow{e_3}\surd} \quad \frac{s\xrightarrow{e_3}\surd}{p+ (q+ s)\xrightarrow{e_3}\surd}$$

      $$\frac{s\xrightarrow{e_3}s'}{(p+ q)+ s\xrightarrow{e_3}s'} \quad \frac{s\xrightarrow{e_3}s'}{p+ (q+ s)\xrightarrow{e_3}s'}$$

      So, for $(C((p+ q)+ s),f,C(p+ (q+ s)))\in\sim_{hp}$, $(C(((p+ q)+ s)'),f[e_1\mapsto e_1],C((p+ (q+ s))'))\in\sim_{hp}$\\ and $(C(((p+ q)+ s)'),f[e_2\mapsto e_2],C((p+ (q+ s))'))\in\sim_{hp}$ and $(C(((p+ q)+ s)'),f[e_3\mapsto e_3],C((p+ (q+ s))'))\in\sim_{hp}$, that is, $(p+ q)+ s\sim_{hp} p+ (q+ s)$, as desired.
  \item \textbf{Axiom $A3$}. Let $p$ be a $BATC$ process, and $p+ p=p$, it is sufficient to prove that $p+ p\sim_{hp} p$. By the hp-transition rules for operator $+$ in Table \ref{HHPTRForBATC}, we get

      $$\frac{p\xrightarrow{e_1}\surd}{p+ p\xrightarrow{e_1}\surd} \quad \frac{p\xrightarrow{e_1}\surd}{p\xrightarrow{e_1}\surd}$$

      $$\frac{p\xrightarrow{e_1}p'}{p+ p\xrightarrow{e_1}p'} \quad \frac{p\xrightarrow{e_1}p'}{p\xrightarrow{e_1}p'}$$

      So, for $(C(p+ p),f,C(p))\in\sim_{hp}$, $(C((p+ p)'),f[e_1\mapsto e_1],C((p)'))\in\sim_{hp}$, that is, $p+ p\sim_{hp} p$, as desired.
  \item \textbf{Axiom $A4$}. Let $p,q,s$ be $BATC$ processes, and $(p+ q)\cdot s=p\cdot s + q\cdot s$, it is sufficient to prove that $(p+ q)\cdot s \sim_{hp} p\cdot s + q\cdot s$. By the hp-transition rules for operators $+$ and $\cdot$ in Table \ref{HHPTRForBATC}, we get

      $$\frac{p\xrightarrow{e_1}\surd}{(p+ q)\cdot s\xrightarrow{e_1}s} \quad \frac{p\xrightarrow{e_1}\surd}{p\cdot s + q\cdot s\xrightarrow{e_1}s}$$

      $$\frac{p\xrightarrow{e_1}p'}{(p+ q)\cdot s\xrightarrow{e_1}p'\cdot s} \quad \frac{p\xrightarrow{e_1}p'}{p\cdot s+ q\cdot s\xrightarrow{e_1}p'\cdot s}$$

      $$\frac{q\xrightarrow{e_2}\surd}{(p+ q)\cdot s\xrightarrow{e_2}s} \quad \frac{q\xrightarrow{e_2}\surd}{p\cdot s+ q\cdot s\xrightarrow{e_2}s}$$

      $$\frac{q\xrightarrow{e_2}q'}{(p+ q)\cdot s\xrightarrow{e_2}q'\cdot s} \quad \frac{q\xrightarrow{e_2}q'}{p\cdot s+ q\cdot s\xrightarrow{Q}q'\cdot s}$$

      So, for $(C((p+ q)\cdot s),f,C(p\cdot s+ q\cdot s))\in\sim_{hp}$, $(C(((p+ q)\cdot s)'),f[e_1\mapsto e_1],C((p\cdot s+ q\cdot s)'))\in\sim_{hp}$ and $(C(((p+ q)\cdot s)'),f[e_2\mapsto e_2],C((p\cdot s+ q\cdot s)'))\in\sim_{hp}$, that is, $(p+ q)\cdot s\sim_{hp} p\cdot s+ q\cdot s$, as desired.
  \item \textbf{Axiom $A5$}. Let $p,q,s$ be $BATC$ processes, and $(p\cdot q)\cdot s=p\cdot (q\cdot s)$, it is sufficient to prove that $(p\cdot q)\cdot s \sim_{hp} p\cdot (q\cdot s)$. By the hp-transition rules for operator $\cdot$ in Table \ref{HHPTRForBATC}, we get

      $$\frac{p\xrightarrow{e_1}\surd}{(p\cdot q)\cdot s\xrightarrow{e_1}q\cdot s} \quad \frac{p\xrightarrow{e_1}\surd}{p\cdot (q\cdot s)\xrightarrow{e_1}q\cdot s}$$

      $$\frac{p\xrightarrow{e_1}p'}{(p\cdot q)\cdot s\xrightarrow{e_1}(p'\cdot q)\cdot s} \quad \frac{p\xrightarrow{e_1}p'}{p\cdot (q\cdot s)\xrightarrow{e_1}p'\cdot (q\cdot s)}$$

      With an assumption $(p'\cdot q)\cdot s=p'\cdot(q\cdot s)$, for $(C((p\cdot q)\cdot s),f,C(p\cdot(q\cdot s)))\in\sim_{hp}$, $(C(((p\cdot q)\cdot s)'),f[e_1\mapsto e_1],C((p\cdot(q\cdot s))'))\in\sim_{hp}$, that is, so, $(p\cdot q)\cdot s\sim_{hp} p\cdot (q\cdot s)$, as desired.
\end{itemize}
\end{proof}

\begin{theorem}[Completeness of $BATC$ modulo hp-bisimulation equivalence]\label{CBATCHPBE}
Let $p$ and $q$ be closed $BATC$ terms, if $p\sim_{hp} q$ then $p=q$.
\end{theorem}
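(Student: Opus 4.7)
The plan is to mimic the completeness proofs for pomset and step bisimulation (Theorems \ref{CBATCPBE} and \ref{CBATCSBE}), since the axiom system is unchanged and only the equivalence has been refined. First I would invoke the elimination theorem (Theorem \ref{ETBATC}) to reduce to closed basic $BATC$ terms, and then use axioms $A1$ and $A2$ to bring every such term into the normal form $s_1 + \cdots + s_k$ in which each summand $s_i$ is either an atomic event or of the shape $t_1 \cdot t_2$. Write $=_{AC}$ for equality modulo associativity and commutativity of $+$.

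The heart of the argument is the claim that normal forms $n, n'$ with $n \sim_{hp} n'$ must satisfy $n =_{AC} n'$, to be proved by induction on the combined size of $n$ and $n'$. Fix a witnessing hp-bisimulation $R$ with $(\emptyset, \emptyset, \emptyset) \in R$. For a summand $e$ of $n$ we have $n \xrightarrow{e} \surd$, so $R$ forces $n' \xrightarrow{e} \surd$ (with the same event, since the posetal isomorphism is label-preserving and we work with closed terms over $\mathbb{E}$), and hence $e$ is also a summand of $n'$. For a summand $t_1 \cdot t_2$, the transition $n \xrightarrow{t_1} t_2$ --- unfolded into a sequence of single-event hp-transitions prescribed by Table \ref{HHPTRForBATC} --- is matched by some $n' \xrightarrow{t_1} t_2'$ with $(C(t_2), f', C(t_2')) \in R$, so $t_2 \sim_{hp} t_2'$; the induction hypothesis on the smaller terms then yields $t_2 =_{AC} t_2'$, so $n'$ contains $t_1 \cdot t_2$ as a summand modulo AC. Symmetry gives the reverse inclusion and hence $n =_{AC} n'$. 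The conclusion is then routine: given closed basic $s \sim_{hp} t$, take normal forms $n, n'$ with $BATC \vdash s = n$ and $BATC \vdash t = n'$; soundness (Theorem \ref{SBATCHPBE}) gives $n \sim_{hp} n'$, whence $n =_{AC} n'$, hence $s = t$.

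The step I expect to be the main obstacle is the bookkeeping of the posetal isomorphism $f'$ during the unfolding of a summand $t_1 \cdot t_2$: because hp-transitions are single-event, the compact expression $n \xrightarrow{t_1} t_2$ actually abbreviates a chain of one-event moves, and at each stage one must check that the partial order built up on the $n$-side is mirrored by an isomorphic order on the $n'$-side before the inductive step can be applied. Fortunately, in $BATC$ there is no parallel composition, so every configuration of a closed basic term is a causal chain, the isomorphism is essentially forced, and the hp-matching collapses to the ordinary structural matching already used in the pomset case. Once that observation is in place, the rest of the proof is a direct transcription of the pomset completeness argument.
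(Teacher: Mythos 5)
Your proposal follows essentially the same route as the paper's proof: elimination to closed basic terms, normal forms modulo AC of $+$, an induction on size showing that hp-bisimilar normal forms are $=_{AC}$-equal summand by summand, and a final appeal to soundness. The "main obstacle" you flag largely dissolves in the paper's treatment, since by Definition \ref{BTBATC} every sequential summand of a basic term has the form $e\cdot s$ with $e$ a single atomic event, so the matching is done directly with the single-event hp-transition rules and no unfolding of a multi-event label is needed.
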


\begin{proof}
Firstly, by the elimination theorem of $BATC$, we know that for each closed $BATC$ term $p$, there exists a closed basic $BATC$ term $p'$, such that $BATC\vdash p=p'$, so, we only need to consider closed basic $BATC$ terms.

The basic terms (see Definition \ref{BTBATC}) modulo associativity and commutativity (AC) of conflict $+$ (defined by axioms $A1$ and $A2$ in Table \ref{AxiomsForBATC}), and this equivalence is denoted by $=_{AC}$. Then, each equivalence class $s$ modulo AC of $+$ has the following normal form

$$s_1+\cdots+ s_k$$

with each $s_i$ either an atomic event or of the form $t_1\cdot t_2$, and each $s_i$ is called the summand of $s$.

Now, we prove that for normal forms $n$ and $n'$, if $n\sim_{hp} n'$ then $n=_{AC}n'$. It is sufficient to induct on the sizes of $n$ and $n'$.

\begin{itemize}
  \item Consider a summand $e$ of $n$. Then $n\xrightarrow{e}\surd$, so $n\sim_{hp} n'$ implies $n'\xrightarrow{e}\surd$, meaning that $n'$ also contains the summand $e$.
  \item Consider a summand $e\cdot s$ of $n$. Then $n\xrightarrow{e}s$, so $n\sim_{hp} n'$ implies $n'\xrightarrow{e}t$ with $s\sim_{hp} t$, meaning that $n'$ contains a summand $e\cdot t$. Since $s$ and $t$ are normal forms and have sizes smaller than $n$ and $n'$, by the induction hypotheses $s\sim_{hp} t$ implies $s=_{AC} t$.
\end{itemize}

So, we get $n=_{AC} n'$.

Finally, let $s$ and $t$ be basic terms, and $s\sim_{hp} t$, there are normal forms $n$ and $n'$, such that $s=n$ and $t=n'$. The soundness theorem of $BATC$ modulo hp-bisimulation equivalence (see Theorem \ref{SBATCHPBE}) yields $s\sim_{hp} n$ and $t\sim_{hp} n'$, so $n\sim_{hp} s\sim_{hp} t\sim_{hp} n'$. Since if $n\sim_{hp} n'$ then $n=_{AC}n'$, $s=n=_{AC}n'=t$, as desired.
\end{proof}

\begin{theorem}[Congruence of $BATC$ with respect to hhp-bisimulation equivalence]
Hhp-bisimulation equivalence $\sim_{hhp}$ is a congruence with respect to $BATC$.
\end{theorem}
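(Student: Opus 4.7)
The plan is to mimic the proof of congruence for hp-bisimulation already carried out in the preceding theorem, and then supplement it with the extra downward-closure condition that distinguishes hhp-bisimulation from hp-bisimulation (per Definition \ref{HHPB}). Since hhp-bisimulation is by definition a downward closed hp-bisimulation, once reflexivity, symmetry and transitivity of $\sim_{hhp}$ are dispatched in the standard way, the substantive task is to show that $\sim_{hhp}$ is preserved by both $\cdot$ and $+$.

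First I would take $x_1\sim_{hhp}y_1$ and $x_2\sim_{hhp}y_2$, witnessed by downward-closed posetal relations $R_1,R_2$. For the causality case $x_1\cdot x_2\sim_{hhp}y_1\cdot y_2$, I would construct the candidate relation
$$R = \{(C_1,f,C_2) : (C_1,f,C_2)\in R_1\}\cup\{(C_1\cup D_1,\ f\cup g,\ C_2\cup D_2) : (C_1,f,C_2)\in R_1,\ C_1\text{ complete in }x_1,\ (D_1,g,D_2)\in R_2\}$$
(after suitably relabelling events from $x_2,y_2$ into the causal extension), and then check the hp-bisimulation clause exactly as in the proof of the hp-case, reusing the single-event transition rules of Table \ref{HHPTRForBATC}. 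For the conflict case, $R$ is taken to be the union $R_1\cup R_2$ together with $(\emptyset,\emptyset,\emptyset)$, and the four case split already used in the hp-congruence proof goes through verbatim.

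The step that requires new argument, and which I expect to be the principal obstacle, is verifying downward closure of $R$. For the $+$ case this is essentially immediate, because any sub-triple of a triple in $R_1$ (resp. $R_2$) lies in $R_1$ (resp. $R_2$) by downward closure of the witnesses. For the $\cdot$ case one must show that if $(C_1',f',C_2')\subseteq(C_1\cup D_1,f\cup g,C_2\cup D_2)$ pointwise, then the sub-triple decomposes correctly into a component contained in $R_1$ and a component contained in $R_2$. Here the causal structure of the prime event structure is essential: events drawn from $x_1$ precede events drawn from $x_2$ under $\leq$, so the subconfiguration $C_1'$ splits canonically as $(C_1'\cap \mathbb E(x_1))\cup (C_1'\cap \mathbb E(x_2))$ with the first part $\leq$-closed inside $x_1$; an analogous split holds on the $y$ side, and the isomorphism restricts accordingly. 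Downward closure of $R_1$ and $R_2$ then delivers membership of the two pieces, hence of the composed triple, in $R$.

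Finally, I would observe that the base clause $(\emptyset,\emptyset,\emptyset)\in R$ is preserved under both constructions, so $R$ witnesses $x_1\cdot x_2\sim_{hhp}y_1\cdot y_2$ and $x_1+x_2\sim_{hhp}y_1+y_2$ respectively. Since $BATC$ has only the operators $\cdot$ and $+$, this completes the congruence argument. The main delicacy throughout is bookkeeping of the order-isomorphisms $f$ under restriction; once that is handled cleanly, the rest is a straightforward transcription of the hp-bisimulation congruence proof.
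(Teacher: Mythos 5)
Your proposal is correct in outline, but it takes a genuinely more careful route than the paper. The paper's own proof is a near-verbatim transcription of its hp-congruence proof: it fixes $x_1\sim_{hhp}y_1$, $x_2\sim_{hhp}y_2$, runs the same case analysis on the single-event transition rules of Table \ref{HHPTRForBATC} for $\cdot$ and $+$, and concludes; at no point does it construct a witnessing relation or verify the downward-closure condition that is the \emph{only} thing distinguishing $\sim_{hhp}$ from $\sim_{hp}$ in Definition \ref{HHPB}. You, by contrast, build explicit composite posetal relations (the ``prefix plus completed-prefix-followed-by-suffix'' relation for $\cdot$, the union $R_1\cup R_2$ for $+$) and then check downward closure, which is exactly the step the paper skips; your observation that a subconfiguration of $C_1\cup D_1$ splits along the causal ordering of sequential composition, with the $x_1$-part necessarily $\leq$-closed (and maximal whenever the $x_2$-part is nonempty), is the substantive content. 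What the paper's approach buys is brevity and uniformity with the earlier theorems; what yours buys is an argument that actually engages with the hereditary condition -- arguably filling a gap rather than merely restyling the proof. Two caveats on your version: the paper never gives an event-structure denotation of $\cdot$, so your appeal to ``events of $x_1$ precede events of $x_2$ under $\leq$'' rests on the standard convention for sequential composition rather than on anything the paper defines; and you should make sure the second component of your relation for $\cdot$ includes the triples with $(D_1,g,D_2)=(\emptyset,\emptyset,\emptyset)$ so that the first $x_2$-transition after completion of $x_1$ is actually covered. Neither point is a real obstruction.
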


\begin{proof}
It is easy to see that hhp-bisimulation is an equivalent relation on $BATC$ terms, we only need to prove that $\sim_{hhp}$ is preserved by the operators $\cdot$ and $+$.

\begin{itemize}
  \item Causality operator $\cdot$. Let $x_1,x_2$ and $y_1,y_2$ be $BATC$ processes, and $x_1\sim_{hhp} y_1$, $x_2\sim_{hhp} y_2$, it is sufficient to prove that $x_1\cdot x_2\sim_{hhp} y_1\cdot y_2$.

      By the definition of hhp-bisimulation $\sim_{hhp}$ (Definition \ref{HHPB}), $x_1\sim_{hhp} y_1$ means that there is a posetal relation $(C(x_1),f,C(y_1))\in\sim_{hhp}$, and

      $$x_1\xrightarrow{e_1} x_1' \quad y_1\xrightarrow{e_2} y_1'$$

      with $(C(x_1'),f[e_1\mapsto e_2],C(y_1'))\in\sim_{hhp}$. The meaning of $x_2\sim_{hhp} y_2$ is similar.

      By the hhp-transition rules for causality operator $\cdot$ in Table \ref{HHPTRForBATC}, we can get

      $$x_1\cdot x_2\xrightarrow{e_1} x_2 \quad y_1\cdot y_2\xrightarrow{e_2} y_2$$

      with $x_2\sim_{hhp} y_2$, so, we get $x_1\cdot x_2\sim_{hhp} y_1\cdot y_2$, as desired.

      Or, we can get

      $$x_1\cdot x_2\xrightarrow{e_1} x_1'\cdot x_2 \quad y_1\cdot y_2\xrightarrow{e_2} y_1'\cdot y_2$$

      with $x_1'\sim_{hhp} y_1'$, $x_2\sim_{hhp} y_2$, so, we get $x_1\cdot x_2\sim_{hhp} y_1\cdot y_2$, as desired.
  \item Conflict operator $+$. Let $x_1, x_2$ and $y_1, y_2$ be $BATC$ processes, and $x_1\sim_{hhp} y_1$, $x_2\sim_{hhp} y_2$, it is sufficient to prove that $x_1+ x_2 \sim_{hhp} y_1+ y_2$. The meanings of $x_1\sim_{hhp} y_1$ and $x_2\sim_{hhp} y_2$ are the same as the above case, according to the definition of hhp-bisimulation $\sim_{hhp}$ in Definition \ref{HHPB}.

      By the hhp-transition rules for conflict operator $+$ in Table \ref{HHPTRForBATC}, we can get four cases:

      $$x_1+ x_2\xrightarrow{e_1} \surd \quad y_1+ y_2\xrightarrow{e_2} \surd$$

      so, we get $x_1+ x_2\sim_{hhp} y_1+ y_2$, as desired.

      Or, we can get

      $$x_1+ x_2\xrightarrow{e_1} x_1' \quad y_1+ y_2\xrightarrow{e_2} y_1'$$

      with $x_1'\sim_{hhp} y_1'$, so, we get $x_1+ x_2\sim_{hhp} y_1+ y_2$, as desired.

      Or, we can get

      $$x_1+ x_2\xrightarrow{e_1'} \surd \quad y_1+ y_2\xrightarrow{e_2'} \surd$$

      so, we get $x_1+ x_2\sim_{hhp} y_1+ y_2$, as desired.

      Or, we can get

      $$x_1+ x_2\xrightarrow{e_1'} x_2' \quad y_1+ y_2\xrightarrow{e_2'} y_2'$$

      with $x_2'\sim_{hhp} y_2'$, so, we get $x_1+ x_2\sim_{hhp} y_1+ y_2$, as desired.
\end{itemize}
\end{proof}

\begin{theorem}[Soundness of $BATC$ modulo hhp-bisimulation equivalence]\label{SBATCHHPBE}
Let $x$ and $y$ be $BATC$ terms. If $BATC\vdash x=y$, then $x\sim_{hhp} y$.
\end{theorem}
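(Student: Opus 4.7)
The plan is to follow exactly the template used in Theorem \ref{SBATCHPBE} (soundness modulo hp-bisimulation): since the previous theorem establishes that $\sim_{hhp}$ is a congruence with respect to $BATC$, it suffices to verify that each of the five axioms $A1$--$A5$ of Table \ref{AxiomsForBATC} is sound modulo hhp-bisimulation equivalence. For each axiom I would exhibit a witnessing posetal relation $R$ between the two sides, check it is an hp-bisimulation using the transition rules of Table \ref{HHPTRForBATC} (these are identical to the single-event rules already used for $\sim_{hp}$), and then additionally verify that $R$ is downward closed, which is the extra ingredient that distinguishes $\sim_{hhp}$ from $\sim_{hp}$ by Definition \ref{HHPB}.

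Concretely, for $A1$--$A3$ (commutativity, associativity, and idempotency of $+$) the matching transitions are exactly those displayed in the proof of Theorem \ref{SBATCHPBE}, and the canonical witness relation can be taken to be the identity on configurations composed with the obvious relabeling of residuals; downward closure is immediate because projecting to a sub-configuration yields another triple of the same shape. For $A4$, $(p+q)\cdot s \sim_{hhp} p\cdot s + q\cdot s$, I would take $R$ to consist of triples relating configurations reached by choosing the $p$-branch on both sides (with identity order-isomorphism on the already-executed events), and symmetrically for the $q$-branch; any sub-configuration of such a reachable configuration is again either a sub-computation in the $p$-branch or in the $q$-branch, so downward closure holds. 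For $A5$, $(p\cdot q)\cdot s \sim_{hhp} p\cdot(q\cdot s)$, the witness relates configurations whose event sets and causal orders coincide — sequential composition does not alter the underlying pomset of executed events — so the identity isomorphism witnesses the hp-condition at every stage, and downward closure is trivial since any sub-configuration is closed under the same identity.

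The main obstacle I expect is the bookkeeping for downward closure rather than any deep difficulty: I must be careful that the posetal relation I construct for each axiom contains every triple $(C_1,f,C_2)$ obtained by restricting an element of $R$ to a downward-closed sub-configuration of $C_1$ (with $f$ and $C_2$ restricted correspondingly). Because the $BATC$ signature has only $\cdot$ and $+$ and the transition rules produce no concurrency by themselves (they are single-event rules), every reachable configuration in these proofs is totally ordered by causality, so sub-configurations are simply prefixes; hence downward closure reduces to closure under prefixes, which is manifest from the construction. Once this is argued once, the remaining cases repeat verbatim.

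Finally, I would conclude exactly as in Theorem \ref{SBATCHPBE}: since every axiom preserves $\sim_{hhp}$ and $\sim_{hhp}$ is a congruence, any derivation $BATC\vdash x=y$ yields $x\sim_{hhp} y$, completing the proof.
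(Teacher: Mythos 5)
Your proposal is correct and follows essentially the same route as the paper: the paper's own proof of this theorem simply repeats the axiom-by-axiom case analysis of Theorem \ref{SBATCHPBE} with $\sim_{hp}$ replaced by $\sim_{hhp}$, using the identical single-event transition rules of Table \ref{HHPTRForBATC}. Your explicit verification of downward closure — observing that $BATC$ configurations are totally ordered by causality so that sub-configurations are just prefixes — is the one point where you go beyond the paper, which leaves this condition implicit; that addition is welcome rather than a deviation.
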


\begin{proof}
Since hhp-bisimulation $\sim_{hhp}$ is both an equivalent and a congruent relation, we only need to check if each axiom in Table \ref{AxiomsForBATC} is sound modulo hhp-bisimulation equivalence.

\begin{itemize}
  \item \textbf{Axiom $A1$}. Let $p,q$ be $BATC$ processes, and $p+ q=q+ p$, it is sufficient to prove that $p+ q\sim_{hhp} q+ p$. By the hhp-transition rules for operator $+$ in Table \ref{HHPTRForBATC}, we get

      $$\frac{p\xrightarrow{e_1}\surd}{p+ q\xrightarrow{e_1}\surd} \quad \frac{p\xrightarrow{e_1}\surd}{q+ p\xrightarrow{e_1}\surd}$$

      $$\frac{p\xrightarrow{e_1}p'}{p+ q\xrightarrow{e_1}p'} \quad \frac{p\xrightarrow{e_1}p'}{q+ p\xrightarrow{e_1}p'}$$

      $$\frac{q\xrightarrow{e_2}\surd}{p+ q\xrightarrow{e_2}\surd} \quad \frac{q\xrightarrow{e_2}\surd}{q+ p\xrightarrow{e_2}\surd}$$

      $$\frac{q\xrightarrow{e_2}q'}{p+ q\xrightarrow{e_2}q'} \quad \frac{q\xrightarrow{e_2}q'}{q+ p\xrightarrow{e_2}q'}$$

      So, for $(C(p+ q),f,C(q+ p))\in\sim_{hhp}$, $(C((p+ q)'),f[e_1\mapsto e_1],C((q+ p)'))\in\sim_{hhp}$ and $(C((p+ q)'),f[e_2\mapsto e_2],C((q+ p)'))\in\sim_{hhp}$, that is, $p+ q\sim_{hhp} q+ p$, as desired.
  \item \textbf{Axiom $A2$}. Let $p,q,s$ be $BATC$ processes, and $(p+ q)+ s=p+ (q+ s)$, it is sufficient to prove that $(p+ q)+ s \sim_{hhp} p+ (q+ s)$. By the hhp-transition rules for operator $+$ in Table \ref{HHPTRForBATC}, we get

      $$\frac{p\xrightarrow{e_1}\surd}{(p+ q)+ s\xrightarrow{e_1}\surd} \quad \frac{p\xrightarrow{e_1}\surd}{p+ (q+ s)\xrightarrow{e_1}\surd}$$

      $$\frac{p\xrightarrow{e_1}p'}{(p+ q)+ s\xrightarrow{e_1}p'} \quad \frac{p\xrightarrow{e_1}p'}{p+ (q+ s)\xrightarrow{e_1}p'}$$

      $$\frac{q\xrightarrow{e_2}\surd}{(p+ q)+ s\xrightarrow{e_2}\surd} \quad \frac{q\xrightarrow{e_2}\surd}{p+ (q+ s)\xrightarrow{e_2}\surd}$$

      $$\frac{q\xrightarrow{e_2}q'}{(p+ q)+ s\xrightarrow{e_2}q'} \quad \frac{q\xrightarrow{e_2}q'}{p+ (q+ s)\xrightarrow{e_2}q'}$$

      $$\frac{s\xrightarrow{e_3}\surd}{(p+ q)+ s\xrightarrow{e_3}\surd} \quad \frac{s\xrightarrow{e_3}\surd}{p+ (q+ s)\xrightarrow{e_3}\surd}$$

      $$\frac{s\xrightarrow{e_3}s'}{(p+ q)+ s\xrightarrow{e_3}s'} \quad \frac{s\xrightarrow{e_3}s'}{p+ (q+ s)\xrightarrow{e_3}s'}$$

      So, for $(C((p+ q)+ s),f,C(p+ (q+ s)))\in\sim_{hhp}$, $(C(((p+ q)+ s)'),f[e_1\mapsto e_1],C((p+ (q+ s))'))\in\sim_{hhp}$\\ and $(C(((p+ q)+ s)'),f[e_2\mapsto e_2],C((p+ (q+ s))'))\in\sim_{hhp}$ and $(C(((p+ q)+ s)'),f[e_3\mapsto e_3],C((p+ (q+ s))'))\in\sim_{hhp}$, that is, $(p+ q)+ s\sim_{hhp} p+ (q+ s)$, as desired.
  \item \textbf{Axiom $A3$}. Let $p$ be a $BATC$ process, and $p+ p=p$, it is sufficient to prove that $p+ p\sim_{hhp} p$. By the hhp-transition rules for operator $+$ in Table \ref{HHPTRForBATC}, we get

      $$\frac{p\xrightarrow{e_1}\surd}{p+ p\xrightarrow{e_1}\surd} \quad \frac{p\xrightarrow{e_1}\surd}{p\xrightarrow{e_1}\surd}$$

      $$\frac{p\xrightarrow{e_1}p'}{p+ p\xrightarrow{e_1}p'} \quad \frac{p\xrightarrow{e_1}p'}{p\xrightarrow{e_1}p'}$$

      So, for $(C(p+ p),f,C(p))\in\sim_{hhp}$, $(C((p+ p)'),f[e_1\mapsto e_1],C((p)'))\in\sim_{hhp}$, that is, $p+ p\sim_{hhp} p$, as desired.
  \item \textbf{Axiom $A4$}. Let $p,q,s$ be $BATC$ processes, and $(p+ q)\cdot s=p\cdot s + q\cdot s$, it is sufficient to prove that $(p+ q)\cdot s \sim_{hhp} p\cdot s + q\cdot s$. By the hhp-transition rules for operators $+$ and $\cdot$ in Table \ref{HHPTRForBATC}, we get

      $$\frac{p\xrightarrow{e_1}\surd}{(p+ q)\cdot s\xrightarrow{e_1}s} \quad \frac{p\xrightarrow{e_1}\surd}{p\cdot s + q\cdot s\xrightarrow{e_1}s}$$

      $$\frac{p\xrightarrow{e_1}p'}{(p+ q)\cdot s\xrightarrow{e_1}p'\cdot s} \quad \frac{p\xrightarrow{e_1}p'}{p\cdot s+ q\cdot s\xrightarrow{e_1}p'\cdot s}$$

      $$\frac{q\xrightarrow{e_2}\surd}{(p+ q)\cdot s\xrightarrow{e_2}s} \quad \frac{q\xrightarrow{e_2}\surd}{p\cdot s+ q\cdot s\xrightarrow{e_2}s}$$

      $$\frac{q\xrightarrow{e_2}q'}{(p+ q)\cdot s\xrightarrow{e_2}q'\cdot s} \quad \frac{q\xrightarrow{e_2}q'}{p\cdot s+ q\cdot s\xrightarrow{e_2}q'\cdot s}$$

      So, for $(C((p+ q)\cdot s),f,C(p\cdot s+ q\cdot s))\in\sim_{hhp}$, $(C(((p+ q)\cdot s)'),f[e_1\mapsto e_1],C((p\cdot s+ q\cdot s)'))\in\sim_{hhp}$ and $(C(((p+ q)\cdot s)'),f[e_2\mapsto e_2],C((p\cdot s+ q\cdot s)'))\in\sim_{hhp}$, that is, $(p+ q)\cdot s\sim_{hhp} p\cdot s+ q\cdot s$, as desired.
  \item \textbf{Axiom $A5$}. Let $p,q,s$ be $BATC$ processes, and $(p\cdot q)\cdot s=p\cdot (q\cdot s)$, it is sufficient to prove that $(p\cdot q)\cdot s \sim_{hhp} p\cdot (q\cdot s)$. By the hhp-transition rules for operator $\cdot$ in Table \ref{HHPTRForBATC}, we get

      $$\frac{p\xrightarrow{e_1}\surd}{(p\cdot q)\cdot s\xrightarrow{e_1}q\cdot s} \quad \frac{p\xrightarrow{e_1}\surd}{p\cdot (q\cdot s)\xrightarrow{e_1}q\cdot s}$$

      $$\frac{p\xrightarrow{e_1}p'}{(p\cdot q)\cdot s\xrightarrow{e_1}(p'\cdot q)\cdot s} \quad \frac{p\xrightarrow{e_1}p'}{p\cdot (q\cdot s)\xrightarrow{e_1}p'\cdot (q\cdot s)}$$

      With an assumption $(p'\cdot q)\cdot s=p'\cdot(q\cdot s)$, for $(C((p\cdot q)\cdot s),f,C(p\cdot(q\cdot s)))\in\sim_{hhp}$, $(C(((p\cdot q)\cdot s)'),f[e_1\mapsto e_1],C((p\cdot(q\cdot s))'))\in\sim_{hhp}$, that is, so, $(p\cdot q)\cdot s\sim_{hhp} p\cdot (q\cdot s)$, as desired.
\end{itemize}
\end{proof}

\begin{theorem}[Completeness of $BATC$ modulo hhp-bisimulation equivalence]\label{CBATCHHPBE}
Let $p$ and $q$ be closed $BATC$ terms, if $p\sim_{hhp} q$ then $p=q$.
\end{theorem}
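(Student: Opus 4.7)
The plan is to mirror the structure of the completeness proof for $BATC$ modulo hp-bisimulation (Theorem \ref{CBATCHPBE}), since hhp-bisimulation differs from hp-bisimulation only by the additional downward closure condition on the posetal relation, and in $BATC$ there is no parallel composition so configurations generated by closed basic terms are totally ordered by causality.

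First, I would invoke the elimination theorem of $BATC$ (Theorem \ref{ETBATC}) to reduce the problem: for every closed $BATC$ term $p$ there is a closed basic term $p'$ with $BATC\vdash p=p'$, so it suffices to consider closed basic terms. Next, as in the previous completeness proofs, I would work modulo associativity and commutativity of $+$ (axioms $A1$ and $A2$), writing each equivalence class in the normal form $s_1+\cdots+s_k$ where each summand $s_i$ is either an atomic event or of the form $t_1\cdot t_2$, and denote this relation by $=_{AC}$.

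The heart of the argument is to show by induction on the combined sizes of $n$ and $n'$ that if $n\sim_{hhp} n'$ for two normal forms then $n=_{AC} n'$. For each summand of $n$ I examine the induced transition: a summand $e$ produces $n\xrightarrow{e}\surd$, so by hhp-bisimilarity $n'$ must contain a matching summand $e$; a summand $e\cdot s$ produces $n\xrightarrow{e}s$, forcing $n'$ to contain a summand $e\cdot t$ with $s\sim_{hhp} t$, and the induction hypothesis yields $s=_{AC} t$. By symmetry the summands of $n'$ are covered identically, giving $n=_{AC} n'$. Finally, for arbitrary basic terms $s,t$ with $s\sim_{hhp} t$, we pick normal forms $n,n'$ with $s=n$, $t=n'$; the soundness theorem (Theorem \ref{SBATCHHPBE}) gives $s\sim_{hhp} n$ and $t\sim_{hhp} n'$, hence $n\sim_{hhp} n'$, so $n=_{AC} n'$ and therefore $s=t$.

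The main conceptual point to verify, rather than a real obstacle, is that the downward closure requirement of hhp-bisimulation does not break the inductive matching of summands. Since $BATC$ terms generate configurations whose underlying partial orders are chains (there is no concurrency operator), every hp-bisimulation over reachable configurations of closed basic terms is automatically downward closed on the relevant pairs, so the hhp case reduces to the hp case and the step-by-step summand tracking from Theorem \ref{CBATCHPBE} carries over verbatim with $\sim_{hp}$ replaced by $\sim_{hhp}$.
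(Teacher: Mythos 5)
Your proposal is correct and follows essentially the same route as the paper's own proof: elimination to basic terms, normal forms modulo AC of $+$, induction on term size matching summands $e$ and $e\cdot s$ via the induced transitions, and a final appeal to soundness. Your closing remark that downward closure is vacuous for the chain-like configurations of $BATC$ terms is a helpful justification the paper leaves implicit, but it does not change the argument.
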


\begin{proof}
Firstly, by the elimination theorem of $BATC$, we know that for each closed $BATC$ term $p$, there exists a closed basic $BATC$ term $p'$, such that $BATC\vdash p=p'$, so, we only need to consider closed basic $BATC$ terms.

The basic terms (see Definition \ref{BTBATC}) modulo associativity and commutativity (AC) of conflict $+$ (defined by axioms $A1$ and $A2$ in Table \ref{AxiomsForBATC}), and this equivalence is denoted by $=_{AC}$. Then, each equivalence class $s$ modulo AC of $+$ has the following normal form

$$s_1+\cdots+ s_k$$

with each $s_i$ either an atomic event or of the form $t_1\cdot t_2$, and each $s_i$ is called the summand of $s$.

Now, we prove that for normal forms $n$ and $n'$, if $n\sim_{hhp} n'$ then $n=_{AC}n'$. It is sufficient to induct on the sizes of $n$ and $n'$.

\begin{itemize}
  \item Consider a summand $e$ of $n$. Then $n\xrightarrow{e}\surd$, so $n\sim_{hhp} n'$ implies $n'\xrightarrow{e}\surd$, meaning that $n'$ also contains the summand $e$.
  \item Consider a summand $e\cdot s$ of $n$. Then $n\xrightarrow{e}s$, so $n\sim_{hhp} n'$ implies $n'\xrightarrow{e}t$ with $s\sim_{hhp} t$, meaning that $n'$ contains a summand $e\cdot t$. Since $s$ and $t$ are normal forms and have sizes smaller than $n$ and $n'$, by the induction hypotheses $s\sim_{hhp} t$ implies $s=_{AC} t$.
\end{itemize}

So, we get $n=_{AC} n'$.

Finally, let $s$ and $t$ be basic terms, and $s\sim_{hhp} t$, there are normal forms $n$ and $n'$, such that $s=n$ and $t=n'$. The soundness theorem of $BATC$ modulo history-preserving bisimulation equivalence (see Theorem \ref{SBATCHHPBE}) yields $s\sim_{hhp} n$ and $t\sim_{hhp} n'$, so $n\sim_{hhp} s\sim_{hhp} t\sim_{hhp} n'$. Since if $n\sim_{hhp} n'$ then $n=_{AC}n'$, $s=n=_{AC}n'=t$, as desired.
\end{proof}

\section{Algebra for Parallelism in True Concurrency}\label{aptc}

In this section, we will discuss parallelism in true concurrency. We know that parallelism can be modeled by left merge and communication merge in $ACP$ (Algebra of Communicating Process) \cite{ALNC} \cite{ACP} with an interleaving bisimulation semantics. Parallelism in true concurrency is quite different to that in interleaving bisimulation: it is a fundamental computational pattern (modeled by parallel operator $\parallel$) and cannot be merged (replaced by other operators). The resulted algebra is called Algebra for Parallelism in True Concurrency, abbreviated $APTC$.

\subsection{Parallelism as a Fundamental Computational Pattern}

Through several propositions, we show that parallelism is a fundamental computational pattern. Firstly, we give the transition rules for parallel operator $\parallel$ as follows, it is suitable for all truly concurrent behavioral equivalence, including pomset bisimulation, step bisimulation, hp-bisimulation and hhp-bisimulation.

\begin{center}
    \begin{table}
        $$\frac{x\xrightarrow{e_1}\surd\quad y\xrightarrow{e_2}\surd}{x\parallel y\xrightarrow{\{e_1,e_2\}}\surd} \quad\frac{x\xrightarrow{e_1}x'\quad y\xrightarrow{e_2}\surd}{x\parallel y\xrightarrow{\{e_1,e_2\}}x'}$$
        $$\frac{x\xrightarrow{e_1}\surd\quad y\xrightarrow{e_2}y'}{x\parallel y\xrightarrow{\{e_1,e_2\}}y'} \quad\frac{x\xrightarrow{e_1}x'\quad y\xrightarrow{e_2}y'}{x\parallel y\xrightarrow{\{e_1,e_2\}}x'\parallel y'}$$
        \caption{Transition rules of parallel operator $\parallel$}
        \label{TRForParallel}
    \end{table}
\end{center}

We will show that Milner's expansion law \cite{ALNC} does not hold modulo any truly concurrent behavioral equivalence, as the following proposition shows.

\begin{proposition}[Milner's expansion law modulo truly concurrent behavioral equivalence]\label{MEL}
Milner's expansion law does not hold modulo any truly concurrent behavioral equivalence, that is:
\begin{enumerate}
  \item For atomic event $e_1$ and $e_2$,
  \begin{enumerate}
    \item $e_1\parallel e_2\nsim_p e_1\cdot e_2 + e_2\cdot e_1$;
    \item $e_1\parallel e_2\nsim_s e_1\cdot e_2 + e_2\cdot e_1$;
    \item $e_1\parallel e_2\nsim_{hp} e_1\cdot e_2 + e_2\cdot e_1$;
    \item $e_1\parallel e_2\nsim_{hhp} e_1\cdot e_2 + e_2\cdot e_1$;
  \end{enumerate}
  \item Specially, for auto-concurrency, let $e$ be an atomic event,
    \begin{enumerate}
    \item $e\parallel e\nsim_p e\cdot e$;
    \item $e\parallel e\nsim_s e\cdot e$;
    \item $e\parallel e\nsim_{hp} e\cdot e$;
    \item $e\parallel e\nsim_{hhp} e\cdot e$.
    \end{enumerate}
\end{enumerate}
\end{proposition}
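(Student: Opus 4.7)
The plan is to exploit the structural difference between concurrent and sequential execution as expressed by the step/pomset labels produced by the transition rules in Table \ref{TRForParallel} (for $\parallel$) and Tables \ref{PTRForBATC}, \ref{STRForBATC} (for $\cdot$ and $+$). The key observation is that $e_1\parallel e_2$ admits a single-shot step transition labelled by the two-element set $\{e_1,e_2\}$ whose events are pairwise concurrent, whereas $e_1\cdot e_2+e_2\cdot e_1$ can only fire one event at a time, so its outgoing transitions are all singleton-labelled. This asymmetry will rule out all four candidate bisimulations.

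For part (1), I would first compute the possible transitions. By Table \ref{TRForParallel} we have $e_1\parallel e_2\xrightarrow{\{e_1,e_2\}}\surd$, and this is a legitimate step since $e_1$ and $e_2$ are pairwise concurrent in the underlying PES. On the other side, $e_1\cdot e_2+e_2\cdot e_1$ can only perform $\xrightarrow{e_1}e_2$ or $\xrightarrow{e_2}e_1$. Assuming, towards a contradiction, a pomset (resp.\ step) bisimulation $R$ with $(e_1\parallel e_2,\ e_1\cdot e_2+e_2\cdot e_1)\in R$, the transition $\xrightarrow{\{e_1,e_2\}}$ must be matched by a transition on the right with an isomorphic pomset label; but every transition on the right has a singleton label, so no match exists. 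For hp-bisimulation I would instead compare the reached configurations: the left can reach the configuration $\{e_1,e_2\}$ which is an antichain under causality, while any configuration reached after a single transition on the right has only one event and cannot be extended in one move to an antichain of size two — the only extensions are the totally ordered $e_1<e_2$ or $e_2<e_1$, which are not order-isomorphic to the antichain. Hence no posetal relation can be maintained. Since hhp-bisimulation refines hp-bisimulation, the hhp case follows.

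Part (2) runs on exactly the same template but deserves a small remark about auto-concurrency: on the left, $e\parallel e$ denotes two distinct event occurrences (two copies of the label $e$) that are concurrent, so the step label $\{e,e\}$ is a two-element multiset (formally two distinct events with label $e$), while on the right $e\cdot e$ always offers a singleton-labelled transition $\xrightarrow{e}e$. Thus the same label-cardinality argument separates them under $\sim_p$ and $\sim_s$, and the configuration-shape argument (antichain of size two versus a two-element chain) separates them under $\sim_{hp}$ and $\sim_{hhp}$.

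The main obstacle I anticipate is the bookkeeping around events versus labels in part (2): one must be careful that $e\parallel e$ genuinely contributes two concurrent \emph{events} (both labelled $e$) in the underlying PES, so that the step $\{e,e\}$ and the resulting two-event antichain configuration are well-defined; this is exactly the structure-versus-label distinction stressed in the paragraph just before Section \ref{batc}, and once it is invoked the hp/hhp arguments go through verbatim. Beyond that point, all four cases reduce to the same discrepancy — one step of size two versus only steps of size one — so the proof is essentially one diagram repeated four times.
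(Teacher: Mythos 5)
Your proposal is correct and follows essentially the same route as the paper's proof: the separating fact is that $e_1\parallel e_2$ admits the step transition labelled $\{e_1,e_2\}$ (and $e\parallel e$ the step $\{e,e\}$), whereas $e_1\cdot e_2+e_2\cdot e_1$ and $e\cdot e$ only offer singleton-labelled transitions. You are in fact somewhat more careful than the paper for the $\sim_{hp}$ and $\sim_{hhp}$ cases, where the paper silently relies on the same observation while you supply the explicit antichain-versus-chain configuration argument; this is a welcome refinement but not a different approach.
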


\begin{proof}
In nature, it is caused by $e_1\parallel e_2$ and $e_1\cdot e_2 + e_2\cdot e_1$ (specially $e\parallel e$ and $e\cdot e$) having different causality structure. They are based on the following obvious facts according to transition rules for parallel operator in Table \ref{TRForParallel}:

\begin{enumerate}
  \item $e_1\parallel e_2\xrightarrow{\{e_1,e_2\}}\surd$, while $e_1\cdot e_2+ e_2\cdot e_1\nrightarrow^{\{e_1,e_2\}}$;
  \item specially, $e\parallel e\xrightarrow{\{e,e\}}\surd$, while $e\cdot e\nrightarrow^{\{e,e\}}$.
\end{enumerate}
\end{proof}

In the following, we show that the elimination theorem does not hold for truly concurrent processes combined the operators $\cdot$, $+$ and $\parallel$. Firstly, we define the basic terms for $APTC$.

\begin{definition}[Basic terms of $APTC$]\label{BTAPTC}
The set of basic terms of $APTC$, $\mathcal{B}(APTC)$, is inductively defined as follows:
\begin{enumerate}
  \item $\mathbb{E}\subset\mathcal{B}(APTC)$;
  \item if $e\in \mathbb{E}, t\in\mathcal{B}(APTC)$ then $e\cdot t\in\mathcal{B}(APTC)$;
  \item if $t,s\in\mathcal{B}(APTC)$ then $t+ s\in\mathcal{B}(APTC)$;
  \item if $t,s\in\mathcal{B}(APTC)$ then $t\parallel s\in\mathcal{B}(APTC)$.
\end{enumerate}
\end{definition}

\begin{proposition}[About elimination theorem of $APTC$]\label{ETAPTC}
\begin{enumerate}
    \item Let $p$ be a closed $APTC$ term. Then there may not be a closed $BATC$ term $q$ such that $APTC\vdash p=q$;
    \item Let $p$ be a closed $APTC$ term. Then there may not be a closed basic $APTC$ term $q$ such that $APTC\vdash p=q$.
\end{enumerate}
\end{proposition}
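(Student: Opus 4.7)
The plan is to exhibit explicit counterexamples for each item and rule out any derivation by combining the non-expansion result of Proposition~\ref{MEL} with the limited axiomatic content available at this point in the paper.

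For item~(1), I would take $p = e_1 \parallel e_2$. The immediate syntactic observation is that the only axioms currently in force are $A1$–$A5$ of Table~\ref{AxiomsForBATC}, and none of them mentions $\parallel$; so no sequence of rewrites can eliminate that symbol from $p$. For a semantic backbone, I would note that by Table~\ref{TRForParallel} the term $p$ admits the joint-step transition $p\xrightarrow{\{e_1,e_2\}}\surd$, while an easy induction over basic $BATC$ terms (available by Theorem~\ref{ETBATC}) against the transition rules in Tables~\ref{SETRForBATC}, \ref{PTRForBATC}, and~\ref{STRForBATC} shows that every closed $BATC$ term only fires labels that are singleton pomsets. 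Soundness of $BATC$ modulo any of the four truly concurrent equivalences (Theorems~\ref{SBATCPBE}, \ref{SBATCSBE}, \ref{SBATCHPBE}, \ref{SBATCHHPBE}) then rules out $APTC\vdash p = q$ for any closed $BATC$ term $q$; item~1 of Proposition~\ref{MEL} is essentially the special case $q = e_1\cdot e_2 + e_2\cdot e_1$.

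For item~(2), I would pick $p = (e_1 \parallel e_2)\cdot e_3$. This term fails to match any of the four clauses of Definition~\ref{BTAPTC}: its outermost operator is $\cdot$, the left factor $e_1\parallel e_2$ is not in $\mathbb{E}$, and the top symbol is neither $+$ nor $\parallel$. As before, $A1$–$A3$ only rewrite $+$-contexts while $A4$, $A5$ only act on shapes built from $\cdot$ and $+$, so none of the $BATC$ axioms can replace the subterm $e_1\parallel e_2$ by an atomic event or by a $+$-sum. Therefore every term provably equal to $p$ retains that subterm at the left of a leading $\cdot$, contradicting basicness.

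The delicate point is to be precise about what ``$APTC\vdash$'' covers, since no axioms specific to $\parallel$ have yet been introduced. If one wants the argument to be robust against later $APTC$ axioms (for instance, a prospective distributivity of $\cdot$ over $\parallel$), the safer strategy is to back each syntactic observation by the semantic one sketched above, verifying that no basic term realizes the joint-step behavior of $(e_1\parallel e_2)\cdot e_3$. The main care is needed in the case $q = t\parallel s$: one must argue that any basic $t\parallel s$ whose initial joint step has label $\{e_1,e_2\}$ and residual $e_3$ must differ from $p$ either in its subsequent transitions or, for $\sim_{hp}$ and $\sim_{hhp}$, in the causal structure witnessing that $e_3$ is caused by both $e_1$ and $e_2$.
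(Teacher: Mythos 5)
Your treatment of item~(1) is sound and in fact more complete than the paper's own one-line appeal to Proposition~\ref{MEL}: that proposition only rules out the single candidate $q=e_1\cdot e_2+e_2\cdot e_1$, whereas your observation that no closed $BATC$ term can fire a two-element step (since events in a $BATC$ term are related only by $\leq$ or $\sharp$, never by $\parallel$) rules out \emph{every} closed $BATC$ term at once, via soundness modulo $\sim_s$.

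Item~(2), however, has a genuine gap, and it sits exactly at the point you flagged as ``the main care needed'' but did not resolve. Your counterexample $p=(e_1\parallel e_2)\cdot e_3$ \emph{is} provably equal to a closed basic $APTC$ term: axiom $P5$ of Table~\ref{AxiomsForParallelism} gives $(e_1\cdot e_3)\parallel e_2=(e_1\parallel e_2)\cdot e_3$, and $(e_1\cdot e_3)\parallel e_2$ is basic by clauses 1, 2 and 4 of Definition~\ref{BTAPTC}. The semantic fallback fails as well: by the transition rules of Table~\ref{TRForParallel}, $(e_1\cdot e_3)\parallel e_2\xrightarrow{\{e_1,e_2\}}e_3\xrightarrow{e_3}\surd$ is the entire behaviour of both terms, so they are identified by $\sim_s$ and $\sim_p$ (and the paper asserts $P5$ is sound even modulo $\sim_{hp}$ in Theorem~\ref{SPHPBE}); the causal distinction you invoke ($e_3$ caused by both $e_1$ and $e_2$ versus by $e_1$ alone) is not visible to the equivalences against which the claim must hold. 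The paper's own route is different and avoids this trap: for 2(a) it shows that the left and right distributivities of $\cdot$ over $\parallel$ and of $\parallel$ over $\cdot$ all fail semantically, so no normal-form procedure exists, and for 2(b) it exhibits a causality pattern spanning parallel branches (four events $a,b,c,d$ with $c$ after $a$, $d$ after $b$, and $c$ after $b$) that cannot be realised by any closed basic $APTC$ term at all. If you want a concrete witness in your style, you need a term whose cross-branch causality cannot be re-associated onto a single branch---precisely the situation of Fig.~\ref{causality}---rather than one like $(e_1\parallel e_2)\cdot e_3$ whose causality can be absorbed into one component by $P4$--$P6$.
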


\begin{proof}
\begin{enumerate}
  \item By Proposition \ref{MEL};
  \item We show this property through two aspects:
  \begin{enumerate}
    \item The left and right distributivity of $\cdot$ to $\parallel$, and $\parallel$ to $\cdot$, do not hold modulo any truly concurrent bisimulation equivalence.

        Left distributivity of $\cdot$ to $\parallel$: $(e_1\cdot e_2) \parallel (e_1\cdot e_3)\xrightarrow{\{e_1,e_1\}}e_2\parallel e_3$, while $e_1\cdot (e_2\parallel e_3)\nrightarrow^{\{e_1,e_1\}}$.

        Right distributivity of $\cdot$ to $\parallel$: $(e_1\cdot e_3) \parallel (e_2\cdot e_3)\xrightarrow{\{e_1,e_2\}}e_3\parallel e_3\xrightarrow{\{e_3,e_3\}}\surd$, while $(e_1\parallel e_2)\cdot e_3\xrightarrow{\{e_1,e_2\}}e_3\nrightarrow^{\{e_3,e_3\}}$.

        Left distributivity of $\parallel$ to $\cdot$: $(e_1\parallel e_2)\cdot(e_1\parallel e_3)\xrightarrow{\{e_1,e_2\}}e_1\parallel e_3\xrightarrow{\{e_1,e_3\}}\surd$, while $e_1\parallel (e_2\cdot e_3)\xrightarrow{\{e_1,e_2\}}e_3\nrightarrow^{\{e_1,e_3\}}$.

        Right distributivity of $\parallel$ to $\cdot$: $(e_1\parallel e_3)\cdot (e_2\parallel e_3)\xrightarrow{\{e_1,e_3\}}e_2\parallel e_3\xrightarrow{\{e_2,e_3\}}\surd$, while $(e_1\cdot e_2)\parallel e_3\xrightarrow{\{e_1,e_3\}}e_2\nrightarrow^{\{e_2,e_3\}}$.

        This means that there are not normal forms for the closed basic $APTC$ terms.

    \item There are causality relations among different parallel branches can not be expressed by closed basic $APTC$ terms.

        \begin{figure}
            \centering
            \includegraphics{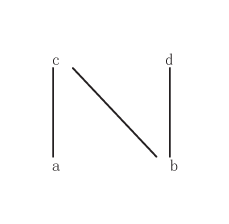}
            \caption{}
            \label{causality}
        \end{figure}

        We consider the graph as Fig. \ref{causality} illustrates. There are four events labeled $a,b,c,d$, and there are three causality relations: $c$ after $a$, $d$ after $b$, and $c$ after $b$. This graph can not be expressed by basic $APTC$ terms. $a$ and $b$ are in parallel, $c$ after $a$, so $c$ and $a$ are in the same parallel branch; $d$ after $b$, so $d$ and $b$ are in the same parallel branch; so $c$ and $d$ are in different parallel branches. But, $c$ after $b$ means that $c$ and $d$ are in the same parallel branch. This causes contradictions, it means that the graph in Fig. \ref{causality} can not be expressed by closed basic $APTC$ terms.
  \end{enumerate}
\end{enumerate}
\end{proof}

Until now, we see that parallelism acts as a fundamental computational pattern, and any elimination theorem does not hold any more. In nature, an event structure $\mathcal{E}$ (see Definition \ref{PES}) is a graph defined by causality and conflict relations among events, while concurrency and consistency are implicitly defined by causality and conflict. The above conclusions say that an event structure $\mathcal{E}$ cannot be fully structured, the explicit parallel operator $\parallel$ in a fully structured event structure combined by $\cdot$, $+$ and $\parallel$ can not be replaced by $\cdot$ and $+$, and a fully structured event structure combined by $\cdot$, $+$ and $\parallel$ has no a normal form.

The above propositions mean that a perfectly sound and complete axiomatization of parallelism for truly concurrent bisimulation equivalence (like $ACP$ \cite{ACP} for bisimulation equivalence) \emph{cannot} be established. Then, what can we do for $APTC$?

\subsection{Axiom System of Parallelism}

Though a fully sound and complete axiomatization for $APTC$ seems impossible, we must and can do something, we believe. We also believe that the future is fully implied by the history, let us reconsider parallelism in interleaving bisimulation. In $ACP$ \cite{ACP}, the full parallelism is captured by an auxiliary left merge and communication merge, left merge captures the interleaving bisimulation semantics, while communication merge expresses the communications among parallel branches. In true concurrency, if we try to define parallelism explicitly like $APTC$, the left merge captured Milner's expansion law does not hold any more, while communications among different parallel branches captured by communication merge still stand there. So, it is reasonable to assume that causality relations among different parallel branches are all communications among them. The communication between two parallel branches is defined as a communicating function between two communicating events $e_1, e_2\in \mathbb{E}$, $\gamma(e_1, e_2): \mathbb{E}\times \mathbb{E}\rightarrow \mathbb{E}$.

The communications among parallel branches are still defined by the communication operator $\mid$, which is expressed by four transition rules in Table \ref{TRForCommunication}. The whole parallelism semantics is modeled by the parallel operator $\parallel$ and communication operator $\mid$, we denote the whole parallel operator as $\between$ (for the transition rules of $\between$, we omit them).

\begin{center}
    \begin{table}
        $$\frac{x\xrightarrow{e_1}\surd\quad y\xrightarrow{e_2}\surd}{x\mid y\xrightarrow{\gamma(e_1,e_2)}\surd} \quad\frac{x\xrightarrow{e_1}x'\quad y\xrightarrow{e_2}\surd}{x\mid y\xrightarrow{\gamma(e_1,e_2)}x'}$$
        $$\frac{x\xrightarrow{e_1}\surd\quad y\xrightarrow{e_2}y'}{x\mid y\xrightarrow{\gamma(e_1,e_2)}y'} \quad\frac{x\xrightarrow{e_1}x'\quad y\xrightarrow{e_2}y'}{x\mid y\xrightarrow{\gamma(e_1,e_2)}x'\between y'}$$
        \caption{Transition rules of communication operator $\mid$}
        \label{TRForCommunication}
    \end{table}
\end{center}

Note that the last transition rule for the parallel operator $\parallel$ in Table \ref{TRForParallel} should be modified to the following one.

$$\frac{x\xrightarrow{e_1}x'\quad y\xrightarrow{e_2}y'}{x\parallel y\xrightarrow{\{e_1,e_2\}}x'\between y'}$$

By communication operator $\mid$, the causality relation among different parallel branches are structured (we will show the algebra laws on communication operator in the following). Now, let us consider conflicts in parallelism. The conflicts exist within the same parallel branches can be captured by $+$ by a structured way, but, how to express conflicts among events in different parallel branches? The conflict relation is also a binary relation between two events $e_1, e_2\in \mathbb{E}$, $\sharp(e_1,e_2):\mathbb{E}\times \mathbb{E}\rightarrow \mathbb{E}$, and we know that $\sharp$ is irreflexive, symmetric and hereditary with respect to $\cdot$, that is, for all $e,e',e''\in \mathbb{E}$, if $e\sharp e'\cdot e''$, then $e\sharp e''$ (see Definition \ref{PES}).

These conflicts among different parallel branches must be eliminated to make the concurrent process structured. We are inspired by modeling of priority in $ACP$ \cite{ACP}, the conflict elimination is also captured by two auxiliary operators, the unary conflict elimination operator $\Theta$ and the binary unless operator $\triangleleft$. The transition rules for $\Theta$ and $\triangleleft$ are expressed by ten transition rules in Table \ref{TRForConflict}.

\begin{center}
    \begin{table}
        $$\frac{x\xrightarrow{e_1}\surd\quad (\sharp(e_1,e_2))}{\Theta(x)\xrightarrow{e_1}\surd} \quad\frac{x\xrightarrow{e_2}\surd\quad (\sharp(e_1,e_2))}{\Theta(x)\xrightarrow{e_2}\surd}$$
        $$\frac{x\xrightarrow{e_1}x'\quad (\sharp(e_1,e_2))}{\Theta(x)\xrightarrow{e_1}\Theta(x')} \quad\frac{x\xrightarrow{e_2}x'\quad (\sharp(e_1,e_2))}{\Theta(x)\xrightarrow{e_2}\Theta(x')}$$
        $$\frac{x\xrightarrow{e_1}\surd \quad y\nrightarrow^{e_2}\quad (\sharp(e_1,e_2))}{x\triangleleft y\xrightarrow{\tau}\surd}
        \quad\frac{x\xrightarrow{e_1}x' \quad y\nrightarrow^{e_2}\quad (\sharp(e_1,e_2))}{x\triangleleft y\xrightarrow{\tau}x'}$$
        $$\frac{x\xrightarrow{e_1}\surd \quad y\nrightarrow^{e_3}\quad (\sharp(e_1,e_2),e_2\leq e_3)}{x\triangleleft y\xrightarrow{e_1}\surd}
        \quad\frac{x\xrightarrow{e_1}x' \quad y\nrightarrow^{e_3}\quad (\sharp(e_1,e_2),e_2\leq e_3)}{x\triangleleft y\xrightarrow{e_1}x'}$$
        $$\frac{x\xrightarrow{e_3}\surd \quad y\nrightarrow^{e_2}\quad (\sharp(e_1,e_2),e_1\leq e_3)}{x\triangleleft y\xrightarrow{\tau}\surd}
        \quad\frac{x\xrightarrow{e_3}x' \quad y\nrightarrow^{e_2}\quad (\sharp(e_1,e_2),e_1\leq e_3)}{x\triangleleft y\xrightarrow{\tau}x'}$$
        \caption{Transition rules of conflict elimination}
        \label{TRForConflict}
    \end{table}
\end{center}

In four transition rules in Table \ref{TRForConflict}, there is a new constant $\tau$ called silent step (see section \ref{abs}), this makes the semantics of conflict elimination is really based on weakly true concurrency (see Definition \ref{WPSB} and Definition \ref{WHHPB}), and we should move it to section \ref{abs}. But the movement would make $APTC$ incomplete (conflicts among different parallel branches cannot be expressed), let us forget this regret and just remember that $\tau$ can be eliminated, without anything on weakly true concurrency.

Ok, causality relations and conflict relations among events in different parallel branches are structured. In the following, we prove the congruence theorem.

\begin{theorem}[Congruence theorem of $APTC$]
Truly concurrent bisimulation equivalences $\sim_{p}$, $\sim_s$, $\sim_{hp}$ and $\sim_{hhp}$ are all congruences with respect to $APTC$.
\end{theorem}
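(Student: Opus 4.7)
The plan is to leverage the congruence results already established for $BATC$: since $\sim_p,\sim_s,\sim_{hp},\sim_{hhp}$ are congruences with respect to $\cdot$ and $+$, it suffices to verify congruence with respect to the new operators of $APTC$, namely $\parallel$, $\mid$, $\between$ (which is definable from these), $\Theta$, and $\triangleleft$. For each equivalence $\sim_\ast$ and each such operator $\mathit{op}$, I would follow the template used throughout Section \ref{batc}: fix terms with $x_i\sim_\ast y_i$, inspect every applicable transition rule of $\mathit{op}$, and use the matching transitions guaranteed by the hypotheses to show that $\mathit{op}(x_1,\dots,x_n)$ and $\mathit{op}(y_1,\dots,y_n)$ can simulate each other with residuals again related by $\sim_\ast$. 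The proof is naturally split into four blocks, one per equivalence, but the block for $\sim_s$ is essentially the $\sim_p$ block with the extra proviso that the labels $\{e_1,e_2\}$ be pairwise concurrent, which is built into the transition rules.

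For $\parallel$ and $\mid$, each table has four rules, all positive and symmetric in the two arguments. From $x_1\xrightarrow{e_1}x_1'$ and $y_1\xrightarrow{e_2}y_1'$ the hypotheses yield matching transitions of $x_2,y_2$ on the same $e_1,e_2$; the combined label $\{e_1,e_2\}$ on the $\parallel$ side and $\gamma(e_1,e_2)$ on the $\mid$ side is therefore identical on both sides. Residuals are of the form $x_1'\parallel y_1'$ or $x_1'\between y_1'$ (the latter handled once $\between$ is defined from $\parallel$ and $\mid$); closure of $\sim_\ast$ under the inductive hypothesis finishes the case. For the hp- and hhp- variants, I would additionally track the posetal isomorphism: the merged map $f_1\cup f_2$ extends coherently to $f_1\cup f_2[e_1\mapsto e_1,e_2\mapsto e_2]$ because the events added on both sides are disjoint and concurrent.

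The main obstacle is the conflict elimination block for $\Theta$ and $\triangleleft$, whose rules in Table \ref{TRForConflict} carry side conditions on $\sharp$ and $\leq$, negative premises of the form $y\nrightarrow^{e_2}$, and silent $\tau$ labels in the conclusion. The negative premises force an auxiliary lemma: if $y\sim_\ast y'$ then $y\nrightarrow^{e_2}$ iff $y'\nrightarrow^{e_2}$, which follows by contraposition from the very definition of $\sim_\ast$ and image-finiteness of the PES. Provided the conflict relation $\sharp$ is fixed as part of the underlying PES and hence invariant under $\sim_\ast$, each of the ten rules then dualises to a matching rule on the other side, with $\tau$-labelled conclusions matched by $\tau$-labelled conclusions; for $\sim_{hp}$ and $\sim_{hhp}$ the $\tau$ is handled as a distinguished event whose image under $f$ is itself, preserving the posetal isomorphism.

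The final subtlety, specific to $\sim_{hhp}$, is that the relation witnessing the congruence must be downward closed. I would argue this separately: the relations produced by closing the hypotheses under the transition rules of each new operator inherit downward closure because each rule either strictly extends configurations on both sides by corresponding events, or (in the case of $\triangleleft$ with a $\tau$-conclusion) leaves configurations pointwise comparable with an additional matched pair. Assembling these four blocks establishes the theorem simultaneously for all four equivalences.
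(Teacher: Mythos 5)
Your proposal follows essentially the same route as the paper: a case analysis over the transition rules of each new operator ($\parallel$, $\mid$, $\Theta$, $\triangleleft$, with $\between$ reduced to $\parallel$ and $\mid$) for each of the four equivalences, matching transitions and closing residuals under the inductive hypothesis. The paper only writes out the $\parallel$ case for $\sim_p$ and dismisses the rest as "similar", so your explicit treatment of the negative premises in the $\triangleleft$ rules (via the image-finiteness lemma) and of downward closure for $\sim_{hhp}$ supplies details the paper silently omits rather than a different argument.
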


\begin{proof}
(1) Case pomset bisimulation equivalence $\sim_p$.

\begin{itemize}
  \item Case parallel operator $\parallel$. Let $x_1,x_2$ and $y_1,y_2$ be $APTC$ processes, and $x_1\sim_{p} y_1$, $x_2\sim_{p} y_2$, it is sufficient to prove that $x_1\parallel x_2\sim_{p} y_1\parallel y_2$.

      By the definition of pomset bisimulation $\sim_p$ (Definition \ref{PSB}), $x_1\sim_p y_1$ means that

      $$x_1\xrightarrow{X_1} x_1' \quad y_1\xrightarrow{Y_1} y_1'$$

      with $X_1\subseteq x_1$, $Y_1\subseteq y_1$, $X_1\sim Y_1$ and $x_1'\sim_p y_1'$. The meaning of $x_2\sim_p y_2$ is similar.

      By the pomset transition rules for parallel operator $\parallel$ in Table \ref{TRForParallel}, we can get

      $$x_1\parallel x_2\xrightarrow{\{X_1,X_2\}} \surd \quad y_1\parallel y_2\xrightarrow{\{Y_1,Y_2\}} \surd$$

      with $X_1\subseteq x_1$, $Y_1\subseteq y_1$, $X_2\subseteq x_2$, $Y_2\subseteq y_2$, $X_1\sim Y_1$ and $X_2\sim Y_2$, so, we get $x_1\parallel x_2\sim_p y_1\parallel y_2$, as desired.

      Or, we can get

      $$x_1\parallel x_2\xrightarrow{\{X_1,X_2\}} x_1' \quad y_1\parallel y_2\xrightarrow{\{Y_1,Y_2\}} y_1'$$

      with $X_1\subseteq x_1$, $Y_1\subseteq y_1$, $X_2\subseteq x_2$, $Y_2\subseteq y_2$, $X_1\sim Y_1$, $X_2\sim Y_2$, and $x_1'\sim_p y_1'$, so, we get $x_1\parallel x_2\sim_p y_1\parallel y_2$, as desired.

      Or, we can get

      $$x_1\parallel x_2\xrightarrow{\{X_1,X_2\}} x_2' \quad y_1\parallel y_2\xrightarrow{\{Y_1,Y_2\}} y_2'$$

      with $X_1\subseteq x_1$, $Y_1\subseteq y_1$, $X_2\subseteq x_2$, $Y_2\subseteq y_2$, $X_1\sim Y_1$, $X_2\sim Y_2$, and $x_2'\sim_p y_2'$, so, we get $x_1\parallel x_2\sim_p y_1\parallel y_2$, as desired.

      Or, we can get

      $$x_1\parallel x_2\xrightarrow{\{X_1,X_2\}} x_1'\between x_2' \quad y_1\parallel y_2\xrightarrow{\{Y_1,Y_2\}} y_1'\between y_2'$$

      with $X_1\subseteq x_1$, $Y_1\subseteq y_1$, $X_2\subseteq x_2$, $Y_2\subseteq y_2$, $X_1\sim Y_1$, $X_2\sim Y_2$, $x_1'\sim_p y_1'$ and $x_2'\sim_p y_2'$, and also the assumption $x_1'\between x_2'\sim_p y_1'\between y_2'$, so, we get $x_1\parallel x_2\sim_p y_1\parallel y_2$, as desired.

  \item Case communication operator $\mid$. It can be proved similarly to the case of parallel operator $\parallel$, we omit it. Note that, a communication is defined between two single communicating events.

  \item Case conflict elimination operator $\Theta$. It can be proved similarly to the above cases, we omit it. Note that the conflict elimination operator $\Theta$ is a unary operator.

  \item Case unless operator $\triangleleft$. It can be proved similarly to the case of parallel operator $\parallel$, we omit it. Note that, a conflict relation is defined between two single events.

\end{itemize}

(2) The cases of step bisimulation $\sim_s$, hp-bisimulation $\sim_{hp}$ and hhp-bisimulation $\sim_{hhp}$ can be proven similarly, we omit them.
\end{proof}

So, we design the axioms of parallelism in Table \ref{AxiomsForParallelism}, including algebraic laws for parallel operator $\parallel$, communication operator $\mid$, conflict elimination operator $\Theta$ and unless operator $\triangleleft$, and also the whole parallel operator $\between$. Since the communication between two communicating events in different parallel branches may cause deadlock (a state of inactivity), which is caused by mismatch of two communicating events or the imperfectness of the communication channel. We introduce a new constant $\delta$ to denote the deadlock, and let the atomic event $e\in \mathbb{E}\cup\{\delta\}$.

\begin{center}
    \begin{table}
        \begin{tabular}{@{}ll@{}}
            \hline No. &Axiom\\
            $A6$ & $x+ \delta = x$\\
            $A7$ & $\delta\cdot x =\delta$\\
            $P1$ & $x\between y = x\parallel y + x\mid y$\\
            $P2$ & $x\parallel y = y \parallel x$\\
            $P3$ & $(x\parallel y)\parallel z = x\parallel (y\parallel z)$\\
            $P4$ & $e_1\parallel (e_2\cdot y) = (e_1\parallel e_2)\cdot y$\\
            $P5$ & $(e_1\cdot x)\parallel e_2 = (e_1\parallel e_2)\cdot x$\\
            $P6$ & $(e_1\cdot x)\parallel (e_2\cdot y) = (e_1\parallel e_2)\cdot (x\between y)$\\
            $P7$ & $(x+ y)\parallel z = (x\parallel z)+ (y\parallel z)$\\
            $P8$ & $x\parallel (y+ z) = (x\parallel y)+ (x\parallel z)$\\
            $P9$ & $\delta\parallel x = \delta$\\
            $P10$ & $x\parallel \delta = \delta$\\
            $C11$ & $e_1\mid e_2 = \gamma(e_1,e_2)$\\
            $C12$ & $e_1\mid (e_2\cdot y) = \gamma(e_1,e_2)\cdot y$\\
            $C13$ & $(e_1\cdot x)\mid e_2 = \gamma(e_1,e_2)\cdot x$\\
            $C14$ & $(e_1\cdot x)\mid (e_2\cdot y) = \gamma(e_1,e_2)\cdot (x\between y)$\\
            $C15$ & $(x+ y)\mid z = (x\mid z) + (y\mid z)$\\
            $C16$ & $x\mid (y+ z) = (x\mid y)+ (x\mid z)$\\
            $C17$ & $\delta\mid x = \delta$\\
            $C18$ & $x\mid\delta = \delta$\\
            $CE19$ & $\Theta(e) = e$\\
            $CE20$ & $\Theta(\delta) = \delta$\\
            $CE21$ & $\Theta(x+ y) = \Theta(x)\triangleleft y + \Theta(y)\triangleleft x$\\
            $CE22$ & $\Theta(x\cdot y)=\Theta(x)\cdot\Theta(y)$\\
            $CE23$ & $\Theta(x\parallel y) = ((\Theta(x)\triangleleft y)\parallel y)+ ((\Theta(y)\triangleleft x)\parallel x)$\\
            $CE24$ & $\Theta(x\mid y) = ((\Theta(x)\triangleleft y)\mid y)+ ((\Theta(y)\triangleleft x)\mid x)$\\
            $U25$ & $(\sharp(e_1,e_2))\quad e_1\triangleleft e_2 = \tau$\\
            $U26$ & $(\sharp(e_1,e_2),e_2\leq e_3)\quad e_1\triangleleft e_3 = e_1$\\
            $U27$ & $(\sharp(e_1,e_2),e_2\leq e_3)\quad e3\triangleleft e_1 = \tau$\\
            $U28$ & $e\triangleleft \delta = e$\\
            $U29$ & $\delta \triangleleft e = \delta$\\
            $U30$ & $(x+ y)\triangleleft z = (x\triangleleft z)+ (y\triangleleft z)$\\
            $U31$ & $(x\cdot y)\triangleleft z = (x\triangleleft z)\cdot (y\triangleleft z)$\\
            $U32$ & $(x\parallel y)\triangleleft z = (x\triangleleft z)\parallel (y\triangleleft z)$\\
            $U33$ & $(x\mid y)\triangleleft z = (x\triangleleft z)\mid (y\triangleleft z)$\\
            $U34$ & $x\triangleleft (y+ z) = (x\triangleleft y)\triangleleft z$\\
            $U35$ & $x\triangleleft (y\cdot z)=(x\triangleleft y)\triangleleft z$\\
            $U36$ & $x\triangleleft (y\parallel z) = (x\triangleleft y)\triangleleft z$\\
            $U37$ & $x\triangleleft (y\mid z) = (x\triangleleft y)\triangleleft z$\\
        \end{tabular}
        \caption{Axioms of parallelism}
        \label{AxiomsForParallelism}
    \end{table}
\end{center}

We explain the intuitions of the axioms of parallelism in Table \ref{AxiomsForParallelism} in the following. The axiom $A6$ says that the deadlock $\delta$ is redundant in the process term $t+ \delta$. $A7$ says that the deadlock blocks all behaviors of the process term $\delta\cdot t$.

The axiom $P1$ is the definition of the whole parallelism $\between$, which says that $s\between t$ either is the form of $s\parallel t$ or $s\mid t$. $P2$ says that $\parallel$ satisfies commutative law, while $P3$ says that $\parallel$ satisfies associativity. $P4$, $P5$ and $P6$ are the defining axioms of $\parallel$, say the $s\parallel t$ executes $s$ and $t$ concurrently. $P7$ and $P8$ are the right and left distributivity of $\parallel$ to $+$. $P9$ and $P10$ say that both $\delta\parallel t$ and $t\parallel\delta$ all block any event.

$C11$, $C12$, $C13$ and $C14$ are the defining axioms of the communication operator $\mid$ which say that $s\mid t$ makes a communication between $s$ and $t$. $C15$ and $C16$ are the right and left distributivity of $\mid$ to $+$. $C17$ and $C18$ say that both $\delta\mid t$ and $t\mid\delta$ all block any event.

$CE19$ and $CE20$ say that the conflict elimination operator $\Theta$ leaves atomic events and the deadlock unchanged. $CE21-CE24$ are the functions of $\Theta$ acting on the operators $+$, $\cdot$, $\parallel$ and $\mid$. $U25$, $U26$ and $U27$ are the defining laws of the unless operator $\triangleleft$, in $U25$ and $U27$, there is a new constant $\tau$, the silent step, we will discuss $\tau$ in details in section \ref{abs}, in these two axioms, we just need to remember that $\tau$ really keeps silent. $U28$ says that the deadlock $\delta$ cannot block any event in the process term $e\triangleleft\delta$, while $U29$ says that $\delta\triangleleft e$ does not exhibit any behavior. $U30-U37$ are the disguised right and left distributivity of $\triangleleft$ to the operators $+$, $\cdot$, $\parallel$ and $\mid$.

\subsection{Properties of Parallelism}

Based on the definition of basic terms for $APTC$ (see Definition \ref{BTAPTC}) and axioms of parallelism (see Table \ref{AxiomsForParallelism}), we can prove the elimination theorem of parallelism.

\begin{theorem}[Elimination theorem of parallelism]\label{ETParallelism}
Let $p$ be a closed $APTC$ term. Then there is a basic $APTC$ term $q$ such that $APTC\vdash p=q$.
\end{theorem}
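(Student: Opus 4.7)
The plan is to adapt the elimination proof for $BATC$ (Theorem \ref{ETBATC}) to the richer signature of $APTC$: we must rewrite away the operators $\between$, $\mid$, $\Theta$, and $\triangleleft$, leaving a term built from $\mathbb{E}\cup\{\delta\}$, $\cdot$, $+$, and $\parallel$, which matches the shape of a basic $APTC$ term per Definition \ref{BTAPTC}. As in the $BATC$ case, the argument has two phases: first, exhibit a strongly normalizing term rewriting system oriented from the axioms; second, verify that every normal form is a basic $APTC$ term.

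For the rewriting system, I would orient the axioms of Table \ref{AxiomsForParallelism} (together with $RA3$, $RA4$, $RA5$ from Table \ref{TRSForBATC}) from left to right. Specifically, $P1$ expands $\between$ into $\parallel$ and $\mid$; $P4$--$P10$ push $\parallel$ through $\cdot$, $+$, and $\delta$; $C11$--$C18$ eliminate $\mid$ in favor of $\gamma$-communications and the basic operators; $CE19$--$CE24$ push $\Theta$ downward and ultimately destroy it; and $U25$--$U37$ do the same for $\triangleleft$. I would then define a well-founded lexicographic path ordering with signature precedence $\between > \Theta > \triangleleft > \mid > \parallel > \cdot > +$, give $\cdot$ and $\parallel$ lexicographic status on their first arguments, and check rule by rule that $s >_{lpo} t$ holds for each oriented rule $s\rightarrow t$. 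Theorem \ref{SN} then yields strong normalization of the TRS.

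For the second phase, I would induct on the structure of a closed normal form $p$. Since every rule with $\between$, $\mid$, $\Theta$, or $\triangleleft$ at the root of its left-hand side is present in the TRS, a normal form cannot contain any of these four symbols at any position; hence the only operators remaining are $\cdot$, $+$, $\parallel$, together with the constants in $\mathbb{E}\cup\{\delta\}$. Case-splitting on the top operator as in the proof of Theorem \ref{ETBATC}, the rules $RA4$ and $RA5$ forbid sums or right-associated products on the left of $\cdot$, while $P4$--$P8$ constrain the shape of parallel compositions; what remains matches exactly a basic $APTC$ term in the sense of Definition \ref{BTAPTC} (with the convention, evident from the right-hand sides of $P4$--$P6$, that a parallel composition of atomic events on the left of $\cdot$ is admitted).

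The main obstacle will be verifying strong normalization, because of rules like $P6$, $(e_1\cdot x)\parallel(e_2\cdot y)\rightarrow(e_1\parallel e_2)\cdot(x\between y)$, whose right-hand side reintroduces $\between$, and rules $CE23$, $CE24$, whose right-hand sides nest $\Theta$ inside $\parallel$ and $\mid$. Placing $\between$ strictly above $\Theta$, $\triangleleft$, $\mid$, and $\parallel$ in the lpo precedence, and exploiting the subterm decrease inside the argument of $\Theta$ and $\triangleleft$ under the distribution axioms, are the key tricks needed to make the lpo bookkeeping work; once the ordering is pinned down, the per-rule verification is tedious but routine.
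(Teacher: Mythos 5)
Your proposal follows essentially the same route as the paper's proof: orient the axioms of Table \ref{AxiomsForParallelism} into the term rewriting system of Table \ref{TRSForAPTC}, invoke Theorem \ref{SN} with a lexicographical path ordering to get strong normalization, and then show by structural induction on a smallest non-basic subterm that every normal form is a basic $APTC$ term. If anything you are more careful than the paper, which only states the precedence $\parallel>\cdot>+$ and does not discuss how rules such as $RP6$, $RCE23$, $RCE24$ (whose right-hand sides reintroduce $\between$ or nest $\Theta$) are handled by the ordering, whereas you explicitly extend the precedence to $\between$, $\Theta$, $\triangleleft$, $\mid$ and flag exactly those rules as the delicate part of the $>_{lpo}$ verification.
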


\begin{proof}
(1) Firstly, suppose that the following ordering on the signature of $APTC$ is defined: $\parallel > \cdot > +$ and the symbol $\parallel$ is given the lexicographical status for the first argument, then for each rewrite rule $p\rightarrow q$ in Table \ref{TRSForAPTC} relation $p>_{lpo} q$ can easily be proved. We obtain that the term rewrite system shown in Table \ref{TRSForAPTC} is strongly normalizing, for it has finitely many rewriting rules, and $>$ is a well-founded ordering on the signature of $APTC$, and if $s>_{lpo} t$, for each rewriting rule $s\rightarrow t$ is in Table \ref{TRSForAPTC} (see Theorem \ref{SN}).

\begin{center}
    \begin{table}
        \begin{tabular}{@{}ll@{}}
            \hline No. &Rewriting Rule\\
            $RA6$ & $x+ \delta \rightarrow x$\\
            $RA7$ & $\delta\cdot x \rightarrow\delta$\\
            $RP1$ & $x\between y \rightarrow x\parallel y + x\mid y$\\
            $RP2$ & $x\parallel y \rightarrow y \parallel x$\\
            $RP3$ & $(x\parallel y)\parallel z \rightarrow x\parallel (y\parallel z)$\\
            $RP4$ & $e_1\parallel (e_2\cdot y) \rightarrow (e_1\parallel e_2)\cdot y$\\
            $RP5$ & $(e_1\cdot x)\parallel e_2 \rightarrow (e_1\parallel e_2)\cdot x$\\
            $RP6$ & $(e_1\cdot x)\parallel (e_2\cdot y) \rightarrow (e_1\parallel e_2)\cdot (x\between y)$\\
            $RP7$ & $(x+ y)\parallel z \rightarrow (x\parallel z)+ (y\parallel z)$\\
            $RP8$ & $x\parallel (y+ z) \rightarrow (x\parallel y)+ (x\parallel z)$\\
            $RP9$ & $\delta\parallel x \rightarrow \delta$\\
            $RP10$ & $x\parallel \delta \rightarrow \delta$\\
            $RC11$ & $e_1\mid e_2 \rightarrow \gamma(e_1,e_2)$\\
            $RC12$ & $e_1\mid (e_2\cdot y) \rightarrow \gamma(e_1,e_2)\cdot y$\\
            $RC13$ & $(e_1\cdot x)\mid e_2 \rightarrow \gamma(e_1,e_2)\cdot x$\\
            $RC14$ & $(e_1\cdot x)\mid (e_2\cdot y) \rightarrow \gamma(e_1,e_2)\cdot (x\between y)$\\
            $RC15$ & $(x+ y)\mid z \rightarrow (x\mid z) + (y\mid z)$\\
            $RC16$ & $x\mid (y+ z) \rightarrow (x\mid y)+ (x\mid z)$\\
            $RC17$ & $\delta\mid x \rightarrow \delta$\\
            $RC18$ & $x\mid\delta \rightarrow \delta$\\
            $RCE19$ & $\Theta(e) \rightarrow e$\\
            $RCE20$ & $\Theta(\delta) \rightarrow \delta$\\
            $RCE21$ & $\Theta(x+ y) \rightarrow \Theta(x)\triangleleft y + \Theta(y)\triangleleft x$\\
            $RCE22$ & $\Theta(x\cdot y)\rightarrow\Theta(x)\cdot\Theta(y)$\\
            $RCE23$ & $\Theta(x\parallel y) \rightarrow ((\Theta(x)\triangleleft y)\parallel y)+ ((\Theta(y)\triangleleft x)\parallel x)$\\
            $RCE24$ & $\Theta(x\mid y) \rightarrow ((\Theta(x)\triangleleft y)\mid y)+ ((\Theta(y)\triangleleft x)\mid x)$\\
            $RU25$ & $(\sharp(e_1,e_2))\quad e_1\triangleleft e_2 \rightarrow \tau$\\
            $RU26$ & $(\sharp(e_1,e_2),e_2\cdot e_3)\quad e_1\triangleleft e_3 \rightarrow e_1$\\
            $RU27$ & $(\sharp(e_1,e_2),e_2\cdot e_3)\quad e3\triangleleft e_1 \rightarrow \tau$\\
            $RU28$ & $e\triangleleft \delta \rightarrow e$\\
            $RU29$ & $\delta \triangleleft e \rightarrow \delta$\\
            $RU30$ & $(x+ y)\triangleleft z \rightarrow (x\triangleleft z)+ (y\triangleleft z)$\\
            $RU31$ & $(x\cdot y)\triangleleft z \rightarrow (x\triangleleft z)\cdot (y\triangleleft z)$\\
            $RU32$ & $(x\parallel y)\triangleleft z \rightarrow (x\triangleleft z)\parallel (y\triangleleft z)$\\
            $RU33$ & $(x\mid y)\triangleleft z \rightarrow (x\triangleleft z)\mid (y\triangleleft z)$\\
            $RU34$ & $x\triangleleft (y+ z) \rightarrow (x\triangleleft y)\triangleleft z$\\
            $RU35$ & $x\triangleleft (y\cdot z)\rightarrow (x\triangleleft y)\triangleleft z$\\
            $RU36$ & $x\triangleleft (y\parallel z) \rightarrow (x\triangleleft y)\triangleleft z$\\
            $RU37$ & $x\triangleleft (y\mid z) \rightarrow (x\triangleleft y)\triangleleft z$\\
        \end{tabular}
        \caption{Term rewrite system of $APTC$}
        \label{TRSForAPTC}
    \end{table}
\end{center}

(2) Then we prove that the normal forms of closed $APTC$ terms are basic $APTC$ terms.

Suppose that $p$ is a normal form of some closed $APTC$ term and suppose that $p$ is not a basic $APTC$ term. Let $p'$ denote the smallest sub-term of $p$ which is not a basic $APTC$ term. It implies that each sub-term of $p'$ is a basic $APTC$ term. Then we prove that $p$ is not a term in normal form. It is sufficient to induct on the structure of $p'$:

\begin{itemize}
  \item Case $p'\equiv e, e\in \mathbb{E}$. $p'$ is a basic $APTC$ term, which contradicts the assumption that $p'$ is not a basic $APTC$ term, so this case should not occur.
  \item Case $p'\equiv p_1\cdot p_2$. By induction on the structure of the basic $APTC$ term $p_1$:
      \begin{itemize}
        \item Subcase $p_1\in \mathbb{E}$. $p'$ would be a basic $APTC$ term, which contradicts the assumption that $p'$ is not a basic $APTC$ term;
        \item Subcase $p_1\equiv e\cdot p_1'$. $RA5$ rewriting rule in Table \ref{TRSForBATC} can be applied. So $p$ is not a normal form;
        \item Subcase $p_1\equiv p_1'+ p_1''$. $RA4$ rewriting rule in Table \ref{TRSForBATC} can be applied. So $p$ is not a normal form;
        \item Subcase $p_1\equiv p_1'\parallel p_1''$. $p'$ would be a basic $APTC$ term, which contradicts the assumption that $p'$ is not a basic $APTC$ term;
        \item Subcase $p_1\equiv p_1'\mid p_1''$. $RC11$ rewrite rule in Table \ref{TRSForAPTC} can be applied. So $p$ is not a normal form;
        \item Subcase $p_1\equiv \Theta(p_1')$. $RCE19$ and $RCE20$ rewrite rules in Table \ref{TRSForAPTC} can be applied. So $p$ is not a normal form.
      \end{itemize}
  \item Case $p'\equiv p_1+ p_2$. By induction on the structure of the basic $APTC$ terms both $p_1$ and $p_2$, all subcases will lead to that $p'$ would be a basic $APTC$ term, which contradicts the assumption that $p'$ is not a basic $APTC$ term.
  \item Case $p'\equiv p_1\parallel p_2$. By induction on the structure of the basic $APTC$ terms both $p_1$ and $p_2$, all subcases will lead to that $p'$ would be a basic $APTC$ term, which contradicts the assumption that $p'$ is not a basic $APTC$ term.
  \item Case $p'\equiv p_1\mid p_2$. By induction on the structure of the basic $APTC$ terms both $p_1$ and $p_2$, all subcases will lead to that $p'$ would be a basic $APTC$ term, which contradicts the assumption that $p'$ is not a basic $APTC$ term.
  \item Case $p'\equiv \Theta(p_1)$. By induction on the structure of the basic $APTC$ term $p_1$, $RCE19-RCE24$ rewrite rules in Table \ref{TRSForAPTC} can be applied. So $p$ is not a normal form.
  \item Case $p'\equiv p_1\triangleleft p_2$. By induction on the structure of the basic $APTC$ terms both $p_1$ and $p_2$, all subcases will lead to that $p'$ would be a basic $APTC$ term, which contradicts the assumption that $p'$ is not a basic $APTC$ term.
\end{itemize}
\end{proof}

\subsection{Structured Operational Semantics of Parallelism}

It is quite a challenge to prove the algebraic laws in Table \ref{AxiomsForParallelism} is sound/complete or unsound/incomplete modulo truly concurrent behavioral equivalence (pomset bisimulation equivalence, step bisimulation equivalence, hp-bisimulation equivalence and hhp-bisimulation equivalence), in this subsection, we try to do these.

\begin{theorem}[Generalization of the algebra for parallelism with respect to $BATC$]
The algebra for parallelism is a generalization of $BATC$.
\end{theorem}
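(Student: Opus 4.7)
The plan is to reduce the claim to two observations: (i) the axiom system of $BATC$ is literally contained in that of $APTC$, and (ii) the TSS of $APTC$ conservatively extends the TSS of $BATC$ on $BATC$-terms. Together these yield that every equation provable in $BATC$ is provable in $APTC$ and that the operational behaviour of $BATC$-terms is unchanged when considered inside $APTC$, which is exactly what ``generalization'' amounts to here.

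First I would compare signatures and axioms. The $BATC$ signature consists of $\mathbb{E}$ together with $+$ and $\cdot$; the $APTC$ signature adds the fresh constant $\delta$ and the fresh operators $\parallel$, $\mid$, $\between$, $\Theta$, $\triangleleft$. Turning to Tables \ref{AxiomsForBATC} and \ref{AxiomsForParallelism}, the axioms $A1$-$A5$ are inherited verbatim by $APTC$, so $BATC\vdash x=y$ immediately implies $APTC\vdash x=y$. This handles the deductive/equational half of the generalization.

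For the operational half, I would invoke the Conservative Extension Theorem (Theorem \ref{TCE}) with $T_0$ the $BATC$-TSS (Table \ref{SETRForBATC}, or, when working modulo pomset/step/hp/hhp, Tables \ref{PTRForBATC}, \ref{STRForBATC}, \ref{HHPTRForBATC}) and $T_1$ the rules for the new operators (Tables \ref{TRForParallel}, \ref{TRForCommunication}, \ref{TRForConflict}). Source-dependency of $T_0$ is clear by inspection: every variable in each premise already appears in the source of the corresponding rule. For the freshness condition on $T_1$, one checks that each new rule has a source of the shape $x\parallel y$, $x\mid y$, $\Theta(x)$, or $x\triangleleft y$, so the head symbol is fresh with respect to $\Sigma_{BATC}$. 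Theorem \ref{TCE} then gives that $T_0\oplus T_1$ is a conservative extension of $T_0$, meaning that on any $BATC$-term the set of transitions $t\xrightarrow{a}t'$ and termination predicates $tP$ is exactly the same in $APTC$ as in $BATC$.

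The main, but entirely routine, obstacle is the bookkeeping for the freshness check: there are many new rules (four for $\parallel$, four for $\mid$, two for $\Theta$, six for $\triangleleft$, plus the elided rules for $\between$), and one must certify for each that its source cannot be a pure $\Sigma_{BATC}$-term, so that no new $BATC$-transitions sneak in. Because each new rule's source is headed by one of the newly introduced operator symbols, this check reduces to a line-by-line scan of the tables, after which the theorem follows.
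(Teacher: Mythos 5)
Your proposal is correct and follows essentially the same route as the paper: the paper's proof also invokes the Conservative Extension Theorem (Theorem \ref{TCE}), citing source-dependency of the $BATC$ rules and the fact that every new rule's source contains an occurrence of $\between$, $\parallel$, $\mid$, $\Theta$ or $\triangleleft$. Your additional remark that $A1$--$A5$ are inherited verbatim on the equational side is a harmless supplement the paper leaves implicit.
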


\begin{proof}
It follows from the following three facts.

\begin{enumerate}
  \item The transition rules of $BATC$ in section \ref{batc} are all source-dependent;
  \item The sources of the transition rules for the algebra for parallelism contain an occurrence of $\between$, or $\parallel$, or $\mid$, or $\Theta$, or $\triangleleft$;
  \item The transition rules of $APTC$ are all source-dependent.
\end{enumerate}

So, the algebra for parallelism is a generalization of $BATC$, that is, $BATC$ is an embedding of the algebra for parallelism, as desired.
\end{proof}

\begin{theorem}[Soundness of parallelism modulo step bisimulation equivalence]\label{SPSBE}
Let $x$ and $y$ be $APTC$ terms. If $APTC\vdash x=y$, then $x\sim_{s} y$.
\end{theorem}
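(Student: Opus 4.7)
The plan is to reduce soundness to a per-axiom verification, exploiting the congruence result already established for step bisimulation with respect to $APTC$. Since $\sim_s$ is both an equivalence and a congruence for all operators of $APTC$, it suffices to show that for every axiom $l=r$ in Table \ref{AxiomsForParallelism} and every substitution of closed $APTC$ terms for the variables, we have $l\sim_s r$. This mirrors the per-axiom strategy used in Theorem \ref{SBATCSBE} for $BATC$, but now extended to cover the new operators $\between$, $\parallel$, $\mid$, $\Theta$, and $\triangleleft$, whose transition rules are given in Tables \ref{TRForParallel}, \ref{TRForCommunication}, and \ref{TRForConflict}.

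First I would dispatch the deadlock axioms $A6$ and $A7$: since $\delta$ admits no outgoing transitions, the step transitions of $x+\delta$ coincide with those of $x$, and $\delta\cdot x$ has no transitions at all, matching $\delta$. Then I would handle the parallelism block $P1$--$P10$. The decisive observation is that the $\parallel$-transition rule in Table \ref{TRForParallel} produces labels $\{e_1,e_2\}$ from premises $x\xrightarrow{e_1}$ and $y\xrightarrow{e_2}$, and these labels are automatically pairwise concurrent as required by the step semantics; commutativity $P2$ and associativity $P3$ then follow from the symmetry of set union, while the expansion-style laws $P4$--$P6$ are checked by matching a $\{e_1,e_2\}$-labeled transition on the left (derived from $\parallel$) with the step derivation on the right using Table \ref{STRForBATC} applied to $e_1\parallel e_2$ as a pairwise-concurrent block. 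Distributivity $P7,P8$ and $\delta$-absorption $P9,P10$ are routine. The communication axioms $C11$--$C18$ are then handled analogously using Table \ref{TRForCommunication}, with the label $\gamma(e_1,e_2)$ being a single event so that the pairwise-concurrency condition is trivially satisfied.

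For the conflict-elimination axioms $CE19$--$CE24$ I would read off the transitions of $\Theta(\cdot)$ from Table \ref{TRForConflict}, combine them with the inductive behaviour of $+$, $\cdot$, $\parallel$, and $\mid$, and match them to the transitions produced by the right-hand sides, which involve $\triangleleft$. The unless axioms $U25$--$U37$ are verified similarly: $U25$ and $U27$ introduce a $\tau$-labeled transition, which in the step semantics is treated simply as one more atomic label (the weak-concurrency interpretation is postponed to Section \ref{abs}), while $U28,U29$ follow from the absence of transitions of $\delta$, and the distributivity laws $U30$--$U37$ follow directly from the shape of the $\triangleleft$-rules by a straightforward case split on whether the derivation of the source transition uses the rule based on $\sharp(e_1,e_2)$ alone or the variants involving $e_2\leq e_3$.

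The main obstacle will be the axioms $P6$, $CE23$, and $CE24$, each of which introduces a nested occurrence of the whole-parallel operator $\between$ on one side. For these I would appeal to axiom $P1$ to unfold $\between$ into $\parallel+\mid$ and then argue coinductively: the step-bisimulation witness must relate residuals of the form $x_1'\between x_2'$ to $y_1'\between y_2'$, which is available precisely because $\sim_s$ is a congruence for every operator of $APTC$, so the inductive hypothesis on the subterms lifts to the combined residual. A related subtlety, already visible in Proposition \ref{MEL}, arises for auto-concurrent events, where the step label $\{e,e\}$ must be produced on both sides and interpreted consistently with Definition \ref{PSB}; I would make this explicit at the outset so that the case for $P2$--$P6$ applies uniformly whether or not the events in the premises happen to coincide. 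No completeness is attempted here and, in light of Proposition \ref{ETAPTC}, cannot proceed by the normal-form argument used for $BATC$.
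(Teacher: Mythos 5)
Your proposal follows essentially the same route as the paper's proof: reduce soundness to a per-axiom check via the congruence property of $\sim_s$, treat the single-event transition rules of Tables \ref{TRForParallel}, \ref{TRForCommunication} and \ref{TRForConflict} as steps, dispatch the $\delta$-axioms and defining axioms as trivial or deferred, and verify the remaining axioms by matching transitions on both sides, closing the $\between$-residual cases ($P6$, $CE23$, $CE24$) with an inductive assumption on the subterms exactly as the paper does. The only presentational differences (making the auto-concurrency label $\{e,e\}$ explicit, and sketching $U25$/$U27$ rather than deferring them wholesale to Section \ref{abs}) do not change the argument.
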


\begin{proof}
Since step bisimulation $\sim_s$ is both an equivalent and a congruent relation with respect to the operators $\between$, $\parallel$, $\mid$, $\Theta$ and $\triangleleft$, we only need to check if each axiom in Table \ref{AxiomsForParallelism} is sound modulo step bisimulation equivalence.

Though transition rules in Table \ref{TRForParallel}, \ref{TRForCommunication}, and \ref{TRForConflict} are defined in the flavor of single event, they can be modified into a step (a set of events within which each event is pairwise concurrent), we omit them. If we treat a single event as a step containing just one event, the proof of this soundness theorem does not exist any problem, so we use this way and still use the transition rules in Table \ref{TRForParallel}, \ref{TRForCommunication}, and \ref{TRForConflict}.

We omit the defining axioms, including axioms $P1$, $C11$, $CE19$, $CE20$, $U25-U27$ (the soundness of $U25$ and $U27$ is remained to section \ref{abs}); we also omit the trivial axioms related to $\delta$, including axioms $A6$, $A7$, $P9$, $P10$, $C17$, $C18$, $U28$ and $U29$; in the following, we only prove the soundness of the non-trivial axioms, including axioms $P2-P8$, $C12-C16$, $CE21-CE24$ and $U30-U37$.

\begin{itemize}
  \item \textbf{Axiom $P2$}. Let $p,q$ be $APTC$ processes, and $p\parallel q=q\parallel p$, it is sufficient to prove that $p\parallel q\sim_s q\parallel p$. By the transition rules for operator $\parallel$ in Table \ref{TRForParallel}, we get

      $$\frac{p\xrightarrow{e_1}\surd\quad q\xrightarrow{e_2}\surd}{p\parallel q\xrightarrow{\{e_1,e_2\}}\surd}
      \quad\frac{p\xrightarrow{e_1}\surd\quad q\xrightarrow{e_2}\surd}{q\parallel p\xrightarrow{\{e_1,e_2\}}\surd}$$

      $$\frac{p\xrightarrow{e_1}p'\quad q\xrightarrow{e_2}\surd}{p\parallel q\xrightarrow{\{e_1,e_2\}}p'}
      \quad\frac{p\xrightarrow{e_1}p'\quad q\xrightarrow{e_2}\surd}{q\parallel p\xrightarrow{\{e_1,e_2\}}p'}$$

      $$\frac{p\xrightarrow{e_1}\surd\quad q\xrightarrow{e_2}q'}{p\parallel q\xrightarrow{\{e_1,e_2\}}q'}
      \quad\frac{p\xrightarrow{e_1}\surd\quad q\xrightarrow{e_2}q'}{q\parallel p\xrightarrow{\{e_1,e_2\}}q'}$$

      $$\frac{p\xrightarrow{e_1}p'\quad q\xrightarrow{e_2}q'}{p\parallel q\xrightarrow{\{e_1,e_2\}}p'\between q'}
      \quad\frac{p\xrightarrow{e_1}p'\quad q\xrightarrow{e_2}q'}{q\parallel p\xrightarrow{\{e_1,e_2\}}q'\between p'}$$

      So, with the assumption $p'\between q' = q'\between p'$, $p\parallel q\sim_s q\parallel p$, as desired.
  \item \textbf{Axiom $P3$}. Let $p,q,r$ be $APTC$ processes, and $(p\parallel q)\parallel r=p\parallel(q\parallel r)$, it is sufficient to prove that $(p\parallel q)\parallel r\sim_s p\parallel(q\parallel r)$. By the transition rules for operator $\parallel$ in Table \ref{TRForParallel}, we get

      $$\frac{p\xrightarrow{e_1}\surd\quad q\xrightarrow{e_2}\surd\quad r\xrightarrow{e_3}\surd}{(p\parallel q)\parallel r\xrightarrow{\{e_1,e_2,e_3\}}\surd}
      \quad\frac{p\xrightarrow{e_1}\surd\quad q\xrightarrow{e_2}\surd\quad r\xrightarrow{e_3}\surd}{p\parallel(q\parallel r)\xrightarrow{\{e_1,e_2,e_3\}}\surd}$$

      $$\quad\frac{p\xrightarrow{e_1}p'\quad q\xrightarrow{e_2}\surd\quad r\xrightarrow{e_3}\surd}{(p\parallel q)\parallel r\xrightarrow{\{e_1,e_2,e_3\}}p'}
      \quad\frac{p\xrightarrow{e_1}p'\quad q\xrightarrow{e_2}\surd\quad r\xrightarrow{e_3}\surd}{p\parallel(q\parallel r)\xrightarrow{\{e_1,e_2,e_3\}}p'}$$

      There are also two cases that two process terms successfully terminate, we omit them.

      $$\frac{p\xrightarrow{e_1}p'\quad q\xrightarrow{e_2}q'\quad r\xrightarrow{e_3}\surd}{(p\parallel q)\parallel r\xrightarrow{\{e_1,e_2,e_3\}}p'\between q'}
      \quad\frac{p\xrightarrow{e_1}p'\quad q\xrightarrow{e_2}q'\quad r\xrightarrow{e_3}\surd}{p\parallel(q\parallel r)\xrightarrow{\{e_1,e_2,e_3\}}p'\between q'}$$

      There are also other cases that just one process term successfully terminate, we also omit them.

      $$\frac{p\xrightarrow{e_1}p'\quad q\xrightarrow{e_2}q'\quad r\xrightarrow{e_3} r'}{(p\parallel q)\parallel r'\xrightarrow{\{e_1,e_2,e_3\}}(p'\between q')\between r'}
      \quad\frac{p\xrightarrow{e_1}p'\quad q\xrightarrow{e_2}q'\quad r\xrightarrow{e_3} r'}{p\parallel (q\parallel r)\xrightarrow{\{e_1,e_2,e_3\}}p'\between(q'\between r')}$$

      So, with the assumption $(p'\between q')\between r' = p'\between (q'\between r')$, $(p\parallel q)\parallel r\sim_s p\parallel (q\parallel r)$, as desired.
  \item \textbf{Axiom $P4$}. Let $q$ be an $APTC$ process, and $e_1\parallel (e_2\cdot q)=(e_1\parallel e_2)\cdot q$, it is sufficient to prove that $e_1\parallel(e_2\cdot q)\sim_s (e_1\parallel e_2)\cdot q$. By the transition rules for operator $\parallel$ in Table \ref{TRForParallel}, we get

      $$\frac{e_1\xrightarrow{e_1}\surd\quad e_2\cdot q\xrightarrow{e_2}q}{e_1\parallel (e_2\cdot q)\xrightarrow{\{e_1,e_2\}}q}$$

      $$\frac{e_1\xrightarrow{e_1}\surd\quad e_2\xrightarrow{e_2}\surd}{(e_1\parallel e_2)\cdot q\xrightarrow{\{e_1,e_2\}}q}$$

      So, $e_1\parallel (e_2\cdot q)\sim_s (e_1\parallel e_2)\cdot q$, as desired.
  \item \textbf{Axiom $P5$}. Let $p$ be an $APTC$ process, and $(e_1\cdot p)\parallel e_2=(e_1\parallel e_2)\cdot p$, it is sufficient to prove that $(e_1\cdot p)\parallel e_2\sim_s (e_1\parallel e_2)\cdot p$. By the transition rules for operator $\parallel$ in Table \ref{TRForParallel}, we get

      $$\frac{e_1\cdot p\xrightarrow{e_1}p\quad e_2\xrightarrow{e_2}\surd}{(e_1\cdot p)\parallel e_2\xrightarrow{\{e_1,e_2\}}p}$$

      $$\frac{e_1\xrightarrow{e_1}\surd\quad e_2\xrightarrow{e_2}\surd}{(e_1\parallel e_2)\cdot p\xrightarrow{\{e_1,e_2\}}p}$$

      So, $(e_1\cdot p)\parallel e_2\sim_s (e_1\parallel e_2)\cdot p$, as desired.
  \item \textbf{Axiom $P6$}. Let $p,q$ be $APTC$ processes, and $(e_1\cdot p)\parallel (e_2\cdot q)=(e_1\parallel e_2)\cdot (p\between q)$, it is sufficient to prove that $(e_1\cdot p)\parallel(e_2\cdot q)\sim_s (e_1\parallel e_2)\cdot (p\between q)$. By the transition rules for operator $\parallel$ in Table \ref{TRForParallel}, we get

      $$\frac{e_1\cdot p\xrightarrow{e_1}p\quad e_2\cdot q\xrightarrow{e_2}q}{(e_1\cdot p)\parallel (e_2\cdot q)\xrightarrow{\{e_1,e_2\}}p\between q}$$

      $$\frac{e_1\xrightarrow{e_1}\surd\quad e_2\xrightarrow{e_2}\surd}{(e_1\parallel e_2)\cdot (p\between q)\xrightarrow{\{e_1,e_2\}}p\between q}$$

      So, $(e_1\cdot p)\parallel (e_2\cdot q)\sim_s (e_1\parallel e_2)\cdot (p\between q)$, as desired.
  \item \textbf{Axiom $P7$}. Let $p,q,r$ be $APTC$ processes, and $(p+ q)\parallel r = (p\parallel r) + (q\parallel r)$, it is sufficient to prove that $(p+ q)\parallel r \sim_s (p\parallel r) + (q\parallel r)$. By the transition rules for operators $+$ and $\parallel$ in Table \ref{STRForBATC} and \ref{TRForParallel}, we get

      $$\frac{p\xrightarrow{e_1}\surd\quad r\xrightarrow{e_2}\surd}{(p+ q)\parallel r\xrightarrow{\{e_1,e_2\}}\surd}
      \quad \frac{p\xrightarrow{e_1}\surd\quad r\xrightarrow{e_2}\surd}{(p\parallel r)+ (q\parallel r)\xrightarrow{\{e_1,e_2\}}\surd}$$

      $$\frac{q\xrightarrow{e_1}\surd\quad r\xrightarrow{e_2}\surd}{(p+ q)\parallel r\xrightarrow{\{e_1,e_2\}}\surd}
      \quad \frac{q\xrightarrow{e_1}\surd\quad r\xrightarrow{e_2}\surd}{(p\parallel r)+ (q\parallel r)\xrightarrow{\{e_1,e_2\}}\surd}$$

      $$\frac{p\xrightarrow{e_1}p'\quad r\xrightarrow{e_2}\surd}{(p+ q)\parallel r\xrightarrow{\{e_1,e_2\}}p'}
      \quad \frac{p\xrightarrow{e_1}p'\quad r\xrightarrow{e_2}\surd}{(p\parallel r)+ (q\parallel r)\xrightarrow{\{e_1,e_2\}}p'}$$

      $$\frac{q\xrightarrow{e_1}q'\quad r\xrightarrow{e_2}\surd}{(p+ q)\parallel r\xrightarrow{\{e_1,e_2\}}q'}
      \quad \frac{q\xrightarrow{e_1}q'\quad r\xrightarrow{e_2}\surd}{(p\parallel r)+ (q\parallel r)\xrightarrow{\{e_1,e_2\}}q'}$$

      $$\frac{p\xrightarrow{e_1}\surd\quad r\xrightarrow{e_2}r'}{(p+ q)\parallel r\xrightarrow{\{e_1,e_2\}}r'}
      \quad \frac{p\xrightarrow{e_1}\surd\quad r\xrightarrow{e_2}r'}{(p\parallel r)+ (q\parallel r)\xrightarrow{\{e_1,e_2\}}r'}$$

      $$\frac{q\xrightarrow{e_1}\surd\quad r\xrightarrow{e_2}r'}{(p+ q)\parallel r\xrightarrow{\{e_1,e_2\}}r'}
      \quad \frac{q\xrightarrow{e_1}\surd\quad r\xrightarrow{e_2}r'}{(p\parallel r)+ (q\parallel r)\xrightarrow{\{e_1,e_2\}}r'}$$

      $$\frac{p\xrightarrow{e_1}p'\quad r\xrightarrow{e_2}r'}{(p+ q)\parallel r\xrightarrow{\{e_1,e_2\}}p'\between r'}
      \quad \frac{p\xrightarrow{e_1}p'\quad r\xrightarrow{e_2}r'}{(p\parallel r)+ (q\parallel r)\xrightarrow{\{e_1,e_2\}}p'\between r'}$$

      $$\frac{q\xrightarrow{e_1}q'\quad r\xrightarrow{e_2}r'}{(p+ q)\parallel r\xrightarrow{\{e_1,e_2\}}q'\between r'}
      \quad \frac{q\xrightarrow{e_1}q'\quad r\xrightarrow{e_2}r'}{(p\parallel r)+ (q\parallel r)\xrightarrow{\{e_1,e_2\}}q'\between r'}$$

      So, $(p+ q)\parallel r\sim_s (p\parallel r)+ (q\parallel r)$, as desired.
  \item \textbf{Axiom $P8$}. Let $p,q,r$ be $APTC$ processes, and $p\parallel(q+ r) = (p\parallel q) + (p\parallel r)$, it is sufficient to prove that $p\parallel(q+ r) \sim_s (p\parallel q) + (p\parallel r)$. By the transition rules for operators $+$ and $\parallel$ in Table \ref{STRForBATC} and \ref{TRForParallel}, we get

      $$\frac{p\xrightarrow{e_1}\surd\quad q\xrightarrow{e_2}\surd}{p\parallel (q+ r)\xrightarrow{\{e_1,e_2\}}\surd}
      \quad \frac{p\xrightarrow{e_1}\surd\quad q\xrightarrow{e_2}\surd}{(p\parallel q)+ (p\parallel r)\xrightarrow{\{e_1,e_2\}}\surd}$$

      $$\frac{p\xrightarrow{e_1}\surd\quad r\xrightarrow{e_2}\surd}{p\parallel (q+ r)\xrightarrow{\{e_1,e_2\}}\surd}
      \quad \frac{p\xrightarrow{e_1}\surd\quad r\xrightarrow{e_2}\surd}{(p\parallel q)+ (p\parallel r)\xrightarrow{\{e_1,e_2\}}\surd}$$

      $$\frac{p\xrightarrow{e_1}p'\quad q\xrightarrow{e_2}\surd}{p\parallel (q+ r)\xrightarrow{\{e_1,e_2\}}p'}
      \quad \frac{p\xrightarrow{e_1}p'\quad q\xrightarrow{e_2}\surd}{(p\parallel q)+ (p\parallel r)\xrightarrow{\{e_1,e_2\}}p'}$$

      $$\frac{p\xrightarrow{e_1}p'\quad r\xrightarrow{e_2}\surd}{p\parallel (q+ r)\xrightarrow{\{e_1,e_2\}}p'}
      \quad \frac{p\xrightarrow{e_1}p'\quad r\xrightarrow{e_2}\surd}{(p\parallel q)+ (p\parallel r)\xrightarrow{\{e_1,e_2\}}p'}$$

      $$\frac{p\xrightarrow{e_1}\surd\quad q\xrightarrow{e_2}q'}{p\parallel (q+ r)\xrightarrow{\{e_1,e_2\}}q'}
      \quad \frac{p\xrightarrow{e_1}\surd\quad q\xrightarrow{e_2}q'}{(p\parallel q)+ (p\parallel r)\xrightarrow{\{e_1,e_2\}}q'}$$

      $$\frac{p\xrightarrow{e_1}\surd\quad r\xrightarrow{e_2}r'}{p\parallel (q+ r)\xrightarrow{\{e_1,e_2\}}r'}
      \quad \frac{p\xrightarrow{e_1}\surd\quad r\xrightarrow{e_2}r'}{(p\parallel q)+ (p\parallel r)\xrightarrow{\{e_1,e_2\}}r'}$$

      $$\frac{p\xrightarrow{e_1}p'\quad q\xrightarrow{e_2}q'}{p\parallel (q+ r)\xrightarrow{\{e_1,e_2\}}p'\between q'}
      \quad \frac{p\xrightarrow{e_1}p'\quad q\xrightarrow{e_2}q'}{(p\parallel q)+ (p\parallel r)\xrightarrow{\{e_1,e_2\}}p'\between q'}$$

      $$\frac{p\xrightarrow{e_1}p'\quad r\xrightarrow{e_2}r'}{p\parallel (q+ r)\xrightarrow{\{e_1,e_2\}}p'\between r'}
      \quad \frac{p\xrightarrow{e_1}p'\quad r\xrightarrow{e_2}r'}{(p\parallel q)+ (p\parallel r)\xrightarrow{\{e_1,e_2\}}p'\between r'}$$

      So, $p\parallel(q+ r) \sim_s (p\parallel q) + (p\parallel r)$, as desired.
  \item \textbf{Axiom $C12$}. Let $q$ be an $APTC$ process, and $e_1\mid (e_2\cdot q)=\gamma(e_1,e_2)\cdot q$, it is sufficient to prove that $e_1\mid(e_2\cdot q)\sim_s \gamma(e_1,e_2)\cdot q$. By the transition rules for operator $\mid$ in Table \ref{TRForCommunication}, we get

      $$\frac{e_1\xrightarrow{e_1}\surd\quad e_2\cdot q\xrightarrow{e_2}q}{e_1\mid (e_2\cdot q)\xrightarrow{\gamma(e_1,e_2)}q}$$

      $$\frac{e_1\xrightarrow{e_1}\surd\quad e_2\xrightarrow{e_2}\surd}{\gamma(e_1,e_2)\cdot q\xrightarrow{\gamma(e_1,e_2)}q}$$

      So, $e_1\mid (e_2\cdot q)\sim_s \gamma(e_1,e_2)\cdot q$, as desired.
  \item \textbf{Axiom $C13$}. Let $p$ be an $APTC$ process, and $(e_1\cdot p)\mid e_2=\gamma(e_1,e_2)\cdot p$, it is sufficient to prove that $(e_1\cdot p)\mid e_2\sim_s \gamma(e_1,e_2)\cdot p$. By the transition rules for operator $\mid$ in Table \ref{TRForCommunication}, we get

      $$\frac{e_1\cdot p\xrightarrow{e_1}p\quad e_2\xrightarrow{e_2}\surd}{(e_1\cdot p)\mid e_2\xrightarrow{\gamma(e_1,e_2)}p}$$

      $$\frac{e_1\xrightarrow{e_1}\surd\quad e_2\xrightarrow{e_2}\surd}{\gamma(e_1,e_2)\cdot p\xrightarrow{\gamma(e_1,e_2)}p}$$

      So, $(e_1\cdot p)\mid e_2\sim_s \gamma(e_1,e_2)\cdot p$, as desired.
  \item \textbf{Axiom $C14$}. Let $p,q$ be $APTC$ processes, and $(e_1\cdot p)\mid (e_2\cdot q)=\gamma(e_1,e_2)\cdot (p\between q)$, it is sufficient to prove that $(e_1\cdot p)\mid(e_2\cdot q)\sim_s \gamma(e_1,e_2)\cdot (p\between q)$. By the transition rules for operator $\mid$ in Table \ref{TRForCommunication}, we get

      $$\frac{e_1\cdot p\xrightarrow{e_1}p\quad e_2\cdot q\xrightarrow{e_2}q}{(e_1\cdot p)\mid (e_2\cdot q)\xrightarrow{\gamma(e_1,e_2)}p\between q}$$

      $$\frac{e_1\xrightarrow{e_1}\surd\quad e_2\xrightarrow{e_2}\surd}{\gamma(e_1,e_2)\cdot (p\between q)\xrightarrow{\gamma(e_1,e_2)}p\between q}$$

      So, $(e_1\cdot p)\mid (e_2\cdot q)\sim_s \gamma(e_1,e_2)\cdot (p\between q)$, as desired.
  \item \textbf{Axiom $C15$}. Let $p,q,r$ be $APTC$ processes, and $(p+ q)\mid r = (p\mid r) + (q\mid r)$, it is sufficient to prove that $(p+ q)\mid r \sim_s (p\mid r) + (q\mid r)$. By the transition rules for operators $+$ and $\mid$ in Table \ref{STRForBATC} and \ref{TRForCommunication}, we get

      $$\frac{p\xrightarrow{e_1}\surd\quad r\xrightarrow{e_2}\surd}{(p+ q)\mid r\xrightarrow{\gamma(e_1,e_2)}\surd}
      \quad \frac{p\xrightarrow{e_1}\surd\quad r\xrightarrow{e_2}\surd}{(p\mid r)+ (q\mid r)\xrightarrow{\gamma(e_1,e_2)}\surd}$$

      $$\frac{q\xrightarrow{e_1}\surd\quad r\xrightarrow{e_2}\surd}{(p+ q)\mid r\xrightarrow{\gamma(e_1,e_2)}\surd}
      \quad \frac{q\xrightarrow{e_1}\surd\quad r\xrightarrow{e_2}\surd}{(p\mid r)+ (q\mid r)\xrightarrow{\gamma(e_1,e_2)}\surd}$$

      $$\frac{p\xrightarrow{e_1}p'\quad r\xrightarrow{e_2}\surd}{(p+ q)\mid r\xrightarrow{\gamma(e_1,e_2)}p'}
      \quad \frac{p\xrightarrow{e_1}p'\quad r\xrightarrow{e_2}\surd}{(p\mid r)+ (q\mid r)\xrightarrow{\gamma(e_1,e_2)}p'}$$

      $$\frac{q\xrightarrow{e_1}q'\quad r\xrightarrow{e_2}\surd}{(p+ q)\mid r\xrightarrow{\gamma(e_1,e_2)}q'}
      \quad \frac{q\xrightarrow{e_1}q'\quad r\xrightarrow{e_2}\surd}{(p\mid r)+ (q\mid r)\xrightarrow{\gamma(e_1,e_2)}q'}$$

      $$\frac{p\xrightarrow{e_1}\surd\quad r\xrightarrow{e_2}r'}{(p+ q)\mid r\xrightarrow{\gamma(e_1,e_2)}r'}
      \quad \frac{p\xrightarrow{e_1}\surd\quad r\xrightarrow{e_2}r'}{(p\mid r)+ (q\mid r)\xrightarrow{\gamma(e_1,e_2)}r'}$$

      $$\frac{q\xrightarrow{e_1}\surd\quad r\xrightarrow{e_2}r'}{(p+ q)\mid r\xrightarrow{\gamma(e_1,e_2)}r'}
      \quad \frac{q\xrightarrow{e_1}\surd\quad r\xrightarrow{e_2}r'}{(p\mid r)+ (q\mid r)\xrightarrow{\gamma(e_1,e_2)}r'}$$

      $$\frac{p\xrightarrow{e_1}p'\quad r\xrightarrow{e_2}r'}{(p+ q)\mid r\xrightarrow{\gamma(e_1,e_2)}p'\between r'}
      \quad \frac{p\xrightarrow{e_1}p'\quad r\xrightarrow{e_2}r'}{(p\mid r)+ (q\mid r)\xrightarrow{\gamma(e_1,e_2)}p'\between r'}$$

      $$\frac{q\xrightarrow{e_1}q'\quad r\xrightarrow{e_2}r'}{(p+ q)\mid r\xrightarrow{\gamma(e_1,e_2)}q'\between r'}
      \quad \frac{q\xrightarrow{e_1}q'\quad r\xrightarrow{e_2}r'}{(p\mid r)+ (q\mid r)\xrightarrow{\gamma(e_1,e_2)}q'\between r'}$$

      So, $(p+ q)\mid r\sim_s (p\mid r)+ (q\mid r)$, as desired.
  \item \textbf{Axiom $C16$}. Let $p,q,r$ be $APTC$ processes, and $p\mid(q+ r) = (p\mid q) + (p\mid r)$, it is sufficient to prove that $p\mid(q+ r) \sim_s (p\mid q) + (p\mid r)$. By the transition rules for operators $+$ and $\mid$ in Table \ref{STRForBATC} and \ref{TRForCommunication}, we get

      $$\frac{p\xrightarrow{e_1}\surd\quad q\xrightarrow{e_2}\surd}{p\mid (q+ r)\xrightarrow{\gamma(e_1,e_2)}\surd}
      \quad \frac{p\xrightarrow{e_1}\surd\quad q\xrightarrow{e_2}\surd}{(p\mid q)+ (p\mid r)\xrightarrow{\gamma(e_1,e_2)}\surd}$$

      $$\frac{p\xrightarrow{e_1}\surd\quad r\xrightarrow{e_2}\surd}{p\mid (q+ r)\xrightarrow{\gamma(e_1,e_2)}\surd}
      \quad \frac{p\xrightarrow{e_1}\surd\quad r\xrightarrow{e_2}\surd}{(p\mid q)+ (p\mid r)\xrightarrow{\gamma(e_1,e_2)}\surd}$$

      $$\frac{p\xrightarrow{e_1}p'\quad q\xrightarrow{e_2}\surd}{p\mid (q+ r)\xrightarrow{\gamma(e_1,e_2)}p'}
      \quad \frac{p\xrightarrow{e_1}p'\quad q\xrightarrow{e_2}\surd}{(p\mid q)+ (p\mid r)\xrightarrow{\gamma(e_1,e_2)}p'}$$

      $$\frac{p\xrightarrow{e_1}p'\quad r\xrightarrow{e_2}\surd}{p\mid (q+ r)\xrightarrow{\gamma(e_1,e_2)}p'}
      \quad \frac{p\xrightarrow{e_1}p'\quad r\xrightarrow{e_2}\surd}{(p\mid q)+ (p\mid r)\xrightarrow{\gamma(e_1,e_2)}p'}$$

      $$\frac{p\xrightarrow{e_1}\surd\quad q\xrightarrow{e_2}q'}{p\mid (q+ r)\xrightarrow{\gamma(e_1,e_2)}q'}
      \quad \frac{p\xrightarrow{e_1}\surd\quad q\xrightarrow{e_2}q'}{(p\mid q)+ (p\mid r)\xrightarrow{\gamma(e_1,e_2)}q'}$$

      $$\frac{p\xrightarrow{e_1}\surd\quad r\xrightarrow{e_2}r'}{p\mid (q+ r)\xrightarrow{\gamma(e_1,e_2)}r'}
      \quad \frac{p\xrightarrow{e_1}\surd\quad r\xrightarrow{e_2}r'}{(p\mid q)+ (p\mid r)\xrightarrow{\gamma(e_1,e_2)}r'}$$

      $$\frac{p\xrightarrow{e_1}p'\quad q\xrightarrow{e_2}q'}{p\mid (q+ r)\xrightarrow{\gamma(e_1,e_2)}p'\between q'}
      \quad \frac{p\xrightarrow{e_1}p'\quad q\xrightarrow{e_2}q'}{(p\mid q)+ (p\mid r)\xrightarrow{\gamma(e_1,e_2)}p'\between q'}$$

      $$\frac{p\xrightarrow{e_1}p'\quad r\xrightarrow{e_2}r'}{p\mid (q+ r)\xrightarrow{\gamma(e_1,e_2)}p'\between r'}
      \quad \frac{p\xrightarrow{e_1}p'\quad r\xrightarrow{e_2}r'}{(p\mid q)+ (p\mid r)\xrightarrow{\gamma(e_1,e_2)}p'\between r'}$$

      So, $p\mid(q+ r) \sim_s (p\mid q) + (p\mid r)$, as desired.
  \item \textbf{Axiom $CE21$}. Let $p,q$ be $APTC$ processes, and $\Theta(p+ q)=\Theta(p)\triangleleft q + \Theta(q)\triangleleft p$, it is sufficient to prove that $\Theta(p+ q) \sim_s \Theta(p)\triangleleft q + \Theta(q)\triangleleft p$. By the transition rules for operators $+$ in Table \ref{STRForBATC}, and $\Theta$ and $\triangleleft$ in Table \ref{TRForConflict}, we get

      $$\frac{p\xrightarrow{e_1}\surd (\sharp(e_1,e_2))}{\Theta(p+ q)\xrightarrow{e_1}\surd}
      \quad\frac{p\xrightarrow{e_1}\surd (\sharp(e_1,e_2))}{\Theta(p)\triangleleft q + \Theta(q)\triangleleft p\xrightarrow{e_1}\surd}$$

      $$\frac{q\xrightarrow{e_2}\surd (\sharp(e_1,e_2))}{\Theta(p+ q)\xrightarrow{e_2}\surd}
      \quad\frac{q\xrightarrow{e_2}\surd (\sharp(e_1,e_2))}{\Theta(p)\triangleleft q + \Theta(q)\triangleleft p\xrightarrow{e_2}\surd}$$

      $$\frac{p\xrightarrow{e_1}p' (\sharp(e_1,e_2))}{\Theta(p+ q)\xrightarrow{e_1}\Theta(p')}
      \quad\frac{p\xrightarrow{e_1}p' (\sharp(e_1,e_2))}{\Theta(p)\triangleleft q + \Theta(q)\triangleleft p\xrightarrow{e_1}\Theta(p')}$$

      $$\frac{q\xrightarrow{e_2}q' (\sharp(e_1,e_2))}{\Theta(p+ q)\xrightarrow{e_2}\Theta(q')}
      \quad\frac{q\xrightarrow{e_2}q' (\sharp(e_1,e_2))}{\Theta(p)\triangleleft q + \Theta(q)\triangleleft p\xrightarrow{e_2}\Theta(q')}$$

      So, $\Theta(p+ q) \sim_s \Theta(p)\triangleleft q + \Theta(q)\triangleleft p$, as desired.
  \item \textbf{Axiom $CE22$}. Let $p,q$ be $APTC$ processes, and $\Theta(p\cdot q)=\Theta(p)\cdot \Theta(q)$, it is sufficient to prove that $\Theta(p\cdot q) \sim_s \Theta(p)\cdot \Theta(q)$. By the transition rules for operators $\cdot$ in Table \ref{STRForBATC}, and $\Theta$ in Table \ref{TRForConflict}, we get

      $$\frac{p\xrightarrow{e_1}\surd}{\Theta(p\cdot q)\xrightarrow{e_1}\Theta(q)}
      \quad\frac{p\xrightarrow{e_1}\surd}{\Theta(p)\cdot\Theta(q)\xrightarrow{e_1}\Theta(q)}$$

      $$\frac{p\xrightarrow{e_1}p'}{\Theta(p\cdot q)\xrightarrow{e_1}\Theta(p'\cdot q)}
      \quad\frac{p\xrightarrow{e_1}p'}{\Theta(p)\cdot\Theta(q)\xrightarrow{e_1}\Theta(p')\cdot\Theta(q)}$$

      So, with the assumption $\Theta(p'\cdot q)=\Theta(p')\cdot\Theta(q)$, $\Theta(p\cdot q) \sim_s \Theta(p)\cdot \Theta(q)$, as desired.
  \item \textbf{Axiom $CE23$}. Let $p,q$ be $APTC$ processes, and $\Theta(p\parallel q)=((\Theta(p)\triangleleft q)\parallel q) + ((\Theta(q)\triangleleft p)\parallel p)$, it is sufficient to prove that $\Theta(p\parallel q) \sim_s ((\Theta(p)\triangleleft q)\parallel q) + ((\Theta(q)\triangleleft p)\parallel p)$. By the transition rules for operators $+$ in Table \ref{STRForBATC}, and $\Theta$ and $\triangleleft$ in Table \ref{TRForConflict}, and $\parallel$ in Table \ref{TRForParallel} we get

      $$\frac{p\xrightarrow{e_1}\surd \quad q\xrightarrow{e_2}\surd}{\Theta(p\parallel q)\xrightarrow{\{e_1,e_2\}}\surd}$$
      $$\frac{p\xrightarrow{e_1}\surd \quad q\xrightarrow{e_2}\surd}{((\Theta(p)\triangleleft q)\parallel q) + ((\Theta(q)\triangleleft p)\parallel p)\xrightarrow{\{e_1,e_2\}}\surd}$$

      $$\frac{p\xrightarrow{e_1}p' \quad q\xrightarrow{e_2}\surd}{\Theta(p\parallel q)\xrightarrow{\{e_1,e_2\}}\Theta(p')}$$
      $$\frac{p\xrightarrow{e_1}p' \quad q\xrightarrow{e_2}\surd}{((\Theta(p)\triangleleft q)\parallel q) + ((\Theta(q)\triangleleft p)\parallel p)\xrightarrow{\{e_1,e_2\}}\Theta(p')}$$

      $$\frac{p\xrightarrow{e_1}\surd \quad q\xrightarrow{e_2}q'}{\Theta(p\parallel q)\xrightarrow{\{e_1,e_2\}}\Theta(q')}$$
      $$\frac{p\xrightarrow{e_1}\surd \quad q\xrightarrow{e_2}q'}{((\Theta(p)\triangleleft q)\parallel q) + ((\Theta(q)\triangleleft p)\parallel p)\xrightarrow{\{e_1,e_2\}}\Theta(q')}$$

      $$\frac{p\xrightarrow{e_1}p' \quad q\xrightarrow{e_2}q'}{\Theta(p\parallel q)\xrightarrow{\{e_1,e_2\}}\Theta(p'\between q')}$$
      $$\frac{p\xrightarrow{e_1}p' \quad q\xrightarrow{e_2}q'}{((\Theta(p)\triangleleft q)\parallel q) + ((\Theta(q)\triangleleft p)\parallel p)\xrightarrow{\{e_1,e_2\}}((\Theta(p')\triangleleft q')\between q') + ((\Theta(q')\triangleleft p')\between p')}$$

      So, with the assumption $\Theta(p'\between q')=((\Theta(p')\triangleleft q')\between q') + ((\Theta(q')\triangleleft p')\between p')$, $\Theta(p\parallel q) \sim_s ((\Theta(p)\triangleleft q)\parallel q) + ((\Theta(q)\triangleleft p)\parallel p)$, as desired.
  \item \textbf{Axiom $CE24$}. Let $p,q$ be $APTC$ processes, and $\Theta(p\mid q)=((\Theta(p)\triangleleft q)\mid q) + ((\Theta(q)\triangleleft p)\mid p)$, it is sufficient to prove that $\Theta(p\mid q) \sim_s ((\Theta(p)\triangleleft q)\mid q) + ((\Theta(q)\triangleleft p)\mid p)$. By the transition rules for operators $+$ in Table \ref{STRForBATC}, and $\Theta$ and $\triangleleft$ in Table \ref{TRForConflict}, and $\mid$ in Table \ref{TRForCommunication} we get

      $$\frac{p\xrightarrow{e_1}\surd \quad q\xrightarrow{e_2}\surd}{\Theta(p\mid q)\xrightarrow{\gamma(e_1,e_2)}\surd}$$
      $$\frac{p\xrightarrow{e_1}\surd \quad q\xrightarrow{e_2}\surd}{((\Theta(p)\triangleleft q)\mid q) + ((\Theta(q)\triangleleft p)\mid p)\xrightarrow{\gamma(e_1,e_2)}\surd}$$

      $$\frac{p\xrightarrow{e_1}p' \quad q\xrightarrow{e_2}\surd}{\Theta(p\mid q)\xrightarrow{\gamma(e_1,e_2)}\Theta(p')}$$
      $$\frac{p\xrightarrow{e_1}p' \quad q\xrightarrow{e_2}\surd}{((\Theta(p)\triangleleft q)\mid q) + ((\Theta(q)\triangleleft p)\mid p)\xrightarrow{\gamma(e_1,e_2)}\Theta(p')}$$

      $$\frac{p\xrightarrow{e_1}\surd \quad q\xrightarrow{e_2}q'}{\Theta(p\mid q)\xrightarrow{\gamma(e_1,e_2)}\Theta(q')}$$
      $$\frac{p\xrightarrow{e_1}\surd \quad q\xrightarrow{e_2}q'}{((\Theta(p)\triangleleft q)\mid q) + ((\Theta(q)\triangleleft p)\mid p)\xrightarrow{\gamma(e_1,e_2)}\Theta(q')}$$

      $$\frac{p\xrightarrow{e_1}p' \quad q\xrightarrow{e_2}q'}{\Theta(p\mid q)\xrightarrow{\gamma(e_1,e_2)}\Theta(p'\between q')}$$
      $$\frac{p\xrightarrow{e_1}p' \quad q\xrightarrow{e_2}q'}{((\Theta(p)\triangleleft q)\mid q) + ((\Theta(q)\triangleleft p)\mid p)\xrightarrow{\gamma(e_1,e_2)}((\Theta(p')\triangleleft q')\between q') + ((\Theta(q')\triangleleft p')\between p')}$$

      So, with the assumption $\Theta(p'\between q')=((\Theta(p')\triangleleft q')\between q') + ((\Theta(q')\triangleleft p')\between p')$, $\Theta(p\mid q) \sim_s ((\Theta(p)\triangleleft q)\mid q) + ((\Theta(q)\triangleleft p)\mid p)$, as desired.
  \item \textbf{Axiom $U30$}. Let $p,q,r$ be $APTC$ processes, and $(p+ q)\triangleleft r = (p\triangleleft r) + (q\triangleleft r)$, it is sufficient to prove that $(p+ q)\triangleleft r \sim_s (p\triangleleft r) + (q\triangleleft r)$. By the transition rules for operators $+$ and $\triangleleft$ in Table \ref{STRForBATC} and \ref{TRForConflict}, we get

      $$\frac{p\xrightarrow{e_1}\surd}{(p+ q)\triangleleft r\xrightarrow{e_1}\surd}
      \quad \frac{p\xrightarrow{e_1}\surd}{(p\triangleleft r)+ (q\triangleleft r)\xrightarrow{e_1}\surd}$$

      $$\frac{q\xrightarrow{e_2}\surd}{(p+ q)\triangleleft r\xrightarrow{e_2}\surd}
      \quad \frac{q\xrightarrow{e_2}\surd}{(p\triangleleft r)+ (q\triangleleft r)\xrightarrow{e_2}\surd}$$

      $$\frac{p\xrightarrow{e_1}p'}{(p+ q)\triangleleft r\xrightarrow{e_1}p'\triangleleft r}
      \quad \frac{p\xrightarrow{e_1}p'}{(p\triangleleft r)+ (q\triangleleft r)\xrightarrow{e_1}p'\triangleleft r}$$

      $$\frac{q\xrightarrow{e_2}q'}{(p+ q)\triangleleft r\xrightarrow{e_2}q'\triangleleft r}
      \quad \frac{q\xrightarrow{e_2}q'}{(p\triangleleft r)+ (q\triangleleft r)\xrightarrow{e_2}q'\triangleleft r}$$

      Let us forget anything about $\tau$. So, $(p+ q)\triangleleft r\sim_s (p\triangleleft r)+ (q\triangleleft r)$, as desired.
  \item \textbf{Axiom $U31$}. Let $p,q,r$ be $APTC$ processes, and $(p\cdot q)\triangleleft r = (p\triangleleft r) \cdot (q\triangleleft r)$, it is sufficient to prove that $(p\cdot q)\triangleleft r \sim_s (p\triangleleft r) \cdot (q\triangleleft r)$. By the transition rules for operators $\cdot$ and $\triangleleft$ in Table \ref{STRForBATC} and \ref{TRForConflict}, we get

      $$\frac{p\xrightarrow{e_1}\surd}{(p\cdot q)\triangleleft r\xrightarrow{e_1}q\triangleleft r}
      \quad \frac{p\xrightarrow{e_1}\surd}{(p\triangleleft r)\cdot (q\triangleleft r)\xrightarrow{e_1}q\triangleleft r}$$

      $$\frac{p\xrightarrow{e_1}p'}{(p\cdot q)\triangleleft r\xrightarrow{e_1}(p'\cdot q)\triangleleft r}
      \quad \frac{p\xrightarrow{e_1}p'}{(p\triangleleft r)\cdot (q\triangleleft r)\xrightarrow{e_1}(p'\triangleleft r)\cdot (q\triangleleft r)}$$

      Let us forget anything about $\tau$. With the assumption $(p'\cdot q)\triangleleft r = (p'\triangleleft r) \cdot (q\triangleleft r)$, so, $(p\cdot q)\triangleleft r\sim_s (p\triangleleft r)\cdot (q\triangleleft r)$, as desired.
  \item \textbf{Axiom $U32$}. Let $p,q,r$ be $APTC$ processes, and $(p\parallel q)\triangleleft r = (p\triangleleft r) \parallel (q\triangleleft r)$, it is sufficient to prove that $(p\parallel q)\triangleleft r \sim_s (p\triangleleft r) \parallel (q\triangleleft r)$. By the transition rules for operators $\parallel$ and $\triangleleft$ in Table \ref{TRForParallel} and \ref{TRForConflict}, we get

      $$\frac{p\xrightarrow{e_1}\surd\quad q\xrightarrow{e_2}\surd}{(p\parallel q)\triangleleft r\xrightarrow{\{e_1,e_2\}}\surd}
      \quad \frac{p\xrightarrow{e_1}\surd\quad q\xrightarrow{e_2}\surd}{(p\triangleleft r)\parallel (q\triangleleft r)\xrightarrow{\{e_1,e_2\}}\surd}$$

      $$\frac{p\xrightarrow{e_1}p'\quad q\xrightarrow{e_2}\surd}{(p\parallel q)\triangleleft r\xrightarrow{\{e_1,e_2\}}p'\triangleleft r}
      \quad \frac{p\xrightarrow{e_1}p'\quad q\xrightarrow{e_2}\surd}{(p\triangleleft r)\parallel (q\triangleleft r)\xrightarrow{\{e_1,e_2\}}p'\triangleleft r}$$

      $$\frac{p\xrightarrow{e_1}\surd\quad q\xrightarrow{e_2}q'}{(p\parallel q)\triangleleft r\xrightarrow{\{e_1,e_2\}}q'\triangleleft r}
      \quad \frac{p\xrightarrow{e_1}\surd\quad q\xrightarrow{e_2}q'}{(p\triangleleft r)\parallel (q\triangleleft r)\xrightarrow{\{e_1,e_2\}}q'\triangleleft r}$$

      $$\frac{p\xrightarrow{e_1}p'\quad q\xrightarrow{e_2}q'}{(p\parallel q)\triangleleft r\xrightarrow{\{e_1,e_2\}}(p'\between q')\triangleleft r}
      \quad \frac{p\xrightarrow{e_1}p'\quad q\xrightarrow{e_2}q'}{(p\triangleleft r)\parallel (q\triangleleft r)\xrightarrow{\{e_1,e_2\}}(p'\triangleleft r)\between (q'\triangleleft r)}$$

      Let us forget anything about $\tau$. With the assumption $(p'\between q')\triangleleft r = (p'\triangleleft r) \between (q'\triangleleft r)$, so, $(p\parallel q)\triangleleft r\sim_s (p\triangleleft r)\parallel (q\triangleleft r)$, as desired.
  \item \textbf{Axiom $U33$}. Let $p,q,r$ be $APTC$ processes, and $(p\mid q)\triangleleft r = (p\triangleleft r) \mid (q\triangleleft r)$, it is sufficient to prove that $(p\mid q)\triangleleft r \sim_s (p\triangleleft r) \mid (q\triangleleft r)$. By the transition rules for operators $\mid$ and $\triangleleft$ in Table \ref{TRForCommunication} and \ref{TRForConflict}, we get

      $$\frac{p\xrightarrow{e_1}\surd\quad q\xrightarrow{e_2}\surd}{(p\mid q)\triangleleft r\xrightarrow{\gamma(e_1,e_2)}\surd}
      \quad \frac{p\xrightarrow{e_1}\surd\quad q\xrightarrow{e_2}\surd}{(p\triangleleft r)\mid (q\triangleleft r)\xrightarrow{\gamma(e_1,e_2)}\surd}$$

      $$\frac{p\xrightarrow{e_1}p'\quad q\xrightarrow{e_2}\surd}{(p\mid q)\triangleleft r\xrightarrow{\gamma(e_1,e_2)}p'\triangleleft r}
      \quad \frac{p\xrightarrow{e_1}p'\quad q\xrightarrow{e_2}\surd}{(p\triangleleft r)\mid (q\triangleleft r)\xrightarrow{\gamma(e_1,e_2)}p'\triangleleft r}$$

      $$\frac{p\xrightarrow{e_1}\surd\quad q\xrightarrow{e_2}q'}{(p\mid q)\triangleleft r\xrightarrow{\gamma(e_1,e_2)}q'\triangleleft r}
      \quad \frac{p\xrightarrow{e_1}\surd\quad q\xrightarrow{e_2}q'}{(p\triangleleft r)\mid (q\triangleleft r)\xrightarrow{\gamma(e_1,e_2)}q'\triangleleft r}$$

      $$\frac{p\xrightarrow{e_1}p'\quad q\xrightarrow{e_2}q'}{(p\mid q)\triangleleft r\xrightarrow{\gamma(e_1,e_2)}(p'\between q')\triangleleft r}
      \quad \frac{p\xrightarrow{e_1}p'\quad q\xrightarrow{e_2}q'}{(p\triangleleft r)\mid (q\triangleleft r)\xrightarrow{\gamma(e_1,e_2)}(p'\triangleleft r)\between (q'\triangleleft r)}$$

      Let us forget anything about $\tau$. With the assumption $(p'\between q')\triangleleft r = (p'\triangleleft r) \between (q'\triangleleft r)$, so, $(p\mid q)\triangleleft r\sim_s (p\triangleleft r)\mid (q\triangleleft r)$, as desired.
  \item \textbf{Axiom $U34$}. Let $p,q,r$ be $APTC$ processes, and $p\triangleleft (q+ r) = (p\triangleleft q)\triangleleft r$, it is sufficient to prove that $p\triangleleft (q+ r) \sim_s (p\triangleleft q)\triangleleft r$. By the transition rules for operators $+$ and $\triangleleft$ in Table \ref{STRForBATC} and \ref{TRForConflict}, we get

      $$\frac{p\xrightarrow{e_1}\surd}{p\triangleleft (q+ r)\xrightarrow{e_1}\surd}
      \quad \frac{p\xrightarrow{e_1}\surd}{(p\triangleleft q)\triangleleft r\xrightarrow{e_1}\surd}$$

      $$\frac{p\xrightarrow{e_1}p'}{p\triangleleft (q+ r)\xrightarrow{e_1}p'\triangleleft (q+ r)}
      \quad \frac{p\xrightarrow{e_1}p'}{(p\triangleleft q)\triangleleft r\xrightarrow{e_1}(p'\triangleleft q)\triangleleft r}$$

      Let us forget anything about $\tau$. With the assumption $p'\triangleleft (q+ r) = (p'\triangleleft q)\triangleleft r$, so, $p\triangleleft (q+ r) \sim_s (p\triangleleft q)\triangleleft r$, as desired.
  \item \textbf{Axiom $U35$}. Let $p,q,r$ be $APTC$ processes, and $p\triangleleft (q\cdot r) = (p\triangleleft q)\triangleleft r$, it is sufficient to prove that $p\triangleleft (q\cdot r) \sim_s (p\triangleleft q)\triangleleft r$. By the transition rules for operators $\cdot$ and $\triangleleft$ in Table \ref{STRForBATC} and \ref{TRForConflict}, we get

      $$\frac{p\xrightarrow{e_1}\surd}{p\triangleleft (q\cdot r)\xrightarrow{e_1}\surd}
      \quad \frac{p\xrightarrow{e_1}\surd}{(p\triangleleft q)\triangleleft r\xrightarrow{e_1}\surd}$$

      $$\frac{p\xrightarrow{e_1}p'}{p\triangleleft (q\cdot r)\xrightarrow{e_1}p'\triangleleft (q\cdot r)}
      \quad \frac{p\xrightarrow{e_1}p'}{(p\triangleleft q)\triangleleft r\xrightarrow{e_1}(p'\triangleleft q)\triangleleft r}$$

      Let us forget anything about $\tau$. With the assumption $p'\triangleleft (q\cdot r) = (p'\triangleleft q)\triangleleft r$, so, $p\triangleleft (q\cdot r) \sim_s (p\triangleleft q)\triangleleft r$, as desired.
  \item \textbf{Axiom $U36$}. Let $p,q,r$ be $APTC$ processes, and $p\triangleleft (q\parallel r) = (p\triangleleft q)\triangleleft r$, it is sufficient to prove that $p\triangleleft (q\parallel r) \sim_s (p\triangleleft q)\triangleleft r$. By the transition rules for operators $\parallel$ and $\triangleleft$ in Table \ref{TRForParallel} and \ref{TRForConflict}, we get

      $$\frac{p\xrightarrow{e_1}\surd}{p\triangleleft (q\parallel r)\xrightarrow{e_1}\surd}
      \quad \frac{p\xrightarrow{e_1}\surd}{(p\triangleleft q)\triangleleft r\xrightarrow{e_1}\surd}$$

      $$\frac{p\xrightarrow{e_1}p'}{p\triangleleft (q\parallel r)\xrightarrow{e_1}p'\triangleleft (q\parallel r)}
      \quad \frac{p\xrightarrow{e_1}p'}{(p\triangleleft q)\triangleleft r\xrightarrow{e_1}(p'\triangleleft q)\triangleleft r}$$

      Let us forget anything about $\tau$. With the assumption $p'\triangleleft (q\parallel r) = (p'\triangleleft q)\triangleleft r$, so, $p\triangleleft (q\parallel r) \sim_s (p\triangleleft q)\triangleleft r$, as desired.
  \item \textbf{Axiom $U37$}. Let $p,q,r$ be $APTC$ processes, and $p\triangleleft (q\mid r) = (p\triangleleft q)\triangleleft r$, it is sufficient to prove that $p\triangleleft (q\mid r) \sim_s (p\triangleleft q)\triangleleft r$. By the transition rules for operators $\mid$ and $\triangleleft$ in Table \ref{TRForCommunication} and \ref{TRForConflict}, we get

      $$\frac{p\xrightarrow{e_1}\surd}{p\triangleleft (q\mid r)\xrightarrow{e_1}\surd}
      \quad \frac{p\xrightarrow{e_1}\surd}{(p\triangleleft q)\triangleleft r\xrightarrow{e_1}\surd}$$

      $$\frac{p\xrightarrow{e_1}p'}{p\triangleleft (q\mid r)\xrightarrow{e_1}p'\triangleleft (q\mid r)}
      \quad \frac{p\xrightarrow{e_1}p'}{(p\triangleleft q)\triangleleft r\xrightarrow{e_1}(p'\triangleleft q)\triangleleft r}$$

      Let us forget anything about $\tau$. With the assumption $p'\triangleleft (q\mid r) = (p'\triangleleft q)\triangleleft r$, so, $p\triangleleft (q\mid r) \sim_s (p\triangleleft q)\triangleleft r$, as desired.
\end{itemize}
\end{proof}

\begin{theorem}[Completeness of parallelism modulo step bisimulation equivalence]\label{CPSBE}
Let $p$ and $q$ be closed $APTC$ terms, if $p\sim_{s} q$ then $p=q$.
\end{theorem}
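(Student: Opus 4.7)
The plan is to mirror the structure of the completeness proofs already given for $BATC$ (Theorems \ref{CBATCPBE}, \ref{CBATCSBE}), now augmented to account for the parallel operator $\parallel$ that survives in basic $APTC$ terms. First I would invoke the elimination theorem for parallelism (Theorem \ref{ETParallelism}) to reduce the problem to closed basic $APTC$ terms in the sense of Definition \ref{BTAPTC}; every closed $APTC$ term is provably equal to such a basic term, so it suffices to prove that whenever two closed basic $APTC$ terms are step bisimilar, they are provably equal.

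Next, as in the $BATC$ case, I would factor out associativity and commutativity of $+$ (axioms $A1$, $A2$), writing each basic term in a normal form $s_1 + \cdots + s_k$ whose summands are of one of three shapes: an atomic event $e$, a sequential composition $e \cdot t$, or a parallel composition $t_1 \parallel t_2$ (the last shape being the new ingredient beyond $BATC$). I would denote the induced equivalence by $=_{AC}$. The heart of the argument is then the claim that for normal forms $n,n'$, if $n \sim_s n'$ then $n =_{AC} n'$, proved by induction on the combined size of $n$ and $n'$.

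The inductive step splits into cases on the shape of a summand of $n$. For a summand $e$ we have $n \xrightarrow{\{e\}} \surd$, forcing $n'$ to have a matching single-event step and hence the summand $e$; the symmetric argument applies to $n'$. For a summand $e \cdot t$ we have $n \xrightarrow{\{e\}} t$, so $n' \xrightarrow{\{e\}} t'$ with $t \sim_s t'$, and the induction hypothesis yields $t =_{AC} t'$, so $n'$ contains a summand $e \cdot t'$ with $e \cdot t =_{AC} e \cdot t'$. The crucial new case is a summand $t_1 \parallel t_2$: here the step transition rules in Table \ref{TRForParallel}, together with the fact that step bisimulation $\sim_s$ tracks sets of pairwise concurrent events, let me extract two concurrent matching transitions from $n'$ and, after peeling off the outermost parallel step, recover subterms whose continuations are step-bisimilar to those of $t_1 \parallel t_2$. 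Conclude via axioms $P2$--$P8$ together with the induction hypothesis, exactly as the sequential case uses $A4$, $A5$.

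The main obstacle will be the parallel-summand case: unlike in $BATC$ the labels of transitions are multisets of concurrent events, and a single summand $t_1 \parallel t_2$ of $n$ does not correspond to a syntactically unique summand of $n'$, since parallel composition interacts with the other operators through the expansion-style axioms $P4$--$P8$. To handle this cleanly I would (i) show by structural induction on basic normal forms that the maximal step from a summand $t_1 \parallel t_2$ always arises by synchronizing a maximal step of $t_1$ with a maximal step of $t_2$, (ii) use this to match cardinalities of concurrent event-sets on both sides, and (iii) reconstruct the parallel structure on the $n'$-side up to $=_{AC}$. Once the lemma is established, the theorem follows by the usual two-step closure: given closed basic $p \sim_s q$, take normal forms $n,n'$ with $p=n$ and $q=n'$ by the axioms; soundness (Theorem \ref{SPSBE}) gives $n \sim_s p \sim_s q \sim_s n'$, the lemma yields $n =_{AC} n'$, and hence $p = n = n' = q$, as desired.
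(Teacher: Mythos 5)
Your overall skeleton (eliminate to basic terms, normalize modulo AC, induct on normal forms, close with soundness) is exactly the paper's, but your choice of normal form creates a genuine gap: with summands restricted to the shapes $e$, $e\cdot t$ and $t_1\parallel t_2$, the key lemma ``if $n\sim_s n'$ then $n=_{AC}n'$'' is false. Concretely, $(a\cdot c)\parallel b$ and $a\parallel(b\cdot c)$ are both normal forms in your sense, both perform the step $\{a,b\}$ followed by $c$, hence are step bisimilar; yet no amount of associativity/commutativity of $+$ and $\parallel$ identifies them --- one needs the axioms $P4$ and $P5$ themselves, i.e.\ genuine proof steps beyond $=_{AC}$. (Relatedly, the term $(a\parallel b)\cdot c$, which both of these are provably equal to, fits none of your three shapes, so your normal form is not even exhaustive for basic $APTC$ terms.) Your proposed repair --- a sub-lemma decomposing maximal steps of $t_1\parallel t_2$ and a reconstruction ``up to $=_{AC}$'' --- cannot succeed as stated for the same reason: the two sides of the counterexample distribute the causal structure across the parallel components differently, and only the expansion axioms can realign them.

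The paper closes this gap at the normalization stage rather than in the matching argument. Its normal form, obtained by also applying $P2$--$P8$ (as the rewrite rules $RP2$--$RP8$ of Table \ref{TRSForAPTC}), is $s_1+\cdots+s_k$ where each $s_i$ is an atomic event or $t_1\cdot{\cdots}\cdot t_m$ with each $t_j$ either an atomic event or $u_1\parallel\cdots\parallel u_n$ for \emph{atomic} $u_l$. In other words, $\parallel$ survives only as a flat multiset of atomic events, which is precisely a step label $\{e_1,\cdots,e_n\}$. Both terms in the counterexample then normalize to the same term $(a\parallel b)\cdot c$. With this stronger normal form your ``crucial new case'' disappears: a summand is either $e$ or $(e_1\parallel\cdots\parallel e_n)\cdot t_2$ (up to the degenerate case without continuation), the induction proceeds exactly as in Theorem \ref{CBATCSBE} with single events replaced by sets of pairwise concurrent events, and $=_{AC}$ (now AC of both $+$ and $\parallel$) suffices for the conclusion. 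You should therefore strengthen your normalization to push $\parallel$ down to atomic events before starting the induction; the rest of your argument then goes through.
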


\begin{proof}
Firstly, by the elimination theorem of $APTC$ (see Theorem \ref{ETParallelism}), we know that for each closed $APTC$ term $p$, there exists a closed basic $APTC$ term $p'$, such that $APTC\vdash p=p'$, so, we only need to consider closed basic $APTC$ terms.

The basic terms (see Definition \ref{BTAPTC}) modulo associativity and commutativity (AC) of conflict $+$ (defined by axioms $A1$ and $A2$ in Table \ref{AxiomsForBATC}) and associativity and commutativity (AC) of parallel $\parallel$ (defined by axioms $P2$ and $P3$ in Table \ref{AxiomsForParallelism}), and these equivalences is denoted by $=_{AC}$. Then, each equivalence class $s$ modulo AC of $+$ and $\parallel$ has the following normal form

$$s_1+\cdots+ s_k$$

with each $s_i$ either an atomic event or of the form

$$t_1\cdot\cdots\cdot t_m$$

with each $t_j$ either an atomic event or of the form

$$u_1\parallel\cdots\parallel u_l$$

with each $u_l$ an atomic event, and each $s_i$ is called the summand of $s$.

Now, we prove that for normal forms $n$ and $n'$, if $n\sim_{s} n'$ then $n=_{AC}n'$. It is sufficient to induct on the sizes of $n$ and $n'$.

\begin{itemize}
  \item Consider a summand $e$ of $n$. Then $n\xrightarrow{e}\surd$, so $n\sim_s n'$ implies $n'\xrightarrow{e}\surd$, meaning that $n'$ also contains the summand $e$.
  \item Consider a summand $t_1\cdot t_2$ of $n$,
  \begin{itemize}
    \item if $t_1\equiv e'$, then $n\xrightarrow{e'}t_2$, so $n\sim_s n'$ implies $n'\xrightarrow{e'}t_2'$ with $t_2\sim_s t_2'$, meaning that $n'$ contains a summand $e'\cdot t_2'$. Since $t_2$ and $t_2'$ are normal forms and have sizes smaller than $n$ and $n'$, by the induction hypotheses if $t_2\sim_s t_2'$ then $t_2=_{AC} t_2'$;
    \item if $t_1\equiv e_1\parallel\cdots\parallel e_l$, then $n\xrightarrow{\{e_1,\cdots,e_l\}}t_2$, so $n\sim_s n'$ implies $n'\xrightarrow{\{e_1,\cdots,e_l\}}t_2'$ with $t_2\sim_s t_2'$, meaning that $n'$ contains a summand $(e_1\parallel\cdots\parallel e_l)\cdot t_2'$. Since $t_2$ and $t_2'$ are normal forms and have sizes smaller than $n$ and $n'$, by the induction hypotheses if $t_2\sim_s t_2'$ then $t_2=_{AC} t_2'$.
  \end{itemize}
\end{itemize}

So, we get $n=_{AC} n'$.

Finally, let $s$ and $t$ be basic $APTC$ terms, and $s\sim_s t$, there are normal forms $n$ and $n'$, such that $s=n$ and $t=n'$. The soundness theorem of parallelism modulo step bisimulation equivalence (see Theorem \ref{SPSBE}) yields $s\sim_s n$ and $t\sim_s n'$, so $n\sim_s s\sim_s t\sim_s n'$. Since if $n\sim_s n'$ then $n=_{AC}n'$, $s=n=_{AC}n'=t$, as desired.
\end{proof}

\begin{theorem}[Soundness of parallelism modulo pomset bisimulation equivalence]\label{SPPBE}
Let $x$ and $y$ be $APTC$ terms. If $APTC\vdash x=y$, then $x\sim_{p} y$.
\end{theorem}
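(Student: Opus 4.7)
The plan is to mirror the structure of Theorem \ref{SPSBE} (soundness modulo step bisimulation) exactly, swapping step transitions for pomset transitions throughout. First I would invoke the congruence theorem for $\sim_p$ with respect to $APTC$, which reduces the task to checking each axiom of Table \ref{AxiomsForParallelism} individually. As in the step case, I would dispatch the defining axioms ($P1$, $C11$, $CE19$, $CE20$, $U25$–$U27$) and the trivial $\delta$-axioms ($A6$, $A7$, $P9$, $P10$, $C17$, $C18$, $U28$, $U29$) as immediate, deferring the $\tau$-related cases $U25$ and $U27$ to Section \ref{abs}. The remaining non-trivial axioms $P2$–$P8$, $C12$–$C16$, $CE21$–$CE24$, and $U30$–$U37$ constitute the bulk of the work.

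For each such axiom, I would lift the transition rules of Tables \ref{TRForParallel}, \ref{TRForCommunication}, and \ref{TRForConflict} from single events to pomsets $X,Y\subseteq \mathbb{E}$ in the natural way. The key observation, which makes the step-case proof transport almost verbatim, is that the parallel composition rule for $\parallel$ and the communication rule for $\mid$ impose \emph{no} pairwise-concurrency side condition; the only difference from the step formulation is that we drop the clauses ``$\forall e_1,e_2\in X$ are pairwise concurrent''. Consequently, for each axiom I would display the pomset-labelled derivations of the left- and right-hand sides side-by-side, observing that the two sides produce matching transition sets labelled by pomsets $X_1,X_2,\ldots$ (or by $\gamma$-images thereof in the communication cases), and conclude by the defining clause of $\sim_p$ (Definition \ref{PSB}). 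For axioms of the form ``LHS $\sim_p$ RHS'' whose residuals are themselves built by the operators in question (\eg, $P3$, $P6$, $C14$, $CE22$, $CE23$, $CE24$, $U31$–$U37$), I would invoke an induction hypothesis on the residuals exactly as done in the step-bisimulation proof.

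The one genuinely substantive point, rather than a routine transcription, will be verifying the pomset soundness of the conflict-elimination axioms $CE23$ and $CE24$ and the distribution axioms $U32$, $U33$, whose derivations in the step case already used the assumption $\Theta(p'\between q') = ((\Theta(p')\triangleleft q')\between q')+((\Theta(q')\triangleleft p')\between p')$ and $(p'\between q')\triangleleft r = (p'\triangleleft r)\between(q'\triangleleft r)$ on residuals. I would check that these inductive assumptions remain legitimate under $\sim_p$, which follows because $\between$ is definable from $\parallel$ and $\mid$ via $P1$ and the induction is on the structural size of the closed term, not on the flavour of transition label.

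The main obstacle I anticipate is purely bookkeeping: the proof is long and each axiom must be treated individually with its full set of pomset-labelled transition derivations. There is no conceptual novelty beyond Theorem \ref{SPSBE} because pomset bisimulation differs from step bisimulation only in dropping the pairwise-concurrency restriction on labels, and none of the $APTC$ transition rules (for $\parallel$, $\mid$, $\Theta$, $\triangleleft$) impose or rely on that restriction in their premises. Thus I expect the whole proof to reduce to a case-by-case rewriting of the step-bisimulation proof with ``$\forall e_1,e_2\in X$ are pairwise concurrent'' stripped from every side condition.
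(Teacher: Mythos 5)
Your overall skeleton matches the paper's: reduce to an axiom-by-axiom check via the congruence theorem, and lean on Theorem \ref{SPSBE} for the bulk of the work. Where you diverge is on the one point at which this theorem actually contains something not already in Theorem \ref{SPSBE}, and there your rationale has a gap. You assert that pomset bisimulation ``differs from step bisimulation only in dropping the pairwise-concurrency restriction on labels,'' so that the step proof transports once the side conditions ``$\forall e_1,e_2\in X$ are pairwise concurrent'' are stripped. But the transitions removed by that restriction --- pomset transitions whose labels carry non-trivial causal order, e.g.\ $P=\{e_1,e_2:e_1\cdot e_2\}$ --- are exactly the transitions \emph{not} covered by Theorem \ref{SPSBE}, i.e.\ they are the entire new content of the present theorem. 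Stripping a side condition does not by itself explain how the $APTC$ transition rules generate such ordered pomsets, nor why the two sides of each axiom produce matching ones under the order-isomorphism requirement $X_1\sim X_2$ of Definition \ref{PSB}. Without an argument for that case, your plan proves nothing beyond the step case.

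The paper closes this gap with a short decomposition argument: a pomset transition whose label has causal order factors as a sequential composition of transitions, $\xrightarrow{P}=\xrightarrow{e_1}\xrightarrow{e_2}$ for $P=\{e_1,e_2:e_1\cdot e_2\}$, which reduces the causally-ordered case to iterated applications of the single-event/step case already proven in Theorem \ref{SPSBE}. You should either adopt that decomposition, or, if you insist on lifting the rules to arbitrary pomset labels directly, actually state the lifted rules and verify for each axiom that both sides admit the same ordered pomsets --- which is precisely the content your phrase ``in the natural way'' currently hides. The rest of your plan (handling of defining and $\delta$-axioms, the inductive assumptions on residuals for $P3$, $P6$, $C14$, $CE22$--$CE24$, $U31$--$U37$) agrees with the paper.
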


\begin{proof}
Since pomset bisimulation $\sim_{p}$ is both an equivalent and a congruent relation with respect to the operators $\between$, $\parallel$, $\mid$, $\Theta$ and $\triangleleft$, we only need to check if each axiom in Table \ref{AxiomsForParallelism} is sound modulo pomset bisimulation equivalence.

From the definition of pomset bisimulation (see Definition \ref{PSB}), we know that pomset bisimulation is defined by pomset transitions, which are labeled by pomsets. In a pomset transition, the events in the pomset are either within causality relations (defined by $\cdot$) or in concurrency (implicitly defined by $\cdot$ and $+$, and explicitly defined by $\between$), of course, they are pairwise consistent (without conflicts). In Theorem \ref{SPSBE}, we have already proven the case that all events are pairwise concurrent, so, we only need to prove the case of events in causality. Without loss of generality, we take a pomset of $P=\{e_1,e_2:e_1\cdot e_2\}$. Then the pomset transition labeled by the above $P$ is just composed of one single event transition labeled by $e_1$ succeeded by another single event transition labeled by $e_2$, that is, $\xrightarrow{P}=\xrightarrow{e_1}\xrightarrow{e_2}$.

Similarly to the proof of soundness of parallelism modulo step bisimulation equivalence (see Theorem \ref{SPSBE}), we can prove that each axiom in Table \ref{AxiomsForParallelism} is sound modulo pomset bisimulation equivalence, we omit them.
\end{proof}

\begin{theorem}[Completeness of parallelism modulo pomset bisimulation equivalence]\label{CPPBE}
Let $p$ and $q$ be closed $APTC$ terms, if $p\sim_{p} q$ then $p=q$.
\end{theorem}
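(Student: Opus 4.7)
The plan is to mirror the three-step structure of Theorem \ref{CPSBE} (completeness modulo step bisimulation), replacing step transitions with pomset transitions at the points where distinguishing power is used. First, I would invoke the elimination theorem for $APTC$ (Theorem \ref{ETParallelism}) to reduce the problem to closed basic $APTC$ terms. Then I would define normal forms exactly as in the step case: quotient basic terms by associativity and commutativity of $+$ (axioms $A1, A2$) and of $\parallel$ (axioms $P2, P3$), writing $=_{AC}$ for the resulting equivalence, so that every equivalence class has a representative of the form $s_1+\cdots+ s_k$ with each summand $s_i$ either an atomic event, or a concatenation $t_1\cdot\cdots\cdot t_m$ whose factors $t_j$ are either atomic events or parallel composites $u_1\parallel\cdots\parallel u_n$ of atomic events.

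Next, I would prove by induction on the combined sizes of two normal forms $n$ and $n'$ that $n\sim_p n'$ implies $n=_{AC}n'$. For a summand that is a single atomic event $e$, the transition $n\xrightarrow{e}\surd$ is also a pomset transition (the singleton pomset), so $n\sim_p n'$ forces $n'$ to have the same summand. For a summand $t_1\cdot t_2$, I would split on the shape of $t_1$: if $t_1$ is an atomic event, the single event transition $n\xrightarrow{t_1}t_2$ is also a pomset transition and the argument proceeds as in Theorem \ref{CPSBE}; if $t_1\equiv e_1\parallel\cdots\parallel e_n$, I would use the pomset transition $n\xrightarrow{\{e_1,\ldots,e_n\}}t_2$ where the pomset is the discrete one (events pairwise concurrent). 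In both cases, $n\sim_p n'$ yields a matching pomset transition from $n'$ to some $t_2'$ with $t_2\sim_p t_2'$ and a corresponding summand $t_1\cdot t_2'$ in $n'$; the induction hypothesis applied to the smaller $t_2, t_2'$ gives $t_2=_{AC}t_2'$. By symmetry $n=_{AC}n'$.

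Finally, for arbitrary basic terms $s, t$ with $s\sim_p t$, pick normal forms $n, n'$ with $s=n$ and $t=n'$ in $APTC$; the soundness theorem for pomset bisimulation (Theorem \ref{SPPBE}) gives $s\sim_p n$ and $t\sim_p n'$, so $n\sim_p n'$, hence $n=_{AC}n'$ by the previous step, and therefore $s=n=_{AC}n'=t$.

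The main obstacle I anticipate is the matching step in the inductive argument: a priori, a pomset transition of $n$ labelled by some pomset $P$ need not come from a single summand of $n$ in an obvious way, because pomset bisimulation allows richer labels than step bisimulation and than the discrete pomsets above. However, the structural observation already used in the soundness proof (Theorem \ref{SPPBE})---that any pomset transition factors into a sequence of single-event transitions together with parallel-composite transitions dictated by the basic-term shape---ensures that from a normal form only two kinds of pomset transitions emerge, namely those enabled by an atomic leading summand and those enabled by a parallel leading composite. This is what makes the case analysis above exhaustive, and reduces the pomset argument to essentially the step argument, so no genuinely new combinatorial difficulty arises beyond what was already handled in Theorem \ref{CPSBE}.
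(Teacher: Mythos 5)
Your proposal is correct and follows essentially the same route as the paper's own proof: elimination to basic terms, normal forms modulo AC of $+$ and $\parallel$, induction on the sizes of normal forms with the same case split on summands ($e$, $e'\cdot t_2$, $(e_1\parallel\cdots\parallel e_n)\cdot t_2$), and the final appeal to soundness (Theorem \ref{SPPBE}). Your closing remark about why the case analysis on pomset transitions from a normal form is exhaustive is a useful elaboration that the paper leaves implicit, but it does not change the argument.
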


\begin{proof}
Firstly, by the elimination theorem of $APTC$ (see Theorem \ref{ETParallelism}), we know that for each closed $APTC$ term $p$, there exists a closed basic $APTC$ term $p'$, such that $APTC\vdash p=p'$, so, we only need to consider closed basic $APTC$ terms.

The basic terms (see Definition \ref{BTAPTC}) modulo associativity and commutativity (AC) of conflict $+$ (defined by axioms $A1$ and $A2$ in Table \ref{AxiomsForBATC}) and associativity and commutativity (AC) of parallel $\parallel$ (defined by axioms $P2$ and $P3$ in Table \ref{AxiomsForParallelism}), and these equivalences is denoted by $=_{AC}$. Then, each equivalence class $s$ modulo AC of $+$ and $\parallel$ has the following normal form

$$s_1+\cdots+ s_k$$

with each $s_i$ either an atomic event or of the form

$$t_1\cdot\cdots\cdot t_m$$

with each $t_j$ either an atomic event or of the form

$$u_1\parallel\cdots\parallel u_l$$

with each $u_l$ an atomic event, and each $s_i$ is called the summand of $s$.

Now, we prove that for normal forms $n$ and $n'$, if $n\sim_{p} n'$ then $n=_{AC}n'$. It is sufficient to induct on the sizes of $n$ and $n'$.

\begin{itemize}
  \item Consider a summand $e$ of $n$. Then $n\xrightarrow{e}\surd$, so $n\sim_p n'$ implies $n'\xrightarrow{e}\surd$, meaning that $n'$ also contains the summand $e$.
  \item Consider a summand $t_1\cdot t_2$ of $n$,
  \begin{itemize}
    \item if $t_1\equiv e'$, then $n\xrightarrow{e'}t_2$, so $n\sim_p n'$ implies $n'\xrightarrow{e'}t_2'$ with $t_2\sim_p t_2'$, meaning that $n'$ contains a summand $e'\cdot t_2'$. Since $t_2$ and $t_2'$ are normal forms and have sizes smaller than $n$ and $n'$, by the induction hypotheses if $t_2\sim_p t_2'$ then $t_2=_{AC} t_2'$;
    \item if $t_1\equiv e_1\parallel\cdots\parallel e_l$, then $n\xrightarrow{\{e_1,\cdots,e_l\}}t_2$, so $n\sim_p n'$ implies $n'\xrightarrow{\{e_1,\cdots,e_l\}}t_2'$ with $t_2\sim_p t_2'$, meaning that $n'$ contains a summand $(e_1\parallel\cdots\parallel e_l)\cdot t_2'$. Since $t_2$ and $t_2'$ are normal forms and have sizes smaller than $n$ and $n'$, by the induction hypotheses if $t_2\sim_p t_2'$ then $t_2=_{AC} t_2'$.
  \end{itemize}
\end{itemize}

So, we get $n=_{AC} n'$.

Finally, let $s$ and $t$ be basic $APTC$ terms, and $s\sim_p t$, there are normal forms $n$ and $n'$, such that $s=n$ and $t=n'$. The soundness theorem of parallelism modulo pomset bisimulation equivalence (see Theorem \ref{SPPBE}) yields $s\sim_p n$ and $t\sim_p n'$, so $n\sim_p s\sim_p t\sim_p n'$. Since if $n\sim_p n'$ then $n=_{AC}n'$, $s=n=_{AC}n'=t$, as desired.
\end{proof}

\begin{theorem}[Soundness of parallelism modulo hp-bisimulation equivalence]\label{SPHPBE}
Let $x$ and $y$ be $APTC$ terms. If $APTC\vdash x=y$, then $x\sim_{hp} y$.
\end{theorem}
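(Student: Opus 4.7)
The plan is to mirror the structure of Theorems~\ref{SBATCHPBE} and \ref{SPSBE}. First I would invoke the Congruence theorem of $APTC$ proved just above to conclude that $\sim_{hp}$ is both an equivalence and a congruence with respect to $\cdot,+,\between,\parallel,\mid,\Theta,\triangleleft$, so it is enough to check that every axiom in Table~\ref{AxiomsForParallelism} is sound modulo $\sim_{hp}$. The defining axioms ($P1, C11, CE19, CE20, U25$--$U27$) and the trivial $\delta$-axioms ($A6, A7, P9, P10, C17, C18, U28, U29$) can be dispensed with by direct inspection of the transition rules (noting that $U25$ and $U27$ really belong to the $\tau$-discussion of Section~\ref{abs} and can be deferred there). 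The core task is thus to verify $P2$--$P8$, $C12$--$C16$, $CE21$--$CE24$ and $U30$--$U37$.

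For each such axiom $s = t$ I would exhibit an explicit posetal relation witnessing $s \sim_{hp} t$: start from the empty configuration pair $(C(s),f,C(t)) = (\emptyset,\emptyset,\emptyset)$ and, using the single-event hp-transition rules of Tables~\ref{HHPTRForBATC}, \ref{TRForParallel}, \ref{TRForCommunication} and \ref{TRForConflict}, match each transition $s \xrightarrow{e_1} s'$ of the left-hand side with a transition $t \xrightarrow{e_2} t'$ of the right-hand side (and symmetrically) so that $(C(s'),f[e_1\mapsto e_2],C(t')) \in \sim_{hp}$. For the purely algebraic axioms $P2,P3,P7,P8$ (commutativity, associativity, distributivity of $\parallel$ over $+$) the matching is direct and the bijection $f$ is the identity on the events fired. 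For $P4,P5,P6$ one uses the synchronous rule of $\parallel$ in Table~\ref{TRForParallel}, matching the single step $\{e_1,e_2\}$ of one side against the corresponding step on the other side; here the witnessing $f$ sends $e_i\mapsto e_i$, and the remaining configuration equivalence is reduced by coinduction to the assumption $x_1'\between x_2'\sim_{hp} y_1'\between y_2'$ etc.\ exactly as in Theorem~\ref{SPSBE}. The communication axioms $C12$--$C16$ are analogous with $\gamma(e_1,e_2)$ in place of $\{e_1,e_2\}$, and again $f$ sends the communication event to itself. The conflict axioms $CE21$--$CE24$ and unless axioms $U30$--$U37$ are handled in the same spirit as in Theorems~\ref{SPSBE}, with $f$ being the identity on the inherited events and the residual equivalence supplied by the induction hypothesis.

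The main obstacle is twofold. First, unlike the step case, hp-bisimulation records the causality order between events via the isomorphism $f$ on configurations; one must check, e.g.\ for $P6$, that the single parallel step $\{e_1,e_2\}$ on the left and the single parallel step on the right produce posetally isomorphic configurations (which they do, since each side fires $e_1$ and $e_2$ as concurrent events), and that the continuation $x\between y$ on both sides gives, by coinduction, matching extensions $f[e_1\mapsto e_1,e_2\mapsto e_2]$. Second, the axioms $CE23,CE24$ and $U32,U33$ for $\Theta$ and $\triangleleft$ interact nontrivially with $\parallel$ and $\mid$, and the transition rules of Table~\ref{TRForConflict} introduce the silent $\tau$ in some premises; one has to argue, as the author already does in the step-bisimulation case, that ignoring $\tau$ at this stage is harmless for the soundness claim and that the posetal relation built in each of the four transition cases of $\Theta(x\parallel y)$ (both $x$ and $y$ terminate, only one terminates, or both continue) agrees with the one built for $((\Theta(x)\triangleleft y)\parallel y) + ((\Theta(y)\triangleleft x)\parallel x)$.

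Finally, after collecting the axiom-by-axiom checks, the theorem follows by induction on the length of the derivation of $APTC\vdash x=y$, using congruence at each inductive step to lift the axiom-level soundness to arbitrary $APTC$ terms. I expect the proof to parallel Theorem~\ref{SPSBE} almost line by line, with the only real conceptual additions being the explicit construction of the map $f$ on configurations and the verification that it is a pomset isomorphism in each case.
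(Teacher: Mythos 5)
Your proposal follows essentially the same route as the paper's proof: reduce via the congruence theorem to an axiom-by-axiom check of Table~\ref{AxiomsForParallelism}, discard the defining and $\delta$-axioms, and for each remaining axiom build the posetal relation by matching single transitions and extending $f$ with $e\mapsto e$, deferring the residual equivalences to the coinductive/inductive hypothesis exactly as in Theorems~\ref{SPSBE} and \ref{SPPBE}. The paper itself states only the additional condition on $f$ and omits the per-axiom details, so your more explicit attention to the pomset-isomorphism of the resulting configurations and to the $\tau$-premises of $\Theta$ and $\triangleleft$ is a faithful elaboration rather than a different argument.
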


\begin{proof}
Since hp-bisimulation $\sim_{hp}$ is both an equivalent and a congruent relation with respect to the operators $\between$, $\parallel$, $\mid$, $\Theta$ and $\triangleleft$, we only need to check if each axiom in Table \ref{AxiomsForParallelism} is sound modulo hp-bisimulation equivalence.

From the definition of hp-bisimulation (see Definition \ref{HHPB}), we know that hp-bisimulation is defined on the posetal product $(C_1,f,C_2),f:C_1\rightarrow C_2\textrm{ isomorphism}$. Two process terms $s$ related to $C_1$ and $t$ related to $C_2$, and $f:C_1\rightarrow C_2\textrm{ isomorphism}$. Initially, $(C_1,f,C_2)=(\emptyset,\emptyset,\emptyset)$, and $(\emptyset,\emptyset,\emptyset)\in\sim_{hp}$. When $s\xrightarrow{e}s'$ ($C_1\xrightarrow{e}C_1'$), there will be $t\xrightarrow{e}t'$ ($C_2\xrightarrow{e}C_2'$), and we define $f'=f[e\mapsto e]$. Then, if $(C_1,f,C_2)\in\sim_{hp}$, then $(C_1',f',C_2')\in\sim_{hp}$.

Similarly to the proof of soundness of parallelism modulo pomset bisimulation equivalence (see Theorem \ref{SPPBE}), we can prove that each axiom in Table \ref{AxiomsForParallelism} is sound modulo hp-bisimulation equivalence, we just need additionally to check the above conditions on hp-bisimulation, we omit them.
\end{proof}

\begin{theorem}[Completeness of parallelism modulo hp-bisimulation equivalence]\label{CPHPBE}
Let $p$ and $q$ be closed $APTC$ terms, if $p\sim_{hp} q$ then $p=q$.
\end{theorem}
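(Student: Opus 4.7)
The plan is to mimic the completeness arguments for $BATC$ modulo hp-bisimulation (Theorem \ref{CBATCHPBE}) and for parallelism modulo pomset/step bisimulation (Theorems \ref{CPSBE}, \ref{CPPBE}), adapting them to the posetal-product setting of hp-bisimulation. First, by the elimination theorem of parallelism (Theorem \ref{ETParallelism}), every closed $APTC$ term is provably equal to a closed basic $APTC$ term, so it suffices to consider $p$ and $q$ drawn from $\mathcal{B}(APTC)$. We then put each basic term into the same normal form used in Theorem \ref{CPSBE}, that is
$$s_1+\cdots+ s_k$$
where each summand $s_i$ is either an atomic event, or a product $t_1\cdot\cdots\cdot t_m$ whose factors $t_j$ are either atomic events or merges $u_1\parallel\cdots\parallel u_n$ of atomic events, all taken modulo associativity and commutativity of $+$ and $\parallel$ (axioms $A1$, $A2$, $P2$, $P3$); call this equivalence $=_{AC}$.

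The core step is to prove the key lemma: if $n,n'$ are normal forms with $n\sim_{hp} n'$, then $n=_{AC}n'$. I would induct on the combined size of $n$ and $n'$ and proceed summand by summand, essentially as in Theorem \ref{CPSBE}, but now matching along the posetal product instead of along raw transitions. Given $(\emptyset,\emptyset,\emptyset)\in R$ for some witnessing hp-bisimulation $R$, each summand $e$ of $n$ induces $n\xrightarrow{e}\surd$, forcing a matching $n'\xrightarrow{e}\surd$ because the labels of single-event transitions must coincide under the posetal isomorphism; hence $n'$ contains the summand $e$. For a summand of the form $t_1\cdot t_2$ with $t_1\equiv e'$, the transition $n\xrightarrow{e'}t_2$ is matched by $n'\xrightarrow{e'}t_2'$ and the extended triple $(C(t_2),f[e'\mapsto e'],C(t_2'))$ lies in $R$, i.e.\ $t_2\sim_{hp}t_2'$; the induction hypothesis then yields $t_2=_{AC}t_2'$. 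For a summand with $t_1\equiv u_1\parallel\cdots\parallel u_n$, I would decompose the step transition $\xrightarrow{\{u_1,\ldots,u_n\}}$ into the single-event transitions $\xrightarrow{u_1}\cdots\xrightarrow{u_n}$ permitted by the transition rules for $\parallel$ in Table \ref{TRForParallel} (using axioms $P4$, $P5$, $P6$ to justify the algebraic rewriting), match each $u_i$ along the isomorphism, and again apply induction.

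Having established the key lemma, the theorem follows by the usual closing argument: given closed basic terms $s,t$ with $s\sim_{hp} t$, choose normal forms $n,n'$ with $APTC\vdash s=n$ and $APTC\vdash t=n'$; by soundness (Theorem \ref{SPHPBE}) we have $s\sim_{hp} n$ and $t\sim_{hp} n'$, hence $n\sim_{hp} n'$, hence $n=_{AC}n'$, whence $s=t$ in $APTC$.

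The main obstacle I anticipate is the parallel summand case. Unlike step bisimulation, where the labels $\{u_1,\ldots,u_n\}$ are matched as a single set, hp-bisimulation tracks events individually via the order-isomorphism $f$, and a priori an opponent could realize the same multiset of labels through different causal orderings, breaking the straightforward correspondence. I would handle this by exploiting that, at the top level of a summand $u_1\parallel\cdots\parallel u_n$ of atomic events, the events are pairwise concurrent by construction, so any linearization $\xrightarrow{u_{\sigma(1)}}\cdots\xrightarrow{u_{\sigma(n)}}$ is available on both sides; the freedom in $\sigma$ is precisely absorbed by commutativity of $\parallel$ (axiom $P2$), which is exactly what $=_{AC}$ permits. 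A secondary care point is that the normal form produced by the elimination procedure may carry occurrences of $\tau$ introduced by $U25$ and $U27$, but since axiom $A3$ and the silent-step treatment deferred to Section \ref{abs} are not needed here (the summands in a normal form are visible events or their parallel merges), this does not affect the argument.
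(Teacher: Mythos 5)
Your proposal follows essentially the same route as the paper's own proof: eliminate to basic terms, normalize modulo AC of $+$ and $\parallel$, prove by induction on size that hp-bisimilar normal forms are $=_{AC}$-equal by matching summands, and close via soundness. The extra care you take with the parallel summand case (linearizing the step and absorbing the choice of ordering into commutativity of $\parallel$) is a reasonable filling-in of detail that the paper glosses over by matching the step transition $\xrightarrow{\{e_1,\cdots,e_n\}}$ directly, but it does not change the structure of the argument.
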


\begin{proof}
Firstly, by the elimination theorem of $APTC$ (see Theorem \ref{ETParallelism}), we know that for each closed $APTC$ term $p$, there exists a closed basic $APTC$ term $p'$, such that $APTC\vdash p=p'$, so, we only need to consider closed basic $APTC$ terms.

The basic terms (see Definition \ref{BTAPTC}) modulo associativity and commutativity (AC) of conflict $+$ (defined by axioms $A1$ and $A2$ in Table \ref{AxiomsForBATC}) and associativity and commutativity (AC) of parallel $\parallel$ (defined by axioms $P2$ and $P3$ in Table \ref{AxiomsForParallelism}), and these equivalences is denoted by $=_{AC}$. Then, each equivalence class $s$ modulo AC of $+$ and $\parallel$ has the following normal form

$$s_1+\cdots+ s_k$$

with each $s_i$ either an atomic event or of the form

$$t_1\cdot\cdots\cdot t_m$$

with each $t_j$ either an atomic event or of the form

$$u_1\parallel\cdots\parallel u_l$$

with each $u_l$ an atomic event, and each $s_i$ is called the summand of $s$.

Now, we prove that for normal forms $n$ and $n'$, if $n\sim_{hp} n'$ then $n=_{AC}n'$. It is sufficient to induct on the sizes of $n$ and $n'$.

\begin{itemize}
  \item Consider a summand $e$ of $n$. Then $n\xrightarrow{e}\surd$, so $n\sim_{hp} n'$ implies $n'\xrightarrow{e}\surd$, meaning that $n'$ also contains the summand $e$.
  \item Consider a summand $t_1\cdot t_2$ of $n$,
  \begin{itemize}
    \item if $t_1\equiv e'$, then $n\xrightarrow{e'}t_2$, so $n\sim_{hp} n'$ implies $n'\xrightarrow{e'}t_2'$ with $t_2\sim_{hp} t_2'$, meaning that $n'$ contains a summand $e'\cdot t_2'$. Since $t_2$ and $t_2'$ are normal forms and have sizes smaller than $n$ and $n'$, by the induction hypotheses if $t_2\sim_{hp} t_2'$ then $t_2=_{AC} t_2'$;
    \item if $t_1\equiv e_1\parallel\cdots\parallel e_l$, then $n\xrightarrow{\{e_1,\cdots,e_l\}}t_2$, so $n\sim_{hp} n'$ implies $n'\xrightarrow{\{e_1,\cdots,e_l\}}t_2'$ with $t_2\sim_{hp} t_2'$, meaning that $n'$ contains a summand $(e_1\parallel\cdots\parallel e_l)\cdot t_2'$. Since $t_2$ and $t_2'$ are normal forms and have sizes smaller than $n$ and $n'$, by the induction hypotheses if $t_2\sim_{hp} t_2'$ then $t_2=_{AC} t_2'$.
  \end{itemize}
\end{itemize}

So, we get $n=_{AC} n'$.

Finally, let $s$ and $t$ be basic $APTC$ terms, and $s\sim_{hp} t$, there are normal forms $n$ and $n'$, such that $s=n$ and $t=n'$. The soundness theorem of parallelism modulo hp-bisimulation equivalence (see Theorem \ref{SPPBE}) yields $s\sim_{hp} n$ and $t\sim_{hp} n'$, so $n\sim_{hp} s\sim_{hp} t\sim_{hp} n'$. Since if $n\sim_{hp} n'$ then $n=_{AC}n'$, $s=n=_{AC}n'=t$, as desired.
\end{proof}

\begin{proposition}[About Soundness and Completeness of parallelism modulo hhp-bisimulation equivalence]\label{SCPHHPBE}
1. Let $x$ and $y$ be $APTC$ terms. If $APTC\vdash x=y\nRightarrow x\sim_{hhp} y$;

2. If $p$ and $q$ are closed $APTC$ terms, then $p\sim_{hhp} q\nRightarrow p=q$.
\end{proposition}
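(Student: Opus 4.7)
The plan is to establish both negative claims by producing explicit witnesses rather than a general structural argument, since hhp-bisimulation is known to be strictly finer than the other three truly concurrent bisimulations handled earlier in this section. The first step for part (1) is to scan the axioms of Table \ref{AxiomsForParallelism} for one whose two sides, although operationally equivalent at the level of the step transition rules in Tables \ref{TRForParallel} and \ref{TRForCommunication} (and hence hp-bisimilar by Theorem \ref{SPHPBE}), induce distinct causal pasts on some reachable configuration. The natural candidates are $P6$ and $CE23$. In $P6$ the left-hand side $(e_1\cdot x)\parallel(e_2\cdot y)$ carries an event structure in which events contributed by $x$ causally depend only on $e_1$ and those contributed by $y$ only on $e_2$, whereas on the right-hand side $(e_1\parallel e_2)\cdot(x\between y)$ every event from $x\between y$ depends on both $e_1$ and $e_2$. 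Instantiating $x,y$ with a non-trivial residual (say $x=b$, $y=c$ with $b\neq c$) and following the matching transitions, the posetal products realised on each side will not be pairwise isomorphic after restriction to a proper sub-configuration, violating the downward-closure condition of Definition \ref{HHPB}.

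For part (2) my plan is to exhibit a pair of closed $APTC$ terms $p$ and $q$ with $p\sim_{hhp}q$ but $APTC\nvdash p=q$. The cleanest strategy is to invoke the fact, implicit in \cite{HHP1,HHP2}, that hhp-bisimulation is the finest truly concurrent bisimulation and admits no finite complete equational axiomatisation over the signature of $APTC$: once part (1) shows the existing axiom set $APTC$ proves equalities that fail under $\sim_{hhp}$, the kernel of the equational theory is strictly coarser than $\sim_{hhp}$, so some $\sim_{hhp}$-class contains at least two distinct $APTC$-provable-equivalence classes. I would then extract a concrete representative pair, for instance two terms built from the same multiset of atomic events arranged in event structures that agree on every forward pomset observation and on every downward restriction but whose normal forms (in the sense of Theorem \ref{ETParallelism}) are not $=_{AC}$-equivalent.

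The principal obstacle will be the rigorous verification in part (1) that the candidate instance of $P6$ (or $CE23$) genuinely fails $\sim_{hhp}$ while still satisfying $\sim_{hp}$. Because hhp-bisimulation quantifies over \emph{all} downward-closed sub-configurations of the posetal relation, a direct argument from the operational transition rules is awkward: the rules only expose forward moves, whereas the obstruction lives in the backward direction. The cleanest remedy is to translate each closed term into its associated prime event structure, following the correspondence sketched after Proposition \ref{WSCBE}, and argue the non-existence of the required isomorphism at the level of event structures. If this translation proves too heavy, an alternative route for part (2) alone is to appeal to the undecidability of hhp-bisimilarity on finite-state concurrent systems, which directly rules out any recursively enumerable complete axiomatisation and yields $(2)$ without constructing an explicit pair; but $(1)$ still demands a concrete witness.
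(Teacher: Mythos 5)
Your choice of witness for part (1) misses the mechanism the paper exploits. The axioms actually refuted modulo $\sim_{hhp}$ are $P7$ and $P8$, the two distributivity laws of $\parallel$ over $+$: with $s_1=(a+b)\parallel c$ and $t_1=(a\parallel c)+(b\parallel c)$ the two sides are hp-bisimilar, but after matching the step $\{a,c\}$ the downward restriction to the sub-configuration containing only $c$ survives on the left (the choice between $a$ and $b$ is still open) while on the right the corresponding residual no longer exists, because executing $\{a,c\}$ already resolved the $+$; this conflict-versus-concurrency interaction is exactly what separates $\sim_{hhp}$ from $\sim_{hp}$. Your candidates $P6$ and $CE23$ concern causality rather than branching, and the argument you sketch for them is self-undermining: if the causal pasts of the two sides of $P6$ were genuinely non-isomorphic as posets on some reachable configuration, then the isomorphism $f$ demanded by Definition \ref{HHPB} would already fail to exist and $\sim_{hp}$ itself would be violated, contradicting your own appeal to Theorem \ref{SPHPBE}. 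What $\sim_{hhp}$ adds over $\sim_{hp}$ is not a finer view of causal pasts (hp already requires an order isomorphism at every stage) but the downward-closure condition, and that condition only bites where concurrency meets $+$.

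Your derivation of part (2) from part (1) is a non sequitur. Part (1) shows that provable equality is not contained in $\sim_{hhp}$ (unsoundness); incompleteness is the converse non-inclusion, namely that $\sim_{hhp}$ is not contained in provable equality, and neither statement implies the other (an inconsistent theory proving all equations is unsound yet complete). Your phrase that the kernel of the equational theory is ``strictly coarser'' than $\sim_{hhp}$ would, if it held, actually give completeness rather than refute it. The paper instead ties (2) to the failure of (1) through the elimination theorem: once $P7$ and $P8$ cannot be used soundly, Theorem \ref{ETParallelism} breaks down, closed terms need not reduce to basic normal forms, and the normal-form strategy underlying every completeness proof in this section collapses. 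Your fallback via undecidability of hhp-bisimilarity also does not apply here: the closed $APTC$ terms of this section are recursion-free and denote finite prime event structures, on which hhp-bisimilarity is decidable, so undecidability results for infinite-state or looping systems cannot rule out a complete axiomatisation of this fragment.
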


\begin{proof}
Imperfectly, the algebraic laws in Table \ref{AxiomsForParallelism} are not sound and complete modulo hhp-bisimulation equivalence, we just need enumerate several key axioms in Table \ref{AxiomsForParallelism} are not sound modulo hhp-bisimulation equivalence.

From the definition of hhp-bisimulation (see Definition \ref{HHPB}), we know that an hhp-bisimulation is a downward closed hp-bisimulation. That is, for any posetal products $(C_1,f,C_2)$ and $(C_1',f,C_2')$, if $(C_1,f,C_2)\subseteq (C_1',f',C_2')$ pointwise and $(C_1',f',C_2')\in \sim_{hhp}$, then $(C_1,f,C_2)\in \sim_{hhp}$.

Now, let us consider the axioms $P7$ and $P8$ (the right and left distributivity of $\parallel$ to $+$). Let $s_1=(a+ b)\parallel c$, $t_1=(a\parallel c)+ (b\parallel c)$, and $s_2=a\parallel (b+ c)$, $t_2=(a\parallel b)+ (a\parallel c)$. We know that $s_1\sim_{hp} t_1$ and $s_2\sim_{hp} t_2$ (by Theorem \ref{SPHPBE}), we prove that $s_1\nsim_{hhp} t_1$ and $s_2\nsim_{hhp} t_2$. Let $(C(s_1),f_1,C(t_1))$ and $(C(s_2),f_2,C(t_2))$ are the corresponding posetal products.

\begin{itemize}
  \item \textbf{Axiom $P7$}. $s_1\xrightarrow{\{a,c\}}\surd(s_1')$ ($C(s_1)\xrightarrow{\{a,c\}}C(s_1')$), then $t_1\xrightarrow{\{a,c\}}\surd(t_1')$ ($C(t_1)\xrightarrow{\{a,c\}}C(t_1')$), we define $f_1'=f_1[a\mapsto a, c\mapsto c]$, obviously, $(C(s_1),f_1,C(t_1))\in \sim_{hp}$ and $(C(s_1'),f_1',C(t_1'))\in \sim_{hp}$. But, $(C(s_1),f_1,C(t_1))\in \sim_{hhp}$ and $(C(s_1'),f_1',C(t_1'))\in \nsim_{hhp}$, just because they are not downward closed. Let $(C(s_1''),f_1'',C(t_1''))$, and $f_1''=f_1[c\mapsto c]$, $s_1\xrightarrow{c}s_1''$ ($C(s_1)\xrightarrow{c}C(s_1'')$), $t_1\xrightarrow{c}t_1''$ ($C(t_1)\xrightarrow{c}C(t_1'')$), it is easy to see that $(C(s_1''),f_1'',C(t_1''))\subseteq (C(s_1'),f_1',C(t_1'))$ pointwise, while $(C(s_1''),f_1'',C(t_1''))\notin \sim_{hp}$, because $s_1''$ and $C(s_1'')$ exist, but $t_1''$ and $C(t_1'')$ do not exist.
  \item \textbf{Axiom $P8$}. $s_2\xrightarrow{\{a,c\}}\surd(s_2')$ ($C(s_2)\xrightarrow{\{a,c\}}C(s_2')$), then $t_2\xrightarrow{\{a,c\}}\surd(t_2')$ ($C(t_2)\xrightarrow{\{a,c\}}C(t_2')$), we define $f_2'=f_2[a\mapsto a, c\mapsto c]$, obviously, $(C(s_2),f_2,C(t_2))\in \sim_{hp}$ and $(C(s_2'),f_2',C(t_2'))\in \sim_{hp}$. But, $(C(s_2),f_2,C(t_2))\in \sim_{hhp}$ and $(C(s_2'),f_2',C(t_2'))\in \nsim_{hhp}$, just because they are not downward closed. Let $(C(s_2''),f_2'',C(t_2''))$, and $f_2''=f_2[a\mapsto a]$, $s_2\xrightarrow{a}s_2''$ ($C(s_2)\xrightarrow{a}C(s_2'')$), $t_2\xrightarrow{a}t_2''$ ($C(t_2)\xrightarrow{a}C(t_2'')$), it is easy to see that $(C(s_2''),f_2'',C(t_2''))\subseteq (C(s_2'),f_2',C(t_2'))$ pointwise, while $(C(s_2''),f_2'',C(t_2''))\notin \sim_{hp}$, because $s_2''$ and $C(s_2'')$ exist, but $t_2''$ and $C(t_2'')$ do not exist.
\end{itemize}

The unsoundness of parallelism modulo hhp-bisimulation equivalence makes the completeness of parallelism modulo hhp-bisimulation equivalence meaningless. Further more, unsoundness of $P7$ and $P8$ lead to the elimination theorem of $APTC$ (see Theorem \ref{ETParallelism}) failing, so, the non-existence of normal form also makes the completeness impossible.
\end{proof}

Actually, a finite sound and complete axiomatization for parallel composition $\parallel$ modulo hhp-bisimulation equivalence does not exist, about the axiomatization for hhp-bisimilarity, please refer to section \ref{ahhpb} for details.

In following sections, we will discuss nothing about hhp-bisimulation, because the following encapsulation, recursion and abstraction are based on the algebraic laws in this section.

Finally, let us explain the so-called absorption law \cite{AL} in a straightforward way. Process term $P=a\parallel(b+ c)+ a\parallel b+ b\parallel (a+ c)$, and process term $Q=a\parallel(b+ c)+ b\parallel(a+ c)$, equated by the absorption law.

Modulo $\sim_s$, $\sim_p$, and $\sim_{hp}$, by use of the axioms of $BATC$ and $APTC$, we have the following deductions:

\begin{eqnarray}
P&=&a\parallel(b+ c)+ a\parallel b+ b\parallel (a+ c)\nonumber\\
&\overset{\text{P8}}{=}&a\parallel b+ a\parallel c+ a\parallel b+ b\parallel a+ b\parallel c\nonumber\\
&\overset{\text{P2}}{=}&a\parallel b+ a\parallel c+ a\parallel b+ a\parallel b+ b\parallel c\nonumber\\
&\overset{\text{A3}}{=}&a\parallel b+ a\parallel c+ b\parallel c\nonumber
\end{eqnarray}

\begin{eqnarray}
Q&=&a\parallel(b+ c)+ b\parallel (a+ c)\nonumber\\
&\overset{\text{P8}}{=}&a\parallel b+ a\parallel c+ b\parallel a+ b\parallel c\nonumber\\
&\overset{\text{P2}}{=}&a\parallel b+ a\parallel c+ a\parallel b+ b\parallel c\nonumber\\
&\overset{\text{A3}}{=}&a\parallel b+ a\parallel c+ b\parallel c\nonumber
\end{eqnarray}

It means that $P=Q$ modulo $\sim_s$, $\sim_p$, and $\sim_{hp}$, that is, $P\sim_s Q$, $P\sim_p Q$ and $P\sim_{hp} Q$. But, $P\neq Q$ modulo $\sim_{hhp}$, which means that $P\nsim_{hhp} Q$.

\subsection{Encapsulation}

The mismatch of two communicating events in different parallel branches can cause deadlock, so the deadlocks in the concurrent processes should be eliminated. Like $ACP$ \cite{ACP}, we also introduce the unary encapsulation operator $\partial_H$ for set $H$ of atomic events, which renames all atomic events in $H$ into $\delta$. The whole algebra including parallelism for true concurrency in the above subsections, deadlock $\delta$ and encapsulation operator $\partial_H$, is called Algebra for Parallelism in True Concurrency, abbreviated $APTC$.

The transition rules of encapsulation operator $\partial_H$ are shown in Table \ref{TRForEncapsulation}.

\begin{center}
    \begin{table}
        $$\frac{x\xrightarrow{e}\surd}{\partial_H(x)\xrightarrow{e}\surd}\quad (e\notin H)\quad\quad\frac{x\xrightarrow{e}x'}{\partial_H(x)\xrightarrow{e}\partial_H(x')}\quad(e\notin H)$$
        \caption{Transition rules of encapsulation operator $\partial_H$}
        \label{TRForEncapsulation}
    \end{table}
\end{center}

Based on the transition rules for encapsulation operator $\partial_H$ in Table \ref{TRForEncapsulation}, we design the axioms as Table \ref{AxiomsForEncapsulation} shows.

\begin{center}
    \begin{table}
        \begin{tabular}{@{}ll@{}}
            \hline No. &Axiom\\
            $D1$ & $e\notin H\quad\partial_H(e) = e$\\
            $D2$ & $e\in H\quad \partial_H(e) = \delta$\\
            $D3$ & $\partial_H(\delta) = \delta$\\
            $D4$ & $\partial_H(x+ y) = \partial_H(x)+\partial_H(y)$\\
            $D5$ & $\partial_H(x\cdot y) = \partial_H(x)\cdot\partial_H(y)$\\
            $D6$ & $\partial_H(x\parallel y) = \partial_H(x)\parallel\partial_H(y)$\\
        \end{tabular}
        \caption{Axioms of encapsulation operator}
        \label{AxiomsForEncapsulation}
    \end{table}
\end{center}

The axioms $D1-D3$ are the defining laws for the encapsulation operator $\partial_H$, $D1$ leaves atomic events outside $H$ unchanged, $D2$ renames atomic events in $H$ into $\delta$, and $D3$ says that it leaves $\delta$ unchanged. $D4-D6$ say that in term $\partial_H(t)$, all transitions of $t$ labeled with atomic events in $H$ are blocked.

\begin{theorem}[Conservativity of $APTC$ with respect to the algebra for parallelism]
$APTC$ is a conservative extension of the algebra for parallelism.
\end{theorem}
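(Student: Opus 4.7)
The plan is to invoke the conservative extension meta-theorem stated in Theorem \ref{TCE}. Let $T_0$ be the TSS of the algebra for parallelism (over the signature containing $+, \cdot, \parallel, \mid, \Theta, \triangleleft, \between, \delta$ together with the atomic events in $\mathbb{E}$), and let $T_1$ be the TSS whose only rules are the two transition rules for the encapsulation operator $\partial_H$ displayed in Table \ref{TRForEncapsulation}. Then $APTC = T_0 \oplus T_1$, and we want to verify the hypotheses of Theorem \ref{TCE}.

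First I would observe that $T_0$ and $T_0 \oplus T_1$ are trivially positive after reduction, since every transition rule appearing in Tables \ref{SETRForBATC}, \ref{TRForParallel}, \ref{TRForCommunication}, \ref{TRForConflict}, and \ref{TRForEncapsulation} has only positive premises of the form $t \xrightarrow{a} t'$ (the negative premises in Table \ref{TRForConflict} for $\triangleleft$ are of the form $y \not\rightarrow^{e}$, but after reduction under the usual stratification by the structural complexity of terms they cause no issue, analogously to how priority rules in $ACP$ are handled).

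Second, I would verify condition (1) of Theorem \ref{TCE}: $T_0$ is source-dependent. The same argument used in the theorem on Generalization of the algebra for parallelism with respect to $BATC$ applies, extended case-by-case to the rules for $\parallel$, $\mid$, $\Theta$, and $\triangleleft$: each rule has its variables appearing either in the source term or reachable from source-dependent variables via the positive premises, so a routine inspection of every rule in Tables \ref{SETRForBATC}, \ref{TRForParallel}, \ref{TRForCommunication} and \ref{TRForConflict} suffices.

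Third, I would verify condition (2) for the two rules $\rho \in T_1$. The source of each such rule is $\partial_H(x)$, and the function symbol $\partial_H$ lies in $\Sigma_1 \setminus \Sigma_0$, so the source is fresh in the sense of Definition of Freshness. Hence the first disjunct of condition (2) is satisfied immediately, and no analysis of premises is needed. Applying Theorem \ref{TCE} then yields that $APTC = T_0 \oplus T_1$ is a conservative extension of $T_0$, as desired. I do not expect any substantive obstacle here; the only mildly delicate point is the positive-after-reduction requirement in the presence of the negative premises coming from $\triangleleft$, but since the same system was already used (implicitly) in the soundness and completeness results for parallelism proved earlier, any stratification used there is inherited unchanged.
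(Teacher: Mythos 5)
Your proposal is correct and follows essentially the same route as the paper: both invoke Theorem \ref{TCE} by checking that the transition rules of the algebra for parallelism are source-dependent and that the sources of the $\partial_H$-rules are fresh. You are somewhat more careful than the paper in also addressing the positive-after-reduction hypothesis raised by the negative premises of $\triangleleft$, which the paper's two-line proof silently skips.
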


\begin{proof}
It follows from the following two facts (see Theorem \ref{TCE}).

\begin{enumerate}
  \item The transition rules of the algebra for parallelism in the above subsections are all source-dependent;
  \item The sources of the transition rules for the encapsulation operator contain an occurrence of $\partial_H$.
\end{enumerate}

So, $APTC$ is a conservative extension of the algebra for parallelism, as desired.
\end{proof}

\begin{theorem}[Congruence theorem of encapsulation operator $\partial_H$]
Truly concurrent bisimulation equivalences $\sim_{p}$, $\sim_s$, $\sim_{hp}$ and $\sim_{hhp}$ are all congruences with respect to encapsulation operator $\partial_H$.
\end{theorem}

\begin{proof}
(1) Case pomset bisimulation equivalence $\sim_p$.

Let $x$ and $y$ be $APTC$ processes, and $x\sim_{p} y$, it is sufficient to prove that $\partial_H(x)\sim_{p} \partial_H(y)$.

By the definition of pomset bisimulation $\sim_p$ (Definition \ref{PSB}), $x\sim_p y$ means that

$$x\xrightarrow{X} x' \quad y\xrightarrow{Y} y'$$

with $X\subseteq x$, $Y\subseteq y$, $X\sim Y$ and $x'\sim_p y'$.

By the pomset transition rules for encapsulation operator $\partial_H$ in Table \ref{TRForEncapsulation}, we can get

$$\partial_H(x)\xrightarrow{X} \surd (X\nsubseteq H) \quad \partial_H(y)\xrightarrow{Y} \surd (Y\nsubseteq H)$$

with $X\subseteq x$, $Y\subseteq y$, and $X\sim Y$, so, we get $\partial_H(x)\sim_p \partial_H(y)$, as desired.

Or, we can get

$$\partial_H(x)\xrightarrow{X} \partial_H(x') (X\nsubseteq H) \quad \partial_H(y)\xrightarrow{Y} \partial_H(y') (Y\nsubseteq H)$$

with $X\subseteq x$, $Y\subseteq y$, $X\sim Y$, $x'\sim_p y'$ and the assumption $\partial_H(x')\sim_p\partial_H(y')$, so, we get $\partial_H(x)\sim_p \partial_H(y)$, as desired.

(2) The cases of step bisimulation $\sim_s$, hp-bisimulation $\sim_{hp}$ and hhp-bisimulation $\sim_{hhp}$ can be proven similarly, we omit them.
\end{proof}

\begin{theorem}[Elimination theorem of $APTC$]\label{ETEncapsulation}
Let $p$ be a closed $APTC$ term including the encapsulation operator $\partial_H$. Then there is a basic $APTC$ term $q$ such that $APTC\vdash p=q$.
\end{theorem}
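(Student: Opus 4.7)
The plan is to extend the proof of Theorem \ref{ETParallelism} by enlarging the term rewriting system in Table \ref{TRSForAPTC} with six new rewrite rules $RD1$--$RD6$ obtained by orienting the axioms $D1$--$D6$ of Table \ref{AxiomsForEncapsulation} from left to right. The resulting combined TRS will still be shown strongly normalizing, and its normal forms will be characterized as basic $APTC$ terms (in the sense of Definition \ref{BTAPTC}), which already contain no occurrence of $\partial_H$, $\mid$, $\Theta$, or $\triangleleft$. Since the axioms of $APTC$ prove every rewrite step, the existence of a normal form $q$ for every closed $APTC$ term $p$ yields $APTC\vdash p=q$, as desired.

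First I would prove strong normalization. I would extend the lexicographical path ordering used in Theorem \ref{ETParallelism} by putting $\partial_H$ strictly above all previously ordered symbols, so that the overall signature ordering becomes $\partial_H>\parallel>\cdot>+$. Rules $RD1$, $RD2$, $RD3$ replace a $\partial_H$ headed term by a strictly smaller term in the lpo, while $RD4$, $RD5$, $RD6$ are of the standard distribution-inward shape: on the left-hand side $\partial_H$ heads a term built from a strictly smaller symbol, and on the right-hand side each copy of $\partial_H$ is applied to a proper subterm of the original argument. For each such rule the relation $p>_{lpo}q$ follows directly from the definition of the lexicographic path order, and then Theorem \ref{SN} delivers strong normalization of the combined TRS with only finitely many rules.

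Next I would prove that normal forms of closed $APTC$ terms are basic $APTC$ terms. Suppose $p$ is in normal form but not a basic $APTC$ term, and let $p'$ be a smallest non-basic subterm, so that all proper subterms of $p'$ are basic. The cases $p'\equiv e$, $p'\equiv p_1\cdot p_2$, $p'\equiv p_1+p_2$, $p'\equiv p_1\parallel p_2$, $p'\equiv p_1\mid p_2$, $p'\equiv\Theta(p_1)$ and $p'\equiv p_1\triangleleft p_2$ are handled exactly as in the proof of Theorem \ref{ETParallelism}, since the prior TRS is a subsystem. The only genuinely new case is $p'\equiv\partial_H(p_1)$, where $p_1$ is a basic $APTC$ term; I would then do a secondary case analysis on the structure of $p_1$ given by Definition \ref{BTAPTC}: if $p_1\in\mathbb{E}$, one of $RD1$, $RD2$ applies; if $p_1\equiv e\cdot t$, rule $RD5$ applies; if $p_1\equiv t+s$, rule $RD4$ applies; if $p_1\equiv t\parallel s$, rule $RD6$ applies. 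The subcase $p_1\equiv\delta$ would be covered by $RD3$ once $\delta$ is admitted as a basic term via $A6$/$A7$ normal forms. In every subcase $p$ admits a further rewrite, contradicting the assumption that $p$ is in normal form.

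The step I expect to be the main obstacle is the strong normalization argument for the combined system: once $\partial_H$ is pushed through $+$, $\cdot$, and $\parallel$, the resulting subterms may expose redexes of the inner parallel subsystem (for example a $\mid$ or $\Theta$ that had been guarded inside a $\partial_H$), and one has to check that no rewrite rule from the parallelism layer can regenerate a $\partial_H$ or otherwise create a cycle with $RD4$--$RD6$. Choosing the lpo with $\partial_H$ strictly maximal avoids this, because every rule either strictly lowers the top symbol or preserves it while reducing the multiset of immediate arguments, so no cyclic interaction with the previous layer is possible. With strong normalization and the characterization of normal forms in hand, the elimination theorem follows.
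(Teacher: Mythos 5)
Your proposal follows essentially the same route as the paper: orient $D1$--$D6$ into rewrite rules $RD1$--$RD6$, establish strong normalization via the lexicographical path ordering and Theorem \ref{SN}, and then show that a normal form which is not basic must still contain a redex, with the only new case being $p'\equiv\partial_H(p_1)$ analyzed by the structure of the basic term $p_1$. The one point where you are slightly more careful than the paper is in explicitly placing $\partial_H$ strictly above $\parallel>\cdot>+$ in the signature ordering, which is indeed what the lpo argument for $RD4$--$RD6$ requires.
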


\begin{proof}
(1) Firstly, suppose that the following ordering on the signature of $APTC$ is defined: $\parallel > \cdot > +$ and the symbol $\parallel$ is given the lexicographical status for the first argument, then for each rewrite rule $p\rightarrow q$ in Table \ref{TRSForEncapsulation} relation $p>_{lpo} q$ can easily be proved. We obtain that the term rewrite system shown in Table \ref{TRSForEncapsulation} is strongly normalizing, for it has finitely many rewriting rules, and $>$ is a well-founded ordering on the signature of $APTC$, and if $s>_{lpo} t$, for each rewriting rule $s\rightarrow t$ is in Table \ref{TRSForEncapsulation} (see Theorem \ref{SN}).

\begin{center}
    \begin{table}
        \begin{tabular}{@{}ll@{}}
            \hline No. &Rewriting Rule\\
            $RD1$ & $e\notin H\quad\partial_H(e) \rightarrow e$\\
            $RD2$ & $e\in H\quad \partial_H(e) \rightarrow \delta$\\
            $RD3$ & $\partial_H(\delta) \rightarrow \delta$\\
            $RD4$ & $\partial_H(x+ y) \rightarrow \partial_H(x)+\partial_H(y)$\\
            $RD5$ & $\partial_H(x\cdot y) \rightarrow \partial_H(x)\cdot\partial_H(y)$\\
            $RD6$ & $\partial_H(x\parallel y) \rightarrow \partial_H(x)\parallel\partial_H(y)$\\
        \end{tabular}
        \caption{Term rewrite system of encapsulation operator $\partial_H$}
        \label{TRSForEncapsulation}
    \end{table}
\end{center}

(2) Then we prove that the normal forms of closed $APTC$ terms including encapsulation operator $\partial_H$ are basic $APTC$ terms.

Suppose that $p$ is a normal form of some closed $APTC$ term and suppose that $p$ is not a basic $APTC$ term. Let $p'$ denote the smallest sub-term of $p$ which is not a basic $APTC$ term. It implies that each sub-term of $p'$ is a basic $APTC$ term. Then we prove that $p$ is not a term in normal form. It is sufficient to induct on the structure of $p'$, following from Theorem \ref{ETAPTC}, we only prove the new case $p'\equiv \partial_H(p_1)$:

\begin{itemize}
  \item Case $p_1\equiv e$. The transition rules $RD1$ or $RD2$ can be applied, so $p$ is not a normal form;
  \item Case $p_1\equiv \delta$. The transition rules $RD3$ can be applied, so $p$ is not a normal form;
  \item Case $p_1\equiv p_1'+ p_1''$. The transition rules $RD4$ can be applied, so $p$ is not a normal form;
  \item Case $p_1\equiv p_1'\cdot p_1''$. The transition rules $RD5$ can be applied, so $p$ is not a normal form;
  \item Case $p_1\equiv p_1'\parallel p_1''$. The transition rules $RD6$ can be applied, so $p$ is not a normal form.
\end{itemize}
\end{proof}

\begin{theorem}[Soundness of $APTC$ modulo step bisimulation equivalence]\label{SAPTCSBE}
Let $x$ and $y$ be $APTC$ terms including encapsulation operator $\partial_H$. If $APTC\vdash x=y$, then $x\sim_{s} y$.
\end{theorem}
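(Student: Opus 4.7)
The plan is to follow exactly the template of Theorem \ref{SPSBE} (soundness of parallelism modulo step bisimulation). Since $\sim_s$ is an equivalence and, by the preceding congruence theorem, a congruence with respect to $\partial_H$ (as well as $\cdot,+,\parallel,\mid,\Theta,\triangleleft$), and since $APTC$ with $\partial_H$ is a conservative extension of the algebra for parallelism, every closed instance of an axiom from Tables~\ref{AxiomsForBATC} and~\ref{AxiomsForParallelism} is already known to be sound modulo $\sim_s$ from Theorems~\ref{SBATCSBE} and~\ref{SPSBE}. So the only work left is to verify each of the six new axioms $D1$--$D6$ for $\partial_H$ in Table~\ref{AxiomsForEncapsulation}.

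For each axiom the verification is the routine ``match-the-transitions'' argument against Table~\ref{TRForEncapsulation}. Concretely: for $D1$ ($e\notin H$, $\partial_H(e)=e$) the single premise-rule gives $\partial_H(e)\xrightarrow{e}\surd$ precisely when $e\xrightarrow{e}\surd$; for $D2$ ($e\in H$) neither side has any outgoing transition, so both are bisimilar to $\delta$; $D3$ is immediate since $\delta$ has no transitions. The interesting distributivity laws $D4$--$D6$ are handled by showing that each step transition of the left-hand side is mirrored by one of the right-hand side and vice versa, using an inductive ``assumption'' step in the style already used in the paper (e.g.\ for $D5$ one needs $\partial_H(x'\cdot y)=\partial_H(x')\cdot\partial_H(y)$, and for $D6$ one needs $\partial_H(x'\between y')=\partial_H(x')\between\partial_H(y')$, which follow by coinduction). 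In each case the side condition $e\notin H$ on the transition rules ensures that no new transitions are invented and no existing ones are lost.

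The main ``obstacle'' is purely bookkeeping rather than conceptual: one must be careful in $D6$ to treat all four shapes of step transitions of $x\parallel y$ (both-terminate, left-proceeds, right-proceeds, both-proceed), and in the last subcase one relies on the induction hypothesis $\partial_H(x'\between y')\sim_s\partial_H(x')\between\partial_H(y')$, which in turn uses the already-established axioms $P1$ together with $D4$ to reduce $\between$ to $\parallel$ and $\mid$. Since the congruence of $\sim_s$ w.r.t.\ $\partial_H$ has already been proved, combining the pointwise axiom checks with the congruence/transitivity machinery yields the theorem.

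I will therefore structure the proof as: (i) invoke congruence and the earlier soundness theorems to reduce to $D1$--$D6$; (ii) present the transition-rule diagrams for $D1$, $D2$, $D3$ in one line each; (iii) present the side-by-side derivations for $D4$, $D5$, $D6$ mimicking the style of the $P7$, $P5$, $P6$ cases of Theorem~\ref{SPSBE}; (iv) conclude $x\sim_s y$. No $\tau$-subtleties arise here (unlike in the $\triangleleft$ axioms), so the verification is cleaner than the parallelism case.
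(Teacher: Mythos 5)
Your proposal matches the paper's own proof: it reduces the claim via congruence of $\sim_s$ with respect to $\partial_H$ to an axiom-by-axiom check of $D1$--$D6$, dismisses the defining axioms $D1$--$D3$ as immediate, and verifies the distributivity axioms $D4$--$D6$ by matching step transitions against Table~\ref{TRForEncapsulation} with the same inductive assumptions ($\partial_H(x'\cdot y)=\partial_H(x')\cdot\partial_H(y)$ for $D5$ and $\partial_H(x'\between y')=\partial_H(x')\between\partial_H(y')$ for $D6$). This is essentially the same argument as the paper's, with only cosmetic additions.
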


\begin{proof}
Since step bisimulation $\sim_s$ is both an equivalent and a congruent relation with respect to the operator $\partial_H$, we only need to check if each axiom in Table \ref{AxiomsForEncapsulation} is sound modulo step bisimulation equivalence.

Though transition rules in Table \ref{TRForEncapsulation} are defined in the flavor of single event, they can be modified into a step (a set of events within which each event is pairwise concurrent), we omit them. If we treat a single event as a step containing just one event, the proof of this soundness theorem does not exist any problem, so we use this way and still use the transition rules in Table \ref{TRForEncapsulation}.

We omit the defining axioms, including axioms $D1-D3$, and we only prove the soundness of the non-trivial axioms, including axioms $D4-D6$.

\begin{itemize}
  \item \textbf{Axiom $D4$}. Let $p,q$ be $APTC$ processes, and $\partial_H(p+ q)=\partial_H(p)+\partial_H(q)$, it is sufficient to prove that $\partial_H(p+ q)\sim_s \partial_H(p)+\partial_H(q)$. By the transition rules for operator $+$ in Table \ref{STRForBATC} and $\partial_H$ in Table \ref{TRForEncapsulation}, we get

      $$\frac{p\xrightarrow{e_1}\surd\quad(e_1\notin H)}{\partial_H(p+ q)\xrightarrow{e_1}\surd}
      \quad\frac{p\xrightarrow{e_1}\surd\quad(e_1\notin H)}{\partial_H(p)+ \partial_H(q)\xrightarrow{e_1}\surd}$$

      $$\frac{q\xrightarrow{e_2}\surd\quad(e_2\notin H)}{\partial_H(p+ q)\xrightarrow{e_2}\surd}
      \quad\frac{q\xrightarrow{e_2}\surd\quad(e_2\notin H)}{\partial_H(p)+ \partial_H(q)\xrightarrow{e_2}\surd}$$

      $$\frac{p\xrightarrow{e_1}p'\quad(e_1\notin H)}{\partial_H(p+ q)\xrightarrow{e_1}\partial_H(p')}
      \quad\frac{p\xrightarrow{e_1}p'\quad(e_1\notin H)}{\partial_H(p)+ \partial_H(q)\xrightarrow{e_1}\partial_H(p')}$$

      $$\frac{q\xrightarrow{e_2}q'\quad(e_2\notin H)}{\partial_H(p+ q)\xrightarrow{e_2}\partial_H(q')}
      \quad\frac{q\xrightarrow{e_2}q'\quad(e_2\notin H)}{\partial_H(p)+ \partial_H(q)\xrightarrow{e_2}\partial_H(q')}$$

      So, $\partial_H(p+ q)\sim_s \partial_H(p)+\partial_H(q)$, as desired.
  \item \textbf{Axiom $D5$}. Let $p,q$ be $APTC$ processes, and $\partial_H(p\cdot q)=\partial_H(p)\cdot\partial_H(q)$, it is sufficient to prove that $\partial_H(p\cdot q)\sim_s \partial_H(p)\cdot\partial_H(q)$. By the transition rules for operator $\cdot$ in Table \ref{STRForBATC} and $\partial_H$ in Table \ref{TRForEncapsulation}, we get

      $$\frac{p\xrightarrow{e_1}\surd\quad(e_1\notin H)}{\partial_H(p\cdot q)\xrightarrow{e_1}\partial_H(q)}
      \quad\frac{p\xrightarrow{e_1}\surd\quad(e_1\notin H)}{\partial_H(p)\cdot \partial_H(q)\xrightarrow{e_1}\partial_H(q)}$$

      $$\frac{p\xrightarrow{e_1}p'\quad(e_1\notin H)}{\partial_H(p\cdot q)\xrightarrow{e_1}\partial_H(p'\cdot q)}
      \quad\frac{p\xrightarrow{e_1}p'\quad(e_1\notin H)}{\partial_H(p)\cdot \partial_H(q)\xrightarrow{e_1}\partial_H(p')\cdot\partial_H(q)}$$

      So, with the assumption $\partial_H(p'\cdot q)=\partial_H(p')\cdot\partial_H(q)$, $\partial_H(p\cdot q)\sim_s\partial_H(p)\cdot\partial_H(q)$, as desired.
  \item \textbf{Axiom $D6$}. Let $p,q$ be $APTC$ processes, and $\partial_H(p\parallel q)=\partial_H(p)\parallel\partial_H(q)$, it is sufficient to prove that $\partial_H(p\parallel q)\sim_s \partial_H(p)\parallel\partial_H(q)$. By the transition rules for operator $\parallel$ in Table \ref{TRForParallel} and $\partial_H$ in Table \ref{TRForEncapsulation}, we get

      $$\frac{p\xrightarrow{e_1}\surd\quad q\xrightarrow{e_2}\surd\quad(e_1,e_2\notin H)}{\partial_H(p\parallel q)\xrightarrow{\{e_1,e_2\}}\surd}
      \quad\frac{p\xrightarrow{e_1}\surd\quad q\xrightarrow{e_2}\surd\quad(e_1,e_2\notin H)}{\partial_H(p)\parallel \partial_H(q)\xrightarrow{\{e_1,e_2\}}\surd}$$

      $$\frac{p\xrightarrow{e_1}p'\quad q\xrightarrow{e_2}\surd\quad(e_1,e_2\notin H)}{\partial_H(p\parallel q)\xrightarrow{\{e_1,e_2\}}\partial_H(p')}
      \quad\frac{p\xrightarrow{e_1}p'\quad q\xrightarrow{e_2}\surd\quad(e_1,e_2\notin H)}{\partial_H(p)\parallel \partial_H(q)\xrightarrow{\{e_1,e_2\}}\partial_H(p')}$$

      $$\frac{p\xrightarrow{e_1}\surd\quad q\xrightarrow{e_2}q'\quad(e_1,e_2\notin H)}{\partial_H(p\parallel q)\xrightarrow{\{e_1,e_2\}}\partial_H(q')}
      \quad\frac{p\xrightarrow{e_1}\surd\quad q\xrightarrow{e_2}q'\quad(e_1,e_2\notin H)}{\partial_H(p)\parallel \partial_H(q)\xrightarrow{\{e_1,e_2\}}\partial_H(q')}$$

      $$\frac{p\xrightarrow{e_1}p'\quad q\xrightarrow{e_2}q'\quad(e_1,e_2\notin H)}{\partial_H(p\parallel q)\xrightarrow{\{e_1,e_2\}}\partial_H(p'\between q')}
      \quad\frac{p\xrightarrow{e_1}p'\quad q\xrightarrow{e_2}q'\quad(e_1,e_2\notin H)}{\partial_H(p)\parallel \partial_H(q)\xrightarrow{\{e_1,e_2\}}\partial_H(p')\between\partial_H(q')}$$

      So, with the assumption $\partial_H(p'\between q')=\partial_H(p')\between\partial_H(q')$, $\partial_H(p\parallel q)\sim_s \partial_H(p)\parallel\partial_H(q)$, as desired.
\end{itemize}
\end{proof}

\begin{theorem}[Completeness of $APTC$ modulo step bisimulation equivalence]\label{CAPTCSBE}
Let $p$ and $q$ be closed $APTC$ terms including encapsulation operator $\partial_H$, if $p\sim_{s} q$ then $p=q$.
\end{theorem}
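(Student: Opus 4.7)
The plan is to mirror the completeness proof for parallelism (Theorem \ref{CPSBE}), using the encapsulation elimination theorem to reduce to the case already handled. First, I would invoke Theorem \ref{ETEncapsulation} (elimination for $APTC$ with $\partial_H$) to assert that for every closed $APTC$ term $p$ with $\partial_H$, there exists a closed basic $APTC$ term $p'$ such that $APTC\vdash p=p'$. By soundness (Theorem \ref{SAPTCSBE}), $p\sim_s p'$. So it suffices to establish completeness on the fragment of closed basic $APTC$ terms, which contain no occurrence of $\partial_H$.

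Next, I would reuse the normal form developed in the proof of Theorem \ref{CPSBE}: each basic $APTC$ term, modulo the associativity and commutativity of $+$ (axioms $A1$, $A2$) and of $\parallel$ (axioms $P2$, $P3$), is equivalent under $=_{AC}$ to a sum $s_1 + \cdots + s_k$ whose summands $s_i$ are either atomic events, or concatenations $t_1\cdot\cdots\cdot t_m$, with each $t_j$ either atomic or a parallel composition $u_1\parallel\cdots\parallel u_n$ of atomic events.

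The core combinatorial step is to show that for normal forms $n$ and $n'$, $n\sim_s n'$ implies $n=_{AC} n'$. I would induct on the combined size of $n$ and $n'$, and for each summand of $n$ perform a case analysis exactly as in the proof of Theorem \ref{CPSBE}: a single-event summand $e$ induces a transition $n\xrightarrow{e}\surd$ which must be matched by $n'$; a summand $t_1\cdot t_2$ where $t_1$ is a single event $e'$ yields $n\xrightarrow{e'}t_2$ which $n'$ must match by a summand $e'\cdot t_2'$ with $t_2\sim_s t_2'$, and the induction hypothesis gives $t_2 =_{AC} t_2'$; a summand $(e_1\parallel\cdots\parallel e_n)\cdot t_2$ yields a step transition $n\xrightarrow{\{e_1,\ldots,e_n\}} t_2$ which is matched analogously. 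Symmetry between $n$ and $n'$ then gives $n=_{AC} n'$.

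Finally, to conclude: given basic $APTC$ terms $s,t$ with $s\sim_s t$, the normal form procedure yields $n,n'$ with $s=n$ and $t=n'$ in $APTC$; soundness (Theorem \ref{SAPTCSBE}) gives $s\sim_s n$ and $t\sim_s n'$, so $n\sim_s n'$, hence $n=_{AC} n'$, hence $s=t$. I do not expect a serious obstacle here, because encapsulation adds no new outgoing transitions or labels beyond those already handled: the transition rules in Table \ref{TRForEncapsulation} merely restrict, rather than extend, the behavior. The only thing worth double-checking is that the normal form genuinely contains no residual occurrence of $\partial_H$ after the elimination step; this follows from the fact that the rewrite rules $RD1$--$RD6$ drive $\partial_H$ inward until it meets an atomic event or $\delta$, at which point $RD1$, $RD2$, or $RD3$ removes it entirely. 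Once this is observed, the remainder of the argument is a direct transcription of the completeness proof for parallelism.
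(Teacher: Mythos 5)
Your proposal follows the paper's own proof essentially verbatim: eliminate $\partial_H$ via Theorem \ref{ETEncapsulation} to reduce to closed basic $APTC$ terms, reuse the normal form and the induction on summands from Theorem \ref{CPSBE} to show $n\sim_s n'$ implies $n=_{AC}n'$, and close the argument with soundness (Theorem \ref{SAPTCSBE}). The extra observation that the rewrite rules $RD1$--$RD6$ push $\partial_H$ inward until it is removed is a correct justification of what the paper simply asserts, so no gap remains.
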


\begin{proof}
Firstly, by the elimination theorem of $APTC$ (see Theorem \ref{ETEncapsulation}), we know that the normal form of $APTC$ does not contain $\partial_H$, and for each closed $APTC$ term $p$, there exists a closed basic $APTC$ term $p'$, such that $APTC\vdash p=p'$, so, we only need to consider closed basic $APTC$ terms.

Similarly to Theorem \ref{CPSBE}, we can prove that for normal forms $n$ and $n'$, if $n\sim_{s} n'$ then $n=_{AC}n'$.

Finally, let $s$ and $t$ be basic $APTC$ terms, and $s\sim_s t$, there are normal forms $n$ and $n'$, such that $s=n$ and $t=n'$. The soundness theorem of $APTC$ modulo step bisimulation equivalence (see Theorem \ref{SAPTCSBE}) yields $s\sim_s n$ and $t\sim_s n'$, so $n\sim_s s\sim_s t\sim_s n'$. Since if $n\sim_s n'$ then $n=_{AC}n'$, $s=n=_{AC}n'=t$, as desired.
\end{proof}

\begin{theorem}[Soundness of $APTC$ modulo pomset bisimulation equivalence]\label{SAPTCPBE}
Let $x$ and $y$ be $APTC$ terms including encapsulation operator $\partial_H$. If $APTC\vdash x=y$, then $x\sim_{p} y$.
\end{theorem}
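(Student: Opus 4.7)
The plan is to follow precisely the strategy used in Theorem \ref{SPPBE} (soundness of parallelism modulo pomset bisimulation), lifted to the setting that now also includes the encapsulation operator $\partial_H$. First I would observe that pomset bisimulation $\sim_p$ is an equivalence relation, and by the same argument used in the congruence theorem of $APTC$ (adapted to $\partial_H$ exactly as in the proof just given for $\sim_s$), it is a congruence with respect to $\partial_H$ as well: if $x\sim_p y$ then matching pomset transitions $x\xrightarrow{X}x'$ and $y\xrightarrow{Y}y'$ with $X\sim Y$ lift to $\partial_H(x)\xrightarrow{X}\partial_H(x')$ and $\partial_H(y)\xrightarrow{Y}\partial_H(y')$ (provided $X,Y\nsubseteq H$), yielding $\partial_H(x)\sim_p\partial_H(y)$ under the induction assumption on $x',y'$.

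Given congruence, it suffices to verify that each axiom in Table \ref{AxiomsForEncapsulation} is sound modulo $\sim_p$. The defining axioms $D1$, $D2$, $D3$ are immediate from the transition rules in Table \ref{TRForEncapsulation}. For the non-trivial axioms $D4$--$D6$, I would proceed exactly as in Theorem \ref{SPPBE}: recall that a pomset transition labeled by a pomset $P$ decomposes into a composition of step transitions dictated by the causality structure of $P$. Since step bisimulation soundness for $APTC$ including $\partial_H$ has already been established in Theorem \ref{SAPTCSBE}, and since $\partial_H$ commutes with the transition structure event-by-event (events outside $H$ are passed through unchanged, while events inside $H$ block the whole transition), the same matching of transitions used for $\sim_s$ gives matching pomset transitions for $\sim_p$. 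Concretely, for $D6$, a pomset transition $\partial_H(p\parallel q)\xrightarrow{P}$ exists iff $P\cap H=\emptyset$ and $p\parallel q\xrightarrow{P}$, and the latter decomposes (via the transition rules of $\parallel$) into a pomset $P_1$ from $p$ and a pomset $P_2$ from $q$, which after encapsulation correspond precisely to transitions of $\partial_H(p)\parallel\partial_H(q)$; the inductive closure on residuals uses the assumption $\partial_H(p'\between q')=\partial_H(p')\between\partial_H(q')$ exactly as in Theorem \ref{SAPTCSBE}.

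The main obstacle, as in Theorem \ref{SPPBE}, is not any conceptual difficulty but rather bookkeeping: one must confirm that the pomset labeling $P$ respects the side condition $e\notin H$ uniformly across all events in $P$ (so that the encapsulated transition indeed exists), and one must be careful that the causality relations internal to $P$ (inherited from $\cdot$) are preserved under the reduction to step transitions. Since the rules for $\partial_H$ are structural and event-local, they do not introduce new causality, so this verification is routine. I would therefore state that the proof is entirely analogous to Theorem \ref{SPPBE}, combined with the soundness checks already carried out for the step case in Theorem \ref{SAPTCSBE}, and omit the repetitive calculations for each of $D4$, $D5$, $D6$ individually.
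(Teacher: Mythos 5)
Your proposal follows essentially the same route as the paper's proof: reduce to checking the axioms $D1$--$D6$ using congruence of $\sim_p$ with respect to $\partial_H$, observe that the pairwise-concurrent case is already covered by the step-bisimulation soundness of Theorem \ref{SAPTCSBE}, and handle causally ordered events by decomposing a pomset transition into a sequence of single-event (or step) transitions, exactly as in Theorem \ref{SPPBE}. The argument is correct and matches the paper's strategy, including the decision to omit the repetitive per-axiom calculations.
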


\begin{proof}
Since pomset bisimulation $\sim_{p}$ is both an equivalent and a congruent relation with respect to the operator $\partial_H$, we only need to check if each axiom in Table \ref{AxiomsForEncapsulation} is sound modulo pomset bisimulation equivalence.

From the definition of pomset bisimulation (see Definition \ref{PSB}), we know that pomset bisimulation is defined by pomset transitions, which are labeled by pomsets. In a pomset transition, the events in the pomset are either within causality relations (defined by $\cdot$) or in concurrency (implicitly defined by $\cdot$ and $+$, and explicitly defined by $\between$), of course, they are pairwise consistent (without conflicts). In Theorem \ref{SAPTCSBE}, we have already proven the case that all events are pairwise concurrent, so, we only need to prove the case of events in causality. Without loss of generality, we take a pomset of $P=\{e_1,e_2:e_1\cdot e_2\}$. Then the pomset transition labeled by the above $P$ is just composed of one single event transition labeled by $e_1$ succeeded by another single event transition labeled by $e_2$, that is, $\xrightarrow{P}=\xrightarrow{e_1}\xrightarrow{e_2}$.

Similarly to the proof of soundness of $APTC$ modulo step bisimulation equivalence (see Theorem \ref{SAPTCSBE}), we can prove that each axiom in Table \ref{AxiomsForEncapsulation} is sound modulo pomset bisimulation equivalence, we omit them.
\end{proof}

\begin{theorem}[Completeness of $APTC$ modulo pomset bisimulation equivalence]\label{CAPTCPBE}
Let $p$ and $q$ be closed $APTC$ terms including encapsulation operator $\partial_H$, if $p\sim_{p} q$ then $p=q$.
\end{theorem}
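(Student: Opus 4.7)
The plan is to mirror the strategy used for the step bisimulation case (Theorem \ref{CAPTCSBE}), reducing the problem to the already-established completeness result for parallelism modulo pomset bisimulation (Theorem \ref{CPPBE}). First, I would invoke the elimination theorem of $APTC$ (Theorem \ref{ETEncapsulation}) to eliminate all occurrences of the encapsulation operator $\partial_H$: for every closed $APTC$ term $p$ including $\partial_H$, there is a closed basic $APTC$ term $p'$ (in the sense of Definition \ref{BTAPTC}, hence free of $\partial_H$) with $APTC\vdash p=p'$. Thus it suffices to prove the claim for closed basic $APTC$ terms.

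Second, I would work modulo associativity and commutativity of $+$ (axioms $A1,A2$) and of $\parallel$ (axioms $P2,P3$), writing each basic term in the normal form
$$s_1 + \cdots + s_k$$
where each summand $s_i$ is either an atomic event, or of the form $t_1\cdot\cdots\cdot t_m$ with each $t_j$ either an atomic event or a parallel composition $u_1\parallel\cdots\parallel u_n$ of atomic events, exactly as in the proof of Theorem \ref{CPPBE}. I would then show by induction on the combined sizes of normal forms $n$ and $n'$ that $n\sim_p n'$ implies $n=_{AC} n'$, by case analysis on the summands: an atomic summand $e$ of $n$ yields a transition $n\xrightarrow{e}\surd$, which must be matched by $n'$, forcing $n'$ to contain the summand $e$; a summand $t_1\cdot t_2$ of $n$ yields a pomset transition $n\xrightarrow{t_1}t_2$, which must be matched by $n'\xrightarrow{t_1}t_2'$ with $t_2\sim_p t_2'$, and the induction hypothesis finishes the case.

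Finally, to conclude, given closed basic terms $s$ and $t$ with $s\sim_p t$, pick normal forms $n,n'$ with $s=n$ and $t=n'$ (provable equalities); the soundness theorem (Theorem \ref{SAPTCPBE}) gives $s\sim_p n$ and $t\sim_p n'$, hence $n\sim_p n'$, and by the step above $n=_{AC} n'$, so $s=n=_{AC}n'=t$, as desired.

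The routine work is essentially the same bookkeeping as in Theorems \ref{CPPBE} and \ref{CAPTCSBE}, and no new obstacle arises because the normal form already has no occurrences of $\partial_H$, $\mid$, $\Theta$ or $\triangleleft$. The only point requiring attention is the pomset-transition case for a summand of the shape $(u_1\parallel\cdots\parallel u_n)\cdot t_2$, where the label is the pomset determined by the concurrency structure of the $u_l$; here I would appeal to the fact that a pomset transition $\xrightarrow{P}$ on a normal form decomposes uniquely according to the leading summand, as used in the proof of Theorem \ref{CPPBE}, so the matching transition on $n'$ produces an AC-equivalent leading factor and a bisimilar continuation to which the induction hypothesis applies. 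This uniqueness of decomposition is the main, but mild, obstacle.
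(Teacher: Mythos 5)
Your proposal follows exactly the paper's own route: eliminate $\partial_H$ via Theorem \ref{ETEncapsulation}, reduce to basic terms in normal form modulo AC of $+$ and $\parallel$, show $n\sim_p n'$ implies $n=_{AC}n'$ by induction on sizes as in Theorems \ref{CPPBE} and \ref{CAPTCSBE}, and close the argument with the soundness theorem \ref{SAPTCPBE}. The proof is correct and essentially identical to the paper's (which is stated more tersely by deferring to the earlier completeness proofs); your added remark on the decomposition of pomset transitions only makes explicit what the paper leaves implicit.
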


\begin{proof}
Firstly, by the elimination theorem of $APTC$ (see Theorem \ref{ETEncapsulation}), we know that the normal form of $APTC$ does not contain $\partial_H$, and for each closed $APTC$ term $p$, there exists a closed basic $APTC$ term $p'$, such that $APTC\vdash p=p'$, so, we only need to consider closed basic $APTC$ terms.

Similarly to Theorem \ref{CAPTCSBE}, we can prove that for normal forms $n$ and $n'$, if $n\sim_{p} n'$ then $n=_{AC}n'$.

Finally, let $s$ and $t$ be basic $APTC$ terms, and $s\sim_p t$, there are normal forms $n$ and $n'$, such that $s=n$ and $t=n'$. The soundness theorem of $APTC$ modulo pomset bisimulation equivalence (see Theorem \ref{SAPTCPBE}) yields $s\sim_p n$ and $t\sim_p n'$, so $n\sim_p s\sim_p t\sim_p n'$. Since if $n\sim_p n'$ then $n=_{AC}n'$, $s=n=_{AC}n'=t$, as desired.
\end{proof}

\begin{theorem}[Soundness of $APTC$ modulo hp-bisimulation equivalence]\label{SAPTCHPBE}
Let $x$ and $y$ be $APTC$ terms including encapsulation operator $\partial_H$. If $APTC\vdash x=y$, then $x\sim_{hp} y$.
\end{theorem}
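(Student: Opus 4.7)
The plan is to follow exactly the template used for the earlier soundness results (Theorems \ref{SAPTCSBE} and \ref{SAPTCPBE}), lifting the argument from pomset bisimulation to hp-bisimulation by tracking the posetal component. First, I would invoke the congruence theorem for $\partial_H$ with respect to $\sim_{hp}$ (which was established together with the other $APTC$ operators), so that it suffices to check each axiom in Table \ref{AxiomsForEncapsulation} individually. The defining axioms $D1$, $D2$, $D3$ are trivial because $\partial_H$ either leaves an atomic event or $\delta$ unchanged or renames it to $\delta$, so only $D4$, $D5$, $D6$ require real work.

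For each of these three axioms, I would set up a posetal product $(C(\textrm{lhs}),f,C(\textrm{rhs}))$ starting from $(\emptyset,\emptyset,\emptyset)$ and verify, using the transition rules of Table \ref{TRForEncapsulation} combined with those of Tables \ref{STRForBATC} and \ref{TRForParallel}, that matching transitions exist on both sides with the same label $e_i \notin H$. Concretely, for $D4$ one checks the four cases $p\xrightarrow{e_1}\surd$, $q\xrightarrow{e_2}\surd$, $p\xrightarrow{e_1}p'$, $q\xrightarrow{e_2}q'$ under the side condition $e_i\notin H$, and in each case extends $f$ to $f[e_i\mapsto e_i]$, obtaining a posetal relation that matches. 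For $D5$ one does the same for the single-event transitions of $\cdot$, relying on an induction hypothesis $\partial_H(p'\cdot q)\sim_{hp}\partial_H(p')\cdot\partial_H(q)$ to handle the residual. For $D6$ one uses the four parallel transition rules, with an induction hypothesis $\partial_H(p'\between q')\sim_{hp}\partial_H(p')\between\partial_H(q')$ to close the case in which both sides produce residuals.

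The only genuinely new ingredient compared to Theorem \ref{SAPTCPBE} is verifying the preservation of the isomorphism $f$ in the posetal product, but since $\partial_H$ does not rename any event outside $H$ and blocks events in $H$ symmetrically on both sides, every transition on the left corresponds to a transition on the right with the identical event, so $f[e\mapsto e]$ is automatically an order isomorphism on the extended configurations. The main obstacle, such as it is, will be keeping the bookkeeping for the residual cases of $D5$ and $D6$ honest: one must argue that the posetal relation one constructs is closed under the transitions produced by sequencing and parallel composition inside $\partial_H$, which is where the induction hypothesis on smaller subterms is used. Once these three axioms are verified, the theorem follows.
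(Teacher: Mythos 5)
Your proposal follows essentially the same route as the paper: reduce via congruence to checking the axioms of Table \ref{AxiomsForEncapsulation}, dismiss the defining axioms $D1$--$D3$, and verify $D4$--$D6$ by matching single-event transitions on both sides and extending the posetal isomorphism to $f[e\mapsto e]$, using an induction hypothesis on the residuals of $D5$ and $D6$. The paper states this only in outline (deferring the case analysis to Theorems \ref{SAPTCSBE} and \ref{SAPTCPBE} and noting that one "additionally checks" the posetal conditions), so your write-up is, if anything, a more explicit rendering of the same argument.
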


\begin{proof}
Since hp-bisimulation $\sim_{hp}$ is both an equivalent and a congruent relation with respect to the operator $\partial_H$, we only need to check if each axiom in Table \ref{AxiomsForEncapsulation} is sound modulo hp-bisimulation equivalence.

From the definition of hp-bisimulation (see Definition \ref{HHPB}), we know that hp-bisimulation is defined on the posetal product $(C_1,f,C_2),f:C_1\rightarrow C_2\textrm{ isomorphism}$. Two process terms $s$ related to $C_1$ and $t$ related to $C_2$, and $f:C_1\rightarrow C_2\textrm{ isomorphism}$. Initially, $(C_1,f,C_2)=(\emptyset,\emptyset,\emptyset)$, and $(\emptyset,\emptyset,\emptyset)\in\sim_{hp}$. When $s\xrightarrow{e}s'$ ($C_1\xrightarrow{e}C_1'$), there will be $t\xrightarrow{e}t'$ ($C_2\xrightarrow{e}C_2'$), and we define $f'=f[e\mapsto e]$. Then, if $(C_1,f,C_2)\in\sim_{hp}$, then $(C_1',f',C_2')\in\sim_{hp}$.

Similarly to the proof of soundness of $APTC$ modulo pomset bisimulation equivalence (see Theorem \ref{SAPTCPBE}), we can prove that each axiom in Table \ref{AxiomsForEncapsulation} is sound modulo hp-bisimulation equivalence, we just need additionally to check the above conditions on hp-bisimulation, we omit them.
\end{proof}

\begin{theorem}[Completeness of $APTC$ modulo hp-bisimulation equivalence]\label{CAPTCHPBE}
Let $p$ and $q$ be closed $APTC$ terms including encapsulation operator $\partial_H$, if $p\sim_{hp} q$ then $p=q$.
\end{theorem}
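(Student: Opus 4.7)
The plan is to follow the same three-stage template already used to prove completeness modulo step bisimulation (Theorem \ref{CAPTCSBE}) and pomset bisimulation (Theorem \ref{CAPTCPBE}), adapted to hp-bisimulation as was done in the $BATC$ and pure parallelism cases (Theorems \ref{CBATCHPBE} and \ref{CPHPBE}). First, I would apply the elimination theorem for $APTC$ including the encapsulation operator (Theorem \ref{ETEncapsulation}) to reduce the problem: for every closed $APTC$ term $p$ built with $\partial_H$, there is a closed basic $APTC$ term $p'$ with $APTC\vdash p=p'$, so it suffices to prove completeness on closed basic $APTC$ terms, which by definition contain no occurrences of $\partial_H$, $\mid$, $\Theta$, or $\triangleleft$.

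Second, I would put basic terms into normal form modulo the AC equalities for $+$ (axioms $A1,A2$ of Table \ref{AxiomsForBATC}) and for $\parallel$ (axioms $P2,P3$ of Table \ref{AxiomsForParallelism}), writing each class as $s_1+\cdots+s_k$ where every summand $s_i$ is an atomic event, or has the form $t_1\cdot\cdots\cdot t_m$, or has the form $u_1\parallel\cdots\parallel u_n$, exactly as in the proof of Theorem \ref{CPHPBE}. The core combinatorial step is then to show by induction on the combined size of two normal forms $n,n'$ that $n\sim_{hp}n'$ implies $n=_{AC}n'$. The induction splits on the shape of a summand of $n$: if it is an atomic event $e$ then $n\xrightarrow{e}\surd$ forces $n'\xrightarrow{e}\surd$ (tracking the singleton posetal relation $(\{e\},[e\mapsto e],\{e\})\in\sim_{hp}$), so $n'$ contains summand $e$; if it is $t_1\cdot t_2$ with $t_1$ either a single event or a pure parallel block $e_1\parallel\cdots\parallel e_n$, then the appropriate single-event or synchronized step from $n$ must be matched by $n'$, and the residuals are smaller normal forms still related by $\sim_{hp}$ (via the extended map $f[e\mapsto e]$), so the induction hypothesis yields $=_{AC}$-equality of the residuals and hence of the summands.

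Third, I would close the argument exactly as in Theorem \ref{CAPTCSBE}: given closed basic terms $s,t$ with $s\sim_{hp}t$, choose normal forms $n,n'$ with $s=n$ and $t=n'$; the soundness theorem for $APTC$ modulo hp-bisimulation (Theorem \ref{SAPTCHPBE}) gives $s\sim_{hp}n$ and $t\sim_{hp}n'$, hence $n\sim_{hp}n'$, hence $n=_{AC}n'$ by the induction above, hence $s=t$.

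The main obstacle is not really novel: it is tracking the posetal relation and its order-isomorphism $f$ through the inductive case on summands of the form $(u_1\parallel\cdots\parallel u_n)\cdot t_2$, where the matching transition in $n'$ must produce an isomorphic configuration so that both the label set and the causal structure of the residual are preserved. However, since basic $APTC$ terms involve only $+$, $\cdot$, and $\parallel$, and the latter only between atomic events inside a summand, the configurations generated at each step are flat and the isomorphism $f$ extends canonically by $f[e\mapsto e]$ for each matched event, which is essentially the situation already handled in Theorem \ref{CPHPBE}. The presence of $\partial_H$ in the original statement is neutralized by the elimination step, so no genuinely new hp-bisimulation reasoning is required beyond citing Theorems \ref{ETEncapsulation}, \ref{SAPTCHPBE}, and \ref{CPHPBE}.
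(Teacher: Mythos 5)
Your proposal follows exactly the paper's own three-stage argument: eliminate $\partial_H$ via Theorem \ref{ETEncapsulation}, reduce to normal forms modulo AC of $+$ and $\parallel$ and show $n\sim_{hp}n'$ implies $n=_{AC}n'$ by induction on size (as in Theorem \ref{CPHPBE}), then close with the soundness theorem \ref{SAPTCHPBE}. The only difference is that you spell out the normal-form induction that the paper compresses into a citation of the analogous pomset case, so the proposal is correct and essentially identical in approach.
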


\begin{proof}
Firstly, by the elimination theorem of $APTC$ (see Theorem \ref{ETEncapsulation}), we know that the normal form of $APTC$ does not contain $\partial_H$, and for each closed $APTC$ term $p$, there exists a closed basic $APTC$ term $p'$, such that $APTC\vdash p=p'$, so, we only need to consider closed basic $APTC$ terms.

Similarly to Theorem \ref{CAPTCPBE}, we can prove that for normal forms $n$ and $n'$, if $n\sim_{hp} n'$ then $n=_{AC}n'$.

Finally, let $s$ and $t$ be basic $APTC$ terms, and $s\sim_{hp} t$, there are normal forms $n$ and $n'$, such that $s=n$ and $t=n'$. The soundness theorem of $APTC$ modulo hp-bisimulation equivalence (see Theorem \ref{SAPTCHPBE}) yields $s\sim_{hp} n$ and $t\sim_{hp} n'$, so $n\sim_{hp} s\sim_{hp} t\sim_{hp} n'$. Since if $n\sim_{hp} n'$ then $n=_{AC}n'$, $s=n=_{AC}n'=t$, as desired.
\end{proof}

\section{Recursion}\label{rec}

In this section, we introduce recursion to capture infinite processes based on $APTC$. Since in $APTC$, there are three basic operators $\cdot$, $+$ and $\parallel$, the recursion must be adapted this situation to include $\parallel$.

In the following, $E,F,G$ are recursion specifications, $X,Y,Z$ are recursive variables.

\subsection{Guarded Recursive Specifications}

\begin{definition}[Recursive specification]
A recursive specification is a finite set of recursive equations

$$X_1=t_1(X_1,\cdots,X_n)$$
$$\cdots$$
$$X_n=t_n(X_1,\cdots,X_n)$$

where the left-hand sides of $X_i$ are called recursion variables, and the right-hand sides $t_i(X_1,\cdots,X_n)$ are process terms in $APTC$ with possible occurrences of the recursion variables $X_1,\cdots,X_n$.
\end{definition}

\begin{definition}[Solution]
Processes $p_1,\cdots,p_n$ are a solution for a recursive specification $\{X_i=t_i(X_1,\cdots,X_n)|i\in\{1,\cdots,n\}\}$ (with respect to truly concurrent bisimulation equivalences $\sim_s$($\sim_p$, $\sim_{hp}$)) if $p_i\sim_s (\sim_p, \sim_{hp})t_i(p_1,\cdots,p_n)$ for $i\in\{1,\cdots,n\}$.
\end{definition}

\begin{definition}[Guarded recursive specification]
A recursive specification

$$X_1=t_1(X_1,\cdots,X_n)$$
$$...$$
$$X_n=t_n(X_1,\cdots,X_n)$$

is guarded if the right-hand sides of its recursive equations can be adapted to the form by applications of the axioms in $APTC$ and replacing recursion variables by the right-hand sides of their recursive equations,

$$(a_{11}\parallel\cdots\parallel a_{1i_1})\cdot s_1(X_1,\cdots,X_n)+\cdots+(a_{k1}\parallel\cdots\parallel a_{ki_k})\cdot s_k(X_1,\cdots,X_n)+(b_{11}\parallel\cdots\parallel b_{1j_1})+\cdots+(b_{1j_1}\parallel\cdots\parallel b_{lj_l})$$

where $a_{11},\cdots,a_{1i_1},a_{k1},\cdots,a_{ki_k},b_{11},\cdots,b_{1j_1},b_{1j_1},\cdots,b_{lj_l}\in \mathbb{E}$, and the sum above is allowed to be empty, in which case it represents the deadlock $\delta$.
\end{definition}

\begin{definition}[Linear recursive specification]\label{LRS}
A recursive specification is linear if its recursive equations are of the form

$$(a_{11}\parallel\cdots\parallel a_{1i_1})X_1+\cdots+(a_{k1}\parallel\cdots\parallel a_{ki_k})X_k+(b_{11}\parallel\cdots\parallel b_{1j_1})+\cdots+(b_{1j_1}\parallel\cdots\parallel b_{lj_l})$$

where $a_{11},\cdots,a_{1i_1},a_{k1},\cdots,a_{ki_k},b_{11},\cdots,b_{1j_1},b_{1j_1},\cdots,b_{lj_l}\in \mathbb{E}$, and the sum above is allowed to be empty, in which case it represents the deadlock $\delta$.
\end{definition}

For a guarded recursive specifications $E$ with the form

$$X_1=t_1(X_1,\cdots,X_n)$$
$$\cdots$$
$$X_n=t_n(X_1,\cdots,X_n)$$

the behavior of the solution $\langle X_i|E\rangle$ for the recursion variable $X_i$ in $E$, where $i\in\{1,\cdots,n\}$, is exactly the behavior of their right-hand sides $t_i(X_1,\cdots,X_n)$, which is captured by the two transition rules in Table \ref{TRForGR}.

\begin{center}
    \begin{table}
        $$\frac{t_i(\langle X_1|E\rangle,\cdots,\langle X_n|E\rangle)\xrightarrow{\{e_1,\cdots,e_k\}}\surd}{\langle X_i|E\rangle\xrightarrow{\{e_1,\cdots,e_k\}}\surd}$$
        $$\frac{t_i(\langle X_1|E\rangle,\cdots,\langle X_n|E\rangle)\xrightarrow{\{e_1,\cdots,e_k\}} y}{\langle X_i|E\rangle\xrightarrow{\{e_1,\cdots,e_k\}} y}$$
        \caption{Transition rules of guarded recursion}
        \label{TRForGR}
    \end{table}
\end{center}

\begin{theorem}[Conservitivity of $APTC$ with guarded recursion]
$APTC$ with guarded recursion is a conservative extension of $APTC$.
\end{theorem}

\begin{proof}
Since the transition rules of $APTC$ are source-dependent, and the transition rules for guarded recursion in Table \ref{TRForGR} contain only a fresh constant in their source, so the transition rules of $APTC$ with guarded recursion are a conservative extension of those of $APTC$.
\end{proof}

\begin{theorem}[Congruence theorem of $APTC$ with guarded recursion]
Truly concurrent bisimulation equivalences $\sim_{p}$, $\sim_s$ and $\sim_{hp}$ are all congruences with respect to $APTC$ with guarded recursion.
\end{theorem}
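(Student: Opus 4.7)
The plan is to reduce this congruence theorem to facts already established earlier in the paper. The preceding congruence theorems for $BATC$, for parallelism, and for encapsulation have shown that each of $\sim_p$, $\sim_s$, and $\sim_{hp}$ is a congruence with respect to every $APTC$ operator ($\cdot$, $+$, $\parallel$, $\mid$, $\Theta$, $\triangleleft$, and $\partial_H$). Adding guarded recursion introduces only the new nullary constants $\langle X_i|E\rangle$, together with the two transition rules in Table \ref{TRForGR}. Since these recursion constants are $0$-ary, the congruence condition imposes no new obligation on them as operators; what needs checking is that the existing operators still preserve each bisimulation when their arguments may contain recursion constants.

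First, I would observe that every operator of $APTC$ has transition rules that are source-dependent and do not inspect the internal structure of their arguments beyond the transitions those arguments can perform. Consequently, the case analyses carried out in the earlier congruence proofs (for $\cdot$, $+$, $\parallel$, $\mid$, $\Theta$, $\triangleleft$, and $\partial_H$) carry over verbatim to the extended signature: one only needs to know that if $x \sim y$ then $x$ and $y$ mutually simulate each other's (single-event, step, or pomset) transitions, and this remains true whether $x$ and $y$ are closed $APTC$ terms or closed terms built using recursion constants. For the recursion construct itself, the two rules in Table \ref{TRForGR} state that $\langle X_i|E\rangle$ inherits exactly the transitions of $t_i(\langle X_1|E\rangle,\ldots,\langle X_n|E\rangle)$, so transparency of recursion to its body, combined with already-proved congruence for the operators appearing in $t_i$, immediately yields the required preservation properties under substitution in a context containing recursion.

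Concretely, the proof would proceed case-by-case: for each operator $f \in \{\cdot,+,\parallel,\mid,\Theta,\triangleleft,\partial_H\}$, assume $x_j \sim_\ast y_j$ (for $\ast \in \{p,s,hp\}$) and replay the corresponding case from the earlier congruence proof, noting that when a premise fires a transition of a recursion constant $\langle X_i|E\rangle$, the rule in Table \ref{TRForGR} allows us to pass to a matching transition of the body $t_i(\langle X_1|E\rangle,\ldots,\langle X_n|E\rangle)$, to which the inductive hypothesis of the congruence proof applies. Since both transition rules for guarded recursion have a fresh constant as source with no variables, they fit the same source-dependent format exploited in the earlier proofs.

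The main obstacle, I expect, is the extension from $\sim_s$ to $\sim_{hp}$, because hp-bisimulation carries the additional datum of a posetal isomorphism $f : C(x) \to C(y)$ that must be updated coherently along matching transitions. The delicate step is showing that when a recursion constant fires a transition inherited from its body, the extended map $f[e_1 \mapsto e_2]$ still witnesses hp-bisimilarity on the residuals; this will follow from the fact that the transition rules in Table \ref{TRForGR} preserve the event label and the structure of the body's configuration, but the posetal bookkeeping must be made explicit. The arguments for $\sim_p$ and $\sim_s$ reduce to verifying that pomset, respectively step, transitions of the body lift to transitions of $\langle X_i|E\rangle$ with the same label, which is immediate from the rules and requires no new machinery.
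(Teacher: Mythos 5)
Your proposal is correct and follows essentially the same route as the paper, which simply observes that (i) recursion bodies unfold, via the transition rules and the $APTC$ axioms, to terms built from the already-treated operators, and (ii) $\sim_p$, $\sim_s$ and $\sim_{hp}$ were already shown to be congruences for all operators of $APTC$. Your write-up is considerably more explicit than the paper's two-line sketch (in particular about the nullary status of the recursion constants and the posetal bookkeeping for $\sim_{hp}$), but the underlying argument is the same reduction.
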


\begin{proof}
It follows the following two facts:
\begin{enumerate}
  \item in a guarded recursive specification, right-hand sides of its recursive equations can be adapted to the form by applications of the axioms in $APTC$ and replacing recursion variables by the right-hand sides of their recursive equations;
  \item truly concurrent bisimulation equivalences $\sim_{p}$, $\sim_s$ and $\sim_{hp}$ are all congruences with respect to all operators of $APTC$.
\end{enumerate}
\end{proof}

\subsection{Recursive Definition and Specification Principles}

The $RDP$ (Recursive Definition Principle) and the $RSP$ (Recursive Specification Principle) are shown in Table \ref{RDPRSP}.

\begin{center}
\begin{table}
  \begin{tabular}{@{}ll@{}}
\hline No. &Axiom\\
  $RDP$ & $\langle X_i|E\rangle = t_i(\langle X_1|E\rangle,\cdots,\langle X_n|E\rangle)\quad (i\in\{1,\cdots,n\})$\\
  $RSP$ & if $y_i=t_i(y_1,\cdots,y_n)$ for $i\in\{1,\cdots,n\}$, then $y_i=\langle X_i|E\rangle \quad(i\in\{1,\cdots,n\})$\\
\end{tabular}
\caption{Recursive definition and specification principle}
\label{RDPRSP}
\end{table}
\end{center}

$RDP$ follows immediately from the two transition rules for guarded recursion, which express that $\langle X_i|E\rangle$ and $t_i(\langle X_1|E\rangle,\cdots,\langle X_n|E\rangle)$ have the same initial transitions for $i\in\{1,\cdots,n\}$. $RSP$ follows from the fact that guarded recursive specifications have only one solution.

\begin{theorem}[Elimination theorem of $APTC$ with linear recursion]\label{ETRecursion}
Each process term in $APTC$ with linear recursion is equal to a process term $\langle X_1|E\rangle$ with $E$ a linear recursive specification.
\end{theorem}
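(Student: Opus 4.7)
The plan is to prove this by structural induction on process terms $t$ built from atomic events in $\mathbb{E}\cup\{\delta\}$, the operators $+,\cdot,\parallel,\mid,\Theta,\triangleleft,\partial_H$, and recursion constants $\langle X_i|E\rangle$ for guarded recursive specifications $E$. The base case handles atomic events ($e$ satisfies the trivial linear specification $X=e$, and $\delta$ satisfies $X=\delta$), and the case of $\langle X_i|E\rangle$ for $E$ already linear is immediate. For the general recursion case, one first argues that guarded recursion can be reduced to linear recursion by iteratively applying the axioms of $APTC$ to bring right-hand sides into the required head-normal form $\sum_k (a_{k1}\parallel\cdots\parallel a_{ki_k})\cdot s_k + \sum_l (b_{l1}\parallel\cdots\parallel b_{lj_l})$, and then introducing fresh recursion variables for each sub-term $s_k$.

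For the compound cases, the approach is: assume by induction that $t_1 = \langle X_1|E_1\rangle$ and $t_2=\langle Y_1|E_2\rangle$ for disjoint linear specifications $E_1,E_2$, exhibit a linear specification $E$ whose designated solution equals the compound term, and apply $RSP$. For $t_1+t_2$, take the union $E_1\cup E_2$ and add a new variable $Z = \mathrm{rhs}(X_1)+\mathrm{rhs}(Y_1)$, which is already in linear form. For $t_1\cdot t_2$, replace every terminating summand $b_{l1}\parallel\cdots\parallel b_{lj_l}$ in the equation for each $X_i$ by $(b_{l1}\parallel\cdots\parallel b_{lj_l})Y_1$, using the elimination theorem $\ref{ETEncapsulation}$ and axioms $A4,A5$ to justify. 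For $\Theta$, $\triangleleft$, $\partial_H$ applied to a recursion constant, use axioms $CE21$--$CE24$, $U30$--$U37$, and $D4$--$D6$ respectively to distribute the operator over the linear equations, producing fresh variables $\Theta(X_i)$, $X_i\triangleleft Y_j$, $\partial_H(X_i)$ whose defining equations remain linear after simplification via $CE19,CE20,U25$--$U29,D1$--$D3$.

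The main obstacle will be the parallel operator $\parallel$ (and by extension $\mid$ and $\between$). Here one must introduce, for every pair $(X_i,Y_j)$ with $X_i$ from $E_1$ and $Y_j$ from $E_2$, a fresh variable $Z_{i,j}$ intended to satisfy $Z_{i,j} = \langle X_i|E_1\rangle \between \langle Y_j|E_2\rangle$. Using axioms $P1$--$P10$ and $C11$--$C18$ together with the head-normal form of the equations for $X_i$ and $Y_j$, one computes $\mathrm{rhs}(X_i)\between \mathrm{rhs}(Y_j)$ and must verify that after applying the expansion laws $P4$--$P8$ and $C12$--$C16$ the resulting expression rewrites to a finite sum of terms of the form $(c_1\parallel\cdots\parallel c_m)Z_{i',j'}$ plus terminating summands of the form $(c_1\parallel\cdots\parallel c_m)$. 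The delicate point is that the events $c_k$ here combine single events from $X_i$'s summands, from $Y_j$'s summands, and communication products $\gamma(a,b)$, so one must check that each summand lands in the linear shape of Definition \ref{LRS} and that the recursion targets $Z_{i',j'}$ are indeed among the finitely many fresh variables introduced. Once this closure under $\between$ is established, $RSP$ yields $\langle X_i|E_1\rangle\between\langle Y_j|E_2\rangle = Z_{i,j}$ in the combined specification, and in particular $t_1\between t_2 = Z_{1,1}$, completing the induction.
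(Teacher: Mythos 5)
Your proposal is correct and follows essentially the same strategy as the paper's proof: reduce the term by structural induction to a finite system of equations in the head-normal form $\sum_k(a_{k1}\parallel\cdots\parallel a_{ki_k})\cdot s_k+\sum_l(b_{l1}\parallel\cdots\parallel b_{lj_l})$, read these off as a linear recursive specification $E$, and conclude by $RSP$ that the term equals $\langle X_1|E\rangle$. The paper states this in two sentences without working the operator cases; your treatment of $+$, $\cdot$, and especially the paired variables $Z_{i,j}$ closing the system under $\between$ is exactly the detail the paper leaves implicit.
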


\begin{proof}
By applying structural induction with respect to term size, each process term $t_1$ in $APTC$ with linear recursion generates a process can be expressed in the form of equations

$$t_i=(a_{i11}\parallel\cdots\parallel a_{i1i_1})t_{i1}+\cdots+(a_{ik_i1}\parallel\cdots\parallel a_{ik_ii_k})t_{ik_i}+(b_{i11}\parallel\cdots\parallel b_{i1i_1})+\cdots+(b_{il_i1}\parallel\cdots\parallel b_{il_ii_l})$$

for $i\in\{1,\cdots,n\}$. Let the linear recursive specification $E$ consist of the recursive equations

$$X_i=(a_{i11}\parallel\cdots\parallel a_{i1i_1})X_{i1}+\cdots+(a_{ik_i1}\parallel\cdots\parallel a_{ik_ii_k})X_{ik_i}+(b_{i11}\parallel\cdots\parallel b_{i1i_1})+\cdots+(b_{il_i1}\parallel\cdots\parallel b_{il_ii_l})$$

for $i\in\{1,\cdots,n\}$. Replacing $X_i$ by $t_i$ for $i\in\{1,\cdots,n\}$ is a solution for $E$, $RSP$ yields $t_1=\langle X_1|E\rangle$.
\end{proof}

\begin{theorem}[Soundness of $APTC$ with guarded recursion]\label{SAPTCR}
Let $x$ and $y$ be $APTC$ with guarded recursion terms. If $APTC\textrm{ with guarded recursion}\vdash x=y$, then
\begin{enumerate}
  \item $x\sim_{s} y$;
  \item $x\sim_{p} y$;
  \item $x\sim_{hp} y$.
\end{enumerate}
\end{theorem}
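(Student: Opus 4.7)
The plan is to follow the same pattern used for the soundness results already established in the paper (Theorems \ref{SAPTCSBE}, \ref{SAPTCPBE}, \ref{SAPTCHPBE}): since $\sim_s$, $\sim_p$, $\sim_{hp}$ are congruences with respect to $APTC$ with guarded recursion (established immediately above), and since all axioms of $APTC$ itself are already sound modulo these equivalences, it suffices to verify the soundness of the two new principles, $RDP$ and $RSP$, in Table \ref{RDPRSP} modulo each of the three equivalences.

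For $RDP$, the plan is direct. The transition rules for guarded recursion in Table \ref{TRForGR} state that $\langle X_i|E\rangle \xrightarrow{e} \surd$ iff $t_i(\langle X_1|E\rangle, \ldots, \langle X_n|E\rangle) \xrightarrow{e} \surd$, and similarly for transitions to process terms. Consequently, for each of $\sim_s$, $\sim_p$, $\sim_{hp}$, the identity-like relation pairing $\langle X_i|E\rangle$ with $t_i(\langle X_1|E\rangle, \ldots, \langle X_n|E\rangle)$ (extended to the required posetal product in the $hp$-case via the identity order-isomorphism) is easily checked to satisfy the defining conditions of the respective bisimulation. I would spell this out for $\sim_s$ and then indicate that the pomset and hp-cases lift the same argument using the already-established correspondences between single-event, step, pomset and hp-transitions (as in the proofs of Theorems \ref{SPSBE}, \ref{SPPBE}, \ref{SPHPBE}).

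The main obstacle is $RSP$: establishing that any two solutions of a guarded recursive specification are truly concurrently bisimilar. The approach I would take is the standard one: given a guarded recursive specification $E$ with variables $X_1,\ldots,X_n$, and two tuples $(p_1,\ldots,p_n)$ and $(q_1,\ldots,q_n)$ of solutions, define
\[
R = \{(t(p_1,\ldots,p_n), t(q_1,\ldots,q_n)) \mid t \text{ a process term over } E\}
\]
and show that $R$ (suitably extended to a posetal relation in the $hp$-case) is a step/pomset/hp-bisimulation. The key technical lemma is that guardedness forces each right-hand side $t_i(X_1,\ldots,X_n)$ to be rewritable into the normal form $\sum_k (a_{k1}\parallel\cdots\parallel a_{ki_k}) \cdot s_k(\vec X) + \sum_l (b_{l1}\parallel\cdots\parallel b_{li_l})$; thus any initial transition of $t_i(\vec p)$ arises from an atomic summand and is matched by the corresponding transition of $t_i(\vec q)$, with residuals again of the form $s_k(\vec p)$ and $s_k(\vec q)$, keeping us inside $R$. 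In the pomset case one treats causally ordered pomset labels by iterating the argument along the causal chain, and in the hp-case one extends $R$ to $\{(C(t(\vec p)), f, C(t(\vec q)))\}$ with $f$ the obvious event-by-event identification, checking the isomorphism and history-preservation conditions step by step.

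Once $RDP$ and $RSP$ are shown sound for $\sim_s$, $\sim_p$, $\sim_{hp}$, the theorem follows by induction on the length of the derivation of $APTC\text{ with guarded recursion}\vdash x=y$, using the congruence theorem above to pass through contexts, and the already-proved soundness of the remaining $APTC$-axioms. The hardest part is the guardedness argument in $RSP$: one must be careful that unfolding the recursion variables via $RDP$ inside a right-hand side does not destroy guardedness of the resulting term, which is the standard content of the ``guardedness is preserved under unfolding'' observation that underlies Definition \ref{LRS} and the form assumed in the definition of guarded recursive specification.
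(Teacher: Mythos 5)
Your overall decomposition is the same as the paper's: since the truly concurrent bisimulations are congruences for $APTC$ with guarded recursion and the $APTC$ axioms are already known to be sound, it suffices to check $RDP$ and $RSP$ modulo $\sim_s$, then lift to $\sim_p$ by decomposing a causally ordered pomset label into a chain of single-event/step transitions, and to $\sim_{hp}$ by additionally tracking the posetal isomorphism. Where you genuinely diverge is on $RSP$. The paper's argument there is very thin: it displays the transition rules for $\langle X_i|E\rangle$ and then simply rewrites the same rules with $y_i$ in the conclusion, asserting $y_i\sim_s\langle X_i|E\rangle$; this does not really use guardedness and does not explain why an arbitrary solution $y_i$ has exactly those transitions. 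Your construction of the relation $R=\{(t(\vec p),t(\vec q))\mid t\text{ a term over }E\}$, together with the head-normal-form lemma that guardedness forces each right-hand side into the shape $\sum_k (a_{k1}\parallel\cdots\parallel a_{ki_k})\cdot s_k(\vec X)+\sum_l(b_{l1}\parallel\cdots\parallel b_{lj_l})$, is the standard and essentially complete argument for uniqueness of solutions, and it is strictly more rigorous than what the paper records. The price is that you must (as you note) verify that unfolding via $RDP$ preserves guardedness and that the residuals stay inside $R$; the paper avoids this work by not doing it. Your version is the one I would keep.
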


\begin{proof}
(1) Soundness of $APTC$ with guarded recursion with respect to step bisimulation $\sim_s$.

Since step bisimulation $\sim_s$ is both an equivalent and a congruent relation with respect to $APTC$ with guarded recursion, we only need to check if each axiom in Table \ref{RDPRSP} is sound modulo step bisimulation equivalence.

Though transition rules in Table \ref{TRForGR} are defined in the flavor of single event, they can be modified into a step (a set of events within which each event is pairwise concurrent), we omit them. If we treat a single event as a step containing just one event, the proof of this soundness theorem does not exist any problem, so we use this way and still use the transition rules in Table \ref{TRForGR}.

\begin{itemize}
  \item \textbf{$RDP$}. $\langle X_i|E\rangle = t_i(\langle X_1|E\rangle,\cdots,\langle X_n|E\rangle)\quad (i\in\{1,\cdots,n\})$, it is sufficient to prove that $\langle X_i|E\rangle \sim_s t_i(\langle X_1|E,\cdots,X_n|E\rangle)\quad (i\in\{1,\cdots,n\})$. By the transition rules for guarded recursion in Table \ref{TRForGR}, we get

      $$\frac{t_i(\langle X_1|E\rangle,\cdots,\langle X_n|E\rangle)\xrightarrow{\{e_1,\cdots,e_k\}}\surd}{\langle X_i|E\rangle\xrightarrow{\{e_1,\cdots,e_k\}}\surd}$$

      $$\frac{t_i(\langle X_1|E\rangle,\cdots,\langle X_n|E\rangle)\xrightarrow{\{e_1,\cdots,e_k\}} y}{\langle X_i|E\rangle\xrightarrow{\{e_1,\cdots,e_k\}} y}$$

      So, $\langle X_i|E\rangle \sim_s t_i(\langle X_1|E\rangle,\cdots,\langle X_n|E\rangle)\quad (i\in\{1,\cdots,n\})$, as desired.
  \item \textbf{$RSP$}. if $y_i=t_i(y_1,\cdots,y_n)$ for $i\in\{1,\cdots,n\}$, then $y_i=\langle X_i|E\rangle \quad(i\in\{1,\cdots,n\})$, it is sufficient to prove that if $y_i=t_i(y_1,\cdots,y_n)$ for $i\in\{1,\cdots,n\}$, then $y_i\sim_s \langle X_i|E\rangle \quad(i\in\{1,\cdots,n\})$. By the transition rules for guarded recursion in Table \ref{TRForGR}, we get

      $$\frac{t_i(\langle X_1|E\rangle,\cdots,\langle X_n|E\rangle)\xrightarrow{\{e_1,\cdots,e_k\}}\surd}{\langle X_i|E\rangle\xrightarrow{\{e_1,\cdots,e_k\}}\surd}$$

      $$\frac{t_i(\langle X_1|E\rangle,\cdots,\langle X_n|E\rangle)\xrightarrow{\{e_1,\cdots,e_k\}}\surd}{y_i\xrightarrow{\{e_1,\cdots,e_k\}}\surd}$$

      $$\frac{t_i(\langle X_1|E\rangle,\cdots,\langle X_n|E\rangle)\xrightarrow{\{e_1,\cdots,e_k\}} y}{\langle X_i|E\rangle\xrightarrow{\{e_1,\cdots,e_k\}} y}$$

      $$\frac{t_i(\langle X_1|E\rangle,\cdots,\langle X_n|E\rangle)\xrightarrow{\{e_1,\cdots,e_k\}} y}{y_i\xrightarrow{\{e_1,\cdots,e_k\}} y}$$

      So, if $y_i=t_i(y_1,\cdots,y_n)$ for $i\in\{1,\cdots,n\}$, then $y_i\sim_s \langle X_i|E\rangle \quad(i\in\{1,\cdots,n\})$, as desired.
\end{itemize}

(2) Soundness of $APTC$ with guarded recursion with respect to pomset bisimulation $\sim_p$.

Since pomset bisimulation $\sim_{p}$ is both an equivalent and a congruent relation with respect to the guarded recursion, we only need to check if each axiom in Table \ref{RDPRSP} is sound modulo pomset bisimulation equivalence.

From the definition of pomset bisimulation (see Definition \ref{PSB}), we know that pomset bisimulation is defined by pomset transitions, which are labeled by pomsets. In a pomset transition, the events in the pomset are either within causality relations (defined by $\cdot$) or in concurrency (implicitly defined by $\cdot$ and $+$, and explicitly defined by $\between$), of course, they are pairwise consistent (without conflicts). In (1), we have already proven the case that all events are pairwise concurrent, so, we only need to prove the case of events in causality. Without loss of generality, we take a pomset of $P=\{e_1,e_2:e_1\cdot e_2\}$. Then the pomset transition labeled by the above $P$ is just composed of one single event transition labeled by $e_1$ succeeded by another single event transition labeled by $e_2$, that is, $\xrightarrow{P}=\xrightarrow{e_1}\xrightarrow{e_2}$.

Similarly to the proof of soundness of $APTC$ with guarded recursion modulo step bisimulation equivalence (1), we can prove that each axiom in Table \ref{RDPRSP} is sound modulo pomset bisimulation equivalence, we omit them.

(3) Soundness of $APTC$ with guarded recursion with respect to hp-bisimulation $\sim_{hp}$.

Since hp-bisimulation $\sim_{hp}$ is both an equivalent and a congruent relation with respect to guarded recursion, we only need to check if each axiom in Table \ref{RDPRSP} is sound modulo hp-bisimulation equivalence.

From the definition of hp-bisimulation (see Definition \ref{HHPB}), we know that hp-bisimulation is defined on the posetal product $(C_1,f,C_2),f:C_1\rightarrow C_2\textrm{ isomorphism}$. Two process terms $s$ related to $C_1$ and $t$ related to $C_2$, and $f:C_1\rightarrow C_2\textrm{ isomorphism}$. Initially, $(C_1,f,C_2)=(\emptyset,\emptyset,\emptyset)$, and $(\emptyset,\emptyset,\emptyset)\in\sim_{hp}$. When $s\xrightarrow{e}s'$ ($C_1\xrightarrow{e}C_1'$), there will be $t\xrightarrow{e}t'$ ($C_2\xrightarrow{e}C_2'$), and we define $f'=f[e\mapsto e]$. Then, if $(C_1,f,C_2)\in\sim_{hp}$, then $(C_1',f',C_2')\in\sim_{hp}$.

Similarly to the proof of soundness of $APTC$ with guarded recursion modulo pomset bisimulation equivalence (2), we can prove that each axiom in Table \ref{RDPRSP} is sound modulo hp-bisimulation equivalence, we just need additionally to check the above conditions on hp-bisimulation, we omit them.
\end{proof}

\begin{theorem}[Completeness of $APTC$ with linear recursion]\label{CAPTCR}
Let $p$ and $q$ be closed $APTC$ with linear recursion terms, then,
\begin{enumerate}
  \item if $p\sim_{s} q$ then $p=q$;
  \item if $p\sim_{p} q$ then $p=q$;
  \item if $p\sim_{hp} q$ then $p=q$.
\end{enumerate}
\end{theorem}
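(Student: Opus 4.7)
The plan is to follow the standard completeness strategy for process algebras with recursion, reducing the problem first via the elimination theorem and then exploiting $RSP$. By Theorem \ref{ETRecursion}, every closed $APTC$ with linear recursion term is provably equal to some $\langle X_1\mid E\rangle$ where $E$ is a linear recursive specification. Hence it suffices to prove, for linear recursive specifications $E$ and $F$, that $\langle X_1\mid E\rangle \sim_\star \langle Y_1\mid F\rangle$ (with $\star\in\{s,p,hp\}$) implies $\langle X_1\mid E\rangle = \langle Y_1\mid F\rangle$.

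First I would fix notation: write $E$ as $\{X_i = T_i(X_1,\dots,X_n) : i=1,\dots,n\}$ and $F$ as $\{Y_j = U_j(Y_1,\dots,Y_m) : j=1,\dots,m\}$, with each right-hand side in the linear normal form of Definition \ref{LRS}, i.e.\ a sum of summands of the shapes $(a_{11}\parallel\cdots\parallel a_{1k})Z$ or $(b_{11}\parallel\cdots\parallel b_{1l})$. Given a truly concurrent bisimulation $R$ witnessing $\langle X_1\mid E\rangle \sim_\star \langle Y_1\mid F\rangle$, I would enumerate the (finitely many, by image-finiteness and the linear form) pairs $(\langle X_i\mid E\rangle, \langle Y_j\mid F\rangle)\in R$ reachable from $(\langle X_1\mid E\rangle,\langle Y_1\mid F\rangle)$, and introduce a fresh recursion variable $Z_{ij}$ for each such pair. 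The key construction is a merged linear recursive specification $G$ whose equations take the form
$$Z_{ij} = \sum_{k} (a_{k1}\parallel\cdots\parallel a_{kr_k})\cdot Z_{i'j'} + \sum_{l} (b_{l1}\parallel\cdots\parallel b_{lq_l}),$$
where the summands are obtained by matching the $\star$-bisimilar transitions from $\langle X_i\mid E\rangle$ and $\langle Y_j\mid F\rangle$. Guardedness and linearity of $G$ are inherited directly from the linear form of $E$ and $F$.

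Next, I would verify that both mappings $Z_{ij}\mapsto \langle X_i\mid E\rangle$ and $Z_{ij}\mapsto \langle Y_j\mid F\rangle$ are solutions of $G$. For each equation in $G$, this amounts to showing that the right-hand side, after substitution, is provably equal (using the axioms of $BATC$ together with $RDP$ and the parallelism axioms, in particular $A1$--$A3$ to normalize summands modulo $AC$) to $T_i(\langle X_1\mid E\rangle,\dots)$, respectively $U_j(\langle Y_1\mid F\rangle,\dots)$. This uses soundness (Theorem \ref{SAPTCR}) to justify that the bisimilarity-matched summands on the two sides give rise to the same axiomatic normal form modulo $AC$; here the argument for normal forms developed in the proofs of Theorems \ref{CAPTCSBE}, \ref{CAPTCPBE}, and \ref{CAPTCHPBE} is reused at the level of individual right-hand sides, which are basic $APTC$ terms. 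Applying $RSP$ to $G$ then yields $\langle X_i\mid E\rangle = \langle Z_{ij}\mid G\rangle = \langle Y_j\mid F\rangle$, and specializing to $(i,j)=(1,1)$ gives the desired equation.

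The main obstacle will be the bookkeeping that makes the matched merged specification $G$ \emph{provably} satisfied by both candidate solutions: one must ensure that every summand arising from a transition of $\langle X_i\mid E\rangle$ is reflected by a $\star$-matched summand of $\langle Y_j\mid F\rangle$ (and vice versa), so that the right-hand sides coincide modulo $AC$ rather than merely modulo $\sim_\star$. For pomset and hp-bisimulation this requires care: summands of the form $(a_{k1}\parallel\cdots\parallel a_{kr_k})Z_{i'j'}$ must be matched with the correct pomset label and, for $\sim_{hp}$, with a compatible isomorphism extension $f[\,\cdot\mapsto\cdot\,]$; image-finiteness and the fact that linear summands label transitions by multisets (rather than richer pomsets, since auto-concurrency inside a single parallel block is permitted but nested causality is not) keep this matching finite and combinatorially manageable. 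Once this is set up uniformly for $\star\in\{s,p,hp\}$, the three clauses of the theorem follow in parallel.
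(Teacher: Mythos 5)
Your proposal follows essentially the same route as the paper's own proof: reduce to linear recursive specifications via the elimination theorem, build a merged linear specification whose variables $Z_{XY}$ are indexed by pairs of $\sim_\star$-related recursion variables and whose summands are the bisimilarity-matched summands of $t_X$ and $t_Y$, verify via $RDP$ that both $Z_{XY}\mapsto\langle X|E_1\rangle$ and $Z_{XY}\mapsto\langle Y|E_2\rangle$ solve it, and conclude by $RSP$. The additional bookkeeping you flag for the pomset and hp cases is exactly what the paper leaves implicit ("can be proven similarly"), so the two arguments coincide in substance.
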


\begin{proof}
Firstly, by the elimination theorem of $APTC$ with guarded recursion (see Theorem \ref{ETRecursion}), we know that each process term in $APTC$ with linear recursion is equal to a process term $\langle X_1|E\rangle$ with $E$ a linear recursive specification.

It remains to prove the following cases.

(1) If $\langle X_1|E_1\rangle \sim_s \langle Y_1|E_2\rangle$ for linear recursive specification $E_1$ and $E_2$, then $\langle X_1|E_1\rangle = \langle Y_1|E_2\rangle$.

Let $E_1$ consist of recursive equations $X=t_X$ for $X\in \mathcal{X}$ and $E_2$
consists of recursion equations $Y=t_Y$ for $Y\in\mathcal{Y}$. Let the linear recursive specification $E$ consist of recursion equations $Z_{XY}=t_{XY}$, and $\langle X|E_1\rangle\sim_s\langle Y|E_2\rangle$, and $t_{XY}$ consists of the following summands:

\begin{enumerate}
  \item $t_{XY}$ contains a summand $(a_1\parallel\cdots\parallel a_m)Z_{X'Y'}$ iff $t_X$ contains the summand $(a_1\parallel\cdots\parallel a_m)X'$ and $t_Y$ contains the summand $(a_1\parallel\cdots\parallel a_m)Y'$ such that $\langle X'|E_1\rangle\sim_s\langle Y'|E_2\rangle$;
  \item $t_{XY}$ contains a summand $b_1\parallel\cdots\parallel b_n$ iff $t_X$ contains the summand $b_1\parallel\cdots\parallel b_n$ and $t_Y$ contains the summand $b_1\parallel\cdots\parallel b_n$.
\end{enumerate}

Let $\sigma$ map recursion variable $X$ in $E_1$ to $\langle X|E_1\rangle$, and let $\psi$ map recursion variable $Z_{XY}$ in $E$ to $\langle X|E_1\rangle$. So, $\sigma((a_1\parallel\cdots\parallel a_m)X')\equiv(a_1\parallel\cdots\parallel a_m)\langle X'|E_1\rangle\equiv\psi((a_1\parallel\cdots\parallel a_m)Z_{X'Y'})$, so by $RDP$, we get $\langle X|E_1\rangle=\sigma(t_X)=\psi(t_{XY})$. Then by $RSP$, $\langle X|E_1\rangle=\langle Z_{XY}|E\rangle$, particularly, $\langle X_1|E_1\rangle=\langle Z_{X_1Y_1}|E\rangle$. Similarly, we can obtain $\langle Y_1|E_2\rangle=\langle Z_{X_1Y_1}|E\rangle$. Finally, $\langle X_1|E_1\rangle=\langle Z_{X_1Y_1}|E\rangle=\langle Y_1|E_2\rangle$, as desired.

(2) If $\langle X_1|E_1\rangle \sim_p \langle Y_1|E_2\rangle$ for linear recursive specification $E_1$ and $E_2$, then $\langle X_1|E_1\rangle = \langle Y_1|E_2\rangle$.

It can be proven similarly to (1), we omit it.

(3) If $\langle X_1|E_1\rangle \sim_{hp} \langle Y_1|E_2\rangle$ for linear recursive specification $E_1$ and $E_2$, then $\langle X_1|E_1\rangle = \langle Y_1|E_2\rangle$.

It can be proven similarly to (1), we omit it.
\end{proof}

\subsection{Approximation Induction Principle}

In this subsection, we introduce approximation induction principle ($AIP$) and try to explain that $AIP$ is still valid in true concurrency. $AIP$ can be used to try and equate truly concurrent bisimilar guarded recursive specifications. $AIP$ says that if two process terms are truly concurrent bisimilar up to any finite depth, then they are truly concurrent bisimilar.

Also, we need the auxiliary unary projection operator $\pi_n$ for $n\in\mathbb{N}$ and $\mathbb{N}\triangleq\{0,1,2,\cdots\}$. The transition rules of $\pi_n$ are expressed in Table \ref{TRForProjection}.

\begin{center}
    \begin{table}
        $$\frac{x\xrightarrow{\{e_1,\cdots,e_k\}}\surd}{\pi_{n+1}(x)\xrightarrow{\{e_1,\cdots,e_k\}}\surd}
        \quad\frac{x\xrightarrow{\{e_1,\cdots,e_k\}}x'}{\pi_{n+1}(x)\xrightarrow{\{e_1,\cdots,e_k\}}\pi_n(x')}$$
        \caption{Transition rules of encapsulation operator $\partial_H$}
        \label{TRForProjection}
    \end{table}
\end{center}

Based on the transition rules for projection operator $\pi_n$ in Table \ref{TRForProjection}, we design the axioms as Table \ref{AxiomsForProjection} shows.

\begin{center}
    \begin{table}
        \begin{tabular}{@{}ll@{}}
            \hline No. &Axiom\\
            $PR1$ & $\pi_n(x+y)=\pi_n(x)+\pi_n(y)$\\
            $PR2$ & $\pi_n(x\parallel y)=\pi_n(x)\parallel \pi_n(y)$\\
            $PR3$ & $\pi_{n+1}(e_1\parallel\cdots\parallel e_k)=e_1\parallel\cdots\parallel e_k$\\
            $PR4$ & $\pi_{n+1}((e_1\parallel\cdots\parallel e_k)\cdot x)=(e_1\parallel\cdots\parallel e_k)\cdot\pi_n(x)$\\
            $PR5$ & $\pi_0(x)=\delta$\\
            $PR6$ & $\pi_n(\delta)=\delta$\\
        \end{tabular}
        \caption{Axioms of projection operator}
        \label{AxiomsForProjection}
    \end{table}
\end{center}

The axioms $PR1-PR2$ say that $\pi_n(s+t)$ and $\pi_n(s\parallel t)$ can execute transitions of $s$ and $t$ up to depth $n$. $PR3$ says that $\pi_{n+1}(e_1\parallel\cdots\parallel e_k)$ executes $\{e_1,\cdots,e_{k}\}$ and terminates successfully. $PR4$ says that $\pi_{n+1}((e_1\parallel\cdots\parallel e_k)\cdot t)$ executes $\{e_1,\cdots,e_{k}\}$ and then executes transitions of $t$ up to depth $n$. $PR5$ and $PR6$ say that $\pi_0(t)$ and $\pi_n(\delta)$ exhibit no actions.

\begin{theorem}[Conservativity of $APTC$ with projection operator and guarded recursion]
$APTC$  with projection operator and guarded recursion is a conservative extension of $APTC$ with guarded recursion.
\end{theorem}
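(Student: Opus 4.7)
The plan is to invoke the Conservative Extension Theorem (Theorem \ref{TCE}) exactly as the earlier conservativity results in this paper do (e.g., conservativity of $APTC$ over the algebra for parallelism, and of $APTC$ with guarded recursion over $APTC$). To apply that theorem, I need to verify two structural conditions: source-dependency of the base TSS (here, $APTC$ with guarded recursion), and the ``fresh source or fresh premise'' condition on every rule of the extension.

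First I would record that the TSS of $APTC$ with guarded recursion is source-dependent. This has essentially been established already in the paper: the transition rules for the $BATC$ operators $\cdot$, $+$ and for the parallelism operators $\parallel$, $\mid$, $\Theta$, $\triangleleft$, $\partial_H$ are all source-dependent (used in the earlier conservativity proofs), and the two rules for guarded recursion in Table \ref{TRForGR} introduce no new variables in their targets beyond those already source-dependent in $t_i(\langle X_1|E\rangle,\dots,\langle X_n|E\rangle)$. So I would just cite the previous conservativity arguments and note that adding linear/guarded recursion preserves source-dependency.

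Next I would inspect the two transition rules for the projection operator $\pi_n$ in Table \ref{TRForProjection}. In each rule the source is of the form $\pi_{n+1}(x)$, which contains the function symbol $\pi_{n+1}$. Since $\pi_n$ does not occur in the signature of $APTC$ with guarded recursion, $\pi_{n+1}\in\Sigma_1\setminus\Sigma_0$, so the source of each rule is fresh in the sense of Definition \emph{Freshness}. This is precisely the first disjunct of condition (2) of Theorem \ref{TCE}, so no premise analysis is needed.

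The main obstacle, such as it is, is really a bookkeeping one: making sure that when guarded recursion is included in the ``base'' TSS, source-dependency is preserved. Because the right-hand sides of guarded recursive equations are built from $APTC$ operators and recursion variables bound in the specification, and because the premises in Table \ref{TRForGR} are on $t_i(\langle X_1|E\rangle,\dots,\langle X_n|E\rangle)$, every variable appearing in the premise or target of the recursion rules is either the fresh constant $\langle X_i|E\rangle$ itself (which is the source) or is obtained from a source-dependent rule of $APTC$. Once this is pinned down, the two hypotheses of Theorem \ref{TCE} are satisfied, and the desired conservativity follows immediately.
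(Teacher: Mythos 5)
Your proposal is correct and follows essentially the same route as the paper: both invoke Theorem \ref{TCE} by observing that the transition rules of $APTC$ with guarded recursion are source-dependent and that the sources of the projection rules contain an occurrence of $\pi_n$ and are therefore fresh. Your additional bookkeeping on why the guarded-recursion rules preserve source-dependency is a reasonable elaboration of the same argument, not a different method.
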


\begin{proof}
It follows from the following two facts (see Theorem \ref{TCE}).

\begin{enumerate}
  \item The transition rules of $APTC$ with guarded recursion are all source-dependent;
  \item The sources of the transition rules for the projection operator contain an occurrence of $\pi_n$.
\end{enumerate}
\end{proof}

\begin{theorem}[Congruence theorem of projection operator $\pi_n$]
Truly concurrent bisimulation equivalences $\sim_{p}$, $\sim_s$, $\sim_{hp}$ and $\sim_{hhp}$ are all congruences with respect to projection operator $\pi_n$.
\end{theorem}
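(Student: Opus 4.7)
The plan is to follow the standard template used throughout this paper for congruence theorems of auxiliary operators (cf.\ the congruence proofs for $\partial_H$, $\Theta$, and $\triangleleft$). Since $\sim_p$, $\sim_s$, $\sim_{hp}$ and $\sim_{hhp}$ have already been established as equivalence relations on $APTC$ terms, it suffices to show each is preserved by $\pi_n$. I would case-split on the four equivalences, and within each case proceed by induction on the depth parameter $n$, using the transition rules in Table \ref{TRForProjection}. As in the earlier proofs, single-event transition rules are implicitly lifted to step/pomset transitions by treating a single event as a one-element step and propagating the lift through the causality structure.

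For the base case $n=0$, no transition rule in Table \ref{TRForProjection} fires for $\pi_0$, so $\pi_0(x)$ has no outgoing transitions regardless of $x$; hence $\pi_0(x) \sim_* \pi_0(y)$ trivially for each $\sim_* \in \{\sim_p,\sim_s,\sim_{hp},\sim_{hhp}\}$. For the inductive step, assume $x \sim_* y$. Any pomset/step transition $x \xrightarrow{X} x'$ is matched by some $y \xrightarrow{Y} y'$ with $X \sim Y$ and $x' \sim_* y'$; the transition rules then yield $\pi_{n+1}(x) \xrightarrow{X} \pi_n(x')$ and $\pi_{n+1}(y) \xrightarrow{Y} \pi_n(y')$, and the induction hypothesis at depth $n$ gives $\pi_n(x') \sim_* \pi_n(y')$, delivering $\pi_{n+1}(x) \sim_* \pi_{n+1}(y)$. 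The successful-termination rule is matched in the same way without a residual to recurse on.

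For the $hp$ case, I additionally thread through the order-isomorphism $f$ on posetal products: from $(C(x),f,C(y))\in\sim_{hp}$ and matched transitions labeled by $e_1,e_2$, extend $f$ to $f[e_1 \mapsto e_2]$; since projection does not alter the causality structure of already-executed events, the posetal products associated with $\pi_{n+1}(x)$ and $\pi_{n+1}(y)$ after the transition inherit this isomorphism, and the induction hypothesis on depth $n$ closes the argument. The $hhp$ case runs along the same lines, with the additional obligation that the posetal relation be downward closed.

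The main obstacle I anticipate is the $hhp$-bisimulation case, in view of Proposition \ref{SCPHHPBE}: there, distributivity of $\parallel$ over $+$ fails modulo $\sim_{hhp}$ precisely because downward closure can disagree with the hp-structure once parallel futures are exposed. For $\pi_n$ the situation is more benign since projection acts unaryly and only truncates futures, but one must still verify that, when a downward-closed witness $(C_1',f',C_2') \subseteq (C(\pi_{n+1}(x)),f,C(\pi_{n+1}(y)))$ is taken inside the candidate relation, the corresponding smaller pair lies in $\sim_{hhp}$; this reduces to checking that the truncation commutes with restriction to smaller configurations, which is where I would spend the most care.
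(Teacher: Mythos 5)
Your proposal follows essentially the same route as the paper's proof: match a transition of $x$ with one of $y$ using $x\sim_* y$, push both through the transition rules for $\pi_{n+1}$ in Table \ref{TRForProjection}, and close the recursive case $\pi_n(x')\sim_*\pi_n(y')$ by the hypothesis at depth $n$ (the paper phrases this as "the assumption" rather than an explicit induction on $n$, and likewise defers the step, hp- and hhp-cases to "similarly, we omit them"). Your version is somewhat more careful — the explicit base case at $n=0$ and the flagged downward-closure obligation for $\sim_{hhp}$ are both points the paper glosses over — but the underlying argument is the same.
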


\begin{proof}
(1) Case pomset bisimulation equivalence $\sim_p$.

Let $x$ and $y$ be $APTC$ with projection operator and guarded recursion processes, and $x\sim_{p} y$, it is sufficient to prove that $\pi_{n+1}(x)\sim_{p} \pi_{n+1}(y)$.

By the definition of pomset bisimulation $\sim_p$ (Definition \ref{PSB}), $x\sim_p y$ means that

$$x\xrightarrow{X} x' \quad y\xrightarrow{Y} y'$$

with $X\subseteq x$, $Y\subseteq y$, $X\sim Y$ and $x'\sim_p y'$.

By the pomset transition rules for projection operator $\pi_n$ in Table \ref{TRForProjection}, we can get

$$\pi_{n+1}(x)\xrightarrow{X} \surd \quad \pi_{n+1}(y)\xrightarrow{Y} \surd$$

with $X\subseteq x$, $Y\subseteq y$, and $X\sim Y$, so, we get $\pi_{n+1}(x)\sim_p \pi_{n+1}(y)$, as desired.

Or, we can get

$$\pi_{n+1}(x)\xrightarrow{X} \pi_n(x')\quad \pi_{n+1}(y)\xrightarrow{Y} \pi_n(y')$$

with $X\subseteq x$, $Y\subseteq y$, $X\sim Y$, $x'\sim_p y'$ and the assumption $\pi_n(x')\sim_p\pi_n(y')$, so, we get $\pi_{n+1}(x)\sim_p \pi_{n+1}(y)$, as desired.

(2) The cases of step bisimulation $\sim_s$, hp-bisimulation $\sim_{hp}$ and hhp-bisimulation $\sim_{hhp}$ can be proven similarly, we omit them.
\end{proof}

\begin{theorem}[Elimination theorem of $APTC$ with linear recursion and projection operator]\label{ETProjection}
Each process term in $APTC$ with linear recursion and projection operator is equal to a process term $\langle X_1|E\rangle$ with $E$ a linear recursive specification.
\end{theorem}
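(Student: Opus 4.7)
The plan is to proceed by structural induction, with the previous elimination result (Theorem \ref{ETRecursion}) as the base. Since Theorem \ref{ETRecursion} already shows that any $APTC$ with linear recursion term (without projection) equals some $\langle X_1 \mid E\rangle$ for a linear recursive specification $E$, it suffices to handle the projection operator $\pi_n$ on top. Concretely, I would first prove the following key lemma: for every linear recursive specification $E$ with recursion variables $X_1,\ldots,X_k$ and every $n\in\mathbb{N}$, there is a linear recursive specification $F$, with recursion variables $\{Y_{i,j} : 1\le i\le k,\ 0\le j\le n\}$, such that $\pi_n(\langle X_i\mid E\rangle) = \langle Y_{i,n}\mid F\rangle$ for all $i$.

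For the construction of $F$, suppose each equation in $E$ is in the linear form
$$X_i = \sum_{l} (a_{i,l,1}\parallel\cdots\parallel a_{i,l,k_l})\,X_{i_l}\; +\; \sum_{m} (b_{i,m,1}\parallel\cdots\parallel b_{i,m,h_m}).$$
Then I define
$$Y_{i,0} = \delta,\qquad Y_{i,j+1} = \sum_{l} (a_{i,l,1}\parallel\cdots\parallel a_{i,l,k_l})\,Y_{i_l,j}\; +\; \sum_{m} (b_{i,m,1}\parallel\cdots\parallel b_{i,m,h_m}).$$
By inspection this is a linear recursive specification in the sense of Definition \ref{LRS}. Using axioms $PR1$--$PR6$ together with $RDP$, one computes that the processes $\pi_n(\langle X_i\mid E\rangle)$ satisfy the defining equations of $F$ (with $\pi_n(\langle X_i\mid E\rangle)$ playing the role of $Y_{i,n}$), after which a single application of $RSP$ yields the desired equality $\pi_n(\langle X_i\mid E\rangle) = \langle Y_{i,n}\mid F\rangle$.

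With this lemma in hand, the main theorem follows by induction on the structure of a closed $APTC$ term $p$ with linear recursion and projection. If $p$ contains no occurrence of $\pi_n$, Theorem \ref{ETRecursion} applies directly. Otherwise, pick an innermost projection $\pi_n(q)$ occurring in $p$; by induction hypothesis $q = \langle X_1\mid E\rangle$ for a linear recursive specification $E$, so by the key lemma $\pi_n(q) = \langle Y_{1,n}\mid F\rangle$. Substituting back and reapplying Theorem \ref{ETRecursion} (possibly after merging the specifications into a single linear recursive specification, as in the proof of Theorem \ref{ETRecursion}), we obtain $p = \langle Z_1\mid G\rangle$ for a linear recursive specification $G$.

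The main obstacle will be the verification that the constructed specification $F$ is genuinely \emph{linear}: one must check that after applying $PR1$--$PR6$ the right-hand sides of $\pi_n(\langle X_i\mid E\rangle)$ reduce exactly to sums of summands of the shapes $(a_1\parallel\cdots\parallel a_m)\,Y_{i',n'}$ and $(b_1\parallel\cdots\parallel b_h)$, with no residual projection operators and no nested sequential or parallel constructions beyond those permitted in Definition \ref{LRS}. A secondary subtlety is the merger step when $p$ already contains recursion variables outside the innermost $\pi_n(q)$: one must combine the specification $G'$ guarding the outer context with the specification $F$ built from the projection into a single linear recursive specification, which follows the same $Z_{XY}$-style product construction used in the proof of Theorem \ref{CAPTCR}.
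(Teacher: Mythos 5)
Your proposal is correct, and it reaches the same endpoint as the paper (a linear recursive specification plus one application of $RSP$), but by a genuinely more explicit route. The paper's own proof is a single monolithic structural induction: it asserts that every term of $APTC$ with linear recursion and projection can be expanded into finitely many equations of the linear head-normal-form shape, packages these as a linear recursive specification $E$, applies $RSP$, and simply remarks at the end that $\pi_n$ no longer occurs in $E$. It never exhibits how the projection operator is actually pushed through a recursive constant. Your two-stage decomposition supplies exactly the missing mechanism: the key lemma with the depth-indexed variables $Y_{i,j}$ (with $Y_{i,0}=\delta$ via $PR5$ and $Y_{i,j+1}$ obtained from the equation for $X_i$ via $PR1$--$PR4$) makes explicit both why the resulting specification is again \emph{linear} and, importantly, why it is \emph{finite} --- the index $j$ ranges over $\{0,\ldots,n\}$, which is the termination argument the paper's proof tacitly relies on but never states. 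Your subsequent induction on innermost occurrences of $\pi_n$, followed by a re-application of Theorem \ref{ETRecursion} and a merge of specifications, is a clean way to organize the outer induction. The one caveat you correctly flag yourself is shared with the paper: the step $\pi_{j+1}((a_1\parallel\cdots\parallel a_{k})\cdot x)=(a_1\parallel\cdots\parallel a_{k})\cdot\pi_j(x)$ is not literally an instance of $PR4$, which is stated for a single event $e$; it requires the paper's standing convention that single-event axioms generalize to steps (or a combination with $PR2$/$PR3$), so this is a presentational gap of the axiom system rather than of your argument.
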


\begin{proof}
By applying structural induction with respect to term size, each process term $t_1$ in $APTC$ with linear recursion and projection operator $\pi_n$ generates a process can be expressed in the form of equations

$$t_i=(a_{i11}\parallel\cdots\parallel a_{i1i_1})t_{i1}+\cdots+(a_{ik_i1}\parallel\cdots\parallel a_{ik_ii_k})t_{ik_i}+(b_{i11}\parallel\cdots\parallel b_{i1i_1})+\cdots+(b_{il_i1}\parallel\cdots\parallel b_{il_ii_l})$$

for $i\in\{1,\cdots,n\}$. Let the linear recursive specification $E$ consist of the recursive equations

$$X_i=(a_{i11}\parallel\cdots\parallel a_{i1i_1})X_{i1}+\cdots+(a_{ik_i1}\parallel\cdots\parallel a_{ik_ii_k})X_{ik_i}+(b_{i11}\parallel\cdots\parallel b_{i1i_1})+\cdots+(b_{il_i1}\parallel\cdots\parallel b_{il_ii_l})$$

for $i\in\{1,\cdots,n\}$. Replacing $X_i$ by $t_i$ for $i\in\{1,\cdots,n\}$ is a solution for $E$, $RSP$ yields $t_1=\langle X_1|E\rangle$.

That is, in $E$, there is not the occurrence of projection operator $\pi_n$.
\end{proof}

\begin{theorem}[Soundness of $APTC$ with projection operator and guarded recursion]\label{SAPTCR}
Let $x$ and $y$ be $APTC$ with projection operator and guarded recursion terms. If $APTC$ with projection operator and guarded recursion $\vdash x=y$, then
\begin{enumerate}
  \item $x\sim_{s} y$;
  \item $x\sim_{p} y$;
  \item $x\sim_{hp} y$.
\end{enumerate}
\end{theorem}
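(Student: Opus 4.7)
The plan is to reduce the theorem to checking that each axiom in Table \ref{AxiomsForProjection} (namely $PR1$--$PR6$) is sound modulo each of $\sim_s$, $\sim_p$, and $\sim_{hp}$, since the soundness of all the previously established axioms (those of $BATC$, $APTC$, encapsulation, and the recursion principles $RDP$ and $RSP$) has already been proved in Theorems \ref{SBATCSBE}, \ref{SBATCPBE}, \ref{SBATCHPBE}, \ref{SPSBE}, \ref{SPPBE}, \ref{SPHPBE}, \ref{SAPTCSBE}, \ref{SAPTCPBE}, \ref{SAPTCHPBE}, and the earlier part of Theorem \ref{SAPTCR} for guarded recursion. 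Combined with the congruence theorem for $\pi_n$ just proved, equational reasoning then lifts soundness from the axioms to all derivable equalities.

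First I would observe that $\sim_s$, $\sim_p$, and $\sim_{hp}$ are all equivalence relations and, by the preceding congruence theorem, congruences with respect to $\pi_n$ (as well as all other operators of $APTC$ with guarded recursion). This reduces the problem to checking $PR1$--$PR6$ pointwise. For each of the three semantics, I would then fix two closed terms related by one of these axioms and exhibit the transitions of the left-hand side and right-hand side using the operational rules in Tables \ref{STRForBATC}, \ref{TRForParallel}, \ref{TRForGR}, and \ref{TRForProjection}, producing a bisimulation relation that contains the pair. For $PR3$, $PR5$, $PR6$ the verification is essentially immediate: these axioms either permit or block single-event transitions directly. For $PR1$, $PR2$, $PR4$, I would proceed by induction on the depth parameter $n$, with the inductive hypothesis providing the required bisimilarity on residuals like $\pi_n(x')$ after a first step.

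The passage from $\sim_s$ to $\sim_p$ follows the same pattern used in Theorem \ref{SPPBE}: a pomset transition decomposes as a sequence of step transitions along its causal order, so it suffices to lift the step-level case. For $\sim_{hp}$, the additional obligation is to track the isomorphism $f$ on the posetal product $(C_1,f,C_2)$; the projection operator does not alter labels or causal structure of the events it permits to fire, so extending $f$ by $f'=f[e\mapsto e]$ after a joint transition works exactly as in Theorems \ref{SPHPBE} and \ref{SAPTCHPBE}, and downward inheritance of the bisimulation is preserved.

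The main obstacle I anticipate is the interaction of $\pi_n$ with guarded recursion: a term like $\pi_n(\langle X_1|E\rangle)$ must be shown bisimilar to the corresponding finite unfolding, and care is needed because $\pi_n$ commutes with $+$ and $\parallel$ but prunes at depth. The inductive argument has to respect the guardedness condition of $E$ so that the residual after one step is again of the form $\pi_{n-1}(\cdot)$ on a legitimate $APTC$-with-guarded-recursion term. Once this is handled, the remaining axioms are routine transition-rule chases in the style already employed throughout Section \ref{aptc}, and I would not grind through each of the dozen or so subcases in the final write-up but rather point to the representative case of $PR4$ in each semantics and note that the others are analogous.
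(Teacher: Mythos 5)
Your proposal matches the paper's proof in all essentials: reduce to checking the projection axioms via the congruence property, dismiss $PR3$, $PR5$, $PR6$ as immediate, verify $PR1$, $PR2$, $PR4$ by transition-rule chases (with an inductive hypothesis on the residuals $\pi_n(\cdot)$, which the paper phrases as an ``assumption'' such as $\pi_n(p'\between q')=\pi_n(p')\between\pi_n(q')$), and then lift from $\sim_s$ to $\sim_p$ by decomposing pomset transitions causally and to $\sim_{hp}$ by tracking $f'=f[e\mapsto e]$ on the posetal product. The worry you raise about $\pi_n$ applied to recursive constants is not needed here, since $RDP$/$RSP$ were already handled in the preceding soundness theorem and this proof only has to cover the new axioms.
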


\begin{proof}
(1) Soundness of $APTC$ with projection operator and guarded recursion with respect to step bisimulation $\sim_s$.

Since step bisimulation $\sim_s$ is both an equivalent and a congruent relation with respect to $APTC$ with projection operator and guarded recursion, we only need to check if each axiom in Table \ref{AxiomsForProjection} is sound modulo step bisimulation equivalence.

Though transition rules in Table \ref{TRForProjection} are defined in the flavor of single event, they can be modified into a step (a set of events within which each event is pairwise concurrent), we omit them. If we treat a single event as a step containing just one event, the proof of this soundness theorem does not exist any problem, so we use this way and still use the transition rules in Table \ref{TRForProjection}.

We only prove the soundness of the non-trivial axioms $PR1$, $PR2$ and $PR4$.

\begin{itemize}
  \item \textbf{Axiom $PR1$}. Let $p$ and $q$ be $APTC$ with guarded recursion and projection operator processes. $\pi_n(p+q)=\pi_n(p)+\pi_n(q)$, it is sufficient to prove that $\pi_n(p+q)\sim_s\pi_n(p)+\pi_n(q)$. By the transition rules for projection operator $\pi_n$ in Table \ref{TRForProjection} and $+$ in Table \ref{STRForBATC}, we get

      $$\frac{p\xrightarrow{\{e_1,\cdots,e_k\}}\surd}{\pi_{n+1}(p+ q)\xrightarrow{\{e_1,\cdots,e_k\}}\surd}
      \quad\frac{p\xrightarrow{\{e_1,\cdots,e_k\}}\surd}{\pi_{n+1}(p)+ \pi_{n+1}(q)\xrightarrow{\{e_1,\cdots,e_k\}}\surd}$$

      $$\frac{q\xrightarrow{\{e_1',\cdots,e_k'\}}\surd}{\pi_{n+1}(p+ q)\xrightarrow{\{e_1',\cdots,e_k'\}}\surd}
      \quad\frac{q\xrightarrow{\{e_1',\cdots,e_k'\}}\surd}{\pi_{n+1}(p)+ \pi_{n+1}(q)\xrightarrow{\{e_1',\cdots,e_k'\}}\surd}$$

      $$\frac{p\xrightarrow{\{e_1,\cdots,e_k\}}p'}{\pi_{n+1}(p+ q)\xrightarrow{\{e_1,\cdots,e_k\}}\pi_n(p')}
      \quad\frac{p\xrightarrow{\{e_1,\cdots,e_k\}}p'}{\pi_{n+1}(p)+ \pi_{n+1}(q)\xrightarrow{\{e_1,\cdots,e_k\}}\pi_n(p')}$$

      $$\frac{q\xrightarrow{\{e_1',\cdots,e_k'\}}q'}{\pi_{n+1}(p+ q)\xrightarrow{\{e_1',\cdots,e_k'\}}\pi_{n}(q')}
      \quad\frac{q\xrightarrow{\{e_1',\cdots,e_k'\}}q'}{\pi_{n+1}(p)+ \pi_{n+1}(q)\xrightarrow{\{e_1',\cdots,e_k'\}}\pi_{n}(q')}$$, we get

      So, $\pi_n(p+q)\sim_s\pi_n(p)+\pi_n(q)$, as desired.
  \item \textbf{Axiom $PR2$}. Let $p,q$ be $APTC$  with guarded recursion and projection operator processes, and $\pi_{n}(p\parallel q)=\pi_{n}(p)\parallel\pi_{n}(q)$, it is sufficient to prove that $\pi_{n}(p\parallel q)\sim_s \pi_{n}(p)\parallel\pi_{n}(q)$. By the transition rules for operator $\parallel$ in Table \ref{TRForParallel} and $\pi_{n}$ in Table \ref{TRForProjection}, we get

      $$\frac{p\xrightarrow{e_1}\surd\quad q\xrightarrow{e_2}\surd}{\pi_{n+1}(p\parallel q)\xrightarrow{\{e_1,e_2\}}\surd}
      \quad\frac{p\xrightarrow{e_1}\surd\quad q\xrightarrow{e_2}\surd}{\pi_{n+1}(p)\parallel \pi_{n+1}(q)\xrightarrow{\{e_1,e_2\}}\surd}$$

      $$\frac{p\xrightarrow{e_1}p'\quad q\xrightarrow{e_2}\surd}{\pi_{n+1}(p\parallel q)\xrightarrow{\{e_1,e_2\}}\pi_{n}(p')}
      \quad\frac{p\xrightarrow{e_1}p'\quad q\xrightarrow{e_2}\surd}{\pi_{n+1}(p)\parallel \pi_{n+1}(q)\xrightarrow{\{e_1,e_2\}}\pi_{n}(p')}$$

      $$\frac{p\xrightarrow{e_1}\surd\quad q\xrightarrow{e_2}q'}{\pi_{n+1}(p\parallel q)\xrightarrow{\{e_1,e_2\}}\pi_{n}(q')}
      \quad\frac{p\xrightarrow{e_1}\surd\quad q\xrightarrow{e_2}q'}{\pi_{n+1}(p)\parallel \pi_{n+1}(q)\xrightarrow{\{e_1,e_2\}}\pi_{n}(q')}$$

      $$\frac{p\xrightarrow{e_1}p'\quad q\xrightarrow{e_2}q'}{\pi_{n+1}(p\parallel q)\xrightarrow{\{e_1,e_2\}}\pi_{n}(p'\between q')}
      \quad\frac{p\xrightarrow{e_1}p'\quad q\xrightarrow{e_2}q'}{\pi_{n+1}(p)\parallel \pi_{n+1}(q)\xrightarrow{\{e_1,e_2\}}\pi_{n}(p')\between\pi_{n}(q')}$$

      So, with the assumption $\pi_{n}(p'\between q')=\pi_{n}(p')\between\pi_{n}(q')$, $\pi_{n}(p\parallel q)\sim_s \pi_{n}(p)\parallel\pi_{n}(q)$, as desired.
  \item \textbf{Axiom $PR4$}. Let $p$ be an $APTC$ with guarded recursion and projection operator process, and $\pi_{n+1}(e\cdot p)=e\cdot\pi_{n}(p)$, it is sufficient to prove that $\pi_{n+1}(e\cdot p)\sim_s e\cdot\pi_{n}(p)$. By the transition rules for operator $\cdot$ in Table \ref{STRForBATC} and $\pi_n$ in Table \ref{TRForProjection}, we get

      $$\frac{e_1\parallel\cdots\parallel e_k\xrightarrow{\{e_1,\cdots,e_k\}}\surd}{\pi_{n+1}((e_1\parallel\cdots\parallel e_k)\cdot p)\xrightarrow{\{e_1,\cdots,e_k\}}\pi_n(p)}
      \quad\frac{e_1\parallel\cdots\parallel e_k\xrightarrow{\{e_1,\cdots,e_k\}}\surd}{(e_1\parallel\cdots\parallel e_k)\cdot\pi_n(p)\xrightarrow{\{e_1,\cdots,e_k\}}\pi_n(p)}$$

      So, $\pi_{n+1}(e\cdot p)\sim_s e\cdot\pi_{n}(p)$, as desired.
\end{itemize}
(2) Soundness of $APTC$ with guarded recursion and projection operator with respect to pomset bisimulation $\sim_p$.

Since pomset bisimulation $\sim_{p}$ is both an equivalent and a congruent relation with respect to $APTC$ with guarded recursion and projection operator, we only need to check if each axiom in Table \ref{AxiomsForProjection} is sound modulo pomset bisimulation equivalence.

From the definition of pomset bisimulation (see Definition \ref{PSB}), we know that pomset bisimulation is defined by pomset transitions, which are labeled by pomsets. In a pomset transition, the events in the pomset are either within causality relations (defined by $\cdot$) or in concurrency (implicitly defined by $\cdot$ and $+$, and explicitly defined by $\between$), of course, they are pairwise consistent (without conflicts). In (1), we have already proven the case that all events are pairwise concurrent, so, we only need to prove the case of events in causality. Without loss of generality, we take a pomset of $P=\{e_1,e_2:e_1\cdot e_2\}$. Then the pomset transition labeled by the above $P$ is just composed of one single event transition labeled by $e_1$ succeeded by another single event transition labeled by $e_2$, that is, $\xrightarrow{P}=\xrightarrow{e_1}\xrightarrow{e_2}$.

Similarly to the proof of soundness of $APTC$ with guarded recursion and projection operator modulo step bisimulation equivalence (1), we can prove that each axiom in Table \ref{AxiomsForProjection} is sound modulo pomset bisimulation equivalence, we omit them.

(3) Soundness of $APTC$ with guarded recursion and projection operator with respect to hp-bisimulation $\sim_{hp}$.

Since hp-bisimulation $\sim_{hp}$ is both an equivalent and a congruent relation with respect to $APTC$ with guarded recursion and projection operator, we only need to check if each axiom in Table \ref{AxiomsForProjection} is sound modulo hp-bisimulation equivalence.

From the definition of hp-bisimulation (see Definition \ref{HHPB}), we know that hp-bisimulation is defined on the posetal product $(C_1,f,C_2),f:C_1\rightarrow C_2\textrm{ isomorphism}$. Two process terms $s$ related to $C_1$ and $t$ related to $C_2$, and $f:C_1\rightarrow C_2\textrm{ isomorphism}$. Initially, $(C_1,f,C_2)=(\emptyset,\emptyset,\emptyset)$, and $(\emptyset,\emptyset,\emptyset)\in\sim_{hp}$. When $s\xrightarrow{e}s'$ ($C_1\xrightarrow{e}C_1'$), there will be $t\xrightarrow{e}t'$ ($C_2\xrightarrow{e}C_2'$), and we define $f'=f[e\mapsto e]$. Then, if $(C_1,f,C_2)\in\sim_{hp}$, then $(C_1',f',C_2')\in\sim_{hp}$.

Similarly to the proof of soundness of $APTC$ with guarded recursion and projection operator modulo pomset bisimulation equivalence (2), we can prove that each axiom in Table \ref{AxiomsForProjection} is sound modulo hp-bisimulation equivalence, we just need additionally to check the above conditions on hp-bisimulation, we omit them.
\end{proof}

Then $AIP$ is given in Table \ref{AIP}.

\begin{center}
    \begin{table}
        \begin{tabular}{@{}ll@{}}
            \hline No. &Axiom\\
            $AIP$ & if $\pi_n(x)=\pi_n(y)$ for $n\in\mathbb{N}$, then $x=y$\\
        \end{tabular}
        \caption{$AIP$}
        \label{AIP}
    \end{table}
\end{center}

\begin{theorem}[Soundness of $AIP$]\label{SAIP}
Let $x$ and $y$ be $APTC$ with projection operator and guarded recursion terms.

\begin{enumerate}
  \item If $\pi_n(x)\sim_s\pi_n(y)$ for $n\in\mathbb{N}$, then $x\sim_s y$;
  \item If $\pi_n(x)\sim_p\pi_n(y)$ for $n\in\mathbb{N}$, then $x\sim_p y$;
  \item If $\pi_n(x)\sim_{hp}\pi_n(y)$ for $n\in\mathbb{N}$, then $x\sim_{hp} y$.
\end{enumerate}
\end{theorem}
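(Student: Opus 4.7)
The plan is to handle all three parts uniformly by defining a candidate relation built from depth-bounded equivalences and showing it is a bisimulation of the appropriate flavor. For part (1), I would set
\[
R_s = \{(x,y) \mid \pi_n(x) \sim_s \pi_n(y) \text{ for every } n \in \mathbb{N}\}
\]
and verify that $R_s$ is a step bisimulation. The first step is a projection/transition lemma: by a straightforward induction on the proof of $x \xrightarrow{X} x'$ using the rules of Table \ref{TRForProjection}, one obtains that $x \xrightarrow{X} x'$ iff $\pi_{n+1}(x) \xrightarrow{X} \pi_n(x')$ for every $n$, and $x \xrightarrow{X} \surd$ iff $\pi_{n+1}(x) \xrightarrow{X} \surd$. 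Symmetrically, every nontrivial step of $\pi_{n+1}(x)$ has this shape.

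Given $(x,y) \in R_s$ and a transition $x \xrightarrow{X} x'$, the lemma gives $\pi_{n+1}(x) \xrightarrow{X} \pi_n(x')$, and from $\pi_{n+1}(x) \sim_s \pi_{n+1}(y)$ we get some matching $\pi_{n+1}(y) \xrightarrow{X} z_n$ with $\pi_n(x') \sim_s z_n$; applying the lemma in reverse, $z_n = \pi_n(y_n)$ for some $y_n$ with $y \xrightarrow{X} y_n$. The task is to pick a single $y'$ working uniformly for all $n$. This is the main obstacle and is handled by the image-finiteness assumption on PESs declared at the end of subsection on event structures: the set $\{y' \mid y \xrightarrow{X} y'\}$ is, up to $\sim_s$, finite, so by the pigeonhole principle there exists $y'$ in this set and an infinite $N \subseteq \mathbb{N}$ with $y' \sim_s y_n$ for all $n \in N$. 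Hence $\pi_n(x') \sim_s \pi_n(y_n) \sim_s \pi_n(y')$ for all $n \in N$, and since projections are monotone in $n$ (any match at depth $n$ restricts to a match at every smaller depth), the equivalence $\pi_n(x') \sim_s \pi_n(y')$ actually holds for every $n \in \mathbb{N}$. Therefore $(x',y') \in R_s$, as required. Successful termination is treated identically. The argument that $(\emptyset,\emptyset) \in R_s$ is trivial since $\pi_n$ preserves termination, so $R_s$ witnesses $x \sim_s y$.

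Parts (2) and (3) follow the same template with $R_p$ and $R_{hp}$ defined analogously. For (2), the projection/transition lemma extends to pomset transitions verbatim since pomset labels behave uniformly under $\pi_n$. For (3), the relation is a posetal relation
\[
R_{hp} = \{(C(x),f,C(y)) \mid (C(\pi_n(x)), f, C(\pi_n(y))) \in \sim_{hp} \text{ for every } n\},
\]
and one must additionally propagate the isomorphism $f$: when $x \xrightarrow{e_1} x'$ is matched by $y \xrightarrow{e_2} y'$ in the image-finite choice argument, the extension $f[e_1 \mapsto e_2]$ is supplied by the hp-bisimulation at each finite depth, and the pigeonhole step produces not only a common $y'$ but also a common extended isomorphism, using that events of $\pi_n(x')$ are the same as those of $x'$ up to depth $n$. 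The hardest part is this last step: ensuring both the state $y'$ and the isomorphism extension agree across all $n$, which uses image-finiteness applied to the (finite) set of possible $(e_2, y')$ pairs arising from matching transitions of $y$.
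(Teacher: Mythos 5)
Your proposal is correct and follows essentially the same route as the paper: both define the candidate relation as ``all finite projections are equivalent,'' use the correspondence between transitions of $x$ and of $\pi_{n+1}(x)$, and invoke image-finiteness of the successor set together with downward monotonicity in $n$ to extract a single matching successor $y'$ that works for every depth. The only cosmetic difference is that you extract $y'$ by pigeonhole on an infinite sequence over a finite set, whereas the paper takes the non-empty intersection of the decreasing finite sets $S_n=\{q'\mid q\xrightarrow{\{e_1,\cdots,e_k\}}q' \textrm{ and } \pi_n(p')\sim_s\pi_n(q')\}$; these are interchangeable, and your extra care in part (3) about applying finiteness to the pairs $(e_2,y')$ is a reasonable elaboration of what the paper leaves implicit.
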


\begin{proof}
(1) If $\pi_n(x)\sim_s\pi_n(y)$ for $n\in\mathbb{N}$, then $x\sim_s y$.

Since step bisimulation $\sim_{s}$ is both an equivalent and a congruent relation with respect to $APTC$ with guarded recursion and projection operator, we only need to check if $AIP$ in Table \ref{AIP} is sound modulo step bisimulation equivalence.

Let $p,p_0$ and $q,q_0$ be closed $APTC$ with projection operator and guarded recursion terms such that $\pi_n(p_0)\sim_s\pi_n(q_0)$ for $n\in\mathbb{N}$. We define a relation $R$ such that $p R q$ iff $\pi_n(p)\sim_s \pi_n(q)$. Obviously, $p_0 R q_0$, next, we prove that $R\in\sim_s$.

Let $p R q$ and $p\xrightarrow{\{e_1,\cdots,e_k\}}\surd$, then $\pi_1(p)\xrightarrow{\{e_1,\cdots,e_k\}}\surd$, $\pi_1(p)\sim_s\pi_1(q)$ yields $\pi_1(q)\xrightarrow{\{e_1,\cdots,e_k\}}\surd$. Similarly, $q\xrightarrow{\{e_1,\cdots,e_k\}}\surd$ implies $p\xrightarrow{\{e_1,\cdots,e_k\}}\surd$.

Let $p R q$ and $p\xrightarrow{\{e_1,\cdots,e_k\}}p'$. We define the set of process terms

$$S_n\triangleq\{q'|q\xrightarrow{\{e_1,\cdots,e_k\}}q'\textrm{ and }\pi_n(p')\sim_s\pi_n(q')\}$$

\begin{enumerate}
  \item Since $\pi_{n+1}(p)\sim_s\pi_{n+1}(q)$ and $\pi_{n+1}(p)\xrightarrow{\{e_1,\cdots,e_k\}}\pi_n(p')$, there exist $q'$ such that $\pi_{n+1}(q)\xrightarrow{\{e_1,\cdots,e_k\}}\pi_n(q')$ and $\pi_{n}(p')\sim_s\pi_{n}(q')$. So, $S_n$ is not empty.
  \item There are only finitely many $q'$ such that $q\xrightarrow{\{e_1,\cdots,e_k\}}q'$, so, $S_n$ is finite.
  \item $\pi_{n+1}(p)\sim_s\pi_{n+1}(q)$ implies $\pi_{n}(p')\sim_s\pi_{n}(q')$, so $S_n\supseteq S_{n+1}$.
\end{enumerate}

So, $S_n$ has a non-empty intersection, and let $q'$ be in this intersection, then $q\xrightarrow{\{e_1,\cdots,e_k\}}q'$ and $\pi_n(p')\sim_s\pi_n(q')$, so $p' R q'$. Similarly, let $p\mathcal{q}q$, we can obtain $q\xrightarrow{\{e_1,\cdots,e_k\}}q'$ implies $p\xrightarrow{\{e_1,\cdots,e_k\}}p'$ such that $p' R q'$.

Finally, $R\in\sim_s$ and $p_0\sim_s q_0$, as desired.

(2) If $\pi_n(x)\sim_p\pi_n(y)$ for $n\in\mathbb{N}$, then $x\sim_p y$.

Since pomset bisimulation $\sim_{p}$ is both an equivalent and a congruent relation with respect to $APTC$ with guarded recursion and projection operator, we only need to check if $AIP$ in Table \ref{AIP} is sound modulo pomset bisimulation equivalence.

From the definition of pomset bisimulation (see Definition \ref{PSB}), we know that pomset bisimulation is defined by pomset transitions, which are labeled by pomsets. In a pomset transition, the events in the pomset are either within causality relations (defined by $\cdot$) or in concurrency (implicitly defined by $\cdot$ and $+$, and explicitly defined by $\between$), of course, they are pairwise consistent (without conflicts). In (1), we have already proven the case that all events are pairwise concurrent, so, we only need to prove the case of events in causality. Without loss of generality, we take a pomset of $P=\{e_1,e_2:e_1\cdot e_2\}$. Then the pomset transition labeled by the above $P$ is just composed of one single event transition labeled by $e_1$ succeeded by another single event transition labeled by $e_2$, that is, $\xrightarrow{P}=\xrightarrow{e_1}\xrightarrow{e_2}$.

Similarly to the proof of soundness of $AIP$ modulo step bisimulation equivalence (1), we can prove that $AIP$ in Table \ref{AIP} is sound modulo pomset bisimulation equivalence, we omit them.

(3) If $\pi_n(x)\sim_{hp}\pi_n(y)$ for $n\in\mathbb{N}$, then $x\sim_{hp} y$.

Since hp-bisimulation $\sim_{hp}$ is both an equivalent and a congruent relation with respect to $APTC$ with guarded recursion and projection operator, we only need to check if $AIP$ in Table \ref{AIP} is sound modulo hp-bisimulation equivalence.

From the definition of hp-bisimulation (see Definition \ref{HHPB}), we know that hp-bisimulation is defined on the posetal product $(C_1,f,C_2),f:C_1\rightarrow C_2\textrm{ isomorphism}$. Two process terms $s$ related to $C_1$ and $t$ related to $C_2$, and $f:C_1\rightarrow C_2\textrm{ isomorphism}$. Initially, $(C_1,f,C_2)=(\emptyset,\emptyset,\emptyset)$, and $(\emptyset,\emptyset,\emptyset)\in\sim_{hp}$. When $s\xrightarrow{e}s'$ ($C_1\xrightarrow{e}C_1'$), there will be $t\xrightarrow{e}t'$ ($C_2\xrightarrow{e}C_2'$), and we define $f'=f[e\mapsto e]$. Then, if $(C_1,f,C_2)\in\sim_{hp}$, then $(C_1',f',C_2')\in\sim_{hp}$.

Similarly to the proof of soundness of $AIP$ modulo pomset bisimulation equivalence (2), we can prove that $AIP$ in Table \ref{AIP} is sound modulo hp-bisimulation equivalence, we just need additionally to check the above conditions on hp-bisimulation, we omit them.
\end{proof}

\begin{theorem}[Completeness of $AIP$]\label{CAIP}
Let $p$ and $q$ be closed $APTC$ with linear recursion and projection operator terms, then,
\begin{enumerate}
  \item if $p\sim_{s} q$ then $\pi_n(p)=\pi_n(q)$;
  \item if $p\sim_{p} q$ then $\pi_n(p)=\pi_n(q)$;
  \item if $p\sim_{hp} q$ then $\pi_n(p)=\pi_n(q)$.
\end{enumerate}
\end{theorem}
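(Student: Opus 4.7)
The plan is to reduce this completeness statement to the already-established completeness of $APTC$ on closed basic terms. By the elimination theorem for $APTC$ with linear recursion and projection (Theorem~\ref{ETProjection}), I may assume $p=\langle X_1|E_1\rangle$ and $q=\langle Y_1|E_2\rangle$ where $E_1,E_2$ are linear recursive specifications containing no projection operator.

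The key intermediate claim I would establish is the following: for every linear recursive specification $E$, every recursion variable $X$ occurring in $E$, and every $n\in\mathbb{N}$, there exists a closed basic $APTC$ term $b^X_n$ (containing neither recursion nor projection) such that $\pi_n(\langle X|E\rangle)=b^X_n$ is derivable. I would prove this by induction on $n$. The base case $n=0$ is immediate from $PR5$, giving $b^X_0=\delta$. For the inductive step, apply $RDP$ to unfold $\langle X|E\rangle$ into its linear form, a sum of summands of shape $(a_1\parallel\cdots\parallel a_m)\langle X_j|E\rangle$ together with non-recursive summands $b_1\parallel\cdots\parallel b_k$; then distribute $\pi_{n+1}$ across $+$ via $PR1$, dispatch the non-recursive summands by $PR2$ and $PR3$, reduce each recursive summand to the form $(a_1\parallel\cdots\parallel a_m)\cdot\pi_n(\langle X_j|E\rangle)$, and finally replace $\pi_n(\langle X_j|E\rangle)$ by $b^{X_j}_n$ via the induction hypothesis. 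The resulting term is manifestly basic and recursion-free.

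With this claim in hand, let $b^{X_1}_n$ and $b^{Y_1}_n$ denote the basic forms of $\pi_n(p)$ and $\pi_n(q)$ respectively. By the congruence theorem for $\pi_n$, from $p\sim_\alpha q$ (for $\alpha\in\{s,p,hp\}$) one obtains $\pi_n(p)\sim_\alpha\pi_n(q)$. Soundness of $APTC$ with projection and guarded recursion (Theorem~\ref{SAPTCR}) gives $b^{X_1}_n\sim_\alpha\pi_n(p)\sim_\alpha\pi_n(q)\sim_\alpha b^{Y_1}_n$. Since $b^{X_1}_n$ and $b^{Y_1}_n$ are closed basic $APTC$ terms, the earlier completeness results for $APTC$ modulo the corresponding truly concurrent bisimulation equivalence (Theorems~\ref{CAPTCSBE}, \ref{CAPTCPBE}, and~\ref{CAPTCHPBE}) force $b^{X_1}_n=b^{Y_1}_n$ in the axiom system. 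Chaining the derivable equalities then yields $\pi_n(p)=b^{X_1}_n=b^{Y_1}_n=\pi_n(q)$, as required.

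The main obstacle I expect is the inductive step of the reduction to basic form. The projection axiom $PR4$ is phrased only for sequential prefixes of a \emph{single} atomic event, so the linear summands $(a_1\parallel\cdots\parallel a_m)\langle X_j|E\rangle$ do not fit it directly. I would bridge the gap by first deriving an auxiliary equation $\pi_{n+1}((a_1\parallel\cdots\parallel a_m)\cdot x)=(a_1\parallel\cdots\parallel a_m)\cdot\pi_n(x)$, obtained by expanding the parallel prefix into a sum of sequential prefixes (each beginning with a single atomic event) via $P4$--$P6$, applying $PR4$ and $PR1$ summandwise, and then re-folding using $P4$--$P6$ in the other direction. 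Once this derived rule is in place, the inductive bookkeeping and the concluding appeal to completeness for basic terms are routine, and the three cases $\alpha\in\{s,p,hp\}$ are handled uniformly since only congruence, soundness, and basic-term completeness (all already established in each case) are invoked.
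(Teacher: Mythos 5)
Your proposal follows essentially the same route as the paper's proof: eliminate to linear-recursive form via Theorem~\ref{ETProjection}, reduce $\pi_n(p)$ and $\pi_n(q)$ to closed basic $APTC$ terms, and then chain the congruence of $\pi_n$, the soundness theorem, and the completeness of $APTC$ on basic terms to conclude $\pi_n(p)=\pi_n(q)$. The only difference is that you make explicit, by induction on $n$ (including the derived projection law for parallel prefixes $(a_1\parallel\cdots\parallel a_m)\cdot x$), the step where the paper merely asserts that basic terms $p'=\pi_n(p)$ and $q'=\pi_n(q)$ exist.
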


\begin{proof}
Firstly, by the elimination theorem of $APTC$ with guarded recursion and projection operator (see Theorem \ref{ETProjection}), we know that each process term in $APTC$ with linear recursion and projection operator is equal to a process term $\langle X_1|E\rangle$ with $E$ a linear recursive specification:

$$X_i=(a_{i11}\parallel\cdots\parallel a_{i1i_1})X_{i1}+\cdots+(a_{ik_i1}\parallel\cdots\parallel a_{ik_ii_k})X_{ik_i}+(b_{i11}\parallel\cdots\parallel b_{i1i_1})+\cdots+(b_{il_i1}\parallel\cdots\parallel b_{il_ii_l})$$

for $i\in\{1,\cdots,n\}$.

It remains to prove the following cases.

(1) if $p\sim_{s} q$ then $\pi_n(p)=\pi_n(q)$.

Let $p\sim_s q$, and fix an $n\in\mathbb{N}$, there are $p',q'$ in basic $APTC$ terms such that $p'=\pi_n(p)$ and $q'=\pi_n(q)$. Since $\sim_s$ is a congruence with respect to $APTC$, if $p\sim_s q$ then $\pi_n(p)\sim_s\pi_n(q)$. The soundness theorem yields $p'\sim_s\pi_n(p)\sim_s\pi_n(q)\sim_s q'$. Finally, the completeness of $APTC$ modulo $\sim_s$ (see Theorem \ref{SAPTCSBE}) ensures $p'=q'$, and $\pi_n(p)=p'=q'=\pi_n(q)$, as desired.

(2) if $p\sim_{p} q$ then $\pi_n(p)=\pi_n(q)$.

Let $p\sim_p q$, and fix an $n\in\mathbb{N}$, there are $p',q'$ in basic $APTC$ terms such that $p'=\pi_n(p)$ and $q'=\pi_n(q)$. Since $\sim_p$ is a congruence with respect to $APTC$, if $p\sim_p q$ then $\pi_n(p)\sim_p\pi_n(q)$. The soundness theorem yields $p'\sim_p\pi_n(p)\sim_p\pi_n(q)\sim_p q'$. Finally, the completeness of $APTC$ modulo $\sim_p$ (see Theorem \ref{SAPTCPBE}) ensures $p'=q'$, and $\pi_n(p)=p'=q'=\pi_n(q)$, as desired.

(3) if $p\sim_{hp} q$ then $\pi_n(p)=\pi_n(q)$.

Let $p\sim_{hp} q$, and fix an $n\in\mathbb{N}$, there are $p',q'$ in basic $APTC$ terms such that $p'=\pi_n(p)$ and $q'=\pi_n(q)$. Since $\sim_{hp}$ is a congruence with respect to $APTC$, if $p\sim_{hp} q$ then $\pi_n(p)\sim_{hp}\pi_n(q)$. The soundness theorem yields $p'\sim_{hp}\pi_n(p)\sim_{hp}\pi_n(q)\sim_{hp} q'$. Finally, the completeness of $APTC$ modulo $\sim_{hp}$ (see Theorem \ref{SAPTCHPBE}) ensures $p'=q'$, and $\pi_n(p)=p'=q'=\pi_n(q)$, as desired.
\end{proof}

\section{Abstraction}\label{abs}

To abstract away from the internal implementations of a program, and verify that the program exhibits the desired external behaviors, the silent step $\tau$ and abstraction operator $\tau_I$ are introduced, where $I\subseteq \mathbb{E}$ denotes the internal events. The silent step $\tau$ represents the internal events, when we consider the external behaviors of a process, $\tau$ events can be removed, that is, $\tau$ events must keep silent. The transition rule of $\tau$ is shown in Table \ref{TRForTau}. In the following, let the atomic event $e$ range over $\mathbb{E}\cup\{\delta\}\cup\{\tau\}$, and let the communication function $\gamma:\mathbb{E}\cup\{\tau\}\times \mathbb{E}\cup\{\tau\}\rightarrow \mathbb{E}\cup\{\delta\}$, with each communication involved $\tau$ resulting into $\delta$.

\begin{center}
    \begin{table}
        $$\frac{}{\tau\xrightarrow{\tau}\surd}$$
        \caption{Transition rule of the silent step}
        \label{TRForTau}
    \end{table}
\end{center}

The silent step $\tau$ was firstly introduced by Milner in his CCS \cite{CCS}, the algebraic laws about $\tau$ were introduced in \cite{ALNC}, and finally the axiomatization of $\tau$ and $\tau_I$ formed a part of $ACP$ \cite{ACP}. Though $\tau$ has been discussed in the interleaving bisimulation background, several years ago, we introduced $\tau$ into true concurrency, called weakly true concurrency \cite{WTC}, and also designed its logic based on a uniform logic for true concurrency \cite{LTC1} \cite{LTC2}.

In this section, we try to find the algebraic laws of $\tau$ and $\tau_I$ in true concurrency, or, exactly, to what extent the laws of $\tau$ and $\tau_I$ in interleaving bisimulation fit the situation of true concurrency.

\subsection{Rooted Branching Truly Concurrent Bisimulation Equivalence}

In section \ref{tc}, we introduce $\tau$ into event structure, and also give the concept of weakly true concurrency. In this subsection, we give the concepts of rooted branching truly concurrent bisimulation equivalences, based on these concepts, we can design the axiom system of the silent step $\tau$ and the abstraction operator $\tau_I$. Similarly to rooted branching bisimulation equivalence, rooted branching truly concurrent bisimulation equivalences are following.

\begin{definition}[Branching pomset, step bisimulation]\label{BPSB}
Assume a special termination predicate $\downarrow$, and let $\surd$ represent a state with $\surd\downarrow$. Let $\mathcal{E}_1$, $\mathcal{E}_2$ be PESs. A branching pomset bisimulation is a relation $R\subseteq\mathcal{C}(\mathcal{E}_1)\times\mathcal{C}(\mathcal{E}_2)$, such that:
 \begin{enumerate}
   \item if $(C_1,C_2)\in R$, and $C_1\xrightarrow{X}C_1'$ then
   \begin{itemize}
     \item either $X\equiv \tau^*$, and $(C_1',C_2)\in R$;
     \item or there is a sequence of (zero or more) $\tau$-transitions $C_2\xrightarrow{\tau^*} C_2^0$, such that $(C_1,C_2^0)\in R$ and $C_2^0\xRightarrow{X}C_2'$ with $(C_1',C_2')\in R$;
   \end{itemize}
   \item if $(C_1,C_2)\in R$, and $C_2\xrightarrow{X}C_2'$ then
   \begin{itemize}
     \item either $X\equiv \tau^*$, and $(C_1,C_2')\in R$;
     \item or there is a sequence of (zero or more) $\tau$-transitions $C_1\xrightarrow{\tau^*} C_1^0$, such that $(C_1^0,C_2)\in R$ and $C_1^0\xRightarrow{X}C_1'$ with $(C_1',C_2')\in R$;
   \end{itemize}
   \item if $(C_1,C_2)\in R$ and $C_1\downarrow$, then there is a sequence of (zero or more) $\tau$-transitions $C_2\xrightarrow{\tau^*}C_2^0$ such that $(C_1,C_2^0)\in R$ and $C_2^0\downarrow$;
   \item if $(C_1,C_2)\in R$ and $C_2\downarrow$, then there is a sequence of (zero or more) $\tau$-transitions $C_1\xrightarrow{\tau^*}C_1^0$ such that $(C_1^0,C_2)\in R$ and $C_1^0\downarrow$.
 \end{enumerate}

We say that $\mathcal{E}_1$, $\mathcal{E}_2$ are branching pomset bisimilar, written $\mathcal{E}_1\approx_{bp}\mathcal{E}_2$, if there exists a branching pomset bisimulation $R$, such that $(\emptyset,\emptyset)\in R$.

By replacing pomset transitions with steps, we can get the definition of branching step bisimulation. When PESs $\mathcal{E}_1$ and $\mathcal{E}_2$ are branching step bisimilar, we write $\mathcal{E}_1\approx_{bs}\mathcal{E}_2$.
\end{definition}

\begin{definition}[Rooted branching pomset, step bisimulation]\label{RBPSB}
Assume a special termination predicate $\downarrow$, and let $\surd$ represent a state with $\surd\downarrow$. Let $\mathcal{E}_1$, $\mathcal{E}_2$ be PESs. A rooted branching pomset bisimulation is a relation $R\subseteq\mathcal{C}(\mathcal{E}_1)\times\mathcal{C}(\mathcal{E}_2)$, such that:
 \begin{enumerate}
   \item if $(C_1,C_2)\in R$, and $C_1\xrightarrow{X}C_1'$ then $C_2\xrightarrow{X}C_2'$ with $C_1'\approx_{bp}C_2'$;
   \item if $(C_1,C_2)\in R$, and $C_2\xrightarrow{X}C_2'$ then $C_1\xrightarrow{X}C_1'$ with $C_1'\approx_{bp}C_2'$;
   \item if $(C_1,C_2)\in R$ and $C_1\downarrow$, then $C_2\downarrow$;
   \item if $(C_1,C_2)\in R$ and $C_2\downarrow$, then $C_1\downarrow$.
 \end{enumerate}

We say that $\mathcal{E}_1$, $\mathcal{E}_2$ are rooted branching pomset bisimilar, written $\mathcal{E}_1\approx_{rbp}\mathcal{E}_2$, if there exists a rooted branching pomset bisimulation $R$, such that $(\emptyset,\emptyset)\in R$.

By replacing pomset transitions with steps, we can get the definition of rooted branching step bisimulation. When PESs $\mathcal{E}_1$ and $\mathcal{E}_2$ are rooted branching step bisimilar, we write $\mathcal{E}_1\approx_{rbs}\mathcal{E}_2$.
\end{definition}

\begin{definition}[Branching (hereditary) history-preserving bisimulation]\label{BHHPB}
Assume a special termination predicate $\downarrow$, and let $\surd$ represent a state with $\surd\downarrow$. A branching history-preserving (hp-) bisimulation is a weakly posetal relation $R\subseteq\mathcal{C}(\mathcal{E}_1)\overline{\times}\mathcal{C}(\mathcal{E}_2)$ such that:

 \begin{enumerate}
   \item if $(C_1,f,C_2)\in R$, and $C_1\xrightarrow{e_1}C_1'$ then
   \begin{itemize}
     \item either $e_1\equiv \tau$, and $(C_1',f[e_1\mapsto \tau],C_2)\in R$;
     \item or there is a sequence of (zero or more) $\tau$-transitions $C_2\xrightarrow{\tau^*} C_2^0$, such that $(C_1,f,C_2^0)\in R$ and $C_2^0\xrightarrow{e_2}C_2'$ with $(C_1',f[e_1\mapsto e_2],C_2')\in R$;
   \end{itemize}
   \item if $(C_1,f,C_2)\in R$, and $C_2\xrightarrow{e_2}C_2'$ then
   \begin{itemize}
     \item either $e_2\equiv \tau$, and $(C_1,f[e_2\mapsto \tau],C_2')\in R$;
     \item or there is a sequence of (zero or more) $\tau$-transitions $C_1\xrightarrow{\tau^*} C_1^0$, such that $(C_1^0,f,C_2)\in R$ and $C_1^0\xrightarrow{e_1}C_1'$ with $(C_1',f[e_2\mapsto e_1],C_2')\in R$;
   \end{itemize}
   \item if $(C_1,f,C_2)\in R$ and $C_1\downarrow$, then there is a sequence of (zero or more) $\tau$-transitions $C_2\xrightarrow{\tau^*}C_2^0$ such that $(C_1,f,C_2^0)\in R$ and $C_2^0\downarrow$;
   \item if $(C_1,f,C_2)\in R$ and $C_2\downarrow$, then there is a sequence of (zero or more) $\tau$-transitions $C_1\xrightarrow{\tau^*}C_1^0$ such that $(C_1^0,f,C_2)\in R$ and $C_1^0\downarrow$.
 \end{enumerate}

$\mathcal{E}_1,\mathcal{E}_2$ are branching history-preserving (hp-)bisimilar and are written $\mathcal{E}_1\approx_{bhp}\mathcal{E}_2$ if there exists a branching hp-bisimulation $R$ such that $(\emptyset,\emptyset,\emptyset)\in R$.

A branching hereditary history-preserving (hhp-)bisimulation is a downward closed branching hp-bisimulation. $\mathcal{E}_1,\mathcal{E}_2$ are branching hereditary history-preserving (hhp-)bisimilar and are written $\mathcal{E}_1\approx_{bhhp}\mathcal{E}_2$.
\end{definition}

\begin{definition}[Rooted branching (hereditary) history-preserving bisimulation]\label{RBHHPB}
Assume a special termination predicate $\downarrow$, and let $\surd$ represent a state with $\surd\downarrow$. A rooted branching history-preserving (hp-) bisimulation is a weakly posetal relation $R\subseteq\mathcal{C}(\mathcal{E}_1)\overline{\times}\mathcal{C}(\mathcal{E}_2)$ such that:

 \begin{enumerate}
   \item if $(C_1,f,C_2)\in R$, and $C_1\xrightarrow{e_1}C_1'$, then $C_2\xrightarrow{e_2}C_2'$ with $C_1'\approx_{bhp}C_2'$;
   \item if $(C_1,f,C_2)\in R$, and $C_2\xrightarrow{e_2}C_2'$, then $C_1\xrightarrow{e_1}C_1'$ with $C_1'\approx_{bhp}C_2'$;
   \item if $(C_1,f,C_2)\in R$ and $C_1\downarrow$, then $C_2\downarrow$;
   \item if $(C_1,f,C_2)\in R$ and $C_2\downarrow$, then $C_1\downarrow$.
 \end{enumerate}

$\mathcal{E}_1,\mathcal{E}_2$ are rooted branching history-preserving (hp-)bisimilar and are written $\mathcal{E}_1\approx_{rbhp}\mathcal{E}_2$ if there exists a rooted branching hp-bisimulation $R$ such that $(\emptyset,\emptyset,\emptyset)\in R$.

A rooted branching hereditary history-preserving (hhp-)bisimulation is a downward closed rooted branching hp-bisimulation. $\mathcal{E}_1,\mathcal{E}_2$ are rooted branching hereditary history-preserving (hhp-)bisimilar and are written $\mathcal{E}_1\approx_{rbhhp}\mathcal{E}_2$.
\end{definition}

\subsection{Guarded Linear Recursion}

The silent step $\tau$ as an atomic event, is introduced into $E$. Considering the recursive specification $X=\tau X$, $\tau s$, $\tau\tau s$, and $\tau\cdots s$ are all its solutions, that is, the solutions make the existence of $\tau$-loops which cause unfairness. To prevent $\tau$-loops, we extend the definition of linear recursive specification (Definition \ref{LRS}) to the guarded one.

\begin{definition}[Guarded linear recursive specification]\label{GLRS}
A recursive specification is linear if its recursive equations are of the form

$$(a_{11}\parallel\cdots\parallel a_{1i_1})X_1+\cdots+(a_{k1}\parallel\cdots\parallel a_{ki_k})X_k+(b_{11}\parallel\cdots\parallel b_{1j_1})+\cdots+(b_{1j_1}\parallel\cdots\parallel b_{lj_l})$$

where $a_{11},\cdots,a_{1i_1},a_{k1},\cdots,a_{ki_k},b_{11},\cdots,b_{1j_1},b_{1j_1},\cdots,b_{lj_l}\in \mathbb{E}\cup\{\tau\}$, and the sum above is allowed to be empty, in which case it represents the deadlock $\delta$.

A linear recursive specification $E$ is guarded if there does not exist an infinite sequence of $\tau$-transitions $\langle X|E\rangle\xrightarrow{\tau}\langle X'|E\rangle\xrightarrow{\tau}\langle X''|E\rangle\xrightarrow{\tau}\cdots$.
\end{definition}

\begin{theorem}[Conservitivity of $APTC$ with silent step and guarded linear recursion]
$APTC$ with silent step and guarded linear recursion is a conservative extension of $APTC$ with linear recursion.
\end{theorem}

\begin{proof}
Since the transition rules of $APTC$ with linear recursion are source-dependent, and the transition rules for silent step in Table \ref{TRForTau} contain only a fresh constant $\tau$ in their source, so the transition rules of $APTC$ with silent step and guarded linear recursion is a conservative extension of those of $APTC$ with linear recursion.
\end{proof}

\begin{theorem}[Congruence theorem of $APTC$ with silent step and guarded linear recursion]
Rooted branching truly concurrent bisimulation equivalences $\approx_{rbp}$, $\approx_{rbs}$ and $\approx_{rbhp}$ are all congruences with respect to $APTC$ with silent step and guarded linear recursion.
\end{theorem}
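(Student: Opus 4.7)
The plan is to verify congruence of each of $\approx_{rbp}$, $\approx_{rbs}$, $\approx_{rbhp}$ with respect to every operator of the signature ($+$, $\cdot$, $\parallel$, $\mid$, $\Theta$, $\triangleleft$, $\partial_H$) and then, separately, with respect to the guarded linear recursion construct $\langle X_i\mid E\rangle$. My first step is to observe that the transition rules of $APTC$ with the silent step (Tables for $+$, $\cdot$, $\parallel$, $\mid$, $\Theta$, $\triangleleft$, $\partial_H$, and the $\tau$-rule) are all in a format whose single-event premises carry atomic labels and whose sources are single operator symbols applied to variables; this is the standard ``RBB safe''/patience rule free format that guarantees rooted branching bisimulation is preserved. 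I would verify the format condition operator-by-operator before starting the bisimulation construction, since that is what rules out the well-known counterexamples where $+$ fails to preserve branching bisimulation without rootedness.

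Next I would prove the congruence claim for each binary/unary operator by the usual ``copy the relation up to one operator layer'' construction, proceeding in the same order as the earlier congruence theorems of the paper. For each operator $f$ and processes with $x_i \approx_{rb*} y_i$, I construct the candidate relation
\[
R = \{(f(p_1,\dots,p_n), f(q_1,\dots,q_n)) \mid p_i \approx_{rb*} q_i\} \cup \approx_{b*}
\]
and check the four clauses of Definition \ref{RBPSB} (respectively Definition \ref{RBHHPB}): an initial transition of $f(\vec p)$ must be mimicked by a matching initial transition of $f(\vec q)$ with residuals related by the \emph{branching} version $\approx_{b*}$, while subsequent transitions and termination predicates are handled by the non-rooted branching relation already sitting inside $R$. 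The cases for $+$, $\cdot$, $\partial_H$ are straightforward reworkings of the strong-bisimulation proofs; the cases for $\parallel$, $\mid$ require the auxiliary step that $\approx_{b*}$ itself is preserved by these operators (which follows from the RBB safe format and the fact that communication with $\tau$ yields $\delta$, so $\tau$ cannot be hidden inside a synchronisation); the cases for $\Theta$ and $\triangleleft$ additionally require that conflict relations $\sharp$ are blind to $\tau$, which is immediate from the transition rules in Table \ref{TRForConflict}.

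For guarded linear recursion I would follow the same strategy as in the proof of congruence of $\sim_p,\sim_s,\sim_{hp}$ with guarded recursion: the right-hand side of each equation of a guarded linear recursive specification $E$ can, by Definition \ref{GLRS}, be rewritten using the axioms of $APTC$ into the linear normal form $\sum_k(a_{k1}\parallel\cdots\parallel a_{ki_k})X_k + \sum_l(b_{l1}\parallel\cdots\parallel b_{lj_l})$, and the congruence of $\approx_{rb*}$ with respect to $+$, $\cdot$, $\parallel$ established in the previous step lifts to congruence with respect to $\langle X_i\mid E\rangle$ by a coinductive unfolding argument using RDP. Guardedness is essential here precisely because it excludes unbounded $\tau$-loops that would otherwise break even the branching clause.

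The main obstacle will be the $\parallel$ and $\mid$ cases together with guardedness. For $\parallel$, the danger is that a transition $x_1\parallel x_2 \xrightarrow{\{e_1,e_2\}} x_1'\between x_2'$ has to be matched in two rooted-branching bisimilar components simultaneously, and one must argue that the branching ``$\tau$-closure'' used to match on one side commutes with the synchronising step on the other side; this is where the no-$\tau$-in-$\gamma$ convention adopted just before Table \ref{TRForTau} becomes crucial, and I would spell out that any internal rearrangement stays inside the single side of $\parallel$ without affecting the other. For recursion, the subtle point is to show that a $\tau$-transition of $\langle X_i \mid E\rangle$ arising from the top-level unfolding can be matched by $\tau$-transitions of $\langle Y_j\mid F\rangle$ (possibly none, using the branching clause) without generating an infinite sequence of such unfoldings, and this is precisely where the exclusion of infinite $\tau$-paths in Definition \ref{GLRS} is invoked.
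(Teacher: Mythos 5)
Your plan is sound, but it takes a genuinely different (and considerably more explicit) route than the paper. The paper's own proof is a three-point sketch: (i) right-hand sides of guarded linear recursive equations can be brought into linear normal form; (ii) the strong equivalences $\sim_p$, $\sim_s$, $\sim_{hp}$ are congruences and imply the corresponding rooted branching equivalences (citing Proposition \ref{WSCBE}), from which the paper concludes that $\approx_{rbp}$, $\approx_{rbs}$, $\approx_{rbhp}$ are congruences; and (iii) an assertion, with proof omitted, that this persists when $\mathbb{E}$ is extended with $\tau$. You instead carry out the standard congruence argument directly: for each operator you build the candidate relation $R = \{(f(\vec p),f(\vec q))\mid p_i\approx_{rb*}q_i\}\cup\approx_{b*}$, check the root clauses against Definitions \ref{RBPSB} and \ref{RBHHPB}, reduce the $\parallel$ and $\mid$ cases to preservation of the non-rooted branching equivalences, and handle $\langle X_i|E\rangle$ by unfolding via $RDP$ with guardedness excluding infinite $\tau$-paths. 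What your approach buys is significant: the paper's step (ii) is, as literally stated, a non sequitur --- from ``$\sim$ is a congruence'' and ``$\sim\;\subseteq\;\approx_{rb}$'' one cannot conclude ``$\approx_{rb}$ is a congruence,'' since the hypotheses $x_i\approx_{rb}y_i$ of the congruence claim need not be witnessed by strong bisimilarity. Your explicit relation construction (or, alternatively, your appeal to an RBB-safe rule format) is exactly what is needed to close that gap, and your identification of where rootedness, the $\gamma(\tau,\cdot)=\delta$ convention, and guardedness are each actually used is more informative than the paper's sketch. The one point you should still discharge explicitly is the auxiliary lemma that the plain branching equivalences $\approx_{bp}$, $\approx_{bs}$, $\approx_{bhp}$ are preserved by $\cdot$, $\parallel$, $\mid$ and $\between$ (needed because residuals of a $\parallel$-step live under $\between$); you flag it but do not prove it, and it is not available elsewhere in the paper.
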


\begin{proof}
It follows the following three facts:
\begin{enumerate}
  \item in a guarded linear recursive specification, right-hand sides of its recursive equations can be adapted to the form by applications of the axioms in $APTC$ and replacing recursion variables by the right-hand sides of their recursive equations;
  \item truly concurrent bisimulation equivalences $\sim_{p}$, $\sim_s$ and $\sim_{hp}$ are all congruences with respect to all operators of $APTC$, while truly concurrent bisimulation equivalences $\sim_{p}$, $\sim_s$ and $\sim_{hp}$ imply the corresponding rooted branching truly concurrent bisimulations $\approx_{rbp}$, $\approx_{rbs}$ and $\approx_{rbhp}$ (see Proposition \ref{WSCBE}), so rooted branching truly concurrent bisimulations $\approx_{rbp}$, $\approx_{rbs}$ and $\approx_{rbhp}$ are all congruences with respect to all operators of $APTC$;
  \item While $\mathbb{E}$ is extended to $\mathbb{E}\cup\{\tau\}$, it can be proved that rooted branching truly concurrent bisimulations $\approx_{rbp}$, $\approx_{rbs}$ and $\approx_{rbhp}$ are all congruences with respect to all operators of $APTC$, we omit it.
\end{enumerate}
\end{proof}

\subsection{Algebraic Laws for the Silent Step}

We design the axioms for the silent step $\tau$ in Table \ref{AxiomsForTau}.

\begin{center}
\begin{table}
  \begin{tabular}{@{}ll@{}}
\hline No. &Axiom\\
  $B1$ & $e\cdot\tau=e$\\
  $B2$ & $e\cdot(\tau\cdot(x+y)+x)=e\cdot(x+y)$\\
  $B3$ & $x\parallel\tau=x$\\
\end{tabular}
\caption{Axioms of silent step}
\label{AxiomsForTau}
\end{table}
\end{center}

The axioms $B1$, $B2$ and $B3$ are the conditions in which $\tau$ really keeps silent to act with the operators $\cdot$, $+$ and $\parallel$.

\begin{theorem}[Elimination theorem of $APTC$ with silent step and guarded linear recursion]\label{ETTau}
Each process term in $APTC$ with silent step and guarded linear recursion is equal to a process term $\langle X_1|E\rangle$ with $E$ a guarded linear recursive specification.
\end{theorem}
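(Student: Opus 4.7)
The plan is to mimic the strategy used for Theorem~\ref{ETRecursion} (elimination for $APTC$ with linear recursion), while taking special care that the resulting linear recursive specification is \emph{guarded} in the sense of Definition~\ref{GLRS}. Accordingly, I would proceed by structural induction on the process term $t$, producing in each case a system of equations of the required normal form
$$X_i=(a_{i11}\parallel\cdots\parallel a_{i1i_1})X_{i1}+\cdots+(a_{ik_i1}\parallel\cdots\parallel a_{ik_ii_k})X_{ik_i}+(b_{i11}\parallel\cdots\parallel b_{i1i_1})+\cdots+(b_{il_i1}\parallel\cdots\parallel b_{il_ii_l}),$$
now with summand-atoms drawn from $\mathbb{E}\cup\{\tau\}$. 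The atomic and deadlock cases, and the inductive cases for $\cdot,+,\parallel,\mid,\partial_H,\Theta,\triangleleft$ and guarded recursion variables, can be reused verbatim from Theorems~\ref{ETParallelism}, \ref{ETEncapsulation} and \ref{ETRecursion}, because the silent step behaves like an ordinary atomic event with respect to the term-rewriting normalisation used there. The new atomic case is $t\equiv\tau$, which is already in the required form (as a trivial linear equation $X=\tau$).

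Next, I would invoke $RSP$ to conclude $t=\langle X_1|E\rangle$, just as in Theorem~\ref{ETRecursion}. The novel step, however, is to argue that the constructed $E$ can be chosen to be \emph{guarded}, i.e.\ free of $\tau$-loops $\langle X|E\rangle\xrightarrow{\tau}\langle X'|E\rangle\xrightarrow{\tau}\cdots$. My plan here is a two-stage clean-up of $E$. First, using axiom $B1$ ($e\cdot\tau=e$), absorb every trailing $\tau$ that terminates a summand. Second, using axiom $B2$ ($e\cdot(\tau\cdot(x+y)+x)=e\cdot(x+y)$) and axiom $B3$ ($x\parallel\tau=x$) together with the standard substitution/unfolding of linear equations, eliminate any equation whose right-hand side reduces, after one unfolding step, to a single $\tau$-prefixed recursion variable; such an equation would be the seed of a $\tau$-loop. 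Concretely, if $X=\tau Y+R$ appears and unfolding $Y$ exposes a summand of the form $\tau Z+R'$ with $R\subseteq R'$ (or similar compatible shape handled by $B2$), the axioms collapse the $\tau$-chain. Iterating this process on the finite set of equations terminates and produces an equivalent linear specification in which no reachable sequence of $\tau$-steps is infinite, hence guarded.

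The routine parts of this argument are just bookkeeping: reusing the inductive construction of Theorem~\ref{ETRecursion} and appealing to $RDP/RSP$ to transfer the identity $t=\langle X_1|E\rangle$ under these rewritings. The main obstacle, as indicated, is the guardedness cleanup: one must show that the $B1$--$B3$ rewrites, carried out on a finite linear specification, can always eliminate $\tau$-loops while preserving the solution up to provable equality. The delicate point is that $B2$ is not a pure rewriting rule in the direction needed; it must be applied in combination with unfolding along $RDP$ and subsequent refolding along $RSP$. The plan is to formalise this as a terminating procedure by assigning a well-founded measure to the specification (for instance, the number of equations whose right-hand side begins with a $\tau$-prefixed recursion variable), and verifying that each application of the cleanup step strictly decreases this measure, yielding in finitely many steps a guarded linear $E$ with $\langle X_1|E\rangle=t$ as required.
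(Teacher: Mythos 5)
Your core construction coincides with the paper's proof: structural induction rewrites the term into a system of linear equations whose summand-atoms now range over $\mathbb{E}\cup\{\tau\}$, the specification $E$ is read off from that system, and $RSP$ yields $t=\langle X_1|E\rangle$. The paper stops there, silently taking guardedness of $E$ for granted.

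Where you diverge is the added ``guardedness cleanup,'' and that stage is both unnecessary and, as described, unsound. A genuinely unguarded linear specification such as $X=\tau Y,\ Y=\tau X$ cannot be transformed by $B1$--$B3$ into a guarded one with the same solution: unguarded specifications do not have unique solutions, and $RSP$ --- which you invoke to refold after unfolding along $RDP$ --- is itself only applicable to guarded specifications, so that step of your procedure is circular. Moreover $B2$ requires a non-silent prefix $e$ in front of the $\tau$ ($e\cdot(\tau\cdot(x+y)+x)=e\cdot(x+y)$), so it cannot absorb an initial summand $\tau Y$ of the root variable, which is exactly the seed of a $\tau$-loop your procedure would need to remove. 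The correct observation is that no repair is needed: every recursion constant occurring in the source term is by hypothesis taken from a \emph{guarded} linear recursive specification (Definition \ref{GLRS}), and the finite term structure built on top of these constants contributes only finitely many additional transitions, so the specification $E$ produced by the inductive construction admits no infinite sequence $\langle X|E\rangle\xrightarrow{\tau}\langle X'|E\rangle\xrightarrow{\tau}\cdots$ and is guarded by construction. Replacing your cleanup stage by this inheritance argument closes the gap and brings your proof in line with what the paper implicitly assumes.
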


\begin{proof}
By applying structural induction with respect to term size, each process term $t_1$ in $APTC$ with silent step and guarded linear recursion generates a process can be expressed in the form of equations

$$t_i=(a_{i11}\parallel\cdots\parallel a_{i1i_1})t_{i1}+\cdots+(a_{ik_i1}\parallel\cdots\parallel a_{ik_ii_k})t_{ik_i}+(b_{i11}\parallel\cdots\parallel b_{i1i_1})+\cdots+(b_{il_i1}\parallel\cdots\parallel b_{il_ii_l})$$

for $i\in\{1,\cdots,n\}$. Let the linear recursive specification $E$ consist of the recursive equations

$$X_i=(a_{i11}\parallel\cdots\parallel a_{i1i_1})X_{i1}+\cdots+(a_{ik_i1}\parallel\cdots\parallel a_{ik_ii_k})X_{ik_i}+(b_{i11}\parallel\cdots\parallel b_{i1i_1})+\cdots+(b_{il_i1}\parallel\cdots\parallel b_{il_ii_l})$$

for $i\in\{1,\cdots,n\}$. Replacing $X_i$ by $t_i$ for $i\in\{1,\cdots,n\}$ is a solution for $E$, $RSP$ yields $t_1=\langle X_1|E\rangle$.
\end{proof}

\begin{theorem}[Soundness of $APTC$ with silent step and guarded linear recursion]\label{SAPTCTAU}
Let $x$ and $y$ be $APTC$ with silent step and guarded linear recursion terms. If $APTC$ with silent step and guarded linear recursion $\vdash x=y$, then
\begin{enumerate}
  \item $x\approx_{rbs} y$;
  \item $x\approx_{rbp} y$;
  \item $x\approx_{rbhp} y$.
\end{enumerate}
\end{theorem}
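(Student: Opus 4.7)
The plan is to follow exactly the modular pattern already used for the soundness of $BATC$, $APTC$, and $APTC$ with guarded recursion: since the congruence theorem for $\approx_{rbs}$, $\approx_{rbp}$, $\approx_{rbhp}$ with respect to all operators of $APTC$ with silent step and guarded linear recursion has already been established, it suffices to verify that every axiom of the system is sound modulo each of the three rooted branching truly concurrent bisimulation equivalences. The axioms split into three groups, each handled separately: (i) the axioms $A1$--$A5$, $A6$--$A7$, $P1$--$P10$, $C11$--$C18$, $CE19$--$CE24$, $U25$--$U37$, $D1$--$D6$ inherited from $BATC$ and $APTC$ with encapsulation; (ii) the recursion principles $RDP$ and $RSP$; (iii) the new silent step axioms $B1$, $B2$, $B3$ in Table~\ref{AxiomsForTau}.

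For group (i), the strategy is to appeal to Proposition~\ref{WSCBE}: the strong truly concurrent equivalences $\sim_s$, $\sim_p$, $\sim_{hp}$ imply their rooted branching counterparts $\approx_{rbs}$, $\approx_{rbp}$, $\approx_{rbhp}$. Since Theorems~\ref{SAPTCSBE}, \ref{SAPTCPBE}, \ref{SAPTCHPBE} already establish soundness of these axioms modulo the strong equivalences, soundness modulo the rooted branching versions is immediate. For group (ii), the proof mirrors Theorem~\ref{SAPTCR}: the transition rules for guarded recursion in Table~\ref{TRForGR} are syntactically identical with those used before, and the behaviour of $\langle X_i|E\rangle$ coincides step-for-step with $t_i(\langle X_1|E\rangle,\dots,\langle X_n|E\rangle)$, so both $RDP$ and $RSP$ remain sound. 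The guardedness condition in Definition~\ref{GLRS} is what ensures that no infinite $\tau$-sequence is introduced, which is essential for the rooted branching relations to be preserved by $RSP$.

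The real work is in group (iii). For $B3$, $x\parallel\tau$ and $x$ generate the same transitions up to an initial $\tau$-step on the right-hand side of $\parallel$ that can always be absorbed by the silent-step clause of the branching bisimulation; I would build the obvious relation $R$ pairing $x\parallel\tau$ with $x$ (and $x'\parallel\tau$ with $x'$ after any transition) and verify each clause of Definition~\ref{RBPSB}. For $B1$, the transition $e\cdot\tau\xrightarrow{e}\tau$ must be matched by $e\xrightarrow{e}\surd$, so one has to observe that $\tau$ and $\surd$ are branching-bisimilar because $\tau\xrightarrow{\tau}\surd$ and the target state is related; the root condition is respected because the initial $e$-step is preserved literally. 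For $B2$, the subtle case, one inspects the transitions $e\cdot(\tau\cdot(x+y)+x)\xrightarrow{e}\tau\cdot(x+y)+x$ and $e\cdot(x+y)\xrightarrow{e}x+y$: after the initial $e$, one must show $\tau\cdot(x+y)+x\approx_b x+y$, which uses precisely the branching clause allowing the $\tau$-prefixed summand to be absorbed while $x$ on the left is matched by the corresponding summand of $x+y$ on the right.

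The principal obstacle is $B2$. The standard branching-bisimulation argument for $B2$ in the interleaving setting relies on a careful choice of a relation that simultaneously respects the root condition on the initial $e$ and the branching condition on the subsequent $\tau$-collapse; in the truly concurrent setting the pomset/step/hp variants must each be checked, and for the hp-case one has to track the order-isomorphism $f$ through the $\tau$-transition, using the convention $f(\tau^*)=f(\tau^*)$ fixed in the definition of weakly posetal product. I expect the step and pomset cases to be essentially the interleaving argument applied pointwise to step labels, while the hp-case requires verifying that extending $f$ by $e_1\mapsto e_1$ on the initial root step yields a weakly posetal relation whose subsequent downward extension coincides on both sides; the hhp-case is omitted, consistent with Proposition~\ref{SCPHHPBE}.
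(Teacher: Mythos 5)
Your proposal matches the paper's proof in essentials: the paper likewise reduces, via the congruence theorem, to checking the new axioms $B1$--$B3$ of Table~\ref{AxiomsForTau} against each rooted branching equivalence, handling the step case by direct inspection of the transition rules (the $\tau$ being absorbed exactly as you describe for $B1$, $B2$ and $B3$) and transferring to the pomset and hp cases by decomposing weak pomset transitions as $\xRightarrow{P}=\xRightarrow{e_1}\xRightarrow{e_2}$ and tracking the isomorphism $f$ on the weakly posetal product. The only difference is that you explicitly argue why the inherited axioms and $RDP$/$RSP$ remain sound modulo the rooted branching equivalences, which the paper leaves implicit; that is harmless extra care and consistent with the paper's own appeal to Proposition~\ref{WSCBE}.
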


\begin{proof}
(1) Soundness of $APTC$ with silent step and guarded linear recursion with respect to rooted branching step bisimulation $\approx_{rbs}$.

Since rooted branching step bisimulation $\approx_{rbs}$ is both an equivalent and a congruent relation with respect to $APTC$ with silent step and guarded linear recursion, we only need to check if each axiom in Table \ref{AxiomsForTau} is sound modulo rooted branching step bisimulation equivalence.

Though transition rules in Table \ref{TRForTau} are defined in the flavor of single event, they can be modified into a step (a set of events within which each event is pairwise concurrent), we omit them. If we treat a single event as a step containing just one event, the proof of this soundness theorem does not exist any problem, so we use this way and still use the transition rules in Table \ref{TRForTau}.

\begin{itemize}
  \item \textbf{Axiom $B1$}. Assume that $e\cdot\tau=e$, it is sufficient to prove that $e\cdot\tau\approx_{rbs}e$. By the transition rules for operator $\cdot$ in Table \ref{STRForBATC} and $\tau$ in Table \ref{TRForTau}, we get

      $$\frac{e\xrightarrow{e}\surd}{e\cdot\tau\xrightarrow{e}\xrightarrow{\tau}\surd}$$

      $$\frac{e\xrightarrow{e}\surd}{e\xrightarrow{e}\surd}$$

      So, $e\cdot\tau\approx_{rbs}e$, as desired.

  \item \textbf{Axiom $B2$}. Let $p$ and $q$ be $APTC$ with silent step and guarded linear recursion processes, and assume that $e\cdot(\tau\cdot(p+q)+p)=e\cdot(p+q)$, it is sufficient to prove that $e\cdot(\tau\cdot(p+q)+p)\approx_{rbs}e\cdot(p+q)$. By the transition rules for operators $\cdot$ and $+$ in Table \ref{STRForBATC} and $\tau$ in Table \ref{TRForTau}, we get

      $$\frac{e\xrightarrow{e}\surd\quad p\xrightarrow{e_1}\surd}{e\cdot(\tau\cdot(p+q)+p)\xrightarrow{e}\xrightarrow{e_1}\surd}$$

      $$\frac{e\xrightarrow{e}\surd\quad p\xrightarrow{e_1}\surd}{e\cdot(p+q)\xrightarrow{e}\xrightarrow{e_1}\surd}$$

      $$\frac{e\xrightarrow{e}\surd\quad p\xrightarrow{e_1}p'}{e\cdot(\tau\cdot(p+q)+p)\xrightarrow{e}\xrightarrow{e_1}p'}$$

      $$\frac{e\xrightarrow{e}\surd\quad p\xrightarrow{e_1}p'}{e\cdot(p+q)\xrightarrow{e}\xrightarrow{e_1}p'}$$

      $$\frac{e\xrightarrow{e}\surd\quad q\xrightarrow{e_2}\surd}{e\cdot(\tau\cdot(p+q)+p)\xrightarrow{e}\xrightarrow{\tau}\xrightarrow{e_2}\surd}$$

      $$\frac{e\xrightarrow{e}\surd\quad q\xrightarrow{e_2}\surd}{e\cdot(p+q)\xrightarrow{e}\xrightarrow{e_2}\surd}$$

      $$\frac{e\xrightarrow{e}\surd\quad q\xrightarrow{e_2}q'}{e\cdot(\tau\cdot(p+q)+p)\xrightarrow{e}\xrightarrow{\tau}\xrightarrow{e_2}q'}$$

      $$\frac{e\xrightarrow{e}\surd\quad q\xrightarrow{e_2}q'}{e\cdot(p+q)\xrightarrow{e}\xrightarrow{e_2}q'}$$

      So, $e\cdot(\tau\cdot(p+q)+p)\approx_{rbs}e\cdot(p+q)$, as desired.

  \item \textbf{Axiom $B3$}. Let $p$ be an $APTC$ with silent step and guarded linear recursion process, and assume that $p\parallel\tau=p$, it is sufficient to prove that $p\parallel\tau\approx_{rbs}p$. By the transition rules for operator $\parallel$ in Table \ref{TRForParallel} and $\tau$ in Table \ref{TRForTau}, we get

      $$\frac{p\xrightarrow{e}\surd}{p\parallel\tau\xRightarrow{e}\surd}$$

      $$\frac{p\xrightarrow{e}p'}{p\parallel\tau\xRightarrow{e}p'}$$

      So, $p\parallel\tau\approx_{rbs}p$, as desired.
\end{itemize}

(2) Soundness of $APTC$ with silent step and guarded linear recursion with respect to rooted branching pomset bisimulation $\approx_{rbp}$.

Since rooted branching pomset bisimulation $\approx_{rbp}$ is both an equivalent and a congruent relation with respect to $APTC$ with silent step and guarded linear recursion, we only need to check if each axiom in Table \ref{AxiomsForTau} is sound modulo rooted branching pomset bisimulation $\approx_{rbp}$.

From the definition of rooted branching pomset bisimulation $\approx_{rbp}$ (see Definition \ref{RBPSB}), we know that rooted branching pomset bisimulation $\approx_{rbp}$ is defined by weak pomset transitions, which are labeled by pomsets with $\tau$. In a weak pomset transition, the events in the pomset are either within causality relations (defined by $\cdot$) or in concurrency (implicitly defined by $\cdot$ and $+$, and explicitly defined by $\between$), of course, they are pairwise consistent (without conflicts). In (1), we have already proven the case that all events are pairwise concurrent, so, we only need to prove the case of events in causality. Without loss of generality, we take a pomset of $P=\{e_1,e_2:e_1\cdot e_2\}$. Then the weak pomset transition labeled by the above $P$ is just composed of one single event transition labeled by $e_1$ succeeded by another single event transition labeled by $e_2$, that is, $\xRightarrow{P}=\xRightarrow{e_1}\xRightarrow{e_2}$.

Similarly to the proof of soundness of $APTC$ with silent step and guarded linear recursion modulo rooted branching step bisimulation $\approx_{rbs}$ (1), we can prove that each axiom in Table \ref{AxiomsForTau} is sound modulo rooted branching pomset bisimulation $\approx_{rbp}$, we omit them.

(3) Soundness of $APTC$ with silent step and guarded linear recursion with respect to rooted branching hp-bisimulation $\approx_{rbhp}$.

Since rooted branching hp-bisimulation $\approx_{rbhp}$ is both an equivalent and a congruent relation with respect to $APTC$ with silent step and guarded linear recursion, we only need to check if each axiom in Table \ref{AxiomsForTau} is sound modulo rooted branching hp-bisimulation $\approx_{rbhp}$.

From the definition of rooted branching hp-bisimulation $\approx_{rbhp}$ (see Definition \ref{RBHHPB}), we know that rooted branching hp-bisimulation $\approx_{rbhp}$ is defined on the weakly posetal product $(C_1,f,C_2),f:\hat{C_1}\rightarrow \hat{C_2}\textrm{ isomorphism}$. Two process terms $s$ related to $C_1$ and $t$ related to $C_2$, and $f:\hat{C_1}\rightarrow \hat{C_2}\textrm{ isomorphism}$. Initially, $(C_1,f,C_2)=(\emptyset,\emptyset,\emptyset)$, and $(\emptyset,\emptyset,\emptyset)\in\approx_{rbhp}$. When $s\xrightarrow{e}s'$ ($C_1\xrightarrow{e}C_1'$), there will be $t\xRightarrow{e}t'$ ($C_2\xRightarrow{e}C_2'$), and we define $f'=f[e\mapsto e]$. Then, if $(C_1,f,C_2)\in\approx_{rbhp}$, then $(C_1',f',C_2')\in\approx_{rbhp}$.

Similarly to the proof of soundness of $APTC$ with silent step and guarded linear recursion modulo rooted branching pomset bisimulation equivalence (2), we can prove that each axiom in Table \ref{AxiomsForTau} is sound modulo rooted branching hp-bisimulation equivalence, we just need additionally to check the above conditions on rooted branching hp-bisimulation, we omit them.
\end{proof}

\begin{theorem}[Completeness of $APTC$ with silent step and guarded linear recursion]\label{CAPTCTAU}
Let $p$ and $q$ be closed $APTC$ with silent step and guarded linear recursion terms, then,
\begin{enumerate}
  \item if $p\approx_{rbs} q$ then $p=q$;
  \item if $p\approx_{rbp} q$ then $p=q$;
  \item if $p\approx_{rbhp} q$ then $p=q$.
\end{enumerate}
\end{theorem}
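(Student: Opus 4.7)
The plan is to mirror the structure of Theorem \ref{CAPTCR} (completeness of $APTC$ with linear recursion), but adapted to handle silent steps and rooted branching truly concurrent bisimulation equivalences. First I would invoke the elimination theorem for $APTC$ with silent step and guarded linear recursion (Theorem \ref{ETTau}) to reduce the problem: every closed term is provably equal to some $\langle X_1 \mid E \rangle$ where $E$ is a guarded linear recursive specification. Hence it suffices to prove, for each of the three equivalences, that if $\langle X_1 \mid E_1 \rangle \approx_{\star} \langle Y_1 \mid E_2 \rangle$ (with $\star \in \{rbs, rbp, rbhp\}$) for guarded linear recursive specifications $E_1$ and $E_2$, then $\langle X_1 \mid E_1 \rangle = \langle Y_1 \mid E_2 \rangle$ is derivable.

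The key step is to construct, in analogy with the strong case, a joint guarded linear recursive specification $E$ whose recursion variables are indexed by pairs $Z_{XY}$ with $\langle X \mid E_1 \rangle \approx_{\star} \langle Y \mid E_2 \rangle$. The right-hand side $t_{XY}$ should contain a summand $(a_1 \parallel \cdots \parallel a_m) Z_{X'Y'}$ whenever $t_X$ contains $(a_1 \parallel \cdots \parallel a_m) X'$ and $t_Y$ contains a matching summand $(a_1 \parallel \cdots \parallel a_m) Y'$ with $\langle X' \mid E_1 \rangle \approx_{\star} \langle Y' \mid E_2 \rangle$, and similarly for terminating summands. The complication over the strong case is that a transition on one side may be matched by a sequence $\xrightarrow{\tau^\ast}\xrightarrow{a}\xrightarrow{\tau^\ast}$ on the other, so the natural pairing must absorb these $\tau$-prefixes. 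Here axioms $B1$, $B2$ and $B3$ are essential: $B2$ in particular lets us rewrite a guarded $\tau$-branch of the form $e \cdot (\tau \cdot (x+y) + x)$ into $e \cdot (x+y)$, which is exactly the inertia of internal steps that rooted branching bisimulation is designed to capture.

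Once the joint specification $E$ is shown to be guarded (this uses guardedness of $E_1$, $E_2$, plus the observation that $\tau$-loops cannot arise because neither $E_1$ nor $E_2$ admits an infinite $\tau$-sequence from any reachable state), both $\langle X_1 \mid E_1 \rangle$ and $\langle Y_1 \mid E_2 \rangle$ can be shown by $RDP$ together with $B1$–$B3$ to satisfy the defining equations of $E$ under the substitutions $X \mapsto \langle X \mid E_1 \rangle$ and $Y \mapsto \langle Y \mid E_2 \rangle$ respectively. Applying $RSP$ then yields $\langle X_1 \mid E_1 \rangle = \langle Z_{X_1 Y_1} \mid E \rangle = \langle Y_1 \mid E_2 \rangle$. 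The three cases $\approx_{rbs}, \approx_{rbp}, \approx_{rbhp}$ are handled uniformly, the only difference being what counts as a matching pair of summands: step-wise matching for $\approx_{rbs}$, pomset-wise matching for $\approx_{rbp}$ (via the same reduction to sequences $\xRightarrow{e_1}\xRightarrow{e_2}$ used in Theorem \ref{SAPTCTAU}), and matching with an order-isomorphism witness for $\approx_{rbhp}$.

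The main obstacle I expect is the construction and guardedness proof of the joint specification $E$ in the presence of $\tau$. Unlike the strong case, a single transition of $\langle X \mid E_1 \rangle$ may correspond to a whole $\tau$-closure on the $\langle Y \mid E_2 \rangle$ side, so one must either normalize each guarded linear recursive specification so that its body has no leading $\tau$-guards that could be absorbed (using $B1$ and $B2$), or else include the intermediate $\tau$-reachable states as additional pair-variables and verify that $RSP$ still applies. Verifying that the resulting specification truly is linear and guarded, and that each $t_{XY}$ as defined is provably equal using only $BATC$, $APTC$, and $B1$–$B3$ to the corresponding right-hand sides after substitution, is the technical heart of the proof; everything else is the straightforward analogue of Theorem \ref{CAPTCR}.
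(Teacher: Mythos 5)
Your overall strategy is the paper's: eliminate to guarded linear recursive specifications via Theorem \ref{ETTau}, form a joint specification $E$ over pair variables $Z_{XY}$ for related pairs, include extra $\tau$-summands $\tau Z_{X'Y}$ and $\tau Z_{XY'}$ (excluded at the root pair $X_1Y_1$) to record the branching matches, inherit guardedness of $E$ from that of $E_1$ and $E_2$, and finish with $RDP$ and $RSP$. The architecture is right, and the uniform treatment of $\approx_{rbs}$, $\approx_{rbp}$, $\approx_{rbhp}$ also matches the paper.

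The one step that would fail as written is your claim that the substitution $Z_{XY}\mapsto\langle X|E_1\rangle$ satisfies the equations of $E$ by $RDP$ together with $B1$--$B3$. It does not: $t_{XY}$ may contain a summand $\tau Z_{XY'}$ originating from a summand $\tau Y'$ of $t_Y$, and $t_X$ need contain no corresponding $\tau$-summand, so $\langle X|E_1\rangle$ is not componentwise a solution of $E$. The paper's resolution is to substitute modified terms $s_{XY}$: letting $u_{XY}$ collect the matched summands instantiated on the $E_1$ side, it sets $s_{XY}\triangleq\tau\langle X|E_1\rangle+u_{XY}$ precisely when $XY\not\equiv X_1Y_1$ and $t_Y$ has such a $\tau$-summand, and $s_{XY}\triangleq\langle X|E_1\rangle$ otherwise. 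The key identity $(a_1\parallel\cdots\parallel a_m)s_{XY}=(a_1\parallel\cdots\parallel a_m)\langle X|E_1\rangle$, proved from $\langle X|E_1\rangle=\langle X|E_1\rangle+u_{XY}$ together with axiom $B2$, then yields $s_{XY}=\psi(t_{XY})$, so $RSP$ gives $\langle X_1|E_1\rangle=\langle Z_{X_1Y_1}|E\rangle=\langle Y_1|E_2\rangle$. Note also that your suggested normalization (removing all absorbable leading $\tau$-guards from $E_1$, $E_2$ up front) is not what is needed: the paper only removes the degenerate equations $W=\tau+\cdots+\tau$ for $W$ distinct from $X_1$ as a preliminary step, and the genuinely branching $\tau$-summands must remain in $E$ and be absorbed by the $s_{XY}$ device rather than eliminated beforehand. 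With that correction your argument goes through.
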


\begin{proof}
Firstly, by the elimination theorem of $APTC$ with silent step and guarded linear recursion (see Theorem \ref{ETTau}), we know that each process term in $APTC$ with silent step and guarded linear recursion is equal to a process term $\langle X_1|E\rangle$ with $E$ a guarded linear recursive specification.

It remains to prove the following cases.

(1) If $\langle X_1|E_1\rangle \approx_{rbs} \langle Y_1|E_2\rangle$ for guarded linear recursive specification $E_1$ and $E_2$, then $\langle X_1|E_1\rangle = \langle Y_1|E_2\rangle$.

Firstly, the recursive equation $W=\tau+\cdots+\tau$ with $W\nequiv X_1$ in $E_1$ and $E_2$, can be removed, and the corresponding summands $aW$ are replaced by $a$, to get $E_1'$ and $E_2'$, by use of the axioms $RDP$, $A3$ and $B1$, and $\langle X|E_1\rangle = \langle X|E_1'\rangle$, $\langle Y|E_2\rangle = \langle Y|E_2'\rangle$.

Let $E_1$ consists of recursive equations $X=t_X$ for $X\in \mathcal{X}$ and $E_2$
consists of recursion equations $Y=t_Y$ for $Y\in\mathcal{Y}$, and are not the form $\tau+\cdots+\tau$. Let the guarded linear recursive specification $E$ consists of recursion equations $Z_{XY}=t_{XY}$, and $\langle X|E_1\rangle\approx_{rbs}\langle Y|E_2\rangle$, and $t_{XY}$ consists of the following summands:

\begin{enumerate}
  \item $t_{XY}$ contains a summand $(a_1\parallel\cdots\parallel a_m)Z_{X'Y'}$ iff $t_X$ contains the summand $(a_1\parallel\cdots\parallel a_m)X'$ and $t_Y$ contains the summand $(a_1\parallel\cdots\parallel a_m)Y'$ such that $\langle X'|E_1\rangle\approx_{rbs}\langle Y'|E_2\rangle$;
  \item $t_{XY}$ contains a summand $b_1\parallel\cdots\parallel b_n$ iff $t_X$ contains the summand $b_1\parallel\cdots\parallel b_n$ and $t_Y$ contains the summand $b_1\parallel\cdots\parallel b_n$;
  \item $t_{XY}$ contains a summand $\tau Z_{X'Y}$ iff $XY\nequiv X_1Y_1$, $t_X$ contains the summand $\tau X'$, and $\langle X'|E_1\rangle\approx_{rbs}\langle Y|E_2\rangle$;
  \item $t_{XY}$ contains a summand $\tau Z_{XY'}$ iff $XY\nequiv X_1Y_1$, $t_Y$ contains the summand $\tau Y'$, and $\langle X|E_1\rangle\approx_{rbs}\langle Y'|E_2\rangle$.
\end{enumerate}

Since $E_1$ and $E_2$ are guarded, $E$ is guarded. Constructing the process term $u_{XY}$ consist of the following summands:

\begin{enumerate}
  \item $u_{XY}$ contains a summand $(a_1\parallel\cdots\parallel a_m)\langle X'|E_1\rangle$ iff $t_X$ contains the summand $(a_1\parallel\cdots\parallel a_m)X'$ and $t_Y$ contains the summand $(a_1\parallel\cdots\parallel a_m)Y'$ such that $\langle X'|E_1\rangle\approx_{rbs}\langle Y'|E_2\rangle$;
  \item $u_{XY}$ contains a summand $b_1\parallel\cdots\parallel b_n$ iff $t_X$ contains the summand $b_1\parallel\cdots\parallel b_n$ and $t_Y$ contains the summand $b_1\parallel\cdots\parallel b_n$;
  \item $u_{XY}$ contains a summand $\tau \langle X'|E_1\rangle$ iff $XY\nequiv X_1Y_1$, $t_X$ contains the summand $\tau X'$, and $\langle X'|E_1\rangle\approx_{rbs}\langle Y|E_2\rangle$.
\end{enumerate}

Let the process term $s_{XY}$ be defined as follows:

\begin{enumerate}
  \item $s_{XY}\triangleq\tau\langle X|E_1\rangle + u_{XY}$ iff $XY\nequiv X_1Y_1$, $t_Y$ contains the summand $\tau Y'$, and $\langle X|E_1\rangle\approx_{rbs}\langle Y'|E_2\rangle$;
  \item $s_{XY}\triangleq\langle X|E_1\rangle$, otherwise.
\end{enumerate}

So, $\langle X|E_1\rangle=\langle X|E_1\rangle+u_{XY}$, and $(a_1\parallel\cdots\parallel a_m)(\tau\langle X|E_1\rangle+u_{XY})=(a_1\parallel\cdots\parallel a_m)((\tau\langle X|E_1\rangle+u_{XY})+u_{XY})=(a_1\parallel\cdots\parallel a_m)(\langle X|E_1\rangle+u_{XY})=(a_1\parallel\cdots\parallel a_m)\langle X|E_1\rangle$, hence, $(a_1\parallel\cdots\parallel a_m)s_{XY}=(a_1\parallel\cdots\parallel a_m)\langle X|E_1\rangle$.

Let $\sigma$ map recursion variable $X$ in $E_1$ to $\langle X|E_1\rangle$, and let $\psi$ map recursion variable $Z_{XY}$ in $E$ to $s_{XY}$. It is sufficient to prove $s_{XY}=\psi(t_{XY})$ for recursion variables $Z_{XY}$ in $E$. Either $XY\equiv X_1Y_1$ or $XY\nequiv X_1Y_1$, we all can get $s_{XY}=\psi(t_{XY})$. So, $s_{XY}=\langle Z_{XY}|E\rangle$ for recursive variables $Z_{XY}$ in $E$ is a solution for $E$. Then by $RSP$, particularly, $\langle X_1|E_1\rangle=\langle Z_{X_1Y_1}|E\rangle$. Similarly, we can obtain $\langle Y_1|E_2\rangle=\langle Z_{X_1Y_1}|E\rangle$. Finally, $\langle X_1|E_1\rangle=\langle Z_{X_1Y_1}|E\rangle=\langle Y_1|E_2\rangle$, as desired.

(2) If $\langle X_1|E_1\rangle \approx_{rbp} \langle Y_1|E_2\rangle$ for guarded linear recursive specification $E_1$ and $E_2$, then $\langle X_1|E_1\rangle = \langle Y_1|E_2\rangle$.

It can be proven similarly to (1), we omit it.

(3) If $\langle X_1|E_1\rangle \approx_{rbhb} \langle Y_1|E_2\rangle$ for guarded linear recursive specification $E_1$ and $E_2$, then $\langle X_1|E_1\rangle = \langle Y_1|E_2\rangle$.

It can be proven similarly to (1), we omit it.
\end{proof}

\subsection{Abstraction}

The unary abstraction operator $\tau_I$ ($I\subseteq \mathbb{E}$) renames all atomic events in $I$ into $\tau$. $APTC$ with silent step and abstraction operator is called $APTC_{\tau}$. The transition rules of operator $\tau_I$ are shown in Table \ref{TRForAbstraction}.

\begin{center}
    \begin{table}
        $$\frac{x\xrightarrow{e}\surd}{\tau_I(x)\xrightarrow{e}\surd}\quad e\notin I
        \quad\quad\frac{x\xrightarrow{e}x'}{\tau_I(x)\xrightarrow{e}\tau_I(x')}\quad e\notin I$$

        $$\frac{x\xrightarrow{e}\surd}{\tau_I(x)\xrightarrow{\tau}\surd}\quad e\in I
        \quad\quad\frac{x\xrightarrow{e}x'}{\tau_I(x)\xrightarrow{\tau}\tau_I(x')}\quad e\in I$$
        \caption{Transition rule of the abstraction operator}
        \label{TRForAbstraction}
    \end{table}
\end{center}

\begin{theorem}[Conservitivity of $APTC_{\tau}$ with guarded linear recursion]
$APTC_{\tau}$ with guarded linear recursion is a conservative extension of $APTC$ with silent step and guarded linear recursion.
\end{theorem}

\begin{proof}
Since the transition rules of $APTC$ with silent step and guarded linear recursion are source-dependent, and the transition rules for abstraction operator in Table \ref{TRForAbstraction} contain only a fresh operator $\tau_I$ in their source, so the transition rules of $APTC_{\tau}$ with guarded linear recursion is a conservative extension of those of $APTC$ with silent step and guarded linear recursion.
\end{proof}

\begin{theorem}[Congruence theorem of $APTC_{\tau}$ with guarded linear recursion]
Rooted branching truly concurrent bisimulation equivalences $\approx_{rbp}$, $\approx_{rbs}$ and $\approx_{rbhp}$ are all congruences with respect to $APTC_{\tau}$ with guarded linear recursion.
\end{theorem}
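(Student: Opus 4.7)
The plan is to reduce the theorem to a single new check by exploiting the previous congruence result. By the congruence theorem of $APTC$ with silent step and guarded linear recursion, we already know that $\approx_{rbp}$, $\approx_{rbs}$ and $\approx_{rbhp}$ are preserved by $+$, $\cdot$, $\parallel$, $\mid$, $\Theta$, $\triangleleft$, $\partial_H$ and by guarded linear recursion. The conservativity theorem stated just above guarantees that passing to $APTC_\tau$ does not alter any of these transitions, so the only new obligation is to show closure under the abstraction operator $\tau_I$. That is, it suffices to prove that for each $\approx \in \{\approx_{rbp},\approx_{rbs},\approx_{rbhp}\}$, $x\approx y$ implies $\tau_I(x)\approx \tau_I(y)$.

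To handle $\tau_I$, I would first establish an auxiliary lemma at the non-rooted level: if $x\approx_{bp} y$ (respectively $\approx_{bs}$, $\approx_{bhp}$), then $\tau_I(x)\approx_{bp}\tau_I(y)$. Given a branching bisimulation $R$ witnessing $x\approx_{bp}y$, I would take $R'=\{(\tau_I(x'),\tau_I(y')) : (x',y')\in R\}\cup\{(\tau_I(x'),z) : x'\; R\; y' \text{ and } z \text{ reachable from } \tau_I(y') \text{ by internal steps}\}$ (with the obvious posetal closure for hp) and verify the four clauses of Definition \ref{BPSB}/\ref{BHHPB}. The transition rules in Table \ref{TRForAbstraction} are applied case-wise on whether the executed event lies in $I$: a transition labelled $e\notin I$ is simulated directly using the branching clause of $R$; a transition labelled $e\in I$ becomes a $\tau$-transition in $\tau_I(x)$, which either stays within the equivalence class of $(\tau_I(x'),\tau_I(y))$ or is matched after a $\tau^*$-prefix from $\tau_I(y)$ produced by applying $\tau_I$ to the corresponding sequence on the $y$-side. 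Termination predicates are transported along the same lines. The posetal/hp case is identical at the level of transitions, with the isomorphism $f$ updated by $f[e\mapsto e]$ or $f[e\mapsto\tau]$ exactly as in Definition \ref{BHHPB}.

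With the auxiliary lemma in hand, the rooted case follows immediately from Definition \ref{RBPSB} and Definition \ref{RBHHPB}. Assume $x\approx_{rbp} y$ and let $\tau_I(x)\xrightarrow{X}u$. By Table \ref{TRForAbstraction}, either $X$ is the original label (when none of its events lie in $I$) and arises from $x\xrightarrow{X}x'$ with $u=\tau_I(x')$, or $X=\tau^*$ and arises from some $x\xrightarrow{X_0}x'$ with $X_0\subseteq I$. In either case the rootedness of $x\approx_{rbp}y$ yields $y\xrightarrow{X_0}y'$ with $x'\approx_{bp}y'$, hence $\tau_I(y)\xrightarrow{X}\tau_I(y')$ by the same rule, and $\tau_I(x')\approx_{bp}\tau_I(y')$ by the auxiliary lemma. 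The symmetric clause and the termination clauses are handled identically. The arguments for $\approx_{rbs}$ and $\approx_{rbhp}$ are the pointwise analogues, the latter additionally carrying along the isomorphism $f$.

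The main obstacle I anticipate is the auxiliary lemma at the branching level, specifically the case where $\tau_I$ turns a visible event into $\tau$: one must justify that the ``either $\tau$-inert or matched after a $\tau^*$-prefix'' dichotomy in the branching bisimulation is preserved after abstracting a whole set $I$ of events simultaneously, so that the candidate relation $R'$ is actually closed under the branching clauses rather than merely under weak simulation. This is where the guardedness of the recursive specification enters: it rules out infinite $\tau$-chains inside $\tau_I(\langle X|E\rangle)$ that would otherwise obstruct the construction of matching $\tau^*$-prefixes. All remaining cases are straightforward structural verifications analogous to the congruence proofs for $\partial_H$ and $\pi_n$ already carried out in the paper, so I would not write them out in detail.
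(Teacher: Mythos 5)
Your proposal is correct and shares the paper's overall strategy -- the only new obligation is closure under $\tau_I$, handled by a case split on whether the executed events lie in $I$ -- but your decomposition is genuinely different and, in one respect, sharper than the paper's. The paper works directly at the rooted level: after a transition $\tau_I(x)\xrightarrow{X}\tau_I(x')$ it matches with $\tau_I(y)\xrightarrow{Y}\tau_I(y')$ and closes the argument by invoking ``the hypothesis $\tau_I(x')\approx_{rbp}\tau_I(y')$.'' This is not quite what Definition \ref{RBPSB} asks for: after the first (root) step the residuals are only related by the non-rooted equivalence $\approx_{bp}$, so the coinductive obligation is $\tau_I(x')\approx_{bp}\tau_I(y')$, not $\tau_I(x')\approx_{rbp}\tau_I(y')$. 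Your two-stage argument -- first an auxiliary lemma that $\tau_I$ preserves the branching equivalences $\approx_{bp}$, $\approx_{bs}$, $\approx_{bhp}$ via an explicit candidate relation $R'$ closed under internal steps, then a short rooted-level argument that uses rootedness only for the initial transition -- supplies exactly the missing intermediate notion and is the standard shape of such congruence proofs for abstraction operators. What the paper's version buys is brevity; what yours buys is that the coinductive invariant is stated at the right level, and that the genuinely delicate point (verifying the branching clauses of $R'$ when a visible event is renamed to $\tau$, where guardedness is needed to exclude infinite $\tau$-chains) is isolated and named rather than absorbed into an unexamined ``hypothesis.'' The one place to tighten before writing this out in full is the definition of $R'$: the extra clause pairing $\tau_I(x')$ with states reachable from $\tau_I(y')$ by internal steps should be stated symmetrically and restricted to states that remain branching-related, otherwise the transfer clauses of Definition \ref{BPSB} need not hold for those added pairs.
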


\begin{proof}

(1) Case rooted branching pomset bisimulation equivalence $\approx_{rbp}$.

Let $x$ and $y$ be $APTC_{\tau}$ with guarded linear recursion processes, and $x\approx_{rbp} y$, it is sufficient to prove that $\tau_I(x)\approx_{rbp} \tau_I(y)$.

By the transition rules for operator $\tau_I$ in Table \ref{TRForAbstraction}, we can get

$$\tau_I(x)\xrightarrow{X} \surd (X\nsubseteq I) \quad \tau_I(y)\xrightarrow{Y} \surd (Y\nsubseteq I)$$

with $X\subseteq x$, $Y\subseteq y$, and $X\sim Y$.

Or, we can get

$$\tau_I(x)\xrightarrow{X} \tau_I(x') (X\nsubseteq I) \quad \tau_I(y)\xrightarrow{Y} \tau_I(y') (Y\nsubseteq I)$$

with $X\subseteq x$, $Y\subseteq y$, and $X\sim Y$ and the hypothesis $\tau_I(x')\approx_{rbp}\tau_I(y')$.

Or, we can get

$$\tau_I(x)\xrightarrow{\tau^*} \surd (X\subseteq I) \quad \tau_I(y)\xrightarrow{\tau^*} \surd (Y\subseteq I)$$

with $X\subseteq x$, $Y\subseteq y$, and $X\sim Y$.

Or, we can get

$$\tau_I(x)\xrightarrow{\tau^*} \tau_I(x') (X\subseteq I) \quad \tau_I(y)\xrightarrow{\tau^*} \tau_I(y') (Y\subseteq I)$$

with $X\subseteq x$, $Y\subseteq y$, and $X\sim Y$ and the hypothesis $\tau_I(x')\approx_{rbp}\tau_I(y')$.

So, we get $\tau_I(x)\approx_{rbp} \tau_I(y)$, as desired

(2) The cases of rooted branching step bisimulation $\approx_{rbs}$, rooted branching hp-bisimulation $\approx_{rbhp}$ can be proven similarly, we omit them.
\end{proof}

We design the axioms for the abstraction operator $\tau_I$ in Table \ref{AxiomsForAbstraction}.

\begin{center}
\begin{table}
  \begin{tabular}{@{}ll@{}}
\hline No. &Axiom\\
  $TI1$ & $e\notin I\quad \tau_I(e)=e$\\
  $TI2$ & $e\in I\quad \tau_I(e)=\tau$\\
  $TI3$ & $\tau_I(\delta)=\delta$\\
  $TI4$ & $\tau_I(x+y)=\tau_I(x)+\tau_I(y)$\\
  $TI5$ & $\tau_I(x\cdot y)=\tau_I(x)\cdot\tau_I(y)$\\
  $TI6$ & $\tau_I(x\parallel y)=\tau_I(x)\parallel\tau_I(y)$\\
\end{tabular}
\caption{Axioms of abstraction operator}
\label{AxiomsForAbstraction}
\end{table}
\end{center}

The axioms $TI1-TI3$ are the defining laws for the abstraction operator $\tau_I$; $TI4-TI6$ say that in process term $\tau_I(t)$, all transitions of $t$ labeled with atomic events from $I$ are renamed into $\tau$.

\begin{theorem}[Soundness of $APTC_{\tau}$ with guarded linear recursion]\label{SAPTCABS}
Let $x$ and $y$ be $APTC_{\tau}$ with guarded linear recursion terms. If $APTC_{\tau}$ with guarded linear recursion $\vdash x=y$, then
\begin{enumerate}
  \item $x\approx_{rbs} y$;
  \item $x\approx_{rbp} y$;
  \item $x\approx_{rbhp} y$.
\end{enumerate}
\end{theorem}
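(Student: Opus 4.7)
The plan is to follow the established template already used for Theorems \ref{SAPTCSBE}, \ref{SAPTCPBE}, \ref{SAPTCHPBE}, and \ref{SAPTCTAU}. The preceding congruence theorem gives us that $\approx_{rbs}$, $\approx_{rbp}$, and $\approx_{rbhp}$ are congruences with respect to every operator of $APTC_{\tau}$ with guarded linear recursion, including $\tau_I$. Consequently, it suffices to verify that each of the six axioms $TI1$--$TI6$ in Table~\ref{AxiomsForAbstraction} is sound modulo each of the three rooted branching truly concurrent bisimulation equivalences. The defining axioms $TI1$, $TI2$, $TI3$ fall out immediately from the transition rules in Table~\ref{TRForAbstraction} applied to atomic events and $\delta$, so the real work lies in $TI4$--$TI6$.

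First I would handle rooted branching step bisimulation $\approx_{rbs}$. Using the device employed throughout the paper, single events are treated as steps containing a single event, so that the transition rules in Table~\ref{TRForAbstraction} may be lifted to steps without modification. For $TI4$, I would split on whether the triggering step $X$ is disjoint from $I$ or contained in $I$: in the first case one compares $\tau_I(p+q)\xrightarrow{X}$ transitions against $\tau_I(p)+\tau_I(q)\xrightarrow{X}$ transitions via the rules of Table~\ref{STRForBATC} and Table~\ref{TRForAbstraction}; in the second case both sides produce a $\tau$-labelled transition leading respectively to $\tau_I(\surd)$ or $\tau_I(p')$ and to the corresponding summand. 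For $TI5$ and $TI6$ the matching transitions are built analogously using Tables~\ref{STRForBATC} and~\ref{TRForParallel}, invoking the inductive hypotheses $\tau_I(p'\cdot q)=\tau_I(p')\cdot\tau_I(q)$ for $TI5$ and $\tau_I(p'\between q')=\tau_I(p')\between\tau_I(q')$ for $TI6$, exactly as in the soundness proofs of $D5$ and $D6$ in Theorem~\ref{SAPTCSBE}.

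Second, the passage from step to pomset equivalence is by the same reduction used in Theorems~\ref{SPPBE} and \ref{SAPTCPBE}: a pomset transition decomposes into a sequence of single event transitions when the events are in causality, and the concurrent case has already been covered. Representative pomsets $P=\{e_1,e_2:e_1\cdot e_2\}$ give $\xRightarrow{P}=\xRightarrow{e_1}\xRightarrow{e_2}$, and the check reduces to the step-bisimulation case already established. Third, for rooted branching hp-bisimulation I would extend the argument by tracking the posetal product $(C_1,f,C_2)$, $f:\hat{C_1}\to\hat{C_2}$ isomorphism, incrementally: whenever $s\xrightarrow{e}s'$ is matched by $t\xRightarrow{e}t'$ on the other side, extend $f$ by $f[e\mapsto e]$ and verify $(C_1',f',C_2')\in\approx_{rbhp}$, as in Theorem~\ref{SAPTCTAU}(3).

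The main obstacle will be the interaction between $\tau_I$-generated $\tau$-steps and the rooted branching condition. The rooted-branching bisimulations (Definitions~\ref{RBPSB} and~\ref{RBHHPB}) require the first transition to match strictly, not up to $\tau^{*}$, so in the non-trivial axioms $TI4$--$TI6$ one must check that when an initial event $e\in I$ is renamed to $\tau$ on one side, the corresponding $\tau$ must appear as an honest initial transition on the other side, and that the residual processes are related by the underlying branching (not merely rooted-branching) equivalence so as to fall into the hypothesis on $\tau_I(p'\between q')$ etc. A further subtle point is that the original process $p$ may already contain $\tau$'s from the silent-step axioms, and these must be carefully distinguished from the fresh $\tau$'s introduced by $\tau_I$; this is handled by the side conditions $e\notin I$ and $e\in I$ in Table~\ref{TRForAbstraction}, but deserves explicit attention when writing the matching transitions for $TI6$, where both a $\parallel$-synchronisation and an $I$-renaming may occur simultaneously.
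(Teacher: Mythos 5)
Your proposal follows essentially the same route as the paper's proof: reduce to checking axioms $TI1$--$TI6$ via the congruence theorem, lift the single-event transition rules to steps, case-split on whether the executed events lie in $I$ (including the mixed cases for $TI6$), invoke the inductive assumptions $\tau_I(p'\cdot q)=\tau_I(p')\cdot\tau_I(q)$ and $\tau_I(p'\between q')=\tau_I(p')\between\tau_I(q')$, and then transfer to $\approx_{rbp}$ by decomposing causal pomsets into sequences of single-event transitions and to $\approx_{rbhp}$ by tracking the weakly posetal product with $f'=f[e\mapsto e]$. The additional cautions you raise about the root condition and about distinguishing pre-existing $\tau$'s from $\tau_I$-generated ones are sensible but do not change the argument, which matches the paper's.
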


\begin{proof}
(1) Soundness of $APTC_{\tau}$ with guarded linear recursion with respect to rooted branching step bisimulation $\approx_{rbs}$.

Since rooted branching step bisimulation $\approx_{rbs}$ is both an equivalent and a congruent relation with respect to $APTC_{\tau}$ with guarded linear recursion, we only need to check if each axiom in Table \ref{AxiomsForAbstraction} is sound modulo rooted branching step bisimulation equivalence.

Though transition rules in Table \ref{TRForAbstraction} are defined in the flavor of single event, they can be modified into a step (a set of events within which each event is pairwise concurrent), we omit them. If we treat a single event as a step containing just one event, the proof of this soundness theorem does not exist any problem, so we use this way and still use the transition rules in Table \ref{AxiomsForAbstraction}.

We only prove soundness of the non-trivial axioms $TI4-TI6$, and omit the defining axioms $TI1-TI3$.

\begin{itemize}
  \item \textbf{Axiom $TI4$}. Let $p,q$ be $APTC_{\tau}$ with guarded linear recursion processes, and $\tau_I(p+ q)=\tau_I(p)+\tau_I(q)$, it is sufficient to prove that $\tau_I(p+ q)\approx_{rbs} \tau_I(p)+\tau_I(q)$. By the transition rules for operator $+$ in Table \ref{STRForBATC} and $\tau_I$ in Table \ref{TRForAbstraction}, we get

      $$\frac{p\xrightarrow{e_1}\surd\quad(e_1\notin I)}{\tau_I(p+ q)\xrightarrow{e_1}\surd}
      \quad\frac{p\xrightarrow{e_1}\surd\quad(e_1\notin I)}{\tau_I(p)+ \tau_I(q)\xrightarrow{e_1}\surd}$$

      $$\frac{q\xrightarrow{e_2}\surd\quad(e_2\notin I)}{\tau_I(p+ q)\xrightarrow{e_2}\surd}
      \quad\frac{q\xrightarrow{e_2}\surd\quad(e_2\notin I)}{\tau_I(p)+ \tau_I(q)\xrightarrow{e_2}\surd}$$

      $$\frac{p\xrightarrow{e_1}p'\quad(e_1\notin I)}{\tau_I(p+ q)\xrightarrow{e_1}\tau_I(p')}
      \quad\frac{p\xrightarrow{e_1}p'\quad(e_1\notin I)}{\tau_I(p)+ \tau_I(q)\xrightarrow{e_1}\tau_I(p')}$$

      $$\frac{q\xrightarrow{e_2}q'\quad(e_2\notin I)}{\tau_I(p+ q)\xrightarrow{e_2}\tau_I(q')}
      \quad\frac{q\xrightarrow{e_2}q'\quad(e_2\notin I)}{\tau_I(p)+ \tau_I(q)\xrightarrow{e_2}\tau_I(q')}$$

      $$\frac{p\xrightarrow{e_1}\surd\quad(e_1\in I)}{\tau_I(p+ q)\xrightarrow{\tau}\surd}
      \quad\frac{p\xrightarrow{e_1}\surd\quad(e_1\in I)}{\tau_I(p)+ \tau_I(q)\xrightarrow{\tau}\surd}$$

      $$\frac{q\xrightarrow{e_2}\surd\quad(e_2\in I)}{\tau_I(p+ q)\xrightarrow{\tau}\surd}
      \quad\frac{q\xrightarrow{e_2}\surd\quad(e_2\in I)}{\tau_I(p)+ \tau_I(q)\xrightarrow{\tau}\surd}$$

      $$\frac{p\xrightarrow{e_1}p'\quad(e_1\in I)}{\tau_I(p+ q)\xrightarrow{\tau}\tau_I(p')}
      \quad\frac{p\xrightarrow{e_1}p'\quad(e_1\in I)}{\tau_I(p)+ \tau_I(q)\xrightarrow{\tau}\tau_I(p')}$$

      $$\frac{q\xrightarrow{e_2}q'\quad(e_2\in I)}{\tau_I(p+ q)\xrightarrow{\tau}\tau_I(q')}
      \quad\frac{q\xrightarrow{e_2}q'\quad(e_2\in I)}{\tau_I(p)+ \tau_I(q)\xrightarrow{\tau}\tau_I(q')}$$

      So, $\tau_I(p+ q)\approx_{rbs} \tau_I(p)+\tau_I(q)$, as desired.
  \item \textbf{Axiom $TI5$}. Let $p,q$ be $APTC_{\tau}$ with guarded linear recursion processes, and $\tau_I(p\cdot q)=\tau_I(p)\cdot\tau_I(q)$, it is sufficient to prove that $\tau_I(p\cdot q)\approx_{rbs} \tau_I(p)\cdot\tau_I(q)$. By the transition rules for operator $\cdot$ in Table \ref{STRForBATC} and $\tau_I$ in Table \ref{TRForAbstraction}, we get

      $$\frac{p\xrightarrow{e_1}\surd\quad(e_1\notin I)}{\tau_I(p\cdot q)\xrightarrow{e_1}\tau_I(q)}
      \quad\frac{p\xrightarrow{e_1}\surd\quad(e_1\notin I)}{\tau_I(p)\cdot \tau_I(q)\xrightarrow{e_1}\tau_I(q)}$$

      $$\frac{p\xrightarrow{e_1}p'\quad(e_1\notin I)}{\tau_I(p\cdot q)\xrightarrow{e_1}\tau_I(p'\cdot q)}
      \quad\frac{p\xrightarrow{e_1}p'\quad(e_1\notin I)}{\tau_I(p)\cdot \tau_I(q)\xrightarrow{e_1}\tau_I(p')\cdot\tau_I(q)}$$

      $$\frac{p\xrightarrow{e_1}\surd\quad(e_1\in I)}{\tau_I(p\cdot q)\xrightarrow{\tau}\tau_I(q)}
      \quad\frac{p\xrightarrow{e_1}\surd\quad(e_1\in I)}{\tau_I(p)\cdot \tau_I(q)\xrightarrow{\tau}\tau_I(q)}$$

      $$\frac{p\xrightarrow{e_1}p'\quad(e_1\in I)}{\tau_I(p\cdot q)\xrightarrow{\tau}\tau_I(p'\cdot q)}
      \quad\frac{p\xrightarrow{e_1}p'\quad(e_1\in I)}{\tau_I(p)\cdot \tau_I(q)\xrightarrow{\tau}\tau_I(p')\cdot\tau_I(q)}$$

      So, with the assumption $\tau_I(p'\cdot q)=\tau_I(p')\cdot\tau_I(q)$, $\tau_I(p\cdot q)\approx_{rbs}\tau_I(p)\cdot\tau_I(q)$, as desired.
  \item \textbf{Axiom $TI6$}. Let $p,q$ be $APTC_{\tau}$ with guarded linear recursion processes, and $\tau_I(p\parallel q)=\tau_I(p)\parallel\tau_I(q)$, it is sufficient to prove that $\tau_I(p\parallel q)\approx_{rbs} \tau_I(p)\parallel\tau_I(q)$. By the transition rules for operator $\parallel$ in Table \ref{TRForParallel} and $\tau_I$ in Table \ref{TRForAbstraction}, we get

      $$\frac{p\xrightarrow{e_1}\surd\quad q\xrightarrow{e_2}\surd\quad(e_1,e_2\notin I)}{\tau_I(p\parallel q)\xrightarrow{\{e_1,e_2\}}\surd}
      \quad\frac{p\xrightarrow{e_1}\surd\quad q\xrightarrow{e_2}\surd\quad(e_1,e_2\notin I)}{\tau_I(p)\parallel \tau_I(q)\xrightarrow{\{e_1,e_2\}}\surd}$$

      $$\frac{p\xrightarrow{e_1}p'\quad q\xrightarrow{e_2}\surd\quad(e_1,e_2\notin I)}{\tau_I(p\parallel q)\xrightarrow{\{e_1,e_2\}}\tau_I(p')}
      \quad\frac{p\xrightarrow{e_1}p'\quad q\xrightarrow{e_2}\surd\quad(e_1,e_2\notin I)}{\tau_I(p)\parallel \tau_I(q)\xrightarrow{\{e_1,e_2\}}\tau_I(p')}$$

      $$\frac{p\xrightarrow{e_1}\surd\quad q\xrightarrow{e_2}q'\quad(e_1,e_2\notin I)}{\tau_I(p\parallel q)\xrightarrow{\{e_1,e_2\}}\tau_I(q')}
      \quad\frac{p\xrightarrow{e_1}\surd\quad q\xrightarrow{e_2}q'\quad(e_1,e_2\notin I)}{\tau_I(p)\parallel \tau_I(q)\xrightarrow{\{e_1,e_2\}}\tau_I(q')}$$

      $$\frac{p\xrightarrow{e_1}p'\quad q\xrightarrow{e_2}q'\quad(e_1,e_2\notin I)}{\tau_I(p\parallel q)\xrightarrow{\{e_1,e_2\}}\tau_I(p'\between q')}
      \quad\frac{p\xrightarrow{e_1}p'\quad q\xrightarrow{e_2}q'\quad(e_1,e_2\notin I)}{\tau_I(p)\parallel \tau_I(q)\xrightarrow{\{e_1,e_2\}}\tau_I(p')\between\tau_I(q')}$$

      $$\frac{p\xrightarrow{e_1}\surd\quad q\xrightarrow{e_2}\surd\quad(e_1\notin I,e_2\in I)}{\tau_I(p\parallel q)\xRightarrow{e_1}\surd}
      \quad\frac{p\xrightarrow{e_1}\surd\quad q\xrightarrow{e_2}\surd\quad(e_1\notin I,e_2\in I)}{\tau_I(p)\parallel \tau_I(q)\xRightarrow{e_1}\surd}$$

      $$\frac{p\xrightarrow{e_1}p'\quad q\xrightarrow{e_2}\surd\quad(e_1\notin I,e_2\in I)}{\tau_I(p\parallel q)\xRightarrow{e_1}\tau_I(p')}
      \quad\frac{p\xrightarrow{e_1}p'\quad q\xrightarrow{e_2}\surd\quad(e_1\notin I,e_2\in I)}{\tau_I(p)\parallel \tau_I(q)\xRightarrow{e_1}\tau_I(p')}$$

      $$\frac{p\xrightarrow{e_1}\surd\quad q\xrightarrow{e_2}q'\quad(e_1\notin I,e_2\in I)}{\tau_I(p\parallel q)\xRightarrow{e_1}\tau_I(q')}
      \quad\frac{p\xrightarrow{e_1}\surd\quad q\xrightarrow{e_2}q'\quad(e_1\notin I,e_2\in I)}{\tau_I(p)\parallel \tau_I(q)\xRightarrow{e_1}\tau_I(q')}$$

      $$\frac{p\xrightarrow{e_1}p'\quad q\xrightarrow{e_2}q'\quad(e_1\notin I,e_2\in I)}{\tau_I(p\parallel q)\xRightarrow{e_1}\tau_I(p'\between q')}
      \quad\frac{p\xrightarrow{e_1}p'\quad q\xrightarrow{e_2}q'\quad(e_1\notin I,e_2\in I)}{\tau_I(p)\parallel \tau_I(q)\xRightarrow{e_1}\tau_I(p')\between\tau_I(q')}$$

      $$\frac{p\xrightarrow{e_1}\surd\quad q\xrightarrow{e_2}\surd\quad(e_1\in I,e_2\notin I)}{\tau_I(p\parallel q)\xRightarrow{e_2}\surd}
      \quad\frac{p\xrightarrow{e_1}\surd\quad q\xrightarrow{e_2}\surd\quad(e_1\in I,e_2\notin I)}{\tau_I(p)\parallel \tau_I(q)\xRightarrow{e_2}\surd}$$

      $$\frac{p\xrightarrow{e_1}p'\quad q\xrightarrow{e_2}\surd\quad(e_1\in I,e_2\notin I)}{\tau_I(p\parallel q)\xRightarrow{e_2}\tau_I(p')}
      \quad\frac{p\xrightarrow{e_1}p'\quad q\xrightarrow{e_2}\surd\quad(e_1\in I,e_2\notin I)}{\tau_I(p)\parallel \tau_I(q)\xRightarrow{e_2}\tau_I(p')}$$

      $$\frac{p\xrightarrow{e_1}\surd\quad q\xrightarrow{e_2}q'\quad(e_1\in I,e_2\notin I)}{\tau_I(p\parallel q)\xRightarrow{e_2}\tau_I(q')}
      \quad\frac{p\xrightarrow{e_1}\surd\quad q\xrightarrow{e_2}q'\quad(e_1\in I,e_2\notin I)}{\tau_I(p)\parallel \tau_I(q)\xRightarrow{e_2}\tau_I(q')}$$

      $$\frac{p\xrightarrow{e_1}p'\quad q\xrightarrow{e_2}q'\quad(e_1\in I,e_2\notin I)}{\tau_I(p\parallel q)\xRightarrow{e_2}\tau_I(p'\between q')}
      \quad\frac{p\xrightarrow{e_1}p'\quad q\xrightarrow{e_2}q'\quad(e_1\in I,e_2\notin I)}{\tau_I(p)\parallel \tau_I(q)\xRightarrow{e_2}\tau_I(p')\between\tau_I(q')}$$

      $$\frac{p\xrightarrow{e_1}\surd\quad q\xrightarrow{e_2}\surd\quad(e_1,e_2\in I)}{\tau_I(p\parallel q)\xrightarrow{\tau^*}\surd}
      \quad\frac{p\xrightarrow{e_1}\surd\quad q\xrightarrow{e_2}\surd\quad(e_1,e_2\in I)}{\tau_I(p)\parallel \tau_I(q)\xrightarrow{\tau^*}\surd}$$

      $$\frac{p\xrightarrow{e_1}p'\quad q\xrightarrow{e_2}\surd\quad(e_1,e_2\in I)}{\tau_I(p\parallel q)\xrightarrow{\tau^*}\tau_I(p')}
      \quad\frac{p\xrightarrow{e_1}p'\quad q\xrightarrow{e_2}\surd\quad(e_1,e_2\in I)}{\tau_I(p)\parallel \tau_I(q)\xrightarrow{\tau^*}\tau_I(p')}$$

      $$\frac{p\xrightarrow{e_1}\surd\quad q\xrightarrow{e_2}q'\quad(e_1,e_2\in I)}{\tau_I(p\parallel q)\xrightarrow{\tau^*}\tau_I(q')}
      \quad\frac{p\xrightarrow{e_1}\surd\quad q\xrightarrow{e_2}q'\quad(e_1,e_2\in I)}{\tau_I(p)\parallel \tau_I(q)\xrightarrow{\tau^*}\tau_I(q')}$$

      $$\frac{p\xrightarrow{e_1}p'\quad q\xrightarrow{e_2}q'\quad(e_1,e_2\in I)}{\tau_I(p\parallel q)\xrightarrow{\tau^*}\tau_I(p'\between q')}
      \quad\frac{p\xrightarrow{e_1}p'\quad q\xrightarrow{e_2}q'\quad(e_1,e_2\in I)}{\tau_I(p)\parallel \tau_I(q)\xrightarrow{\tau^*}\tau_I(p')\between\tau_I(q')}$$

      So, with the assumption $\tau_I(p'\between q')=\tau_I(p')\between\tau_I(q')$, $\tau_I(p\parallel q)\approx_{rbs} \tau_I(p)\parallel\tau_I(q)$, as desired.
\end{itemize}

(2) Soundness of $APTC_{\tau}$ with guarded linear recursion with respect to rooted branching pomset bisimulation $\approx_{rbp}$.

Since rooted branching pomset bisimulation $\approx_{rbp}$ is both an equivalent and a congruent relation with respect to $APTC_{\tau}$ with guarded linear recursion, we only need to check if each axiom in Table \ref{AxiomsForAbstraction} is sound modulo rooted branching pomset bisimulation $\approx_{rbp}$.

From the definition of rooted branching pomset bisimulation $\approx_{rbp}$ (see Definition \ref{RBPSB}), we know that rooted branching pomset bisimulation $\approx_{rbp}$ is defined by weak pomset transitions, which are labeled by pomsets with $\tau$. In a weak pomset transition, the events in the pomset are either within causality relations (defined by $\cdot$) or in concurrency (implicitly defined by $\cdot$ and $+$, and explicitly defined by $\between$), of course, they are pairwise consistent (without conflicts). In (1), we have already proven the case that all events are pairwise concurrent, so, we only need to prove the case of events in causality. Without loss of generality, we take a pomset of $P=\{e_1,e_2:e_1\cdot e_2\}$. Then the weak pomset transition labeled by the above $P$ is just composed of one single event transition labeled by $e_1$ succeeded by another single event transition labeled by $e_2$, that is, $\xRightarrow{P}=\xRightarrow{e_1}\xRightarrow{e_2}$.

Similarly to the proof of soundness of $APTC_{\tau}$ with guarded linear recursion modulo rooted branching step bisimulation $\approx_{rbs}$ (1), we can prove that each axiom in Table \ref{AxiomsForAbstraction} is sound modulo rooted branching pomset bisimulation $\approx_{rbp}$, we omit them.

(3) Soundness of $APTC_{\tau}$ with guarded linear recursion with respect to rooted branching hp-bisimulation $\approx_{rbhp}$.

Since rooted branching hp-bisimulation $\approx_{rbhp}$ is both an equivalent and a congruent relation with respect to $APTC_{\tau}$ with guarded linear recursion, we only need to check if each axiom in Table \ref{AxiomsForAbstraction} is sound modulo rooted branching hp-bisimulation $\approx_{rbhp}$.

From the definition of rooted branching hp-bisimulation $\approx_{rbhp}$ (see Definition \ref{RBHHPB}), we know that rooted branching hp-bisimulation $\approx_{rbhp}$ is defined on the weakly posetal product $(C_1,f,C_2),f:\hat{C_1}\rightarrow \hat{C_2}\textrm{ isomorphism}$. Two process terms $s$ related to $C_1$ and $t$ related to $C_2$, and $f:\hat{C_1}\rightarrow \hat{C_2}\textrm{ isomorphism}$. Initially, $(C_1,f,C_2)=(\emptyset,\emptyset,\emptyset)$, and $(\emptyset,\emptyset,\emptyset)\in\approx_{rbhp}$. When $s\xrightarrow{e}s'$ ($C_1\xrightarrow{e}C_1'$), there will be $t\xRightarrow{e}t'$ ($C_2\xRightarrow{e}C_2'$), and we define $f'=f[e\mapsto e]$. Then, if $(C_1,f,C_2)\in\approx_{rbhp}$, then $(C_1',f',C_2')\in\approx_{rbhp}$.

Similarly to the proof of soundness of $APTC_{\tau}$ with guarded linear recursion modulo rooted branching pomset bisimulation equivalence (2), we can prove that each axiom in Table \ref{AxiomsForAbstraction} is sound modulo rooted branching hp-bisimulation equivalence, we just need additionally to check the above conditions on rooted branching hp-bisimulation, we omit them.
\end{proof}

Though $\tau$-loops are prohibited in guarded linear recursive specifications (see Definition \ref{GLRS}) in a specifiable way, they can be constructed using the abstraction operator, for example, there exist $\tau$-loops in the process term $\tau_{\{a\}}(\langle X|X=aX\rangle)$. To avoid $\tau$-loops caused by $\tau_I$ and ensure fairness, the concept of cluster and $CFAR$ (Cluster Fair Abstraction Rule) \cite{CFAR} are still valid in true concurrency, we introduce them below.

\begin{definition}[Cluster]\label{CLUSTER}
Let $E$ be a guarded linear recursive specification, and $I\subseteq \mathbb{E}$. Two recursion variable $X$ and $Y$ in $E$ are in the same cluster for $I$ iff there exist sequences of transitions $\langle X|E\rangle\xrightarrow{\{b_{11},\cdots, b_{1i}\}}\cdots\xrightarrow{\{b_{m1},\cdots, b_{mi}\}}\langle Y|E\rangle$ and $\langle Y|E\rangle\xrightarrow{\{c_{11},\cdots, c_{1j}\}}\cdots\xrightarrow{\{c_{n1},\cdots, c_{nj}\}}\langle X|E\rangle$, where $b_{11},\cdots,b_{mi},c_{11},\cdots,c_{nj}\in I\cup\{\tau\}$.

$a_1\parallel\cdots\parallel a_k$ or $(a_1\parallel\cdots\parallel a_k) X$ is an exit for the cluster $C$ iff: (1) $a_1\parallel\cdots\parallel a_k$ or $(a_1\parallel\cdots\parallel a_k) X$ is a summand at the right-hand side of the recursive equation for a recursion variable in $C$, and (2) in the case of $(a_1\parallel\cdots\parallel a_k) X$, either $a_l\notin I\cup\{\tau\}(l\in\{1,2,\cdots,k\})$ or $X\notin C$.
\end{definition}

\begin{center}
\begin{table}
  \begin{tabular}{@{}ll@{}}
\hline No. &Axiom\\
  $CFAR$ & If $X$ is in a cluster for $I$ with exits \\
           & $\{(a_{11}\parallel\cdots\parallel a_{1i})Y_1,\cdots,(a_{m1}\parallel\cdots\parallel a_{mi})Y_m, b_{11}\parallel\cdots\parallel b_{1j},\cdots,b_{n1}\parallel\cdots\parallel b_{nj}\}$, \\
           & then $\tau\cdot\tau_I(\langle X|E\rangle)=$\\
           & $\tau\cdot\tau_I((a_{11}\parallel\cdots\parallel a_{1i})\langle Y_1|E\rangle+\cdots+(a_{m1}\parallel\cdots\parallel a_{mi})\langle Y_m|E\rangle+b_{11}\parallel\cdots\parallel b_{1j}+\cdots+b_{n1}\parallel\cdots\parallel b_{nj})$\\
\end{tabular}
\caption{Cluster fair abstraction rule}
\label{CFAR}
\end{table}
\end{center}

\begin{theorem}[Soundness of $CFAR$]\label{SCFAR}
$CFAR$ is sound modulo rooted branching truly concurrent bisimulation equivalences $\approx_{rbs}$, $\approx_{rbp}$ and $\approx_{rbhp}$.
\end{theorem}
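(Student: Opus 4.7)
The plan is to prove, for each of $\approx_{rbs}$, $\approx_{rbp}$, and $\approx_{rbhp}$ in turn, that the two sides of CFAR are related by constructing an explicit witness rooted branching bisimulation. The leading $\tau$ on both sides discharges the root condition immediately: after the root $\tau$ fires on either side, we are left with the task of showing that $\tau_I(\langle X|E\rangle)$ is branching truly concurrent bisimilar (not rooted) to $\tau_I\!\left(\sum_k (a_{k1}\parallel\cdots\parallel a_{ki})\langle Y_k|E\rangle + \sum_l (b_{l1}\parallel\cdots\parallel b_{lj})\right)$, and then invoke Proposition \ref{WSCBE} style reasoning together with the congruence of $\tau_I$ to lift this to the rooted version.

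First I would unpack the cluster structure via Definition \ref{CLUSTER}. For every recursion variable $Z$ in the cluster $C$ of $X$, the only transitions from $\langle Z|E\rangle$ are either (i) internal transitions labeled by actions in $I\cup\{\tau\}$ that stay inside $C$, or (ii) exit transitions, which by definition of cluster must coincide with the enumerated exit set. After applying $\tau_I$, transitions of type (i) become $\tau$-transitions, while transitions of type (ii) remain externally observable. I would then define the witness relation
\[
R \;=\; \{(\tau_I(\langle Z|E\rangle),\,S)\mid Z\in C\}\;\cup\;\{(w,w)\mid w \text{ reachable after some exit}\},
\]
where $S$ abbreviates the right-hand exit-sum under $\tau_I$. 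Verifying the branching bisimulation clauses then reduces to case analysis: an internal transition on the left is matched by the empty $\tau$-move on the right (permitted by the first disjunct of clauses (1),(2) in Definition \ref{RBPSB}), while an exit transition $\tau_I(\langle Z|E\rangle)\xrightarrow{\{a_{k1},\ldots,a_{ki}\}}\tau_I(\langle Y_k|E\rangle)$ is matched directly by the corresponding summand on the right, landing in the second part of $R$.

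The lifting to pomset and hp-variants is largely routine: pomset transitions decompose into concurrent single-event transitions, and $\tau_I$ distributes over $\parallel$ by axiom $TI6$, so the same relation $R$ witnesses $\approx_{rbp}$; for $\approx_{rbhp}$ I would additionally maintain a weakly posetal isomorphism $f$ that maps $\tau$-events to $\tau$, exploiting the fact that the weakly posetal product uses $\hat{C}$ (cf.\ the definition of weakly posetal product), so the fresh internal $\tau$-events generated while looping through the cluster simply do not appear in $\hat{C_1}$ and do not disturb $f$.

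The genuine obstacle, and the reason CFAR must be postulated rather than derived, is the fairness argument. A priori, $\tau_I(\langle X|E\rangle)$ could admit an infinite sequence of $\tau$-transitions looping inside the cluster without ever committing to an exit, whereas the right-hand side can always immediately produce an exit; worse, clauses (3) and (4) of Definition \ref{RBPSB} require that every terminating state on one side be matched by a finite $\tau^*$-path to a terminating state on the other. The standard treatment (as in the $ACP$-theoretic justification of CFAR) is to assume cluster fairness: any computation that visits states in $C$ infinitely often must eventually take an exit. Under this fairness hypothesis, whenever the right-hand side commits to an exit $b_{l1}\parallel\cdots\parallel b_{lj}$ with termination, I can exhibit a matching finite $\tau^*$-prefix on the left followed by the same exit. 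I would make this explicit by invoking the fairness hypothesis at exactly this point in the case analysis, thereby isolating it as the sole non-algebraic ingredient of the soundness proof.
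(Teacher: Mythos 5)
Your witness-relation construction is sound and is in fact a more explicit version of what the paper does: the paper's own proof merely observes that $\langle X|E\rangle$ can execute a string of events from $I\cup\{\tau\}$ inside the cluster followed by an exit, that after $\tau_I$ these become non-initial $\tau$'s which are ``truly silent'' by $B1$, and concludes the two sides are $\approx_{rbs}$-equivalent; it never writes down a relation. Your $R$, the discharge of the root condition by the leading $\tau$, and the reduction of the pomset/hp cases to the step case via decomposition of weak pomset transitions all match the paper's intent.

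The problem is your last paragraph. You claim that a cluster-fairness hypothesis must be invoked to match an exit chosen by the right-hand side, and that this is ``the sole non-algebraic ingredient of the soundness proof.'' For the equivalences as defined in this paper that is not so, and inserting such a hypothesis would mean proving a statement weaker than the theorem. Clause (2) of Definition \ref{RBPSB}/\ref{BPSB} is purely existential: to match $S\xrightarrow{X}S'$ you only need \emph{some} finite sequence $\tau_I(\langle Z|E\rangle)\xrightarrow{\tau^*}\cdot\xRightarrow{X}\cdot$, and this is supplied not by fairness but by Definition \ref{CLUSTER}, which requires every two variables of the cluster to be mutually reachable by transitions labelled in $I\cup\{\tau\}$; hence from any $Z\in C$ one can reach, in finitely many internal steps, the variable whose equation carries the required exit summand. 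The possibility of an infinite $\tau$-loop inside the cluster is simply invisible to branching bisimulation, which (as defined here, clauses (3)--(4) only constrain termination, not divergence) is divergence-blind. Fairness is the informal motivation for accepting $CFAR$ as an axiom and for working modulo a divergence-blind equivalence; it is not a premise of the soundness argument, and your case analysis should cite the mutual-reachability property of clusters at the point where you currently cite fairness.
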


\begin{proof}
(1) Soundness of $CFAR$ with respect to rooted branching step bisimulation $\approx_{rbs}$.

Let $X$ be in a cluster for $I$ with exits $\{(a_{11}\parallel\cdots\parallel a_{1i})Y_1,\cdots,(a_{m1}\parallel\cdots\parallel a_{mi})Y_m,b_{11}\parallel\cdots\parallel b_{1j},\cdots,b_{n1}\parallel\cdots\parallel b_{nj}\}$. Then $\langle X|E\rangle$ can execute a string of atomic events from $I\cup\{\tau\}$ inside the cluster of $X$, followed by an exit $(a_{i'1}\parallel\cdots\parallel a_{i'i})Y_{i'}$ for $i'\in\{1,\cdots,m\}$ or $b_{j'1}\parallel\cdots\parallel b_{j'j}$ for $j'\in\{1,\cdots,n\}$. Hence, $\tau_I(\langle X|E\rangle)$ can execute a string of $\tau^*$ inside the cluster of $X$, followed by an exit $\tau_I((a_{i'1}\parallel\cdots\parallel a_{i'i})\langle Y_{i'}|E\rangle)$ for $i'\in\{1,\cdots,m\}$ or $\tau_I(b_{j'1}\parallel\cdots\parallel b_{j'j})$ for $j'\in\{1,\cdots,n\}$. And these $\tau^*$ are non-initial in $\tau\tau_I(\langle X|E\rangle)$, so they are truly silent by the axiom $B1$, we obtain $\tau\tau_I(\langle X|E\rangle)\approx_{rbs}\tau\cdot\tau_I((a_{11}\parallel\cdots\parallel a_{1i})\langle Y_1|E\rangle+\cdots+(a_{m1}\parallel\cdots\parallel a_{mi})\langle Y_m|E\rangle+b_{11}\parallel\cdots\parallel b_{1j}+\cdots+b_{n1}\parallel\cdots\parallel b_{nj})$, as desired.

(2) Soundness of $CFAR$ with respect to rooted branching pomset bisimulation $\approx_{rbp}$.

From the definition of rooted branching pomset bisimulation $\approx_{rbp}$ (see Definition \ref{RBPSB}), we know that rooted branching pomset bisimulation $\approx_{rbp}$ is defined by weak pomset transitions, which are labeled by pomsets with $\tau$. In a weak pomset transition, the events in the pomset are either within causality relations (defined by $\cdot$) or in concurrency (implicitly defined by $\cdot$ and $+$, and explicitly defined by $\between$), of course, they are pairwise consistent (without conflicts). In (1), we have already proven the case that all events are pairwise concurrent, so, we only need to prove the case of events in causality. Without loss of generality, we take a pomset of $P=\{e_1,e_2:e_1\cdot e_2\}$. Then the weak pomset transition labeled by the above $P$ is just composed of one single event transition labeled by $e_1$ succeeded by another single event transition labeled by $e_2$, that is, $\xRightarrow{P}=\xRightarrow{e_1}\xRightarrow{e_2}$.

Similarly to the proof of soundness of $CFAR$ modulo rooted branching step bisimulation $\approx_{rbs}$ (1), we can prove that $CFAR$ in Table \ref{CFAR} is sound modulo rooted branching pomset bisimulation $\approx_{rbp}$, we omit them.

(3) Soundness of $CFAR$ with respect to rooted branching hp-bisimulation $\approx_{rbhp}$.

From the definition of rooted branching hp-bisimulation $\approx_{rbhp}$ (see Definition \ref{RBHHPB}), we know that rooted branching hp-bisimulation $\approx_{rbhp}$ is defined on the weakly posetal product $(C_1,f,C_2),f:\hat{C_1}\rightarrow \hat{C_2}\textrm{ isomorphism}$. Two process terms $s$ related to $C_1$ and $t$ related to $C_2$, and $f:\hat{C_1}\rightarrow \hat{C_2}\textrm{ isomorphism}$. Initially, $(C_1,f,C_2)=(\emptyset,\emptyset,\emptyset)$, and $(\emptyset,\emptyset,\emptyset)\in\approx_{rbhp}$. When $s\xrightarrow{e}s'$ ($C_1\xrightarrow{e}C_1'$), there will be $t\xRightarrow{e}t'$ ($C_2\xRightarrow{e}C_2'$), and we define $f'=f[e\mapsto e]$. Then, if $(C_1,f,C_2)\in\approx_{rbhp}$, then $(C_1',f',C_2')\in\approx_{rbhp}$.

Similarly to the proof of soundness of $CFAR$ modulo rooted branching pomset bisimulation equivalence (2), we can prove that $CFAR$ in Table \ref{CFAR} is sound modulo rooted branching hp-bisimulation equivalence, we just need additionally to check the above conditions on rooted branching hp-bisimulation, we omit them.
\end{proof}

\begin{theorem}[Completeness of $APTC_{\tau}$ with guarded linear recursion and $CFAR$]\label{CCFAR}
Let $p$ and $q$ be closed $APTC_{\tau}$ with guarded linear recursion and $CFAR$ terms, then,
\begin{enumerate}
  \item if $p\approx_{rbs} q$ then $p=q$;
  \item if $p\approx_{rbp} q$ then $p=q$;
  \item if $p\approx_{rbhp} q$ then $p=q$.
\end{enumerate}
\end{theorem}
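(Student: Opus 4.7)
The plan is to reduce this completeness result to Theorem \ref{CAPTCTAU} (completeness of $APTC$ with silent step and guarded linear recursion) by proving an elimination-style theorem: every closed $APTC_\tau$ term with guarded linear recursion and $CFAR$ is provably equal to $\langle X_1|E\rangle$ for some \emph{guarded} linear recursive specification $E$ that contains no occurrence of $\tau_I$. Once this is in hand, the three cases follow uniformly: given $p\approx_{rbs} q$ (respectively $\approx_{rbp}$, $\approx_{rbhp}$), rewrite $p=\langle X_1|E_1\rangle$ and $q=\langle Y_1|E_2\rangle$ with $E_1,E_2$ guarded linear and $\tau_I$-free, apply the soundness theorem (Theorem \ref{SAPTCABS}) to pass the equivalence through the rewriting, and invoke Theorem \ref{CAPTCTAU} to conclude $\langle X_1|E_1\rangle=\langle Y_1|E_2\rangle$.

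First I would push every $\tau_I$ inward to the recursion variables using axioms $TI1$--$TI6$ and $D1$--$D6$. Given $\tau_I(\langle X_1|E\rangle)$ with $E=\{X_i=t_i(X_1,\dots,X_n)\mid i\in\{1,\dots,n\}\}$ in the standard linear form, introduce fresh variables $X_i^I$ and define $E^I=\{X_i^I=\tau_I(t_i)[X_j\mapsto X_j^I]\mid i\in\{1,\dots,n\}\}$. Applying $TI4$--$TI6$ distributes $\tau_I$ over $+$, $\cdot$, $\parallel$, leaving only $\tau_I(a_{kl})$ at the leaves, which $TI1$--$TI3$ convert to $a_{kl}$ or $\tau$. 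So each right-hand side of $E^I$ is again in linear form. By $RSP$ (since substituting $\tau_I(\langle X_i|E\rangle)$ for $X_i^I$ solves $E^I$), we get $\tau_I(\langle X_1|E\rangle)=\langle X_1^I|E^I\rangle$. Iterating, all $\tau_I$-operators are pushed to the atomic level and vanish, so closed $APTC_\tau$ terms reduce to $\langle X_1|E'\rangle$ for a linear recursive specification $E'$ over $\mathbb{E}\cup\{\tau\}$.

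The hard part is that $E'$ need not be guarded: renaming events in $I$ to $\tau$ can create $\tau$-loops $\langle X|E'\rangle\xrightarrow{\tau}\langle X'|E'\rangle\xrightarrow{\tau}\cdots\xrightarrow{\tau}\langle X|E'\rangle$. This is precisely where $CFAR$ enters. Partition the recursion variables of $E'$ into clusters for $\emptyset$ (the clusters of the silent part) as in Definition \ref{CLUSTER}, and for each cluster $C$ identify its exits $\{(a_{11}\parallel\cdots\parallel a_{1i})Y_1,\dots,b_{n1}\parallel\cdots\parallel b_{nj}\}$. By $CFAR$, every variable $X\in C$ satisfies $\tau\cdot\langle X|E'\rangle=\tau\cdot((a_{11}\parallel\cdots\parallel a_{1i})\langle Y_1|E'\rangle+\cdots+b_{n1}\parallel\cdots\parallel b_{nj})$ (applied to the identity abstraction, $\tau_\emptyset$, which acts trivially on $\mathbb{E}\cup\{\tau\}$). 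Using this I can collapse each cluster to a single summand preceded by a $\tau$, absorb the leading $\tau$ into the incoming prefix via axiom $B1$, and obtain an equivalent specification $E''$ in which no $\tau$-cluster has more than one variable; iterating (and using $RSP$ to name the resulting solutions) yields a guarded linear recursive specification $E^\star$ with $\langle X_1|E'\rangle=\langle X_1^\star|E^\star\rangle$.

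The main obstacle, and where the proof requires the most care, is the second step: verifying that the cluster collapse is well-defined and terminates. One must ensure that clusters are finite (which follows from linearity of the specification, since each right-hand side has finitely many summands and thus finitely many variables are reachable via $\tau$-transitions), that the exits of a cluster are genuinely outside that cluster (so the reduction strictly decreases the number of $\tau$-transitions internal to clusters), and that $B1$ can always absorb the extra leading $\tau$ produced by $CFAR$ because the resulting subterm is preceded either by an atomic event or an initial $\tau$. Once guardedness of $E^\star$ is secured, the remainder is mechanical: the equivalence $p\approx_{\ast} q$ transfers to $\langle X_1^\star|E_1^\star\rangle\approx_{\ast}\langle Y_1^\star|E_2^\star\rangle$ by soundness (Theorems \ref{SAPTCABS} and \ref{SCFAR}), and Theorem \ref{CAPTCTAU} closes all three cases in parallel.
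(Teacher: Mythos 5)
Your overall strategy --- reduce to Theorem \ref{CAPTCTAU} by showing that every closed term, including those of the form $\tau_I(q)$, is provably equal to $\langle X_1|E\rangle$ for a \emph{guarded} linear recursive specification, with $CFAR$ handling the potential $\tau$-loops --- is the same as the paper's. But the order in which you perform the two steps creates a genuine gap. You first distribute $\tau_I$ over the body of $E$ via $TI1$--$TI6$ to obtain a specification $E^I$ (your $E'$) in which the events of $I$ have already become $\tau$, and you justify $\tau_I(\langle X_1|E\rangle)=\langle X_1^I|E^I\rangle$ by $RSP$. This is exactly where the argument breaks: $RSP$ is only sound for \emph{guarded} recursive specifications (an unguarded specification such as $X=\tau X$ has many solutions), and, as you yourself observe two sentences later, $E^I$ need not be guarded --- indeed the problematic case is precisely the one where it is not. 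Moreover $\langle X_1^I|E^I\rangle$ is not even a term of ``$APTC_{\tau}$ with guarded linear recursion,'' since only guarded specifications are admitted as constants, and Definition \ref{CLUSTER} likewise presupposes a guarded specification, so the subsequent appeal to $CFAR$ with $I=\emptyset$ on the unguarded $E'$ is not licensed by the rule as stated.

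The paper avoids this by reversing the order: it keeps $E$ guarded, computes the clusters of $E$ \emph{for $I$} (cycles through events that are about to become silent, not through $\tau$ itself), and applies $CFAR$ in its stated form directly to $\tau_I(\langle Z|E\rangle)$, replacing it by $\tau_I$ of the cluster's exits. The exit summands, with their labels hatted (events of $I$ renamed to $\tau$), then assemble into a new linear specification $F$ that is guarded by construction, because no $\tau$-cycle can pass through an exit; only at that point is $RSP$ invoked, and only on the guarded $F$. To salvage your route you would have to extend $RSP$ and the cluster machinery to unguarded specifications, which is not available in this theory; the fix is simply to apply $CFAR$ before, not after, pushing $\tau_I$ inward.
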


\begin{proof}
(1) For the case of rooted branching step bisimulation, the proof is following.

Firstly, in the proof the Theorem \ref{CAPTCTAU}, we know that each process term $p$ in $APTC$ with silent step and guarded linear recursion is equal to a process term $\langle X_1|E\rangle$ with $E$ a guarded linear recursive specification. And we prove if $\langle X_1|E_1\rangle\approx_{rbs}\langle Y_1|E_2\rangle$, then $\langle X_1|E_1\rangle=\langle Y_1|E_2\rangle$

The only new case is $p\equiv\tau_I(q)$. Let $q=\langle X|E\rangle$ with $E$ a guarded linear recursive specification, so $p=\tau_I(\langle X|E\rangle)$. Then the collection of recursive variables in $E$ can be divided into its clusters $C_1,\cdots,C_N$ for $I$. Let

$$(a_{1i1}\parallel\cdots\parallel a_{k_{i1}i1}) Y_{i1}+\cdots+(a_{1im_i}\parallel\cdots\parallel a_{k_{im_i}im_i}) Y_{im_i}+b_{1i1}\parallel\cdots\parallel b_{l_{i1}i1}+\cdots+b_{1im_i}\parallel\cdots\parallel b_{l_{im_i}im_i}$$

be the conflict composition of exits for the cluster $C_i$, with $i\in\{1,\cdots,N\}$.

For $Z\in C_i$ with $i\in\{1,\cdots,N\}$, we define

$$s_Z\triangleq (\hat{a_{1i1}}\parallel\cdots\parallel \hat{a_{k_{i1}i1}}) \tau_I(\langle Y_{i1}|E\rangle)+\cdots+(\hat{a_{1im_i}}\parallel\cdots\parallel \hat{a_{k_{im_i}im_i}}) \tau_I(\langle Y_{im_i}|E\rangle)+\hat{b_{1i1}}\parallel\cdots\parallel \hat{b_{l_{i1}i1}}+\cdots+\hat{b_{1im_i}}\parallel\cdots\parallel \hat{b_{l_{im_i}im_i}}$$

For $Z\in C_i$ and $a_1,\cdots,a_j\in \mathbb{E}\cup\{\tau\}$ with $j\in\mathbb{N}$, we have

$(a_1\parallel\cdots\parallel a_j)\tau_I(\langle Z|E\rangle)$

$=(a_1\parallel\cdots\parallel a_j)\tau_I((a_{1i1}\parallel\cdots\parallel a_{k_{i1}i1}) \langle Y_{i1}|E\rangle+\cdots+(a_{1im_i}\parallel\cdots\parallel a_{k_{im_i}im_i}) \langle Y_{im_i}|E\rangle+b_{1i1}\parallel\cdots\parallel b_{l_{i1}i1}+\cdots+b_{1im_i}\parallel\cdots\parallel b_{l_{im_i}im_i})$

$=(a_1\parallel\cdots\parallel a_j)s_Z$

Let the linear recursive specification $F$ contain the same recursive variables as $E$, for $Z\in C_i$, $F$ contains the following recursive equation

$$Z=(\hat{a_{1i1}}\parallel\cdots\parallel \hat{a_{k_{i1}i1}}) Y_{i1}+\cdots+(\hat{a_{1im_i}}\parallel\cdots\parallel \hat{a_{k_{im_i}im_i}})  Y_{im_i}+\hat{b_{1i1}}\parallel\cdots\parallel \hat{b_{l_{i1}i1}}+\cdots+\hat{b_{1im_i}}\parallel\cdots\parallel \hat{b_{l_{im_i}im_i}}$$

It is easy to see that there is no sequence of one or more $\tau$-transitions from $\langle Z|F\rangle$ to itself, so $F$ is guarded.

For

$$s_Z=(\hat{a_{1i1}}\parallel\cdots\parallel \hat{a_{k_{i1}i1}}) Y_{i1}+\cdots+(\hat{a_{1im_i}}\parallel\cdots\parallel \hat{a_{k_{im_i}im_i}}) Y_{im_i}+\hat{b_{1i1}}\parallel\cdots\parallel \hat{b_{l_{i1}i1}}+\cdots+\hat{b_{1im_i}}\parallel\cdots\parallel \hat{b_{l_{im_i}im_i}}$$

is a solution for $F$. So, $(a_1\parallel\cdots\parallel a_j)\tau_I(\langle Z|E\rangle)=(a_1\parallel\cdots\parallel a_j)s_Z=(a_1\parallel\cdots\parallel a_j)\langle Z|F\rangle$.

So,

$$\langle Z|F\rangle=(\hat{a_{1i1}}\parallel\cdots\parallel \hat{a_{k_{i1}i1}}) \langle Y_{i1}|F\rangle+\cdots+(\hat{a_{1im_i}}\parallel\cdots\parallel \hat{a_{k_{im_i}im_i}}) \langle Y_{im_i}|F\rangle+\hat{b_{1i1}}\parallel\cdots\parallel \hat{b_{l_{i1}i1}}+\cdots+\hat{b_{1im_i}}\parallel\cdots\parallel \hat{b_{l_{im_i}im_i}}$$

Hence, $\tau_I(\langle X|E\rangle=\langle Z|F\rangle)$, as desired.

(2) For the case of rooted branching pomset bisimulation, it can be proven similarly to (1), we omit it.

(3) For the case of rooted branching hp-bisimulation, it can be proven similarly to (1), we omit it.
\end{proof}

Finally, in section \ref{aptc}, during conflict elimination, the axioms $U25$ and $U27$ are $(\sharp(e_1,e_2))\quad e_1\triangleleft e_2 = \tau$ and $(\sharp(e_1,e_2),e_2\leq e_3)\quad e3\triangleleft e_1 = \tau$. Their functions are like abstraction operator $\tau_I$, their rigorous soundness can be proven similarly to Theorem \ref{SAPTCTAU} and Theorem \ref{SAPTCABS}, really, they are based on weakly true concurrency. We just illustrate their intuition through an example.

\begin{eqnarray}
P&=&\Theta((a\cdot b\cdot c)\parallel (d\cdot e\cdot f)\quad(\sharp(b,e)))\nonumber\\
&\overset{\text{CE23}}{=}&(\Theta(a\cdot b\cdot c)\triangleleft (d\cdot e\cdot f))\parallel (d\cdot e\cdot f)+(\Theta(d\cdot e\cdot f)\triangleleft (a\cdot b\cdot c))\parallel (a\cdot b\cdot c)\quad(\sharp(b,e))\nonumber\\
&\overset{\text{CE22}}{=}&(a\cdot b\cdot c)\triangleleft (d\cdot e\cdot f))\parallel (d\cdot e\cdot f)+((d\cdot e\cdot f)\triangleleft (a\cdot b\cdot c))\parallel (a\cdot b\cdot c)\quad(\sharp(b,e))\nonumber\\
&\overset{\text{U31,U35}}{=}&(a\cdot\tau\cdot\tau)\parallel (d\cdot e\cdot f)+(d\cdot\tau\cdot\tau)\parallel (a\cdot b\cdot c)\nonumber\\
&\overset{\text{B1}}{=}&a\parallel (d\cdot e\cdot f)+d\parallel (a\cdot b\cdot c)\nonumber
\end{eqnarray}

We see that the conflict relation $\sharp(b,e)$ is eliminated.

\section{Applications}\label{app}

$APTC$ provides a formal framework based on truly concurrent behavioral semantics, which can be used to verify the correctness of system behaviors. In this section, we tend to choose one protocol verified by $ACP$ \cite{ACP} -- alternating bit protocol (ABP) \cite{ABP}.

The ABP protocol is used to ensure successful transmission of data through a corrupted channel. This success is based on the assumption that data can be resent an unlimited number of times, which is illustrated in Fig.\ref{ABP}, we alter it into the true concurrency situation.

\begin{enumerate}
  \item Data elements $d_1,d_2,d_3,\cdots$ from a finite set $\Delta$ are communicated between a Sender and a Receiver.
  \item If the Sender reads a datum from channel $A_1$, then this datum is sent to the Receiver in parallel through channel $A_2$.
  \item The Sender processes the data in $\Delta$, formes new data, and sends them to the Receiver through channel $B$.
  \item And the Receiver sends the datum into channel $C_2$.
  \item If channel $B$ is corrupted, the message communicated through $B$ can be turn into an error message $\bot$.
  \item Every time the Receiver receives a message via channel $B$, it sends an acknowledgement to the Sender via channel $D$, which is also corrupted.
  \item Finally, then Sender and the Receiver send out their outputs in parallel through channels $C_1$ and $C_2$.
\end{enumerate}

\begin{figure}
    \centering
    \includegraphics{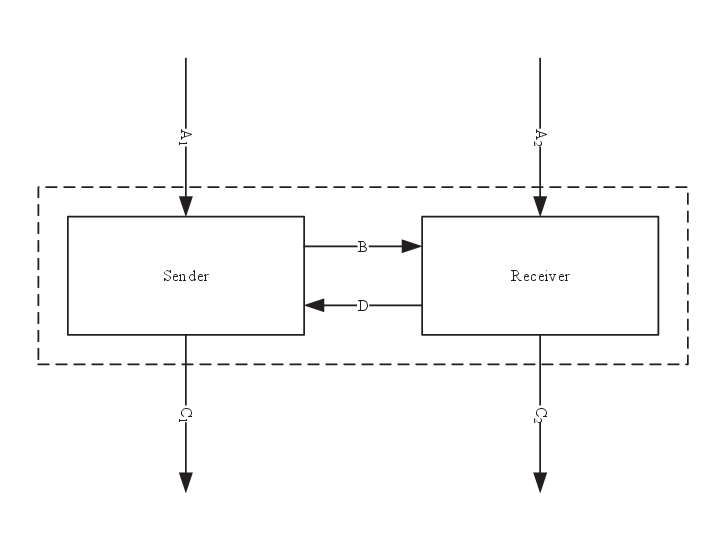}
    \caption{Alternating bit protocol}
    \label{ABP}
\end{figure}

In the truly concurrent ABP, the Sender sends its data to the Receiver; and the Receiver can also send its data to the Sender, for simplicity and without loss of generality, we assume that only the Sender sends its data and the Receiver only receives the data from the Sender. The Sender attaches a bit 0 to data elements $d_{2k-1}$ and a bit 1 to data elements $d_{2k}$, when they are sent into channel $B$. When the Receiver reads a datum, it sends back the attached bit via channel $D$. If the Receiver receives a corrupted message, then it sends back the previous acknowledgement to the Sender.

Then the state transition of the Sender can be described by $APTC$ as follows.

\begin{eqnarray}
&&S_b=\sum_{d\in\Delta}r_{A_1}(d)\cdot T_{db}\nonumber\\
&&T_{db}=(\sum_{d'\in\Delta}(s_B(d',b)\cdot s_{C_1}(d'))+s_B(\bot))\cdot U_{db}\nonumber\\
&&U_{db}=r_D(b)\cdot S_{1-b}+(r_D(1-b)+r_D(\bot))\cdot T_{db}\nonumber
\end{eqnarray}

where $s_B$ denotes sending data through channel $B$, $r_D$ denotes receiving data through channel $D$, similarly, $r_{A_1}$ means receiving data via channel $A_1$, $s_{C_1}$ denotes sending data via channel $C_1$, and $b\in\{0,1\}$.

And the state transition of the Receiver can be described by $APTC$ as follows.

\begin{eqnarray}
&&R_b=\sum_{d\in\Delta}r_{A_2}(d)\cdot R_b'\nonumber\\
&&R_b'=\sum_{d'\in\Delta}\{r_B(d',b)\cdot s_{C_2}(d')\cdot Q_b+r_B(d',1-b)\cdot Q_{1-b}\}+r_B(\bot)\cdot Q_{1-b}\nonumber\\
&&Q_b=(s_D(b)+s_D(\bot))\cdot R_{1-b}\nonumber
\end{eqnarray}

where $r_{A_2}$ denotes receiving data via channel $A_2$, $r_B$ denotes receiving data via channel $B$, $s_{C_2}$ denotes sending data via channel $C_2$, $s_D$ denotes sending data via channel $D$, and $b\in\{0,1\}$.

The send action and receive action of the same data through the same channel can communicate each other, otherwise, a deadlock $\delta$ will be caused. We define the following communication functions.

\begin{eqnarray}
&&\gamma(s_B(d',b),r_B(d',b))\triangleq c_B(d',b)\nonumber\\
&&\gamma(s_B(\bot),r_B(\bot))\triangleq c_B(\bot)\nonumber\\
&&\gamma(s_D(b),r_D(b))\triangleq c_D(b)\nonumber\\
&&\gamma(s_D(\bot),r_D(\bot))\triangleq c_D(\bot)\nonumber
\end{eqnarray}

Let $R_0$ and $S_0$ be in parallel, then the system $R_0S_0$ can be represented by the following process term.

$$\tau_I(\partial_H(\Theta(R_0\between S_0)))=\tau_I(\partial_H(R_0\between S_0))$$

where $H=\{s_B(d',b),r_B(d',b),s_D(b),r_D(b)|d'\in\Delta,b\in\{0,1\}\}\\
\{s_B(\bot),r_B(\bot),s_D(\bot),r_D(\bot)\}$

$I=\{c_B(d',b),c_D(b)|d'\in\Delta,b\in\{0,1\}\}\cup\{c_B(\bot),c_D(\bot)\}$.

Then we get the following conclusion.

\begin{theorem}[Correctness of the ABP protocol]
The ABP protocol $\tau_I(\partial_H(R_0\between S_0))$ exhibits desired external behaviors.
\end{theorem}

\begin{proof}

By use of the algebraic laws of $APTC$, we have the following expansions.

\begin{eqnarray}
R_0\between S_0&\overset{\text{P1}}{=}&R_0\parallel S_0+R_0\mid S_0\nonumber\\
&\overset{\text{RDP}}{=}&(\sum_{d\in\Delta}r_{A_2}(d)\cdot R_0')\parallel(\sum_{d\in\Delta}r_{A_1}(d)T_{d0})\nonumber\\
&&+(\sum_{d\in\Delta}r_{A_2}(d)\cdot R_0')\mid(\sum_{d\in\Delta}r_{A_1}(d)T_{d0})\nonumber\\
&\overset{\text{P6,C14}}{=}&\sum_{d\in\Delta}(r_{A_2}(d)\parallel r_{A_1}(d))R_0'\between T_{d0} + \delta\cdot R_0'\between T_{d0}\nonumber\\
&\overset{\text{A6,A7}}{=}&\sum_{d\in\Delta}(r_{A_2}(d)\parallel r_{A_1}(d))R_0'\between T_{d0}\nonumber
\end{eqnarray}

\begin{eqnarray}
\partial_H(R_0\between S_0)&=&\partial_H(\sum_{d\in\Delta}(r_{A_2}(d)\parallel r_{A_1}(d))R_0'\between T_{d0})\nonumber\\
&&=\sum_{d\in\Delta}(r_{A_2}(d)\parallel r_{A_1}(d))\partial_H(R_0'\between T_{d0})\nonumber
\end{eqnarray}

Similarly, we can get the following equations.

\begin{eqnarray}
\partial_H(R_0\between S_0)&=&\sum_{d\in\Delta}(r_{A_2}(d)\parallel r_{A_1}(d))\cdot\partial_H(T_{d0}\between R_0')\nonumber\\
\partial_H(T_{d0}\between R_0')&=&c_B(d',0)\cdot(s_{C_1}(d')\parallel s_{C_2}(d'))\cdot\partial_H(U_{d0}\between Q_0)+c_B(\bot)\cdot\partial_H(U_{d0}\between Q_1)\nonumber\\
\partial_H(U_{d0}\between Q_1)&=&(c_D(1)+c_D(\bot))\cdot\partial_H(T_{d0}\between R_0')\nonumber\\
\partial_H(Q_0\between U_{d0})&=&c_D(0)\cdot\partial_H(R_1\between S_1)+c_D(\bot)\cdot\partial_H(R_1'\between T_{d0})\nonumber\\
\partial_H(R_1'\between T_{d0})&=&(c_B(d',0)+c_B(\bot))\cdot\partial_H(Q_0\between U_{d0})\nonumber\\
\partial_H(R_1\between S_1)&=&\sum_{d\in\Delta}(r_{A_2}(d)\parallel r_{A_1}(d))\cdot\partial_H(T_{d1}\between R_1')\nonumber\\
\partial_H(T_{d1}\between R_1')&=&c_B(d',1)\cdot(s_{C_1}(d')\parallel s_{C_2}(d'))\cdot\partial_H(U_{d1}\between Q_1)+c_B(\bot)\cdot\partial_H(U_{d1}\between Q_0')\nonumber\\
\partial_H(U_{d1}\between Q_0')&=&(c_D(0)+c_D(\bot))\cdot\partial_H(T_{d1}\between R_1')\nonumber\\
\partial_H(Q_1\between U_{d1})&=&c_D(1)\cdot\partial_H(R_0\between S_0)+c_D(\bot)\cdot\partial_H(R_0'\between T_{d1})\nonumber\\
\partial_H(R_0'\between T_{d1})&=&(c_B(d',1)+c_B(\bot))\cdot\partial_H(Q_1\between U_{d1})\nonumber
\end{eqnarray}

Let $\partial_H(R_0\between S_0)=\langle X_1|E\rangle$, where E is the following guarded linear recursion specification:

\begin{eqnarray}
&&\{X_1=\sum_{d\in \Delta}(r_{A_2}(d)\parallel r_{A_1}(d))\cdot X_{2d},Y_1=\sum_{d\in\Delta}(r_{A_2}(d)\parallel r_{A_1}(d))\cdot Y_{2d},\nonumber\\
&&X_{2d}=c_B(d',0)\cdot X_{4d}+c_B(\bot)\cdot X_{3d}, Y_{2d}=c_B(d',1)\cdot Y_{4d}+c_B(\bot)\cdot Y_{3d},\nonumber\\
&&X_{3d}=(c_D(1)+c_D(\bot))\cdot X_{2d}, Y_{3d}=(c_D(0)+c_D(\bot))\cdot Y_{2d},\nonumber\\
&&X_{4d}=(s_{C_1}(d')\parallel s_{C_2}(d'))\cdot X_{5d}, Y_{4d}=(s_{C_1}(d')\parallel s_{C_2}(d'))\cdot Y_{5d},\nonumber\\
&&X_{5d}=c_D(0)\cdot Y_1+c_D(\bot)\cdot X_{6d}, Y_{5d}=c_D(1)\cdot X_1+c_D(\bot)\cdot Y_{6d},\nonumber\\
&&X_{6d}=(c_B(d,0)+c_B(\bot))\cdot X_{5d}, Y_{6d}=(c_B(d,1)+c_B(\bot))\cdot Y_{5d}\nonumber\\
&&|d,d'\in\Delta\}\nonumber
\end{eqnarray}

Then we apply abstraction operator $\tau_I$ into $\langle X_1|E\rangle$.

\begin{eqnarray}
\tau_I(\langle X_1|E\rangle)
&=&\sum_{d\in\Delta}(r_{A_1}(d)\parallel r_{A_2}(d))\cdot\tau_I(\langle X_{2d}|E\rangle)\nonumber\\
&=&\sum_{d\in\Delta}(r_{A_1}(d)\parallel r_{A_2}(d))\cdot\tau_I(\langle X_{4d}|E\rangle)\nonumber\\
&=&\sum_{d,d'\in\Delta}(r_{A_1}(d)\parallel r_{A_2}(d))\cdot (s_{C_1}(d')\parallel s_{C_2}(d'))\cdot\tau_I(\langle X_{5d}|E\rangle)\nonumber\\
&=&\sum_{d,d'\in\Delta}(r_{A_1}(d)\parallel r_{A_2}(d))\cdot (s_{C_1}(d')\parallel s_{C_2}(d'))\cdot\tau_I(\langle Y_1|E\rangle)\nonumber
\end{eqnarray}

Similarly, we can get $\tau_I(\langle Y_1|E\rangle)=\sum_{d,d'\in\Delta}(r_{A_1}(d)\parallel r_{A_2}(d))\cdot (s_{C_1}(d')\parallel s_{C_2}(d'))\cdot\tau_I(\langle X_1|E\rangle)
$.

We get $\tau_I(\partial_H(R_0\between S_0))=\sum_{d,d'\in \Delta}(r_{A_1}(d)\parallel r_{A_2}(d))\cdot (s_{C_1}(d')\parallel s_{C_2}(d'))\cdot \tau_I(\partial_H(R_0\between S_0))$. So, the ABP protocol $\tau_I(\partial_H(R_0\between S_0))$ exhibits desired external behaviors.
\end{proof}

\section{Extensions}{\label{ext}}

$APTC$ also has the modularity as $ACP$, so, $APTC$ can be extended easily. By introducing new operators or new constants, $APTC$ can have more properties, modularity provides $APTC$ an elegant fashion to express a new property. In this section, we take examples of placeholder which maybe capture the nature of true concurrency, renaming operator which is used to rename the atomic events and firstly introduced by Milner in his CCS \cite{CCS}, state operator which can explicitly define states, and a more complex extension called guards which can express conditionals.

\subsection{Placeholder}\label{ph}

Through verification of ABP protocol \cite{ABP} in section \ref{app}, we see that the verification is in a structural symmetric way. Let we see the following example.

\begin{eqnarray}
(a\cdot r_b)\between w_b&=&(a\parallel w_b)\cdot r_b+\gamma(a,w_b)\cdot r_b\nonumber\\
&=&\delta\cdot r_b+\delta\cdot r_b \nonumber\\
&=&\delta+\delta \nonumber\\
&=&\delta\nonumber
\end{eqnarray}

With $\gamma(r_b,w_b)\triangleq c_b$. The communication $c_b$ does not occur and the above equation should be able to be equal to $a\cdot c_b$. How to deal this situation?

It is caused that the two communicating actions are not at the same causal depth. That is, we must pad something in hole of $(a\cdot r_b)\between([-]\cdot w_b)$ to make $r_b$ and $w_b$ in the same causal depth.

Can we pad $\tau$ into that hole? No. Because $\tau\cdot w_b\neq w_b$. We must pad something new to that hole.

\subsubsection{Transition Rules of Shadow Constant}

We introduce a constant called shadow constant $\circledS$ to act for the placeholder that we ever used to deal entanglement in quantum process algebra. The transition rule of the shadow constant $\circledS$ is shown in Table \ref{TRForShadow}. The rule say that $\circledS$ can terminate successfully without executing any action.

\begin{center}
    \begin{table}
        $$\frac{}{\circledS\rightarrow\surd}$$
        \caption{Transition rule of the shadow constant}
        \label{TRForShadow}
    \end{table}
\end{center}

\begin{theorem}[Conservativity of $APTC$ with respect to the shadow constant]
$APTC_{\tau}$ with guarded linear recursion and shadow constant is a conservative extension of $APTC_{\tau}$ with guarded linear recursion.
\end{theorem}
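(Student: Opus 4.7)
The plan is to follow the same template used for all earlier conservativity results in the paper (for encapsulation, projection, abstraction, and renaming), which appeal directly to the general Conservative Extension Theorem \ref{TCE}. By that theorem, it suffices to verify two conditions: first, that the transition system specification $T_0$ for $APTC_{\tau}$ with guarded linear recursion is source-dependent, and second, that each new rule in the extending specification $T_1$ either has a fresh source or has a premise of a suitable form whose variables and label/predicate meet the freshness clause.

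First, I would invoke the source-dependency of $APTC_{\tau}$ with guarded linear recursion, which has already been established incrementally through the earlier conservativity theorems (for $BATC$, for parallelism, for encapsulation, abstraction, and recursion); since each extension was shown to be source-dependent, the full TSS up through $APTC_{\tau}$ with guarded linear recursion inherits this property. Second, I would inspect the single new transition rule for the shadow constant in Table~\ref{TRForShadow}: its source is the constant $\circledS$ itself, which is a fresh symbol not occurring in the signature of $APTC_{\tau}$ with guarded linear recursion. Hence the source is fresh, and the first disjunct of clause (2) of Theorem~\ref{TCE} is satisfied trivially; no premise analysis is needed. The conclusion then follows directly: every transition $t\xrightarrow{a}t'$ or $tP$ with $t$ a closed term of the base signature is generated in the extended TSS exactly when it is generated in the original one.

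The only mildly delicate point, and the thing I would flag as the main obstacle, is that the shadow constant's rule uses an unlabelled arrow $\circledS\rightarrow\surd$ rather than an event-labelled transition $\xrightarrow{e}\surd$. I would need to argue that this does not break the applicability of Theorem~\ref{TCE}: either by treating $\rightarrow$ as a fresh predicate/label (which is consistent with the freshness clause in Definition of Freshness, since $\rightarrow$ does not occur in the base TSS), or by viewing $\circledS\rightarrow\surd$ as shorthand for a transition labelled by a designated fresh ``empty'' label that is silently absorbed by the surrounding structure. Either reading secures the hypotheses of Theorem~\ref{TCE}; once that bookkeeping is discharged, the rest of the argument is a one-line appeal to the theorem, as in the prior conservativity proofs.
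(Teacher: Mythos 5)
Your proposal matches the paper's own proof, which likewise cites Theorem \ref{TCE} and rests on exactly the two facts you identify: source-dependency of the transition rules of $APTC_{\tau}$ with guarded linear recursion, and freshness of the source of the shadow constant's rule (it contains an occurrence of $\circledS$). Your additional remark about the unlabelled arrow $\circledS\rightarrow\surd$ is a reasonable piece of bookkeeping that the paper silently glosses over, but it does not change the argument.
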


\begin{proof}
It follows from the following two facts (see Theorem \ref{TCE}).

\begin{enumerate}
  \item The transition rules of $APTC_{\tau}$ with guarded linear recursion in section \ref{abs} are all source-dependent;
  \item The sources of the transition rules for the shadow constant contain an occurrence of $\circledS$.
\end{enumerate}
\end{proof}

\subsubsection{Axioms for Shadow Constant}

We design the axioms for the shadow constant $\circledS$ in Table \ref{AxiomsForShadow}. And for $\circledS^e_i$, we add superscript $e$ to denote $\circledS$ is belonging to $e$ and subscript $i$ to denote that it is the $i$-th shadow of $e$. And we extend the set $\mathbb{E}$ to the set $\mathbb{E}\cup\{\tau\}\cup\{\delta\}\cup\{\circledS^{e}_i\}$.

\begin{center}
\begin{table}
  \begin{tabular}{@{}ll@{}}
\hline No. &Axiom\\
  $SC1$ & $\circledS\cdot x = x$\\
  $SC2$ & $x\cdot \circledS = x$\\
  $SC3$ & $\circledS^{e}\parallel e=e$\\
  $SC4$ & $e\parallel(\circledS^{e}\cdot y) = e\cdot y$\\
  $SC5$ & $\circledS^{e}\parallel(e\cdot y) = e\cdot y$\\
  $SC6$ & $(e\cdot x)\parallel\circledS^{e} = e\cdot x$\\
  $SC7$ & $(\circledS^{e}\cdot x)\parallel e = e\cdot x$\\
  $SC8$ & $(e\cdot x)\parallel(\circledS^{e}\cdot y) = e\cdot (x\between y)$\\
  $SC9$ & $(\circledS^{e}\cdot x)\parallel(e\cdot y) = e\cdot (x\between y)$\\
\end{tabular}
\caption{Axioms of shadow constant}
\label{AxiomsForShadow}
\end{table}
\end{center}

The mismatch of action and its shadows in parallelism will cause deadlock, that is, $e\parallel \circledS^{e'}=\delta$ with $e\neq e'$. We must make all shadows $\circledS^e_i$ are distinct, to ensure $f$ in hp-bisimulation is an isomorphism.

\begin{theorem}[Soundness of the shadow constant]\label{SShadow}
Let $x$ and $y$ be $APTC_{\tau}$ with guarded linear recursion and the shadow constant terms. If $APTC_{\tau}$ with guarded linear recursion and the shadow constant $\vdash x=y$, then
\begin{enumerate}
  \item $x\approx_{rbs} y$;
  \item $x\approx_{rbp} y$;
  \item $x\approx_{rbhp} y$.
\end{enumerate}
\end{theorem}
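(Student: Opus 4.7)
The plan is to follow exactly the pattern established for Theorems \ref{SAPTCTAU}, \ref{SAPTCABS}, and \ref{SRenaming}. First I would verify that rooted branching step, pomset, and hp-bisimulation are congruences with respect to the shadow constant $\circledS$ (this is essentially immediate from conservativity, since $\circledS$ just adds a trivial terminating rule and the equivalences are already congruences on the rest of $APTC_\tau$). Given congruence, it suffices to check that each axiom $SC1$--$SC10$ in Table \ref{AxiomsForShadow} is sound modulo each of the three rooted branching truly concurrent bisimulation equivalences.

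I would handle the three equivalences in the same layered fashion used previously: first prove soundness modulo $\approx_{rbs}$ by inspecting the transition rules for each axiom at the level of steps (treating a single event as a one-element step); then lift to $\approx_{rbp}$ by the standard observation that a weak pomset transition labelled by a causal pomset $P=\{e_1,e_2:e_1\cdot e_2\}$ decomposes as $\xRightarrow{e_1}\xRightarrow{e_2}$, reducing the pomset case to the step case together with causality; finally lift to $\approx_{rbhp}$ by additionally tracking the weakly posetal product $(C_1,f,C_2)$ with $f:\hat{C_1}\to\hat{C_2}$ an isomorphism, updating $f'=f[e\mapsto e]$ along each matched transition and verifying that $\circledS^e_i$ preserves the isomorphism property (which requires distinctness of shadows, explicitly assumed after Table \ref{AxiomsForShadow}).

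For the individual axioms, $SC1$ follows because $\circledS\rightarrow\surd$ silently terminates and so acts as an identity under $+$ up to rooted branching equivalence; $SC2$ and $SC3$ follow similarly by noting that the single $\circledS\rightarrow\surd$ transition is not observable in $\approx_{rbs}$ (analogous to the $\tau$-handling in $B1$). The more substantial axioms are $SC4$--$SC10$, which concern the interaction of $\circledS^e$ with $e$ under $\parallel$. For each of these I would derive the transition rules of both sides: e.g.\ for $SC4$, $\circledS^e\parallel e\xrightarrow{\{?,e\}}\surd$ where the $\circledS^e$ side contributes no observable event, matching $e\xrightarrow{e}\surd$; for $SC9$, $(e\cdot x)\parallel(\circledS^e\cdot y)$ produces a transition labelled $\{e\}$ (the shadow contributing silently) to $x\between y$, matching $e\cdot(x\parallel y)\xrightarrow{e}x\parallel y$, and then we invoke the inductive hypothesis $x\between y = x\parallel y$ modulo $+$ via axiom $P1$ together with soundness on the residual (or, more precisely, fold it into the standard ``with the assumption'' clause used throughout the paper).

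The main obstacle will be axioms $SC9$ and $SC10$, where the padding shadow must synchronise at the correct causal depth with a genuine event in the other branch without generating a spurious extra step transition, and where termination on both sides of the equation must match up under $\approx_{rbs}$. Care is needed because $\circledS$ has its own terminating transition rule, so naively the left-hand side admits transitions that the right-hand side does not; the argument must show that any $\circledS^e$-only transition either gets absorbed into the silent termination (by $SC2$/$SC3$-style reasoning) or synchronises with the matching $e$-transition in the parallel branch, and that the resulting posetal products remain isomorphic up to the induced relabelling. The distinctness requirement on shadow indices is precisely what keeps $f$ an isomorphism in the hp-case, and I expect that to be the subtle point to spell out carefully.
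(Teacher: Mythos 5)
Your proposal follows essentially the same route as the paper's proof: establish congruence, check each axiom $SC1$--$SC10$ modulo $\approx_{rbs}$ at the level of steps, then lift to $\approx_{rbp}$ via the decomposition $\xRightarrow{P}=\xRightarrow{e_1}\xRightarrow{e_2}$ and to $\approx_{rbhp}$ via the weakly posetal product. In fact you supply more detail than the paper, which simply declares the soundness of $SC1$--$SC10$ modulo rooted branching step bisimulation ``trivial'' and omits it; your attention to $SC9$/$SC10$ and to the distinctness of shadows in the hp-case is a reasonable elaboration of what the paper leaves implicit.
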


\begin{proof}
(1) Soundness of $APTC_{\tau}$ with guarded linear recursion and the shadow constant with respect to rooted branching step bisimulation $\approx_{rbs}$.

Since rooted branching step bisimulation $\approx_{rbs}$ is both an equivalent and a congruent relation with respect to $APTC_{\tau}$ with guarded linear recursion and the shadow constant, we only need to check if each axiom in Table \ref{AxiomsForShadow} is sound modulo rooted branching step bisimulation equivalence.

Though transition rules in Table \ref{TRForShadow} are defined in the flavor of single event, they can be modified into a step (a set of events within which each event is pairwise concurrent), we omit them. If we treat a single event as a step containing just one event, the proof of this soundness theorem does not exist any problem, so we use this way and still use the transition rules in Table \ref{AxiomsForShadow}.

The proof of soundness of $SC1-SC9$ modulo rooted branching step bisimulation is trivial, and we omit it.

(2) Soundness of $APTC_{\tau}$ with guarded linear recursion and the shadow constant with respect to rooted branching pomset bisimulation $\approx_{rbp}$.

Since rooted branching pomset bisimulation $\approx_{rbp}$ is both an equivalent and a congruent relation with respect to $APTC_{\tau}$ with guarded linear recursion and the shadow constant, we only need to check if each axiom in Table \ref{AxiomsForShadow} is sound modulo rooted branching pomset bisimulation $\approx_{rbp}$.

From the definition of rooted branching pomset bisimulation $\approx_{rbp}$ (see Definition \ref{RBPSB}), we know that rooted branching pomset bisimulation $\approx_{rbp}$ is defined by weak pomset transitions, which are labeled by pomsets with $\tau$. In a weak pomset transition, the events in the pomset are either within causality relations (defined by $\cdot$) or in concurrency (implicitly defined by $\cdot$ and $+$, and explicitly defined by $\between$), of course, they are pairwise consistent (without conflicts). In (1), we have already proven the case that all events are pairwise concurrent, so, we only need to prove the case of events in causality. Without loss of generality, we take a pomset of $P=\{e_1,e_2:e_1\cdot e_2\}$. Then the weak pomset transition labeled by the above $P$ is just composed of one single event transition labeled by $e_1$ succeeded by another single event transition labeled by $e_2$, that is, $\xRightarrow{P}=\xRightarrow{e_1}\xRightarrow{e_2}$.

Similarly to the proof of soundness of $APTC_{\tau}$ with guarded linear recursion and the shadow constant modulo rooted branching step bisimulation $\approx_{rbs}$ (1), we can prove that each axiom in Table \ref{AxiomsForShadow} is sound modulo rooted branching pomset bisimulation $\approx_{rbp}$, we omit them.

(3) Soundness of $APTC_{\tau}$ with guarded linear recursion and the shadow constant with respect to rooted branching hp-bisimulation $\approx_{rbhp}$.

Since rooted branching hp-bisimulation $\approx_{rbhp}$ is both an equivalent and a congruent relation with respect to $APTC_{\tau}$ with guarded linear recursion and the shadow constant, we only need to check if each axiom in Table \ref{AxiomsForShadow} is sound modulo rooted branching hp-bisimulation $\approx_{rbhp}$.

From the definition of rooted branching hp-bisimulation $\approx_{rbhp}$ (see Definition \ref{RBHHPB}), we know that rooted branching hp-bisimulation $\approx_{rbhp}$ is defined on the weakly posetal product $(C_1,f,C_2),f:\hat{C_1}\rightarrow \hat{C_2}\textrm{ isomorphism}$. Two process terms $s$ related to $C_1$ and $t$ related to $C_2$, and $f:\hat{C_1}\rightarrow \hat{C_2}\textrm{ isomorphism}$. Initially, $(C_1,f,C_2)=(\emptyset,\emptyset,\emptyset)$, and $(\emptyset,\emptyset,\emptyset)\in\approx_{rbhp}$. When $s\xrightarrow{e}s'$ ($C_1\xrightarrow{e}C_1'$), there will be $t\xRightarrow{e}t'$ ($C_2\xRightarrow{e}C_2'$), and we define $f'=f[e\mapsto e]$. Then, if $(C_1,f,C_2)\in\approx_{rbhp}$, then $(C_1',f',C_2')\in\approx_{rbhp}$.

Similarly to the proof of soundness of $APTC_{\tau}$ with guarded linear recursion and the shadow constant modulo rooted branching pomset bisimulation equivalence (2), we can prove that each axiom in Table \ref{AxiomsForShadow} is sound modulo rooted branching hp-bisimulation equivalence, we just need additionally to check the above conditions on rooted branching hp-bisimulation, we omit them.
\end{proof}

\begin{theorem}[Completeness of the shadow constant]\label{CRenaming}
Let $p$ and $q$ be closed $APTC_{\tau}$ with guarded linear recursion and $CFAR$ and the shadow constant terms, then,
\begin{enumerate}
  \item if $p\approx_{rbs} q$ then $p=q$;
  \item if $p\approx_{rbp} q$ then $p=q$;
  \item if $p\approx_{rbhp} q$ then $p=q$.
\end{enumerate}
\end{theorem}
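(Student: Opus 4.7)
The plan is to reduce the problem to the already-proven Theorem on completeness of $APTC_{\tau}$ with guarded linear recursion and $CFAR$ by showing that every closed term containing the shadow constant $\circledS$ (or indexed variants $\circledS^{e}_i$) can be provably rewritten, using the axioms $SC1$--$SC10$ of Table~\ref{AxiomsForShadow} together with the parallelism and deadlock axioms, into an equivalent closed term that does not contain any shadow constant. Once this elimination is in place, the problem collapses onto Theorem~\ref{CCFAR}: given $p\approx_{rbs}q$ (resp.\ $\approx_{rbp}$, $\approx_{rbhp}$), we first produce shadow-free $p'$ and $q'$ with $p=p'$ and $q=q'$ derivable; soundness (Theorem~\ref{SShadow}) then gives $p'\approx_{rbs}q'$, and since $p'$ and $q'$ lie in the previous fragment, Theorem~\ref{CCFAR} yields $p'=q'$, hence $p=q$.

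The key technical step, then, is the shadow-elimination lemma. I would prove it by structural induction on the closed term $p$. The atomic cases $e$, $\delta$, $\tau$ are trivial, and $\circledS$ itself is eliminated via $SC2$ or $SC3$ applied to $\circledS\cdot\circledS$, or absorbed by the surrounding context using $SC1$ if it appears as a summand. For the inductive cases involving $+$, $\cdot$, $\partial_H$, $\tau_I$, $\Theta$, and $\triangleleft$, each operator distributes over or commutes with the already-eliminated sub-terms via the standard axioms. The interesting case is $p\equiv p_1\parallel p_2$: here I would expand $p_1$ and $p_2$ into normal-form sums of sequential components using the basic term expansion of Definition~\ref{BTAPTC}, and then inspect each pair of summands. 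If a summand of one side is of the form $e\cdot x$ and a summand of the other has the matching shadow $\circledS^{e}\cdot y$ at the same causal depth, axioms $SC4$--$SC10$ rewrite the parallel composition to $e\cdot(x\parallel y)$; if the shadow is mismatched with $e'\neq e$ it yields $\delta$ by the mismatch rule, and can be removed using $A6,A7$.

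For guarded linear recursive specifications containing $\circledS$ on the right-hand side, I would follow the template of the proof of Theorem~\ref{CRenaming}: given $q\equiv\langle X_1|E\rangle$ with recursive equations involving the shadow constant, define a new guarded linear recursive specification $F$ whose equations are the shadow-elimination of the equations of $E$ (obtained summand-by-summand as above), and then verify that replacing $X_i$ by the shadow-free $\langle X_i|F\rangle$ yields a solution of $E$, so $RSP$ gives $\langle X_1|E\rangle = \langle X_1|F\rangle$. Since $F$ is a guarded linear recursive specification over the shadow-free signature, $\langle X_1|F\rangle$ falls under Theorem~\ref{CCFAR}.

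The main obstacle I anticipate is precisely the causal-depth matching in the parallel case. The shadow axioms only fire when the shadow $\circledS^{e}$ sits at exactly the same causal depth as the matching $e$ in the sibling parallel branch; a shadow buried too deep, or a surplus shadow without its partner, must be systematically paired off or shown to produce deadlock. Handling this cleanly will require maintaining as an induction invariant that in the shadow-free rewrite of a basic term, every shadow introduced by expansion of parallelism is consumed by a matching action summand, with any remaining unpaired shadow absorbed into $\delta$ via the mismatch rule and $A6$; combined with guardedness this prevents infinite shadow propagation across recursion and lets the $RSP$ argument close.
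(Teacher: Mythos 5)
Your route is genuinely different from the paper's. The paper performs no shadow elimination at all: it simply observes that the $\circledS^{e}_i$ are new constants added to $\mathbb{E}$, that no new operators are introduced, and hence that the linearization-plus-$RSP$ argument of Theorem~\ref{CCFAR} applies verbatim to guarded linear recursive specifications over the enlarged event set, the shadows riding along as inert summand constituents. Your proposal instead tries to provably rewrite every closed term into a shadow-free one and then invoke Theorem~\ref{CCFAR} over the old signature. That would be a stronger and more informative reduction, but the elimination lemma on which it rests is not true for all closed terms.

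Concretely: (i) the bare closed term $\circledS$ has no shadow-free equivalent --- it terminates successfully without performing any action, the signature has no empty-process constant, and none of $SC1$--$SC3$ applies to it in isolation (your suggestion of applying $SC2$ or $SC3$ to $\circledS\cdot\circledS$ only returns $\circledS$ again); (ii) parallel compositions such as $\circledS^{e}\parallel\circledS^{e'}$, or a shadow whose intended partner sits at a different causal depth, e.g.\ $(\circledS^{e}\cdot x)\parallel(f\cdot e\cdot y)$, are matched by none of $SC4$--$SC10$; and (iii) the ``mismatch rule'' $e\parallel\circledS^{e'}=\delta$ that your induction invariant relies on to absorb unpaired shadows into $\delta$ is stated only in prose in the paper and is not among the numbered axioms, so it is not available in equational derivations. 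To salvage your reduction you would need either to restrict the elimination claim to well-matched terms (which no longer covers all closed terms, as completeness requires) or to first justify and adjoin the missing mismatch identities as axioms. Absent that, the workable path is the paper's: rerun the proof of Theorem~\ref{CCFAR} directly over the extended event set, noting that no new term-formation cases arise.
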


\begin{proof}
(1) For the case of rooted branching step bisimulation, the proof is following.

Firstly, in the proof the Theorem \ref{CCFAR}, we know that each process term $p$ in $APTC_{\tau}$ with guarded linear recursion is equal to a process term $\langle X_1|E\rangle$ with $E$ a guarded linear recursive specification. And we prove if $\langle X_1|E_1\rangle\approx_{rbs}\langle Y_1|E_2\rangle$, then $\langle X_1|E_1\rangle=\langle Y_1|E_2\rangle$

There are no necessary to induct with respect to the structure of process term $p$, because there are no new cases. The only new situation is that now the set $\mathbb{E}$ contains some new constants $\circledS^e_i$ for $e\in\mathbb{E}$ and $i\in\mathbb{N}$. Since $\circledS^e_i$ does not do anything, so, naturedly, if $\langle X_1|E_1\rangle\approx_{rbs}\langle Y_1|E_2\rangle$, the only thing is that we should prevent $\circledS$-loops in the recursion in a specific way as same as preventing $\tau$-loops, then $\langle X_1|E_1\rangle=\langle Y_1|E_2\rangle$.

(2) For the case of rooted branching pomset bisimulation, it can be proven similarly to (1), we omit it.

(3) For the case of rooted branching hp-bisimulation, it can be proven similarly to (1), we omit it.
\end{proof}

\subsubsection{Some Discussions on True Concurrency}

With the shadow constant, we have

\begin{eqnarray}
(a\cdot r_b)\between w_b&=&(a\cdot r_b) \between (\circledS^a_1\cdot w_b) \nonumber\\
&=&a\cdot c_b\nonumber
\end{eqnarray}

with $\gamma(r_b,w_b)\triangleq c_b$.

And we see the following example:

\begin{eqnarray}
a\between b&=&a\parallel b+a\mid b \nonumber\\
&=&a\parallel b + a\parallel b + a\parallel b +a\mid b \nonumber\\
&=&a\parallel (\circledS^a_1\cdot b) + (\circledS^b_1\cdot a)\parallel b+a\parallel b +a\mid b \nonumber\\
&=&(a\parallel\circledS^a_1)\cdot b + (\circledS^b_1\parallel b)\cdot a+a\parallel b +a\mid b \nonumber\\
&=&a\cdot b+b\cdot a+a\parallel b + a\mid b\nonumber
\end{eqnarray}

What do we see? Yes. The parallelism contains both interleaving and true concurrency. This may be why true concurrency is called \emph{\textbf{true} concurrency}.

\subsubsection{Verification of Traditional Alternating Bit Protocol}

With the help of shadow constant, now we can verify the traditional alternating bit protocol (ABP) \cite{ABP}.

The ABP protocol is used to ensure successful transmission of data through a corrupted channel. This success is based on the assumption that data can be resent an unlimited number of times, which is illustrated in Fig.\ref{ABP2}, we alter it into the true concurrency situation.

\begin{enumerate}
  \item Data elements $d_1,d_2,d_3,\cdots$ from a finite set $\Delta$ are communicated between a Sender and a Receiver.
  \item If the Sender reads a datum from channel $A$.
  \item The Sender processes the data in $\Delta$, formes new data, and sends them to the Receiver through channel $B$.
  \item And the Receiver sends the datum into channel $C$.
  \item If channel $B$ is corrupted, the message communicated through $B$ can be turn into an error message $\bot$.
  \item Every time the Receiver receives a message via channel $B$, it sends an acknowledgement to the Sender via channel $D$, which is also corrupted.
\end{enumerate}

\begin{figure}
    \centering
    \includegraphics{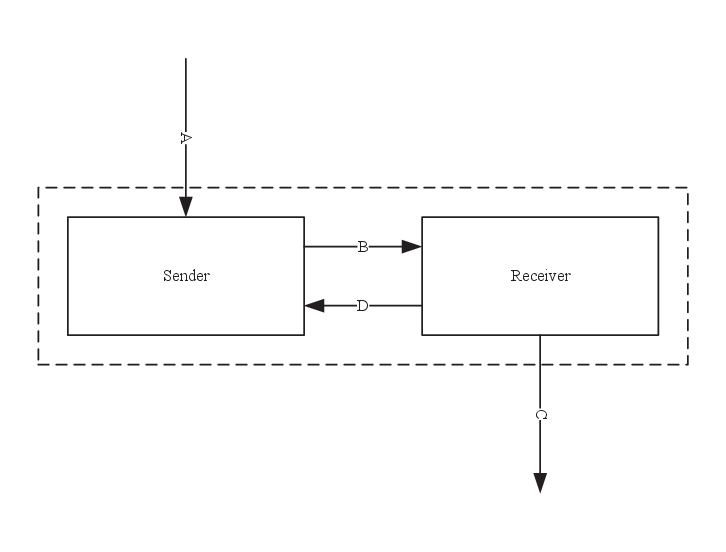}
    \caption{Alternating bit protocol}
    \label{ABP2}
\end{figure}

The Sender attaches a bit 0 to data elements $d_{2k-1}$ and a bit 1 to data elements $d_{2k}$, when they are sent into channel $B$. When the Receiver reads a datum, it sends back the attached bit via channel $D$. If the Receiver receives a corrupted message, then it sends back the previous acknowledgement to the Sender.

Then the state transition of the Sender can be described by $APTC$ as follows.

\begin{eqnarray}
&&S_b=\sum_{d\in\Delta}r_{A}(d)\cdot T_{db}\nonumber\\
&&T_{db}=(\sum_{d'\in\Delta}(s_B(d',b)\cdot \circledS^{s_{C}(d')})+s_B(\bot))\cdot U_{db}\nonumber\\
&&U_{db}=r_D(b)\cdot S_{1-b}+(r_D(1-b)+r_D(\bot))\cdot T_{db}\nonumber
\end{eqnarray}

where $s_B$ denotes sending data through channel $B$, $r_D$ denotes receiving data through channel $D$, similarly, $r_{A}$ means receiving data via channel $A$, $\circledS^{s_{C}(d')}$ denotes the shadow of $s_{C}(d')$.

And the state transition of the Receiver can be described by $APTC$ as follows.

\begin{eqnarray}
&&R_b=\sum_{d\in\Delta}\circledS^{r_{A}(d)}\cdot R_b'\nonumber\\
&&R_b'=\sum_{d'\in\Delta}\{r_B(d',b)\cdot s_{C}(d')\cdot Q_b+r_B(d',1-b)\cdot Q_{1-b}\}+r_B(\bot)\cdot Q_{1-b}\nonumber\\
&&Q_b=(s_D(b)+s_D(\bot))\cdot R_{1-b}\nonumber
\end{eqnarray}

where $\circledS^{r_{A}(d)}$ denotes the shadow of $r_{A}(d)$, $r_B$ denotes receiving data via channel $B$, $s_{C}$ denotes sending data via channel $C$, $s_D$ denotes sending data via channel $D$, and $b\in\{0,1\}$.

The send action and receive action of the same data through the same channel can communicate each other, otherwise, a deadlock $\delta$ will be caused. We define the following communication functions.

\begin{eqnarray}
&&\gamma(s_B(d',b),r_B(d',b))\triangleq c_B(d',b)\nonumber\\
&&\gamma(s_B(\bot),r_B(\bot))\triangleq c_B(\bot)\nonumber\\
&&\gamma(s_D(b),r_D(b))\triangleq c_D(b)\nonumber\\
&&\gamma(s_D(\bot),r_D(\bot))\triangleq c_D(\bot)\nonumber
\end{eqnarray}

Let $R_0$ and $S_0$ be in parallel, then the system $R_0S_0$ can be represented by the following process term.

$$\tau_I(\partial_H(\Theta(R_0\between S_0)))=\tau_I(\partial_H(R_0\between S_0))$$

where $H=\{s_B(d',b),r_B(d',b),s_D(b),r_D(b)|d'\in\Delta,b\in\{0,1\}\}\\
\{s_B(\bot),r_B(\bot),s_D(\bot),r_D(\bot)\}$

$I=\{c_B(d',b),c_D(b)|d'\in\Delta,b\in\{0,1\}\}\cup\{c_B(\bot),c_D(\bot)\}$.

Then we get the following conclusion.

\begin{theorem}[Correctness of the ABP protocol]
The ABP protocol $\tau_I(\partial_H(R_0\between S_0))$ exhibits desired external behaviors.
\end{theorem}

\begin{proof}

Similarly, we can get $\tau_I(\langle X_1|E\rangle)=\sum_{d,d'\in\Delta}r_{A}(d)\cdot s_{C}(d')\cdot\tau_I(\langle Y_1|E\rangle)$ and $\tau_I(\langle Y_1|E\rangle)=\sum_{d,d'\in\Delta}r_{A}(d)\cdot s_{C}(d')\cdot\tau_I(\langle X_1|E\rangle)$.

So, the ABP protocol $\tau_I(\partial_H(R_0\between S_0))$ exhibits desired external behaviors.
\end{proof}

\subsection{Renaming}

\subsubsection{Transition Rules of Renaming Operator}

Renaming operator $\rho_f(t)$ renames all actions in process term $t$, and assumes a renaming function $f:\mathbb{E}\cup\{\tau\}\rightarrow \mathbb{E}\cup\{\tau\}$ with $f(\tau)\triangleq\tau$, which is expressed by the following two transition rules in Table \ref{TRForRenaming}.

\begin{center}
    \begin{table}
        $$\frac{x\xrightarrow{e}\surd}{\rho_f(x)\xrightarrow{f(e)}\surd}
        \quad\frac{x\xrightarrow{e}x'}{\rho_f(x)\xrightarrow{f(e)}\rho_f(x')}$$
        \caption{Transition rule of the renaming operator}
        \label{TRForRenaming}
    \end{table}
\end{center}

\begin{theorem}[Conservativity of $APTC$ with respect to the renaming operator]
$APTC_{\tau}$ with guarded linear recursion and renaming operator is a conservative extension of $APTC_{\tau}$ with guarded linear recursion.
\end{theorem}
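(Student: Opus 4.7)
The plan is to follow exactly the pattern used in the earlier conservativity results of the paper (for instance, the conservativity of $APTC$ over the algebra for parallelism, of $APTC_\tau$ over $APTC$ with silent step, and of $APTC$ with projection over $APTC$ with guarded recursion). In each of those, the author invokes the general Conservative Extension Theorem (Theorem \ref{TCE}) and discharges its two hypotheses: source-dependence of the base TSS, and a suitable freshness/source condition on the new transition rules. I would do exactly the same here.

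First I would verify source-dependence of the transition rules of $APTC_{\tau}$ with guarded linear recursion. This is essentially inherited: by earlier results in the paper, the transition rules of $BATC$, of the parallel composition operators $\parallel$, $\mid$, $\Theta$, $\triangleleft$, of encapsulation $\partial_H$, of guarded recursion, of the silent step $\tau$, and of abstraction $\tau_I$ have all been shown (or immediately seen) to be source-dependent. Source-dependence is preserved under disjoint union of TSSs, so the combined TSS is source-dependent. Second, I would look at each transition rule for the renaming operator in Table \ref{TRForRenaming} and observe that the source of each such rule is of the form $\rho_f(x)$, which contains an occurrence of the fresh function symbol $\rho_f\in\Sigma_1\setminus\Sigma_0$. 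Hence every new rule has a fresh source, which trivially satisfies condition (2) of Theorem \ref{TCE}.

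With both hypotheses in place, Theorem \ref{TCE} delivers the desired conservativity: the LTS generated by the combined TSS agrees with that of the base TSS on all transitions $t\xrightarrow{a}t'$ (and termination predicates) whose source $t$ lies in the base signature, so $APTC_\tau$ with guarded linear recursion and renaming is a conservative extension of $APTC_\tau$ with guarded linear recursion.

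There is no significant technical obstacle: the only mildly delicate point is making sure that source-dependence of the base TSS has really been established cumulatively through the paper, but since each prior extension (parallelism, encapsulation, recursion, silent step, abstraction) was itself proven to preserve source-dependence by the same Theorem \ref{TCE} argument, this reduces to a short inductive remark. No new axioms or semantic conditions on $\rho_f$ enter the argument at this stage — soundness and completeness of the renaming axioms are separate statements.
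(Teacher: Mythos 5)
Your proposal matches the paper's proof exactly: both invoke Theorem \ref{TCE} by noting that the transition rules of $APTC_{\tau}$ with guarded linear recursion are source-dependent and that the sources of the renaming rules contain an occurrence of the fresh operator $\rho_f$. Your additional remarks on how source-dependence is inherited cumulatively are a harmless elaboration of the same argument.
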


\begin{proof}
It follows from the following two facts (see Theorem \ref{TCE}).

\begin{enumerate}
  \item The transition rules of $APTC_{\tau}$ with guarded linear recursion in section \ref{abs} are all source-dependent;
  \item The sources of the transition rules for the renaming operator contain an occurrence of $\rho_f$.
\end{enumerate}
\end{proof}

\begin{theorem}[Congruence theorem of the renaming operator]
Rooted branching truly concurrent bisimulation equivalences $\approx_{rbp}$, $\approx_{rbs}$ and $\approx_{rbhp}$ are all congruences with respect to $APTC_{\tau}$ with guarded linear recursion and the renaming operator.
\end{theorem}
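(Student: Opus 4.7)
The plan is to mirror the structure of the congruence proofs already established in the paper for the encapsulation operator $\partial_H$ and the abstraction operator $\tau_I$, since the renaming operator $\rho_f$ has an analogous ``single premise / single conclusion'' transition-rule shape in Table \ref{TRForRenaming}. That is, for each rooted branching equivalence $\approx\,\in\{\approx_{rbp},\approx_{rbs},\approx_{rbhp}\}$, I would assume $x\approx y$ and check that $\rho_f(x)\approx \rho_f(y)$, with all remaining cases (the other operators of $APTC_\tau$ and guarded linear recursion) already covered by the congruence theorems established earlier.

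First I would handle the pomset case $\approx_{rbp}$. Observe that every transition of $\rho_f(x)$ is of the form $\rho_f(x)\xrightarrow{f(X)}\surd$ or $\rho_f(x)\xrightarrow{f(X)}\rho_f(x')$, arising from $x\xrightarrow{X}\surd$ or $x\xrightarrow{X}x'$. Because the rooted branching condition (Definition \ref{RBPSB}) demands that initial transitions be matched strictly, $x\approx_{rbp}y$ supplies a matching $y\xrightarrow{X}\surd$ or $y\xrightarrow{X}y'$ with $x'\approx_{bp}y'$, which by the transition rules yields $\rho_f(y)\xrightarrow{f(X)}\surd$ or $\rho_f(y)\xrightarrow{f(X)}\rho_f(y')$. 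The non-root part then reduces to showing that the relation $R=\{(\rho_f(u),\rho_f(v)):u\approx_{bp}v\}$ is a branching pomset bisimulation, which follows clause-by-clause from Definition \ref{BPSB} together with the crucial fact that $f(\tau)=\tau$, so that $\tau$-sequences are preserved verbatim under $\rho_f$ and the termination predicate is respected.

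The step case $\approx_{rbs}$ is proved identically by replacing pomset transitions with step transitions throughout, using the standard observation in this paper that transition rules defined for single events extend verbatim to steps (sets of pairwise concurrent events). For the hp-case $\approx_{rbhp}$, I would additionally track the weakly posetal isomorphism $g:\hat{C}(\rho_f(x))\to\hat{C}(\rho_f(y))$: since $\rho_f$ relabels events uniformly via $f$ and does not alter the underlying causal/conflict structure of configurations, there is a canonical bijection $\hat{C}(\rho_f(u))\cong\hat{C}(u)$ for any $u$, so a witnessing isomorphism $f_0$ for $(C(x),f_0,C(y))\in\approx_{rbhp}$ lifts to one for $(C(\rho_f(x)),f_0,C(\rho_f(y)))$, and closure under the extension $f_0[e\mapsto e]$ is preserved since $f$ acts the same way on both sides.

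I do not expect a genuine obstacle here; the whole argument is routine and parallels Theorem's proof for $\tau_I$ very closely. The only point requiring mild care is the weak-transition bookkeeping for the non-root conditions, specifically verifying that $\rho_f(x)\xrightarrow{\tau^*}\rho_f(x^0)$ iff $x\xrightarrow{\tau^*}x^0$, which is immediate from $f(\tau)=\tau$ and the transition rules. Once that observation is in place, all four defining clauses of branching (rooted) bisimulation transfer directly from $x,y$ to $\rho_f(x),\rho_f(y)$.
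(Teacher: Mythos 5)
Your proposal is correct and follows essentially the same route as the paper: a direct check, from the transition rules of $\rho_f$ in Table \ref{TRForRenaming}, that every (terminating or non-terminating) transition of $\rho_f(x)$ is matched by one of $\rho_f(y)$ under the hypothesis $x\approx y$, with the remaining operators covered by the earlier congruence theorems. Your write-up is in fact somewhat more careful than the paper's, which does not explicitly separate the root condition from the branching (non-root) part nor record the role of $f(\tau)=\tau$ in preserving $\tau$-sequences; these additions are welcome but do not change the argument.
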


\begin{proof}

(1) Case rooted branching pomset bisimulation equivalence $\approx_{rbp}$.

Let $x$ and $y$ be $APTC_{\tau}$ with guarded linear recursion and the renaming operator processes, and $x\approx_{rbp} y$, it is sufficient to prove that $\rho_f(x)\approx_{rbp} \rho_f(y)$.

By the transition rules for operator $\rho_f$ in Table \ref{TRForRenaming}, we can get

$$\rho_f(x)\xrightarrow{f(X)} \surd \quad \rho_f(y)\xrightarrow{f(Y)} \surd$$

with $X\subseteq x$, $Y\subseteq y$, and $X\sim Y$.

Or, we can get

$$\rho_f(x)\xrightarrow{f(X)} \rho_f(x') \quad \rho_f(y)\xrightarrow{f(Y)} \rho_f(y')$$

with $X\subseteq x$, $Y\subseteq y$, and $X\sim Y$ and the hypothesis $\rho_f(x')\approx_{rbp}\rho_f(y')$.

So, we get $\rho_f(x)\approx_{rbp} \rho_f(y)$, as desired

(2) The cases of rooted branching step bisimulation $\approx_{rbs}$, rooted branching hp-bisimulation $\approx_{rbhp}$ can be proven similarly, we omit them.
\end{proof}

\subsubsection{Axioms for Renaming Operators}

We design the axioms for the renaming operator $\rho_f$ in Table \ref{AxiomsForRenaming}.

\begin{center}
\begin{table}
  \begin{tabular}{@{}ll@{}}
\hline No. &Axiom\\
  $RN1$ & $\rho_f(e)=f(e)$\\
  $RN2$ & $\rho_f(\delta)=\delta$\\
  $RN3$ & $\rho_f(x+y)=\rho_f(x)+\rho_f(y)$\\
  $RN4$ & $\rho_f(x\cdot y)=\rho_f(x)\cdot\rho_f(y)$\\
  $RN5$ & $\rho_f(x\parallel y)=\rho_f(x)\parallel\rho_f(y)$\\
\end{tabular}
\caption{Axioms of renaming operator}
\label{AxiomsForRenaming}
\end{table}
\end{center}

$RN1-RN2$ are the defining equations for the renaming operator $\rho_f$; $RN3-RN5$ say that in $\rho_f(t)$, the labels of all transitions of $t$ are renamed by means of the mapping $f$.

\begin{theorem}[Soundness of the renaming operator]\label{SRenaming}
Let $x$ and $y$ be $APTC_{\tau}$ with guarded linear recursion and the renaming operator terms. If $APTC_{\tau}$ with guarded linear recursion and the renaming operator $\vdash x=y$, then
\begin{enumerate}
  \item $x\approx_{rbs} y$;
  \item $x\approx_{rbp} y$;
  \item $x\approx_{rbhp} y$.
\end{enumerate}
\end{theorem}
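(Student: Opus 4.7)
The plan is to follow the same template used throughout the paper for soundness theorems of extensions (see for instance Theorem \ref{SAPTCABS} for the abstraction operator, which is the closest analogue). Since the congruence theorem for $\rho_f$ with respect to $\approx_{rbs}$, $\approx_{rbp}$ and $\approx_{rbhp}$ has just been established, it suffices to show that every axiom in Table \ref{AxiomsForRenaming} is sound modulo each of the three rooted branching truly concurrent bisimulation equivalences. The defining axioms $RN1$ and $RN2$ are immediate from the transition rules in Table \ref{TRForRenaming} together with the convention $f(\tau)\triangleq\tau$, so the real work concerns $RN3$, $RN4$ and $RN5$.

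First I would handle the case of rooted branching step bisimulation $\approx_{rbs}$. For each of $RN3$, $RN4$, $RN5$, I would instantiate the transition rules of $\rho_f$ in Table \ref{TRForRenaming} on top of the transition rules of $+$ (Table \ref{STRForBATC}), $\cdot$ (Table \ref{STRForBATC}) and $\parallel$ (Table \ref{TRForParallel}) respectively, and match each transition on the left-hand side with a transition on the right-hand side carrying the same label $f(e)$ (or the same step $\{f(e_1),f(e_2)\}$ in the parallel case), invoking the induction hypothesis $\rho_f(p'\between q') = \rho_f(p')\between\rho_f(q')$ for $RN5$ just as is done for $D6$ and $TI6$ in the encapsulation and abstraction proofs. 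Since $\rho_f$ does not introduce any silent steps beyond those already occurring in its argument, the weak-transition clauses of $\approx_{rbs}$ degenerate to the strong ones and no extra case analysis on $\tau^*$-closures is required.

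Then, following exactly the pattern used in Theorem \ref{SAPTCABS}(2), the step bisimulation case is lifted to rooted branching pomset bisimulation $\approx_{rbp}$ by observing that any pomset transition $\xRightarrow{P}$ with $P = \{e_1,e_2 : e_1\cdot e_2\}$ decomposes as $\xRightarrow{e_1}\xRightarrow{e_2}$, reducing the verification to the single-event transitions already handled. For rooted branching hp-bisimulation $\approx_{rbhp}$, the same transition-rule computations suffice, together with the additional bookkeeping of the weakly posetal product: whenever $s\xrightarrow{e}s'$ with $(C_1,f_0,C_2)\in\approx_{rbhp}$, we update $f_0'=f_0[e\mapsto e]$ and note that $\rho_f$ uniformly relabels both sides by $f$, so $f(e)$ is matched on both sides and the isomorphism condition on $\hat{C_1}\to\hat{C_2}$ is preserved.

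The main obstacle is axiom $RN5$, the distributivity of $\rho_f$ over $\parallel$, since it is the only axiom whose verification requires the induction hypothesis on a compound residual $p'\between q'$ and the only one that genuinely interacts with the parallel transition rules of Table \ref{TRForParallel}; all other axioms are either definitional ($RN1$, $RN2$) or amount to a straightforward case split on which summand fires ($RN3$) or on whether the left factor terminates ($RN4$). Once $RN5$ is dispatched under the assumption $\rho_f(p'\between q') = \rho_f(p')\between\rho_f(q')$ (which is justified at the meta-level by structural induction on the size of the derivation, exactly as in the soundness proofs for $D6$ and $TI6$), the remainder of the argument is completely routine.
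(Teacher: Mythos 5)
Your proposal matches the paper's own proof essentially step for step: establish the step-bisimulation case by instantiating the transition rules of $\rho_f$ over $+$, $\cdot$ and $\parallel$ for $RN3$--$RN5$ (with the inductive assumption on the residual for $RN5$, and likewise $\rho_f(p'\cdot q)=\rho_f(p')\cdot\rho_f(q)$ for $RN4$), then lift to $\approx_{rbp}$ via the decomposition of a causal pomset into successive single-event transitions, and to $\approx_{rbhp}$ by the additional bookkeeping on the weakly posetal product. No gaps; this is the same argument the paper gives.
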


\begin{proof}
(1) Soundness of $APTC_{\tau}$ with guarded linear recursion and the renaming operator with respect to rooted branching step bisimulation $\approx_{rbs}$.

Since rooted branching step bisimulation $\approx_{rbs}$ is both an equivalent and a congruent relation with respect to $APTC_{\tau}$ with guarded linear recursion and the renaming operator, we only need to check if each axiom in Table \ref{AxiomsForRenaming} is sound modulo rooted branching step bisimulation equivalence.

Though transition rules in Table \ref{TRForRenaming} are defined in the flavor of single event, they can be modified into a step (a set of events within which each event is pairwise concurrent), we omit them. If we treat a single event as a step containing just one event, the proof of this soundness theorem does not exist any problem, so we use this way and still use the transition rules in Table \ref{AxiomsForRenaming}.

We only prove soundness of the non-trivial axioms $RN3-RN5$, and omit the defining axioms $RN1-RN2$.

\begin{itemize}
  \item \textbf{Axiom $RN3$}. Let $p,q$ be $APTC_{\tau}$ with guarded linear recursion and the renaming operator processes, and $\rho_f(p+ q)=\rho_f(p)+\rho_f(q)$, it is sufficient to prove that $\rho_f(p+ q)\approx_{rbs} \rho_f(p)+\rho_f(q)$. By the transition rules for operator $+$ in Table \ref{STRForBATC} and $\rho_f$ in Table \ref{TRForRenaming}, we get

      $$\frac{p\xrightarrow{e_1}\surd}{\rho_f(p+ q)\xrightarrow{f(e_1)}\surd}
      \quad\frac{p\xrightarrow{e_1}\surd}{\rho_f(p)+ \rho_f(q)\xrightarrow{f(e_1)}\surd}$$

      $$\frac{q\xrightarrow{e_2}\surd}{\rho_f(p+ q)\xrightarrow{f(e_2)}\surd}
      \quad\frac{q\xrightarrow{e_2}\surd}{\rho_f(p)+ \rho_f(q)\xrightarrow{f(e_2)}\surd}$$

      $$\frac{p\xrightarrow{e_1}p'}{\rho_f(p+ q)\xrightarrow{f(e_1)}\rho_f(p')}
      \quad\frac{p\xrightarrow{e_1}p'}{\rho_f(p)+ \rho_f(q)\xrightarrow{f(e_1)}\rho_f(p')}$$

      $$\frac{q\xrightarrow{e_2}q'}{\rho_f(p+ q)\xrightarrow{f(e_2)}\rho_f(q')}
      \quad\frac{q\xrightarrow{e_2}q'}{\rho_f(p)+ \rho_f(q)\xrightarrow{f(e_2)}\rho_f(q')}$$

      So, $\rho_f(p+ q)\approx_{rbs} \rho_f(p)+\rho_f(q)$, as desired.
  \item \textbf{Axiom $RN4$}. Let $p,q$ be $APTC_{\tau}$ with guarded linear recursion and the renaming operator processes, and $\rho_f(p\cdot q)=\rho_f(p)\cdot\rho_f(q)$, it is sufficient to prove that $\rho_f(p\cdot q)\approx_{rbs} \rho_f(p)\cdot\rho_f(q)$. By the transition rules for operator $\cdot$ in Table \ref{STRForBATC} and $\rho_f$ in Table \ref{TRForRenaming}, we get

      $$\frac{p\xrightarrow{e_1}\surd}{\rho_f(p\cdot q)\xrightarrow{f(e_1)}\rho_f(q)}
      \quad\frac{p\xrightarrow{e_1}\surd}{\rho_f(p)\cdot \rho_f(q)\xrightarrow{f(e_1)}\rho_f(q)}$$

      $$\frac{p\xrightarrow{e_1}p'}{\rho_f(p\cdot q)\xrightarrow{f(e_1)}\rho_f(p'\cdot q)}
      \quad\frac{p\xrightarrow{e_1}p'}{\rho_f(p)\cdot \rho_f(q)\xrightarrow{f(e_1)}\rho_f(p')\cdot\rho_f(q)}$$

      So, with the assumption $\rho_f(p'\cdot q)=\rho_f(p')\cdot\rho_f(q)$, $\rho_f(p\cdot q)\approx_{rbs}\rho_f(p)\cdot\rho_f(q)$, as desired.
  \item \textbf{Axiom $RN5$}. Let $p,q$ be $APTC_{\tau}$ with guarded linear recursion and the renaming operator processes, and $\rho_f(p\parallel q)=\rho_f(p)\parallel\rho_f(q)$, it is sufficient to prove that $\rho_f(p\parallel q)\approx_{rbs} \rho_f(p)\parallel\rho_f(q)$. By the transition rules for operator $\parallel$ in Table \ref{TRForParallel} and $\rho_f$ in Table \ref{TRForRenaming}, we get

      $$\frac{p\xrightarrow{e_1}\surd\quad q\xrightarrow{e_2}\surd}{\rho_f(p\parallel q)\xrightarrow{\{f(e_1),f(e_2)\}}\surd}
      \quad\frac{p\xrightarrow{e_1}\surd\quad q\xrightarrow{e_2}\surd}{\rho_f(p)\parallel \rho_f(q)\xrightarrow{\{f(e_1),f(e_2)\}}\surd}$$

      $$\frac{p\xrightarrow{e_1}p'\quad q\xrightarrow{e_2}\surd}{\rho_f(p\parallel q)\xrightarrow{\{f(e_1),f(e_2)\}}\rho_f(p')}
      \quad\frac{p\xrightarrow{e_1}p'\quad q\xrightarrow{e_2}\surd}{\rho_f(p)\parallel \rho_f(q)\xrightarrow{\{f(e_1),f(e_2)\}}\rho_f(p')}$$

      $$\frac{p\xrightarrow{e_1}\surd\quad q\xrightarrow{e_2}q'}{\rho_f(p\parallel q)\xrightarrow{\{f(e_1),f(e_2)\}}\rho_f(q')}
      \quad\frac{p\xrightarrow{e_1}\surd\quad q\xrightarrow{e_2}q'}{\rho_f(p)\parallel \rho_f(q)\xrightarrow{\{f(e_1),f(e_2)\}}\rho_f(q')}$$

      $$\frac{p\xrightarrow{e_1}p'\quad q\xrightarrow{e_2}q'}{\rho_f(p\parallel q)\xrightarrow{\{f(e_1),f(e_2)\}}\rho_f(p'\between q')}
      \quad\frac{p\xrightarrow{e_1}p'\quad q\xrightarrow{e_2}q'}{\rho_f(p)\parallel \rho_f(q)\xrightarrow{\{f(e_1),f(e_2)\}}\rho_f(p')\between\rho_f(q')}$$

      So, with the assumption $\rho_f(p'\between q')=\rho_f(p')\between\rho_f(q')$, $\rho_f(p\parallel q)\approx_{rbs} \rho_f(p)\parallel\rho_f(q)$, as desired.
\end{itemize}

(2) Soundness of $APTC_{\tau}$ with guarded linear recursion and the renaming operator with respect to rooted branching pomset bisimulation $\approx_{rbp}$.

Since rooted branching pomset bisimulation $\approx_{rbp}$ is both an equivalent and a congruent relation with respect to $APTC_{\tau}$ with guarded linear recursion and the renaming operator, we only need to check if each axiom in Table \ref{AxiomsForRenaming} is sound modulo rooted branching pomset bisimulation $\approx_{rbp}$.

From the definition of rooted branching pomset bisimulation $\approx_{rbp}$ (see Definition \ref{RBPSB}), we know that rooted branching pomset bisimulation $\approx_{rbp}$ is defined by weak pomset transitions, which are labeled by pomsets with $\tau$. In a weak pomset transition, the events in the pomset are either within causality relations (defined by $\cdot$) or in concurrency (implicitly defined by $\cdot$ and $+$, and explicitly defined by $\between$), of course, they are pairwise consistent (without conflicts). In (1), we have already proven the case that all events are pairwise concurrent, so, we only need to prove the case of events in causality. Without loss of generality, we take a pomset of $P=\{e_1,e_2:e_1\cdot e_2\}$. Then the weak pomset transition labeled by the above $P$ is just composed of one single event transition labeled by $e_1$ succeeded by another single event transition labeled by $e_2$, that is, $\xRightarrow{P}=\xRightarrow{e_1}\xRightarrow{e_2}$.

Similarly to the proof of soundness of $APTC_{\tau}$ with guarded linear recursion and the renaming operator modulo rooted branching step bisimulation $\approx_{rbs}$ (1), we can prove that each axiom in Table \ref{AxiomsForRenaming} is sound modulo rooted branching pomset bisimulation $\approx_{rbp}$, we omit them.

(3) Soundness of $APTC_{\tau}$ with guarded linear recursion and the renaming operator with respect to rooted branching hp-bisimulation $\approx_{rbhp}$.

Since rooted branching hp-bisimulation $\approx_{rbhp}$ is both an equivalent and a congruent relation with respect to $APTC_{\tau}$ with guarded linear recursion and the renaming operator, we only need to check if each axiom in Table \ref{AxiomsForRenaming} is sound modulo rooted branching hp-bisimulation $\approx_{rbhp}$.

From the definition of rooted branching hp-bisimulation $\approx_{rbhp}$ (see Definition \ref{RBHHPB}), we know that rooted branching hp-bisimulation $\approx_{rbhp}$ is defined on the weakly posetal product $(C_1,f,C_2),f:\hat{C_1}\rightarrow \hat{C_2}\textrm{ isomorphism}$. Two process terms $s$ related to $C_1$ and $t$ related to $C_2$, and $f:\hat{C_1}\rightarrow \hat{C_2}\textrm{ isomorphism}$. Initially, $(C_1,f,C_2)=(\emptyset,\emptyset,\emptyset)$, and $(\emptyset,\emptyset,\emptyset)\in\approx_{rbhp}$. When $s\xrightarrow{e}s'$ ($C_1\xrightarrow{e}C_1'$), there will be $t\xRightarrow{e}t'$ ($C_2\xRightarrow{e}C_2'$), and we define $f'=f[e\mapsto e]$. Then, if $(C_1,f,C_2)\in\approx_{rbhp}$, then $(C_1',f',C_2')\in\approx_{rbhp}$.

Similarly to the proof of soundness of $APTC_{\tau}$ with guarded linear recursion and the renaming operator modulo rooted branching pomset bisimulation equivalence (2), we can prove that each axiom in Table \ref{AxiomsForRenaming} is sound modulo rooted branching hp-bisimulation equivalence, we just need additionally to check the above conditions on rooted branching hp-bisimulation, we omit them.
\end{proof}

\begin{theorem}[Completeness of the renaming operator]\label{CRenaming}
Let $p$ and $q$ be closed $APTC_{\tau}$ with guarded linear recursion and $CFAR$ and the renaming operator terms, then,
\begin{enumerate}
  \item if $p\approx_{rbs} q$ then $p=q$;
  \item if $p\approx_{rbp} q$ then $p=q$;
  \item if $p\approx_{rbhp} q$ then $p=q$.
\end{enumerate}
\end{theorem}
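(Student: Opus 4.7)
The plan is to reduce the statement to the already-established completeness theorem for $APTC_{\tau}$ with guarded linear recursion and $CFAR$ (Theorem \ref{CCFAR}) by showing that the renaming operator can be eliminated from closed terms. Concretely, I would first prove an elimination lemma: for every closed term $p$ in $APTC_{\tau}$ with guarded linear recursion, $CFAR$, and the renaming operator, there exists a closed term $p^\star$ in $APTC_{\tau}$ with guarded linear recursion and $CFAR$ (no $\rho_f$) such that the theory proves $p = p^\star$.

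The elimination lemma is shown by structural induction on $p$. The base cases use $RN1$ and $RN2$ to remove $\rho_f$ from atomic events and from $\delta$. The inductive cases for $+$, $\cdot$, and $\parallel$ distribute $\rho_f$ inwards using $RN3$--$RN5$, after which the inductive hypothesis applies to the smaller subterms. The nontrivial case is a subterm of the form $\rho_f(\langle X_1|E\rangle)$ with $E$ a guarded linear recursive specification with variables $\{X_i\}$. For each recursive equation $X_i = t_i(X_1,\dots,X_n)$ in $E$, where $t_i$ is in the linear form of Definition~\ref{GLRS}, I would build a new specification $f(E)$ consisting of equations $X_i' = f(t_i)(X_1',\dots,X_n')$, where $f(t_i)$ replaces every atomic summand prefix $a_{jk}$ or $b_{jk}$ by $f(a_{jk})$ or $f(b_{jk})$ (and each occurrence of $X_j$ by $X_j'$). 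Using $RDP$ together with $RN1$--$RN5$, one verifies that the processes $\rho_f(\langle X_i|E\rangle)$ satisfy the equations of $f(E)$, and then $RSP$ yields $\rho_f(\langle X_i|E\rangle) = \langle X_i'|f(E)\rangle$, a term not involving $\rho_f$.

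Once elimination is in hand, the completeness argument closes quickly. Given closed terms $p, q$ with $p \approx_{rb*} q$ (for $* \in \{s,p,hp\}$), choose $\rho_f$-free equivalents $p^\star$ and $q^\star$. By the soundness theorem for renaming (Theorem \ref{SRenaming}) we have $p \approx_{rb*} p^\star$ and $q \approx_{rb*} q^\star$, and since the relevant rooted branching truly concurrent bisimulations are equivalences, $p^\star \approx_{rb*} q^\star$. Now Theorem \ref{CCFAR} applies to $p^\star$ and $q^\star$ to give $p^\star = q^\star$, whence $p = p^\star = q^\star = q$.

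The main obstacle will be justifying that the specification $f(E)$ constructed above is itself a \emph{guarded} linear recursive specification. Guardedness of $E$ rules out infinite chains of $\tau$-transitions among its variables, but $f$ may send nonsilent atoms in $E$ to $\tau$, potentially creating new $\tau$-loops in $f(E)$. One therefore needs to either restrict attention to renaming functions that preserve guardedness, or, more robustly, first apply the elimination theorem for $APTC_{\tau}$ with guarded linear recursion (Theorem \ref{ETTau}) and $CFAR$ to rewrite $\langle X_1|E\rangle$ into a form where any $\tau$-cluster introduced by $f$ can be collapsed using $CFAR$ after renaming; the resulting linear specification over the cluster representatives is then guarded. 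Verifying that this collapse can always be carried out uniformly—so that the resulting $\rho_f$-free term still lies in the theory—is the delicate step, and is exactly where $CFAR$ earns its place in the hypotheses of the theorem.
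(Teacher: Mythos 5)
Your proof follows essentially the same route as the paper: structural induction on $p$ with the only new case $p\equiv\rho_f(\langle X_1|E\rangle)$, handled by constructing the renamed specification $F$ (your $f(E)$) whose equations replace each atom $a$ by $f(a)$, verifying via $RDP$ and $RN1$--$RN5$ that the terms $\rho_f(\langle X_i|E\rangle)$ form a solution of $F$, concluding $\rho_f(\langle X_1|E\rangle)=\langle Y_1|F\rangle$ by $RSP$, and then invoking Theorem \ref{CCFAR}. Your worry about guardedness of the renamed specification is well placed---the paper simply asserts that $F$ is guarded without checking that $f$ might send non-silent atoms to $\tau$ and thereby create $\tau$-loops---so your proposed repair (restricting $f$ or collapsing the new $\tau$-clusters with $CFAR$) addresses a point the paper's own proof leaves implicit.
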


\begin{proof}
(1) For the case of rooted branching step bisimulation, the proof is following.

Firstly, in the proof the Theorem \ref{CCFAR}, we know that each process term $p$ in $APTC_{\tau}$ with guarded linear recursion is equal to a process term $\langle X_1|E\rangle$ with $E$ a guarded linear recursive specification. And we prove if $\langle X_1|E_1\rangle\approx_{rbs}\langle Y_1|E_2\rangle$, then $\langle X_1|E_1\rangle=\langle Y_1|E_2\rangle$

Structural induction with respect to process term $p$ can be applied. The only new case (where $RN1-RN5$ are needed) is $p \equiv \rho_f(q)$. First assuming $q=\langle X_1|E\rangle$ with a guarded linear recursive specification $E$, we prove the case of $p=\rho_f(\langle X_1|E\rangle)$. Let $E$ consist of guarded linear recursive equations

$$X_i=(a_{1i1}\parallel\cdots\parallel a_{k_{i1}i1})X_{i1}+...+(a_{1im_i}\parallel\cdots\parallel a_{k_{im_i}im_i})X_{im_i}+b_{1i1}\parallel\cdots\parallel b_{l_{i1}i1}+...+b_{1im_i}\parallel\cdots\parallel b_{l_{im_i}im_i}$$

for $i\in{1,...,n}$. Let $F$ consist of guarded linear recursive equations

$Y_i=(f(a_{1i1})\parallel\cdots\parallel f(a_{k_{i1}i1}))Y_{i1}+...+(f(a_{1im_i})\parallel\cdots\parallel f(a_{k_{im_i}im_i}))Y_{im_i}$

$+f(b_{1i1})\parallel\cdots\parallel f(b_{l_{i1}i1})+...+f(b_{1im_i})\parallel\cdots\parallel f(b_{l_{im_i}im_i})$

for $i\in{1,...,n}$.

\begin{eqnarray}
&&\rho_f(\langle X_i|E\rangle) \nonumber\\
&\overset{\text{RDP}}{=}&\rho_f((a_{1i1}\parallel\cdots\parallel a_{k_{i1}i1})X_{i1}+...+(a_{1im_i}\parallel\cdots\parallel a_{k_{im_i}im_i})X_{im_i}+b_{1i1}\parallel\cdots\parallel b_{l_{i1}i1}+...+b_{1im_i}\parallel\cdots\parallel b_{l_{im_i}im_i}) \nonumber\\
&\overset{\text{RN1-RN5}}{=}&(f(a_{1i1})\parallel\cdots\parallel f(a_{k_{i1}i1}))\rho_f(X_{i1})+...+(f(a_{1im_i})\parallel\cdots\parallel f(a_{k_{im_i}im_i}))\rho_f(X_{im_i})\nonumber\\
&&+f(b_{1i1})\parallel\cdots\parallel f(b_{l_{i1}i1})+...+f(b_{1im_i})\parallel\cdots\parallel f(b_{l_{im_i}im_i}) \nonumber
\end{eqnarray}

Replacing $Y_i$ by $\rho_f(\langle X_i|E\rangle)$ for $i\in\{1,...,n\}$ is a solution for $F$. So by $RSP$, $\rho_f(\langle X_1|E\rangle)=\langle Y_1|F\rangle$, as desired.

(2) For the case of rooted branching pomset bisimulation, it can be proven similarly to (1), we omit it.

(3) For the case of rooted branching hp-bisimulation, it can be proven similarly to (1), we omit it.
\end{proof}

\subsection{State Operator for APTC}{\label{so}}

\subsubsection{Transition Rules of State Operator}

State operator permits explicitly to describe states, where $S$ denotes a finite set of states, $action(s,e)$ denotes the visible behavior of $e$ in state $s$ with $action:S\times \mathbb{E}\rightarrow \mathbb{E}$, $effect(s,e)$ represents the state that results if $e$ is executed in $s$ with $effect:S\times \mathbb{E}\rightarrow S$.  State operator $\lambda_s(t)$ which denotes process term $t$ in $s$, is expressed by the following transition rules in Table \ref{TRForState}. Note that $action$ and $effect$ are extended to $\mathbb{E}\cup\{\tau\}$ by defining $action(s,\tau)\triangleq\tau$ and $effect(s,\tau)\triangleq s$. We use $e_1\%e_2$ to denote that $e_1$ and $e_2$ are in race condition.

\begin{center}
    \begin{table}
        $$\frac{x\xrightarrow{e}\surd}{\lambda_s(x)\xrightarrow{action(s,e)}\surd}
        \quad\frac{x\xrightarrow{e}x'}{\lambda_s(x)\xrightarrow{action(s,e)}\lambda_{effect(s,e)}(x')}$$

        $$\frac{x\xrightarrow{e_1}\surd\quad y\xnrightarrow{e_2}\quad(e_1\%e_2)}{\lambda_s(x\parallel y)\xrightarrow{action(s,e_1)}\lambda_{effect(s,e_1)}(y)} \quad\frac{x\xrightarrow{e_1}x'\quad y\xnrightarrow{e_2}\quad(e_1\%e_2)}{\lambda_s(x\parallel y)\xrightarrow{action(s,e_1)}\lambda_{effect(s,e_1)}(x'\between y)}$$

        $$\frac{x\xnrightarrow{e_1}\quad y\xrightarrow{e_2}\surd\quad(e_1\%e_2)}{\lambda_s(x\parallel y)\xrightarrow{action(s,e_2)}\lambda_{effect(s,e_2)}(x)} \quad\frac{x\xnrightarrow{e_1}\quad y\xrightarrow{e_2}y'\quad(e_1\%e_2)}{\lambda_s(x\parallel y)\xrightarrow{action(s,e_2)}\lambda_{effect(s,e_2)}(x\between y')}$$

        $$\frac{x\xrightarrow{e_1}\surd\quad y\xrightarrow{e_2}\surd}{\lambda_s(x\parallel y)\xrightarrow{\{action(s,e_1),action(s,e_2)\}}\surd}$$

        $$\frac{x\xrightarrow{e_1}x'\quad y\xrightarrow{e_2}\surd}{\lambda_s(x\parallel y)\xrightarrow{\{action(s,e_1),action(s,e_2)\}}\lambda_{effect(s,e_1)\cup effect(s,e_2)}(x')}$$

        $$\frac{x\xrightarrow{e_1}\surd\quad y\xrightarrow{e_2}y'}{\lambda_s(x\parallel y)\xrightarrow{\{action(s,e_1),action(s,e_2)\}}\lambda_{effect(s,e_1)\cup effect(s,e_2)}(y')}$$

        $$\frac{x\xrightarrow{e_1}x'\quad y\xrightarrow{e_2}y'}{\lambda_s(x\parallel y)\xrightarrow{\{action(s,e_1),action(s,e_2)\}}\lambda_{effect(s,e_1)\cup effect(s,e_2)}(x'\between y')}$$
        \caption{Transition rule of the state operator}
        \label{TRForState}
    \end{table}
\end{center}

\begin{theorem}[Conservativity of $APTC$ with respect to the state operator]
$APTC_{\tau}$ with guarded linear recursion and state operator is a conservative extension of $APTC_{\tau}$ with guarded linear recursion.
\end{theorem}

\begin{proof}
It follows from the following two facts.

\begin{enumerate}
  \item The transition rules of $APTC_{\tau}$ with guarded linear recursion are all source-dependent;
  \item The sources of the transition rules for the state operator contain an occurrence of $\lambda_s$.
\end{enumerate}
\end{proof}

\begin{theorem}[Congruence theorem of the state operator]
Rooted branching truly concurrent bisimulation equivalences $\approx_{rbp}$, $\approx_{rbs}$ and $\approx_{rbhp}$ are all congruences with respect to $APTC_{\tau}$ with guarded linear recursion and the state operator.
\end{theorem}

\begin{proof}

(1) Case rooted branching pomset bisimulation equivalence $\approx_{rbp}$.

Let $x$ and $y$ be $APTC_{\tau}$ with guarded linear recursion and the state operator processes, and $x\approx_{rbp} y$, it is sufficient to prove that $\lambda_s(x)\approx_{rbp} \lambda_s(y)$.

By the transition rules for operator $\lambda_s$ in Table \ref{TRForState}, we can get

$$\lambda_s(x)\xrightarrow{action(s,X)} \surd \quad \lambda_s(y)\xrightarrow{action(s,Y)} \surd$$

with $X\subseteq x$, $Y\subseteq y$, and $X\sim Y$.

Or, we can get

$$\lambda_s(x)\xrightarrow{action(s,X)} \lambda_{effect(s,X)}(x') \quad \lambda_s(y)\xrightarrow{action(s,Y)} \lambda_{effect(s,Y)}(y')$$

with $X\subseteq x$, $Y\subseteq y$, and $X\sim Y$ and the hypothesis $\lambda_{effect(s,X)}(x')\approx_{rbp}\lambda_{effect(s,Y)}(y')$.

So, we get $\lambda_s(x)\approx_{rbp} \lambda_s(y)$, as desired

(2) The cases of rooted branching step bisimulation $\approx_{rbs}$, rooted branching hp-bisimulation $\approx_{rbhp}$ can be proven similarly, we omit them.
\end{proof}

\subsubsection{Axioms for State Operators}

We design the axioms for the state operator $\lambda_s$ in Table \ref{AxiomsForState}.

\begin{center}
\begin{table}
  \begin{tabular}{@{}ll@{}}
\hline No. &Axiom\\
  $SO1$ & $\lambda_s(e)=action(s,e)$\\
  $SO2$ & $\lambda_s(\delta)=\delta$\\
  $SO3$ & $\lambda_s(x+y)=\lambda_s(x)+\lambda_s(y)$\\
  $SO4$ & $\lambda_s(e\cdot y)=action(s,e)\cdot\lambda_{effect(s,e)}(y)$\\
  $SO5$ & $\lambda_s(x\parallel y)=\lambda_s(x)\parallel\lambda_s(y)$\\
\end{tabular}
\caption{Axioms of state operator}
\label{AxiomsForState}
\end{table}
\end{center}

\begin{theorem}[Soundness of the state operator]\label{SState}
Let $x$ and $y$ be $APTC_{\tau}$ with guarded linear recursion and the state operator terms. If $APTC_{\tau}$ with guarded linear recursion and the state operator $\vdash x=y$, then
\begin{enumerate}
  \item $x\approx_{rbs} y$;
  \item $x\approx_{rbp} y$;
  \item $x\approx_{rbhp} y$.
\end{enumerate}
\end{theorem}

\begin{proof}
(1) Soundness of $APTC_{\tau}$ with guarded linear recursion and the state operator with respect to rooted branching step bisimulation $\approx_{rbs}$.

Since rooted branching step bisimulation $\approx_{rbs}$ is both an equivalent and a congruent relation with respect to $APTC_{\tau}$ with guarded linear recursion and the state operator, we only need to check if each axiom in Table \ref{AxiomsForState} is sound modulo rooted branching step bisimulation equivalence.

Though transition rules in Table \ref{TRForState} are defined in the flavor of single event, they can be modified into a step (a set of events within which each event is pairwise concurrent), we omit them. If we treat a single event as a step containing just one event, the proof of this soundness theorem does not exist any problem, so we use this way and still use the transition rules in Table \ref{AxiomsForState}.

We only prove soundness of the non-trivial axioms $SO3-SO5$, and omit the defining axioms $SO1-SO2$.

\begin{itemize}
  \item \textbf{Axiom $SO3$}. Let $p,q$ be $APTC_{\tau}$ with guarded linear recursion and the state operator processes, and $\lambda_s(p+ q)=\lambda_s(p)+\lambda_s(q)$, it is sufficient to prove that $\lambda_s(p+ q)\approx_{rbs} \lambda_s(p)+\lambda_s(q)$. By the transition rules for operator $+$ and $\lambda_s$ in Table \ref{TRForState}, we get

      $$\frac{p\xrightarrow{e_1}\surd}{\lambda_s(p+ q)\xrightarrow{action(s,e_1)}\surd}
      \quad\frac{p\xrightarrow{e_1}\surd}{\lambda_s(p)+ \lambda_s(q)\xrightarrow{action(s,e_1)}\surd}$$

      $$\frac{q\xrightarrow{e_2}\surd}{\lambda_s(p+ q)\xrightarrow{action(s,e_2)}\surd}
      \quad\frac{q\xrightarrow{e_2}\surd}{\lambda_s(p)+ \lambda_s(q)\xrightarrow{action(s,e_2)}\surd}$$

      $$\frac{p\xrightarrow{e_1}p'}{\lambda_s(p+ q)\xrightarrow{action(s,e_1)}\lambda_{effect(s,e_1)}(p')}
      \quad\frac{p\xrightarrow{e_1}p'}{\lambda_s(p)+ \lambda_s(q)\xrightarrow{action(s,e_1)}\lambda_{effect(s,e_1)}(p')}$$

      $$\frac{q\xrightarrow{e_2}q'}{\lambda_s(p+ q)\xrightarrow{action(s,e_2)}\lambda_{effect(s,e_2)}(q')}
      \quad\frac{q\xrightarrow{e_2}q'}{\lambda_s(p)+ \lambda_s(q)\xrightarrow{action(s,e_2)}\lambda_{effect(s,e_2)}(q')}$$

      So, $\lambda_s(p+ q)\approx_{rbs} \lambda_s(p)+\lambda_s(q)$, as desired.
  \item \textbf{Axiom $SO4$}. Let $q$ be $APTC_{\tau}$ with guarded linear recursion and the state operator processes, and $\lambda_s(e\cdot q)=action(s,e)\cdot\lambda_{effect(s,e)}(q)$, it is sufficient to prove that $\lambda_s(e\cdot q)\approx_{rbs}action(s,e)\cdot\lambda_{effect(s,e)}(q)$. By the transition rules for operator $\cdot$ and $\lambda_s$ in Table \ref{TRForState}, we get

      $$\frac{e\xrightarrow{e}\surd}{\lambda_s(e\cdot q)\xrightarrow{action(s,e)}\lambda_{effect(s,e)}(q)}
      \quad\frac{action(s,e)\xrightarrow{action(s,e)}\surd}{action(s,e)\cdot \lambda_{effect(s,e)}(q)\xrightarrow{action(s,e)}\lambda_{effect(s,e)}(q)}$$

      So, $\lambda_s(e\cdot q)\approx_{rbs}action(s,e)\cdot\lambda_{effect(s,e)}(q)$, as desired.
  \item \textbf{Axiom $SO5$}. Let $p,q$ be $APTC_{\tau}$ with guarded linear recursion and the state operator processes, and $\lambda_s(p\parallel q)=\lambda_s(p)\parallel\lambda_s(q)$, it is sufficient to prove that $\lambda_s(p\parallel q)\approx_{rbs} \lambda_s(p)\parallel\lambda_s(q)$. By the transition rules for operator $\parallel$ and $\lambda_s$ in Table \ref{TRForState}, we get for the case $\neg(e_1\%e_2)$

      $$\frac{p\xrightarrow{e_1}\surd\quad q\xrightarrow{e_2}\surd}{\lambda_s(p\parallel q)\xrightarrow{\{action(s,e_1),action(s,e_2)\}}\surd}$$
      $$\frac{p\xrightarrow{e_1}\surd\quad q\xrightarrow{e_2}\surd}{\lambda_s(p)\parallel \lambda_s(q)\xrightarrow{\{action(s,e_1),action(s,e_2)\}}\surd}$$

      $$\frac{p\xrightarrow{e_1}p'\quad q\xrightarrow{e_2}\surd}{\lambda_s(p\parallel q)\xrightarrow{\{action(s,e_1),action(s,e_2)\}}\lambda_{effect(s,e_1)\cup effect(s,e_2)}(p')}$$
      $$\frac{p\xrightarrow{e_1}p'\quad q\xrightarrow{e_2}\surd}{\lambda_s(p)\parallel \lambda_s(q)\xrightarrow{\{action(s,e_1),action(s,e_2)\}}\lambda_{effect(s,e_1)\cup effect(s,e_2)}(p')}$$

      $$\frac{p\xrightarrow{e_1}\surd\quad q\xrightarrow{e_2}q'}{\lambda_s(p\parallel q)\xrightarrow{\{action(s,e_1),action(s,e_2)\}}\lambda_{effect(s,e_1)\cup effect(s,e_2)}(q')}$$
      $$\frac{p\xrightarrow{e_1}\surd\quad q\xrightarrow{e_2}q'}{\lambda_s(p)\parallel \lambda_s(q)\xrightarrow{\{action(s,e_1),action(s,e_2)\}}\lambda_{effect(s,e_1)\cup effect(s,e_2)}(q')}$$

      $$\frac{p\xrightarrow{e_1}p'\quad q\xrightarrow{e_2}q'}{\lambda_s(p\parallel q)\xrightarrow{\{action(s,e_1),action(s,e_2)\}}\lambda_{effect(s,e_1)\cup effect(s,e_2)}(p'\between q')}$$
      $$\frac{p\xrightarrow{e_1}p'\quad q\xrightarrow{e_2}q'}{\lambda_s(p)\parallel \lambda_s(q)\xrightarrow{\{action(s,e_1),action(s,e_2)\}}\lambda_{effect(s,e_1)\cup effect(s,e_2)}(p')\between\lambda_{effect(s,e_1)\cup effect(s,e_2)}(q')}$$

      So, with the assumption $\lambda_{effect(s,e_1)\cup effect(s,e_2)}(p'\between q')=\lambda_{effect(s,e_1)\cup effect(s,e_2)}(p')\between\lambda_{effect(s,e_1)\cup effect(s,e_2)}(q')$, $\lambda_s(p\parallel q)\approx_{rbs} \lambda_s(p)\parallel\lambda_s(q)$, as desired.
      For the case $e_1\%e_2$, we get

      $$\frac{p\xrightarrow{e_1}\surd\quad q\xnrightarrow{e_2}}{\lambda_s(p\parallel q)\xrightarrow{action(s,e_1)}\lambda_{effect(s,e_1)}(q)}$$
      $$\frac{p\xrightarrow{e_1}\surd\quad q\xnrightarrow{e_2}}{\lambda_s(p)\parallel \lambda_s(q)\xrightarrow{action(s,e_1)}\lambda_{effect(s,e_1)}(q)}$$

      $$\frac{p\xrightarrow{e_1}p'\quad q\xnrightarrow{e_2}}{\lambda_s(p\parallel q)\xrightarrow{action(s,e_1)}\lambda_{effect(s,e_1)}(p'\between q)}$$
      $$\frac{p\xrightarrow{e_1}p'\quad q\xnrightarrow{e_2}}{\lambda_s(p)\parallel \lambda_s(q)\xrightarrow{action(s,e_1)}\lambda_{effect(s,e_1)}(p')\between\lambda_{effect(s,e_1)}(q)}$$

      $$\frac{p\xnrightarrow{e_1}\quad q\xrightarrow{e_2}\surd}{\lambda_s(p\parallel q)\xrightarrow{action(s,e_2)}\lambda_{effect(s,e_2)}(p)}$$
      $$\frac{p\xnrightarrow{e_1}\quad q\xrightarrow{e_2}\surd}{\lambda_s(p)\parallel \lambda_s(q)\xrightarrow{action(s,e_2)}\lambda_{effect(s,e_2)}(p)}$$

      $$\frac{p\xnrightarrow{e_1}\quad q\xrightarrow{e_2}q'}{\lambda_s(p\parallel q)\xrightarrow{action(s,e_2)}\lambda_{effect(s,e_2)}(p\between q')}$$
      $$\frac{p\xnrightarrow{e_1}\quad q\xrightarrow{e_2}q'}{\lambda_s(p)\parallel \lambda_s(q)\xrightarrow{action(s,e_2)}\lambda_{effect(s,e_2)}(p)\between\lambda_{effect(s,e_2)}(q')}$$

      So, with the assumption $\lambda_{effect(s,e_1)}(p'\between q)=\lambda_{effect(s,e_1)}(p')\between\lambda_{effect(s,e_1)}(q)$ and $\lambda_{effect(s,e_2)}(p\between q')=\lambda_{effect(s,e_2)}(p)\between\lambda_{effect(s,e_2)}(q')$, $\lambda_s(p\parallel q)\approx_{rbs} \lambda_s(p)\parallel\lambda_s(q)$, as desired.
\end{itemize}

(2) Soundness of $APTC_{\tau}$ with guarded linear recursion and the state operator with respect to rooted branching pomset bisimulation $\approx_{rbp}$.

Since rooted branching pomset bisimulation $\approx_{rbp}$ is both an equivalent and a congruent relation with respect to $APTC_{\tau}$ with guarded linear recursion and the state operator, we only need to check if each axiom in Table \ref{AxiomsForState} is sound modulo rooted branching pomset bisimulation $\approx_{rbp}$.

From the definition of rooted branching pomset bisimulation $\approx_{rbp}$ (see Definition \ref{RBPSB}), we know that rooted branching pomset bisimulation $\approx_{rbp}$ is defined by weak pomset transitions, which are labeled by pomsets with $\tau$. In a weak pomset transition, the events in the pomset are either within causality relations (defined by $\cdot$) or in concurrency (implicitly defined by $\cdot$ and $+$, and explicitly defined by $\between$), of course, they are pairwise consistent (without conflicts). In (1), we have already proven the case that all events are pairwise concurrent, so, we only need to prove the case of events in causality. Without loss of generality, we take a pomset of $P=\{e_1,e_2:e_1\cdot e_2\}$. Then the weak pomset transition labeled by the above $P$ is just composed of one single event transition labeled by $e_1$ succeeded by another single event transition labeled by $e_2$, that is, $\xRightarrow{P}=\xRightarrow{e_1}\xRightarrow{e_2}$.

Similarly to the proof of soundness of $APTC_{\tau}$ with guarded linear recursion and the state operator modulo rooted branching step bisimulation $\approx_{rbs}$ (1), we can prove that each axiom in Table \ref{AxiomsForState} is sound modulo rooted branching pomset bisimulation $\approx_{rbp}$, we omit them.

(3) Soundness of $APTC_{\tau}$ with guarded linear recursion and the state operator with respect to rooted branching hp-bisimulation $\approx_{rbhp}$.

Since rooted branching hp-bisimulation $\approx_{rbhp}$ is both an equivalent and a congruent relation with respect to $APTC_{\tau}$ with guarded linear recursion and the state operator, we only need to check if each axiom in Table \ref{AxiomsForState} is sound modulo rooted branching hp-bisimulation $\approx_{rbhp}$.

From the definition of rooted branching hp-bisimulation $\approx_{rbhp}$ (see Definition \ref{RBHHPB}), we know that rooted branching hp-bisimulation $\approx_{rbhp}$ is defined on the weakly posetal product $(C_1,f,C_2),f:\hat{C_1}\rightarrow \hat{C_2}\textrm{ isomorphism}$. Two process terms $s$ related to $C_1$ and $t$ related to $C_2$, and $f:\hat{C_1}\rightarrow \hat{C_2}\textrm{ isomorphism}$. Initially, $(C_1,f,C_2)=(\emptyset,\emptyset,\emptyset)$, and $(\emptyset,\emptyset,\emptyset)\in\approx_{rbhp}$. When $s\xrightarrow{e}s'$ ($C_1\xrightarrow{e}C_1'$), there will be $t\xRightarrow{e}t'$ ($C_2\xRightarrow{e}C_2'$), and we define $f'=f[e\mapsto e]$. Then, if $(C_1,f,C_2)\in\approx_{rbhp}$, then $(C_1',f',C_2')\in\approx_{rbhp}$.

Similarly to the proof of soundness of $APTC_{\tau}$ with guarded linear recursion and the state operator modulo rooted branching pomset bisimulation equivalence (2), we can prove that each axiom in Table \ref{AxiomsForState} is sound modulo rooted branching hp-bisimulation equivalence, we just need additionally to check the above conditions on rooted branching hp-bisimulation, we omit them.
\end{proof}

\begin{theorem}[Completeness of the state operator]\label{CState}
Let $p$ and $q$ be closed $APTC_{\tau}$ with guarded linear recursion and $CFAR$ and the state operator terms, then,
\begin{enumerate}
  \item if $p\approx_{rbs} q$ then $p=q$;
  \item if $p\approx_{rbp} q$ then $p=q$;
  \item if $p\approx_{rbhp} q$ then $p=q$.
\end{enumerate}
\end{theorem}

\begin{proof}
(1) For the case of rooted branching step bisimulation, the proof is following.

Firstly, we know that each process term $p$ in $APTC_{\tau}$ with guarded linear recursion is equal to a process term $\langle X_1|E\rangle$ with $E$ a guarded linear recursive specification. And we prove if $\langle X_1|E_1\rangle\approx_{rbs}\langle Y_1|E_2\rangle$, then $\langle X_1|E_1\rangle=\langle Y_1|E_2\rangle$

Structural induction with respect to process term $p$ can be applied. The only new case (where $SO1-SO5$ are needed) is $p \equiv \lambda_{s_0}(q)$. First assuming $q=\langle X_1|E\rangle$ with a guarded linear recursive specification $E$, we prove the case of $p=\lambda_{s_0}(\langle X_1|E\rangle)$. Let $E$ consist of guarded linear recursive equations

$$X_i=(a_{1i1}\parallel\cdots\parallel a_{k_{i1}i1})X_{i1}+...+(a_{1im_i}\parallel\cdots\parallel a_{k_{im_i}im_i})X_{im_i}+b_{1i1}\parallel\cdots\parallel b_{l_{i1}i1}+...+b_{1im_i}\parallel\cdots\parallel b_{l_{im_i}im_i}$$

for $i\in{1,...,n}$. Let $F$ consist of guarded linear recursive equations

$Y_i(s)=(action(s,a_{1i1})\parallel\cdots\parallel action(s,a_{k_{i1}i1}))Y_{i1}(effect(s,a_{1i1})\cup\cdots\cup effect(s,a_{k_{i1}i1}))$

$+...+(action(s,a_{1im_i})\parallel\cdots\parallel action(s,a_{k_{im_i}im_i}))Y_{im_i}(effect(s,a_{1im_i})\cup\cdots\cup effect(s,a_{k_{im_i}im_i}))$

$+action(s,b_{1i1})\parallel\cdots\parallel action(s,b_{l_{i1}i1})+...+action(s,b_{1im_i})\parallel\cdots\parallel action(s,b_{l_{im_i}im_i})$

for $i\in{1,...,n}$.

\begin{eqnarray}
&&\lambda_s(\langle X_i|E\rangle) \nonumber\\
&\overset{\text{RDP}}{=}&\lambda_s((a_{1i1}\parallel\cdots\parallel a_{k_{i1}i1})X_{i1}+...+(a_{1im_i}\parallel\cdots\parallel a_{k_{im_i}im_i})X_{im_i}+b_{1i1}\parallel\cdots\parallel b_{l_{i1}i1}+...+b_{1im_i}\parallel\cdots\parallel b_{l_{im_i}im_i}) \nonumber\\
&\overset{\text{SO1-SO5}}{=}&(action(s,a_{1i1})\parallel\cdots\parallel action(s,a_{k_{i1}i1}))\lambda_{effect(s,a_{1i1})\cup\cdots\cup effect(s,a_{k_{i1}i1}))}(X_{i1})\nonumber\\
&&+...+(action(s,a_{1im_i})\parallel\cdots\parallel action(s,a_{k_{im_i}im_i}))\lambda_{effect(s,a_{1im_i})\cup\cdots\cup effect(s,a_{k_{im_i}im_i}))}(X_{im_i})\nonumber\\
&&+action(s,b_{1i1})\parallel\cdots\parallel action(s,b_{l_{i1}i1})+...+action(s,b_{1im_i})\parallel\cdots\parallel action(s,b_{l_{im_i}im_i}) \nonumber
\end{eqnarray}

Replacing $Y_i(s)$ by $\lambda_s(\langle X_i|E\rangle)$ for $i\in\{1,...,n\}$ is a solution for $F$. So by $RSP$, $\lambda_{s_0}(\langle X_1|E\rangle)=\langle Y_1(s_0)|F\rangle$, as desired.

(2) For the case of rooted branching pomset bisimulation, it can be proven similarly to (1), we omit it.

(3) For the case of rooted branching hp-bisimulation, it can be proven similarly to (1), we omit it.
\end{proof}

\subsection{Guards}

Traditional process algebras, such as CCS \cite{CC} \cite{CCS} and ACP \cite{ACP}, are well-known for capturing concurrency based on the interleaving bisimulation semantics. These algebras do not involve anything about data and conditionals, because data are hidden behind of actions. There are some efforts to extend these algebras with conditionals, for example, \cite{HLPA} extended ACP with guards. But these work did not form a perfect solution, just because it is difficult to deal the interleaving concurrency with conditionals.

Based on $APTC$, we add guards into it by adopting the solution of \cite{HLPA} in two ways: (1) the same solution to guards to form a Boolean Algebra; (2) the similar operational semantics based on configuration, which has two parts: the processes and the data states. Finally, we get some further results as follows:

\begin{enumerate}
  \item we design a sound and complete theory of concurrency and parallelism with guards;
  \item we design a sound and complete theory of recursion including concurrency with guards;
  \item we design a sound and complete theory of abstraction with guards;
  \item we design a sound Hoare logic \cite{HL} including concurrency and parallelism, recursion, and abstraction.
\end{enumerate}

\subsubsection{Operational Semantics}{\label{os2}}

In this section, we extend truly concurrent bisimilarities to the ones containing data states.

\begin{definition}[Prime event structure with silent event and empty event]\label{PESG}
Let $\Lambda$ be a fixed set of labels, ranged over $a,b,c,\cdots$ and $\tau,\epsilon$. A ($\Lambda$-labelled) prime event structure with silent event $\tau$ and empty event $\epsilon$ is a tuple $\mathcal{E}=\langle \mathbb{E}, \leq, \sharp, \lambda\rangle$, where $\mathbb{E}$ is a denumerable set of events, including the silent event $\tau$ and empty event $\epsilon$. Let $\hat{\mathbb{E}}=\mathbb{E}\backslash\{\tau,\epsilon\}$, exactly excluding $\tau$ and $\epsilon$, it is obvious that $\hat{\tau^*}=\epsilon$. Let $\lambda:\mathbb{E}\rightarrow\Lambda$ be a labelling function and let $\lambda(\tau)=\tau$ and $\lambda(\epsilon)=\epsilon$. And $\leq$, $\sharp$ are binary relations on $\mathbb{E}$, called causality and conflict respectively, such that:

\begin{enumerate}
  \item $\leq$ is a partial order and $\lceil e \rceil = \{e'\in \mathbb{E}|e'\leq e\}$ is finite for all $e\in \mathbb{E}$. It is easy to see that $e\leq\tau^*\leq e'=e\leq\tau\leq\cdots\leq\tau\leq e'$, then $e\leq e'$.
  \item $\sharp$ is irreflexive, symmetric and hereditary with respect to $\leq$, that is, for all $e,e',e''\in \mathbb{E}$, if $e\sharp e'\leq e''$, then $e\sharp e''$.
\end{enumerate}

Then, the concepts of consistency and concurrency can be drawn from the above definition:

\begin{enumerate}
  \item $e,e'\in \mathbb{E}$ are consistent, denoted as $e\frown e'$, if $\neg(e\sharp e')$. A subset $X\subseteq \mathbb{E}$ is called consistent, if $e\frown e'$ for all $e,e'\in X$.
  \item $e,e'\in \mathbb{E}$ are concurrent, denoted as $e\parallel e'$, if $\neg(e\leq e')$, $\neg(e'\leq e)$, and $\neg(e\sharp e')$.
\end{enumerate}
\end{definition}

\begin{definition}[Configuration]
Let $\mathcal{E}$ be a PES. A (finite) configuration in $\mathcal{E}$ is a (finite) consistent subset of events $C\subseteq \mathcal{E}$, closed with respect to causality (i.e. $\lceil C\rceil=C$), and a data state $s\in S$ with $S$ the set of all data states, denoted $\langle C, s\rangle$. The set of finite configurations of $\mathcal{E}$ is denoted by $\langle\mathcal{C}(\mathcal{E}), S\rangle$. We let $\hat{C}=C\backslash\{\tau\}\cup\{\epsilon\}$.
\end{definition}

A consistent subset of $X\subseteq \mathbb{E}$ of events can be seen as a pomset. Given $X, Y\subseteq \mathbb{E}$, $\hat{X}\sim \hat{Y}$ if $\hat{X}$ and $\hat{Y}$ are isomorphic as pomsets. In the following of the paper, we say $C_1\sim C_2$, we mean $\hat{C_1}\sim\hat{C_2}$.

\begin{definition}[Pomset transitions and step]
Let $\mathcal{E}$ be a PES and let $C\in\mathcal{C}(\mathcal{E})$, and $\emptyset\neq X\subseteq \mathbb{E}$, if $C\cap X=\emptyset$ and $C'=C\cup X\in\mathcal{C}(\mathcal{E})$, then $\langle C,s\rangle\xrightarrow{X} \langle C',s'\rangle$ is called a pomset transition from $\langle C,s\rangle$ to $\langle C',s'\rangle$. When the events in $X$ are pairwise concurrent, we say that $\langle C,s\rangle\xrightarrow{X}\langle C',s'\rangle$ is a step. It is obvious that $\rightarrow^*\xrightarrow{X}\rightarrow^*=\xrightarrow{X}$ and $\rightarrow^*\xrightarrow{e}\rightarrow^*=\xrightarrow{X}$ for any $e\in\mathbb{E}$ and $X\subseteq\mathbb{E}$.
\end{definition}

\begin{definition}[Weak pomset transitions and weak step]
Let $\mathcal{E}$ be a PES and let $C\in\mathcal{C}(\mathcal{E})$, and $\emptyset\neq X\subseteq \hat{\mathbb{E}}$, if $C\cap X=\emptyset$ and $\hat{C'}=\hat{C}\cup X\in\mathcal{C}(\mathcal{E})$, then $\langle C,s\rangle\xRightarrow{X} \langle C',s'\rangle$ is called a weak pomset transition from $\langle C,s\rangle$ to $\langle C',s'\rangle$, where we define $\xRightarrow{e}\triangleq\xrightarrow{\tau^*}\xrightarrow{e}\xrightarrow{\tau^*}$. And $\xRightarrow{X}\triangleq\xrightarrow{\tau^*}\xrightarrow{e}\xrightarrow{\tau^*}$, for every $e\in X$. When the events in $X$ are pairwise concurrent, we say that $\langle C,s\rangle\xRightarrow{X}\langle C',s'\rangle$ is a weak step.
\end{definition}

We will also suppose that all the PESs in this paper are image finite, that is, for any PES $\mathcal{E}$ and $C\in \mathcal{C}(\mathcal{E})$ and $a\in \Lambda$, $\{e\in \mathbb{E}|\langle C,s\rangle\xrightarrow{e} \langle C',s'\rangle\wedge \lambda(e)=a\}$ and $\{e\in\hat{\mathbb{E}}|\langle C,s\rangle\xRightarrow{e} \langle C',s'\rangle\wedge \lambda(e)=a\}$ is finite.

\begin{definition}[Pomset, step bisimulation]\label{PSBG}
Let $\mathcal{E}_1$, $\mathcal{E}_2$ be PESs. A pomset bisimulation is a relation $R\subseteq\langle\mathcal{C}(\mathcal{E}_1),S\rangle\times\langle\mathcal{C}(\mathcal{E}_2),S\rangle$, such that if $(\langle C_1,s\rangle,\langle C_2,s\rangle)\in R$, and $\langle C_1,s\rangle\xrightarrow{X_1}\langle C_1',s'\rangle$ then $\langle C_2,s\rangle\xrightarrow{X_2}\langle C_2',s'\rangle$, with $X_1\subseteq \mathbb{E}_1$, $X_2\subseteq \mathbb{E}_2$, $X_1\sim X_2$ and $(\langle C_1',s'\rangle,\langle C_2',s'\rangle)\in R$ for all $s\in S$, and vice-versa. We say that $\mathcal{E}_1$, $\mathcal{E}_2$ are pomset bisimilar, written $\mathcal{E}_1\sim_p\mathcal{E}_2$, if there exists a pomset bisimulation $R$, such that $(\langle\emptyset,\emptyset\rangle,\langle\emptyset,\emptyset\rangle)\in R$. By replacing pomset transitions with steps, we can get the definition of step bisimulation. When PESs $\mathcal{E}_1$ and $\mathcal{E}_2$ are step bisimilar, we write $\mathcal{E}_1\sim_s\mathcal{E}_2$.
\end{definition}

\begin{definition}[Weak pomset, step bisimulation]\label{WPSBG}
Let $\mathcal{E}_1$, $\mathcal{E}_2$ be PESs. A weak pomset bisimulation is a relation $R\subseteq\langle\mathcal{C}(\mathcal{E}_1),S\rangle\times\langle\mathcal{C}(\mathcal{E}_2),S\rangle$, such that if $(\langle C_1,s\rangle,\langle C_2,s\rangle)\in R$, and $\langle C_1,s\rangle\xRightarrow{X_1}\langle C_1',s'\rangle$ then $\langle C_2,s\rangle\xRightarrow{X_2}\langle C_2',s'\rangle$, with $X_1\subseteq \hat{\mathbb{E}_1}$, $X_2\subseteq \hat{\mathbb{E}_2}$, $X_1\sim X_2$ and $(\langle C_1',s'\rangle,\langle C_2',s'\rangle)\in R$ for all $s\in S$, and vice-versa. We say that $\mathcal{E}_1$, $\mathcal{E}_2$ are weak pomset bisimilar, written $\mathcal{E}_1\approx_p\mathcal{E}_2$, if there exists a weak pomset bisimulation $R$, such that $(\langle\emptyset,\emptyset\rangle,\langle\emptyset,\emptyset\rangle)\in R$. By replacing weak pomset transitions with weak steps, we can get the definition of weak step bisimulation. When PESs $\mathcal{E}_1$ and $\mathcal{E}_2$ are weak step bisimilar, we write $\mathcal{E}_1\approx_s\mathcal{E}_2$.
\end{definition}

\begin{definition}[Posetal product]
Given two PESs $\mathcal{E}_1$, $\mathcal{E}_2$, the posetal product of their configurations, denoted $\langle\mathcal{C}(\mathcal{E}_1),S\rangle\overline{\times}\langle\mathcal{C}(\mathcal{E}_2),S\rangle$, is defined as

$$\{(\langle C_1,s\rangle,f,\langle C_2,s\rangle)|C_1\in\mathcal{C}(\mathcal{E}_1),C_2\in\mathcal{C}(\mathcal{E}_2),f:C_1\rightarrow C_2 \textrm{ isomorphism}\}.$$

A subset $R\subseteq\langle\mathcal{C}(\mathcal{E}_1),S\rangle\overline{\times}\langle\mathcal{C}(\mathcal{E}_2),S\rangle$ is called a posetal relation. We say that $R$ is downward closed when for any $(\langle C_1,s\rangle,f,\langle C_2,s\rangle),(\langle C_1',s'\rangle,f',\langle C_2',s'\rangle)\in \langle\mathcal{C}(\mathcal{E}_1),S\rangle\overline{\times}\langle\mathcal{C}(\mathcal{E}_2),S\rangle$, if $(\langle C_1,s\rangle,f,\langle C_2,s\rangle)\subseteq (\langle C_1',s'\rangle,f',\langle C_2',s'\rangle)$ pointwise and $(\langle C_1',s'\rangle,f',\langle C_2',s'\rangle)\in R$, then $(\langle C_1,s\rangle,f,\langle C_2,s\rangle)\in R$.

For $f:X_1\rightarrow X_2$, we define $f[x_1\mapsto x_2]:X_1\cup\{x_1\}\rightarrow X_2\cup\{x_2\}$, $z\in X_1\cup\{x_1\}$,(1)$f[x_1\mapsto x_2](z)=
x_2$,if $z=x_1$;(2)$f[x_1\mapsto x_2](z)=f(z)$, otherwise. Where $X_1\subseteq \mathbb{E}_1$, $X_2\subseteq \mathbb{E}_2$, $x_1\in \mathbb{E}_1$, $x_2\in \mathbb{E}_2$.
\end{definition}

\begin{definition}[Weakly posetal product]
Given two PESs $\mathcal{E}_1$, $\mathcal{E}_2$, the weakly posetal product of their configurations, denoted $\langle\mathcal{C}(\mathcal{E}_1),S\rangle\overline{\times}\langle\mathcal{C}(\mathcal{E}_2),S\rangle$, is defined as

$$\{(\langle C_1,s\rangle,f,\langle C_2,s\rangle)|C_1\in\mathcal{C}(\mathcal{E}_1),C_2\in\mathcal{C}(\mathcal{E}_2),f:\hat{C_1}\rightarrow \hat{C_2} \textrm{ isomorphism}\}.$$

A subset $R\subseteq\langle\mathcal{C}(\mathcal{E}_1),S\rangle\overline{\times}\langle\mathcal{C}(\mathcal{E}_2),S\rangle$ is called a weakly posetal relation. We say that $R$ is downward closed when for any $(\langle C_1,s\rangle,f,\langle C_2,s\rangle),(\langle C_1',s'\rangle,f,\langle C_2',s'\rangle)\in \langle\mathcal{C}(\mathcal{E}_1),S\rangle\overline{\times}\langle\mathcal{C}(\mathcal{E}_2),S\rangle$, if $(\langle C_1,s\rangle,f,\langle C_2,s\rangle)\subseteq (\langle C_1',s'\rangle,f',\langle C_2',s'\rangle)$ pointwise and $(\langle C_1',s'\rangle,f',\langle C_2',s'\rangle)\in R$, then $(\langle C_1,s\rangle,f,\langle C_2,s\rangle)\in R$.

For $f:X_1\rightarrow X_2$, we define $f[x_1\mapsto x_2]:X_1\cup\{x_1\}\rightarrow X_2\cup\{x_2\}$, $z\in X_1\cup\{x_1\}$,(1)$f[x_1\mapsto x_2](z)=
x_2$,if $z=x_1$;(2)$f[x_1\mapsto x_2](z)=f(z)$, otherwise. Where $X_1\subseteq \hat{\mathbb{E}_1}$, $X_2\subseteq \hat{\mathbb{E}_2}$, $x_1\in \hat{\mathbb{E}}_1$, $x_2\in \hat{\mathbb{E}}_2$. Also, we define $f(\tau^*)=f(\tau^*)$.
\end{definition}

\begin{definition}[(Hereditary) history-preserving bisimulation]\label{HHPBG}
A history-preserving (hp-) bisimulation is a posetal relation $R\subseteq\langle\mathcal{C}(\mathcal{E}_1),S\rangle\overline{\times}\langle\mathcal{C}(\mathcal{E}_2),S\rangle$ such that if $(\langle C_1,s\rangle,f,\langle C_2,s\rangle)\in R$, and $\langle C_1,s\rangle\xrightarrow{e_1} \langle C_1',s'\rangle$, then $\langle C_2,s\rangle\xrightarrow{e_2} \langle C_2',s'\rangle$, with $(\langle C_1',s'\rangle,f[e_1\mapsto e_2],\langle C_2',s'\rangle)\in R$ for all $s\in S$, and vice-versa. $\mathcal{E}_1,\mathcal{E}_2$ are history-preserving (hp-)bisimilar and are written $\mathcal{E}_1\sim_{hp}\mathcal{E}_2$ if there exists a hp-bisimulation $R$ such that $(\langle\emptyset,\emptyset\rangle,\emptyset,\langle\emptyset,\emptyset\rangle)\in R$.

A hereditary history-preserving (hhp-)bisimulation is a downward closed hp-bisimulation. $\mathcal{E}_1,\mathcal{E}_2$ are hereditary history-preserving (hhp-)bisimilar and are written $\mathcal{E}_1\sim_{hhp}\mathcal{E}_2$.
\end{definition}

\begin{definition}[Weak (hereditary) history-preserving bisimulation]\label{WHHPBG}
A weak history-preserving (hp-) bisimulation is a weakly posetal relation $R\subseteq\langle\mathcal{C}(\mathcal{E}_1),S\rangle\overline{\times}\langle\mathcal{C}(\mathcal{E}_2),S\rangle$ such that if $(\langle C_1,s\rangle,f,\langle C_2,s\rangle)\in R$, and $\langle C_1,s\rangle\xRightarrow{e_1} \langle C_1',s'\rangle$, then $\langle C_2,s\rangle\xRightarrow{e_2} \langle C_2',s'\rangle$, with $(\langle C_1',s'\rangle,f[e_1\mapsto e_2],\langle C_2',s'\rangle)\in R$ for all $s\in S$, and vice-versa. $\mathcal{E}_1,\mathcal{E}_2$ are weak history-preserving (hp-)bisimilar and are written $\mathcal{E}_1\approx_{hp}\mathcal{E}_2$ if there exists a weak hp-bisimulation $R$ such that $(\langle\emptyset,\emptyset\rangle,\emptyset,\langle\emptyset,\emptyset\rangle)\in R$.

A weakly hereditary history-preserving (hhp-)bisimulation is a downward closed weak hp-bisimulation. $\mathcal{E}_1,\mathcal{E}_2$ are weakly hereditary history-preserving (hhp-)bisimilar and are written $\mathcal{E}_1\approx_{hhp}\mathcal{E}_2$.
\end{definition}

\subsubsection{$BATC$ with Guards}{\label{batcg}}

In this subsection, we will discuss the guards for $BATC$, which is denoted as $BATC_G$. Let $\mathbb{E}$ be the set of atomic events (actions), $G_{at}$ be the set of atomic guards, $\delta$ be the deadlock constant, and $\epsilon$ be the empty event. We extend $G_{at}$ to the set of basic guards $G$ with element $\phi,\psi,\cdots$, which is generated by the following formation rules:

$$\phi::=\delta|\epsilon|\neg\phi|\psi\in G_{at}|\phi+\psi|\phi\cdot\psi$$

In the following, let $e_1, e_2, e_1', e_2'\in \mathbb{E}$, $\phi,\psi\in G$ and let variables $x,y,z$ range over the set of terms for true concurrency, $p,q,s$ range over the set of closed terms. The predicate $test(\phi,s)$ represents that $\phi$ holds in the state $s$, and $test(\epsilon,s)$ holds and $test(\delta,s)$ does not hold. $effect(e,s)\in S$ denotes $s'$ in $s\xrightarrow{e}s'$. The predicate weakest precondition $wp(e,\phi)$ denotes that $\forall s\in S, test(\phi,effect(e,s))$ holds.

The set of axioms of $BATC_G$ consists of the laws given in Table \ref{AxiomsForBATCG}.

\begin{center}
    \begin{table}
        \begin{tabular}{@{}ll@{}}
            \hline No. &Axiom\\
            $A1$ & $x+ y = y+ x$\\
            $A2$ & $(x+ y)+ z = x+ (y+ z)$\\
            $A3$ & $x+ x = x$\\
            $A4$ & $(x+ y)\cdot z = x\cdot z + y\cdot z$\\
            $A5$ & $(x\cdot y)\cdot z = x\cdot(y\cdot z)$\\
            $A6$ & $x+\delta = x$\\
            $A7$ & $\delta\cdot x = \delta$\\
            $A8$ & $\epsilon\cdot x = x$\\
            $A9$ & $x\cdot\epsilon = x$\\
            $G1$ & $\phi\cdot\neg\phi = \delta$\\
            $G2$ & $\phi+\neg\phi = \epsilon$\\
            $G3$ & $\phi\delta = \delta$\\
            $G4$ & $\phi(x+y)=\phi x+\phi y$\\
            $G5$ & $\phi(x\cdot y)= \phi x\cdot y$\\
            $G6$ & $(\phi+\psi)x = \phi x + \psi x$\\
            $G7$ & $(\phi\cdot \psi)\cdot x = \phi\cdot(\psi\cdot x)$\\
            $G8$ & $\phi_0\cdot\cdots\cdot\phi_n = \delta$ if $\forall s\in S,\exists i\leq n.test(\neg\phi_i,s)$\\
            $G9$ & $wp(e,\phi)e\phi=wp(e,\phi)e$\\
            $G10$ & $\neg wp(e,\phi)e\neg\phi=\neg wp(e,\phi)e$\\
        \end{tabular}
        \caption{Axioms of $BATC_G$}
        \label{AxiomsForBATCG}
    \end{table}
\end{center}

Note that, by eliminating atomic event from the process terms, the axioms in Table \ref{AxiomsForBATCG} will lead to a Boolean Algebra. And $G9$ is a precondition of $e$ and $\phi$, $G10$ is the weakest precondition of $e$ and $\phi$. A data environment with $effct$ function is sufficiently deterministic, and it is obvious that if the weakest precondition is expressible and $G9$, $G10$ are sound, then the related data environment is sufficiently deterministic.

\begin{definition}[Basic terms of $BATC_G$]\label{BTBATCG}
The set of basic terms of $BATC_G$, $\mathcal{B}(BATC_G)$, is inductively defined as follows:
\begin{enumerate}
  \item $\mathbb{E}\subset\mathcal{B}(BATC_G)$;
  \item $G\subset\mathcal{B}(BATC_G)$;
  \item if $e\in \mathbb{E}, t\in\mathcal{B}(BATC_G)$ then $e\cdot t\in\mathcal{B}(BATC_G)$;
  \item if $\phi\in G, t\in\mathcal{B}(BATC_G)$ then $\phi\cdot t\in\mathcal{B}(BATC_G)$;
  \item if $t,s\in\mathcal{B}(BATC_G)$ then $t+ s\in\mathcal{B}(BATC_G)$.
\end{enumerate}
\end{definition}

\begin{theorem}[Elimination theorem of $BATC_G$]\label{ETBATCG}
Let $p$ be a closed $BATC_G$ term. Then there is a basic $BATC_G$ term $q$ such that $BATC_G\vdash p=q$.
\end{theorem}

\begin{proof}
(1) Firstly, suppose that the following ordering on the signature of $BATC_G$ is defined: $\cdot > +$ and the symbol $\cdot$ is given the lexicographical status for the first argument, then for each rewrite rule $p\rightarrow q$ in Table \ref{TRSForBATCG} relation $p>_{lpo} q$ can easily be proved. We obtain that the term rewrite system shown in Table \ref{TRSForBATCG} is strongly normalizing, for it has finitely many rewriting rules, and $>$ is a well-founded ordering on the signature of $BATC_G$, and if $s>_{lpo} t$, for each rewriting rule $s\rightarrow t$ is in Table \ref{TRSForBATCG} (see Theorem \ref{SN}).

\begin{center}
    \begin{table}
        \begin{tabular}{@{}ll@{}}
            \hline No. &Rewriting Rule\\
            $RA3$ & $x+ x \rightarrow x$\\
            $RA4$ & $(x+ y)\cdot z \rightarrow x\cdot z + y\cdot z$\\
            $RA5$ & $(x\cdot y)\cdot z \rightarrow x\cdot(y\cdot z)$\\
            $RA6$ & $x+\delta \rightarrow x$\\
            $RA7$ & $\delta\cdot x \rightarrow \delta$\\
            $RA8$ & $\epsilon\cdot x \rightarrow x$\\
            $RA9$ & $x\cdot\epsilon \rightarrow x$\\
            $RG1$ & $\phi\cdot\neg\phi \rightarrow \delta$\\
            $RG2$ & $\phi+\neg\phi \rightarrow \epsilon$\\
            $RG3$ & $\phi\delta \rightarrow \delta$\\
            $RG4$ & $\phi(x+y)\rightarrow\phi x+\phi y$\\
            $RG5$ & $\phi(x\cdot y)\rightarrow \phi x\cdot y$\\
            $RG6$ & $(\phi+\psi)x \rightarrow \phi x + \psi x$\\
            $RG7$ & $(\phi\cdot \psi)\cdot x \rightarrow \phi\cdot(\psi\cdot x)$\\
            $RG8$ & $\phi_0\cdot\cdots\cdot\phi_n \rightarrow \delta$ if $\forall s\in S,\exists i\leq n.test(\neg\phi_i,s)$\\
            $RG9$ & $wp(e,\phi)e\phi\rightarrow wp(e,\phi)e$\\
            $RG10$ & $\neg wp(e,\phi)e\neg\phi\rightarrow\neg wp(e,\phi)e$\\
        \end{tabular}
        \caption{Term rewrite system of $BATC_G$}
        \label{TRSForBATCG}
    \end{table}
\end{center}

(2) Then we prove that the normal forms of closed $BATC_G$ terms are basic $BATC_G$ terms.

Suppose that $p$ is a normal form of some closed $BATC_G$ term and suppose that $p$ is not a basic term. Let $p'$ denote the smallest sub-term of $p$ which is not a basic term. It implies that each sub-term of $p'$ is a basic term. Then we prove that $p$ is not a term in normal form. It is sufficient to induct on the structure of $p'$:

\begin{itemize}
  \item Case $p'\equiv e, e\in \mathbb{E}$. $p'$ is a basic term, which contradicts the assumption that $p'$ is not a basic term, so this case should not occur.
  \item Case $p'\equiv \phi, \phi\in G$. $p'$ is a basic term, which contradicts the assumption that $p'$ is not a basic term, so this case should not occur.
  \item Case $p'\equiv p_1\cdot p_2$. By induction on the structure of the basic term $p_1$:
      \begin{itemize}
        \item Subcase $p_1\in \mathbb{E}$. $p'$ would be a basic term, which contradicts the assumption that $p'$ is not a basic term;
        \item Subcase $p_1\in G$. $p'$ would be a basic term, which contradicts the assumption that $p'$ is not a basic term;
        \item Subcase $p_1\equiv e\cdot p_1'$. $RA5$ or $RA9$ rewriting rule can be applied. So $p$ is not a normal form;
        \item Subcase $p_1\equiv \phi\cdot p_1'$. $RG1$, $RG3$, $RG4$, $RG5$, $RG7$, or $RG8$ rewriting rules can be applied. So $p$ is not a normal form;
        \item Subcase $p_1\equiv p_1'+ p_1''$. $RA4$, $RA6$, $RG2$, or $RG6$ rewriting rules can be applied. So $p$ is not a normal form.
      \end{itemize}
  \item Case $p'\equiv p_1+ p_2$. By induction on the structure of the basic terms both $p_1$ and $p_2$, all subcases will lead to that $p'$ would be a basic term, which contradicts the assumption that $p'$ is not a basic term.
\end{itemize}
\end{proof}

We will define a term-deduction system which gives the operational semantics of $BATC_G$. We give the operational transition rules for $\epsilon$, atomic guard $\phi\in G_{at}$, atomic event $e\in\mathbb{E}$, operators $\cdot$ and $+$ as Table \ref{SETRForBATCG} shows. And the predicate $\xrightarrow{e}\surd$ represents successful termination after execution of the event $e$.

\begin{center}
    \begin{table}
        $$\frac{}{\langle\epsilon,s\rangle\rightarrow\langle\surd,s\rangle}$$
        $$\frac{}{\langle e,s\rangle\xrightarrow{e}\langle\surd,s'\rangle}\textrm{ if }s'\in effect(e,s)$$
        $$\frac{}{\langle\phi,s\rangle\rightarrow\langle\surd,s\rangle}\textrm{ if }test(\phi,s)$$
        $$\frac{\langle x,s\rangle\xrightarrow{e}\langle\surd,s'\rangle}{\langle x+ y,s\rangle\xrightarrow{e}\langle\surd,s'\rangle} \quad\frac{\langle x,s\rangle\xrightarrow{e}\langle x',s'\rangle}{\langle x+ y,s\rangle\xrightarrow{e}\langle x',s'\rangle}$$
        $$\frac{\langle y,s\rangle\xrightarrow{e}\langle\surd,s'\rangle}{\langle x+ y,s\rangle\xrightarrow{e}\langle\surd,s'\rangle} \quad\frac{\langle y,s\rangle\xrightarrow{e}\langle y',s'\rangle}{\langle x+ y,s\rangle\xrightarrow{e}\langle y',s'\rangle}$$
        $$\frac{\langle x,s\rangle\xrightarrow{e}\langle\surd,s'\rangle}{\langle x\cdot y,s\rangle\xrightarrow{e} \langle y,s'\rangle} \quad\frac{\langle x,s\rangle\xrightarrow{e}\langle x',s'\rangle}{\langle x\cdot y,s\rangle\xrightarrow{e}\langle x'\cdot y,s'\rangle}$$
        \caption{Single event transition rules of $BATC_G$}
        \label{SETRForBATCG}
    \end{table}
\end{center}

Note that, we replace the single atomic event $e\in\mathbb{E}$ by $X\subseteq\mathbb{E}$, we can obtain the pomset transition rules of $BATC_G$, and omit them.

\begin{theorem}[Congruence of $BATC_G$ with respect to truly concurrent bisimulation equivalences]\label{CBATCG}
(1) Pomset bisimulation equivalence $\sim_{p}$ is a congruence with respect to $BATC_G$.

(2) Step bisimulation equivalence $\sim_{s}$ is a congruence with respect to $BATC_G$.

(3) Hp-bisimulation equivalence $\sim_{hp}$ is a congruence with respect to $BATC_G$.

(4) Hhp-bisimulation equivalence $\sim_{hhp}$ is a congruence with respect to $BATC_G$.
\end{theorem}

\begin{proof}
(1) It is easy to see that pomset bisimulation is an equivalent relation on $BATC_G$ terms, we only need to prove that $\sim_{p}$ is preserved by the operators $\cdot$ and $+$. It is trivial and we leave the proof as an exercise for the readers.

(2) It is easy to see that step bisimulation is an equivalent relation on $BATC_G$ terms, we only need to prove that $\sim_{s}$ is preserved by the operators $\cdot$ and $+$. It is trivial and we leave the proof as an exercise for the readers.

(3) It is easy to see that hp-bisimulation is an equivalent relation on $BATC_G$ terms, we only need to prove that $\sim_{hp}$ is preserved by the operators $\cdot$ and $+$. It is trivial and we leave the proof as an exercise for the readers.

(4) It is easy to see that hhp-bisimulation is an equivalent relation on $BATC_G$ terms, we only need to prove that $\sim_{hhp}$ is preserved by the operators $\cdot$ and $+$. It is trivial and we leave the proof as an exercise for the readers.
\end{proof}

\begin{theorem}[Soundness of $BATC_G$ modulo truly concurrent bisimulation equivalences]\label{SBATCG}
(1) Let $x$ and $y$ be $BATC_G$ terms. If $BATC\vdash x=y$, then $x\sim_{p} y$.

(2) Let $x$ and $y$ be $BATC_G$ terms. If $BATC\vdash x=y$, then $x\sim_{s} y$.

(3) Let $x$ and $y$ be $BATC_G$ terms. If $BATC\vdash x=y$, then $x\sim_{hp} y$.

(4) Let $x$ and $y$ be $BATC_G$ terms. If $BATC\vdash x=y$, then $x\sim_{hhp} y$.
\end{theorem}

\begin{proof}
(1) Since pomset bisimulation $\sim_p$ is both an equivalent and a congruent relation, we only need to check if each axiom in Table \ref{AxiomsForBATCG} is sound modulo pomset bisimulation equivalence. We leave the proof as an exercise for the readers.

(2) Since step bisimulation $\sim_s$ is both an equivalent and a congruent relation, we only need to check if each axiom in Table \ref{AxiomsForBATCG} is sound modulo step bisimulation equivalence. We leave the proof as an exercise for the readers.

(3) Since hp-bisimulation $\sim_{hp}$ is both an equivalent and a congruent relation, we only need to check if each axiom in Table \ref{AxiomsForBATCG} is sound modulo hp-bisimulation equivalence. We leave the proof as an exercise for the readers.

(4) Since hhp-bisimulation $\sim_{hhp}$ is both an equivalent and a congruent relation, we only need to check if each axiom in Table \ref{AxiomsForBATCG} is sound modulo hhp-bisimulation equivalence. We leave the proof as an exercise for the readers.
\end{proof}

\begin{theorem}[Completeness of $BATC_G$ modulo truly concurrent bisimulation equivalences]\label{CBATCG}
(1) Let $p$ and $q$ be closed $BATC_G$ terms, if $p\sim_{p} q$ then $p=q$.

(2) Let $p$ and $q$ be closed $BATC_G$ terms, if $p\sim_{s} q$ then $p=q$.

(3) Let $p$ and $q$ be closed $BATC_G$ terms, if $p\sim_{hp} q$ then $p=q$.

(4) Let $p$ and $q$ be closed $BATC_G$ terms, if $p\sim_{hhp} q$ then $p=q$.
\end{theorem}

\begin{proof}
(1) Firstly, by the elimination theorem of $BATC_G$, we know that for each closed $BATC_G$ term $p$, there exists a closed basic $BATC_G$ term $p'$, such that $BATC_G\vdash p=p'$, so, we only need to consider closed basic $BATC_G$ terms.

The basic terms (see Definition \ref{BTBATCG}) modulo associativity and commutativity (AC) of conflict $+$ (defined by axioms $A1$ and $A2$ in Table \ref{AxiomsForBATCG}), and this equivalence is denoted by $=_{AC}$. Then, each equivalence class $s$ modulo AC of $+$ has the following normal form

$$s_1+\cdots+ s_k$$

with each $s_i$ either an atomic event, or an atomic guard, or of the form $t_1\cdot t_2$, and each $s_i$ is called the summand of $s$.

Now, we prove that for normal forms $n$ and $n'$, if $n\sim_{p} n'$ then $n=_{AC}n'$. It is sufficient to induct on the sizes of $n$ and $n'$.

\begin{itemize}
  \item Consider a summand $e$ of $n$. Then $\langle n,s\rangle\xrightarrow{e}\langle \surd,s'\rangle$, so $n\sim_p n'$ implies $\langle n',s\rangle\xrightarrow{e}\langle \surd,s\rangle$, meaning that $n'$ also contains the summand $e$.
  \item Consider a summand $\phi$ of $n$. Then $\langle n,s\rangle\rightarrow\langle \surd,s\rangle$, if $test(\phi,s)$ holds, so $n\sim_p n'$ implies $\langle n',s\rangle\rightarrow\langle \surd,s\rangle$, if $test(\phi,s)$ holds, meaning that $n'$ also contains the summand $\phi$.
  \item Consider a summand $t_1\cdot t_2$ of $n$. Then $\langle n,s\rangle\xrightarrow{t_1}\langle t_2,s'\rangle$, so $n\sim_p n'$ implies $\langle n',s\rangle\xrightarrow{t_1}\langle t_2',s'\rangle$ with $t_2\sim_p t_2'$, meaning that $n'$ contains a summand $t_1\cdot t_2'$. Since $t_2$ and $t_2'$ are normal forms and have sizes smaller than $n$ and $n'$, by the induction hypotheses $t_2\sim_p t_2'$ implies $t_2=_{AC} t_2'$.
\end{itemize}

So, we get $n=_{AC} n'$.

Finally, let $s$ and $t$ be basic terms, and $s\sim_p t$, there are normal forms $n$ and $n'$, such that $s=n$ and $t=n'$. The soundness theorem of $BATC_G$ modulo pomset bisimulation equivalence (see Theorem \ref{SBATCG}) yields $s\sim_p n$ and $t\sim_p n'$, so $n\sim_p s\sim_p t\sim_p n'$. Since if $n\sim_p n'$ then $n=_{AC}n'$, $s=n=_{AC}n'=t$, as desired.

(2) It can be proven similarly as (1).

(3) It can be proven similarly as (1).

(4) It can be proven similarly as (1).
\end{proof}

\begin{theorem}[Sufficient determinacy]
All related data environments with respect to $BATC_G$ can be sufficiently deterministic.
\end{theorem}

\begin{proof}
It only needs to check $effect(t,s)$ function is deterministic, and is sufficient to induct on the structure of term $t$. The only matter is the case $t=t_1+t_2$, with the help of guards, we can make $t_1=\phi_1\cdot t_1'$ and $t_2=\phi_2\cdot t_2'$, and $effct(t)$ is sufficiently deterministic.
\end{proof}

\subsubsection{$APTC$ with Guards}{\label{aptcg}}

In this subsection, we will extend $APTC$ with guards, which is abbreviated $APTC_G$. The set of basic guards $G$ with element $\phi,\psi,\cdots$, which is extended by the following formation rules:

$$\phi::=\delta|\epsilon|\neg\phi|\psi\in G_{at}|\phi+\psi|\phi\cdot\psi|\phi\parallel\psi$$

The set of axioms of $APTC_G$ including axioms of $BATC_G$ in Table \ref{AxiomsForBATCG} and the axioms are shown in Table \ref{AxiomsForAPTCG}.

\begin{center}
    \begin{table}
        \begin{tabular}{@{}ll@{}}
            \hline No. &Axiom\\
            $P1$ & $x\between y = x\parallel y + x\mid y$\\
            $P2$ & $e_1\parallel (e_2\cdot y) = (e_1\parallel e_2)\cdot y$\\
            $P3$ & $(e_1\cdot x)\parallel e_2 = (e_1\parallel e_2)\cdot x$\\
            $P4$ & $(e_1\cdot x)\parallel (e_2\cdot y) = (e_1\parallel e_2)\cdot (x\between y)$\\
            $P5$ & $(x+ y)\parallel z = (x\parallel z)+ (y\parallel z)$\\
            $P6$ & $x\parallel (y+ z) = (x\parallel y)+ (x\parallel z)$\\
            $P7$ & $\delta\parallel x = \delta$\\
            $P8$ & $x\parallel \delta = \delta$\\
            $P9$ & $\epsilon\parallel x = x$\\
            $P10$ & $x\parallel \epsilon = x$\\
            $C1$ & $e_1\mid e_2 = \gamma(e_1,e_2)$\\
            $C2$ & $e_1\mid (e_2\cdot y) = \gamma(e_1,e_2)\cdot y$\\
            $C3$ & $(e_1\cdot x)\mid e_2 = \gamma(e_1,e_2)\cdot x$\\
            $C4$ & $(e_1\cdot x)\mid (e_2\cdot y) = \gamma(e_1,e_2)\cdot (x\between y)$\\
            $C5$ & $(x+ y)\mid z = (x\mid z) + (y\mid z)$\\
            $C6$ & $x\mid (y+ z) = (x\mid y)+ (x\mid z)$\\
            $C7$ & $\delta\mid x = \delta$\\
            $C8$ & $x\mid\delta = \delta$\\
            $C9$ & $\epsilon\mid x = \delta$\\
            $C10$ & $x\mid\epsilon = \delta$\\
            $CE1$ & $\Theta(e) = e$\\
            $CE2$ & $\Theta(\delta) = \delta$\\
            $CE3$ & $\Theta(\epsilon) = \epsilon$\\
            $CE4$ & $\Theta(x+ y) = \Theta(x)\triangleleft y + \Theta(y)\triangleleft x$\\
            $CE5$ & $\Theta(x\cdot y)=\Theta(x)\cdot\Theta(y)$\\
            $CE6$ & $\Theta(x\parallel y) = ((\Theta(x)\triangleleft y)\parallel y)+ ((\Theta(y)\triangleleft x)\parallel x)$\\
            $CE7$ & $\Theta(x\mid y) = ((\Theta(x)\triangleleft y)\mid y)+ ((\Theta(y)\triangleleft x)\mid x)$\\
            $U1$ & $(\sharp(e_1,e_2))\quad e_1\triangleleft e_2 = \tau$\\
            $U2$ & $(\sharp(e_1,e_2),e_2\leq e_3)\quad e_1\triangleleft e_3 = e_1$\\
            $U3$ & $(\sharp(e_1,e_2),e_2\leq e_3)\quad e3\triangleleft e_1 = \tau$\\
            $U4$ & $e\triangleleft \delta = e$\\
            $U5$ & $\delta \triangleleft e = \delta$\\
            $U6$ & $e\triangleleft \epsilon = e$\\
            $U7$ & $\epsilon \triangleleft e = e$\\
            $U8$ & $(x+ y)\triangleleft z = (x\triangleleft z)+ (y\triangleleft z)$\\
            $U9$ & $(x\cdot y)\triangleleft z = (x\triangleleft z)\cdot (y\triangleleft z)$\\
            $U10$ & $(x\parallel y)\triangleleft z = (x\triangleleft z)\parallel (y\triangleleft z)$\\
            $U11$ & $(x\mid y)\triangleleft z = (x\triangleleft z)\mid (y\triangleleft z)$\\
            $U12$ & $x\triangleleft (y+ z) = (x\triangleleft y)\triangleleft z$\\
            $U13$ & $x\triangleleft (y\cdot z)=(x\triangleleft y)\triangleleft z$\\
            $U14$ & $x\triangleleft (y\parallel z) = (x\triangleleft y)\triangleleft z$\\
            $U15$ & $x\triangleleft (y\mid z) = (x\triangleleft y)\triangleleft z$\\
            $D1$ & $e\notin H\quad\partial_H(e) = e$\\
            $D2$ & $e\in H\quad \partial_H(e) = \delta$\\
            $D3$ & $\partial_H(\delta) = \delta$\\
            $D4$ & $\partial_H(x+ y) = \partial_H(x)+\partial_H(y)$\\
            $D5$ & $\partial_H(x\cdot y) = \partial_H(x)\cdot\partial_H(y)$\\
            $D6$ & $\partial_H(x\parallel y) = \partial_H(x)\parallel\partial_H(y)$\\
            $G11$ & $\phi(x\parallel y) =\phi x\parallel \phi y$\\
            $G12$ & $\phi(x\mid y) =\phi x\mid \phi y$\\
            $G13$ & $\phi\parallel \delta = \delta$\\
            $G14$ & $\delta\parallel \phi = \delta$\\
            $G15$ & $\phi\mid \delta = \delta$\\
            $G16$ & $\delta\mid \phi = \delta$\\
            $G17$ & $\phi\parallel \epsilon = \phi$\\
            $G18$ & $\epsilon\parallel \phi = \phi$\\
            $G19$ & $\phi\mid \epsilon = \delta$\\
            $G20$ & $\epsilon\mid \phi = \delta$\\
            $G21$ & $\phi\parallel\neg\phi = \delta$\\
            $G22$ & $\Theta(\phi) = \phi$\\
            $G23$ & $\partial_H(\phi) = \phi$\\
            $G24$ & $\phi_0\parallel\cdots\parallel\phi_n = \delta$ if $\forall s_0,\cdots,s_n\in S,\exists i\leq n.test(\neg\phi_i,s_0\cup\cdots\cup s_n)$\\
        \end{tabular}
        \caption{Axioms of $APTC_G$}
        \label{AxiomsForAPTCG}
    \end{table}
\end{center}

\begin{definition}[Basic terms of $APTC_G$]\label{BTAPTCG}
The set of basic terms of $APTC_G$, $\mathcal{B}(APTC_G)$, is inductively defined as follows:
\begin{enumerate}
    \item $\mathbb{E}\subset\mathcal{B}(APTC_G)$;
    \item $G\subset\mathcal{B}(APTC_G)$;
    \item if $e\in \mathbb{E}, t\in\mathcal{B}(APTC_G)$ then $e\cdot t\in\mathcal{B}(APTC_G)$;
    \item if $\phi\in G, t\in\mathcal{B}(APTC_G)$ then $\phi\cdot t\in\mathcal{B}(APTC_G)$;
    \item if $t,s\in\mathcal{B}(APTC_G)$ then $t+ s\in\mathcal{B}(APTC_G)$.
    \item if $t,s\in\mathcal{B}(APTC_G)$ then $t\parallel s\in\mathcal{B}(APTC_G)$.
\end{enumerate}
\end{definition}

Based on the definition of basic terms for $APTC_G$ (see Definition \ref{BTAPTCG}) and axioms of $APTC_G$, we can prove the elimination theorem of $APTC_G$.

\begin{theorem}[Elimination theorem of $APTC_G$]\label{ETAPTCG}
Let $p$ be a closed $APTC_G$ term. Then there is a basic $APTC_G$ term $q$ such that $APTC_G\vdash p=q$.
\end{theorem}

\begin{proof}
(1) Firstly, suppose that the following ordering on the signature of $APTC_G$ is defined: $\parallel > \cdot > +$ and the symbol $\parallel$ is given the lexicographical status for the first argument, then for each rewrite rule $p\rightarrow q$ in Table \ref{TRSForAPTCG} relation $p>_{lpo} q$ can easily be proved. We obtain that the term rewrite system shown in Table \ref{TRSForAPTCG} is strongly normalizing, for it has finitely many rewriting rules, and $>$ is a well-founded ordering on the signature of $APTC_G$, and if $s>_{lpo} t$, for each rewriting rule $s\rightarrow t$ is in Table \ref{TRSForAPTCG} (see Theorem \ref{SN}).

\begin{center}
    \begin{table}
        \begin{tabular}{@{}ll@{}}
            \hline No. &Rewriting Rule\\
            $RP1$ & $x\between y \rightarrow x\parallel y + x\mid y$\\
            $RP2$ & $e_1\parallel (e_2\cdot y) \rightarrow (e_1\parallel e_2)\cdot y$\\
            $RP3$ & $(e_1\cdot x)\parallel e_2 \rightarrow (e_1\parallel e_2)\cdot x$\\
            $RP4$ & $(e_1\cdot x)\parallel (e_2\cdot y) \rightarrow (e_1\parallel e_2)\cdot (x\between y)$\\
            $RP5$ & $(x+ y)\parallel z \rightarrow (x\parallel z)+ (y\parallel z)$\\
            $RP6$ & $x\parallel (y+ z) \rightarrow (x\parallel y)+ (x\parallel z)$\\
            $RP7$ & $\delta\parallel x \rightarrow \delta$\\
            $RP8$ & $x\parallel \delta \rightarrow \delta$\\
            $RP9$ & $\epsilon\parallel x \rightarrow x$\\
            $RP10$ & $x\parallel \epsilon \rightarrow x$\\
            $RC1$ & $e_1\mid e_2 \rightarrow \gamma(e_1,e_2)$\\
            $RC2$ & $e_1\mid (e_2\cdot y) \rightarrow \gamma(e_1,e_2)\cdot y$\\
            $RC3$ & $(e_1\cdot x)\mid e_2 \rightarrow \gamma(e_1,e_2)\cdot x$\\
            $RC4$ & $(e_1\cdot x)\mid (e_2\cdot y) \rightarrow \gamma(e_1,e_2)\cdot (x\between y)$\\
            $RC5$ & $(x+ y)\mid z \rightarrow (x\mid z) + (y\mid z)$\\
            $RC6$ & $x\mid (y+ z) \rightarrow (x\mid y)+ (x\mid z)$\\
            $RC7$ & $\delta\mid x \rightarrow \delta$\\
            $RC8$ & $x\mid\delta \rightarrow \delta$\\
            $RC9$ & $\epsilon\mid x \rightarrow \delta$\\
            $RC10$ & $x\mid\epsilon \rightarrow \delta$\\
            $RCE1$ & $\Theta(e) \rightarrow e$\\
            $RCE2$ & $\Theta(\delta) \rightarrow \delta$\\
            $RCE3$ & $\Theta(\epsilon) \rightarrow \epsilon$\\
            $RCE4$ & $\Theta(x+ y) \rightarrow \Theta(x)\triangleleft y + \Theta(y)\triangleleft x$\\
            $RCE5$ & $\Theta(x\cdot y)\rightarrow\Theta(x)\cdot\Theta(y)$\\
            $RCE6$ & $\Theta(x\parallel y) \rightarrow ((\Theta(x)\triangleleft y)\parallel y)+ ((\Theta(y)\triangleleft x)\parallel x)$\\
            $RCE7$ & $\Theta(x\mid y) \rightarrow ((\Theta(x)\triangleleft y)\mid y)+ ((\Theta(y)\triangleleft x)\mid x)$\\
            $RU1$ & $(\sharp(e_1,e_2))\quad e_1\triangleleft e_2 \rightarrow \tau$\\
            $RU2$ & $(\sharp(e_1,e_2),e_2\leq e_3)\quad e_1\triangleleft e_3 \rightarrow e_1$\\
            $RU3$ & $(\sharp(e_1,e_2),e_2\leq e_3)\quad e3\triangleleft e_1 \rightarrow \tau$\\
            $RU4$ & $e\triangleleft \delta \rightarrow e$\\
            $RU5$ & $\delta \triangleleft e \rightarrow \delta$\\
            $RU6$ & $e\triangleleft \epsilon \rightarrow e$\\
            $RU7$ & $\epsilon \triangleleft e \rightarrow e$\\
            $RU8$ & $(x+ y)\triangleleft z \rightarrow (x\triangleleft z)+ (y\triangleleft z)$\\
            $RU9$ & $(x\cdot y)\triangleleft z \rightarrow (x\triangleleft z)\cdot (y\triangleleft z)$\\
            $RU10$ & $(x\parallel y)\triangleleft z \rightarrow (x\triangleleft z)\parallel (y\triangleleft z)$\\
            $RU11$ & $(x\mid y)\triangleleft z \rightarrow (x\triangleleft z)\mid (y\triangleleft z)$\\
            $RU12$ & $x\triangleleft (y+ z) \rightarrow (x\triangleleft y)\triangleleft z$\\
            $RU13$ & $x\triangleleft (y\cdot z)\rightarrow(x\triangleleft y)\triangleleft z$\\
            $RU14$ & $x\triangleleft (y\parallel z) \rightarrow (x\triangleleft y)\triangleleft z$\\
            $RU15$ & $x\triangleleft (y\mid z) \rightarrow (x\triangleleft y)\triangleleft z$\\
            $RD1$ & $e\notin H\quad\partial_H(e) \rightarrow e$\\
            $RD2$ & $e\in H\quad \partial_H(e) \rightarrow \delta$\\
            $RD3$ & $\partial_H(\delta) \rightarrow \delta$\\
            $RD4$ & $\partial_H(x+ y) \rightarrow \partial_H(x)+\partial_H(y)$\\
            $RD5$ & $\partial_H(x\cdot y) \rightarrow \partial_H(x)\cdot\partial_H(y)$\\
            $RD6$ & $\partial_H(x\parallel y) \rightarrow \partial_H(x)\parallel\partial_H(y)$\\
            $RG11$ & $\phi(x\parallel y) \rightarrow\phi x\parallel \phi y$\\
            $RG12$ & $\phi(x\mid y) \rightarrow\phi x\mid \phi y$\\
            $RG13$ & $\phi\parallel \delta \rightarrow \delta$\\
            $RG14$ & $\delta\parallel \phi \rightarrow \delta$\\
            $RG15$ & $\phi\mid \delta \rightarrow \delta$\\
            $RG16$ & $\delta\mid \phi \rightarrow \delta$\\
            $RG17$ & $\phi\parallel \epsilon \rightarrow \phi$\\
            $RG18$ & $\epsilon\parallel \phi \rightarrow \phi$\\
            $RG19$ & $\phi\mid \epsilon \rightarrow \delta$\\
            $RG20$ & $\epsilon\mid \phi \rightarrow \delta$\\
            $RG21$ & $\phi\parallel\neg\phi \rightarrow \delta$\\
            $RG22$ & $\Theta(\phi) \rightarrow \phi$\\
            $RG23$ & $\partial_H(\phi) \rightarrow \phi$\\
            $RG24$ & $\phi_0\parallel\cdots\parallel\phi_n \rightarrow \delta$ if $\forall s_0,\cdots,s_n\in S,\exists i\leq n.test(\neg\phi_i,s_0\cup\cdots\cup s_n)$\\
        \end{tabular}
        \caption{Term rewrite system of $APTC_G$}
        \label{TRSForAPTCG}
    \end{table}
\end{center}

(2) Then we prove that the normal forms of closed $APTC_G$ terms are basic $APTC_G$ terms.

Suppose that $p$ is a normal form of some closed $APTC_G$ term and suppose that $p$ is not a basic $APTC_G$ term. Let $p'$ denote the smallest sub-term of $p$ which is not a basic $APTC_G$ term. It implies that each sub-term of $p'$ is a basic $APTC_G$ term. Then we prove that $p$ is not a term in normal form. It is sufficient to induct on the structure of $p'$:

\begin{itemize}
  \item Case $p'\equiv e, e\in \mathbb{E}$. $p'$ is a basic $APTC_G$ term, which contradicts the assumption that $p'$ is not a basic $APTC_G$ term, so this case should not occur.
  \item Case $p'\equiv \phi, \phi\in G$. $p'$ is a basic term, which contradicts the assumption that $p'$ is not a basic term, so this case should not occur.
  \item Case $p'\equiv p_1\cdot p_2$. By induction on the structure of the basic $APTC_G$ term $p_1$:
      \begin{itemize}
        \item Subcase $p_1\in \mathbb{E}$. $p'$ would be a basic $APTC_G$ term, which contradicts the assumption that $p'$ is not a basic $APTC_G$ term;
        \item Subcase $p_1\in G$. $p'$ would be a basic term, which contradicts the assumption that $p'$ is not a basic term;
        \item Subcase $p_1\equiv e\cdot p_1'$. $RA5$ or $RA9$ rewriting rules in Table \ref{TRSForBATCG} can be applied. So $p$ is not a normal form;
        \item Subcase $p_1\equiv \phi\cdot p_1'$. $RG1$, $RG3$, $RG4$, $RG5$, $RG7$, or $RG8$ rewriting rules can be applied. So $p$ is not a normal form;
        \item Subcase $p_1\equiv p_1'+ p_1''$. $RA4$, $RA6$, $RG2$, or $RG6$ rewriting rules in Table \ref{TRSForBATCG} can be applied. So $p$ is not a normal form;
        \item Subcase $p_1\equiv p_1'\parallel p_1''$. $RP2$-$RP10$ rewrite rules in Table \ref{TRSForAPTCG} can be applied. So $p$ is not a normal form;
        \item Subcase $p_1\equiv p_1'\mid p_1''$. $RC1$-$RC10$ rewrite rules in Table \ref{TRSForAPTCG} can be applied. So $p$ is not a normal form;
        \item Subcase $p_1\equiv \Theta(p_1')$. $RCE1$-$RCE7$ rewrite rules in Table \ref{TRSForAPTCG} can be applied. So $p$ is not a normal form;
        \item Subcase $p_1\equiv \partial_H(p_1')$. $RD1$-$RD6$ rewrite rules in Table \ref{TRSForAPTCG} can be applied. So $p$ is not a normal form.
      \end{itemize}
  \item Case $p'\equiv p_1+ p_2$. By induction on the structure of the basic $APTC_G$ terms both $p_1$ and $p_2$, all subcases will lead to that $p'$ would be a basic $APTC_G$ term, which contradicts the assumption that $p'$ is not a basic $APTC_G$ term.
  \item Case $p'\equiv p_1\parallel p_2$. By induction on the structure of the basic $APTC_G$ terms both $p_1$ and $p_2$, all subcases will lead to that $p'$ would be a basic $APTC_G$ term, which contradicts the assumption that $p'$ is not a basic $APTC_G$ term.
  \item Case $p'\equiv p_1\mid p_2$. By induction on the structure of the basic $APTC_G$ terms both $p_1$ and $p_2$, all subcases will lead to that $p'$ would be a basic $APTC_G$ term, which contradicts the assumption that $p'$ is not a basic $APTC_G$ term.
  \item Case $p'\equiv \Theta(p_1)$. By induction on the structure of the basic $APTC_G$ term $p_1$, $RCE1-RCE7$ rewrite rules in Table \ref{TRSForAPTCG} can be applied. So $p$ is not a normal form.
  \item Case $p'\equiv p_1\triangleleft p_2$. By induction on the structure of the basic $APTC_G$ terms both $p_1$ and $p_2$, all subcases will lead to that $p'$ would be a basic $APTC_G$ term, which contradicts the assumption that $p'$ is not a basic $APTC_G$ term.
  \item Case $p'\equiv \partial_H(p_1)$. By induction on the structure of the basic $APTC_G$ terms of $p_1$, all subcases will lead to that $p'$ would be a basic $APTC_G$ term, which contradicts the assumption that $p'$ is not a basic $APTC_G$ term.
\end{itemize}
\end{proof}

We will define a term-deduction system which gives the operational semantics of $APTC_G$. Two atomic events $e_1$ and $e_2$ are in race condition, which are denoted $e_1\% e_2$.

\begin{center}
    \begin{table}
        $$\frac{}{\langle e_1\parallel\cdots \parallel e_n,s\rangle\xrightarrow{\{e_1,\cdots,e_n\}}\langle\surd,s'\rangle}\textrm{ if }s'\in effect(e_1,s)\cup\cdots\cup effect(e_n,s)$$
        $$\frac{}{\langle\phi_1\parallel\cdots\parallel \phi_n,s\rangle\rightarrow\langle\surd,s\rangle}\textrm{ if }test(\phi_1,s),\cdots,test(\phi_n,s)$$

        $$\frac{\langle x,s\rangle\xrightarrow{e_1}\langle\surd,s'\rangle\quad \langle y,s\rangle\xrightarrow{e_2}\langle\surd,s''\rangle}{\langle x\parallel y,s\rangle\xrightarrow{\{e_1,e_2\}}\langle\surd,s'\cup s''\rangle} \quad\frac{\langle x,s\rangle\xrightarrow{e_1}\langle x',s'\rangle\quad \langle y,s\rangle\xrightarrow{e_2}\langle\surd,s''\rangle}{\langle x\parallel y,s\rangle\xrightarrow{\{e_1,e_2\}}\langle x',s'\cup s''\rangle}$$

        $$\frac{\langle x,s\rangle\xrightarrow{e_1}\langle\surd,s'\rangle\quad \langle y,s\rangle\xrightarrow{e_2}\langle y',s''\rangle}{\langle x\parallel y,s\rangle\xrightarrow{\{e_1,e_2\}}\langle y',s'\cup s''\rangle} \quad\frac{\langle x,s\rangle\xrightarrow{e_1}\langle x',s'\rangle\quad \langle y,s\rangle\xrightarrow{e_2}\langle y',s''\rangle}{\langle x\parallel y,s\rangle\xrightarrow{\{e_1,e_2\}}\langle x'\between y',s'\cup s''\rangle}$$

        $$\frac{\langle x,s\rangle\xrightarrow{e_1}\langle\surd,s'\rangle\quad \langle y,s\rangle\xnrightarrow{e_2}\quad(e_1\%e_2)}{\langle x\parallel y,s\rangle\xrightarrow{e_1}\langle y,s'\rangle} \quad\frac{\langle x,s\rangle\xrightarrow{e_1}\langle x',s'\rangle\quad \langle y,s\rangle\xnrightarrow{e_2}\quad(e_1\%e_2)}{\langle x\parallel y,s\rangle\xrightarrow{e_1}\langle x'\between y,s'\rangle}$$

        $$\frac{\langle x,s\rangle\xnrightarrow{e_1}\quad \langle y,s\rangle\xrightarrow{e_2}\langle\surd,s''\rangle\quad(e_1\%e_2)}{\langle x\parallel y,s\rangle\xrightarrow{e_2}\langle x,s''\rangle} \quad\frac{\langle x,s\rangle\xnrightarrow{e_1}\quad \langle y,s\rangle\xrightarrow{e_2}\langle y',s''\rangle\quad(e_1\%e_2)}{\langle x\parallel y,s\rangle\xrightarrow{e_2}\langle x\between y',s''\rangle}$$

        $$\frac{\langle x,s\rangle\xrightarrow{e_1}\langle\surd,s'\rangle\quad \langle y,s\rangle\xrightarrow{e_2}\langle\surd,s''\rangle}{\langle x\mid y,s\rangle\xrightarrow{\gamma(e_1,e_2)}\langle\surd,effect(\gamma(e_1,e_2),s)\rangle} \quad\frac{\langle x,s\rangle\xrightarrow{e_1}\langle x',s'\rangle\quad \langle y,s\rangle\xrightarrow{e_2}\langle\surd,s''\rangle}{\langle x\mid y,s\rangle\xrightarrow{\gamma(e_1,e_2)}\langle x',effect(\gamma(e_1,e_2),s)\rangle}$$

        $$\frac{\langle x,s\rangle\xrightarrow{e_1}\langle\surd,s'\rangle\quad \langle y,s\rangle\xrightarrow{e_2}\langle y',s''\rangle}{\langle x\mid y,s\rangle\xrightarrow{\gamma(e_1,e_2)}\langle y',effect(\gamma(e_1,e_2),s)\rangle} \quad\frac{\langle x,s\rangle\xrightarrow{e_1}\langle x',s'\rangle\quad \langle y,s\rangle\xrightarrow{e_2}\langle y',s''\rangle}{\langle x\mid y,s\rangle\xrightarrow{\gamma(e_1,e_2)}\langle x'\between y',effect(\gamma(e_1,e_2),s)\rangle}$$

        $$\frac{\langle x,s\rangle\xrightarrow{e_1}\langle\surd,s'\rangle\quad (\sharp(e_1,e_2))}{\langle \Theta(x),s\rangle\xrightarrow{e_1}\langle\surd,s'\rangle} \quad\frac{\langle x,s\rangle\xrightarrow{e_2}\langle\surd,s''\rangle\quad (\sharp(e_1,e_2))}{\langle\Theta(x),s\rangle\xrightarrow{e_2}\langle\surd,s''\rangle}$$

        $$\frac{\langle x,s\rangle\xrightarrow{e_1}\langle x',s'\rangle\quad (\sharp(e_1,e_2))}{\langle\Theta(x),s\rangle\xrightarrow{e_1}\langle\Theta(x'),s'\rangle} \quad\frac{\langle x,s\rangle\xrightarrow{e_2}\langle x'',s''\rangle\quad (\sharp(e_1,e_2))}{\langle\Theta(x),s\rangle\xrightarrow{e_2}\langle\Theta(x''),s''\rangle}$$

        $$\frac{\langle x,s\rangle\xrightarrow{e_1}\langle\surd,s'\rangle \quad \langle y,s\rangle\nrightarrow^{e_2}\quad (\sharp(e_1,e_2))}{\langle x\triangleleft y,s\rangle\xrightarrow{\tau}\langle\surd,s'\rangle}
        \quad\frac{\langle x,s\rangle\xrightarrow{e_1}\langle x',s'\rangle \quad \langle y,s\rangle\nrightarrow^{e_2}\quad (\sharp(e_1,e_2))}{\langle x\triangleleft y,s\rangle\xrightarrow{\tau}\langle x',s'\rangle}$$

        $$\frac{\langle x,s\rangle\xrightarrow{e_1}\langle\surd,s\rangle \quad \langle y,s\rangle\nrightarrow^{e_3}\quad (\sharp(e_1,e_2),e_2\leq e_3)}{\langle x\triangleleft y,s\rangle\xrightarrow{e_1}\langle\surd,s'\rangle}
        \quad\frac{\langle x,s\rangle\xrightarrow{e_1}\langle x',s'\rangle \quad \langle y,s\rangle\nrightarrow^{e_3}\quad (\sharp(e_1,e_2),e_2\leq e_3)}{\langle x\triangleleft y,s\rangle\xrightarrow{e_1}\langle x',s'\rangle}$$

        $$\frac{\langle x,s\rangle\xrightarrow{e_3}\langle\surd,s'\rangle \quad \langle y,s\rangle\nrightarrow^{e_2}\quad (\sharp(e_1,e_2),e_1\leq e_3)}{\langle x\triangleleft y,s\rangle\xrightarrow{\tau}\langle\surd,s'\rangle}
        \quad\frac{\langle x,s\rangle\xrightarrow{e_3}\langle x',s'\rangle \quad \langle y,s\rangle\nrightarrow^{e_2}\quad (\sharp(e_1,e_2),e_1\leq e_3)}{\langle x\triangleleft y,s\rangle\xrightarrow{\tau}\langle x',s'\rangle}$$

        $$\frac{\langle x,s\rangle\xrightarrow{e}\langle\surd,s'\rangle}{\langle\partial_H(x),s\rangle\xrightarrow{e}\langle\surd,s'\rangle}\quad (e\notin H)\quad\frac{\langle x,s\rangle\xrightarrow{e}\langle x',s'\rangle}{\langle\partial_H(x),s\rangle\xrightarrow{e}\langle\partial_H(x'),s'\rangle}\quad(e\notin H)$$
        \caption{Transition rules of $APTC_G$}
        \label{TRForAPTCG}
    \end{table}
\end{center}

\begin{theorem}[Generalization of $APTC_G$ with respect to $BATC_G$]
$APTC_G$ is a generalization of $BATC_G$.
\end{theorem}

\begin{proof}
It follows from the following three facts.

\begin{enumerate}
  \item The transition rules of $BATC_G$ in section \ref{batcg} are all source-dependent;
  \item The sources of the transition rules $APTC_G$ contain an occurrence of $\between$, or $\parallel$, or $\mid$, or $\Theta$, or $\triangleleft$;
  \item The transition rules of $APTC_G$ are all source-dependent.
\end{enumerate}

So, $APTC_G$ is a generalization of $BATC_G$, that is, $BATC_G$ is an embedding of $APTC_G$, as desired.
\end{proof}

\begin{theorem}[Congruence of $APTC_G$ with respect to truly concurrent bisimulation equivalences]\label{CAPTCG}
(1) Pomset bisimulation equivalence $\sim_{p}$ is a congruence with respect to $APTC_G$.

(2) Step bisimulation equivalence $\sim_{s}$ is a congruence with respect to $APTC_G$.

(3) Hp-bisimulation equivalence $\sim_{hp}$ is a congruence with respect to $APTC_G$.

(4) Hhp-bisimulation equivalence $\sim_{hhp}$ is a congruence with respect to $APTC_G$.
\end{theorem}

\begin{proof}
(1) It is easy to see that pomset bisimulation is an equivalent relation on $APTC_G$ terms, we only need to prove that $\sim_{p}$ is preserved by the operators $\parallel$, $\mid$, $\Theta$, $\triangleleft$, $\partial_H$. It is trivial and we leave the proof as an exercise for the readers.

(2) It is easy to see that step bisimulation is an equivalent relation on $APTC_G$ terms, we only need to prove that $\sim_{s}$ is preserved by the operators $\parallel$, $\mid$, $\Theta$, $\triangleleft$, $\partial_H$. It is trivial and we leave the proof as an exercise for the readers.

(3) It is easy to see that hp-bisimulation is an equivalent relation on $APTC_G$ terms, we only need to prove that $\sim_{hp}$ is preserved by the operators $\parallel$, $\mid$, $\Theta$, $\triangleleft$, $\partial_H$. It is trivial and we leave the proof as an exercise for the readers.

(4) It is easy to see that hhp-bisimulation is an equivalent relation on $APTC_G$ terms, we only need to prove that $\sim_{hhp}$ is preserved by the operators $\parallel$, $\mid$, $\Theta$, $\triangleleft$, $\partial_H$. It is trivial and we leave the proof as an exercise for the readers.
\end{proof}

\begin{theorem}[Soundness of $APTC_G$ modulo truly concurrent bisimulation equivalences]\label{SAPTCG}
(1) Let $x$ and $y$ be $APTC_G$ terms. If $APTC\vdash x=y$, then $x\sim_{p} y$.

(2) Let $x$ and $y$ be $APTC_G$ terms. If $APTC\vdash x=y$, then $x\sim_{s} y$.

(3) Let $x$ and $y$ be $APTC_G$ terms. If $APTC\vdash x=y$, then $x\sim_{hp} y$.
\end{theorem}

\begin{proof}
(1) Since pomset bisimulation $\sim_p$ is both an equivalent and a congruent relation, we only need to check if each axiom in Table \ref{AxiomsForAPTCG} is sound modulo pomset bisimulation equivalence. We leave the proof as an exercise for the readers.

(2) Since step bisimulation $\sim_s$ is both an equivalent and a congruent relation, we only need to check if each axiom in Table \ref{AxiomsForAPTCG} is sound modulo step bisimulation equivalence. We leave the proof as an exercise for the readers.

(3) Since hp-bisimulation $\sim_{hp}$ is both an equivalent and a congruent relation, we only need to check if each axiom in Table \ref{AxiomsForAPTCG} is sound modulo hp-bisimulation equivalence. We leave the proof as an exercise for the readers.
\end{proof}

\begin{theorem}[Completeness of $APTC_G$ modulo truly concurrent bisimulation equivalences]\label{CAPTCG}
(1) Let $p$ and $q$ be closed $APTC_G$ terms, if $p\sim_{p} q$ then $p=q$.

(2) Let $p$ and $q$ be closed $APTC_G$ terms, if $p\sim_{s} q$ then $p=q$.

(3) Let $p$ and $q$ be closed $APTC_G$ terms, if $p\sim_{hp} q$ then $p=q$.
\end{theorem}

\begin{proof}
(1) Firstly, by the elimination theorem of $APTC_G$ (see Theorem \ref{ETAPTCG}), we know that for each closed $APTC_G$ term $p$, there exists a closed basic $APTC_G$ term $p'$, such that $APTC\vdash p=p'$, so, we only need to consider closed basic $APTC_G$ terms.

The basic terms (see Definition \ref{BTAPTCG}) modulo associativity and commutativity (AC) of conflict $+$ (defined by axioms $A1$ and $A2$ in Table \ref{AxiomsForBATCG}), and these equivalences is denoted by $=_{AC}$. Then, each equivalence class $s$ modulo AC of $+$ has the following normal form

$$s_1+\cdots+ s_k$$

with each $s_i$ either an atomic event, or an atomic guard, or of the form

$$t_1\cdot\cdots\cdot t_m$$

with each $t_j$ either an atomic event, or an atomic guard, or of the form

$$u_1\parallel\cdots\parallel u_l$$

with each $u_l$ an atomic event, or an atomic guard, and each $s_i$ is called the summand of $s$.

Now, we prove that for normal forms $n$ and $n'$, if $n\sim_{p} n'$ then $n=_{AC}n'$. It is sufficient to induct on the sizes of $n$ and $n'$.

\begin{itemize}
  \item Consider a summand $e$ of $n$. Then $\langle n,s\rangle\xrightarrow{e}\langle \surd,s'\rangle$, so $n\sim_p n'$ implies $\langle n',s\rangle\xrightarrow{e}\langle \surd,s\rangle$, meaning that $n'$ also contains the summand $e$.
  \item Consider a summand $\phi$ of $n$. Then $\langle n,s\rangle\rightarrow\langle \surd,s\rangle$, if $test(\phi,s)$ holds, so $n\sim_p n'$ implies $\langle n',s\rangle\rightarrow\langle \surd,s\rangle$, if $test(\phi,s)$ holds, meaning that $n'$ also contains the summand $\phi$.
  \item Consider a summand $t_1\cdot t_2$ of $n$,
  \begin{itemize}
    \item if $t_1\equiv e'$, then $\langle n,s\rangle\xrightarrow{e'}\langle t_2,s'\rangle$, so $n\sim_p n'$ implies $\langle n',s\rangle\xrightarrow{e'}\langle t_2',s'\rangle$ with $t_2\sim_p t_2'$, meaning that $n'$ contains a summand $e'\cdot t_2'$. Since $t_2$ and $t_2'$ are normal forms and have sizes smaller than $n$ and $n'$, by the induction hypotheses if $t_2\sim_p t_2'$ then $t_2=_{AC} t_2'$;
    \item if $t_1\equiv \phi'$, then $\langle n,s\rangle\rightarrow\langle t_2,s\rangle$, if $test(\phi',s)$ holds, so $n\sim_p n'$ implies $\langle n',s\rangle\rightarrow\langle t_2',s\rangle$ with $t_2\sim_p t_2'$, if $test(\phi',s)$ holds, meaning that $n'$ contains a summand $\phi'\cdot t_2'$. Since $t_2$ and $t_2'$ are normal forms and have sizes smaller than $n$ and $n'$, by the induction hypotheses if $t_2\sim_p t_2'$ then $t_2=_{AC} t_2'$;
    \item if $t_1\equiv e_1\parallel\cdots\parallel e_l$, then $\langle n,s\rangle\xrightarrow{\{e_1,\cdots,e_l\}}\langle t_2,s'\rangle$, so $n\sim_p n'$ implies $\langle n',s\rangle\xrightarrow{\{e_1,\cdots,e_l\}}\langle t_2',s'\rangle$ with $t_2\sim_p t_2'$, meaning that $n'$ contains a summand $(e_1\parallel\cdots\parallel e_l)\cdot t_2'$. Since $t_2$ and $t_2'$ are normal forms and have sizes smaller than $n$ and $n'$, by the induction hypotheses if $t_2\sim_p t_2'$ then $t_2=_{AC} t_2'$;
    \item if $t_1\equiv \phi_1\parallel\cdots\parallel \phi_l$, then $\langle n,s\rangle\rightarrow\langle t_2,s\rangle$, if $test(\phi_1,s),\cdots,test(\phi_l,s)$ hold, so $n\sim_p n'$ implies $\langle n',s\rangle\rightarrow\langle t_2',s\rangle$ with $t_2\sim_p t_2'$, if $test(\phi_1,s),\cdots,test(\phi_l,s)$ hold, meaning that $n'$ contains a summand $(\phi_1\parallel\cdots\parallel \phi_l)\cdot t_2'$. Since $t_2$ and $t_2'$ are normal forms and have sizes smaller than $n$ and $n'$, by the induction hypotheses if $t_2\sim_p t_2'$ then $t_2=_{AC} t_2'$.
  \end{itemize}
\end{itemize}

So, we get $n=_{AC} n'$.

Finally, let $s$ and $t$ be basic $APTC_G$ terms, and $s\sim_p t$, there are normal forms $n$ and $n'$, such that $s=n$ and $t=n'$. The soundness theorem of $APTC_G$ modulo pomset bisimulation equivalence (see Theorem \ref{SAPTCG}) yields $s\sim_p n$ and $t\sim_p n'$, so $n\sim_p s\sim_p t\sim_p n'$. Since if $n\sim_p n'$ then $n=_{AC}n'$, $s=n=_{AC}n'=t$, as desired.

(2) It can be proven similarly as (1).

(3) It can be proven similarly as (1).
\end{proof}

\begin{theorem}[Sufficient determinacy]
All related data environments with respect to $APTC_G$ can be sufficiently deterministic.
\end{theorem}

\begin{proof}
It only needs to check $effect(t,s)$ function is deterministic, and is sufficient to induct on the structure of term $t$. The new matter is the case $t=t_1\between t_2$, because $t_1$ and $t_2$ may be in race condition, the whole thing is $t_1\between t_2 = t_1\cdot t_2 + t_2\cdot t_1 + t_1\parallel t_2 + t_1\mid t_2$. We can make $effct(t)$ be sufficiently deterministic: eliminating non-determinacy caused by race condition during modeling time by use of empty event $\epsilon$. We can make $t=t_1\parallel t_2$ ($t_1\% t_2$) be $t=(\epsilon \cdot t_1)\parallel t_2$ or $t=t_1\parallel (\epsilon\cdot t_2)$ during modeling phase, and then $effct(t,s)$ becomes sufficiently deterministic.
\end{proof}

\subsubsection{Recursion}{\label{recg}}

In this subsection, we introduce recursion to capture infinite processes based on $APTC_G$. In the following, $E,F,G$ are recursion specifications, $X,Y,Z$ are recursive variables.

\begin{center}
    \begin{table}
        $$\frac{\langle t_i(\langle X_1|E\rangle,\cdots,\langle X_n|E\rangle),s\rangle\xrightarrow{\{e_1,\cdots,e_k\}}\langle\surd,s'\rangle}{\langle\langle X_i|E\rangle,s\rangle\xrightarrow{\{e_1,\cdots,e_k\}}\langle\surd,s'\rangle}$$
        $$\frac{\langle t_i(\langle X_1|E\rangle,\cdots,\langle X_n|E\rangle),s\rangle\xrightarrow{\{e_1,\cdots,e_k\}} \langle y,s'\rangle}{\langle\langle X_i|E\rangle,s\rangle\xrightarrow{\{e_1,\cdots,e_k\}} \langle y,s'\rangle}$$
        \caption{Transition rules of guarded recursion}
        \label{TRForGRG}
    \end{table}
\end{center}

\begin{theorem}[Conservitivity of $APTC_G$ with guarded recursion]
$APTC_G$ with guarded recursion is a conservative extension of $APTC_G$.
\end{theorem}

\begin{proof}
Since the transition rules of $APTC_G$ are source-dependent, and the transition rules for guarded recursion in Table \ref{TRForGRG} contain only a fresh constant in their source, so the transition rules of $APTC_G$ with guarded recursion are a conservative extension of those of $APTC_G$.
\end{proof}

\begin{theorem}[Congruence theorem of $APTC_G$ with guarded recursion]
Truly concurrent bisimulation equivalences $\sim_{p}$, $\sim_s$ and $\sim_{hp}$ are all congruences with respect to $APTC_G$ with guarded recursion.
\end{theorem}

\begin{proof}
It follows the following two facts:
\begin{enumerate}
  \item in a guarded recursive specification, right-hand sides of its recursive equations can be adapted to the form by applications of the axioms in $APTC_G$ and replacing recursion variables by the right-hand sides of their recursive equations;
  \item truly concurrent bisimulation equivalences $\sim_{p}$, $\sim_s$ and $\sim_{hp}$ are all congruences with respect to all operators of $APTC_G$.
\end{enumerate}
\end{proof}

\begin{theorem}[Elimination theorem of $APTC_G$ with linear recursion]\label{ETRecursionG}
Each process term in $APTC_G$ with linear recursion is equal to a process term $\langle X_1|E\rangle$ with $E$ a linear recursive specification.
\end{theorem}

\begin{proof}
By applying structural induction with respect to term size, each process term $t_1$ in $APTC_G$ with linear recursion generates a process can be expressed in the form of equations

$$t_i=(a_{i11}\parallel\cdots\parallel a_{i1i_1})t_{i1}+\cdots+(a_{ik_i1}\parallel\cdots\parallel a_{ik_ii_k})t_{ik_i}+(b_{i11}\parallel\cdots\parallel b_{i1i_1})+\cdots+(b_{il_i1}\parallel\cdots\parallel b_{il_ii_l})$$

for $i\in\{1,\cdots,n\}$. Let the linear recursive specification $E$ consist of the recursive equations

$$X_i=(a_{i11}\parallel\cdots\parallel a_{i1i_1})X_{i1}+\cdots+(a_{ik_i1}\parallel\cdots\parallel a_{ik_ii_k})X_{ik_i}+(b_{i11}\parallel\cdots\parallel b_{i1i_1})+\cdots+(b_{il_i1}\parallel\cdots\parallel b_{il_ii_l})$$

for $i\in\{1,\cdots,n\}$. Replacing $X_i$ by $t_i$ for $i\in\{1,\cdots,n\}$ is a solution for $E$, $RSP$ yields $t_1=\langle X_1|E\rangle$.
\end{proof}

\begin{theorem}[Soundness of $APTC_G$ with guarded recursion]\label{SAPTC_GRG}
Let $x$ and $y$ be $APTC_G$ with guarded recursion terms. If $APTC_G\textrm{ with guarded recursion}\vdash x=y$, then

(1) $x\sim_{s} y$.

(2) $x\sim_{p} y$.

(3) $x\sim_{hp} y$.
\end{theorem}

\begin{proof}
(1) Since step bisimulation $\sim_s$ is both an equivalent and a congruent relation with respect to $APTC_G$ with guarded recursion, we only need to check if each axiom in Table \ref{RDPRSP} is sound modulo step bisimulation equivalence. We leave them as exercises to the readers.

(2) Since pomset bisimulation $\sim_{p}$ is both an equivalent and a congruent relation with respect to the guarded recursion, we only need to check if each axiom in Table \ref{RDPRSP} is sound modulo pomset bisimulation equivalence. We leave them as exercises to the readers.

(3) Since hp-bisimulation $\sim_{hp}$ is both an equivalent and a congruent relation with respect to guarded recursion, we only need to check if each axiom in Table \ref{RDPRSP} is sound modulo hp-bisimulation equivalence. We leave them as exercises to the readers.
\end{proof}

\begin{theorem}[Completeness of $APTC_G$ with linear recursion]\label{CAPTC_GRG}
Let $p$ and $q$ be closed $APTC_G$ with linear recursion terms, then,

(1) if $p\sim_{s} q$ then $p=q$.

(2) if $p\sim_{p} q$ then $p=q$.

(3) if $p\sim_{hp} q$ then $p=q$.
\end{theorem}

\begin{proof}
Firstly, by the elimination theorem of $APTC_G$ with guarded recursion (see Theorem \ref{ETRecursionG}), we know that each process term in $APTC_G$ with linear recursion is equal to a process term $\langle X_1|E\rangle$ with $E$ a linear recursive specification. And for the simplicity, without loss of generalization, we do not consider empty event $\epsilon$, just because recursion with $\epsilon$ are similar to that with silent event $\tau$, please refer to the proof of Theorem \ref{CAPTCTAU} for details.

It remains to prove the following cases.

(1) If $\langle X_1|E_1\rangle \sim_s \langle Y_1|E_2\rangle$ for linear recursive specification $E_1$ and $E_2$, then $\langle X_1|E_1\rangle = \langle Y_1|E_2\rangle$.

Let $E_1$ consist of recursive equations $X=t_X$ for $X\in \mathcal{X}$ and $E_2$
consists of recursion equations $Y=t_Y$ for $Y\in\mathcal{Y}$. Let the linear recursive specification $E$ consist of recursion equations $Z_{XY}=t_{XY}$, and $\langle X|E_1\rangle\sim_s\langle Y|E_2\rangle$, and $t_{XY}$ consists of the following summands:

\begin{enumerate}
  \item $t_{XY}$ contains a summand $(a_1\parallel\cdots\parallel a_m)Z_{X'Y'}$ iff $t_X$ contains the summand $(a_1\parallel\cdots\parallel a_m)X'$ and $t_Y$ contains the summand $(a_1\parallel\cdots\parallel a_m)Y'$ such that $\langle X'|E_1\rangle\sim_s\langle Y'|E_2\rangle$;
  \item $t_{XY}$ contains a summand $b_1\parallel\cdots\parallel b_n$ iff $t_X$ contains the summand $b_1\parallel\cdots\parallel b_n$ and $t_Y$ contains the summand $b_1\parallel\cdots\parallel b_n$.
\end{enumerate}

Let $\sigma$ map recursion variable $X$ in $E_1$ to $\langle X|E_1\rangle$, and let $\pi$ map recursion variable $Z_{XY}$ in $E$ to $\langle X|E_1\rangle$. So, $\sigma((a_1\parallel\cdots\parallel a_m)X')\equiv(a_1\parallel\cdots\parallel a_m)\langle X'|E_1\rangle\equiv\pi((a_1\parallel\cdots\parallel a_m)Z_{X'Y'})$, so by $RDP$, we get $\langle X|E_1\rangle=\sigma(t_X)=\pi(t_{XY})$. Then by $RSP$, $\langle X|E_1\rangle=\langle Z_{XY}|E\rangle$, particularly, $\langle X_1|E_1\rangle=\langle Z_{X_1Y_1}|E\rangle$. Similarly, we can obtain $\langle Y_1|E_2\rangle=\langle Z_{X_1Y_1}|E\rangle$. Finally, $\langle X_1|E_1\rangle=\langle Z_{X_1Y_1}|E\rangle=\langle Y_1|E_2\rangle$, as desired.

(2) If $\langle X_1|E_1\rangle \sim_p \langle Y_1|E_2\rangle$ for linear recursive specification $E_1$ and $E_2$, then $\langle X_1|E_1\rangle = \langle Y_1|E_2\rangle$.

It can be proven similarly to (1), we omit it.

(3) If $\langle X_1|E_1\rangle \sim_{hp} \langle Y_1|E_2\rangle$ for linear recursive specification $E_1$ and $E_2$, then $\langle X_1|E_1\rangle = \langle Y_1|E_2\rangle$.

It can be proven similarly to (1), we omit it.
\end{proof}

\subsubsection{Abstraction}{\label{absg}}

To abstract away from the internal implementations of a program, and verify that the program exhibits the desired external behaviors, the silent step $\tau$ and abstraction operator $\tau_I$ are introduced, where $I\subseteq \mathbb{E}\cup G_{at}$ denotes the internal events or guards. The silent step $\tau$ represents the internal events or guards, when we consider the external behaviors of a process, $\tau$ steps can be removed, that is, $\tau$ steps must keep silent. The transition rule of $\tau$ is shown in Table \ref{TRForTauG}. In the following, let the atomic event $e$ range over $\mathbb{E}\cup\{\epsilon\}\cup\{\delta\}\cup\{\tau\}$, and $\phi$ range over $G\cup \{\tau\}$, and let the communication function $\gamma:\mathbb{E}\cup\{\tau\}\times \mathbb{E}\cup\{\tau\}\rightarrow \mathbb{E}\cup\{\delta\}$, with each communication involved $\tau$ resulting in $\delta$. We use $\tau(s)$ to denote $effect(\tau,s)$, for the fact that $\tau$ only change the state of internal data environment, that is, for the external data environments, $s=\tau(s)$.

\begin{center}
    \begin{table}
        $$\frac{}{\langle\tau,s\rangle\rightarrow\langle\surd,s\rangle}\textrm{ if }test(\tau,s)$$
        $$\frac{}{\langle\tau,s\rangle\xrightarrow{\tau}\langle\surd,\tau(s)\rangle}$$
        \caption{Transition rule of the silent step}
        \label{TRForTauG}
    \end{table}
\end{center}

In section \ref{os2}, we introduce $\tau$ into event structure, and also give the concept of weakly true concurrency. In this subsection, we give the concepts of rooted branching truly concurrent bisimulation equivalences, based on these concepts, we can design the axiom system of the silent step $\tau$ and the abstraction operator $\tau_I$.

\begin{definition}[Branching pomset, step bisimulation]\label{BPSBG}
Assume a special termination predicate $\downarrow$, and let $\surd$ represent a state with $\surd\downarrow$. Let $\mathcal{E}_1$, $\mathcal{E}_2$ be PESs. A branching pomset bisimulation is a relation $R\subseteq\langle\mathcal{C}(\mathcal{E}_1),S\rangle\times\langle\mathcal{C}(\mathcal{E}_2),S\rangle$, such that:
 \begin{enumerate}
   \item if $(\langle C_1,s\rangle,\langle C_2,s\rangle)\in R$, and $\langle C_1,s\rangle\xrightarrow{X}\langle C_1',s'\rangle$ then
   \begin{itemize}
     \item either $X\equiv \tau^*$, and $(\langle C_1',s'\rangle,\langle C_2,s\rangle)\in R$ with $s'\in \tau(s)$;
     \item or there is a sequence of (zero or more) $\tau$-transitions $\langle C_2,s\rangle\xrightarrow{\tau^*} \langle C_2^0,s^0\rangle$, such that $(\langle C_1,s\rangle,\langle C_2^0,s^0\rangle)\in R$ and $\langle C_2^0,s^0\rangle\xRightarrow{X}\langle C_2',s'\rangle$ with $(\langle C_1',s'\rangle,\langle C_2',s'\rangle)\in R$;
   \end{itemize}
   \item if $(\langle C_1,s\rangle,\langle C_2,s\rangle)\in R$, and $\langle C_2,s\rangle\xrightarrow{X}\langle C_2',s'\rangle$ then
   \begin{itemize}
     \item either $X\equiv \tau^*$, and $(\langle C_1,s\rangle,\langle C_2',s'\rangle)\in R$;
     \item or there is a sequence of (zero or more) $\tau$-transitions $\langle C_1,s\rangle\xrightarrow{\tau^*} \langle C_1^0,s^0\rangle$, such that $(\langle C_1^0,s^0\rangle,\langle C_2,s\rangle)\in R$ and $\langle C_1^0,s^0\rangle\xRightarrow{X}\langle C_1',s'\rangle$ with $(\langle C_1',s'\rangle,\langle C_2',s'\rangle)\in R$;
   \end{itemize}
   \item if $(\langle C_1,s\rangle,\langle C_2,s\rangle)\in R$ and $\langle C_1,s\rangle\downarrow$, then there is a sequence of (zero or more) $\tau$-transitions $\langle C_2,s\rangle\xrightarrow{\tau^*}\langle C_2^0,s^0\rangle$ such that $(\langle C_1,s\rangle,\langle C_2^0,s^0\rangle)\in R$ and $\langle C_2^0,s^0\rangle\downarrow$;
   \item if $(\langle C_1,s\rangle,\langle C_2,s\rangle)\in R$ and $\langle C_2,s\rangle\downarrow$, then there is a sequence of (zero or more) $\tau$-transitions $\langle C_1,s\rangle\xrightarrow{\tau^*}\langle C_1^0,s^0\rangle$ such that $(\langle C_1^0,s^0\rangle,\langle C_2,s\rangle)\in R$ and $\langle C_1^0,s^0\rangle\downarrow$.
 \end{enumerate}

We say that $\mathcal{E}_1$, $\mathcal{E}_2$ are branching pomset bisimilar, written $\mathcal{E}_1\approx_{bp}\mathcal{E}_2$, if there exists a branching pomset bisimulation $R$, such that $(\langle\emptyset,\emptyset\rangle,\langle\emptyset,\emptyset\rangle)\in R$.

By replacing pomset transitions with steps, we can get the definition of branching step bisimulation. When PESs $\mathcal{E}_1$ and $\mathcal{E}_2$ are branching step bisimilar, we write $\mathcal{E}_1\approx_{bs}\mathcal{E}_2$.
\end{definition}

\begin{definition}[Rooted branching pomset, step bisimulation]\label{RBPSBG}
Assume a special termination predicate $\downarrow$, and let $\surd$ represent a state with $\surd\downarrow$. Let $\mathcal{E}_1$, $\mathcal{E}_2$ be PESs. A rooted branching pomset bisimulation is a relation $R\subseteq\langle\mathcal{C}(\mathcal{E}_1),S\rangle\times\langle\mathcal{C}(\mathcal{E}_2),S\rangle$, such that:
 \begin{enumerate}
   \item if $(\langle C_1,s\rangle,\langle C_2,s\rangle)\in R$, and $\langle C_1,s\rangle\xrightarrow{X}\langle C_1',s'\rangle$ then $\langle C_2,s\rangle\xrightarrow{X}\langle C_2',s'\rangle$ with $\langle C_1',s'\rangle\approx_{bp}\langle C_2',s'\rangle$;
   \item if $(\langle C_1,s\rangle,\langle C_2,s\rangle)\in R$, and $\langle C_2,s\rangle\xrightarrow{X}\langle C_2',s'\rangle$ then $\langle C_1,s\rangle\xrightarrow{X}\langle C_1',s'\rangle$ with $\langle C_1',s'\rangle\approx_{bp}\langle C_2',s'\rangle$;
   \item if $(\langle C_1,s\rangle,\langle C_2,s\rangle)\in R$ and $\langle C_1,s\rangle\downarrow$, then $\langle C_2,s\rangle\downarrow$;
   \item if $(\langle C_1,s\rangle,\langle C_2,s\rangle)\in R$ and $\langle C_2,s\rangle\downarrow$, then $\langle C_1,s\rangle\downarrow$.
 \end{enumerate}

We say that $\mathcal{E}_1$, $\mathcal{E}_2$ are rooted branching pomset bisimilar, written $\mathcal{E}_1\approx_{rbp}\mathcal{E}_2$, if there exists a rooted branching pomset bisimulation $R$, such that $(\langle\emptyset,\emptyset\rangle,\langle\emptyset,\emptyset\rangle)\in R$.

By replacing pomset transitions with steps, we can get the definition of rooted branching step bisimulation. When PESs $\mathcal{E}_1$ and $\mathcal{E}_2$ are rooted branching step bisimilar, we write $\mathcal{E}_1\approx_{rbs}\mathcal{E}_2$.
\end{definition}

\begin{definition}[Branching (hereditary) history-preserving bisimulation]\label{BHHPBG}
Assume a special termination predicate $\downarrow$, and let $\surd$ represent a state with $\surd\downarrow$. A branching history-preserving (hp-) bisimulation is a weakly posetal relation $R\subseteq\langle\mathcal{C}(\mathcal{E}_1),S\rangle\overline{\times}\langle\mathcal{C}(\mathcal{E}_2),S\rangle$ such that:

 \begin{enumerate}
   \item if $(\langle C_1,s\rangle,f,\langle C_2,s\rangle)\in R$, and $\langle C_1,s\rangle\xrightarrow{e_1}\langle C_1',s'\rangle$ then
   \begin{itemize}
     \item either $e_1\equiv \tau$, and $(\langle C_1',s'\rangle,f[e_1\mapsto \tau],\langle C_2,s\rangle)\in R$;
     \item or there is a sequence of (zero or more) $\tau$-transitions $\langle C_2,s\rangle\xrightarrow{\tau^*} \langle C_2^0,s^0\rangle$, such that $(\langle C_1,s\rangle,f,\langle C_2^0,s^0\rangle)\in R$ and $\langle C_2^0,s^0\rangle\xrightarrow{e_2}\langle C_2',s'\rangle$ with $(\langle C_1',s'\rangle,f[e_1\mapsto e_2],\langle C_2',s'\rangle)\in R$;
   \end{itemize}
   \item if $(\langle C_1,s\rangle,f,\langle C_2,s\rangle)\in R$, and $\langle C_2,s\rangle\xrightarrow{e_2}\langle C_2',s'\rangle$ then
   \begin{itemize}
     \item either $e_2\equiv \tau$, and $(\langle C_1,s\rangle,f[e_2\mapsto \tau],\langle C_2',s'\rangle)\in R$;
     \item or there is a sequence of (zero or more) $\tau$-transitions $\langle C_1,s\rangle\xrightarrow{\tau^*} \langle C_1^0,s^0\rangle$, such that $(\langle C_1^0,s^0\rangle,f,\langle C_2,s\rangle)\in R$ and $\langle C_1^0,s^0\rangle\xrightarrow{e_1}\langle C_1',s'\rangle$ with $(\langle C_1',s'\rangle,f[e_2\mapsto e_1],\langle C_2',s'\rangle)\in R$;
   \end{itemize}
   \item if $(\langle C_1,s\rangle,f,\langle C_2,s\rangle)\in R$ and $\langle C_1,s\rangle\downarrow$, then there is a sequence of (zero or more) $\tau$-transitions $\langle C_2,s\rangle\xrightarrow{\tau^*}\langle C_2^0,s^0\rangle$ such that $(\langle C_1,s\rangle,f,\langle C_2^0,s^0\rangle)\in R$ and $\langle C_2^0,s^0\rangle\downarrow$;
   \item if $(\langle C_1,s\rangle,f,\langle C_2,s\rangle)\in R$ and $\langle C_2,s\rangle\downarrow$, then there is a sequence of (zero or more) $\tau$-transitions $\langle C_1,s\rangle\xrightarrow{\tau^*}\langle C_1^0,s^0\rangle$ such that $(\langle C_1^0,s^0\rangle,f,\langle C_2,s\rangle)\in R$ and $\langle C_1^0,s^0\rangle\downarrow$.
 \end{enumerate}

$\mathcal{E}_1,\mathcal{E}_2$ are branching history-preserving (hp-)bisimilar and are written $\mathcal{E}_1\approx_{bhp}\mathcal{E}_2$ if there exists a branching hp-bisimulation $R$ such that $(\langle\emptyset,\emptyset\rangle,\emptyset,\langle\emptyset,\emptyset\rangle)\in R$.

A branching hereditary history-preserving (hhp-)bisimulation is a downward closed branching hp-bisimulation. $\mathcal{E}_1,\mathcal{E}_2$ are branching hereditary history-preserving (hhp-)bisimilar and are written $\mathcal{E}_1\approx_{bhhp}\mathcal{E}_2$.
\end{definition}

\begin{definition}[Rooted branching (hereditary) history-preserving bisimulation]\label{RBHHPBG}
Assume a special termination predicate $\downarrow$, and let $\surd$ represent a state with $\surd\downarrow$. A rooted branching history-preserving (hp-) bisimulation is a weakly posetal relation $R\subseteq\langle\mathcal{C}(\mathcal{E}_1),S\rangle\overline{\times}\langle\mathcal{C}(\mathcal{E}_2),S\rangle$ such that:

 \begin{enumerate}
   \item if $(\langle C_1,s\rangle,f,\langle C_2,s\rangle)\in R$, and $\langle C_1,s\rangle\xrightarrow{e_1}\langle C_1',s'\rangle$, then $\langle C_2,s\rangle\xrightarrow{e_2}\langle C_2',s'\rangle$ with $\langle C_1',s'\rangle\approx_{bhp}\langle C_2',s'\rangle$;
   \item if $(\langle C_1,s\rangle,f,\langle C_2,s\rangle)\in R$, and $\langle C_2,s\rangle\xrightarrow{e_2}\langle C_2',s'\rangle$, then $\langle C_1,s\rangle\xrightarrow{e_1}\langle C_1',s'\rangle$ with $\langle C_1',s'\rangle\approx_{bhp}\langle C_2',s'\rangle$;
   \item if $(\langle C_1,s\rangle,f,\langle C_2,s\rangle)\in R$ and $\langle C_1,s\rangle\downarrow$, then $\langle C_2,s\rangle\downarrow$;
   \item if $(\langle C_1,s\rangle,f,\langle C_2,s\rangle)\in R$ and $\langle C_2,s\rangle\downarrow$, then $\langle C_1,s\rangle\downarrow$.
 \end{enumerate}

$\mathcal{E}_1,\mathcal{E}_2$ are rooted branching history-preserving (hp-)bisimilar and are written $\mathcal{E}_1\approx_{rbhp}\mathcal{E}_2$ if there exists a rooted branching hp-bisimulation $R$ such that $(\langle\emptyset,\emptyset\rangle,\emptyset,\langle\emptyset,\emptyset\rangle)\in R$.

A rooted branching hereditary history-preserving (hhp-)bisimulation is a downward closed rooted branching hp-bisimulation. $\mathcal{E}_1,\mathcal{E}_2$ are rooted branching hereditary history-preserving (hhp-)bisimilar and are written $\mathcal{E}_1\approx_{rbhhp}\mathcal{E}_2$.
\end{definition}

\begin{definition}[Guarded linear recursive specification]\label{GLRSG}
A linear recursive specification $E$ is guarded if there does not exist an infinite sequence of $\tau$-transitions $\langle X|E\rangle\xrightarrow{\tau}\langle X'|E\rangle\xrightarrow{\tau}\langle X''|E\rangle\xrightarrow{\tau}\cdots$, and there does not exist an infinite sequence of $\epsilon$-transitions $\langle X|E\rangle\rightarrow\langle X'|E\rangle\rightarrow\langle X''|E\rangle\rightarrow\cdots$.
\end{definition}

\begin{theorem}[Conservitivity of $APTC_G$ with silent step and guarded linear recursion]
$APTC_G$ with silent step and guarded linear recursion is a conservative extension of $APTC_G$ with linear recursion.
\end{theorem}

\begin{proof}
Since the transition rules of $APTC_G$ with linear recursion are source-dependent, and the transition rules for silent step in Table \ref{TRForTauG} contain only a fresh constant $\tau$ in their source, so the transition rules of $APTC_G$ with silent step and guarded linear recursion is a conservative extension of those of $APTC_G$ with linear recursion.
\end{proof}

\begin{theorem}[Congruence theorem of $APTC_G$ with silent step and guarded linear recursion]
Rooted branching truly concurrent bisimulation equivalences $\approx_{rbp}$, $\approx_{rbs}$ and $\approx_{rbhp}$ are all congruences with respect to $APTC_G$ with silent step and guarded linear recursion.
\end{theorem}

\begin{proof}
It follows the following three facts:
\begin{enumerate}
  \item in a guarded linear recursive specification, right-hand sides of its recursive equations can be adapted to the form by applications of the axioms in $APTC_G$ and replacing recursion variables by the right-hand sides of their recursive equations;
  \item truly concurrent bisimulation equivalences $\sim_{p}$, $\sim_s$ and $\sim_{hp}$ are all congruences with respect to all operators of $APTC_G$, while truly concurrent bisimulation equivalences $\sim_{p}$, $\sim_s$ and $\sim_{hp}$ imply the corresponding rooted branching truly concurrent bisimulations $\approx_{rbp}$, $\approx_{rbs}$ and $\approx_{rbhp}$ (see Proposition \ref{WSCBE}), so rooted branching truly concurrent bisimulations $\approx_{rbp}$, $\approx_{rbs}$ and $\approx_{rbhp}$ are all congruences with respect to all operators of $APTC_G$;
  \item While $\mathbb{E}$ is extended to $\mathbb{E}\cup\{\tau\}$, and $G$ is extended to $G\cup\{\tau\}$, it can be proved that rooted branching truly concurrent bisimulations $\approx_{rbp}$, $\approx_{rbs}$ and $\approx_{rbhp}$ are all congruences with respect to all operators of $APTC_G$, we omit it.
\end{enumerate}
\end{proof}

We design the axioms for the silent step $\tau$ in Table \ref{AxiomsForTauG}.

\begin{center}
\begin{table}
  \begin{tabular}{@{}ll@{}}
\hline No. &Axiom\\
  $B1$ & $e\cdot\tau=e$\\
  $B2$ & $e\cdot(\tau\cdot(x+y)+x)=e\cdot(x+y)$\\
  $B3$ & $x\parallel\tau=x$\\
  $G25$ & $\phi\cdot\tau=\phi$\\
  $G26$ & $\phi\cdot(\tau\cdot(x+y)+x)=\phi\cdot(x+y)$\\
\end{tabular}
\caption{Axioms of silent step}
\label{AxiomsForTauG}
\end{table}
\end{center}

\begin{theorem}[Elimination theorem of $APTC_G$ with silent step and guarded linear recursion]\label{ETTauG}
Each process term in $APTC_G$ with silent step and guarded linear recursion is equal to a process term $\langle X_1|E\rangle$ with $E$ a guarded linear recursive specification.
\end{theorem}

\begin{proof}
By applying structural induction with respect to term size, each process term $t_1$ in $APTC_G$ with silent step and guarded linear recursion generates a process can be expressed in the form of equations

$$t_i=(a_{i11}\parallel\cdots\parallel a_{i1i_1})t_{i1}+\cdots+(a_{ik_i1}\parallel\cdots\parallel a_{ik_ii_k})t_{ik_i}+(b_{i11}\parallel\cdots\parallel b_{i1i_1})+\cdots+(b_{il_i1}\parallel\cdots\parallel b_{il_ii_l})$$

for $i\in\{1,\cdots,n\}$. Let the linear recursive specification $E$ consist of the recursive equations

$$X_i=(a_{i11}\parallel\cdots\parallel a_{i1i_1})X_{i1}+\cdots+(a_{ik_i1}\parallel\cdots\parallel a_{ik_ii_k})X_{ik_i}+(b_{i11}\parallel\cdots\parallel b_{i1i_1})+\cdots+(b_{il_i1}\parallel\cdots\parallel b_{il_ii_l})$$

for $i\in\{1,\cdots,n\}$. Replacing $X_i$ by $t_i$ for $i\in\{1,\cdots,n\}$ is a solution for $E$, $RSP$ yields $t_1=\langle X_1|E\rangle$.
\end{proof}

\begin{theorem}[Soundness of $APTC_G$ with silent step and guarded linear recursion]\label{SAPTC_GTAUG}
Let $x$ and $y$ be $APTC_G$ with silent step and guarded linear recursion terms. If $APTC_G$ with silent step and guarded linear recursion $\vdash x=y$, then

(1) $x\approx_{rbs} y$.

(2) $x\approx_{rbp} y$.

(3) $x\approx_{rbhp} y$.
\end{theorem}

\begin{proof}
(1) Since rooted branching step bisimulation $\approx_{rbs}$ is both an equivalent and a congruent relation with respect to $APTC_G$ with silent step and guarded linear recursion, we only need to check if each axiom in Table \ref{AxiomsForTauG} is sound modulo rooted branching step bisimulation equivalence. We leave them as exercises to the readers.

(2) Since rooted branching pomset bisimulation $\approx_{rbp}$ is both an equivalent and a congruent relation with respect to $APTC_G$ with silent step and guarded linear recursion, we only need to check if each axiom in Table \ref{AxiomsForTauG} is sound modulo rooted branching pomset bisimulation $\approx_{rbp}$. We leave them as exercises to the readers.

(3) Since rooted branching hp-bisimulation $\approx_{rbhp}$ is both an equivalent and a congruent relation with respect to $APTC_G$ with silent step and guarded linear recursion, we only need to check if each axiom in Table \ref{AxiomsForTauG} is sound modulo rooted branching hp-bisimulation $\approx_{rbhp}$. We leave them as exercises to the readers.
\end{proof}

\begin{theorem}[Completeness of $APTC_G$ with silent step and guarded linear recursion]\label{CAPTC_GTAUG}
Let $p$ and $q$ be closed $APTC_G$ with silent step and guarded linear recursion terms, then,

(1) if $p\approx_{rbs} q$ then $p=q$.

(2) if $p\approx_{rbp} q$ then $p=q$.

(3) if $p\approx_{rbhp} q$ then $p=q$.
\end{theorem}

\begin{proof}
Firstly, by the elimination theorem of $APTC_G$ with silent step and guarded linear recursion (see Theorem \ref{ETTauG}), we know that each process term in $APTC_G$ with silent step and guarded linear recursion is equal to a process term $\langle X_1|E\rangle$ with $E$ a guarded linear recursive specification.

It remains to prove the following cases.

(1) If $\langle X_1|E_1\rangle \approx_{rbs} \langle Y_1|E_2\rangle$ for guarded linear recursive specification $E_1$ and $E_2$, then $\langle X_1|E_1\rangle = \langle Y_1|E_2\rangle$.

Firstly, the recursive equation $W=\tau+\cdots+\tau$ with $W\nequiv X_1$ in $E_1$ and $E_2$, can be removed, and the corresponding summands $aW$ are replaced by $a$, to get $E_1'$ and $E_2'$, by use of the axioms $RDP$, $A3$ and $B1$, and $\langle X|E_1\rangle = \langle X|E_1'\rangle$, $\langle Y|E_2\rangle = \langle Y|E_2'\rangle$.

Let $E_1$ consists of recursive equations $X=t_X$ for $X\in \mathcal{X}$ and $E_2$
consists of recursion equations $Y=t_Y$ for $Y\in\mathcal{Y}$, and are not the form $\tau+\cdots+\tau$. Let the guarded linear recursive specification $E$ consists of recursion equations $Z_{XY}=t_{XY}$, and $\langle X|E_1\rangle\approx_{rbs}\langle Y|E_2\rangle$, and $t_{XY}$ consists of the following summands:

\begin{enumerate}
  \item $t_{XY}$ contains a summand $(a_1\parallel\cdots\parallel a_m)Z_{X'Y'}$ iff $t_X$ contains the summand $(a_1\parallel\cdots\parallel a_m)X'$ and $t_Y$ contains the summand $(a_1\parallel\cdots\parallel a_m)Y'$ such that $\langle X'|E_1\rangle\approx_{rbs}\langle Y'|E_2\rangle$;
  \item $t_{XY}$ contains a summand $b_1\parallel\cdots\parallel b_n$ iff $t_X$ contains the summand $b_1\parallel\cdots\parallel b_n$ and $t_Y$ contains the summand $b_1\parallel\cdots\parallel b_n$;
  \item $t_{XY}$ contains a summand $\tau Z_{X'Y}$ iff $XY\nequiv X_1Y_1$, $t_X$ contains the summand $\tau X'$, and $\langle X'|E_1\rangle\approx_{rbs}\langle Y|E_2\rangle$;
  \item $t_{XY}$ contains a summand $\tau Z_{XY'}$ iff $XY\nequiv X_1Y_1$, $t_Y$ contains the summand $\tau Y'$, and $\langle X|E_1\rangle\approx_{rbs}\langle Y'|E_2\rangle$.
\end{enumerate}

Since $E_1$ and $E_2$ are guarded, $E$ is guarded. Constructing the process term $u_{XY}$ consist of the following summands:

\begin{enumerate}
  \item $u_{XY}$ contains a summand $(a_1\parallel\cdots\parallel a_m)\langle X'|E_1\rangle$ iff $t_X$ contains the summand $(a_1\parallel\cdots\parallel a_m)X'$ and $t_Y$ contains the summand $(a_1\parallel\cdots\parallel a_m)Y'$ such that $\langle X'|E_1\rangle\approx_{rbs}\langle Y'|E_2\rangle$;
  \item $u_{XY}$ contains a summand $b_1\parallel\cdots\parallel b_n$ iff $t_X$ contains the summand $b_1\parallel\cdots\parallel b_n$ and $t_Y$ contains the summand $b_1\parallel\cdots\parallel b_n$;
  \item $u_{XY}$ contains a summand $\tau \langle X'|E_1\rangle$ iff $XY\nequiv X_1Y_1$, $t_X$ contains the summand $\tau X'$, and $\langle X'|E_1\rangle\approx_{rbs}\langle Y|E_2\rangle$.
\end{enumerate}

Let the process term $s_{XY}$ be defined as follows:

\begin{enumerate}
  \item $s_{XY}\triangleq\tau\langle X|E_1\rangle + u_{XY}$ iff $XY\nequiv X_1Y_1$, $t_Y$ contains the summand $\tau Y'$, and $\langle X|E_1\rangle\approx_{rbs}\langle Y'|E_2\rangle$;
  \item $s_{XY}\triangleq\langle X|E_1\rangle$, otherwise.
\end{enumerate}

So, $\langle X|E_1\rangle=\langle X|E_1\rangle+u_{XY}$, and $(a_1\parallel\cdots\parallel a_m)(\tau\langle X|E_1\rangle+u_{XY})=(a_1\parallel\cdots\parallel a_m)((\tau\langle X|E_1\rangle+u_{XY})+u_{XY})=(a_1\parallel\cdots\parallel a_m)(\langle X|E_1\rangle+u_{XY})=(a_1\parallel\cdots\parallel a_m)\langle X|E_1\rangle$, hence, $(a_1\parallel\cdots\parallel a_m)s_{XY}=(a_1\parallel\cdots\parallel a_m)\langle X|E_1\rangle$.

Let $\sigma$ map recursion variable $X$ in $E_1$ to $\langle X|E_1\rangle$, and let $\pi$ map recursion variable $Z_{XY}$ in $E$ to $s_{XY}$. It is sufficient to prove $s_{XY}=\pi(t_{XY})$ for recursion variables $Z_{XY}$ in $E$. Either $XY\equiv X_1Y_1$ or $XY\nequiv X_1Y_1$, we all can get $s_{XY}=\pi(t_{XY})$. So, $s_{XY}=\langle Z_{XY}|E\rangle$ for recursive variables $Z_{XY}$ in $E$ is a solution for $E$. Then by $RSP$, particularly, $\langle X_1|E_1\rangle=\langle Z_{X_1Y_1}|E\rangle$. Similarly, we can obtain $\langle Y_1|E_2\rangle=\langle Z_{X_1Y_1}|E\rangle$. Finally, $\langle X_1|E_1\rangle=\langle Z_{X_1Y_1}|E\rangle=\langle Y_1|E_2\rangle$, as desired.

(2) If $\langle X_1|E_1\rangle \approx_{rbp} \langle Y_1|E_2\rangle$ for guarded linear recursive specification $E_1$ and $E_2$, then $\langle X_1|E_1\rangle = \langle Y_1|E_2\rangle$.

It can be proven similarly to (1), we omit it.

(3) If $\langle X_1|E_1\rangle \approx_{rbhb} \langle Y_1|E_2\rangle$ for guarded linear recursive specification $E_1$ and $E_2$, then $\langle X_1|E_1\rangle = \langle Y_1|E_2\rangle$.

It can be proven similarly to (1), we omit it.
\end{proof}

The unary abstraction operator $\tau_I$ ($I\subseteq \mathbb{E}\cup G_{at}$) renames all atomic events or atomic guards in $I$ into $\tau$. $APTC_G$ with silent step and abstraction operator is called $APTC_{G_{\tau}}$. The transition rules of operator $\tau_I$ are shown in Table \ref{TRForAbstractionG}.

\begin{center}
    \begin{table}
        $$\frac{\langle x,s\rangle\xrightarrow{e}\langle\surd,s'\rangle}{\langle\tau_I(x),s\rangle\xrightarrow{e}\langle\surd,s'\rangle}\quad e\notin I
        \quad\quad\frac{\langle x,s\rangle\xrightarrow{e}\langle x',s'\rangle}{\langle\tau_I(x),s\rangle\xrightarrow{e}\langle\tau_I(x'),s'\rangle}\quad e\notin I$$

        $$\frac{\langle x,s\rangle\xrightarrow{e}\langle\surd,s'\rangle}{\langle\tau_I(x),s\rangle\xrightarrow{\tau}\langle\surd,\tau(s)\rangle}\quad e\in I
        \quad\quad\frac{\langle x,s\rangle\xrightarrow{e}\langle x',s'\rangle}{\langle\tau_I(x),s\rangle\xrightarrow{\tau}\langle\tau_I(x'),\tau(s)\rangle}\quad e\in I$$
        \caption{Transition rule of the abstraction operator}
        \label{TRForAbstractionG}
    \end{table}
\end{center}

\begin{theorem}[Conservitivity of $APTC_{G_{\tau}}$ with guarded linear recursion]
$APTC_{G_{\tau}}$ with guarded linear recursion is a conservative extension of $APTC_G$ with silent step and guarded linear recursion.
\end{theorem}

\begin{proof}
Since the transition rules of $APTC_G$ with silent step and guarded linear recursion are source-dependent, and the transition rules for abstraction operator in Table \ref{TRForAbstractionG} contain only a fresh operator $\tau_I$ in their source, so the transition rules of $APTC_{G_{\tau}}$ with guarded linear recursion is a conservative extension of those of $APTC_G$ with silent step and guarded linear recursion.
\end{proof}

\begin{theorem}[Congruence theorem of $APTC_{G_{\tau}}$ with guarded linear recursion]
Rooted branching truly concurrent bisimulation equivalences $\approx_{rbp}$, $\approx_{rbs}$ and $\approx_{rbhp}$ are all congruences with respect to $APTC_{G_{\tau}}$ with guarded linear recursion.
\end{theorem}

\begin{proof}
(1) It is easy to see that rooted branching pomset bisimulation is an equivalent relation on $APTC_{G_{\tau}}$ with guarded linear recursion terms, we only need to prove that $\approx_{rbp}$ is preserved by the operators $\tau_I$. It is trivial and we leave the proof as an exercise for the readers.

(2) The cases of rooted branching step bisimulation $\approx_{rbs}$, rooted branching hp-bisimulation $\approx_{rbhp}$ can be proven similarly, we omit them.
\end{proof}

We design the axioms for the abstraction operator $\tau_I$ in Table \ref{AxiomsForAbstractionG}.

\begin{center}
\begin{table}
  \begin{tabular}{@{}ll@{}}
\hline No. &Axiom\\
  $TI1$ & $e\notin I\quad \tau_I(e)=e$\\
  $TI2$ & $e\in I\quad \tau_I(e)=\tau$\\
  $TI3$ & $\tau_I(\delta)=\delta$\\
  $TI4$ & $\tau_I(x+y)=\tau_I(x)+\tau_I(y)$\\
  $TI5$ & $\tau_I(x\cdot y)=\tau_I(x)\cdot\tau_I(y)$\\
  $TI6$ & $\tau_I(x\parallel y)=\tau_I(x)\parallel\tau_I(y)$\\
  $G27$ & $\phi\notin I\quad \tau_I(\phi)=\phi$\\
  $G28$ & $\phi\in I\quad \tau_I(\phi)=\tau$\\
\end{tabular}
\caption{Axioms of abstraction operator}
\label{AxiomsForAbstractionG}
\end{table}
\end{center}

\begin{theorem}[Soundness of $APTC_{G_{\tau}}$ with guarded linear recursion]\label{SAPTC_GABSG}
Let $x$ and $y$ be $APTC_{G_{\tau}}$ with guarded linear recursion terms. If $APTC_{G_{\tau}}$ with guarded linear recursion $\vdash x=y$, then

(1) $x\approx_{rbs} y$.

(2) $x\approx_{rbp} y$.

(3) $x\approx_{rbhp} y$.
\end{theorem}

\begin{proof}
(1) Since rooted branching step bisimulation $\approx_{rbs}$ is both an equivalent and a congruent relation with respect to $APTC_{G_{\tau}}$ with guarded linear recursion, we only need to check if each axiom in Table \ref{AxiomsForAbstractionG} is sound modulo rooted branching step bisimulation equivalence. We leave them as exercises to the readers.

(2) Since rooted branching pomset bisimulation $\approx_{rbp}$ is both an equivalent and a congruent relation with respect to $APTC_{G_{\tau}}$ with guarded linear recursion, we only need to check if each axiom in Table \ref{AxiomsForAbstractionG} is sound modulo rooted branching pomset bisimulation $\approx_{rbp}$. We leave them as exercises to the readers.

(3) Since rooted branching hp-bisimulation $\approx_{rbhp}$ is both an equivalent and a congruent relation with respect to $APTC_{G_{\tau}}$ with guarded linear recursion, we only need to check if each axiom in Table \ref{AxiomsForAbstractionG} is sound modulo rooted branching hp-bisimulation $\approx_{rbhp}$. We leave them as exercises to the readers.
\end{proof}

Though $\tau$-loops are prohibited in guarded linear recursive specifications (see Definition \ref{GLRSG}) in a specifiable way, they can be constructed using the abstraction operator, for example, there exist $\tau$-loops in the process term $\tau_{\{a\}}(\langle X|X=aX\rangle)$. To avoid $\tau$-loops caused by $\tau_I$ and ensure fairness, the concept of cluster and $CFAR$ (Cluster Fair Abstraction Rule) \cite{CFAR} are still needed.

\begin{theorem}[Completeness of $APTC_{G_{\tau}}$ with guarded linear recursion and $CFAR$]\label{CCFARG}
Let $p$ and $q$ be closed $APTC_{G_{\tau}}$ with guarded linear recursion and $CFAR$ terms, then,

(1) if $p\approx_{rbs} q$ then $p=q$.

(2) if $p\approx_{rbp} q$ then $p=q$.

(3) if $p\approx_{rbhp} q$ then $p=q$.
\end{theorem}

\begin{proof}
(1) For the case of rooted branching step bisimulation, the proof is following.

Firstly, in the proof the Theorem \ref{CAPTC_GTAUG}, we know that each process term $p$ in $APTC_G$ with silent step and guarded linear recursion is equal to a process term $\langle X_1|E\rangle$ with $E$ a guarded linear recursive specification. And we prove if $\langle X_1|E_1\rangle\approx_{rbs}\langle Y_1|E_2\rangle$, then $\langle X_1|E_1\rangle=\langle Y_1|E_2\rangle$

The only new case is $p\equiv\tau_I(q)$. Let $q=\langle X|E\rangle$ with $E$ a guarded linear recursive specification, so $p=\tau_I(\langle X|E\rangle)$. Then the collection of recursive variables in $E$ can be divided into its clusters $C_1,\cdots,C_N$ for $I$. Let

$$(a_{1i1}\parallel\cdots\parallel a_{k_{i1}i1}) Y_{i1}+\cdots+(a_{1im_i}\parallel\cdots\parallel a_{k_{im_i}im_i}) Y_{im_i}+b_{1i1}\parallel\cdots\parallel b_{l_{i1}i1}+\cdots+b_{1im_i}\parallel\cdots\parallel b_{l_{im_i}im_i}$$

be the conflict composition of exits for the cluster $C_i$, with $i\in\{1,\cdots,N\}$.

For $Z\in C_i$ with $i\in\{1,\cdots,N\}$, we define

$$s_Z\triangleq (\hat{a_{1i1}}\parallel\cdots\parallel \hat{a_{k_{i1}i1}}) \tau_I(\langle Y_{i1}|E\rangle)+\cdots+(\hat{a_{1im_i}}\parallel\cdots\parallel \hat{a_{k_{im_i}im_i}}) \tau_I(\langle Y_{im_i}|E\rangle)+\hat{b_{1i1}}\parallel\cdots\parallel \hat{b_{l_{i1}i1}}+\cdots+\hat{b_{1im_i}}\parallel\cdots\parallel \hat{b_{l_{im_i}im_i}}$$

For $Z\in C_i$ and $a_1,\cdots,a_j\in \mathbb{E}\cup\{\tau\}$ with $j\in\mathbb{N}$, we have

$(a_1\parallel\cdots\parallel a_j)\tau_I(\langle Z|E\rangle)$

$=(a_1\parallel\cdots\parallel a_j)\tau_I((a_{1i1}\parallel\cdots\parallel a_{k_{i1}i1}) \langle Y_{i1}|E\rangle+\cdots+(a_{1im_i}\parallel\cdots\parallel a_{k_{im_i}im_i}) \langle Y_{im_i}|E\rangle+b_{1i1}\parallel\cdots\parallel b_{l_{i1}i1}+\cdots+b_{1im_i}\parallel\cdots\parallel b_{l_{im_i}im_i})$

$=(a_1\parallel\cdots\parallel a_j)s_Z$

Let the linear recursive specification $F$ contain the same recursive variables as $E$, for $Z\in C_i$, $F$ contains the following recursive equation

$$Z=(\hat{a_{1i1}}\parallel\cdots\parallel \hat{a_{k_{i1}i1}}) Y_{i1}+\cdots+(\hat{a_{1im_i}}\parallel\cdots\parallel \hat{a_{k_{im_i}im_i}})  Y_{im_i}+\hat{b_{1i1}}\parallel\cdots\parallel \hat{b_{l_{i1}i1}}+\cdots+\hat{b_{1im_i}}\parallel\cdots\parallel \hat{b_{l_{im_i}im_i}}$$

It is easy to see that there is no sequence of one or more $\tau$-transitions from $\langle Z|F\rangle$ to itself, so $F$ is guarded.

For

$$s_Z=(\hat{a_{1i1}}\parallel\cdots\parallel \hat{a_{k_{i1}i1}}) Y_{i1}+\cdots+(\hat{a_{1im_i}}\parallel\cdots\parallel \hat{a_{k_{im_i}im_i}}) Y_{im_i}+\hat{b_{1i1}}\parallel\cdots\parallel \hat{b_{l_{i1}i1}}+\cdots+\hat{b_{1im_i}}\parallel\cdots\parallel \hat{b_{l_{im_i}im_i}}$$

is a solution for $F$. So, $(a_1\parallel\cdots\parallel a_j)\tau_I(\langle Z|E\rangle)=(a_1\parallel\cdots\parallel a_j)s_Z=(a_1\parallel\cdots\parallel a_j)\langle Z|F\rangle$.

So,

$$\langle Z|F\rangle=(\hat{a_{1i1}}\parallel\cdots\parallel \hat{a_{k_{i1}i1}}) \langle Y_{i1}|F\rangle+\cdots+(\hat{a_{1im_i}}\parallel\cdots\parallel \hat{a_{k_{im_i}im_i}}) \langle Y_{im_i}|F\rangle+\hat{b_{1i1}}\parallel\cdots\parallel \hat{b_{l_{i1}i1}}+\cdots+\hat{b_{1im_i}}\parallel\cdots\parallel \hat{b_{l_{im_i}im_i}}$$

Hence, $\tau_I(\langle X|E\rangle=\langle Z|F\rangle)$, as desired.

(2) For the case of rooted branching pomset bisimulation, it can be proven similarly to (1), we omit it.

(3) For the case of rooted branching hp-bisimulation, it can be proven similarly to (1), we omit it.
\end{proof}

\subsubsection{Hoare Logic for $APTC_G$}{\label{hl}}

In this section, we introduce Hoare logic for $APTC_G$. We do not introduce the preliminaries of Hoare logic, please refer to \cite{HL} for details.

A partial correct formula has the form

$$\{pre\}P\{post\}$$

where $pre$ are preconditions, $post$ are postconditions, and $P$ are programs. $\{pre\}P\{post\}$ means that $pre$ hold, then $P$ are executed and $post$ hold. We take the guards $G$ of $APTC_G$ as the language of conditions, and closed terms of $APTC_G$ as programs. For some condition $\alpha\in G$ and some data state $s\in S$, we denote $S\models \alpha[s]$ for $\langle \alpha,s\rangle\rightarrow\langle\surd,s\rangle$, and $S\models\alpha$ for $\forall s\in S, S\models \alpha[s]$, $S\models \{\alpha\}p\{\beta\}$ for all $s\in S$, $\mu\subseteq \mathbb{E}\cup G$, $S\models \alpha[s]$, $\langle p,s\rangle\xrightarrow{\mu}\langle p',s'\rangle$, $S\models\beta[s']$ with $s'\in S$. It is obvious that $S\models \{\alpha\}p\{\beta\}\Leftrightarrow \alpha p\approx_{rbp}(\approx_{rbs},\approx_{rbhp})\alpha p\beta$.

We design a proof system $H$ to deriving partial correct formulas over terms of $APTC_G$ as Table \ref{H} shows. Let $\Gamma$ be a set of conditions and partial correct formulas, we denote $\Gamma\vdash\{\alpha\}t\{\beta\}$ iff we can derive $\{\alpha\}t\{\beta\}$ in $H$, note that $t$ does not need to be closed terms. And we write $\alpha\rightarrow \beta$ for $S\models \alpha\Rightarrow S\models\beta$.

\begin{center}
    \begin{table}
        $(H1)\quad\{wp(e,\alpha)\}e\{\alpha\}\textrm{ if }e\in\mathbb{E}$

        $(H2)\quad\{\alpha\}\phi\{\alpha\cdot\phi\}\textrm{ if }\phi\in G$

        $(H3)\quad\frac{\{\alpha\}t\{\beta\}\quad\{\alpha\}t'\{\beta\}}{\{\alpha\}t+t'\{\beta\}}$

        $(H4)\quad\frac{\{\alpha\}t\{\alpha'\}\quad\{\alpha'\}t'\{\beta\}}{\{\alpha\}t\cdot t'\{\beta\}}$

        $(H5)\quad\frac{\{\alpha\}t\{\alpha'\}\quad\{\beta\}t'\{\beta'\}}{\{\alpha\parallel\beta\}t\between t'\{\alpha'\parallel\beta'\}}$

        $(H6)\quad\frac{\{\alpha\}t\{\beta\}}{\{\alpha\}\Theta(t)\{\beta\}}$

        $(H7)\quad\frac{\{\alpha\}t\{\beta\}}{\{\alpha\}\partial_H(t)\{\beta\}}$

        $(H8)\quad\frac{\{\alpha\}t\{\beta\}}{\{\alpha\}\tau_I(t)\{\beta\}}$

        $(H9)\quad\frac{\alpha\rightarrow\alpha'\quad\{\alpha'\}t\{\beta'\}\quad\beta'\rightarrow\beta}{\{\alpha\}t\{\beta\}}$

        $(H10)\quad$ For $E=\{x=t_x|x\in V_E\}$ a guarded linear recursive specification $\forall y\in V_E$ and $z\in V_E$:
        \[\frac{\{\alpha_x\}t_x\{\beta_x\}\quad\cdots\quad\{\alpha_y\}t_y\{\beta_y\}}{\{\alpha_z\}\langle z|E\rangle\{\beta_z\}}\]

        $(H10')\quad$ For $E=\{x=t_x|x\in V_E\}$ a guarded linear recursive specification $\forall y\in V_E$ and $z\in V_E$:
        \[\frac{\{\alpha_{x_1}\parallel\cdots\parallel\alpha_{x_{nx}}\}t_x\{\beta_{x_1}\parallel\cdots\parallel\beta_{x_{nx}}\}\quad\cdots\quad\{\alpha_{y_1}\parallel\cdots \parallel\alpha_{y_{ny}}\}t_y\{\beta_{y_1}\parallel\cdots\parallel\beta_{y_{ny}}\}}{\{\alpha_{z_1}\parallel\cdots\parallel\alpha_{z_{nz}}\}\langle z|E\rangle\{\beta_{z_1}\parallel\cdots\parallel\beta_{z_{nz}}\}}\]
        \caption{The proof system $H$}
        \label{H}
    \end{table}
\end{center}

\begin{theorem}[Soundness of $H$]
Let $Tr_S$ be the set of conditions that hold in $S$. Let $p$ be a closed term of $APTC_{G_{\tau}}$ with guarded linear recursion and $CFAR$, and $\alpha,\beta\in G$ be guards. Then

\begin{eqnarray}
Tr_S\vdash\{\alpha\}p\{\beta\}&\Rightarrow&APTC_{G_{\tau}}\textrm{ with guarded linear recursion and } CFAR\vdash \alpha p=\alpha p\beta\nonumber\\
&\Leftrightarrow&\alpha p\approx_{rbs}(\approx_{rbp},\approx_{rbhp})\alpha p\beta\nonumber\\
&\Leftrightarrow&S\models\{\alpha\}p\{\beta\}\nonumber
\end{eqnarray}
\end{theorem}

\begin{proof}
We only need to prove

$$Tr_S\vdash\{\alpha\}p\{\beta\}\Rightarrow APTC_{G_{\tau}}\textrm{ with guarded linear recursion and } CFAR\vdash \alpha p=\alpha p\beta$$

For $H1$-$H10$, by induction on the length of derivation, the soundness of $H1$-$H10$ are straightforward. We only prove the soundness of $H10'$.


Let $E=\{x_i=t_i(x_1,\cdots,x_n)|i=1,\cdots,n\}$ be a guarded linear recursive specification. Assume that

$$Tr_S,\{\{\alpha_1\parallel\cdots\parallel\alpha_{n_i}\}x_i\{\beta_1\parallel\cdots \parallel\beta_{n_i}\}|i=1,\cdots,n\}\vdash\{\alpha_1\parallel\cdots\parallel\alpha_{n_j}\}t_j(x_1,\cdots,x_n)\{\beta_1\parallel\cdots \parallel\beta_{n_j}\}$$

for $j=1,\cdots,n$. We would show that $APTC_{G_{\tau}}\textrm{ with guarded linear recursion and } CFAR\vdash (\alpha_1\parallel\cdots\parallel\alpha_{n_j})X_j=(\alpha_1\parallel\cdots\parallel\alpha_{n_j}) X_j(\beta_1\parallel\cdots\parallel\beta_{n_j})$.

We write recursive specifications $E'$ and $E''$ for

$$E'=\{y_i=(\alpha_1\parallel\cdots\parallel\alpha_{n_i})t_i(y_1,\cdots,y_n)|i=1,\cdots,n\}$$

$$E''=\{z_i=(\alpha_1\parallel\cdots\parallel\alpha_{n_i})t_i(z_1(\beta_1\parallel\cdots\parallel\beta_{n_1}),\cdots,z_n(\beta_1\parallel\cdots\parallel\beta_{n_n}))|i=1,\cdots,n\}$$

and would show that for $j=1,\cdots,n$,

(1) $(\alpha_1\parallel\cdots\parallel\alpha_{n_j})X_j=Y_j$;

(2) $Z_j(\beta_1\parallel\cdots\parallel\beta_{n_j})=Z_j$;

(3) $Z_j=Y_j$.

For (1), we have

\begin{eqnarray}
(\alpha_1\parallel\cdots\parallel\alpha_{n_j})X_j&=&(\alpha_1\parallel\cdots\parallel\alpha_{n_j})t_j(X_1,\cdots,X_n)\nonumber\\
&=&(\alpha_1\parallel\cdots\parallel\alpha_{n_j})t_j((\alpha_1\parallel\cdots\parallel\alpha_{n_1})X_1,\cdots,(\alpha_1\parallel\cdots\parallel\alpha_{n_n})X_n)\nonumber
\end{eqnarray}

by RDP, we have $(\alpha_1\parallel\cdots\parallel\alpha_{n_j})X_j=Y_j$.

For (2), we have

\begin{eqnarray}
Z_j(\beta_1\parallel\cdots\parallel\beta_{n_j})&=&(\alpha_1\parallel\cdots\parallel\alpha_{n_j}) t_j(Z_1(\beta_1\parallel\cdots\parallel\beta_{n_1}),\cdots,Z_n(\beta_1\parallel\cdots \parallel\beta_{n_n}))(\beta_1\parallel\cdots\parallel\beta_{n_j})\nonumber\\
&=&(\alpha_1\parallel\cdots\parallel\alpha_{n_j}) t_j(Z_1(\beta_1\parallel\cdots\parallel\beta_{n_1}),\cdots,Z_n(\beta_1\parallel\cdots \parallel\beta_{n_n}))\nonumber\\
&=&(\alpha_1\parallel\cdots\parallel\alpha_{n_j}) t_j((Z_1(\beta_1\parallel\cdots\parallel\beta_{n_1}))(\beta_1\parallel\cdots\parallel\beta_{n_1}),\cdots,(Z_n(\beta_1\parallel\cdots \parallel\beta_{n_n}))(\beta_1\parallel\cdots\parallel\beta_{n_n}))\nonumber
\end{eqnarray}

by RDP, we have $Z_j(\beta_1\parallel\cdots\parallel\beta_{n_j})=Z_j$.

For (3), we have

\begin{eqnarray}
Z_j&=&(\alpha_1\parallel\cdots\parallel\alpha_{n_j})t_j(Z_1(\beta_1\parallel\cdots\parallel\beta_{n_1}),\cdots,Z_n(\beta_1\parallel\cdots\parallel\beta_{n_n}))\nonumber\\
&=&(\alpha_1\parallel\cdots\parallel\alpha_{n_j})t_j(Z_1,\cdots,Z_n)\nonumber
\end{eqnarray}

by RDP, we have $Z_j=Y_j$.
\end{proof}

\section{Axiomatization for Hhp-Bisimilarity}{\label{ahhpb}}

Since hhp-bisimilarity is a downward closed hp-bisimilarity and can be downward closed to single atomic event, which implies bisimilarity. As Moller \cite{ILM} proven, there is not a finite sound and complete axiomatization for parallelism $\parallel$ modulo bisimulation equivalence, so there is not a finite sound and complete axiomatization for parallelism $\parallel$ modulo hhp-bisimulation equivalence either. Inspired by the way of left merge to modeling the full merge for bisimilarity, we introduce a left parallel composition $\leftmerge$ to model the full parallelism $\parallel$ for hhp-bisimilarity.

In the following subsection, we add left parallel composition $\leftmerge$ to the whole theory. Because the resulting theory is similar to the former, we only list the significant differences, and all proofs of the conclusions are left to the reader.

\subsection{$APTC$ with Left Parallel Composition}

The transition rules of left parallel composition $\leftmerge$ are shown in Table \ref{TRForLeftParallel}. With a little abuse, we extend the causal order relation $\leq$ on $\mathbb{E}$ to include the original partial order (denoted by $<$) and concurrency (denoted by $=$).

\begin{center}
    \begin{table}
        $$\frac{x\xrightarrow{e_1}\surd\quad y\xrightarrow{e_2}\surd \quad(e_1\leq e_2)}{x\leftmerge y\xrightarrow{\{e_1,e_2\}}\surd} \quad\frac{x\xrightarrow{e_1}x'\quad y\xrightarrow{e_2}\surd \quad(e_1\leq e_2)}{x\leftmerge y\xrightarrow{\{e_1,e_2\}}x'}$$
        $$\frac{x\xrightarrow{e_1}\surd\quad y\xrightarrow{e_2}y' \quad(e_1\leq e_2)}{x\leftmerge y\xrightarrow{\{e_1,e_2\}}y'} \quad\frac{x\xrightarrow{e_1}x'\quad y\xrightarrow{e_2}y' \quad(e_1\leq e_2)}{x\leftmerge y\xrightarrow{\{e_1,e_2\}}x'\between y'}$$
        \caption{Transition rules of left parallel operator $\leftmerge$}
        \label{TRForLeftParallel}
    \end{table}
\end{center}

The new axioms for parallelism are listed in Table \ref{AxiomsForLeftParallelism}.

\begin{center}
    \begin{table}
        \begin{tabular}{@{}ll@{}}
            \hline No. &Axiom\\
            $A6$ & $x+ \delta = x$\\
            $A7$ & $\delta\cdot x =\delta$\\
            $P1$ & $x\between y = x\parallel y + x\mid y$\\
            $P2$ & $x\parallel y = y \parallel x$\\
            $P3$ & $(x\parallel y)\parallel z = x\parallel (y\parallel z)$\\
            $P4$ & $x\parallel y = x\leftmerge y + y\leftmerge x$\\
            $P5$ & $(e_1\leq e_2)\quad e_1\leftmerge (e_2\cdot y) = (e_1\leftmerge e_2)\cdot y$\\
            $P6$ & $(e_1\leq e_2)\quad (e_1\cdot x)\leftmerge e_2 = (e_1\leftmerge e_2)\cdot x$\\
            $P7$ & $(e_1\leq e_2)\quad (e_1\cdot x)\leftmerge (e_2\cdot y) = (e_1\leftmerge e_2)\cdot (x\between y)$\\
            $P8$ & $(x+ y)\leftmerge z = (x\leftmerge z)+ (y\leftmerge z)$\\
            $P9$ & $\delta\leftmerge x = \delta$\\
            $C10$ & $e_1\mid e_2 = \gamma(e_1,e_2)$\\
            $C11$ & $e_1\mid (e_2\cdot y) = \gamma(e_1,e_2)\cdot y$\\
            $C12$ & $(e_1\cdot x)\mid e_2 = \gamma(e_1,e_2)\cdot x$\\
            $C13$ & $(e_1\cdot x)\mid (e_2\cdot y) = \gamma(e_1,e_2)\cdot (x\between y)$\\
            $C14$ & $(x+ y)\mid z = (x\mid z) + (y\mid z)$\\
            $C15$ & $x\mid (y+ z) = (x\mid y)+ (x\mid z)$\\
            $C16$ & $\delta\mid x = \delta$\\
            $C17$ & $x\mid\delta = \delta$\\
            $CE18$ & $\Theta(e) = e$\\
            $CE19$ & $\Theta(\delta) = \delta$\\
            $CE20$ & $\Theta(x+ y) = \Theta(x)\triangleleft y + \Theta(y)\triangleleft x$\\
            $CE21$ & $\Theta(x\cdot y)=\Theta(x)\cdot\Theta(y)$\\
            $CE22$ & $\Theta(x\leftmerge y) = ((\Theta(x)\triangleleft y)\leftmerge y)+ ((\Theta(y)\triangleleft x)\leftmerge x)$\\
            $CE23$ & $\Theta(x\mid y) = ((\Theta(x)\triangleleft y)\mid y)+ ((\Theta(y)\triangleleft x)\mid x)$\\
            $U24$ & $(\sharp(e_1,e_2))\quad e_1\triangleleft e_2 = \tau$\\
            $U25$ & $(\sharp(e_1,e_2),e_2\leq e_3)\quad e_1\triangleleft e_3 = e_1$\\
            $U26$ & $(\sharp(e_1,e_2),e_2\leq e_3)\quad e3\triangleleft e_1 = \tau$\\
            $U27$ & $e\triangleleft \delta = e$\\
            $U28$ & $\delta \triangleleft e = \delta$\\
            $U29$ & $(x+ y)\triangleleft z = (x\triangleleft z)+ (y\triangleleft z)$\\
            $U30$ & $(x\cdot y)\triangleleft z = (x\triangleleft z)\cdot (y\triangleleft z)$\\
            $U31$ & $(x\leftmerge y)\triangleleft z = (x\triangleleft z)\leftmerge (y\triangleleft z)$\\
            $U32$ & $(x\mid y)\triangleleft z = (x\triangleleft z)\mid (y\triangleleft z)$\\
            $U33$ & $x\triangleleft (y+ z) = (x\triangleleft y)\triangleleft z$\\
            $U34$ & $x\triangleleft (y\cdot z)=(x\triangleleft y)\triangleleft z$\\
            $U35$ & $x\triangleleft (y\leftmerge z) = (x\triangleleft y)\triangleleft z$\\
            $U36$ & $x\triangleleft (y\mid z) = (x\triangleleft y)\triangleleft z$\\
        \end{tabular}
        \caption{Axioms of parallelism with left parallel composition}
        \label{AxiomsForLeftParallelism}
    \end{table}
\end{center}

\begin{definition}[Basic terms of $APTC$ with left parallel composition]
The set of basic terms of $APTC$, $\mathcal{B}(APTC)$, is inductively defined as follows:
\begin{enumerate}
  \item $\mathbb{E}\subset\mathcal{B}(APTC)$;
  \item if $e\in \mathbb{E}, t\in\mathcal{B}(APTC)$ then $e\cdot t\in\mathcal{B}(APTC)$;
  \item if $t,s\in\mathcal{B}(APTC)$ then $t+ s\in\mathcal{B}(APTC)$;
  \item if $t,s\in\mathcal{B}(APTC)$ then $t\leftmerge s\in\mathcal{B}(APTC)$.
\end{enumerate}
\end{definition}

\begin{theorem}[Generalization of the algebra for left parallelism with respect to $BATC$]
The algebra for left parallelism is a generalization of $BATC$.
\end{theorem}

\begin{theorem}[Congruence theorem of $APTC$ with left parallel composition]
Truly concurrent bisimulation equivalences $\sim_{p}$, $\sim_s$, $\sim_{hp}$ and $\sim_{hhp}$ are all congruences with respect to $APTC$ with left parallel composition.
\end{theorem}

\begin{theorem}[Elimination theorem of parallelism with left parallel composition]
Let $p$ be a closed $APTC$ with left parallel composition term. Then there is a basic $APTC$ term $q$ such that $APTC\vdash p=q$.
\end{theorem}

\begin{theorem}[Soundness of parallelism  with left parallel composition modulo truly concurrent bisimulation equivalences]
Let $x$ and $y$ be $APTC$ with left parallel composition terms. If $APTC\vdash x=y$, then

\begin{enumerate}
  \item $x\sim_{s} y$;
  \item $x\sim_{p} y$;
  \item $x\sim_{hp} y$;
  \item $x\sim_{hhp} y$.
\end{enumerate}
\end{theorem}

\begin{theorem}[Completeness of parallelism with left parallel composition modulo truly concurrent bisimulation equivalences]
Let $x$ and $y$ be $APTC$ terms.

\begin{enumerate}
  \item If $x\sim_{s} y$, then $APTC\vdash x=y$;
  \item if $x\sim_{p} y$, then $APTC\vdash x=y$;
  \item if $x\sim_{hp} y$, then $APTC\vdash x=y$;
  \item if $x\sim_{hhp} y$, then $APTC\vdash x=y$.
\end{enumerate}
\end{theorem}

The transition rules of encapsulation operator are the same, and the its axioms are shown in \ref{AxiomsForEncapsulationLeft}.

\begin{center}
    \begin{table}
        \begin{tabular}{@{}ll@{}}
            \hline No. &Axiom\\
            $D1$ & $e\notin H\quad\partial_H(e) = e$\\
            $D2$ & $e\in H\quad \partial_H(e) = \delta$\\
            $D3$ & $\partial_H(\delta) = \delta$\\
            $D4$ & $\partial_H(x+ y) = \partial_H(x)+\partial_H(y)$\\
            $D5$ & $\partial_H(x\cdot y) = \partial_H(x)\cdot\partial_H(y)$\\
            $D6$ & $\partial_H(x\leftmerge y) = \partial_H(x)\leftmerge\partial_H(y)$\\
        \end{tabular}
        \caption{Axioms of encapsulation operator with left parallel composition}
        \label{AxiomsForEncapsulationLeft}
    \end{table}
\end{center}

\begin{theorem}[Conservativity of $APTC$ with respect to the algebra for parallelism with left parallel composition]
$APTC$ is a conservative extension of the algebra for parallelism with left parallel composition.
\end{theorem}

\begin{theorem}[Congruence theorem of encapsulation operator $\partial_H$]
Truly concurrent bisimulation equivalences $\sim_{p}$, $\sim_s$, $\sim_{hp}$ and $\sim_{hhp}$ are all congruences with respect to encapsulation operator $\partial_H$.
\end{theorem}

\begin{theorem}[Elimination theorem of $APTC$]
Let $p$ be a closed $APTC$ term including the encapsulation operator $\partial_H$. Then there is a basic $APTC$ term $q$ such that $APTC\vdash p=q$.
\end{theorem}

\begin{theorem}[Soundness of $APTC$ modulo truly concurrent bisimulation equivalences]
Let $x$ and $y$ be $APTC$ terms including encapsulation operator $\partial_H$. If $APTC\vdash x=y$, then

\begin{enumerate}
  \item $x\sim_{s} y$;
  \item $x\sim_{p} y$;
  \item $x\sim_{hp} y$;
  \item $x\sim_{hhp} y$.
\end{enumerate}
\end{theorem}

\begin{theorem}[Completeness of $APTC$ modulo truly concurrent bisimulation equivalences]
Let $p$ and $q$ be closed $APTC$ terms including encapsulation operator $\partial_H$,

\begin{enumerate}
  \item if $p\sim_{s} q$ then $p=q$;
  \item if $p\sim_{p} q$ then $p=q$;
  \item if $p\sim_{hp} q$ then $p=q$;
  \item if $p\sim_{hhp} q$ then $p=q$.
\end{enumerate}
\end{theorem}

\subsection{Recursion}

\begin{definition}[Recursive specification]
A recursive specification is a finite set of recursive equations

$$X_1=t_1(X_1,\cdots,X_n)$$
$$\cdots$$
$$X_n=t_n(X_1,\cdots,X_n)$$

where the left-hand sides of $X_i$ are called recursion variables, and the right-hand sides $t_i(X_1,\cdots,X_n)$ are process terms in $APTC$ with possible occurrences of the recursion variables $X_1,\cdots,X_n$.
\end{definition}

\begin{definition}[Solution]
Processes $p_1,\cdots,p_n$ are a solution for a recursive specification $\{X_i=t_i(X_1,\cdots,X_n)|i\in\{1,\cdots,n\}\}$ (with respect to truly concurrent bisimulation equivalences $\sim_s$($\sim_p$, $\sim_{hp}$, $\sim_{hhp}$)) if $p_i\sim_s (\sim_p, \sim_{hp},\sim{hhp})t_i(p_1,\cdots,p_n)$ for $i\in\{1,\cdots,n\}$.
\end{definition}

\begin{definition}[Guarded recursive specification]
A recursive specification

$$X_1=t_1(X_1,\cdots,X_n)$$
$$...$$
$$X_n=t_n(X_1,\cdots,X_n)$$

is guarded if the right-hand sides of its recursive equations can be adapted to the form by applications of the axioms in $APTC$ and replacing recursion variables by the right-hand sides of their recursive equations,

$$(a_{11}\leftmerge\cdots\leftmerge a_{1i_1})\cdot s_1(X_1,\cdots,X_n)+\cdots+(a_{k1}\leftmerge\cdots\leftmerge a_{ki_k})\cdot s_k(X_1,\cdots,X_n)+(b_{11}\leftmerge\cdots\leftmerge b_{1j_1})+\cdots+(b_{1j_1}\leftmerge\cdots\leftmerge b_{lj_l})$$

where $a_{11},\cdots,a_{1i_1},a_{k1},\cdots,a_{ki_k},b_{11},\cdots,b_{1j_1},b_{1j_1},\cdots,b_{lj_l}\in \mathbb{E}$, and the sum above is allowed to be empty, in which case it represents the deadlock $\delta$.
\end{definition}

\begin{definition}[Linear recursive specification]
A recursive specification is linear if its recursive equations are of the form

$$(a_{11}\leftmerge\cdots\leftmerge a_{1i_1})X_1+\cdots+(a_{k1}\leftmerge\cdots\leftmerge a_{ki_k})X_k+(b_{11}\leftmerge\cdots\leftmerge b_{1j_1})+\cdots+(b_{1j_1}\leftmerge\cdots\leftmerge b_{lj_l})$$

where $a_{11},\cdots,a_{1i_1},a_{k1},\cdots,a_{ki_k},b_{11},\cdots,b_{1j_1},b_{1j_1},\cdots,b_{lj_l}\in \mathbb{E}$, and the sum above is allowed to be empty, in which case it represents the deadlock $\delta$.
\end{definition}

\begin{theorem}[Conservitivity of $APTC$ with guarded recursion]
$APTC$ with guarded recursion is a conservative extension of $APTC$.
\end{theorem}

\begin{theorem}[Congruence theorem of $APTC$ with guarded recursion]
Truly concurrent bisimulation equivalences $\sim_{p}$, $\sim_s$, $\sim_{hp}$, $\sim_{hhp}$ are all congruences with respect to $APTC$ with guarded recursion.
\end{theorem}

\begin{theorem}[Elimination theorem of $APTC$ with linear recursion]
Each process term in $APTC$ with linear recursion is equal to a process term $\langle X_1|E\rangle$ with $E$ a linear recursive specification.
\end{theorem}

\begin{theorem}[Soundness of $APTC$ with guarded recursion]
Let $x$ and $y$ be $APTC$ with guarded recursion terms. If $APTC\textrm{ with guarded recursion}\vdash x=y$, then
\begin{enumerate}
  \item $x\sim_{s} y$;
  \item $x\sim_{p} y$;
  \item $x\sim_{hp} y$;
  \item $x\sim_{hhp} y$.
\end{enumerate}
\end{theorem}

\begin{theorem}[Completeness of $APTC$ with linear recursion]
Let $p$ and $q$ be closed $APTC$ with linear recursion terms, then,
\begin{enumerate}
  \item if $p\sim_{s} q$ then $p=q$;
  \item if $p\sim_{p} q$ then $p=q$;
  \item if $p\sim_{hp} q$ then $p=q$;
  \item if $p\sim_{hhp} q$ then $p=q$.
\end{enumerate}
\end{theorem}

\subsection{Abstraction}

\begin{definition}[Guarded linear recursive specification]
A recursive specification is linear if its recursive equations are of the form

$$(a_{11}\leftmerge\cdots\leftmerge a_{1i_1})X_1+\cdots+(a_{k1}\leftmerge\cdots\leftmerge a_{ki_k})X_k+(b_{11}\leftmerge\cdots\leftmerge b_{1j_1})+\cdots+(b_{1j_1}\leftmerge\cdots\leftmerge b_{lj_l})$$

where $a_{11},\cdots,a_{1i_1},a_{k1},\cdots,a_{ki_k},b_{11},\cdots,b_{1j_1},b_{1j_1},\cdots,b_{lj_l}\in \mathbb{E}\cup\{\tau\}$, and the sum above is allowed to be empty, in which case it represents the deadlock $\delta$.

A linear recursive specification $E$ is guarded if there does not exist an infinite sequence of $\tau$-transitions $\langle X|E\rangle\xrightarrow{\tau}\langle X'|E\rangle\xrightarrow{\tau}\langle X''|E\rangle\xrightarrow{\tau}\cdots$.
\end{definition}

The transition rules of $\tau$ are the same, and axioms of $\tau$ are as Table \ref{AxiomsForTauLeft} shows.

\begin{theorem}[Conservitivity of $APTC$ with silent step and guarded linear recursion]
$APTC$ with silent step and guarded linear recursion is a conservative extension of $APTC$ with linear recursion.
\end{theorem}

\begin{theorem}[Congruence theorem of $APTC$ with silent step and guarded linear recursion]
Rooted branching truly concurrent bisimulation equivalences $\approx_{rbp}$, $\approx_{rbs}$, $\approx_{rbhp}$, and $\approx_{rbhhp}$ are all congruences with respect to $APTC$ with silent step and guarded linear recursion.
\end{theorem}

\begin{center}
\begin{table}
  \begin{tabular}{@{}ll@{}}
\hline No. &Axiom\\
  $B1$ & $e\cdot\tau=e$\\
  $B2$ & $e\cdot(\tau\cdot(x+y)+x)=e\cdot(x+y)$\\
  $B3$ & $x\leftmerge\tau=x$\\
\end{tabular}
\caption{Axioms of silent step}
\label{AxiomsForTauLeft}
\end{table}
\end{center}

\begin{theorem}[Elimination theorem of $APTC$ with silent step and guarded linear recursion]
Each process term in $APTC$ with silent step and guarded linear recursion is equal to a process term $\langle X_1|E\rangle$ with $E$ a guarded linear recursive specification.
\end{theorem}

\begin{theorem}[Soundness of $APTC$ with silent step and guarded linear recursion]
Let $x$ and $y$ be $APTC$ with silent step and guarded linear recursion terms. If $APTC$ with silent step and guarded linear recursion $\vdash x=y$, then
\begin{enumerate}
  \item $x\approx_{rbs} y$;
  \item $x\approx_{rbp} y$;
  \item $x\approx_{rbhp} y$;
  \item $x\approx_{rbhhp} y$.
\end{enumerate}
\end{theorem}

\begin{theorem}[Completeness of $APTC$ with silent step and guarded linear recursion]
Let $p$ and $q$ be closed $APTC$ with silent step and guarded linear recursion terms, then,
\begin{enumerate}
  \item if $p\approx_{rbs} q$ then $p=q$;
  \item if $p\approx_{rbp} q$ then $p=q$;
  \item if $p\approx_{rbhp} q$ then $p=q$;
  \item if $p\approx_{rbhhp} q$ then $p=q$.
\end{enumerate}
\end{theorem}

The transition rules of $\tau_I$ are the same, and the axioms are shown in Table \ref{AxiomsForAbstractionLeft}.

\begin{theorem}[Conservitivity of $APTC_{\tau}$ with guarded linear recursion]
$APTC_{\tau}$ with guarded linear recursion is a conservative extension of $APTC$ with silent step and guarded linear recursion.
\end{theorem}

\begin{theorem}[Congruence theorem of $APTC_{\tau}$ with guarded linear recursion]
Rooted branching truly concurrent bisimulation equivalences $\approx_{rbp}$, $\approx_{rbs}$, $\approx_{rbhp}$ and $\approx_{rbhhp}$ are all congruences with respect to $APTC_{\tau}$ with guarded linear recursion.
\end{theorem}

\begin{center}
\begin{table}
  \begin{tabular}{@{}ll@{}}
\hline No. &Axiom\\
  $TI1$ & $e\notin I\quad \tau_I(e)=e$\\
  $TI2$ & $e\in I\quad \tau_I(e)=\tau$\\
  $TI3$ & $\tau_I(\delta)=\delta$\\
  $TI4$ & $\tau_I(x+y)=\tau_I(x)+\tau_I(y)$\\
  $TI5$ & $\tau_I(x\cdot y)=\tau_I(x)\cdot\tau_I(y)$\\
  $TI6$ & $\tau_I(x\leftmerge y)=\tau_I(x)\leftmerge\tau_I(y)$\\
\end{tabular}
\caption{Axioms of abstraction operator}
\label{AxiomsForAbstractionLeft}
\end{table}
\end{center}

\begin{theorem}[Soundness of $APTC_{\tau}$ with guarded linear recursion]
Let $x$ and $y$ be $APTC_{\tau}$ with guarded linear recursion terms. If $APTC_{\tau}$ with guarded linear recursion $\vdash x=y$, then
\begin{enumerate}
  \item $x\approx_{rbs} y$;
  \item $x\approx_{rbp} y$;
  \item $x\approx_{rbhp} y$;
  \item $x\approx_{rbhhp} y$.
\end{enumerate}
\end{theorem}

\begin{definition}[Cluster]
Let $E$ be a guarded linear recursive specification, and $I\subseteq \mathbb{E}$. Two recursion variable $X$ and $Y$ in $E$ are in the same cluster for $I$ iff there exist sequences of transitions $\langle X|E\rangle\xrightarrow{\{b_{11},\cdots, b_{1i}\}}\cdots\xrightarrow{\{b_{m1},\cdots, b_{mi}\}}\langle Y|E\rangle$ and $\langle Y|E\rangle\xrightarrow{\{c_{11},\cdots, c_{1j}\}}\cdots\xrightarrow{\{c_{n1},\cdots, c_{nj}\}}\langle X|E\rangle$, where $b_{11},\cdots,b_{mi},c_{11},\cdots,c_{nj}\in I\cup\{\tau\}$.

$a_1\leftmerge\cdots\leftmerge a_k$ or $(a_1\leftmerge\cdots\leftmerge a_k) X$ is an exit for the cluster $C$ iff: (1) $a_1\leftmerge\cdots\leftmerge a_k$ or $(a_1\leftmerge\cdots\leftmerge a_k) X$ is a summand at the right-hand side of the recursive equation for a recursion variable in $C$, and (2) in the case of $(a_1\leftmerge\cdots\leftmerge a_k) X$, either $a_l\notin I\cup\{\tau\}(l\in\{1,2,\cdots,k\})$ or $X\notin C$.
\end{definition}

\begin{center}
\begin{table}
  \begin{tabular}{@{}ll@{}}
\hline No. &Axiom\\
  $CFAR$ & If $X$ is in a cluster for $I$ with exits \\
           & $\{(a_{11}\leftmerge\cdots\leftmerge a_{1i})Y_1,\cdots,(a_{m1}\leftmerge\cdots\leftmerge a_{mi})Y_m, b_{11}\leftmerge\cdots\leftmerge b_{1j},\cdots,b_{n1}\leftmerge\cdots\leftmerge b_{nj}\}$, \\
           & then $\tau\cdot\tau_I(\langle X|E\rangle)=$\\
           & $\tau\cdot\tau_I((a_{11}\leftmerge\cdots\leftmerge a_{1i})\langle Y_1|E\rangle+\cdots+(a_{m1}\leftmerge\cdots\leftmerge a_{mi})\langle Y_m|E\rangle+b_{11}\leftmerge\cdots\leftmerge b_{1j}+\cdots+b_{n1}\leftmerge\cdots\leftmerge b_{nj})$\\
\end{tabular}
\caption{Cluster fair abstraction rule}
\label{CFARLeft}
\end{table}
\end{center}

\begin{theorem}[Soundness of $CFAR$]
$CFAR$ is sound modulo rooted branching truly concurrent bisimulation equivalences $\approx_{rbs}$, $\approx_{rbp}$, $\approx_{rbhp}$ and $\approx_{rbhhp}$.
\end{theorem}

\begin{theorem}[Completeness of $APTC_{\tau}$ with guarded linear recursion and $CFAR$]
Let $p$ and $q$ be closed $APTC_{\tau}$ with guarded linear recursion and $CFAR$ terms, then,
\begin{enumerate}
  \item if $p\approx_{rbs} q$ then $p=q$;
  \item if $p\approx_{rbp} q$ then $p=q$;
  \item if $p\approx_{rbhp} q$ then $p=q$;
  \item if $p\approx_{rbhhp} q$ then $p=q$.
\end{enumerate}
\end{theorem}

\section{Conclusions}{\label{con}}

Now, let us conclude this paper. We try to find the algebraic laws for true concurrency, as a uniform logic for true concurrency \cite{LTC1} \cite{LTC2} already existed. There are simple comparisons between Hennessy and Milner (HM) logic for bisimulation equivalence and the uniform logic for truly concurrent bisimulation equivalences, the algebraic laws \cite{ALNC}, $ACP$ \cite{ACP} for bisimulation equivalence, and \emph{what} for truly concurrent bisimulation equivalences, which is still missing.

Following the above idea, we find the algebraic laws for true concurrency, which is called $APTC$, an algebra for true concurrency. Like $ACP$, $APTC$ also has four modules: $BATC$ (Basic Algebra for True Concurrency), $APTC$ (Algebra for Parallelism in True Concurrency), recursion and abstraction, and we prove the soundness and completeness of their algebraic laws modulo truly concurrent bisimulation equivalences. And we show its applications in verification of behaviors of system in a truly concurrent flavor, and its modularity by extending a new renaming operator and a new shadow constant into it.

Unlike $ACP$, in $APTC$, the parallelism is a fundamental computational pattern, and cannot be steadied by other computational patterns. We establish a whole theory which has correspondences to $ACP$.

In future, we will pursue the wide applications of $APTC$ in verifications of the behavioral correctness of concurrent systems.

\newpage

%

\label{lastpage}

\end{document}